 \newtheorem{thm}{Theorem}[section]
 \newtheorem{cor}[thm]{Corollary}
 \newtheorem{lem}[thm]{Lemma}
 \newtheorem{ex}[thm]{Example}
 \newtheorem{prop}[thm]{Proposition}
 \theoremstyle{definition}
 \newtheorem{defn}[thm]{Definition}
 \theoremstyle{remark}
 \newtheorem{assumption}[thm]{Assumption}
 \numberwithin{equation}{section}
\newcommand{\idty}{\mathds{1}}
 \def\idtyty{{\mathchoice {\mathrm{1\mskip-4mu l}} {\mathrm{1\mskip-4mu l}} %
{\mathrm{1\mskip-4.5mu l}} {\mathrm{1\mskip-5mu l}}}}
\newcommand{\bR}{{\mathbb R}}
\newcommand{\bZ}{{\mathbb Z}}
\newcommand{\cA}{{\mathcal A}}
\newcommand{\cB}{{\mathcal B}}
\newcommand{\cD}{{\mathcal D}}
\newcommand{\cF}{{\mathcal F}}
\newcommand{\cK}{{\mathcal K}}
\newcommand{\cG}{{\mathcal G}}
\newcommand{\cH}{{\mathcal H}}
\newcommand{\cP}{{\mathcal P}}
\newcommand{\cS}{{\mathcal S}}
\newcommand{\supp}{\operatorname{supp}}
\newcommand{\caA}{{\mathcal A}}
\newcommand{\bbE}{{\mathbb E}}
\newcommand{\tr}{\mathrm{tr}}
\newcommand{\Tr}{\mathrm{Tr}}
\newcommand{\id}{\mathrm{id}}
\newcommand{\vertiii}[1]{{\left\vert\kern-0.25ex\left\vert\kern-0.25ex\left\vert #1 
    \right\vert\kern-0.25ex\right\vert\kern-0.25ex\right\vert}} %the triple norm of the argument
\newcommand{\ketbra}[1]{\vert #1\rangle\langle #1\vert}
\newcommand{\dom}{\mathop{\rm dom}}
\newcommand{\spec}{\mathrm{spec}}
\newcommand{\be}{\begin{equation}}
\newcommand{\ee}{\end{equation}}
\newcommand{\bea}{\begin{eqnarray}}
\newcommand{\eea}{\end{eqnarray}}
\newcommand{\beann}{\begin{eqnarray*}}
\newcommand{\eeann}{\end{eqnarray*}}
\newcommand{\Rl}{\mathbb{R}}
\newcommand{\Ir}{\mathbb{Z}}
\newcommand{\eq}[1]{(\ref{#1})}
\newcommand{\diam}{{\rm diam}}
\title[Quasi-Locality Bounds I]{Quasi-Locality Bounds for Quantum Lattice Systems.
% and Perturbations of Gapped Ground States. 
Part I. Lieb-Robinson Bounds, Quasi-Local Maps, and Spectral Flow Automorphisms}
\author[B. Nachtergaele]{Bruno Nachtergaele}
\thanks{Based upon work supported by the National Science Foundation under Grant DMS-1515850 and DMS-1813149.}
\address{Department of Mathematics and Center for Quantum Mathematics and Physics\\
University of California, Davis\\
Davis, CA 95616, USA}
\email{bxn@math.ucdavis.edu}
\author[R. Sims]{Robert Sims}
\address{Department of Mathematics \\
University of Arizona\\
Tuscon, AZ 85721, USA}
\email{rsims@math.arizona.edu}
\author[A. Young]{Amanda Young}
\address{Department of Mathematics\\
University of Arizona\\
Tucson, AZ 85721, USA}
\email{amyoung@math.arizona.edu}
\begin{document}
\date{\today }
\begin{abstract}
Lieb-Robinson bounds show that the speed of propagation of information under the Heisenberg dynamics in a wide class of non-relativistic quantum lattice systems is
essentially bounded. We review work of the past dozen years that has turned this fundamental result into a powerful tool for analyzing quantum lattice systems. We introduce
a unified framework for a wide range of applications by studying quasi-locality properties of general classes of maps defined on the algebra of local observables of quantum
lattice systems. We also consider a number of generalizations that include systems with an infinite-dimensional Hilbert space at each lattice site and Hamiltonians that may
involve unbounded on-site contributions. These generalizations require replacing the operator norm topology with the strong operator topology in a number of basic results 
for the dynamics of quantum lattice systems. The main results in this paper form the basis for a detailed proof of the stability of gapped ground state phases of frustration-free 
models satisfying a Local Topological Quantum Order condition, which we present in a sequel to this paper.
\end{abstract}

\maketitle

\tableofcontents

%%%%%%%%%%
%
% Intro
%
%%%%%%%%%

\section{Introduction}\label{sec:intro}

Quantum many-body theory comes in two flavors. The first is the relativistic version generically referred to as Quantum Field Theory (QFT), used for particle physics, and the second is non-
relativistic many-body theory, which serves as the basic framework for most of condensed matter physics. The close physical and mathematical similarities between the two have long been 
recognized and exploited with great success. Bogoliubov's theory of superfluidity and the BCS theory of superconductivity serve as definitive proof that 
quantum fields are a useful and even fundamental concept for understanding non-relativistic many-body systems. Condensed matter theorists
have developed field theory techniques that are now omnipresent in the subject \cite{abrikosov:1975,itzykson:1989,Tsvelik:1995,martin:2002,Mussardo:2010,altland:2010,fradkin:2013}.
See \cite{seiringer:2014,hainzl:2016} for reviews on recent progress in the mathematics of Bogoliubov's theory and the BSC theory of superfluidity.

The absence of Lorentz-invariance (and the associated constant speed of 
light $c$ leading to the all-important property of {\em locality} in the sense of Haag \cite{haag:1996}) in non-relativistic many-body theories is the most obvious difference between the two perspectives. Given the importance of this invariance in QFT, which plays an essential role in deriving many of the fundamental properties, and the strong constraints it imposes on its mathematical structure, one would expect that its absence would prevent any close analogy between the relativistic and the non-relativistic setting to hold true. Contrary to this expectation, successful applications of QFT to problems of condensed matter physics have been numerous. Quantum Field Theories have provided accurate descriptions as effective theories describing important aspects such as excitation spectra and derived quantities. This typically involves a scaling limit of some type. Conformal Field Theories have been spectacularly effective in describing and classifying second order phase transitions. Also here a scaling limit is often implied.

The quasi-locality properties that are the subject of study in this paper partly explain the closer-than-expected similarities between QFT and the non-relativistic many-body theory of condensed matter systems. More importantly, they make it possible to prove that much of the mathematical structure of QFT can be found in non-relativistic many-body systems in an approximate sense. Instead of asymptotic statements and qualitative comparisons, we can prove quantitative estimates: the quasi-locality of the dynamics is characterized by an approximate light-cone with errors that can be bounded explicitly. These are the Lieb-Robinson bounds, which have been an essential ingredient in a large number of breakthrough results in the past dozen years.

Although the result of Lieb and Robinson dates back to the early 70's \cite{lieb:1972,robinson:1976}, the impetus for the recent flurry of 
activity and major applications came from the work of Hastings 
on the Lieb-Schultz-Mattis Theorem in arbitrary dimension \cite{hastings:2004}. The possibility of adapting some of the major results of QFT to (non-relativistic) quantum lattice systems
was anticipated by others. For example, Wreszinski studied the connection between the Goldstone Theorem \cite{landau:1981}, charges, and spontaneous symmetry breaking
\cite{wreszinski:1987}. A rigorous proof of a non-relativistic Exponential Clustering Theorem, long known in QFT \cite{ruelle:1962, fredenhagen:1985}, did not appear until the 
works \cite{nachtergaele:2006a,hastings:2006}.  The time-evolution of quantum spin systems turns local observables into quasi-local ones.
Lieb-Robinson bounds were applied to approximate such quasi-local observables by strictly local ones with an error bound in \cite{bravyi:2006a,nachtergaele:2006,nachtergaele:2013}. 
These (sequences of) strictly local approximations are what is used in many concrete applications and also have conceptual appeal. Further extensions of Lieb-Robinson bounds and a sampling of interesting applications are discussed in Section \ref{sec:lrb}.

Apart from offering a review of the state of the art of quasi-locality estimates, in this paper we also extend existing results in the literature in several directions. First, for most of the results we allow the quantum system at each lattice site to be described by an arbitrary infinite-dimensional Hilbert space. For many results, the single-site Hamiltonians may be arbitrary densely defined self-adjoint operators. Another generalization in comparison to the existing literature, made necessary by the consideration of unbounded Hamiltonians, is that time-dependent perturbations are assumed to be continuous with respect to the strong operator topology instead of the operator norm topology. In order to handle this more general situation, a number of technical issues 
need to be addressed related to the continuity properties of operator-valued functions and of the dynamics generated by strongly continuous time-dependent interactions. These technical issues cascade through the better part of the paper. We will understand if the reader is surprised by the length of the paper, since we were taken aback ourselves as we were completing the manuscript. Many proofs can be shortened if one is only interested in particular cases. Indeed, in many cases results for more restricted cases exist in the literature. There are also places, however, where the published results in the literature provide only weaker estimates or have incomplete proofs.

This paper has seven sections in addition to this introduction. In Section \ref{sec:dynamics} we review the construction and basic properties of quantum dynamics in the context of this work. This includes a careful presentation of analysis  with operator-valued functions using the strong operator topology. Section \ref{sec:lrb} is devoted to Lieb-Robinson bounds
and their application to proving the existence of the thermodynamic limit of the dynamics. We also derive an estimate on the dependence of the dynamics on the interactions and introduce a 
notion of convergence of interactions that implies the convergence of the infinite-volume dynamics. Section \ref{sec:str-loc} is devoted to the approximation of quasi-local observables by strictly local ones by means of suitable maps called conditional expectations. Because they are needed for our applications, the continuity properties of a class of such maps are studied in detail. A general notion of quasi-local maps is introduced in Section \ref{sec:quasilocal-maps}, and we study the properties of several operations involving such maps that are used extensively in applications. In Section \ref{sec:spectral-flow} we construct an auxiliary dynamics called the spectral flow (also called the quasi-adiabatic evolution), which is the main tool 
in recent proofs of the stability of spectral gaps and gapped ground state phases. A first application of the spectral flow is the notion of automorphic equivalence, discussed in Section \ref{sec:equivalence_of_phases}, which allows us to give a precise definition of a gapped ground state phase as an equivalence class for a certain equivalence relation on families
of quantum lattice  models. Section \ref{sec:appendix} is an appendix in which we collect a number of arguments that are used throughout the paper.

Our original motivation for this work was to supply all the tools needed for the results of \cite{QLBII}. However, this work can now be read as a stand-alone review article about quasi-locality estimates for quantum lattice systems. Since the sequel of this paper, \cite{QLBII}, will be devoted to applying the quasi-locality bounds and the spectral flow results from this work to prove stability of gapped ground state phases, the examples and applications here will be chosen in support of the presentation of the general results.

Throughout this paper we focus on so-called bosonic lattice systems, for which observables with disjoint support commute. Virtually all results carry over to lattice fermion systems with only minor changes. This is discussed in some detail in \cite{nachtergaele:2016b}. Another extension of quasi-locality techniques not covered in this paper is the case of so-called {\em extended
operators}. An important example are the half-infinite string operators that create the elementary excitations in models with topological order such as the Toric Code model \cite{kitaev:2006}. Lieb-Robinson bounds for such non-quasi-local operators are used in \cite{cha:2018}.

%%%%%
%
%
% Dynamics for QLM
%
%%%%%%%%%

\section{Some basic properties of quantum dynamics} \label{sec:dynamics}

In this paper, the primary object of study is the Heisenberg dynamics acting on a suitable algebra of observables for a finite or 
infinite lattice system. For finite systems, this dynamics is expressed with a unitary propagator $U(t,s)$, $s\leq t\in \Rl$, on a 
separable Hilbert space $\cH$. However, in some cases, for example when one is interested in the excitation spectrum and dynamics of 
perturbations with respect to a thermal equilibrium state of the system, the generator of the dynamics is not semi-bounded and the Hilbert 
space may be non-separable. Therefore, in general, we will not assume that $\cH$ is separable or that the Hamiltonian is bounded below.

As described in the introduction, we consider finite and infinite lattice systems with interactions that are sufficiently local. We 
allow for an infinite-dimensional Hilbert space at each site of the lattice. However, we impose conditions on the interactions that permit us to 
prove quasi-locality bounds of Lieb-Robinson type (in terms of the operator norm) for bounded local observables. 
This means that we will allow for the possibility of unbounded `spins' and unbounded single-site Hamiltonians, but require that the
interaction be given by bounded self-adjoint operators that satisfy a suitable decay condition at large distances (see Section~\ref{sec:lrb} for more details). We do not consider
lattice oscillator systems with harmonic interactions in this paper, since one should not expect bounds in terms of the operator norm for this class of 
systems (see \cite{nachtergaele:2009a,amour:2010}). An interesting model that does fit in the framework presented here is the so-called quantum rotor model, 
which has an unbounded Hamiltonian for the quantum rotor at each site, but the interactions between rotors are described by a bounded potential 
\cite{lee:1986,sachdev:1999,klein:1992}.

We will use the so-called {\em interaction picture} to describe the dynamics of Hamiltonians with unbounded on-site terms. This requires that we 
also consider time-dependent interactions. Time-dependent Hamiltonians are, of course, of interest in their own right, for instance in applications 
of quantum information theory. Therefore, we begin with a discussion of the Schr\"odinger equation for the class of time-dependent Hamiltonians 
considered in this work.

Let $\mathcal{H}$ be a complex Hilbert space and $\mathcal{B}( \mathcal{H})$ denote the bounded linear operators on $\mathcal{H}$. Let $I \subseteq \bR$ 
be a finite or infinite interval. In this section we review some basic properties of the dynamics of a quantum system with a time-dependent 
Hamiltonian of the form
\be
H(t)\psi = H_0\psi + \Phi(t)\psi, \quad\psi\in\cD
\label{Ht}\ee
where $H_0$ is a time-independent self-adjoint operator with dense domain $\cD\subset\cH$, and for
$t\in I$, $\Phi(t)^*=\Phi(t) \in \cB(\cH)$ and $t\mapsto \Phi(t)$ is continuous in the strong operator topology.
This means that for all $\psi\in \cH$, the function $t\mapsto \Phi(t)\psi$ is continuous in the Hilbert space norm. 
From these assumptions, it follows that for all $t\in I$, $H(t)$ is self-adjoint with time-independent dense domain $\cD$, 
see \cite{weidmann:1980}[Theorem 5.28].

We will often consider operator-valued and vector-valued functions of one or more real (or complex) variables and impose various continuity 
assumptions, which we now briefly review. An operator-valued function is said to be {\em norm continuous} ({\em norm differentiable}) if it is continuous (differentiable) 
in the operator norm, and {\em strongly continuous} ({\em strongly differentiable}) if it is continuous (differentiable) in the strong operator topology. 
With a slight abuse of terminology, we will refer to Hilbert space-valued functions as {\em strongly continuous} ({\em strongly differentiable}) if they are 
continuous (differentiable) in the Hilbert space norm. For transparency, when we consider maps defined on a linear space of operators, we will indicate the relevant topology and continuity assumptions explicitly.

The dynamics of a system described in \eqref{Ht} is determined by the Schr\"odinger equation:
\be
\frac{d}{dt} \psi(t) = -i H(t)\psi(t), \quad \psi(t_0)=\psi_0\in \cD , \quad t_0 \in I. 
\label{SE}\ee
For bounded $H(t)$, through a standard construction, we will see that there exists a family of unitaries $U(t,s)\in\cB(\cH)$, $s,t \in I$, that is 
jointly strongly continuous with $\psi(t)=U(t,t_0)\psi_0$ being the unique solution of \eq{SE} for all $\psi_0 \in \cH$. It follows that the family 
$U(t,s)$ has the co-cycle property: for $r \leq s \leq t \in I$, $U(t,r)=U(t,s)U(s,r)$ and $U(t,t)=\idty$. In the case that $H(t) = H_0 + \Phi(t)$ where $H_0$ is an arbitrary unbounded self-adjoint operator and $\Phi(t)$ is bounded, we will make use of the well-known {\em interaction picture dynamics} to construct 
an analogous unitary co-cycle. This co-cycle will, in particular, generate the unique weak solution of the 
Schr\"odinger equation. To this end, we first discuss some other aspects of strongly continuous operator-valued functions that we will need.

%%%%%%%%%%
%
% Review Material
%
%%%%%%%%%

\subsection{Properties of continuity, measurability, and integration in $\mathcal{B}( \mathcal{H})$}\label{sec:review}

In this section we review some terminology and discuss a number of properties of operator-valued functions that will be used extensively in the rest of the paper.

Let $I \subset \mathbb{R}$ be a finite or infinite interval and $A: I  \to \mathcal{B}(\mathcal{H})$ be strongly continuous, i.e.,  for all $\psi \in \mathcal{H}$, $t \mapsto A(t) \psi$ is 
continuous with respect to the Hilbert space norm. 
By the Uniform Boundedness Principle,
if $A$ is strongly continuous, then $A$ is locally bounded, meaning if $J \subset I$ is compact, then
\begin{equation} \label{local_bound}
M_J:= \sup_{t \in J} \| A(t) \| < \infty.
\end{equation}
The strong continuity of $t\mapsto A(t)$ implies that $t\mapsto \Vert A(t)\psi\Vert $
is continuous for all $\psi\in\cH$, and by the above, the map $t \mapsto \Vert A(t)\Vert$ is locally bounded. However, strong continuity does not imply that
$t\mapsto \Vert A(t)\Vert$
is continuous, see \cite[Section 2]{nachtergaele:2014} for a counterexample. 

We note that in this paper we use the notations $\Vert A(t)\Vert$ and $\Vert A\Vert (t)$
interchangeably. For ease of later reference, we now state a simple proposition. 

\begin{prop} \label{prop:st_cont_prods} Let $I \subset \mathbb{R}$ be a finite or infinite interval and $\mathcal{H}$ and $\mathcal{K}$ be Hilbert spaces.
\begin{enumerate}
\item[(i)] If $A, B : I \to \mathcal{B}( \mathcal{H})$ are strongly continuous (strongly differentiable), then $(t,s) \mapsto A(t)B(s)$ is jointly strongly continuous
(separately strongly differentiable).
\item[(ii)] If $A: I \to \mathcal{B}(\mathcal{H})$ and $B : I \to \mathcal{B}( \mathcal{K})$ are strongly continuous (strongly differentiable), then $(t,s) \mapsto 
A(t) \otimes B(s)$  is jointly strongly continuous (separately strongly differentiable).
\item[(iii)] If $A:I\to\cB(\cH)$ is strongly continuous, then the function $t\mapsto \Vert A(t)\Vert $ is lower semicontinuous, measurable, locally bounded and, hence,
locally integrable.
\end{enumerate}
\end{prop}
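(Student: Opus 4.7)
The plan is to reduce all three assertions to their vector-by-vector counterparts, using the uniform boundedness principle and the local bound \eqref{local_bound} to control operator norms on compact subintervals.

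For (i), fix $\psi\in\cH$ and a point $(t_0,s_0)\in I\times I$, and split
\begin{equation*}
A(t)B(s)\psi - A(t_0)B(s_0)\psi = A(t)\bigl[B(s)-B(s_0)\bigr]\psi + \bigl[A(t)-A(t_0)\bigr]B(s_0)\psi.
\end{equation*}
The second term tends to $0$ as $t\to t_0$ by strong continuity of $A$. For the first, fix a compact neighborhood $J$ of $t_0$; then $\|A(t)\|\leq M_J$ by \eqref{local_bound}, and the first term is dominated by $M_J\,\|[B(s)-B(s_0)]\psi\|\to 0$ by strong continuity of $B$. This proves joint strong continuity. For the separate strong differentiability statement, fixing $s$ turns $B(s)\psi$ into a fixed vector, so strong differentiability of $A$ immediately yields $\tfrac{d}{dt}A(t)B(s)\psi = A'(t)B(s)\psi$; the other variable is symmetric.

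For (ii), on a simple tensor $\phi\otimes\chi$ the identity
\begin{equation*}
\bigl(A(t)\otimes B(s) - A(t_0)\otimes B(s_0)\bigr)(\phi\otimes\chi) = \bigl(A(t)-A(t_0)\bigr)\phi\otimes B(s)\chi + A(t_0)\phi\otimes\bigl(B(s)-B(s_0)\bigr)\chi
\end{equation*}
together with the factorization of the tensor-product norm reduces strong continuity to the estimate of (i). To extend to general $\Psi\in\cH\otimes\cK$, I use density of the algebraic tensor product and the crossnorm identity $\|A(t)\otimes B(s)\|=\|A(t)\|\,\|B(s)\|\leq M_J M_{J'}$ on a compact product neighborhood $J\times J'$: given $\epsilon>0$, approximate $\Psi$ within $\epsilon$ by a finite sum of simple tensors, apply the simple-tensor bound to the approximation, and control the tail with the uniform operator-norm bound. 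Separate strong differentiability is handled identically, except that for the difference quotient $h^{-1}(A(t+h)-A(t))\otimes B(s)$ one needs a uniform norm bound on $h^{-1}(A(t+h)-A(t))$ for $h$ in a punctured neighborhood of $0$; applying the uniform boundedness principle to the strongly convergent sequence of difference quotients delivers exactly this.

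For (iii), I would write $\|A(t)\|=\sup_{\|\psi\|=1}\|A(t)\psi\|$. By strong continuity of $A$ and continuity of the Hilbert space norm, each $t\mapsto\|A(t)\psi\|$ is continuous, so $t\mapsto\|A(t)\|$ is a pointwise supremum of continuous functions and hence lower semicontinuous: if $\|A(t_0)\|>a$, some unit vector $\psi$ witnesses $\|A(t_0)\psi\|>a$, continuity preserves this strict inequality in a neighborhood of $t_0$, and therefore $\{t:\|A(t)\|>a\}$ is open. Lower semicontinuity implies Borel measurability; local boundedness is \eqref{local_bound}; and a bounded measurable function on any compact subinterval is Lebesgue integrable, giving local integrability. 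The only substantive obstacle is the density-plus-uniform-bound step in (ii) for strong differentiability; once the uniform boundedness principle is invoked there, the rest of the proof is routine application of the triangle inequality.
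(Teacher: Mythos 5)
Your proofs of (i) and (iii) are essentially the paper's: the same two-term split for (i), and for (iii) the same realization of $\|A(t)\|$ as a pointwise supremum of continuous functions (the paper uses $\sup_{\phi,\psi}\langle\phi,A(t)\psi\rangle$, you use $\sup_\psi\|A(t)\psi\|$; both work) followed by the open-preimage criterion for lower semicontinuity, then measurability and UBP for local boundedness.

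Part (ii) takes a genuinely different route. The paper writes $\psi\in\cH\otimes\cK$ as $\sum_n\psi_n\otimes\phi_n$ with $\{\phi_n\}$ \emph{orthonormal}, which gives the Pythagorean identity
\[
\|((A(t)-A(t_0))\otimes\idty)\psi\|^2=\sum_n\|(A(t)-A(t_0))\psi_n\|^2,
\]
so the tail is controlled by $\sum_{n>N}\|\psi_n\|^2$; it then reduces $A(t)\otimes B(s)$ to $(A(t)\otimes\idty)(\idty\otimes B(s))$ via (i). You instead work directly with $A(t)\otimes B(s)$ on simple tensors, then approximate a general $\Psi$ by a finite algebraic tensor in norm and control the error with the uniform bound $\|A(t)\otimes B(s)\|\le M_J M_{J'}$, an $\epsilon/3$ argument. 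Both are valid. The paper's Schmidt-type expansion buys a slightly cleaner tail estimate (orthogonality lets it decouple the sum exactly), while your density argument is marginally more self-contained (no need to introduce the auxiliary operator $A(t)\otimes\idty$ or invoke part (i) a second time). You also correctly flag the one genuine subtlety in the differentiability version — that passing to the limit for general $\Psi$ requires a uniform operator-norm bound on the difference quotients $h^{-1}(A(t+h)-A(t))$, obtained by UBP from the pointwise bound guaranteed by strong differentiability — which the paper leaves to the reader under ``the strong differentiability claims follow similarly.''
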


It is clear that an analogue of Proposition~\ref{prop:st_cont_prods} holds when {\it strongly} is replace with
{\it norm} in the statements above. Moreover, an argument similar to the one found in Proposition~\ref{prop:st_cont_prods} (i) below shows that if
$A: I \to \mathcal{B}( \mathcal{H})$  and $\psi : I \to \mathcal{H}$ are both strongly continuous (strongly differentiable), 
then $(t,s) \mapsto A(t) \psi(s)$ is jointly strongly continuous (separately strongly differentiable). As will be clear from the proof, we note that the conclusions of part (iii) of this proposition continue to hold even for weakly continuous $A(t)$.

\begin{proof} We prove the statements above in the case of strong continuity; the strong differentiability claims follow similarly.

For (i), let $\psi \in \cH$ and $t_0, \, s_0 \in I$ be fixed. Given that $A$ and $B$ are strongly continuous, and hence locally bounded, it follows that $A(t)B(s)\psi \to A(t_0)B(s_0)\psi$ as $(t,s)\to(t_0,s_0)$ since
\[
\|A(t)B(s)\psi - A(t_0)B(s_0)\psi\|
\leq 
\|A(t)\| \|(B(s)-B(s_0))\psi\| + \|(A(t)-A(t_0))B(s_0)\psi\|.
\] 

To prove (ii), we first show that if $t\mapsto A(t)\in \cB(\cH)$ is strongly continuous and $\cK$ is another Hilbert space, then the map $t\mapsto A(t)\otimes\idty \in \cB(\cH\otimes\cK)$ is also strongly continuous. To see this, note that for all $\psi \in \cH \otimes \cK$, there exist two countable sets of vectors $\psi_n \in \cH$ and $\phi_n\in \cK$ with $\{\phi_n\}_n$ orthonormal such that $\psi = \sum_n \psi_n\otimes \phi_n$, and $\sum_n \|\psi_n\|^2 = \|\psi\|^2$. Fix $s\in I$, and let $J\subseteq I$ be a compact interval that contains a neighborhood of $s$. Using the orthonormality of $\phi_n$, we find that for all $t \in J$
\[\|((A(t)-A(s))\otimes \idtyty) \psi\|^2 = \sum_n \|(A(t)-A(s))\psi_n\|^2. \]
For any $\epsilon > 0$, choose $N$ large enough so that $\sum_{n>N} \|\psi_n\|^2 \leq \epsilon/8M_J^2,$ where $M_J>0$ satisfies \eqref{local_bound}. By the strong continuity of $A(t)$, there exists a $\delta>0$ such that for all $t\in(s-\delta, \, s+\delta) \subseteq J$ and $1\leq n \leq N$, one has $\|(A(t)-A(s))\psi_n\|^2 < \epsilon/2N.$ Putting these together, if $|t-s|< \delta$, then
\be \label{sc_tensor}
\|((A(t)-A(s))\otimes \idtyty) \psi\|^2 \leq \sum_{n=1}^N\|(A(t)-A(s))\psi_n\|^2 +\sum_{n>N}(\|A(t)\|+\|A(s)\|)^2\|\psi_n\|^2 < \epsilon.
\ee
Since $A(t)\otimes B(s) = (A(t)\otimes \idtyty)(\idtyty \otimes B(s))$, by (i), the tensor product of two strongly continuous maps $t\mapsto A(t)\in \cB(\cH)$ and $s\mapsto B(s)\in \cB(\cK)$ is jointly strongly continuous.

To prove (iii), we start by noting that, by virtue of the strong continuity of $A(t)$, the function $\Vert A(t)\Vert$ can be expressed as a supremum of continuous functions:
$$
\Vert A(t) \Vert =\sup\{ \langle \phi, A(t)\psi\rangle \mid \phi,\psi\in\cH , \Vert \phi\Vert = \Vert \psi\Vert =1\}.
$$
Recall that a function $f:I\to\Rl$ is lower semicontinuous iff for all $s\in\Rl$, $f^{-1} ((s,\infty)) $ is an open subset of $I$. Now, if $f$ is the supremum of a family of functions 
$\{f_\alpha\}$, we have that $ f^{-1} ((s,\infty))  = \bigcup_\alpha f_\alpha^{-1} ((s,\infty)) $. In our case, the $f_\alpha$ are indexed by a pair of unit vectors in $\cH$ and each 
$f_\alpha$ is continuous. Therefore, $f_\alpha^{-1} ((s,\infty))$ is open for all $\alpha$ and so is  $\bigcup_\alpha f_\alpha^{-1} ((s,\infty))$. This shows the lower semicontinuity.

Since we have that  $ f^{-1} ((s,\infty))$ is open, this set is also Borel measurable, for all $s\in \Rl$. By a standard lemma in measure theory \cite{folland:1999}, this implies that $f$ is measurable.

We already noted above that $\Vert A(t)\Vert $ is bounded on compact intervals by the Uniform Boundedness Principle. This concludes the proof.
\end{proof}

We will make frequent use of integrals of vector-valued and operator-valued functions.
It is straightforward to define such integrals in the weak sense. In fact, if 
$A: I \to \mathcal{B}( \mathcal{H})$ is locally bounded and weakly measurable, i.e. for 
all $\phi, \psi \in \mathcal{H}$, $t \mapsto \langle \phi, A(t) \psi \rangle$ is measurable, then
for any compact $J \subset I$ the integral of $A$ over $J$ is defined as the operator
$B_J \in \mathcal{B}( \mathcal{H})$
corresponding to the bounded sesquilinear form
\begin{equation}
\langle \phi, B_J \psi \rangle = \int_J  \langle \phi, A(t) \psi \rangle \, dt \, .
\end{equation}
We routinely use the notation $B_J = \int_J  A(t) \, dt$ to denote this operator.
For strongly continuous functions $A(t)$, the same integral can be interpreted in the strong sense:
\be
\left(\int_J  A(t) \, dt\right) \psi = \int_J  A(t)\psi \, dt,
\ee
where the RHS is understood to be the Riemann integral of a strongly continuous, Hilbert space-valued 
function. Since the range of any strongly continuous, Hilbert space-valued function
belongs to a separable subspace (even if $\cH$ is not separable, see, e.g., \cite[Section V.4]{yosida:1980}),
this integral also exists in the sense of Bochner. 

The following well-known inequalities hold for all $A : I \to \mathcal{B}( \mathcal{H})$, strongly continuous and
$J \subset I$ compact:
\begin{equation}
| \langle \phi, B_J \psi \rangle | \leq \int_J | \langle \phi, A(t) \psi \rangle | dt \leq
\| \phi \| \int_J \| A(t) \psi \| dt,
\end{equation}
and thus
\begin{equation}\label{int1}
\left\| \left(\int_J A(t)dt \right) \psi \right\| \leq \int_J \| A(t) \psi\|  dt.
\end{equation}
%{F}rom this, we immediately see that if $g$ is locally integrable with $\Vert A (t)\Vert \leq g(t)$, for all $t\in J$, then
%\be \label{int3}
%\left\| \int_J A(t) dt \right\| \leq \int_Jg(t) dt.
%\ee
In particular, we obtain
\begin{equation}\label{int2}
\left\| \int_J A(t) dt \right\| \leq \int_J \| A(t) \| dt \leq |J| \sup_{t\in J}\Vert A(t)\Vert.
\end{equation}
%Here the bound $\|A\|(t) \leq \sup_{s\in J}\|A(s)\|$ follows directly from \eqref{measurable_norm_At}
%and so the RHS of \eqref{int2} is clear.  
The  first inequality extends to infinite $J$ if, e.g., $A(t) = B(t) w(t)$, 
with $B(t)$ strongly continuous and bounded and $w\in L^1(J)$.
Finally, it is easy to see that if $A$ is strongly continuous, then $B(t) =\int_{t_0}^t A(s) ds$ is 
norm continuous, and strongly differentiable with $\frac{d}{dt}B(t) = A(t)$. As such, the fundamental theorem 
of calculus also holds in the strong sense.

%%%%%%%%%%
%
% On Dyson Series etc.
%
%%%%%%%%%

\subsection{Dynamical equations and the Dyson series} \label{sec:Dyson}

In this section, we review some well-known facts about Dyson series and from them
obtain the Schr\"odinger dynamics generated by a bounded, time-dependent Hamiltonian. 
A standard result in this direction can be summarized as follows.
Let $\mathcal{H}$ be a Hilbert space, $I\subset\mathbb{R}$ be a finite or infinite interval, 
and $H: I \to \mathcal{B}(\mathcal{H})$ be strongly continuous and pointwise self-adjoint,
i.e. $H(t)^*= H(t)$ for all $t \in I$. Under these assumptions (see, e.g. Theorem X.69 of 
\cite{reed:1975}) for each $t_0\in I$ and every initial condition $\psi_0 \in \mathcal{H}$, 
the time-dependent Schr\"odinger equation
\begin{equation} \label{tdseqn}
i\frac{d}{dt}\psi(t)=H(t)\psi(t), \quad \psi(t_0)=\psi_0,
\end{equation}
has a unique solution in the sense that there is a unique, strongly differentiable function $\psi : I \to \mathcal{H}$ 
which satisfies (\ref{tdseqn}). This solution can be characterized in terms of a two-parameter family of 
unitaries $\{ U(t,s) \}_{s,t \in I}\subset \mathcal{B}(\mathcal{H})$ such that  
\begin{equation}
\psi(t)=U(t,s)\psi(s) \quad \mbox{for all } s,t \in I \, .
\end{equation}
These unitaries are often referred to as propagators, and an explicit construction of them is given by the Dyson series.
Specifically, for any $s,t \in I$ and each $\psi \in \mathcal{H}$ the Hilbert space-valued series
\begin{equation}\label{dyson}
U(t,s)\psi =  \psi + \sum_{n=1}^{\infty} (-i)^n \int_{s}^t \int_{s}^{t_1} \cdots \int_{s}^{t_{n-1}} H(t_1) \cdots H(t_n)  \psi \, dt_n \cdots dt_1\,
\end{equation}
is easily seen to be absolutely convergent in norm. One checks that $U(t,s)$, as defined in (\ref{dyson}), satisfies 
the differential equation
\begin{equation}\label{se_propagator}
\frac{d}{dt}U(t,s) =-iH(t)U(t,s), \quad U(s,s) = \idtyty
\end{equation}
which is to be understood in the sense of strong derivatives. Of course, under the stronger assumption that 
$H : I \to \mathcal{B}( \mathcal{H})$ is norm continuous, then \eq{se_propagator} also holds in norm.

The additional observation we want to make here is that $U(t,s)$ is 
not only the unique strong solution of (\ref{se_propagator}); it is also the case that 
any bounded weak solution of (\ref{se_propagator}) necessarily coincides with $U(t,s)$. By {\em weak solution}, we mean that for all $\phi,\psi\in\mathcal{H}$ and
any $s,t \in I$, $U(t,s)$ satisfies 
\begin{equation} \label{Uts_weak_sol}
 \frac{d}{dt}\langle\phi,U(t,s)\psi\rangle =-i\langle\phi, H(t)U(t,s)\psi\rangle , \quad U(s,s) = \idtyty \, .
\end{equation}
A proof of this fact is contained in the following proposition.

\begin{prop} \label{prop:sols}
Let $A: I \to \mathcal{B}(\mathcal{H})$ be strongly continuous, and consider the differential equation
\begin{equation}\label{basic_de}
\frac{d}{dt} V(t)= A(t) V(t), \quad V(t_0) = V_0 \in \mathcal{B}( \mathcal{H}), \quad  t_0 \in I. 
\end{equation}
The following statements hold:

(i) There is a unique strong solution $V : I \to \mathcal{B}( \mathcal{H})$ of \eqref{basic_de}, and $V$ is norm continuous.

(ii) Any locally norm-bounded, weak solution of \eqref{basic_de} coincides with the strong solution.

(iii) Let $\mathcal{D} \subset \mathcal{H}$ be dense.
Suppose $V: I \to \mathcal{B}(\mathcal{H})$ is strongly continuous and satisfies
\begin{equation}\label{de_on_vectors}
\frac{d}{dt} V(t)\psi= A(t) V(t)\psi,  \quad V(t_0) \psi= V_0\psi
\end{equation}
for all $\psi \in \mathcal{D}$ and $t \in I$. Then, $V$ is the unique strong solution.

(iv) If $V_0$ is invertible, the strong solution $V$ of (\ref{basic_de}) is invertible for all $t\in I$.
Moreover, in this case, the inverse of $V$ is the unique strong solution of
\begin{equation}\label{inverse_de}
\frac{d}{dt} V^{-1}(t)= -V^{-1}(t)A(t), \quad V^{-1}(t_0) = V^{-1}_0 \in \mathcal{B}( \mathcal{H}).
\end{equation}

(v) If $H : I \to \mathcal{B}( \mathcal{H})$ is strongly continuous and pointwise self-adjoint, then the strong solution $V$ of
(\ref{basic_de}) with the choice $A = -iH$ and $V_0 = \idtyty$ is unitary for all $t \in I$. Moreover, the map
$U : I \times I \to \mathcal{B}( \mathcal{H})$ given by $U(t,s) =V(t)V(s)^*$ is the 
unique strong solution to \eqref{se_propagator}.
\end{prop}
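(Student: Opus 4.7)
For (i), the plan is to construct the solution via the Dyson-type series
$$V(t) = V_0 + \sum_{n=1}^\infty \int_{t_0}^t \int_{t_0}^{t_1}\cdots \int_{t_0}^{t_{n-1}} A(t_1)A(t_2)\cdots A(t_n)V_0\, dt_n\cdots dt_1,$$
which converges absolutely in operator norm uniformly on each compact subinterval of $I$ thanks to local boundedness of $\|A(\cdot)\|$ (Proposition~\ref{prop:st_cont_prods}(iii)). Each partial sum is norm continuous, so $V$ is norm continuous; term-by-term strong differentiation produces the ODE. For uniqueness, the difference $W$ of two strong solutions is strongly differentiable with $W(t_0)=0$ and locally norm-bounded by the Uniform Boundedness Principle, and integrating strongly gives $\|W(t)\psi\|\leq \bigl|\int_{t_0}^t \|A(s)\|\,\|W(s)\psi\|\,ds\bigr|$, whence Gr\"onwall forces $W\equiv 0$. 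Part (iii) then follows by approximating $\psi\in\mathcal{H}$ by $\psi_n\in\mathcal{D}$: integrating the ODE on $\mathcal{D}$ yields $V(t)\psi_n = V_0\psi_n + \int_{t_0}^t A(s)V(s)\psi_n\, ds$, and the strong continuity of $V$ together with the local bounds on $\|A(s)\|\|V(s)\|$ let one pass to the limit inside the integral and identify $V$ as a strong solution.

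For (ii), the key step is to upgrade a locally norm-bounded weak solution $\tilde V$ to the integral identity $\langle\phi,(\tilde V(t)-V_0)\psi\rangle = \int_{t_0}^t\langle\phi,A(s)\tilde V(s)\psi\rangle\,ds$. The integrand is continuous: rewritten as $\langle A(s)^*\phi,\tilde V(s)\psi\rangle$, the first factor is norm continuous in $s$ by strong continuity of $A^*$, the second is weakly continuous because the matrix elements $s\mapsto\langle\phi,\tilde V(s)\psi\rangle$ are differentiable, hence continuous, and local norm-boundedness of $\tilde V$ controls the cross term in the usual mixed-limit estimate. The ordinary scalar fundamental theorem of calculus then delivers the integral identity, and bounding the integrand by $\|\phi\|\,\|A(s)\|\,\|\tilde V(s)\psi\|$ and taking the supremum over unit $\phi$ reproduces the Gr\"onwall estimate from (i) applied to $W:=\tilde V-V$, forcing $\tilde V=V$.

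For (iv), I construct the unique strong solution $W$ of \eqref{inverse_de} with $W(t_0)=V_0^{-1}$ by a Dyson series with the factors reversed (and the sign changed), and then show $W(t)V(t)=V(t)W(t)=\idtyty$. Writing
$$\frac{W(t+h)V(t+h)-W(t)V(t)}{h}\psi = \frac{W(t+h)-W(t)}{h}V(t+h)\psi + W(t)\frac{V(t+h)-V(t)}{h}\psi,$$
the second summand converges in norm to $W(t)A(t)V(t)\psi$; for the first, the integral representation $(W(t+h)-W(t))/h = h^{-1}\int_t^{t+h}(-W(s)A(s))\,ds$ shows that its operator norm is bounded uniformly in small $h$ by $\sup_{s\in J}\|W(s)A(s)\|$ on a compact neighborhood $J$ of $t$, so combined with $V(t+h)\psi\to V(t)\psi$ in norm and strong convergence on the limiting vector, the first summand tends to $-W(t)A(t)V(t)\psi$ and cancels the second. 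Hence $W(t)V(t)\psi\equiv\psi$, and the symmetric calculation gives $V(t)W(t)\psi\equiv\psi$, so $W(t)=V(t)^{-1}$. For (v), differentiating $\|V(t)\psi\|^2=\langle V(t)\psi,V(t)\psi\rangle$ via the product rule for Hilbert-space-valued functions and using self-adjointness of $H(t)$ gives $\tfrac{d}{dt}\|V(t)\psi\|^2 = 2\,\Re\langle V(t)\psi,-iH(t)V(t)\psi\rangle = 0$, so $V(t)$ is an isometry; combined with invertibility from (iv), it is unitary. Finally, $U(t,s):=V(t)V(s)^*$ satisfies $U(s,s)=\idtyty$ and, treating $V(s)^*\psi$ as a fixed vector in the $t$-derivative, $\tfrac{d}{dt}U(t,s)\psi=-iH(t)V(t)V(s)^*\psi=-iH(t)U(t,s)\psi$, identifying $U(t,s)$ as the unique strong solution of \eqref{se_propagator} by (i).

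The main obstacle I anticipate is the product-rule step in (iv): with only strong (not norm) differentiability of $V$ and $W$ available, one cannot naively differentiate their product in operator norm, and the first summand of the decomposition above must be handled by the slightly non-obvious observation that the difference quotients of $W$ carry a uniform operator-norm bound read off from their integral representation rather than from differentiability itself. Once this bound is in hand, everything else in (iv) and (v) reduces to straightforward applications of (i) and the elementary identities for self-adjoint $H(t)$.
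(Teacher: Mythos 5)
Your overall architecture matches the paper's: Dyson series for existence and norm continuity, Gr\"onwall for uniqueness, a reversed Dyson series for the inverse, and direct arguments for unitarity and the propagator. Two parts are different-but-correct: in (iii) you pass to the limit in the integral identity $V(t)\psi_n = V_0\psi_n + \int_{t_0}^t A(s)V(s)\psi_n\,ds$ rather than invoking a term-by-term differentiation theorem as the paper does (which cites Rudin's Theorem 7.17), and in (v) you prove isometry by differentiating $\|V(t)\psi\|^2$ rather than matching the adjoint of the Dyson series against the inverse Dyson series. Both alternatives are sound.

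There is, however, a genuine gap in (iv). Your product-rule computation correctly establishes that $t\mapsto W(t)V(t)$ has zero strong derivative and hence $W(t)V(t)=\idtyty$; the operator-norm bound on the difference quotients of $W$ read off from $(W(t+h)-W(t))/h = -h^{-1}\int_t^{t+h}W(s)A(s)\,ds$ is exactly the right observation. But the assertion that ``the symmetric calculation gives $V(t)W(t)\psi\equiv\psi$'' is not correct: the analogous product-rule computation for $Y(t)=V(t)W(t)$ yields $Y'(t)=A(t)Y(t)-Y(t)A(t)$, a commutator equation, \emph{not} a vanishing derivative. In infinite dimensions a one-sided inverse need not be two-sided, so you cannot conclude $VW=\idtyty$ from $WV=\idtyty$ alone. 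The paper closes this by observing that the constant map $Y\equiv\idtyty$ solves the initial value problem $Y'=[A,Y]$, $Y(t_0)=\idtyty$, and then applying a Gr\"onwall argument to conclude that this is the unique solution, so that $VW\equiv\idtyty$ as well. Your proof needs this extra uniqueness step (or some other argument for surjectivity of $V(t)$) before invertibility is established.

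A smaller point in (ii): you argue the integrand $s\mapsto\langle A(s)^*\phi,\tilde V(s)\psi\rangle$ is continuous by invoking ``strong continuity of $A^*$,'' but strong continuity of $A$ does not in general imply strong continuity of $A^*$, so this step is unjustified. Fortunately it is also unnecessary: the scalar function $f(s)=\langle\phi,(\tilde V(s)-V(s))\psi\rangle$ is everywhere differentiable with derivative $\langle\phi,A(s)(\tilde V(s)-V(s))\psi\rangle$, which is locally bounded by the hypotheses; a differentiable function with locally bounded derivative is locally Lipschitz, hence absolutely continuous, and the fundamental theorem of calculus applies directly without any continuity claim on the integrand. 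That is essentially the paper's route.
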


Before giving the proof of this proposition, we first comment on the content of part (v) and its relationship to the Dyson series from \eqref{dyson}. 
A simple consequence of (i) is that the mapping $U(t,s)$, defined in (v), is jointly norm continuous. 
As stated in (v), this $U(t,s)$ is the unique strong solution of (\ref{se_propagator}); it is also strongly 
differentiable in $s$, and by (iv), this strong derivative is
\begin{equation} \label{strong_adjoint_de}
\frac{d}{ds}U(t,s) = i U(t,s) H(s) \, .
\end{equation}
In addition, one readily checks that $U(t,s)^{-1} = V(s) V(t)^*$, for all $s,t \in I$, and thus 
$U(t,s)^{-1} = U(t,s)^* = U(s,t)$, for all $s,t \in I$, so that $U(t,s)$ is a two-parameter family of unitaries. 
This family of unitaries satisfies the co-cycle property: if $r \leq s \leq t$, then
\begin{equation} \label{co-cycle_rel}
U(t,s)U(s,r) = U(t,r) \quad \mbox{and} \quad U(s,s) = \idtyty.
\end{equation} 
Finally, arguing as in the proof of (i) below, one sees that
the Dyson series (\ref{dyson}) is a strong solution of (\ref{se_propagator}). Combining this with the
uniqueness proven in (v), we conclude that $U(t,s) =V(t)V(s)^*$ must coincide with the Dyson series constructed in (\ref{dyson}).

\begin{proof}
(i) Define a map $V: I \to \mathcal{B}( \mathcal{H})$ via a Dyson series, i.e. for any $\psi \in \mathcal{H}$ and each $t \in I$, set
\begin{equation}\label{dyson2}
V(t)\psi =  V_0\psi + \sum_{n=1}^{\infty}  \int_{t_0}^t \int_{t_0}^{t_1} \cdots \int_{t_0}^{t_{n-1}} A(t_1) \cdots A(t_n)  V_0\psi \, dt_n \cdots dt_1\, .
\end{equation}
We now argue that $V$ is the unique strong solution of (\ref{basic_de}). 

First, we show that $V$ is well-defined.
The integrals appearing as terms in this series are well-defined due to the strong continuity of $A$. More precisely, for any $n \geq 1$, the product $A(t_1) \cdots A(t_n)$ is jointly strongly continuous 
in the variables $t_1, \cdots, t_n$, and thus the integrands are locally integrable. 
Next, for any $t \geq t_0$, the bound
\begin{eqnarray} \label{dyson_est}
\| V(t) \psi \| & \leq & \| V_0 \| \| \psi \| +  \| V_0 \| \| \psi \| \sum_{n=1}^{\infty} \int_{t_0}^t \int_{t_0}^{t_1} \cdots \int_{t_0}^{t_{n-1}} \| A \| (t_1) \cdots \| A \| (t_n) \, dt_n \cdots dt_1 \nonumber \\
& \leq & \| V_0 \| \| \psi \| \sum_{n=0}^{\infty} \frac{1}{n!} \left( \int_{t_0}^{t} \| A \|(s) \, ds \right)^n
\end{eqnarray}  
holds. Here, we note that we are using the alternate notation $\|A\|(t)$ for $\|A(t)\|.$  As it is clear that a similar argument holds for $t<t_0$, we see that $V$ is well-defined as an absolutely convergent (in norm) series. 

Next, we prove that $V$ is a strong solution of (\ref{basic_de}). To see this, define recursively a sequence of
operators $\{ V_n \}_{n \geq 1}$, $V_n : I \to \mathcal{B}(\mathcal{H})$ by setting
\begin{equation} \label{recursive}
V_1(t) \psi = A(t) V_0 \psi \quad \mbox{ and } \quad V_n(t) \psi = A(t) \int_{t_0}^t V_{n-1}(t_1) \psi \, dt_1 \quad \mbox{for any } n \geq 2 \mbox{ and } t \in I \, .
\end{equation}
With respect to this notation, it is clear that
\begin{equation}
V(t) \psi = V_0 \psi + \sum_{n=1}^{\infty} \int_{t_0}^t V_n(t_1) \psi \, d t_1 
\end{equation}
and moreover, for any $h \neq 0$, 
\begin{eqnarray}
\frac{V(t+h) - V(t)}{h} \psi & = & \sum_{n=1}^{\infty} \frac{1}{h} \int_t^{t+h} V_n(t_1) \psi \, dt_1 \nonumber \\
& = & \sum_{n=1}^{\infty} V_n(t) \psi + \sum_{n=1}^{\infty} \frac{1}{h} \int_t^{t+h} \left(V_n(t_1) - V_n(t) \right) \psi \, dt_1 \, .
\end{eqnarray}
Using the recursive definition, i.e. (\ref{recursive}), it is clear that the first term on the right-hand-side above is $A(t)V(t) \psi$.
A dominated convergence argument, using an estimate like (\ref{dyson_est}), guarantees that the remainder term goes to zero in norm, and
hence $V$ is a strong solution. 

Finally, we prove uniqueness. Let $V_1$ and $V_2$ be two strong solutions of (\ref{basic_de}). 
For any $t\in I$, set $t_+ = \max\{ t, t_0\}$ and $t_- = \min \{t,t_0 \}$. Given $\psi \in \mathcal{H}$, we have that
\begin{eqnarray}
\Vert (V_1(t) - V_2(t)) \psi \Vert & = & \left\Vert \int_{t_0}^{t} \frac{d}{ds}   (V_1(s) - V_2(s)) \psi \, ds \right\Vert  \nonumber \\ 
& = &  \left\Vert \int_{t_0}^{t} A(s)(V_1(s) - V_2(s)) \psi\, ds\right\Vert  \nonumber \\
& \leq &  \int_{t_-}^{t_+} \| A \| (s) \Vert (V_1(s) - V_2(s)) \psi\Vert \, ds \nonumber
\end{eqnarray}
and uniqueness follows from Gronwall's Lemma. 

One also sees that this $V$ is norm continuous. In fact, let $t,t_0 \in I$ and $\psi \in \mathcal{H}$. Clearly
\begin{equation}
\| (V(t) - V(t_0)) \psi \| = \left\| \int_{t_0}^t \frac{d}{ds} V(s) \psi \, ds \right\| \leq \| \psi \| \int_{t_-}^{t_+} \| AV \|(s) \, ds
\end{equation}
and norm continuity of $V$ follows. 

(ii) The uniqueness statement in (ii) is proven similarly. In fact, let $V_1$ and $V_2$ be two 
locally norm-bounded weak solutions of (\ref{basic_de}). In this case, for any $\phi, \psi \in \mathcal{H}$ 
and each $t \in I$,
\begin{eqnarray}
\left| \langle \phi, \left( V_1(t) - V_2(t) \right) \psi \rangle \right| & = & \left| \int_{t_0}^{t} \frac{d}{ds} \langle \phi, \left( V_1(s) - V_2(s) \right) \psi \rangle \, ds \right|  \nonumber \\ & = &  \left| \int_{t_0}^{t} \langle \phi, A(s)( V_1(s) - V_2(s))\psi \rangle \, ds \right|  \nonumber \\
& \leq & \| \phi \| \| \psi \|  \int_{t_-}^{t_+} \| A \| (s) \| V_1 - V_2 \|(s) \, ds. \label{basic_de3}
\end{eqnarray}
where we have used the notation $t_\pm$ as above. 
Taking the supremum of \eqref{basic_de3} over all normalized $\phi, \psi \in \mathcal{H}$ gives
\begin{equation}
\| V_1 - V_2 \|(t) \leq  \int_{t_-}^{t_+} \| A \|(s) \| V_1 - V_2 \|(s) \, ds \, .
\end{equation}
Again, uniqueness follows from Gronwall's Lemma. Since any strong solution is also a locally bounded weak solution, 
the unique weak and unique strong solutions must coincide.

(iii) We will show that any $V$ satisfying the assumptions of (iii) is actually the locally bounded weak solution. 
To this end, note that for any $\phi\in\mathcal{H}$, (\ref{de_on_vectors}) implies 
\begin{equation}\label{dense_strong}
\frac{d}{dt} \langle\phi,V(t)\psi\rangle= \langle\phi,A(t) V(t)\psi\rangle
\end{equation}
holds for all $\psi\in\mathcal{D}$ and any $t \in I$.
Let $\psi\in\mathcal{H}$ and take any sequence $\{ \psi_n \}_{n \geq 1}$ in $\mathcal{D}$ with $\psi_n$ converging to $\psi$.
Consider the sequence of functions $f_n:I \to \mathbb{C}$ defined by
\begin{equation}
f_n(t)= \langle\phi,V(t)\psi_n\rangle \quad \mbox{for all } t \in I,
\end{equation}
and set $f(t) = \lim_{n \to \infty}f_n(t)$. Note that
these pointwise limits exist as $\psi_n$ converges to $\psi$ and $V$ is locally bounded. One also sees that
$f(t) = \langle\phi,V(t)\psi\rangle$ for all $t \in I$. From (\ref{dense_strong}), it is clear that $f_n^\prime(t) =  \langle\phi,A(t) V(t)\psi_n\rangle$.
Since $AV$ is strongly continuous, and hence locally bounded, the same argument shows that
$g:I \to \mathbb{C}$ with $g(t) = \lim_{n \to \infty} f_n^{\prime}(t)= \langle\phi,A(t) V(t)\psi\rangle$ is well-defined.  
Observing further that $f_n^{\prime}$ converges to $g$ uniformly on compact subsets of $I$, it is clear that the
conditions of \cite[Theorem 7.17]{rudin:1976} are satisfied. We conclude that $f^\prime(t)=g(t)$ for all $t \in I$ and
hence, $V$ is the unique locally bounded weak solution. By the result proven in (ii), $V$ also
coincides with the unique strong solution.

(iv) Arguing as in the proof of (i), the function $W: I \to \mathcal{B}( \mathcal{H})$ defined by setting 
\begin{equation}\label{dyson_inverse}
W(t)\psi =  V_0^{-1}\psi + \sum_{n=1}^{\infty}  (-1)^n\int_{t_0}^t \int_{t_0}^{t_1} \cdots \int_{t_0}^{t_{n-1}} V_0^{-1} A(t_n) \cdots A(t_1)  
\psi \, dt_n \cdots dt_1
\end{equation}
for any $\psi \in \mathcal{H}$ is a strong solution of the initial value problem
\begin{equation} \label{inver_de}
\frac{d}{dt} W(t)= -W(t)A(t), \quad W(t_0) = V_0^{-1} .
\end{equation} 
Now, with $V$ the strong solution of (\ref{basic_de}), consider the function $Y : I \to \mathcal{B}( \mathcal{H})$ given by
$Y(t)=V(t)W(t)$. One checks that 
\begin{equation} \label{comm_de}
\frac{d}{dt} Y(t)= A(t)Y(t)-Y(t)A(t), \quad Y(t_0) = \idty,
\end{equation}
holds in the strong sense. It is clear that $Y(t) =\idty$ solves the above initial value problem. 
A Gronwall argument, similar to those we have proven before, shows that this constant 
solution is the unique strong solution of (\ref{comm_de}) and thus, $W$ is a right inverse of $V$ for all $t \in I$. 
Noting that the function $Z : I \to \mathcal{B}( \mathcal{H})$ 
defined by $Z(t) = W(t) V(t)$ satisfies the trivial initial value problem:
\begin{equation}
\frac{d}{dt} Z(t)= 0, \quad Z(t_0) = \idty,
\end{equation}
we conclude $W(t)=V(t)^{-1}$ as claimed. In fact, uniqueness of the strong solution of (\ref{inver_de}) follows. 

(v) One sees that $V$ is unitary by noting that the adjoint of the operator defined by 
\eqref{dyson2} agrees with the Dyson series for $V^{-1}$ given in \eqref{dyson_inverse}. 
For each $s \in I$ fixed, the map $t \mapsto V(t)V(s)^*$ defines a strong solution of \eqref{se_propagator} and by (i)
it is unique. 
\end{proof}

We conclude this section with an estimate on the solution of certain dynamical equations that will be useful
in the proof of the Lieb-Robinson bound in Section~\ref{sec:lrb}.

\begin{lem} \label{lem:normbd}  Let $\cH$ be a Hilbert space, $I\subset\mathbb{R}$ be a finite or infinite interval, and
$A,B: I \to \mathcal{B}(\mathcal{H})$ be strongly continuous with $A$ pointwise self-adjoint, i.e. $A(t)^*=A(t)$ for all $t\in I$.
For each $t_0 \in I$ and $V_0 \in \mathcal{B}( \mathcal{H})$, the initial value problem
\begin{equation} \label{gen_norm_pres_de}
\frac{d}{dt} V(t) = -i[A(t), V(t)] + B(t) \quad \mbox{with} \quad V(t_0) = V_0 
\end{equation}
has a unique strong solution. In particular,
 \begin{equation} \label{norm_pres_sol_bd}
\| V \|(t) \leq \| V_0 \| + \int_{t_-}^{t_+} \| B \|(s) \, ds 
\end{equation}
where $t_+ = \max \{ t, t_0 \}$, $t_- = \min \{ t,t_0 \}$. Moreover, any locally bounded weak solution of (\ref{gen_norm_pres_de}) 
coincides with the strong solution and, therefore, satisfies the estimate in (\ref{norm_pres_sol_bd}). 
\end{lem}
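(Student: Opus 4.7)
The plan is to change frames by conjugating (\ref{gen_norm_pres_de}) with the unitary propagator generated by $-iA$, which eliminates the commutator term and reduces the equation to a direct integration.

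First, I would invoke Proposition~\ref{prop:sols}(v) with $H = A$ to obtain a jointly strongly continuous unitary family $U(t,s)$ on $\cH$ that strongly satisfies $\frac{d}{dt}U(t,s) = -iA(t)U(t,s)$, $\frac{d}{ds}U(t,s) = iU(t,s)A(s)$, and $U(s,s) = \idty$. Writing $U(t) = U(t,t_0)$ and
\begin{equation}
W(t) := V_0 + \int_{t_0}^{t} U(s)^{\ast} B(s) U(s)\, ds, \qquad V(t) := U(t)\, W(t)\, U(t)^{\ast},
\end{equation}
I would take $V$ as the candidate strong solution. The integrand $s \mapsto U(s)^{\ast} B(s) U(s)$ is strongly continuous by Proposition~\ref{prop:st_cont_prods}(i), so (using the remarks following (\ref{int2})) $W$ is strongly differentiable with $\dot{W}(t) = U(t)^{\ast} B(t) U(t)$ and, being the integral of a function with locally bounded norm, is also norm-Lipschitz on compact subintervals. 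A term-by-term application of the strong product rule to $U(t)W(t)U(t)^{\ast}$ then shows that $V$ solves (\ref{gen_norm_pres_de}) strongly with $V(t_0) = V_0$. Taking operator norms of both factors and using the unitarity of $U$ together with (\ref{int2}) yields (\ref{norm_pres_sol_bd}).

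For uniqueness, rather than invoke the conjugation, I would adapt the Gronwall pattern already used in Proposition~\ref{prop:sols} so that the same argument covers locally norm-bounded \emph{weak} solutions. If $V_1, V_2$ are two such weak solutions, their difference $D := V_1 - V_2$ is a locally bounded weak solution of the homogeneous equation $\frac{d}{dt}D = -i[A, D]$ with $D(t_0) = 0$. Integrating the weak identity and bounding $|\langle \phi, A(s) D(s) \psi\rangle|$ and $|\langle \phi, D(s) A(s) \psi\rangle|$ by $\|A\|(s)\,\|D\|(s)$ for unit vectors $\phi, \psi$, then taking the supremum over $\phi, \psi$, I obtain
\begin{equation}
\|D\|(t) \leq 2 \int_{t_-}^{t_+} \|A\|(s)\,\|D\|(s)\, ds.
\end{equation}
Local integrability of $\|A\|$ (Proposition~\ref{prop:st_cont_prods}(iii)) and Gronwall's Lemma then force $D \equiv 0$; since the strong solution is itself a locally bounded weak solution, all three conclusions of the lemma follow.

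I expect the main technical hurdle to be verifying that $V = UWU^{\ast}$ is a strong solution, since the product rule must be justified with only strong, not norm, differentiability of the factors. The key observation that rescues the computation is that $W$ is automatically norm-Lipschitz on compact subintervals; combined with the strong differentiability of $U(t)$ and of $U(t)^{\ast}\psi$ from Proposition~\ref{prop:sols} and the uniform boundedness $\|U(t)\| = 1$, an $\epsilon/3$ splitting of the triple-product difference quotient delivers the claim in the same spirit as the bookkeeping already present in the proof of Proposition~\ref{prop:sols}.
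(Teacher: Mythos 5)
Your proposal is correct and follows essentially the same route as the paper: conjugating by the unitary propagator $W(t)$ (which you call $U(t,t_0)$) generated by $-iA$ to reduce the commutator equation to a direct integration, reading off the norm bound from unitarity and \eqref{int2}, and running a Gronwall argument (as in Proposition~\ref{prop:sols}(i)--(ii)) for uniqueness of locally bounded weak solutions. The technical concern you raise about the strong product rule is handled in the paper by appeal to Proposition~\ref{prop:st_cont_prods}(i) together with the local boundedness guaranteed by the Uniform Boundedness Principle, which is the same resolution you sketch.
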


\begin{proof} 
By Proposition \ref{prop:sols} (v), the unique strong solution of 
\begin{equation}
\frac{d}{dt} W(t)=-iA(t)W(t) \quad \mbox{with} \quad W(t_0) = \idtyty 
\end{equation}
is unitary for all $t \in I$. As a product of strongly differentiable maps, $V: I \to \mathcal{B}( \mathcal{H})$ given by
\begin{equation}
V(t) = W(t) \left( V_0 + \int_{t_0}^t W(s)^* B(s) W(s) \, ds \right) W(t)^*
\end{equation}
is strongly differentiable. In fact, a short calculation shows that this $V$ 
is a strong solution of (\ref{gen_norm_pres_de}), and moreover, the bound claimed in (\ref{norm_pres_sol_bd}) is clear. 
Arguments involving Gronwall's lemma, similar to those found in the proof of Proposition \ref{prop:sols} (i) and (ii), verify the claimed uniqueness results. 
\end{proof}

%%%%%%%%%%
%
% On Unbounded Hams
%
%%%%%%%%%

\subsection{The dynamics for a class of unbounded Hamiltonians} \label{sec:dyn_unbd}

%%%%%%%%%%
%
% The interaction picture
%
%%%%%%%%%

\subsubsection{On the interaction picture dynamics} \label{sec:int_pic_dyn}

The following proposition is an important application of  Proposition \ref{prop:sols}.
As explained in the remarks of Section X.12 of \cite{reed:1975}, applying the {\it interaction
picture representation} to Hamiltonians with the form $H = H_0+ \Phi$, even if 
$\Phi$ is time-independent, leads one to study a dynamics with time-dependent Hamiltonians. 
In this situation, one often produces Hamiltonians that are strongly continuous, but not norm continuous.
This leads us to consider Hamiltonians of the form $H(t) = H_0 + \Phi(t)$, where $H_0$ is 
a self-adjoint operator with dense domain $\cD$, and $\Phi(t)$ is a bounded, pointwise self-adjoint operator that is strongly continuous in $t$. 

\begin{prop}\label{prop:tduprop} 
Let $\mathcal{H}$ be a Hilbert space and $H_0$ a self-adjoint operator
with dense domain $\mathcal{D} \subset \mathcal{H}$. Let $I \subset \mathbb{R}$ be a finite or infinite interval and
$\Phi: I \to \mathcal{B}( \mathcal{H})$ be strongly continuous and pointwise self-adjoint. 
Then, there is a two parameter family of unitaries $\{U(t,s) \}_{s, t \in I}$ associated to the self-adjoint 
operator $H(t) = H_0 + \Phi(t)$ for which:
\begin{enumerate}
	\item[(i)]$(t,s) \mapsto U(t,s)$ is jointly strongly continuous,
	\item[(ii)] $U(t,s)$ satisfies the co-cycle property (\ref{co-cycle_rel}),
	\item[(iii)] $U(t,s)$ generates the unique, locally bounded weak solutions of the Schr\"odinger equation associated to 
	$H(t)$, i.e. for any $t_0 \in I$ and $\psi_0 \in \mathcal{H}$, $\psi : I \to \mathcal{H}$ given by $\psi(t) = U(t,t_0) \psi_0$ satisfies 
	\begin{equation} \label{weak_SE}
	\frac{d}{dt} \langle \phi, \psi(t) \rangle = -i \langle H(t) \phi, \psi(t) \rangle \quad \mbox{with} \quad \psi(t_0) = \psi_0
	\end{equation}
	for all $\phi \in \mathcal{D}$ and $t \in I$.
\end{enumerate}
\end{prop}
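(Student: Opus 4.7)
The plan is to construct $U(t,s)$ via the interaction picture associated to $H_0$. Let $W_0(t) = e^{-itH_0}$ denote the strongly continuous one-parameter unitary group generated by $H_0$ via Stone's theorem, and set
$$\tilde\Phi(t) := W_0(t)^{*}\, \Phi(t)\, W_0(t) = e^{itH_0}\Phi(t)e^{-itH_0}.$$
Since $W_0(t)$ and $W_0(t)^{*}$ are strongly continuous and $\Phi$ is strongly continuous by hypothesis, Proposition~\ref{prop:st_cont_prods}(i) gives strong continuity of $\tilde\Phi$; clearly $\tilde\Phi(t)^{*}=\tilde\Phi(t)$ and $\|\tilde\Phi(t)\|=\|\Phi(t)\|$. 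Proposition~\ref{prop:sols}(v) then produces a jointly norm continuous two-parameter family of unitaries $\tilde U(t,s)$ satisfying the co-cycle property and, in the strong sense, $\frac{d}{dt}\tilde U(t,s) = -i\tilde\Phi(t)\tilde U(t,s)$ with $\tilde U(s,s)=\idty$.

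I then define $U(t,s) := W_0(t)\,\tilde U(t,s)\, W_0(s)^{*}$. Assertion (i) follows by combining the joint norm continuity of $\tilde U$ with the strong continuity of $W_0(t)$ and $W_0(s)^{*}$ and the composition rule in Proposition~\ref{prop:st_cont_prods}(i). Assertion (ii) is a direct algebraic computation: the middle factors $W_0(s)^{*}W_0(s)=\idty$ telescope and the co-cycle identity of $\tilde U$ transfers to $U$, with $U(s,s)=\idty$ automatic.

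For the weak Schr\"odinger equation (iii), fix $\phi\in\mathcal{D}$ and $\psi_0\in\mathcal{H}$, and write
$$\langle \phi, U(t,t_0)\psi_0\rangle = \langle W_0(t)^{*}\phi,\ \tilde U(t,t_0)\xi_0\rangle, \qquad \xi_0:=W_0(t_0)^{*}\psi_0.$$
By Stone's theorem, $t\mapsto W_0(t)^{*}\phi$ is strongly differentiable on $\mathcal{D}$ with derivative $iW_0(t)^{*}H_0\phi$, and $t\mapsto \tilde U(t,t_0)\xi_0$ is strongly differentiable with derivative $-i\tilde\Phi(t)\tilde U(t,t_0)\xi_0$. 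The product rule for inner products of strongly differentiable Hilbert-space-valued functions, together with $\tilde\Phi(t)\tilde U(t,t_0)\xi_0 = W_0(t)^{*}\Phi(t)\,U(t,t_0)\psi_0$ and unitarity of $W_0(t)$, yields
$$\frac{d}{dt}\langle \phi, U(t,t_0)\psi_0\rangle = -i\langle H_0\phi + \Phi(t)\phi, U(t,t_0)\psi_0\rangle = -i\langle H(t)\phi, \psi(t)\rangle,$$
which is (\ref{weak_SE}).

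The remaining and most delicate step is uniqueness, which I expect to be the main obstacle. Let $\psi$ be any locally (norm) bounded weak solution with $\psi(t_0)=\psi_0$; local boundedness together with density of $\mathcal{D}$ promotes the pointwise differentiability from (\ref{weak_SE}) to weak continuity of $\psi$ on all of $\mathcal{H}$. Define $\chi(t):=W_0(t)^{*}\psi(t)$, which is again locally bounded and weakly continuous. For $\phi\in\mathcal{D}$, note that $W_0(t)\phi\in\mathcal{D}$ and rewrite $\langle \phi,\chi(t)\rangle = \langle W_0(t)\phi,\psi(t)\rangle$. Splitting the difference quotient of this expression as
$$\frac{\langle W_0(t+h)\phi - W_0(t)\phi,\ \psi(t+h)\rangle}{h} + \frac{\langle W_0(t)\phi,\ \psi(t+h)-\psi(t)\rangle}{h},$$
the first summand is handled by Stone's theorem (strong convergence of $[W_0(t+h)-W_0(t)]\phi/h$ paired against the weakly continuous $\psi$), while the second is controlled by applying the weak Schr\"odinger identity with the time-dependent test vector $W_0(t)\phi\in\mathcal{D}$. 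The $H_0$ contributions from the two summands cancel, leaving $\frac{d}{dt}\langle\phi,\chi(t)\rangle = -i\langle \phi,\tilde\Phi(t)\chi(t)\rangle$. By density of $\mathcal{D}$ and boundedness of $\tilde\Phi$, this extends to all $\phi\in\mathcal{H}$, so $\chi$ is a locally bounded weak solution of a Schr\"odinger equation with bounded generator $\tilde\Phi(t)$ and zero initial data. Integrating in $t$ and taking $\phi=\chi(t)/\|\chi(t)\|$ yields $\|\chi(t)\| \leq \int_{t_-}^{t_+}\|\tilde\Phi(s)\|\,\|\chi(s)\|\,ds$, and Gronwall's lemma (as in Proposition~\ref{prop:sols}(ii)) forces $\chi\equiv 0$, whence $\psi(t)=U(t,t_0)\psi_0$. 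The principal technical subtlety throughout is the product-rule step with a time-dependent test vector: it mixes strong differentiability of $W_0(t)\phi$ with only weakly differentiable data for $\psi$, and closing this step relies on promoting the weak differentiability hypothesis on $\psi$ to weak continuity before taking limits.
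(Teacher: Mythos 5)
Your construction is exactly the paper's: pass to the interaction picture $\tilde\Phi(t)=e^{itH_0}\Phi(t)e^{-itH_0}$, invoke Proposition~\ref{prop:sols}(v) for the bounded problem, and conjugate back via $U(t,s)=e^{-itH_0}\tilde U(t,s)e^{isH_0}$. Parts (i), (ii), and the verification of \eqref{weak_SE} match the paper, and you supply useful extra detail (the difference-quotient decomposition with the time-dependent test vector $W_0(t)\phi$, and the remark that local boundedness upgrades weak differentiability on $\mathcal{D}$ to weak continuity on all of $\mathcal{H}$) that the paper leaves implicit with ``it is easy to check.''

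The uniqueness step, however, is miswritten. You define $\chi(t)=W_0(t)^{*}\psi(t)$ for a single locally bounded weak solution $\psi$ with $\psi(t_0)=\psi_0$, correctly show that $\chi$ solves the weak Schr\"odinger equation for the bounded generator $\tilde\Phi(t)$, and then assert that $\chi$ has ``zero initial data'' so that Gronwall forces $\chi\equiv 0$, ``whence $\psi(t)=U(t,t_0)\psi_0$.'' None of this last chain is right as stated: $\chi(t_0)=e^{it_0H_0}\psi_0=\xi_0$ is generally nonzero, the Gronwall bound with nonzero initial data gives only $\|\chi(t)\|\le \|\xi_0\|\,e^{\int\|\tilde\Phi\|}$, and if one somehow did have $\chi\equiv 0$ the conclusion would be $\psi\equiv 0$, not $\psi=U(\cdot,t_0)\psi_0$. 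The intended and correct argument is either (a) as in the paper, take \emph{two} locally bounded weak solutions $\psi_1,\psi_2$, pass both to the interaction picture, and apply Gronwall to the difference $\chi_1-\chi_2$, which does have zero initial data; or (b) observe that $\chi$ is a locally bounded weak solution of the bounded problem with initial datum $\xi_0$ and invoke the uniqueness from Proposition~\ref{prop:sols}(ii) directly to conclude $\chi(t)=\tilde U(t,t_0)\xi_0$, hence $\psi(t)=W_0(t)\chi(t)=U(t,t_0)\psi_0$. Either fix is a one-line change, but as written the final step does not go through.
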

%In this proposition, weak solution means that \eq{Uts_weak_sol} is satisfied for all $\phi\in\cD$.
\begin{proof}
Since $H_0$ is self-adjoint, Stone's theorem implies that $\{ e^{itH_0} \}_{t \in \mathbb{R}}$ is a 
strongly continuous, one-parameter unitary group. In this case, the map $\tilde{H}:I \to \mathcal{B}( \mathcal{H})$
given by 
\begin{equation} \label{ip_twist_ham}
\tilde{H}(t) = e^{it H_0} \Phi(t) e^{-itH_0}
\end{equation}
is clearly pointwise self-adjoint and strongly continuous. Using Proposition~\ref{prop:sols} (v), we conclude that
the unique strong solutions of 
\begin{equation}
\frac{d}{dt} \tilde{U}(t,s) = -i \tilde{H}(t) \tilde{U}(t,s) \quad \mbox{with} \quad \tilde{U}(s,s) = \idtyty
\end{equation}
form a two-parameter family of unitaries $\{ \tilde{U}(t,s) \}_{s,t \in I}$ which satisfy the co-cycle property (\ref{co-cycle_rel}). 
In terms of this family, we define $U: I \times I \to \mathcal{B}( \mathcal{H})$ by setting
\begin{equation} \label{full-propagator}
U(t,s) = e^{-itH_0} \tilde{U}(t,s) e^{isH_0} \, .
\end{equation}
One checks that $\{U(t,s) \}_{s,t \in I}$ is a two-parameter family of unitaries satisfying 
(i) and (ii) above. 

To prove (iii), let $t_0 \in I$ and $\psi_0 \in \mathcal{H}$. 
Define $\psi:I \to \mathcal{H}$ by setting $\psi(t) = U(t,t_0) \psi_0$. Observe that
for any $\phi \in \mathcal{D}$ and each $t \in I$
\begin{equation}
\langle \phi, \psi(t) \rangle = \langle e^{itH_0} \phi, \tilde{U}(t, t_0) e^{it_0H_0} \psi_0 \rangle, 
\end{equation}
with the right-hand-side being a differentiable function of $t$. One calculates that
\begin{eqnarray}
\frac{d}{dt} \langle \phi, \psi(t) \rangle  & = & \langle iH_0 e^{itH_0} \phi, \tilde{U}(t, t_0) e^{it_0H_0} \psi_0 \rangle +  \langle e^{itH_0} \phi, \frac{d}{dt}\tilde{U}(t, t_0) e^{it_0H_0} \psi_0 \rangle \nonumber \\
& = & \langle iH_0 \phi, \psi(t) \rangle + \langle e^{itH_0} \phi, -i e^{itH_0} \Phi(t) \psi(t) \rangle \nonumber \\
& = & -i \langle H(t) \phi, \psi(t) \rangle 
\end{eqnarray}
as claimed.

We need only justify uniqueness of the locally bounded weak solutions. Let $t_0 \in I$, 
$\psi_0 \in \mathcal{H}$, and suppose $\psi_1$ and $\psi_2$ are two locally bounded solutions of the initial value problem
(\ref{weak_SE}). Consider the functions  $\tilde\psi_1(t)=e^{it H_0}\psi_1(t)$ and 
$\tilde\psi_2(t)=e^{it H_0}\psi_2(t)$. It is easy to check that these functions are
locally bounded weak solutions of the Schr\"odinger equation associated to the bounded Hamiltonian
$\tilde{H}(t)$ in (\ref{ip_twist_ham}). As such, they are unique, which may be argued as in the proof
of Proposition~\ref{prop:sols}, and therefore, so too are $\psi_1$ and $\psi_2$.
\end{proof}

In this work, we define the Heisenberg dynamics on a suitable algebra of observables in terms of the strongly continuous 
propagator $U(t,s)$ whose existence is guaranteed by Proposition \ref{prop:tduprop}. We work under assumptions that 
guarantee the uniqueness of bounded weak solutions. Strictly speaking, the uniqueness of the weak solution and the 
possible absence of a strong solution to the Schr\"odinger equation in the Hilbert space will play no role in our analysis. 
More information about the solutions and their uniqueness could, however, be important for the 
unambiguous interpretation of our results. Additional results exist in the literature if one is willing to make additional 
assumptions on $H_0$ and $\Phi(t)$. For example, the following theorem establishes the existence of an invariant domain for
the generator and, consequently, the existence of a unique strong solution for the situation where $H_0$ is semi-bounded 
and $\Phi(t)$ is Lipschitz continuous, which is a common physical situation. As explained in the introduction, there are 
important applications of the methods in this paper to situations where these additional assumptions are not satisfied.

\begin{thm}\label{thm:strongsoln}
Let $H_0$ be a self-adjoint operator with dense domain $\mathcal{D}\subset\cH$ and suppose $H_0\geq 0$.
Suppose $\Phi:\Rl \to \cB(\cH)$ is pointwise self-adjoint and `Lipschitz' continuous in the sense that for any bounded interval
$I\subset \Rl$, there exists a constant $C$ such that for all $s,t\in I$, we have
\be
\Vert (H_0+\idty)^{-1} (\Phi(t)-\Phi(s))(H_0 + \idty)^{-1}\Vert \leq C | t-s|.
\ee
Then, there exists a strongly continuous propagator $U(t,s)$, such that $U(t,s)\mathcal{D}\subset \mathcal{D}$, for all $s\leq t \in I$, 
and such that $t\mapsto U(t,t_0) \psi_0$ is the unique strong solution of 
\be
\frac{d}{dt} \psi(t) = -i(H_0+\Phi(t)) \psi(t) \quad \mbox{with} \quad \psi(t_0) = \psi_0
\ee
for all $\psi_0\in\mathcal{D}$.
\end{thm}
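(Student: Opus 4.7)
\textit{Proof plan.} The strategy is to start from the propagator produced by Proposition~\ref{prop:tduprop} and use the Lipschitz hypothesis to upgrade it to a strong solution preserving $\cD$. Throughout, write $R := (H_0+\idty)^{-1}$ and $A(t) := R\Phi(t)R$, so that the hypothesis becomes $\|A(t)-A(s)\| \leq C|t-s|$; hence $A$ is absolutely continuous in operator norm and has an a.e.\ derivative $A'(t)$ with $\|A'\|_\infty \leq C$.

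By Proposition~\ref{prop:tduprop}, a strongly continuous unitary propagator $U(t,s) = e^{-itH_0}\tilde U(t,s)e^{isH_0}$ already exists, where $\tilde U(t,s)$ solves the bounded interaction-picture Schr\"odinger equation with generator $\tilde H(t) = e^{itH_0}\Phi(t)e^{-itH_0}$. Since $\{e^{-itH_0}\}_{t\in\bR}$ preserves $\cD$ and is isometric in the graph norm of $H_0$, it suffices to prove $\tilde U(t,s)\cD\subset\cD$ with a locally bounded graph-norm estimate. I would set this up through approximation: introduce the Yosida regularizations $H_0^n := H_0(\idty+H_0/n)^{-1}$ (bounded, self-adjoint, non-negative, commuting with every bounded Borel function of $H_0$, and converging strongly to $H_0$ on $\cD$), let $\tilde H^n(t) := e^{itH_0^n}\Phi(t)e^{-itH_0^n}$, and let $\tilde U^n(t,s)$ be the strongly continuous unitary propagator supplied by Proposition~\ref{prop:sols}(v). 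Combining the Trotter-Kato convergence $e^{itH_0^n}\psi\to e^{itH_0}\psi$ with Lemma~\ref{lem:normbd} applied to the difference $(\tilde U^n - \tilde U)(t,s)\psi$ yields $\tilde U^n(t,s)\psi \to \tilde U(t,s)\psi$ strongly, uniformly on compact subintervals.

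The main technical step is to establish the $n$-uniform graph-norm estimate
\[
\|(H_0+\idty)\tilde U^n(t,s)\psi\| \leq e^{C'|t-s|}\,\|(H_0+\idty)\psi\|, \quad \psi\in\cD.
\]
Setting $\phi^n(t) := \tilde U^n(t,s)\psi$, the commutation $[H_0+\idty,H_0^n]=0$ and self-adjointness of $\tilde H^n(t)$ reduce the derivative of $f^n(t) := \|(H_0+\idty)\phi^n(t)\|^2$ to $-2\,\mathrm{Im}\langle (H_0+\idty)\phi^n(t),\,[H_0,\tilde H^n(t)]\phi^n(t)\rangle$. The formally ill-defined commutator $[H_0,\Phi(t)]$ hiding inside is tamed by rewriting $\Phi(t) = R^{-1}A(t)R^{-1}$, sandwiching by $R$'s, and integrating by parts in $t$: this trades the commutator for boundary terms together with the essentially bounded derivative $A'(t)$, each paired with $\phi^n$ and its graph norm. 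Gronwall's inequality then closes the estimate uniformly in $n$.

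Finally, the uniform graph-norm bound together with strong convergence of $\tilde U^n$ to $\tilde U$ gives, via weak sequential compactness, $\tilde U(t,s)\cD\subset\cD$ and the same exponential bound, so that $U(t,s)\cD\subset\cD$ as well. For $\psi_0\in\cD$ both $t\mapsto H_0 U(t,t_0)\psi_0$ and $t\mapsto \Phi(t)U(t,t_0)\psi_0$ are then strongly continuous in $\cH$ (the former via the graph-norm bound and weak continuity; the latter by hypothesis), so the strong fundamental theorem of calculus upgrades the weak Schr\"odinger equation of Proposition~\ref{prop:tduprop}(iii) to a strong one, with uniqueness inherited from the same proposition. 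The principal obstacle is the commutator manipulation in the third paragraph: the Lipschitz condition on $R\Phi(t)R$ is precisely what allows exchanging the formally ill-defined $[H_0,\Phi(t)]$ for the essentially bounded derivative $A'(t)$ through integration by parts in time, making the Gronwall estimate possible.
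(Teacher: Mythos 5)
The paper does not in fact prove Theorem~\ref{thm:strongsoln}; immediately after the statement the authors attribute it to Simon \cite[Theorem II.21]{simon:1971} and to Yosida, Kato, and Kisy\'nski, so there is no in-paper argument to compare against. Evaluated on its own merits, your plan has the right Kato--Yosida shape (regularize, prove an $n$-uniform graph-norm a priori estimate, pass to the limit by weak compactness, then upgrade the weak solution of Proposition~\ref{prop:tduprop}(iii) to a strong one), and steps you describe at the ends -- the weak sequential compactness argument using closedness of $H_0$, and the fundamental-theorem-of-calculus upgrade once both $H_0 U(t,t_0)\psi_0$ and $\Phi(t)U(t,t_0)\psi_0$ are known to be strongly continuous -- are sound. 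However, the central technical step, the differential inequality for $f^n(t)$, has two genuine gaps.

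First, well-definedness. Your $\tilde H^n(t) = e^{itH_0^n}\Phi(t)e^{-itH_0^n}$ is a bounded, strongly continuous self-adjoint operator, so $\tilde U^n(t,s)$ is a perfectly good unitary propagator on $\cH$ -- but there is no reason for it to preserve $\cD=\dom(H_0)$. Replacing $e^{\pm itH_0}$ with $e^{\pm itH_0^n}$ regularizes the conjugation, but the quantity you differentiate, $f^n(t)=\|(H_0+\idty)\tilde U^n(t,s)\psi\|^2$, still contains the unbounded $H_0$, and $\tilde U^n(t,s)\psi$ need not lie in $\cD$ for $t\neq s$. One would need a second regularization parameter in the graph-norm factor (something like $f^{n,m}(t)=\|(H_0^m+\idty)\tilde U^n(t,s)\psi\|^2$) to make the a priori estimate even a meaningful object, and then a bound uniform in \emph{both} parameters.

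Second, and more seriously, the Lipschitz hypothesis does not tame the commutator. Your computation correctly reduces $\tfrac{d}{dt}f^n(t)$ to a term involving $[H_0,\tilde H^n(t)] = e^{itH_0^n}[H_0,\Phi(t)]e^{-itH_0^n}$. But since $H_0$ commutes with $R=(H_0+\idty)^{-1}$, the rewriting $\Phi(t)=R^{-1}A(t)R^{-1}$ gives $[H_0,\Phi(t)] = R^{-1}[H_0,A(t)]R^{-1}$, which is circular: $[H_0,A(t)]=R[H_0,\Phi(t)]R$, so no cancellation occurs. The hypothesis $\|A(t)-A(s)\|\leq C|t-s|$ controls the \emph{time variation} of $A$, not the spatial commutator $[H_0,A(t)]$, which is an entirely separate piece of structure. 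Integrating by parts in $t$ converts the time derivative $A'(t)$ into boundary terms, but it cannot convert a spatial commutator $[H_0,\cdot]$ into a time derivative of $A$; the two operations are unrelated, and the commutator term persists. (There is also a smaller point: $\cB(\cH)$ lacks the Radon--Nikodym property, so a Lipschitz $A$ has a weak-* a.e.\ derivative, not a strong one; this matters if you want to pair $A'(t)$ against graph-norm quantities inside an integral.) Because of these two issues, the Gronwall inequality you assert in the third paragraph is not established, and the proof does not go through as written. The known proofs (Kato \cite{kato:1953,kato:1956}, Yosida \cite[\S XIV.4]{yosida:1980}, Simon \cite[Thm.~II.21]{simon:1971}) establish the analogue of your a priori estimate quite differently -- essentially by a piecewise-constant-in-time (or Picard) approximation of the generator, exploiting that each time-frozen $H_0+\Phi(t_k)$ visibly preserves $\cD$, and then showing the resulting products of time-independent unitary groups converge in the graph norm using the sandwiched Lipschitz modulus to control each consecutive difference -- rather than by a direct differential inequality for a graph-norm functional.
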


In \cite[Theorem II.21]{simon:1971} Simon credits a version of this theorem to Yosida, who proved it in a more general Banach
space context \cite[Section XIV.4]{yosida:1980}, but with the Lipschitz condition replaced by a boundedness
condition on the derivative of $\Phi(t)$. Yosida gives credit to Kato \cite{kato:1953,kato:1956} and Kisy\'nski \cite{kisynski:1964}. 

%%%%%%%%%%
%
% On Duhamel and an application
%
%%%%%%%%%

\subsubsection{A Duhamel formula for bounded perturbations depending on a parameter}\label{sec:duhamel}

In this section, we consider families of Hamiltonians $H_{\lambda}(t)$ which depend on
a time-parameter $t \in I$ and an auxillary parameter $\lambda \in J$. For such families, we will
prove a version of the well-known Duhamel formula (Proposition~\ref{prop:duhamel}) 
and use it to derive various continuity properties of the corresponding dynamics (Proposition~\ref{properties_of_expiths}).

Let $H_0$ be a densely defined, self-adjoint operator on a Hilbert space $\mathcal{H}$ and
denote by $\mathcal{D} \subset \mathcal{H}$ the corresponding dense domain.
Let $I,J \subset \mathbb{R}$ be intervals and consider the family of Hamiltonians
$H_{\lambda}(t)$, $t \in I$ and $\lambda \in J$, acting on $\mathcal{D} \subset \mathcal{H}$ given by
\begin{equation} \label{fam_ham_tl}
H_{\lambda}(t) = H_0 + \Phi_{\lambda}(t)
\end{equation}
where for each $t \in I$ and $\lambda \in J$, $\Phi_{\lambda}(t)^*=\Phi_{\lambda}(t) \in \mathcal{B}( \mathcal{H})$. 
The self-adjointness of $H_{\lambda}(t)$ on the common domain, $\mathcal{D}$, is clear.
We will assume that $(t, \lambda) \mapsto \Phi_{\lambda}(t)$ is jointly strongly continuous. We will also assume that
for each fixed $t \in I$, the mapping $\lambda \mapsto \Phi_{\lambda}(t)$ is strongly differentiable 
and that the corresponding derivative, which we denote by $\Phi'_{\lambda}(t)$,
satisfies that the map $(t, \lambda) \mapsto \Phi'_{\lambda}(t)$ is jointly strongly continuous.

Under these assumptions, Proposition \ref{prop:tduprop} guarantees that for each $\lambda \in J$ there
exists a two parameter family of unitaries $\{ U_\lambda(t,s) \}_{s,t \in I}$ which 
generates the weak solutions of the Schr\"odinger equation associated to $H_\lambda(t)$, see (\ref{weak_SE}).
Our goal here is to show that for fixed $s,t \in I$, the map $\lambda \mapsto U_\lambda(t,s)$ is strongly 
differentiable and moreover, 
\be \label{derivativebound}
\left\Vert \frac{d}{d\lambda} U_\lambda(t,s)\right\Vert \leq \int_{\min(s,t)}^{\max(s,t)} \Vert \Phi^\prime_\lambda\Vert (r)dr .
\ee
We will obtain this bound as a corollary of the following proposition, which gives a Duhamel formula for the derivative in this setting. 
Although the Duhamel formula is well-known, we give an explicit proof here that allows us to clarify the continuity properties implied 
by our assumptions. In the proof we avoid taking derivatives with respect to $t$ or $s$ which, in general, 
are unbounded operators.

\begin{prop}[Duhamel Formula]\label{prop:duhamel}
Let $H_\lambda(t)$ be a family of self-adjoint operators as in (\ref{fam_ham_tl}) above 
and let $U_\lambda(t,s)$ denote the corresponding unitary propagator. Then, for all $s,t\in I$ with $s\leq t$, we have that
\be \label{duhamel}
\frac{d}{d\lambda} U_\lambda(t,s) = -i \int_s^t U_\lambda(t,r)\ \Phi^\prime_\lambda(r) U_\lambda(r,s)\, dr
\ee
where the derivative and the integral are to be understood in the strong sense. 
\end{prop}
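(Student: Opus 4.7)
The idea is to pass to the interaction picture, where the twisted Hamiltonian is \emph{bounded} so the standard differentiation-of-a-product trick applies, and then conjugate back.

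First, set $\tilde{\Phi}_\lambda(t) := e^{itH_0}\Phi_\lambda(t)e^{-itH_0}$ and $\tilde{\Phi}'_\lambda(t) := e^{itH_0}\Phi'_\lambda(t)e^{-itH_0}$. By Stone's theorem and Proposition~\ref{prop:st_cont_prods}, both $(t,\lambda) \mapsto \tilde{\Phi}_\lambda(t)$ and $(t,\lambda) \mapsto \tilde{\Phi}'_\lambda(t)$ are pointwise self-adjoint and jointly strongly continuous, and sandwiching the difference quotient of $\Phi_\lambda(t)$ between the unitaries $e^{\pm itH_0}$ shows that $\lambda \mapsto \tilde{\Phi}_\lambda(t)$ is strongly differentiable with derivative $\tilde{\Phi}'_\lambda(t)$. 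Let $\tilde{U}_\lambda(t,s)$ denote the unitary propagator for $\tilde{\Phi}_\lambda(t)$ produced by Proposition~\ref{prop:sols}(v); then by \eqref{full-propagator}, $U_\lambda(t,s) = e^{-itH_0}\tilde{U}_\lambda(t,s)e^{isH_0}$.

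Next I establish a telescoping identity at the level of $\tilde{U}$. Fix $s\le t$ in $I$, $\psi\in\mathcal{H}$, and $|h|$ small. By Proposition~\ref{prop:sols}(iv)--(v), $r \mapsto \tilde{U}_\lambda(t,r)$ is strongly differentiable in $r$ with derivative $i\tilde{U}_\lambda(t,r)\tilde{\Phi}_\lambda(r)$, while $r \mapsto \tilde{U}_{\lambda+h}(r,s)\psi$ is strongly differentiable in $r$ with derivative $-i\tilde{\Phi}_{\lambda+h}(r)\tilde{U}_{\lambda+h}(r,s)\psi$. The usual product rule for strongly differentiable, locally bounded operator-valued functions (justified by splitting the difference quotient and using local boundedness of $\|\tilde{U}\|$ and $\|\tilde{\Phi}\|$) then yields
\[
\frac{d}{dr}\bigl[\tilde{U}_\lambda(t,r)\tilde{U}_{\lambda+h}(r,s)\psi\bigr] = -i\,\tilde{U}_\lambda(t,r)\bigl[\tilde{\Phi}_{\lambda+h}(r) - \tilde{\Phi}_\lambda(r)\bigr]\tilde{U}_{\lambda+h}(r,s)\psi,
\]
and integrating from $s$ to $t$ gives the telescoping identity
\[
\bigl[\tilde{U}_{\lambda+h}(t,s) - \tilde{U}_\lambda(t,s)\bigr]\psi = -i\int_s^t \tilde{U}_\lambda(t,r)\bigl[\tilde{\Phi}_{\lambda+h}(r) - \tilde{\Phi}_\lambda(r)\bigr]\tilde{U}_{\lambda+h}(r,s)\psi\,dr.
\]

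The main technical step is dividing by $h$ and passing to the limit $h \to 0$ under the integral. Pointwise in $r$, the integrand converges in norm to $-i\tilde{U}_\lambda(t,r)\tilde{\Phi}'_\lambda(r)\tilde{U}_\lambda(r,s)\psi$: the Dyson series \eqref{dyson2} applied to the jointly strongly continuous family $\tilde{\Phi}_\mu$ shows that $\tilde{U}_{\lambda+h}(r,s)\psi \to \tilde{U}_\lambda(r,s)\psi$ in norm, and a standard splitting together with the mean-value identity $h^{-1}[\tilde{\Phi}_{\lambda+h}(r) - \tilde{\Phi}_\lambda(r)]\phi = h^{-1}\int_\lambda^{\lambda+h} \tilde{\Phi}'_\mu(r)\phi\,d\mu$ handles the remaining difference quotient. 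For a dominating integrable majorant, this same mean-value identity combined with joint strong continuity of $\tilde{\Phi}'_\mu(r)$ and the uniform boundedness principle applied on the compact rectangle $[s,t] \times [\lambda-\delta, \lambda+\delta]$ (yielding $\sup \|\tilde{\Phi}'_\mu(r)\| < \infty$), together with unitarity of $\tilde{U}_{\lambda+h}(r,s)$, gives a constant $C < \infty$ with $\|h^{-1}[\tilde{\Phi}_{\lambda+h}(r) - \tilde{\Phi}_\lambda(r)]\tilde{U}_{\lambda+h}(r,s)\psi\| \le C\|\psi\|$ uniformly in $r \in [s,t]$ and $|h| < \delta$. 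Dominated convergence then yields the strong derivative $\frac{d}{d\lambda}\tilde{U}_\lambda(t,s)\psi = -i\int_s^t \tilde{U}_\lambda(t,r)\tilde{\Phi}'_\lambda(r)\tilde{U}_\lambda(r,s)\psi\,dr$.

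Finally I conjugate back: using \eqref{full-propagator} to pull $e^{-itH_0}$ and $e^{isH_0}$ out of the derivative on $\tilde{U}_\lambda(t,s)$ and inserting resolutions $e^{-irH_0}e^{irH_0}$ between the three factors in the integrand rewrites $\tilde{U}_\lambda(t,r)\tilde{\Phi}'_\lambda(r)\tilde{U}_\lambda(r,s)$ as $e^{itH_0}U_\lambda(t,r)\Phi'_\lambda(r)U_\lambda(r,s)e^{-isH_0}$, and the claimed formula \eqref{duhamel} (in the strong sense) drops out. The main obstacle throughout is the justification of the interchange of limit and integral in the $h \to 0$ step; everything else is bookkeeping.
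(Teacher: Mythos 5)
Your proof is correct but takes a genuinely different route from the paper. Both arguments start by passing to the interaction picture to reduce to a bounded time-dependent generator $\tilde{\Phi}_\lambda(t)$, and both end by conjugating back with $e^{\pm i\cdot H_0}$. The middle steps differ: the paper writes the Dyson series for $\tilde{U}_\lambda(t,s)$, differentiates it term by term in $\lambda$ (with the joint strong continuity of $\tilde{\Phi}_\lambda$ and $\tilde{\Phi}'_\lambda$ invoked to justify this), and then inserts the Dyson series for $\tilde{U}_\lambda(t,r)$ and $\tilde{U}_\lambda(r,s)$ into the right-hand side of \eqref{duhamelUst} to check, by comparing iterated integration domains, that the two sides agree term by term. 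You instead use the variation-of-constants identity
\[
\tilde{U}_{\lambda+h}(t,s)\psi - \tilde{U}_\lambda(t,s)\psi = -i\int_s^t \tilde{U}_\lambda(t,r)\bigl[\tilde{\Phi}_{\lambda+h}(r)-\tilde{\Phi}_\lambda(r)\bigr]\tilde{U}_{\lambda+h}(r,s)\psi\,dr,
\]
obtained by differentiating $r\mapsto\tilde{U}_\lambda(t,r)\tilde{U}_{\lambda+h}(r,s)\psi$ and integrating, then divide by $h$ and pass to the limit via dominated convergence. The key technical points you correctly address are (a) the product rule for strongly differentiable, locally bounded operator-valued functions, (b) obtaining an $h$-uniform operator-norm bound on the difference quotient $h^{-1}[\tilde{\Phi}_{\lambda+h}(r)-\tilde{\Phi}_\lambda(r)]$ from the strong fundamental theorem of calculus plus uniform boundedness on a compact parameter rectangle, and (c) the fact that $\tilde{U}_{\lambda+h}(r,s)\psi\to\tilde{U}_\lambda(r,s)\psi$ in norm, all of which are available from Proposition~\ref{prop:st_cont_prods} and Proposition~\ref{prop:sols}. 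Your route is the more standard one-shot argument and avoids the re-summation bookkeeping; the paper's route has the virtue of exhibiting the Duhamel identity at the level of individual Dyson terms. Both require a justified interchange of a limit with an integral (or series), and both obtain a statement in the strong topology rather than in norm, consistent with the hypothesis.
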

With stronger assumptions, one can prove (\ref{duhamel}) holds in norm. In fact, arguing as below, if
\begin{enumerate}
	\item[(i)] the map $(t, \lambda) \mapsto \Phi_{\lambda}(t)$ is jointly norm continuous,
	\item[(ii)] for each $t \in I$, the map $\lambda \mapsto \Phi_\lambda(t)$ is norm differentiable; with derivative denoted by $\Phi'_{\lambda}(t)$, and
	\item[(iii)] the map $(t, \lambda) \mapsto \Phi_{\lambda}'(t)$ is jointly norm continuous,  
\end{enumerate}
then $\lambda \mapsto U_\lambda(t,s)$ is norm differentiable and its derivative satisfies (\ref{duhamel}).
\begin{proof}
Recall that the unitary propagator $U_{\lambda}(t,s)$, as defined in the proof of Proposition~\ref{prop:tduprop}, is
\be \label{expiths}
U_\lambda(t,s) = e^{-i t H_0} \tilde{U}_\lambda(t,s)e^{isH_0} 
\ee
where $\tilde{U}_{\lambda}(t,s)$ is the unique strong solution of 
\be \label{twisted_de_duh}
\frac{d}{dt} \tilde{U}_{\lambda}(t,s) = -i \tilde{\Phi}_{\lambda}(t) \tilde{U}_{\lambda}(t,s) \quad \mbox{with} \quad \tilde{U}_{\lambda}(s,s) = \idtyty \quad \mbox{and} \quad
\tilde{\Phi}_\lambda(t) = e^{i t H_0} \Phi_\lambda(t) e^{-i t H_0}.
\ee 
We first prove the analogue of (\ref{duhamel}) for $\tilde{U}_\lambda(t,s)$, i.e.
\be \label{duhamelUst}
\frac{d}{d\lambda} \tilde{U}_\lambda(t,s) = -i \int_s^t \tilde{U}_\lambda(t,r) \tilde{\Phi}^\prime_\lambda(r) \tilde{U}_\lambda(r,s) \, dr.
\ee
Given \eq{duhamelUst}, the $\lambda$-derivative of \eq{expiths} is easily seen to satisfy \eq{duhamel}. 

We now show \eq{duhamelUst}. The unique strong solution of (\ref{twisted_de_duh}) is given by the Dyson series
\be \label{dysonUstr}
\tilde{U}_\lambda(t,s) =  \idty + \sum_{n=1}^{\infty} (-i)^n \int_{s}^t \int_{s}^{t_1} \cdots \int_{s}^{t_{n-1}} \tilde{\Phi}_\lambda(t_1) \cdots \tilde{\Phi}_\lambda(t_n) \, dt_n \cdots dt_1\,.
\ee
To each $n \geq 1$, define a map $\Psi_{\lambda} : I^n \to \mathcal{B}( \mathcal{H})$ by setting
\begin{equation}
\Psi_{\lambda}(t_1, \cdots, t_n) = \tilde{\Phi}_{\lambda}(t_1) \cdots \tilde{\Phi}_{\lambda}(t_n) \quad \mbox{for any } (t_1, \cdots. t_n) \in I^n \, .
\end{equation} 
With $(t_1, \cdots, t_n) \in I^n$ fixed, our assumptions imply that $\lambda \mapsto \Psi_{\lambda}(t_1, \cdots, t_n)$ is
strongly differentiable and moreover
\begin{equation}
\frac{d}{d \lambda} \Psi_{\lambda}(t_1, \cdots, t_n)  =  \sum_{k=1}^n \tilde{\Phi}_{\lambda}(t_1) \cdots \tilde{\Phi}_{\lambda}(t_{k-1})\tilde{\Phi}_{\lambda}'(t_{k})\tilde{\Phi}_{\lambda}(t_{k+1})\cdots \tilde{\Phi}_{\lambda}(t_n).
\end{equation}
The joint strong continuity of $(t, \lambda) \mapsto \tilde{\Phi}_{\lambda}(t)$ and $(t, \lambda) \mapsto \tilde{\Phi}_{\lambda}'(t)$ can be
used to justify term-by-term differentiation of the Dyson series (\ref{dysonUstr}), and we obtain
\be \label{dyson-derived}
\frac{d}{d\lambda} \tilde{U}_\lambda(t,s)
= \sum_{n=1}^{\infty} (-i)^n
\int_{s}^t \cdots \int_{s}^{t_{n-1}} \frac{d}{d \lambda} \Psi_{\lambda}(t_1, \cdots, t_n) \, dt_n \cdots dt_1\,.
\ee

The proof of (\ref{duhamelUst}) is now completed by demonstrating that upon inserting the Dyson 
series for $\tilde{U}_{\lambda}(t,r)$ and $\tilde{U}_{\lambda}(r,s)$ into the integral on the right-hand-side 
of (\ref{duhamelUst}), the result simplifies to the expression on the right-hand-side of (\ref{dyson-derived}).  

Note that upon substitution of \eq{dysonUstr} into the right-hand-side of \eq{duhamelUst} we find
\bea
-i \int_s^t  \tilde{U}_\lambda(t,r)\tilde{\Phi}^\prime_\lambda(r)  \tilde{U}_\lambda(r,s) \, dr\nonumber
&=& \sum_{p,q\geq 0}(-i)^{p+q+1}\int_s^t \int_r^t \int_r^{t_1}\cdots \int_r^{t_{p-1}} \int_s^r \int_s^{t_{p+2}}\cdots \int_s^{t_{p+q}}\\
&&\times  \tilde{\Phi}_\lambda(t_1) \cdots \tilde{\Phi}_\lambda(t_p) \tilde{\Phi}^\prime_\lambda(r) \tilde{\Phi}_\lambda(t_{p+2}) \cdots \tilde{\Phi}_\lambda(t_{p+q+1})\\
&&\times dt_{p+q+1} \cdots dt_{p+2} dt_p \cdots dt_1 dr.\nonumber
\eea
Here $p$ (respectively $q$) is the index of the terms in the series for the first (respectively second) propagator, 
and we have taken as integration variables $t_1,\ldots, t_p$ and $t_{p+2},\ldots,t_{p+q+1}$. 
Each integrand above is the product of $n=p+q+1 \geq 1$ operators.
Since the goal is to re-write the above as in \eqref{dyson-derived}, we now re-index 
by writing $p = k-1$ and $q = n-k$ for  $n\geq 1$ and $1\leq k\leq n$.  One sees that
\bea
-i \int_s^t  \tilde{U}_\lambda(t,r) \tilde{\Phi}^\prime_\lambda(r)  \tilde{U}_\lambda(r,s)\, dr\nonumber
&=& \sum_{n=1}^\infty \sum_{k=1}^n (-i)^n
\int_s^t \int_r^t \int_r^{t_1}\cdots \int_r^{t_{k-2}} \int_s^r \int_s^{t_{k+1}}\cdots \int_s^{t_{n-1}}\\
&&\times  \tilde{\Phi}_\lambda(t_1) \cdots \tilde{\Phi}_\lambda(t_{k-1}) \tilde{\Phi}^\prime_\lambda(r) \tilde{\Phi}_\lambda(t_{k+1}) \cdots \tilde{\Phi}_\lambda(t_n)\label{dyson-resummed}\\
&&\times dt_n \cdots dt_{k+1} dt_{k-1} \cdots dt_1 dr.\nonumber
\eea
The identity \eqref{duhamelUst} now follows by comparing, term by term, the integration domains on 
the right-hand-sides of \eqref{dyson-derived} and \eqref{dyson-resummed}. That they are equal follows, e.g.,
by reordering the iterated integrals in (\ref{dyson-resummed}).
\end{proof}

For each $\lambda \in J$, the Heisenberg dynamics $\tau_{t,s}^{\lambda}$, $s,t \in I$, 
associated to the family of Hamiltonians in (\ref{fam_ham_tl}) is the co-cycle of automorphisms of $\mathcal{B}( \mathcal{H})$ given by
\begin{equation} \label{dyn_tl}
\tau_{t,s}^{\lambda}(A) = U_{\lambda}(t,s)^* A U_{\lambda}(t,s) \quad \mbox{for all } A \in \mathcal{B}( \mathcal{H}).
\end{equation} 
As it will be convenient for later applications, we summarize various continuity properties of this
dynamics in the following proposition.

\begin{prop}\label{properties_of_expiths}
Let $H_{\lambda}(t)$ be a family of Hamiltonians as described in (\ref{fam_ham_tl}).
The corresponding dynamics, as in (\ref{dyn_tl}) above, has the following properties:
\begin{enumerate}
\item[(i)] For each $\lambda \in J$ and $A \in \mathcal{B}( \mathcal{H})$, the map $(s,t) \mapsto \tau_{t,s}^{\lambda}(A)$ is
jointly strongly continuous. 
\item[(ii)] For each $s,t \in I$ and $A \in \mathcal{B}( \mathcal{H})$, the map $ \lambda \mapsto \tau^{\lambda}_{t,s}(A)$ is 
strongly differentiable (and hence strongly continuous). Moreover, one has the estimate
\be \label{norm_bd_dyn_der}
\left\Vert \frac{d}{d\lambda} \tau_{t,s}^\lambda(A)\right\Vert \leq 2\Vert A\Vert 
\int_{\rm \min(s,t)}^{\rm \max(s,t)}  \Vert \Phi^\prime_\lambda\Vert (r)dr.
\ee
\item[(iii)] For fixed $s, t \in I$ and $\lambda \in J$, the map $\tau_{t,s}^{\lambda}(\cdot): \mathcal{B}( \mathcal{H}) \to \mathcal{B}(\mathcal{H})$ 
is continuous on bounded sets when both its domain and codomain are equipped with the strong operator topology.
This continuity is uniform for $\lambda$ in compact subsets of $J$.
\end{enumerate}
\end{prop}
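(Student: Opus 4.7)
All three statements reduce to propagator-level facts via the identity $\tau_{t,s}^\lambda(A) = U_\lambda(t,s)^\ast A\, U_\lambda(t,s)$. For part (i), Proposition~\ref{prop:tduprop}(i) gives joint strong continuity of $(t,s)\mapsto U_\lambda(t,s)$, and the co-cycle relation \eqref{co-cycle_rel} together with unitarity yields $U_\lambda(t,s)^\ast = U_\lambda(s,t)$, so the adjoint is jointly strongly continuous as well. Writing
\begin{equation}
\|U_\lambda(t,s)^\ast A U_\lambda(t,s)\psi - U_\lambda(t_0,s_0)^\ast A U_\lambda(t_0,s_0)\psi\| \leq \|A\|\,\|(U_\lambda(t,s)-U_\lambda(t_0,s_0))\psi\| + \|(U_\lambda(t,s)^\ast - U_\lambda(t_0,s_0)^\ast)A U_\lambda(t_0,s_0)\psi\|
\end{equation}
and using unitarity of the undifferentiated factors, both terms tend to zero as $(t,s)\to(t_0,s_0)$, exactly as in Proposition~\ref{prop:st_cont_prods}(i).

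For part (ii), the Duhamel formula of Proposition~\ref{prop:duhamel} gives the strong derivative $\tfrac{d}{d\lambda}U_\lambda(t,s)$ as a bounded operator satisfying \eqref{derivativebound}. Applying the same argument to $U_\lambda(s,t) = U_\lambda(t,s)^\ast$ (equivalently, differentiating the adjoint of the Dyson series for $\tilde U_\lambda$ term by term) yields strong differentiability of $\lambda\mapsto U_\lambda(t,s)^\ast$ with the same bound on its derivative. The product rule then gives, in the strong sense,
\begin{equation}
\frac{d}{d\lambda}\tau_{t,s}^\lambda(A)\psi = \Bigl(\frac{d}{d\lambda}U_\lambda(t,s)^\ast\Bigr) A\, U_\lambda(t,s)\psi + U_\lambda(t,s)^\ast A\, \frac{d}{d\lambda}U_\lambda(t,s)\psi.
\end{equation}
To justify this identity at the level of strong derivatives, I would invoke the elementary fact that if $X_h\to X_0$ strongly with $\sup_h\|X_h\|<\infty$ and $v_h\to v_0$ in norm, then $X_h v_h\to X_0 v_0$ in norm, and apply it to each of the two product-rule terms. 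The estimate \eqref{norm_bd_dyn_der} then follows by the triangle inequality, unitarity of the undifferentiated factors, and the bound \eqref{derivativebound} applied to each derivative.

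For part (iii), fix $\psi\in\mathcal{H}$ and let $\{A_\alpha\}\subset\mathcal{B}(\mathcal{H})$ be a bounded net converging strongly to $A$. Since $U_\lambda(t,s)^\ast$ is an isometry,
\begin{equation}
\|\tau_{t,s}^\lambda(A_\alpha)\psi - \tau_{t,s}^\lambda(A)\psi\| = \|(A_\alpha - A)\, U_\lambda(t,s)\psi\|,
\end{equation}
which tends to zero by strong convergence of $A_\alpha$ applied at the specific vector $U_\lambda(t,s)\psi$. For uniformity over $\lambda\in K$ with $K\subset J$ compact, I would note that strong differentiability in part (ii) implies norm continuity of each orbit $\lambda\mapsto U_\lambda(t,s)\psi$, so $\{U_\lambda(t,s)\psi : \lambda\in K\}$ is norm compact in $\mathcal{H}$. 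A standard covering argument (using the uniform norm bound on $A_\alpha - A$ supplied by the Uniform Boundedness Principle) shows that a bounded strongly convergent net converges uniformly on any norm-compact set of vectors, yielding the claimed uniformity.

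The main obstacle is the product rule in (ii): we have only strong (not norm) differentiability of $\lambda\mapsto U_\lambda(t,s)$ and of its adjoint, and the vector on which $\tfrac{d}{d\lambda}U_\lambda(t,s)^\ast$ must act, namely $A\, U_\lambda(t,s)\psi$, itself depends on $\lambda$. This forces the joint-continuity argument for $X_h v_h$ with both factors varying, rather than a routine bilinear product rule. Once that step is in hand, the rest of the proof is a direct unwinding of definitions and the estimate \eqref{derivativebound}.
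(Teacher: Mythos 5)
Your arguments for (i) and (ii) follow the paper's route: joint strong continuity of products for (i), and the Duhamel formula of Proposition~\ref{prop:duhamel} applied to $U_\lambda(t,s)$ and its adjoint for (ii). The paper reaches \eqref{norm_bd_dyn_der} by stating the derivative directly at the level of the dynamics,
\begin{equation*}
\frac{d}{d\lambda}\tau_{t,s}^\lambda(A) = i\int_s^t \tau^\lambda_{r,s}\bigl([\Phi^\prime_\lambda(r),\tau^\lambda_{t,r}(A)]\bigr)\,dr,
\end{equation*}
from which the bound is immediate since $\tau^\lambda_{r,s}$ is isometric; deriving that identity requires exactly the product-rule care you spell out, which the paper leaves implicit. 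Your explicit treatment, together with the remark that strong differentiability of $\lambda\mapsto U_\lambda(t,s)^\ast$ is obtained by identifying $U_\lambda(t,s)^\ast=U_\lambda(s,t)$ and applying the same Dyson-series argument (rather than trying to take adjoints of a strong derivative, which only yields a weak derivative), is correct and a useful amplification of what the paper compresses.

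For (iii) you take a genuinely different route. The paper covers the compact parameter interval by $\delta$-balls about $\lambda_1,\dots,\lambda_N$ chosen so that, via the quantitative estimate from (ii), $\|\tau_{t,s}^\lambda(A)-\tau_{t,s}^{\lambda_i}(A)\|\leq\epsilon\|A\|$ uniformly in $A$, and then runs an $\epsilon/3$ argument against the finite set of $\lambda_i$. You instead reduce $\|\tau_{t,s}^\lambda(A_\alpha)\psi - \tau_{t,s}^\lambda(A)\psi\|$ to $\|(A_\alpha-A)U_\lambda(t,s)\psi\|$ via unitarity, use norm continuity of $\lambda\mapsto U_\lambda(t,s)\psi$ to show the orbit $\{U_\lambda(t,s)\psi:\lambda\in K\}$ is norm compact, and invoke uniform convergence of bounded strongly convergent nets on norm-compact sets. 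Both are valid. Your version is arguably cleaner and needs only qualitative continuity of the propagator in $\lambda$ (which comes from Proposition~\ref{prop:duhamel} rather than from part (ii) of the present proposition, so the phrase ``strong differentiability in part (ii)'' is a slight mislabel), while the paper's version has the minor advantage of staying fully quantitative through the bound in (ii).
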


\begin{proof}
The statement in (i) follows from Proposition \ref{prop:tduprop} as $\tau_{t,s}^{\lambda}(A)$, see (\ref{dyn_tl}), is the product of jointly strongly continuous mappings.

To prove (ii), we use Duhamel's formula from Proposition \ref{prop:duhamel} to calculate the derivative. Specifically, note that if $s \leq t$, then
\be \label{dyn_duhamel}
\frac{d}{d\lambda} \tau_{t,s}^\lambda(A)= i\int_s^t  \tau^\lambda_{r,s} ([\Phi^\prime_\lambda(r), \tau^\lambda_{t,r} (A)])\, dr  .
\ee
An estimate of the form in (\ref{norm_bd_dyn_der}) is now clear.

To prove (iii), fix $s,t \in I$, and let $[a,b] \subset J$.  Without loss of generality, assume that $s \le t$. Let $\epsilon >0$. Since $(r, \lambda) \mapsto \Phi_{\lambda}'(r)$ is
jointly strongly continuous,
\begin{equation}
M := \sup_{(r, \lambda) \in [s,t] \times [a,b]} \| \Phi_{\lambda}'(r) \| < \infty.
\end{equation}
Take $\delta>0$ so that
\begin{equation}
2 \delta (t-s) M \leq \epsilon.
\end{equation}
By compactness, there is some $N \geq 1$ and numbers $\lambda_1, \cdots, \lambda_N \in [a,b]$ for which the balls of radius $\delta$ centered
at $\lambda_i$, $1 \leq i \leq N$, cover $[a,b]$. Using the result in (ii), we see that for every $\lambda \in [a,b]$ there is
some $1 \leq i \leq N$ for which 
\begin{equation}
\| \tau_{t,s}^{\lambda}(A) - \tau_{t,s}^{\lambda_i}(A) \| \leq \epsilon \| A \| \quad \mbox{for all } A \in \mathcal{B}( \mathcal{H}) \, .
\end{equation}

Now, to prove the continuity statement claimed, let $\{ A_n \}_{n \geq 1} \subseteq \mathcal{B}( \mathcal{H})$ be a bounded sequence that converges to $A \in \mathcal{B}( \mathcal{H})$ in the strong operator topology. Let $B < \infty$ be such that $\sup_{n \geq 1} \| A_n \| \leq B$. Using \eqref{dyn_tl} and the strong convergence of $A_n$ to $A$, it is easy to verify that for any $\psi \in \mathcal{H}$ and any $1 \leq i \leq N$, the sequence 
$\{ \tau_{t,s}^{\lambda_i}(A_n)\psi \}_{n \geq 1}$ converges to $\tau_{t,s}^{\lambda_i}(A)\psi$ in $\mathcal{H}$. Pick $n_0 \geq 1$ so that for all $n \geq n_0$ and each $1 \leq i \leq N$,
\begin{equation}
\Vert \tau^{\lambda_i}_{t,s}(A_n)\psi - \tau^{\lambda_i}_{t,s}(A)\psi\Vert \leq \epsilon B \Vert\psi\Vert.
\end{equation}
In this case, for any $\lambda \in [a,b]$ there is an $i$ for which
\beann
\Vert \tau^{\lambda}_{t,s}(A_n)\psi - \tau^{\lambda}_{t,s}(A)\psi\Vert &\leq& 
\Vert  \tau^{\lambda}_{t,s}(A_n)\psi - \tau^{\lambda_i}_{t,s}(A_n)\psi\Vert \\
&&+ \Vert \tau^{\lambda_i}_{t,s}(A_n)\psi\ - \tau^{\lambda_i}_{t,s}(A)\psi\Vert\\
&&+\Vert \tau^{\lambda_i}_{t,s}(A)\psi\ - \tau^{\lambda}_{t,s}(A)\psi\Vert \\
&\leq& 3\epsilon B  \Vert\psi\Vert
\eeann
whenever $n \geq n_0$. This proves that the strong convergence is uniform for 
$\lambda \in [a,b]$, or in other words, that the family of maps $\{ \tau^{\lambda}_{t,s}(\cdot)
\mid \lambda \in [a,b]\}$, for $s,t \in I$ fixed, is equicontinuous on bounded sets in $\mathcal{B}( \mathcal{H})$ with respect
to the strong operator topology. 
\end{proof}

%%%%%%%%%%
%
% Lieb-Robinson bounds
%
%%%%%%%%%

\section{Lieb-Robinson bounds and infinite volume dynamics of lattice systems} \label{sec:lrb}

The scope of this paper is lattice models with possibly unbounded single-site Hamiltonians and bounded interactions
that, in general, may be time-dependent. This is the setting in which one expects to obtain Lieb-Robinson bounds 
with estimates in terms of the operator norm of the observables. A well-known example of this situation is the 
quantum rotor model. We will not consider lattice models with unbounded interactions in this work. The only systems 
with unbounded interactions that have been studied so far are oscillator lattice systems for which the interactions are 
quadratic \cite{cramer:2008} or bounded perturbations of quadratic interactions \cite{nachtergaele:2009a,amour:2010} .

In this paper,  the `lattice' in {\em lattice systems} is understood to be a countable metric space $(\Gamma, d)$ (not 
necessarily a lattice in the sense of the linear combinations with integer coefficients of a set of basis vectors in Euclidean 
space). Typically, $\Gamma$ is infinite (or more specifically, has infinite diameter), and 
models are given in terms of Hamiltonians for a family of finite subsets of $\Gamma$. After an initial analysis of the 
finite systems, we study the thermodynamic limit through sequences of increasing and absorbing finite volumes 
$\{ \Lambda_n \}$, i.e. $\Lambda_n \uparrow \Gamma$. Often, the goal is to obtain estimates for the finite systems 
defined on $\Lambda_n$ that are uniform in $n$. The definitions below prepare for this goal. We note that it is perfectly 
possible to consider a finite set $\Gamma$ and apply the results derived in this and the subsequent sections to finite 
systems. We note that some of the conditions we impose are trivially satisfied for finite systems.

The points of $\Gamma$, also called {\em sites} of the lattice, label a family of `small' systems, which are often, but 
not necessarily, identical copies of a given system such as a spin, a particle in a confining potential such as a harmonic 
oscillator, or a quantum rotor. The quantum many-body lattice systems of condensed matter physics are of this type. 
A wide range of interesting behaviors arises due to interactions between the component systems. It is a central feature 
of extended physical systems that interactions have a local structure, meaning that the strength of the interactions 
decreases with the distance between the systems. Often, each system only interacts directly with its nearest neighbors 
in the lattice. The mean-field approximation ignores the geometry of the ambient space and it often is a good first approximation. 
In more realistic models, however, the interactions between different components depends on the distance between them.
In this section we derive a fundamental property of the dynamics of quantum lattice systems that is intimately related 
to the local structure of the interactions. This property is referred to as {\em quasi-locality} and its basic feature is a bound
on the speed of propagation of disturbances in the system, which is known as a Lieb-Robinson bound.

Lieb and Robinson were the first to derive bounds of this type \cite{lieb:1972}. In the years following the original article, 
a number of further important results appeared, e.g., by Radin \cite{radin:1978} and in particular by Robinson \cite{robinson:1976} 
who gave a new proof of the theorem of Lieb and Robinson (which is included in \cite{bratteli:1997}). Robinson also showed
that Lieb-Robinson bounds can be used to prove the existence of the thermodynamic limit of the dynamics and 
used the bounds to derive fundamental locality properties of quantum lattice systems. It was only much later however, that 
Hastings who pointed out how the Lieb-Robinson bounds could be used to prove exponential clustering in gapped 
ground states in a paper where he provided the first generalization of the Lieb-Schultz-Mattis theorem to higher dimensions 
\cite{hastings:2004}. 
Mathematical proofs then followed by Nachtergaele and Sims \cite{nachtergaele:2006a}, Hastings and Koma \cite{hastings:2006}
and Nachtergaele, Ogata, and Sims \cite{nachtergaele:2006}. The new approach to proving Lieb-Robinson bounds developed in
these works leading to \cite{nachtergaele:2009b}, yields a better prefactor with a more accurate dependence on the support of the 
observables. This was important for certain applications such as the proof of the split property for gapped ground states in one 
dimension by Matsui \cite{matsui:2010,matsui:2013}.

Further extensions of the Lieb-Robinson bounds in several directions quickly followed: Lieb-Robinson bounds for lattice fermions 
\cite{hastings:2006, bru:2017, nachtergaele:2016b}, Lieb-Robinson bounds for irreversible quantum dynamics 
\cite{poulin:2010,nachtergaele:2011,han:2018}, a bound for certain long-range interactions \cite{gong:2014,richerme:2014, tran:2018},
anomalous or zero-velocity bounds for disordered and quasi-periodic systems \cite{burrell:2007,hamza:2012, damanik:2014, damanik:2015},	
propagation estimate for lattice oscillator systems \cite{buerschaper:2007,cramer:2008,nachtergaele:2009a, amour:2010} and 
other systems with unbounded interactions \cite{premont-schwarz:2010a}, including classical lattice systems \cite{butta:2007,raz:2009,islambekov:2012}.

The list of applications of Lieb-Robinson bounds includes a broad range of topics:
Lieb-Schultz-Mattis theorems \cite{hastings:2004,nachtergaele:2007}, the entanglement area law in one dimension \cite{Hastings:2007},
the quantum Hall effect \cite{hastings:2015,giuliani:2017,bachmann:2017b}, quasi-adiabatic evolution (spectral flow and automorphic equivalence) including stability and classification of gapped ground state phases \cite{hastings:2005, bravyi:2010, bravyi:2011,michalakis:2013,bachmann:2012,QLBII},
the stability of dissipative systems \cite{brandao:2015,lucia:2015}, quasi-particle structure of the excitation spectrum of gapped 
systems \cite{haegeman:2013, bachmann:2016}, a stability property of the area law of entanglement \cite{marien:2016},
the efficiency of quantum thermodynamic engines \cite{shiraishi:2017}, the adiabatic theorem and linear response theory for extended 
systems \cite{bachmann:2017c, bachmann:2018}, the design and analysis of quantum algorithms \cite{haah:2018}, and the list continues to grow: 
\cite{de-roeck:2015,kanda:2016,monaco:2019,buchholz:2017,grundling:2017,bachmann:2017b,teufel:2017,bachmann:2017c,bachmann:2017d,cha:2018a}. 

In order to express the locality properties of the interactions and the resulting dynamics, we introduce some additional 
structure on the discrete metric space $(\Gamma, d)$ in the next section.

%%%%%%%%%%
%
% Set-Up
%
%%%%%%%%%

\subsection{Lieb-Robinson estimates for bounded time-dependent interactions} \label{subsec:lrb}

\subsubsection{General setup}\label{sec:spatialstructure}

As described above, we will study quantum lattice models with possibly unbounded single-site Hamiltonians but bounded, in general, time-dependent interactions. In this section, 
we give the framework for quantum lattice systems and describe the bounded interactions of interest. We will consider the addition of unbounded on-site Hamiltonians in later sections. 

The lattice models we consider are defined over a countable metric space $(\Gamma, \, d)$. To each site $x \in \Gamma$, we associate a complex Hilbert space $\mathcal{H}_x$ and denote the algebra of all bounded linear operators on $\mathcal{H}_x$ by $\mathcal{B}(\mathcal{H}_x)$. Let $\cP_0(\Gamma)$ be the collection of all finite subsets of $\Gamma$. For any $\Lambda \in\cP_0(\Gamma)$, the Hilbert space of states and algebra of local observables over $\Lambda$ are denoted by
\begin{equation}
\mathcal{H}_{\Lambda} := \bigotimes_{x \in \Lambda} \mathcal{H}_x \quad \mbox{and} \quad \mathcal{A}_{\Lambda} := 
\bigotimes_{x \in \Lambda} \mathcal{B}(\mathcal{H}_x) =\mathcal{B}( \mathcal{H}_{\Lambda} ),
\label{HLambda}\end{equation}  
where we have chosen to define the tensor product of the algebras $\mathcal{B}(\mathcal{H}_x)$ so that the last equality holds (i.e. the spatial tensor product, corresponding to the minimal $C^*$-norm \cite{sakai:1971}).
For any two finite sets $\Lambda_0 \subset \Lambda\subset \Gamma$, each $A \in \mathcal{A}_{\Lambda_0}$ can be naturally identified with $A \otimes \idty_{\Lambda \setminus \Lambda_0} \in \mathcal{A}_{\Lambda}$. With respect to this identification, the algebra of {\em local observables} is then defined as the inductive limit
\begin{equation}
\mathcal{A}_{\Gamma}^{\rm loc} = \bigcup_{\Lambda \in \cP_0(\Gamma)} \mathcal{A}_{\Lambda},
\end{equation}
and the $C^*$-algebra of \emph{quasi-local observables}, which we denote by $\mathcal{A}_{\Gamma}$, is the completion of $\mathcal{A}_{\Gamma}^{\rm loc}$ with respect to the operator norm. We will use the phrase \emph{quantum lattice system} to mean the countable metric space $(\Gamma, d)$ and quasi-local algebra $\cA_\Gamma$. 

A model on a quantum lattice system is given in terms of an \emph{interaction} $\Phi$. In the time-independent case, this is a map $\Phi : \cP_0(\Gamma)\to \cA_\Gamma^{\rm loc}$ such that 
$
\Phi(Z)^* = \Phi(Z) \in \mathcal{A}_Z $ for all $Z \in \mathcal{P}_0( \Gamma).
$
The \emph{quantum lattice model} associated to $\Phi$ is the collection of all local Hamiltonians of the form
\be \label{tind_loc_hams}
H_\Lambda = \sum_{X \subset \Lambda} \Phi(X), \quad \Lambda \in \cP_0(\Gamma).
\ee

We will also consider time-dependent interactions. Let $I \subset \mathbb{R}$ be an interval. 
A map $\Phi : \cP_0(\Gamma) \times I \to \cA_\Gamma^{\rm loc}$ is said to be a \emph{strongly continuous interaction} if:
\begin{enumerate}
	\item[(i)] To each $t \in I$, the map $\Phi( \cdot, t) : \mathcal{P}_0( \Gamma) \to \mathcal{A}_{\Gamma}^{\rm loc}$ is an interaction.
	\item[(ii)]For each $Z \in \mathcal{P}_0( \Gamma)$, $\Phi(Z, \cdot) :I \to \mathcal{A}_Z$ is strongly continuous.
\end{enumerate}
\noindent Given such a strongly continuous interaction $\Phi$, we will often denote by $\Phi(t)$ the interaction $\Phi( \cdot, t)$ as in (i) above, and define the corresponding local Hamiltonians 
\begin{equation} \label{tdham}
H_{\Lambda}(t) = \sum_{Z \subset \Lambda} \Phi(Z,t) \quad \mbox{for } \Lambda \in \mathcal{P}_0( \Gamma).
\end{equation}
Analogous to the above, a corresponding time-dependent quantum lattice model may be defined. By our assumptions on the interaction, it is clear that for each $t \in I$, $H_{\Lambda}(t)$ is a bounded, self-adjoint operator on $\mathcal{H}_{\Lambda}$. Moreover, by Proposition~\ref{prop:st_cont_prods} $H_{\Lambda}: I \to \mathcal{A}_{\Lambda}$ is strongly continuous. In this case, Proposition \ref{prop:sols} demonstrates that there exists a two-parameter family of unitaries $\{ U_{\Lambda}(t,s) \}_{s,t \in I}\subset \mathcal{A}_{\Lambda}$, defined as the unique strong solution of the initial value problem
\begin{equation} \label{uprop}
\frac{d}{dt} U_{\Lambda}(t,s) = -i H_{\Lambda}(t) U_{\Lambda}(t,s),\quad U_{\Lambda}(s,s) = \idty \, , \text{ for all  } s,t \in I.
\end{equation}
In terms of these unitary propagators, we define a Heisenberg dynamics $\tau_{t,s}^{\Lambda}: \cA_\Lambda \to \cA_\Lambda$ by setting 
\begin{equation} \label{td_heis_dyn}
\tau_{t,s}^{\Lambda}(A) = U_{\Lambda}(t,s)^* A U_{\Lambda}(t,s)
\quad \mbox{for all } A \in \mathcal{A}_{\Lambda}.
\end{equation}
In some applications, including Theorem \ref{thm:lrb} below, we will also consider the inverse dynamics, 
\be \label{inv_dyn}
\hat{ \tau}^\Lambda_{t,s}(A) : = U_\Lambda(t,s) A U_\Lambda(t,s)^* = \tau^\Lambda_{s,t}(A),
\ee
where the final equality follows from Proposition \ref{prop:sols} (iv).

As discussed above, Lieb-Robinson bounds approximate the speed of propagation of dynamically evolved observables through a quantum lattice system, and this estimate is closely tied to the locality of the interaction in question. To quantify the locality of an interaction, we introduce the notion of an $F$-function. 
An \emph{$F$-function} on $(\Gamma, d)$ is a non-increasing function $F: [0, \infty) \to (0, \infty)$, satisfying the following two properties:

(i) $F$ is uniformly integrable over $\Gamma$, i.e.
\begin{equation} \label{F:int}
\| F \| = \sup_{x \in \Gamma} \sum_{y \in \Gamma} F(d(x,y)) < \infty,
\end{equation}

(ii) $F$ satisfies the convolution condition
\begin{equation} \label{F:conv}
C_F = \sup_{x,y \in \Gamma} \sum_{z \in \Gamma} \frac{F(d(x,z))F(d(z,y))}{F(d(x,y))} < \infty. 
\end{equation}
An equivalent formulation of (ii) is that there exists a constant $C<\infty$ such that 
\begin{equation} \label{Fconvbis}
\sum_{z \in \Gamma} F(d(x,z))F(d(z,y)) \leq C_F F(d(x,y)), \mbox{ for all } x,y\in\Gamma.
\end{equation}

Let $F$ be an $F$-function on $(\Gamma, d)$ and $g:[0,\infty)\to [0,\infty)$ be any non-decreasing, subadditive function, i.e. $g(r+s) \leq g(r) + g(s)$ for all $r,s\in [0,\infty)$. 
Then, the function 
\be \label{weightedF}
F_g(r) = e^{-g(r)} F(r),
\ee
also satisfies (i) and (ii) with $\| F_g \| \leq \| F \|$ and $C_{F_g} \leq C_F$. We call any $F$-function of this
form a \emph{weighted F-function}.

It is easy to produce examples of these $F$-functions when $\Gamma = \mathbb{Z}^{\nu}$ for some $\nu \geq 1$ 
and $d(x,y) = |x-y|$ is the $\ell^1$-distance. In fact, for any $\epsilon >0$ the function
\begin{equation}
F(r) = \frac{1}{(1+r)^{\nu + \epsilon}}
\end{equation}
is an $F$-function on $\bZ^\nu$. It is clear that this function is uniformly integrable, i.e. (\ref{F:int}) holds. Moreover, one may verify that
\begin{equation}
C_F \leq 2^{\nu + \epsilon} \| F \| \, .
\end{equation}
In the special case of $g(r) = a r$, for some $a\geq 0$, we obtain a very useful family of weighted $F$-functions, which we denote by $F_a$, given by $F_a(r)=e^{-ar}/(1+r)^{\nu+\epsilon}$.  See Appendix \ref{app:sec_def_F}-\ref{subsec:regsets} for other examples and properties of $F$-functions.

We use these $F$-functions to describe the decay of a given interaction.
Let $F$ be an $F$-function on $(\Gamma, d)$ and $\Phi:\cP_0(\Gamma) \to \caA_\Gamma^{\rm loc}$ be
an interaction. The {\it $F$-norm} of $\Phi$ is defined by
\begin{equation} \label{intnorm}
\| \Phi \|_F  = \sup_{x,y \in \Gamma} \frac{1}{F(d(x,y))} \sum_{\stackrel{Z \in \mathcal{P}_0( \Gamma):}{x,y \in Z}} \| \Phi(Z) \| .
\end{equation}
It is clear from the above equation that for all $x, \, y\in \Gamma$,
\be \label{Fnorm_sum_bd}
\sum_{\stackrel{Z \in \mathcal{P}_0( \Gamma):}{x,y \in Z}} \| \Phi(Z) \| \leq \|\Phi\|_F F(d(x,y)).
\ee
Note that for any $Z \in \mathcal{P}_0(\Gamma)$, there exist $x,y \in Z$ for which $d(x,y) = {\rm diam}(Z)$;
the latter being the diameter of $Z$. In this case, a simple consequence of (\ref{Fnorm_sum_bd}) is
\begin{equation} \label{Fnorm_term_bd}
\| \Phi(Z) \| \leq \sum_{\stackrel{Z' \in \mathcal{P}_0( \Gamma):}{x,y \in Z'}} \| \Phi(Z') \| \leq \| \Phi \|_F F({\rm diam}(Z)) .
\end{equation}
We will be mainly interested in situations where the quantity in (\ref{intnorm}) is finite. 
In this case, the bound (\ref{Fnorm_term_bd}) demonstrates that the $F$-function 
governs the decay of an individual interaction term, and moreover, the estimate 
(\ref{Fnorm_sum_bd}) generalizes this notion of decay by including all interaction terms 
containing a fixed pair of points $x$ and $y$. 

When $\Gamma$ is finite, then $\| \Phi \|_F$ is finite for any interaction $\Phi$ and any function $F$. 
For infinite $\Gamma$, the set of interactions $\Phi$ for which 
$\| \Phi \|_F  < \infty$ depends on $F$. It is easy to check that $\| \cdot \|_F $ is a norm on the set of interactions for which it is finite. In terms of this norm, we define the Banach space
\be
\cB_F = \{ \Phi : \cP_0(\Gamma)\to \cA_\Gamma^{\rm loc} \mid \Phi \mbox{ is an interaction and } \| \Phi \|_F < \infty\}.
\ee
Of course, $\cB_F$ depends on $\Gamma$ and on the single-site Hilbert spaces $\cH_x$, but that information will always be clear from the 
context.

We introduce an analogue of (\ref{intnorm}) for time-dependent interactions as follows. Consider a quantum lattice
system comprised of $(\Gamma, d)$ and $\mathcal{A}_{\Gamma}$. Let $I \subset \mathbb{R}$ be an interval and 
$\Phi: \mathcal{P}_0( \Gamma) \times I \to \mathcal{A}_{\Gamma}^{\rm loc}$ be a strongly continuous 
interaction. 
%
%To any $Z\in \cP_0(\Gamma)$ and each $t \in I$, set
%\be\label{meas_int_proj}
%\Vert \Phi(Z,\cdot)\Vert (t) = \inf_{n\geq 0} \sum_{k\in\Ir} \left[\sup_{s\in I_{n,k}} \Vert \Phi(Z,s)\Vert \right] \Ind_{I_{n,k}} (t),
%\ee
%where we use the measurable quantity defined in \eq{measurable_norm_At}. 
%
Given an $F$-function on $(\Gamma, d)$, 
we will denote by $\cB_F(I)$ the collection of all strongly continuous interactions $\Phi$ for which the mapping 
\be \label{tdintnorm}
\Vert \Phi (t)\Vert_F = \sup_{x,y \in \Gamma} \frac{1}{F(d(x,y))} \sum_{\stackrel{Z \in \mathcal{P}_0( \Gamma):}{x,y \in Z}} \| \Phi(Z,t) \|  , \quad \mbox{for } t \in I 
\ee
is locally bounded. As with the operator norm, we will sometimes use the alternate notation $\Vert \Phi \Vert_F (t)$ for the quantity defined in \eq{tdintnorm}.
The function $t \mapsto \Vert \Phi \Vert_F (t)$ is measurable since it is the supremum of a countable 
family of measurable functions. As such, $\Vert \Phi\Vert_F$ is locally integrable.
As in the time-independent case, \eqref{tdintnorm} implies that for all $t\in I$ and $x,\, y\in \Gamma$, 
\be \label{FnormBound}
\sum_{\stackrel{Z \in \mathcal{P}_0( \Gamma):}{x,y \in Z}} \| \Phi(Z,t) \| \leq \|\Phi\|_F(t) F(d(x,y)),
\ee
a bound which will appear in many of our estimates. 
%As a final note, we recall that $\|\Phi(Z,t)\| \leq \Vert\Phi(Z,\cdot)\Vert (t)$ for all $t$. 
%In this case, whenever $\|\Phi\|_F(t)<\infty$ one has that $\Phi(t) := \Phi(\cdot, t) \in \cB_F$, i.e.
%\be \label{Fixed_t_Fnorm}
%\|\Phi(t)\|_F = \sup_{x,y \in \Gamma} \frac{1}{F(d(x,y))} \sum_{\stackrel{Z \in \mathcal{P}_0( \Gamma):}{x,y \in Z}} \| \Phi(Z,t) \| < \infty.
%\ee
%Moreover, applying Proposition~\ref{prop:measurable_f} shows that, in fact, 
%\[\|\Phi(t)\|_F \leq [\|\Phi(\cdot)\|]_F(t) \leq \|\Phi\|_F(t).\] 
See Appendix~\ref{app:sec:BasicBounds} for more useful estimates involving interactions $\Phi \in \cB_F(I)$.

\subsubsection{Lieb-Robinson estimates for bounded interactions} 
In Theorem \ref{thm:lrb}, we demonstrate that the finite volume Heisenberg dynamics $\tau_{t,s}^{\Lambda}$, as defined in
(\ref{td_heis_dyn}), associated to any $\Phi \in \mathcal{B}_F(I)$ satisfies a Lieb-Robinson bound.
Such bounds provide an estimate for the speed of propagation of dynamically evolved observables in a quantum lattice system. One can use these bounds to show that for small times the dynamically evolved observable is well approximated by a local operator. For this reason, Lieb-Robinson bounds and other similar results are often referred to as quasi-locality estimates.

Before we state the result, two more pieces of notation will be useful.
First, to each $X \in \mathcal{P}_0( \Gamma)$, we denote by $\partial^I_{\Phi}X \subset X$ the $\Phi$-boundary of
$X$:
\begin{equation}
\partial_{\Phi}^I X : = \left\{ x \in X : \exists Z \in \mathcal{P}_0( \Gamma) \mbox{ with } x \in Z, Z \cap ( \Gamma \setminus X) \neq \emptyset, \mbox{ and } \exists t \in I 
\mbox{ with } \Phi(Z,t) \neq 0 \right\} \, .
\end{equation} 
In some estimates, it may be useful to restrict the time interval used to define the $\Phi$-boundary. For instance, given $\Phi\in\cB_F(\Rl)$ one could find that $\partial^\Rl_{\Phi}X=X$ for some $X$, while $\partial^I_{\Phi}X$ is strictly smaller for a subinterval $I \subset \bR$. From now on we will drop the time-interval $I$ from the notation and simply write $\partial_{\Phi}X$. We note also that in many situations, not much is lost by using $X$ instead of $\partial_{\Phi}X$ in the following estimates.

Second, for $\Phi\in\cB_F(I)$, and $s,t\in I$, the quantity $I_{t,s}(\Phi)$  defined by
\be
I_{t,s}(\Phi) = C_F\int_{\min(t,s)}^{\max(t,s)} \| \Phi \|_F(r) \, dr,
\label{ItsPhi}\ee
will appear in many results we provide, including Theorem~\ref{thm:lrb}. 
Clearly, if $C_F\Vert \Phi (r) \Vert_F\leq M$, for all $r\in[\min(t,s),\max(t,s)]$, 
we have $I_{t,s}(\Phi) \leq |t-s| M$. For example we see that
\[
I_{t,s}(\Phi) \leq C_F |t-s| \vertiii{\Phi}_F,
\]
%where we argue as in \eqref{int2}, 
with
\be \label{SupFnorm}
\vertiii{\Phi}_{F} := \sup_{t\in I} \Vert \Phi (t) \Vert_{F}.
\ee

 \begin{thm}[Lieb-Robinson Bound] \label{thm:lrb} Let $\Phi \in \cB_F(I)$ and $X,Y \in \mathcal{P}_0( \Gamma)$ with $X \cap Y = \emptyset$.
 For any $\Lambda \in \mathcal{P}_0( \Gamma)$ with 
 $X \cup Y \subset \Lambda$ and any $A \in \mathcal{A}_X$ and $B \in \mathcal{A}_Y$, we have
 \begin{equation} \label{lrbest}
 \left\| \left[ \tau_{t,s}^{\Lambda}(A), B \right] \right\| \leq \frac{2 \| A \| \| B \|}{C_F} \left( e^{2 I_{t,s}(\Phi)} - 1 \right) D(X,Y)
 \end{equation}
for all $t,s \in I$. Here, $C_F$ is the constant in \eq{F:conv}, and the quantity $D(X,Y)$ is given by
 \begin{equation} \label{lrbmin}
 D(X,Y) = \min \left\{ \sum_{x \in X} \sum_{y \in \partial_{\Phi} Y} F(d(x,y)), \sum_{x \in \partial_{\Phi} X} \sum_{y \in Y} F(d(x,y))\right\}.
 \end{equation}
 \end{thm}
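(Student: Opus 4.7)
The plan is to derive a Gronwall-type integral inequality for $f(t) := [\tau^{\Lambda}_{t,s}(A), B]$ via Lemma~\ref{lem:normbd}, iterate it to produce a Dyson-type series, and sum using the convolution property of the $F$-function. Differentiating using the Heisenberg equation and observing that $[\Phi(Z,t), A] = 0$ whenever $Z \cap X = \emptyset$ gives
\begin{equation*}
\dot{f}(t) = i \sum_{Z:\,Z \cap X \neq \emptyset} [\tau^{\Lambda}_{t,s}([\Phi(Z,t), A]), B].
\end{equation*}
Rewriting $\tau^{\Lambda}_{t,s}([\Phi(Z,t), A]) = [\tau^{\Lambda}_{t,s}(\Phi(Z,t)), \tau^{\Lambda}_{t,s}(A)]$ and applying the Jacobi identity decomposes each summand as $[\tau^{\Lambda}_{t,s}(\Phi(Z,t)), f(t)] + [[\tau^{\Lambda}_{t,s}(\Phi(Z,t)), B], \tau^{\Lambda}_{t,s}(A)]$. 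Collecting the first pieces into $[K(t), f(t)]$ with self-adjoint $K(t) := \sum_{Z \cap X \neq \emptyset} \tau^{\Lambda}_{t,s}(\Phi(Z,t))$ (the sum being finite because $\Lambda$ is) places the ODE in the form of Lemma~\ref{lem:normbd}. Since $X \cap Y = \emptyset$ gives $f(s) = [A,B] = 0$, that lemma yields
\begin{equation*}
\|f(t)\| \leq 2\|A\| \int_{\min(s,t)}^{\max(s,t)} \sum_{Z:\,Z \cap X \neq \emptyset} \|[\tau^{\Lambda}_{r,s}(\Phi(Z,r)), B]\|\, dr.
\end{equation*}

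Next I would define $C_B(X'; r) := \sup\{\|[\tau^{\Lambda}_{r,s}(A'), B]\| : A' \in \mathcal{A}_{X'},\ \|A'\| = 1\}$, which satisfies the trivial bound $C_B(X'; r) \leq 2\|B\|$, and recast the preceding display as
\begin{equation*}
C_B(X; t) \leq 2 \int_{\min(s,t)}^{\max(s,t)} \sum_{Z:\,Z \cap X \neq \emptyset} \|\Phi(Z,r)\|\, C_B(Z; r)\, dr,
\end{equation*}
valid whenever $X \cap Y = \emptyset$. Iterating $n$ times and terminating with $C_B(Z_n; t_n) \leq 2\|B\|$ produces a sum indexed by ``chains'' $X \to Z_1 \to \cdots \to Z_n$ of sets with $Z_i \cap Z_{i+1} \neq \emptyset$ and $Z_n \cap Y \neq \emptyset$. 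Selecting intersection points $x_0 \in X \cap Z_1$, $x_i \in Z_i \cap Z_{i+1}$, and $x_n \in Z_n \cap Y$, the partial sum over each $Z_i$ compatible with the chosen endpoints is bounded via (\ref{FnormBound}) by $\|\Phi\|_F(t_i)\, F(d(x_{i-1}, x_i))$. Summing the intermediate $x_1, \dots, x_{n-1}$ and applying the convolution inequality (\ref{Fconvbis}) exactly $n-1$ times collapses the product to $C_F^{n-1} F(d(x_0, x_n))$, while the nested time-ordered integrals contribute $(n!)^{-1}(\int \|\Phi\|_F(r)\, dr)^n$. The $n$-th term of the series is therefore at most $\tfrac{2\|B\|}{C_F} \cdot \tfrac{(2 I_{t,s}(\Phi))^n}{n!} \sum_{x \in X,\, y \in Y} F(d(x,y))$, and summing over $n \geq 1$ gives $(e^{2I_{t,s}(\Phi)} - 1)/C_F$. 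Multiplying by $\|A\|$ yields the estimate (\ref{lrbest}) with the unrefined spatial factor $\sum_{x \in X, y \in Y} F(d(x,y))$.

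The refinement to either of the two forms of $D(X,Y)$ in (\ref{lrbmin}) reflects the fact that only boundary-crossing interactions actually move support out of $X$ (respectively into $Y$). For the $\partial_{\Phi} X$ form, I would first conjugate the dynamics by the propagator of the ``decoupled'' Hamiltonian $\sum_{Z \subset X} \Phi(Z,\cdot) + \sum_{Z \subset \Lambda \setminus X} \Phi(Z,\cdot)$; this propagator is a tensor product on $\mathcal{H}_X \otimes \mathcal{H}_{\Lambda \setminus X}$, hence leaves $\|f(t)\|$ invariant, and the residual driving interaction consists only of boundary-crossing terms $Z$ with $Z \cap X \neq \emptyset$ and $Z \cap (\Lambda \setminus X) \neq \emptyset$, whose $X$-contacts lie in $\partial_{\Phi} X$. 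Rerunning the iteration in this interaction picture restricts the first chain point $x_0$ to $\partial_{\Phi} X$. The $\partial_{\Phi} Y$ form follows from the unitary identity $\|[\tau^{\Lambda}_{t,s}(A), B]\| = \|[A, \hat{\tau}^{\Lambda}_{t,s}(B)]\|$, which swaps the roles of $X$ and $Y$. The main obstacle is the careful combinatorial bookkeeping of the iteration so that the spatial sums telescope through the convolution property and the time integrals produce the clean factorial factor; once this is managed, what remains is a standard combination of Gronwall reasoning and Dyson-series summation.
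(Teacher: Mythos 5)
Your overall strategy matches the paper's: set up an integral inequality via Lemma~\ref{lem:normbd}, iterate to a Dyson-type chain expansion, collapse intermediate spatial sums with the convolution property \eqref{Fconvbis} to yield $C_F^{n-1}F(d(x_0,x_n))$, collect the factorial from the time-ordered integrals, and use the duality $\|[\tau_{t,s}^\Lambda(A),B]\|=\|[\tau_{s,t}^\Lambda(B),A]\|$ to produce the second form of $D(X,Y)$. All of this is present and correct. The paper organizes it slightly differently—folding the boundary restriction directly into a one-step inequality (Lemma~\ref{lem:normpresapp}) rather than first running an unrefined iteration and then upgrading—but that is a matter of presentation.

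However, the mechanism you propose to obtain the $\partial_\Phi X$ refinement contains a genuine flaw. You write that conjugating by the decoupled propagator on $\mathcal{H}_X\otimes\mathcal{H}_{\Lambda\setminus X}$ ``leaves $\|f(t)\|$ invariant'' and then reduces the driving to boundary-crossing terms. If you literally conjugate the commutator $[\tau_{t,s}^\Lambda(A),B]$ by the decoupled unitary, then $B$ is also conjugated, and its support becomes a generic subset of $\Lambda\setminus X$ rather than $Y$. The resulting chain would then need only to connect $\partial_\Phi X$ to $\Lambda\setminus X$, whose distance is essentially zero, so the estimate degenerates. The paper's Lemma~\ref{lem:normpresapp} avoids this by conjugating \emph{only the $A$-factor}: it considers $f_s(t)=[\tau_{t,s}^\Lambda\circ\hat\tau_{t,s}^X(A),B]$ with $B\in\mathcal{A}_Y$ held fixed, so the cancellation of $H_X$ restricts the driving terms to $S_\Lambda(X)$ while $B$'s localization is preserved. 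Note also that $\|f_s(t)\|$ is \emph{not} equal to $\|[\tau_{t,s}^\Lambda(A),B]\|$; the desired quantity is only recovered after the derivation of the integral inequality, by substituting $\tilde A=\tau_{t,s}^X(A)$ (which satisfies $\hat\tau_{t,s}^X(\tilde A)=A$ and $\|\tilde A\|=\|A\|$). With this modification—partial conjugation on the $A$-side plus the re-substitution, rather than a unitary conjugation of the full commutator—the rest of your argument carries through and reproduces the paper's proof.
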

 
It is easy to see that with the definition $F_1(r)= C_F^{-1} F(r)$, $F_1$ is a new $F$-function in terms of which the bound \eq{lrbest} slightly 
simplifies in the sense that $C_{F_1}=1$. This is a general feature of our estimates involving $F$-functions and the associated norms on the 
interactions. In the following sections a variety of different $F$-functions will be used. Often, new  $F$-functions are obtained by elementary 
transformations of old ones, see, e.g., Section~\ref{subsec:regsets}. Instead of figuring out the normalization constants that make $C_F=1$ for each of the  $F$-functions, we note 
that the final result can be expressed with a renormalized $F$-function such that $C_F=1$.

Before moving on to the proof of the theorem, we make two simple remarks that are implicit in many applications of the Lieb-Robinson bounds.
First, one trivially has $ \left\| \left[ \tau_{t,s}^{\Lambda}(A), B \right] \right\| \leq 2 \| A \| \| B \|$. 
Second, in the case that $\Phi\in B_{F_g}(I)$, for a weighted $F$-function $F_g(r)=e^{-g(r)}F(r)$, we can further estimate
 \begin{eqnarray}
 D(X,Y)
 &\leq& \min \left\{ \sum_{x \in X} \sum_{y \in \partial_{\Phi} Y} F(d(x,y)), \sum_{x \in \partial_{\Phi} X} \sum_{y \in Y} F(d(x,y))\right\} 
 e^{-g(d(X,Y))}
 \nonumber\\
 &\leq& \min \{\vert \partial_{\Phi} X\vert,\vert \partial_{\Phi} Y\vert\} \Vert F\Vert e^{-g(d(X,Y))},
\end{eqnarray}
where $d(X,Y)$ is the distance between $X$ and $Y$. When $g(r) = ar$ for some $a>0$ (i.e. $\Phi \in B_{F_a}(I)$ with $F_a(r)=e^{-ar}F(r)$) it makes sense to define the quantity $v_a = 2a^{-1}C_{F_a} \vertiii{\Phi}_{F_a}$ which is often referred to as the {\em Lieb-Robinson velocity}, or more correctly a bound for the speed of propagation of any type of disturbance or signal in the system. In terms of $v_a$, \eq{lrbest} implies the more transparent estimate
\be
\left\| \left[ \tau_{t,s}^{\Lambda}(A), B \right] \right\| \leq 2 \| A \| \| B \| \Vert F\Vert C_{F_a}^{-1}
\min \{\vert \partial_{\Phi} X\vert,\vert \partial_{\Phi} Y\vert\} e^{a (v_a |t-s| - d(X,Y))}.
\label{lr_traditional}\ee
Note that the RHS of the bounds in \eq{lrbest} and \eq{lr_traditional} are expressed in terms of quantities defined over the system on $\Gamma$ and, in particular, these estimates are uniform in the choice of the finite set $\Lambda \subset \Gamma$. This fact will be vital in many applications. 

Before we prove Theorem \ref{thm:lrb}, we first prove a lemma. For this lemma and later use,
we define the `surface' of $X$ in the volume $\Lambda$, denoted $S_{\Lambda}(X)$, as follows:
\begin{equation}\label{surface_set}
S_{\Lambda}(X) = \{ Z \subset \Lambda : Z \cap X \neq \emptyset \mbox{ and } Z \cap ( \Lambda \setminus X) \neq \emptyset \}.
\end{equation}
It is simply the set of supports of the interaction terms that connect $X$ and $\Lambda\setminus X$.
We will also use the following notation, for $X,Y\in \mathcal{P}_0(\Gamma)$:
\be
\delta_Y(X)= \begin{cases}
 0 & \mbox{if } X\cap Y = \emptyset\\ 1 & \mbox{if }  X\cap Y \ne \emptyset.\end{cases}
\ee

\begin{lem} \label{lem:normpresapp}  Let $\Phi \in \mathcal{B}_F(I)$. Fix $Y \in \mathcal{P}_0( \Gamma)$, 
$B \in \mathcal{A}_Y$, and $\Lambda \in \mathcal{P}_0( \Gamma)$ with $Y \subset \Lambda$. 
For any $X \subset \Lambda$, the family mappings $g^{X,B}_{t,s} : \mathcal{A}_X \to \mathcal{A}_{\Lambda}$ for $t,s \in I$, defined by
\begin{equation}
g^{X,B}_{t,s}(A) = [ \tau_{t,s}^{\Lambda}(A), B ] 
\end{equation} 
are norm-continuous; more precisely, $(s,t) \mapsto g_{t,s}^{X,B}$ is jointly continuous in the norm on
$\mathcal{B}( \mathcal{A}_X, \mathcal{A}_{\Lambda})$.
Moreover, for fixed $t$ and $s$, the mapping $g_{t,s}^{X, B}$ satisfies
\begin{equation} \label{gennormbd}
\| g^{X,B}_{t,s} \| \leq 2 \| B \| \delta_Y(X) + 2 \sum_{Z \in S_{\Lambda}(X)} \int_{\min(t,s)}^{\max(t,s)} 
\| g^{Z,B}_{r,s}( \Phi(Z,r)) \| \, dr. 
\end{equation}
\end{lem}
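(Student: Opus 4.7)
The plan is to establish the two assertions separately: joint norm-continuity of $(t,s) \mapsto g_{t,s}^{X,B}$ in $\mathcal{B}(\mathcal{A}_X, \mathcal{A}_\Lambda)$, and the integral bound \eqref{gennormbd}, the latter via a first-order evolution equation for $V(t) := g_{t,s}^{X,B}(A) = [\tau_{t,s}^\Lambda(A), B]$ combined with Lemma~\ref{lem:normbd}.

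For continuity, I would first observe that, since $\Lambda$ is finite, the Hamiltonian $H_\Lambda(t) = \sum_{Z \subset \Lambda} \Phi(Z,t)$ is a finite sum of bounded, strongly continuous $\mathcal{A}_\Lambda$-valued functions, hence is itself bounded and strongly continuous (Proposition~\ref{prop:st_cont_prods}). By Proposition~\ref{prop:sols}(i) the propagator $U_\Lambda(\cdot,s_0)$ is norm continuous for each fixed $s_0$, and the co-cycle identity $U_\Lambda(t,s) = U_\Lambda(t,s_0) U_\Lambda(s_0,s)$ together with $U_\Lambda(s_0,s) = U_\Lambda(s,s_0)^*$ upgrades this to joint norm continuity of $(t,s) \mapsto U_\Lambda(t,s)$. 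Since $g_{t,s}^{X,B}(A) = U_\Lambda(t,s)^* A U_\Lambda(t,s) B - B U_\Lambda(t,s)^* A U_\Lambda(t,s)$ depends polynomially on $U_\Lambda(t,s)$, a routine estimate yields
\[
\|g_{t,s}^{X,B} - g_{t',s'}^{X,B}\| \;\leq\; 4\|B\|\, \|U_\Lambda(t,s) - U_\Lambda(t',s')\|,
\]
which gives the claimed joint continuity in $\mathcal{B}(\mathcal{A}_X, \mathcal{A}_\Lambda)$.

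For the bound \eqref{gennormbd}, fix $A \in \mathcal{A}_X$ and compute, in the strong sense, using $\frac{d}{dt}\tau_{t,s}^\Lambda(A) = i\tau_{t,s}^\Lambda([H_\Lambda(t), A])$:
\[
V'(t) \;=\; i\big[\tau_{t,s}^\Lambda([H_\Lambda(t), A]),\, B\big].
\]
Decompose $H_\Lambda(t) = H_\Lambda^X(t) + H^{\Lambda\setminus X}(t) + H^\partial(t)$, where the three pieces collect the interaction terms with $Z \subset X$, $Z \subset \Lambda\setminus X$, and $Z \in S_\Lambda(X)$, respectively. Since $[H^{\Lambda\setminus X}(t), A] = 0$, only the first and third contribute. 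The automorphism property $\tau([W,Y]) = [\tau(W),\tau(Y)]$ combined with the Jacobi identity $[[W,Y],B] = [W,[Y,B]] - [Y,[W,B]]$ lets us regroup the derivative into the canonical form
\[
V'(t) \;=\; -i[\mathcal{A}(t), V(t)] + \mathcal{B}(t),
\]
with $\mathcal{A}(t) := -\tau_{t,s}^\Lambda(H_\Lambda^X(t) + H^\partial(t))$ pointwise self-adjoint (absorbing all $V$-referential terms into the homogeneous part of the equation), and $\mathcal{B}(t)$ a source expressible in terms of commutators of the form $[\tau_{t,s}^\Lambda(A), [\tau_{t,s}^\Lambda(\Phi(Z,t)), B]]$. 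Since $V(s) = [A,B]$ with $\|V(s)\| \leq 2\|A\|\|B\|\delta_Y(X)$, Lemma~\ref{lem:normbd} delivers
\[
\|V(t)\| \;\leq\; 2\|A\|\|B\|\delta_Y(X) + \int_{\min(t,s)}^{\max(t,s)} \|\mathcal{B}(r)\|\,dr.
\]

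The main obstacle, and the step where the work lies, is carrying out the Jacobi-based bookkeeping so cleanly that $\mathcal{B}(t)$ picks up only the boundary-straddling contributions $Z \in S_\Lambda(X)$ rather than the broader collection $\{Z \subset \Lambda : Z \cap X \neq \emptyset\}$ that a careless application would generate; this sharpness—with the interior terms $Z \subset X$ fully absorbed into the homogeneous commutator with $\mathcal{A}(t)$—is exactly what produces the bound $\|\mathcal{B}(r)\| \leq 2\|A\| \sum_{Z \in S_\Lambda(X)} \|g_{r,s}^{Z,B}(\Phi(Z,r))\|$. Dividing by $\|A\|$ and taking the supremum over $A \in \mathcal{A}_X$ of unit norm yields \eqref{gennormbd}.
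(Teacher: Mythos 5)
Your treatment of the continuity statement is correct and coincides with the paper's argument. However, the proof of the integral bound \eqref{gennormbd} has a genuine gap. Working directly with $V(t) = [\tau_{t,s}^\Lambda(A), B]$ and decomposing $H_\Lambda(t)$, the Jacobi identity gives
\[
V'(t) = i\big[\tau_{t,s}^\Lambda\big(H_\Lambda^X(t)+H^\partial(t)\big),\,V(t)\big] - i\big[\tau_{t,s}^\Lambda(A),\,\big[\tau_{t,s}^\Lambda\big(H_\Lambda^X(t)+H^\partial(t)\big),B\big]\big],
\]
and the source term thus produced contains, unavoidably, contributions $\big[\tau_{t,s}^\Lambda(A),[\tau_{t,s}^\Lambda(\Phi(Z,t)),B]\big]$ for the interior sets $Z \subset X$ as well as the boundary sets $Z \in S_\Lambda(X)$. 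There is no further Jacobi rearrangement that moves the interior piece into a commutator with $V(t)$; the quantity $[\tau_{t,s}^\Lambda(A),[\tau_{t,s}^\Lambda(\Phi(Z,t)),B]]$ is simply not of the form $[C(t),V(t)]$ for a self-adjoint $C(t)$. So your claimed ``sharpness'' is not achievable this way: the direct ODE for $V(t)$ can only give a bound with the sum over $\{Z : Z\cap X\neq\emptyset\}$, which is strictly weaker than \eqref{gennormbd} and would degrade the subsequent Lieb--Robinson iteration.

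The missing idea is to insert the inverse inner dynamics before differentiating. The paper sets $f_s(t) = g_{t,s}^{X,B}\!\big(\hat{\tau}_{t,s}^X(A)\big) = \big[\tau_{t,s}^\Lambda\circ\hat{\tau}_{t,s}^X(A),\,B\big]$, where $\hat{\tau}_{t,s}^X$ is the inverse of the $X$-restricted dynamics. Differentiating the composed cocycle yields a generator equal to $H_\Lambda(t)-H_X(t)$, so the interior terms $Z\subset X$ cancel \emph{before} the Jacobi identity is used; since $\hat{\tau}_{t,s}^X(A)$ remains supported in $X$, only $Z\in S_\Lambda(X)$ survive. After applying Lemma~\ref{lem:normbd} to the resulting ODE for $f_s(t)$, one substitutes $\tilde A=\tau_{t,s}^X(A)$, for which $\hat{\tau}_{t,s}^X(\tilde A)=A$ and $\|\tilde A\|=\|A\|$, to recover the claimed bound on $g_{t,s}^{X,B}(A)$. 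Your proposal would need to incorporate this step to be salvageable.
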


The continuity of $g_{t,s}^{X,B}$ follows directly from the joint norm continuity of $(s,t) \mapsto U_{\Lambda}(t,s)$ as proven in 
Proposition~\ref{prop:sols} (v), see also statements following. In fact, for any $A \in \cA_X$, one has the
simple estimate
\begin{equation}
\| g_{t,s}^{X,B}(A) - g_{t_0,s_0}^{X,B}(A) \| \leq 2 \| B \| \| \tau_{t,s}^{\Lambda}(A) - \tau_{t_0,s_0}^{\Lambda}(A) \| \leq 4 \| A \| \| B \| \| U_{\Lambda}(t,s) - U_{\Lambda}(t_0,s_0) \| \, .
\end{equation}
In general, this continuity does not carry over to the thermodynamic limit. Of course, we always have
%In the finite-volume situation at hand, however, for each fixed $s \in I$, 
%$t \mapsto \| g_{t,s}^{X,B} \|$ is continuous, and therefore $\| g_{\cdot, s}^{X,B} \|(t) = \| g_{t,s}^{X,B} \|$.
%Thus, for any strongly continuous mapping $A:I\to \cA_X$ 
%Proposition~\ref{prop:measurable_f}(i) and \eqref{normcontinuousgt} imply 
	\be
	\| g^{X,B}_{t,s}(A(t)) \|\leq  \| g^{X,B}_{t,s} \Vert \, \Vert A(t)\Vert.
	\label{normineqs}\ee
%The bound above may be applied, for example, to the integrand in (\ref{gennormbd}).
%Of course, if $A:I \to \cA_X$ is norm-continuous, then by Proposition~\ref{prop:measurable_f}(iii), for all $t\in I$ we have
%	\be
%	 \| g^{X,B}_{\cdot,s}(A(\cdot)) \| (t) = \| g^{X,B}_{t,s}(A(t)) \| \, .
%	\label{normeqs}\ee

We also note that the map $g^{X,B}_{t,s}$ equals the restriction of $g^{\Lambda,B}_{t,s}$ to $\cA_X$. It is useful, however, to consider them as separate maps for each $X\subset \Lambda$, because the estimates for their norms depend crucially on $X$ through $S_\Lambda(X)$. Also note that each $g^{X,B}_{t,s}$ only depends on interaction terms $\Phi(Z,r)$ such that $Z\subset \Lambda$ and $r\in [\min(t,s),\max(t,s)]$. 

\begin{proof}[Proof of Lemma~\ref{lem:normpresapp}]
Fix $X \subset \Lambda$, $A \in \mathcal{A}_X$, and $s \in I$. Recall that the inverse dynamics is given by
\begin{equation}
\hat{\tau}^X_{t,s}(A) = U_X(t,s) A U_X(t,s)^*,
\end{equation}
where the unitary mappings $U_X(t,s)$ are defined as in (\ref{uprop}), see also (\ref{tdham}), with $\Lambda = X$. Consider the function $f_s: I \to \mathcal{A}_{\Lambda}$ given by $f_s(t) = g_{t,s}^{X,B}(\hat{ \tau}_{t,s}^Xs(A))$. It follows that $f_s(t) = [ \tau_{t,s}^{\Lambda} \circ \hat{\tau}_{t,s}^X (A), B ]$ is strongly differentiable in $t$ and a short calculation shows that
\begin{eqnarray}
\frac{d}{dt} f_s(t) & = & i \left[ \tau_{t,s}^{\Lambda} \left( \left[ H_{\Lambda}(t) - H_X(t), \hat{\tau}_{t,s}^X(A) \right] \right), B \right]  \nonumber \\
& = & i \sum_{Z \in S_{\Lambda}(X)} \left[ \left[ \tau_{t,s}^{\Lambda}( \Phi(Z,t)), \tau_{t,s}^{\Lambda} \circ \hat{\tau}_{t,s}^X(A) \right], B \right] \nonumber \\
& = & i \sum_{Z \in S_{\Lambda}(X)} \left[ \tau_{t,s}^{\Lambda}( \Phi(Z,t)), f_s(t) \right] - i \sum_{Z \in S_{\Lambda}(X)} \left[ \tau_{t,s}^{\Lambda} \circ \hat{\tau}_{t,s}^X(A), \left[ \tau_{t,s}^{\Lambda}( \Phi(Z,t)), B \right] \right], 
\end{eqnarray}
where: for the first equality we have used that the adjoint of the unitary propagator has a strong derivative which can be calculated using (\ref{strong_adjoint_de}), 
for the second equality we have used that ${\rm supp}(\hat{\tau}_{t,s}^X (A)) \subset X$, and for the last equality we used the Jacobi identity. Hence,
\be
\frac{d}{dt}f_s(t) = - i[C(t), \, f_s(t)] + D(t)
\ee
where
\begin{equation}
C(t) = - \sum_{Z \in S_{\Lambda}(X)} \tau_{t,s}^{\Lambda}( \Phi(Z,t)) \quad \mbox{and} \quad D(t) =  -i \sum_{Z \in S_{\Lambda}(X)} \left[ \tau_{t,s}^{\Lambda} \circ \hat{\tau}_{t,s}^X(A), \left[ \tau_{t,s}^{\Lambda}( \Phi(Z,t)), B \right] \right]. 
\end{equation}
Since $C$ and $D$ are finite sums and products of strongly continuous functions with $C(t)=C(t)^*$, they satisfy the assumptions on $A$ and $B$, respectively, in Lemma~\ref{lem:normbd} with $t_0=s$. Thus, we have
\begin{equation} \label{comp_bound}
\left\| \left[ \tau_{t,s}^{\Lambda} \circ \hat{\tau}_{t,s}^X(A), B \right] \right\| \leq \| [A,B] \| + 2 \| A \| \sum_{Z \in S_{\Lambda}(X)} \int_{{\rm min}(t,s)}^{ {\rm max}(t,s)} 
\left\| \left[ \tau_{r,s}^{\Lambda}( \Phi(Z,r)), B \right] \right\| \, dr . 
\end{equation}
As $f_s(t) = g_{t,s}^{X,B}( \hat{\tau}_{t,s}^X(A))$, the bound claimed in (\ref{gennormbd}) follows by applying \eqref{comp_bound} to $\tilde{A} = \tau_{t,s}^X(A)$.
\end{proof}

\begin{proof}[Proof of Theorem~\ref{thm:lrb}]
Below, we will prove that $\| [ \tau_{t,s}^{\Lambda}(A), B ] \| $ satisfies the estimate (\ref{lrbest}) with 
\begin{equation}
D(X,Y) = \sum_{x \in \partial_{\Phi}X} \sum_{y \in Y} F(d(x,y)). 
\end{equation}
Since we also have that
\begin{equation}
\| [ \tau_{t,s}^{\Lambda}(A), B ] \| = \| \tau_{t,s}^{\Lambda} \left( [A, \hat{ \tau}_{t,s}^\Lambda(B)] \right) \| = \| [ \tau_{s,t}^{\Lambda}(B), A] \| \, ,
\end{equation}
the bound in (\ref{lrbest}) with $D(X,Y)$ defined to be the minimum in (\ref{lrbmin}) is also clear.

Let $X$, $Y$, $\Lambda$, $A$, and $B$ be as in Theorem~\ref{thm:lrb}. An application of Lemma~\ref{lem:normpresapp} demonstrates that
\begin{equation} \label{iterandum}
\Vert [\tau_{t,s}^{\Lambda}(A),B]\Vert \leq 2 \Vert A\Vert \Vert B\Vert \delta_Y(X)
+2\Vert A\Vert \sum_{Z \in S_{\Lambda}(X)} \int_{{\rm min}(s,t)}^{{\rm max}(s,t)} \Vert [\tau_{r,s}^{\Lambda}(\Phi(Z,r)),B]\Vert \, dr
\end{equation}
for all $s,t \in I$. As such, it suffices to consider the case $s \leq t$. Applying the bound (\ref{normineqs}) to the
integrand in (\ref{iterandum}), it is clear that we may iteratively apply Lemma~\ref{lem:normpresapp}.
As a result, for any $N \geq 1$
\begin{equation}
\Vert [\tau_{t,s}^{\Lambda}(A),B]\Vert \leq 2 \Vert A\Vert \Vert B\Vert  \left( \delta_Y(X) +  \sum_{n=1}^N a_n(t) \right) + R_{N+1}(t) 
\end{equation}
where
\begin{eqnarray}
a_n(t) = 2^n \sum_{Z_1 \in S_{\Lambda}(X)} \sum_{Z_2 \in S_{\Lambda}(Z_1)} \cdots \sum_{Z_n \in S_{\Lambda}(Z_{n-1})}  \delta_Y(Z_n) \int_s^t \int_s^{r_1} \cdots \int_s^{r_{n-1}} \times \nonumber \\
\times \left( \prod_{j=1}^{n} \| \Phi (Z_j, r_j) \| \right)  dr_n dr_{n-1} \cdots d r_1 
\end{eqnarray}
and 
\begin{eqnarray}\label{def:rem}
R_{N+1}(t) = 2^{N+1} \sum_{Z_1 \in S_{\Lambda}(X)} \sum_{Z_2 \in S_{\Lambda}(Z_1)}  \cdots  \sum_{Z_{N+1} \in S_{\Lambda}(Z_N)} \int_s^t \int_s^{r_1} \cdots  \int_s^{r_{N}} \times \nonumber \\  \times \left( \prod_{j=1}^{N} \| \Phi (Z_j, r_j) \| \right) \Vert [\tau_{r_{N+1},s}^{\Lambda}(\Phi(Z_{N+1},r_{N+1})),B]\Vert dr_{N+1} d r_{N} \cdots dr_1.
\end{eqnarray}

The remainder term $R_{N+1}(t)$ is estimated as follows. 
First, we observe that
\begin{equation}
\Vert [\tau_{r_{N+1},s}^{\Lambda}(\Phi(Z_{N+1}, r_{N+1})),B]\Vert \leq 2 \| B \|\, \| \Phi(Z_{N+1},r_{N+1}) \|  \, .
\end{equation}
Next, we note that the sums above are in fact sums over chains of sets $(Z_1, Z_2, \cdots, Z_{N+1})$
which satisfy $Z_1 \cap \partial_{\Phi}X \neq \emptyset$ and $Z_j \cap Z_{j-1} \neq \emptyset$ for $2 \leq j \leq N+1$. As such, there are points $w_1, w_2, \cdots, w_{N+1} \in \Lambda$ such that $w_1 \in Z_1 \cap \partial_{\Phi}X$ and
$w_j \in Z_j \cap Z_{j-1}$ for all $2 \leq j \leq N+1$. A simple upper bound on these sums is then obtained by overcounting:
\begin{equation} \label{sumbd}
\sum_{Z_1 \in S_{\Lambda}(X)} \sum_{Z_2 \in S_{\Lambda}(Z_1)}  \cdots  \sum_{Z_{N+1} \in S_{\Lambda}(Z_N)} *\quad \leq 
\sum_{w_1 \in \partial_{\Phi}X}\sum_{w_2,\ldots, w_{N+2} \in \Lambda} \sum_{\stackrel{Z_1, \ldots, Z_{N+1} \subset \Lambda:}{w_k, w_{k+1} \in Z_k, k=1,\ldots, N+1}} *
\end{equation}
where $*$ denotes arbitrary non-negative quantities. We have also used that the set $Z_{N+1}$ must contain more than one point since $Z_{N+1} \in S_{\Lambda}(Z_N)$.
As $\Phi \in \mathcal{B}_F(I)$, \eqref{FnormBound} implies that
\begin{equation} \label{intbd1}
\sum_{\stackrel{Z_k \subset \Lambda:}{w_k, w_{k+1} \in Z_k}} \| \Phi(Z_k, r_k) \| \leq \| \Phi \|_F(r_k) F(d(w_k, w_{k+1}))
\end{equation}
for each $1 \leq k \leq N+1$. Using this bound as well as \eqref{F:int} and \eqref{F:conv}, we conclude that
\begin{eqnarray}
R_{N+1}(t) & \leq & 2 \| B \| 2^{N+1} \int_s^t \!\cdots\! \int_s^{r_N} \!\!\! \sum_{w_1 \in \partial_{\Phi}X}\sum_{w_2,\ldots, w_{N+2} \in \Lambda} 
\!\!\!\!\!\!\!\!\!  \sum_{\stackrel{Z_1, \ldots, Z_{N+1} \subset \Lambda:}{w_k, w_{k+1} \in Z_k, k=1,\ldots, N+1}}
\!\!\!\!\!\!  \prod_{j=1}^{N+1} \| \Phi(Z_j, r_j) \|  dr_{N+1} \cdots dr_1 \nonumber \\
& \leq &  2 \| B \| 2^{N+1} \int_s^t \cdots \int_s^{r_N} \sum_{w_1 \in \partial_{\Phi}X} 
\sum_{w_2,\ldots, w_{N+2} \in \Lambda} \prod_{j=1}^{N+1} \| \Phi \|_F(r_j) F(d(w_j,w_{j+1})) dr_{N+1} \cdots dr_1 \nonumber \\ 
& \leq & 2 \| B \| 2^{N+1} C_F^N  \sum_{w_1 \in \partial_{\Phi}X}  \sum_{w_{N+2} \in \Lambda} F(d(w_1, w_{N+2})) \int_s^t \cdots \int_s^{r_N} \prod_{j=1}^{N+1} \| \Phi \|_F(r_j)
dr_{N+1} \cdots dr_1 \nonumber \\ 
& \leq & \frac{2 \| B \| | \partial_{\Phi} X| \| F \|}{C_F} \frac{ \left( 2C_F \int_s^t \| \Phi \|_F(r) \, dr \right)^{N+1}}{(N+1)!}.
\end{eqnarray}
We note that, in the last inequality, we performed the integration over the simplex. 
Since $\| \Phi \|_F$ is locally integrable on $I$, this remainder clearly goes to 0 as $N \to \infty$.

A similar estimate can be applied to the terms $a_n(t)$. In fact, these terms are also sums over chains, however, there is a restriction: only those chains 
whose final link $Z_n$ satisfies $Z_n \cap Y \neq \emptyset$ contribute to the sum. Recalling that $I_{t,s}(\Phi) = C_F \int_s^t \|\Phi\|_F(r) dr$, the bound
\begin{equation}
a_n(t) \leq \frac{1}{C_F} \frac{\left( 2I_{t,s}(\Phi) \right)^n}{n!} \sum_{x \in \partial_{\Phi}X} \sum_{y \in Y} F(d(x,y)) 
\end{equation}
then follows as above. Since $\delta_Y(X) = 0$ and $n \geq 1$, the bound in (\ref{lrbest}) is now clear.
\end{proof}

%%%%%%%%%%
%
% LRB Unbounded Case
%
%%%%%%%%%

\subsection{A class of unbounded Hamiltonians} \label{sec:class_ubd_ham}

As we now discuss, the methods in the previous subsection extend to models with unbounded
on-site terms. Consider a quantum lattice system comprised of $(\Gamma, d)$ and $\mathcal{A}_{\Gamma}$.
Let $I \subset \mathbb{R}$ be an interval, $F$ an $F$-function on $(\Gamma, d)$, and
$\Phi \in \mathcal{B}_F(I)$ a time-dependent interaction.
To each $z \in \Gamma$, fix a self-adjoint operator $H_z$ with dense domain $\mathcal{D}_z \subset \mathcal{H}_z$. 
For any $\Lambda \in \mathcal{P}_0(\Gamma)$ and $t \in I$, consider the finite-volume Hamiltonian
\begin{equation} \label{gentdham}
H_{\Lambda}(t) = \sum_{z \in \Lambda} H_z + \sum_{Z \subset \Lambda} \Phi(Z,t). 
\end{equation}
The non-interacting Hamiltonian
\begin{equation}
H_{\Lambda}^{(0)} = \sum_{z \in \Lambda} H_z
\end{equation} 
is essentially self-adjoint with domain 
\begin{equation}
\mathcal{D}_{\Lambda} = {\rm span} \{ \bigotimes_{z \in \Lambda} \psi_z \, | \, \psi_z \in \mathcal{D}_z \mbox{ for all } z \in \Lambda \}, \, 
\end{equation} 
see \cite[Theorem VIII.33 and Corollary]{reed:1980}. Since the time-dependent terms are bounded, it follows from \cite[Theorem 5.28]{weidmann:1980} that for each $t \in I$, $H_{\Lambda}(t)$ is essentially self-adjoint on $\mathcal{H}_{\Lambda}$ with domain $\mathcal{D}_{\Lambda}$.
We proceed by using the notation $H_\Lambda^{(0)}$ and $H_\Lambda(t)$
for the corresponding self-adjoint closures.

As $\Phi \in \mathcal{B}_F(I)$, it is a strongly continuous interaction, and so for any $\Lambda \in \mathcal{P}_0( \Gamma)$ Proposition~\ref{prop:tduprop} guarantees the existence of a finite volume unitary propagator corresponding to $H_{\Lambda}(t)$.
Let us briefly review this in order to motivate our definition of the finite volume dynamics.
By Stone's theorem, the non-interacting self-adjoint Hamiltonian $H_{\Lambda}^{(0)}$ generates a free-dynamics
\begin{equation}
\tau_t^{(0)}(A) = e^{it H_{\Lambda}^{(0)}} A e^{-it H_{\Lambda}^{(0)}} \quad \mbox{for all } A \in \mathcal{A}_{\Lambda} \mbox{ and all } t \in \mathbb{R} \, 
\end{equation}
in terms of a group of strongly continuous unitaries $U_{\Lambda}^{(0)}(t,0) = e^{-it H_{\Lambda}^{(0)}}$. In this case,
\begin{equation} \label{intHamt}
\tilde{H}_{\Lambda}(t) = \sum_{Z \subset \Lambda} \tau_t^{(0)} \left( \Phi(Z,t) \right) \quad \mbox{for all } t \in I,
\end{equation}
is pointwise self-adjoint with $\tilde{H}_{\Lambda} : I \to \mathcal{A}_{\Lambda}$ strongly continuous. By Proposition~\ref{prop:sols} (v),
there is a unique strong solution of the initial value problem
\begin{equation} \label{ip_bd_de}
\frac{d}{dt} \tilde{U}_{\Lambda}(t,s) = -i \tilde{H}_{\Lambda}(t) \tilde{U}_{\Lambda}(t,s) \quad \mbox{with} \quad \tilde{U}_{\Lambda}(s,s) = \idty \,
\end{equation}
for each $s \in I$. In terms of these solutions, we introduce 
\begin{equation} \label{ip_dyn_1}
U_{\Lambda}(t,s) = e^{-itH_{\Lambda}^{(0)}} \tilde{U}_{\Lambda}(t,s) e^{i s H_{\Lambda}^{(0)}} ,
\end{equation}
for any $s,t \in I$. As is demonstrated in the proof of Proposition~\ref{prop:tduprop}, the operators
$\{ U_{\Lambda}(t,s) \}_{s,t \in I}$ form a two-parameter family of unitaries. They satisfy the co-cycle
property (\ref{co-cycle_rel}), and generate the unique locally norm bounded weak solutions of the time-dependent Schr\"odinger equation
corresponding to $H_{\Lambda}(t)$. We use these unitaries to define a dynamics associated to $H_{\Lambda}(t)$; namely for any $s,t \in I$, we
take $\tau_{t,s}^{\Lambda} : \cA_{\Lambda} \to \cA_{\Lambda}$ as
\begin{equation} \label{ubdyn}
\tau_{t,s}^{\Lambda}(A) = U_{\Lambda}(t,s)^* A U_{\Lambda}(t,s).
\end{equation}
One readily checks that the family $\{\tau_{t,s}^\Lambda\mid t,s\in I\}$ of automorphisms on $\mathcal{A}_{\Lambda}$ satisfies the 
co-cycle property and that the following analogue of Theorem~\ref{thm:lrb} holds for this dynamics. 

 \begin{thm} \label{thm:lrb2} Let $\{ H_z \}_{z \in \Gamma}$ be a collection of densely defined self-adjoint operators, $\Phi \in \cB_F(I)$, and $\tau_{t,s}^{\Lambda}$ be the dynamics given in \eqref{ubdyn}. Let $X,Y \in \mathcal{P}_0( \Gamma)$ be disjoint sets. For any $\Lambda \in \mathcal{P}_0( \Gamma)$ with 
 $X \cup Y \subset \Lambda$ and any $A \in \mathcal{A}_X$ and $B \in \mathcal{A}_Y$, the bound
 \begin{equation} \label{lrbest2}
 \left\| \left[ \tau_{t,s}^{\Lambda}(A), B \right] \right\| \leq \frac{2 \| A \| \| B \|}{C_F} \left( e^{2I_{t,s}(\Phi)} - 1 \right) D(X,Y)
 \end{equation}
 holds for all $t,s \in I$. Here, $C_F$ is the constant in (\ref{F:conv}), and the quantities $I_{t,s}(\Phi)$ and $D(X,Y)$ are as discussed earlier; see \eqref{ItsPhi} and (\ref{lrbmin}), respectively.
 \end{thm}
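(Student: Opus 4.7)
The plan is to reduce Theorem~\ref{thm:lrb2} to Theorem~\ref{thm:lrb} via the interaction picture identity (\ref{ip_dyn_1}), exploiting the fact that the free dynamics $\tau_t^{(0)}$ is strictly on-site. Introduce the bounded, time-dependent interaction
\begin{equation}
\tilde{\Phi}(Z,t) = \tau_t^{(0)}\!\left(\Phi(Z,t)\right), \qquad Z\in\mathcal{P}_0(\Gamma),\; t\in I,
\end{equation}
and let $\tilde{\tau}_{t,s}^\Lambda$ denote the Heisenberg dynamics generated by $\tilde{H}_\Lambda(t)=\sum_{Z\subset\Lambda}\tilde\Phi(Z,t)$ via the propagator $\tilde U_\Lambda(t,s)$ from (\ref{ip_bd_de}). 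Since $\tau_t^{(0)}=\bigotimes_{z\in\Lambda}\mathrm{Ad}(e^{itH_z})$ acts as a product of on-site unitary conjugations, it restricts to an isometric $*$-automorphism of every local subalgebra $\mathcal{A}_Z$; in particular $\tilde\Phi(Z,t)^*=\tilde\Phi(Z,t)\in\mathcal{A}_Z$, $\|\tilde\Phi(Z,t)\|=\|\Phi(Z,t)\|$, and $\tilde\Phi(Z,t)\neq 0\iff\Phi(Z,t)\neq 0$. The strong continuity of $\tilde\Phi(Z,\cdot)$ follows from Proposition~\ref{prop:st_cont_prods}(i) applied to the strongly continuous factors $e^{\pm itH_z}$ and $\Phi(Z,\cdot)$. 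Consequently $\tilde\Phi\in\mathcal{B}_F(I)$ with $\|\tilde\Phi\|_F(t)=\|\Phi\|_F(t)$, hence $I_{t,s}(\tilde\Phi)=I_{t,s}(\Phi)$, and $\partial_{\tilde\Phi}X=\partial_\Phi X$ for all $X$.

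The second step is to rewrite $\tau_{t,s}^\Lambda$ in terms of $\tilde\tau_{t,s}^\Lambda$ and the free dynamics. Using (\ref{ip_dyn_1}) one computes, for any $A\in\mathcal{A}_\Lambda$,
\begin{equation}
\tau_{t,s}^\Lambda(A) = e^{-isH_\Lambda^{(0)}}\,\tilde U_\Lambda(t,s)^* \,e^{itH_\Lambda^{(0)}} A e^{-itH_\Lambda^{(0)}}\,\tilde U_\Lambda(t,s)\,e^{isH_\Lambda^{(0)}}
= \tau_{-s}^{(0)}\!\left(\tilde\tau_{t,s}^\Lambda\!\left(\tau_t^{(0)}(A)\right)\right).
\end{equation}
For $A\in\mathcal{A}_X$ and $B\in\mathcal{A}_Y$, set $A'=\tau_t^{(0)}(A)\in\mathcal{A}_X$ and $B'=\tau_s^{(0)}(B)\in\mathcal{A}_Y$; these lie in the correct local algebras because $\tau_r^{(0)}$ is on-site. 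Since $\tau_{-s}^{(0)}$ is a norm-preserving automorphism and $\tau_{-s}^{(0)}(B')=B$,
\begin{equation}
\left\|\left[\tau_{t,s}^\Lambda(A),B\right]\right\| = \left\|\tau_{-s}^{(0)}\!\left(\left[\tilde\tau_{t,s}^\Lambda(A'),B'\right]\right)\right\| = \left\|\left[\tilde\tau_{t,s}^\Lambda(A'),B'\right]\right\|,
\end{equation}
with $\|A'\|=\|A\|$ and $\|B'\|=\|B\|$.

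The final step is to apply Theorem~\ref{thm:lrb} to the right-hand side: the dynamics $\tilde\tau_{t,s}^\Lambda$ is generated by the bounded interaction $\tilde\Phi\in\mathcal{B}_F(I)$ exactly in the form covered by Section~\ref{subsec:lrb}, so
\begin{equation}
\left\|\left[\tilde\tau_{t,s}^\Lambda(A'),B'\right]\right\| \leq \frac{2\|A'\|\,\|B'\|}{C_F}\left(e^{2I_{t,s}(\tilde\Phi)}-1\right)\tilde D(X,Y),
\end{equation}
where $\tilde D(X,Y)$ is (\ref{lrbmin}) computed with $\partial_{\tilde\Phi}$. Using the identifications $\partial_{\tilde\Phi}X=\partial_\Phi X$, $\partial_{\tilde\Phi}Y=\partial_\Phi Y$, and $I_{t,s}(\tilde\Phi)=I_{t,s}(\Phi)$ collapses this bound to (\ref{lrbest2}), completing the proof. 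There is essentially no analytic obstacle beyond Theorem~\ref{thm:lrb}: the only thing to verify carefully is that the interaction-picture transformation preserves everything $F$-related. The subtlety, and the reason the interaction picture is used in the first place, is that one never needs to differentiate $\tau_t^{(0)}(A)$ in $t$ on the domain of $H_\Lambda^{(0)}$, since the Lieb-Robinson estimate is extracted at the level of the bounded dynamics $\tilde\tau_{t,s}^\Lambda$ and transported back by the automorphism $\tau_{-s}^{(0)}$.
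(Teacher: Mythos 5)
Your proof is correct and follows essentially the same route as the paper's: pass to the interaction picture via (\ref{ip_dyn_1}), observe that the on-site free dynamics preserves local supports and operator norms so that $\tilde\Phi\in\mathcal{B}_F(I)$ with $\|\tilde\Phi\|_F(t)=\|\Phi\|_F(t)$ and $\partial_{\tilde\Phi}X=\partial_\Phi X$, reduce the commutator to $[\tilde\tau_{t,s}^\Lambda(\tau_t^{(0)}(A)),\tau_s^{(0)}(B)]$ by the automorphism property of the free dynamics, and then invoke Theorem~\ref{thm:lrb}. The only added content in your write-up is the explicit verification that $\tilde\Phi(Z,\cdot)$ is strongly continuous and that the $\Phi$-boundaries coincide, both of which the paper leaves implicit.
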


\begin{proof}
By construction, it is clear that
\begin{equation} \label{fulldyn}
\tau_{t,s}^{\Lambda}(A) = \hat{\tau}_s^{(0)} \circ \tilde{\tau}_{t,s}^{\Lambda} \circ \tau_t^{(0)}(A) \, .
\end{equation}
Here $\hat{\tau}_s^{(0)}$ and $\tilde{\tau}_{t,s}^{\Lambda}$ are the inverse free-dynamics and the
interaction-picture dynamics, i.e.  
\begin{equation}
\hat{\tau}^{(0)}_s(A) = U_{\Lambda}^{(0)}(s,0)AU_{\Lambda}^{(0)}(s,0)^* \quad \mbox{and} \quad \tilde{\tau}^{\Lambda}_{t,s}(A) = \tilde{U}_{\Lambda}(t,s)^*A \tilde{U}_{\Lambda}(t,s) \, .
\end{equation}
In this case,
\begin{equation}
\| [ \tau_{t,s}^{\Lambda}(A), B ] \| = \| [ \tilde{ \tau}_{t,s}^{\Lambda}( \tau_t^{(0)}(A)), \tau_s^{(0)}(B) ] \| \, .
\end{equation}
Note that for all $t \in I$, $\tau_t^{(0)}(A) \in \cA_X$ and $\tau_t^{(0)}(B) \in \cA_Y$.
Moreover, the interaction-picture dynamics is generated by the strongly continuous interaction $\tilde{\Phi}$
with terms
\begin{equation}
\tilde{\Phi}(Z,t) = e^{itH_Z^{(0)}} \Phi(Z,t) e^{-itH_Z^{(0)}}
\end{equation}
for any $Z \in \mathcal{P}_0( \Gamma)$ and $t \in I$. Since $\tilde{\Phi}(Z,t)$ and $\Phi(Z,t)$ have the same 
support and the same norm, it is clear that $\| \tilde{\Phi} \|_F(t) = \| \Phi \|_F(t)$
for all $t \in I$. In this case, the bound in (\ref{lrbest2}) follows from Theorem~\ref{thm:lrb} applied to the interaction-picture dynamics $\tilde{\Phi}$.
\end{proof}

In Section~\ref{sec:duhamel}, we considered a family of Hamiltonians, see (\ref{fam_ham_tl}), with bounded interactions
which depend not only on time but also on an auxillary parameter. Within the context of quantum lattice models,
the corresponding finite-volume Hamiltonian may have the form
\be \label{time+para_ham}
H_{\Lambda}^\lambda(t) = \sum_{z \in \Lambda} H_z + \sum_{X \subset \Lambda}\Phi_\lambda(Z,t).
\ee
If $(\lambda,t)\mapsto \Phi_{\lambda}(X,t)$ is jointly strongly continuous and strongly differentiable with respect to $\lambda$, 
Proposition~\ref{properties_of_expiths} applies to the corresponding finite-volume dynamics
\begin{equation} \label{unbd_two_param_dyn}
\tau_{t,s}^{\Lambda, \lambda}(A) = U_{\Lambda}^{\lambda}(t,s)^*AU_{\Lambda}^{\lambda}(t,s) \, .
\end{equation}
To be clear, we note that the unitaries $U_{\Lambda}^{\lambda}(t,s)$, in (\ref{unbd_two_param_dyn}) above, are constructed as in (\ref{ip_dyn_1}) by first
solving the analogue of (\ref{ip_bd_de}) with
\begin{equation}
\tilde{H}_{\Lambda}^{\lambda}(t) = \sum_{Z \subset \Lambda} \tau_t^{(0)} \left( \Phi_{\lambda}(Z,t) \right) \quad \mbox{for all } t \in I \, .
\end{equation}
With no further assumptions on the interaction terms, the bound provided by Proposition~\ref{properties_of_expiths} on the $\lambda$-derivative of $\tau_{t,s}^{\Lambda, \lambda}(A)$, 
see (\ref{norm_bd_dyn_der}), will generally depend on the volume $\Lambda$. However, under the additional assumption that $\Phi_{\lambda}, \Phi'_{\lambda} \in \mathcal{B}_F(I)$,
one obtains a better, volume independent estimate on the derivative. In fact, arguing as in the proof of Proposition~\ref{properties_of_expiths} one finds that if $s \leq t$, then
\begin{equation}
\frac{d}{d\lambda} \tau_{t,s}^{\Lambda, \lambda}(A)= i \sum_{Z \subset \Lambda} 
\int_s^t  \tau^{\Lambda, \lambda}_{r,s} ([\Phi^\prime_\lambda(Z,r), \tau^{\Lambda, \lambda}_{t,r} (A)])\, dr ;
\end{equation}
compare with (\ref{dyn_duhamel}). Since $\Phi_{\lambda} \in \mathcal{B}_F(I)$, the Lieb-Robinson bound (\ref{lrbest2}) holds for the dynamics
$\tau_{t,s}^{\Lambda, \lambda}$. In this case, an application of Corollary~\ref{app:cor:int_bd} shows that for all $A\in \cA_X$
\be \label{norm_bd_dyn_der2}
\left\|\frac{d}{d\lambda}\tau_{t,s}^{ \Lambda,\lambda}(A)\right\| \leq 2\|A\| \|F\| |X|e^{2I_{t,s}(\Phi_\lambda)} \int_s^t \|\Phi_\lambda'\|_F(r)dr \, ,
\ee
the right-hand-side of which is independent of $\Lambda$.

%%%%%%%%%%
%
% Infinite volume dynamics 
%
%%%%%%%%%

\subsection{The infinite-volume dynamics} \label{sec:eivd}

In this section, we will prove several convergence and continuity results for the Heisenberg dynamics associated with interactions $\Phi\in \cB_F(I)$ that make use of Lieb-Robinson bounds. As is well-known, see e.g. \cite{bratteli:1997}, Lieb-Robinson bounds can be used to prove the existence of a dynamics in the thermodynamic limit for sufficiently short-range interactions. In Theorem~\ref{thm:existbd}, we show that given an interaction $\Phi \in \mathcal{B}_F(I)$ the dynamics corresponding to finite-volume restrictions of
$\Phi$ converge in the thermodynamic limit. To prove the existence of the thermodynamic limit, we will apply Theorem~\ref{thm:continuity} below, which establishes that the Heisenberg dynamics is continuous in the interaction space. For example, in the case of time-independent interactions, Theorem~\ref{thm:continuity} implies that
the difference between the dynamical evolution of a local observable $A$ with respect to two different interactions $\Phi, \Psi \in \cB_F$ is small if $\|\Phi-\Psi\|_F$ is small. The statement of this result for finite-volume Heisenberg dynamics is the content of Theorem \ref{thm:continuity}, with the analogous thermodynamic limit statement given in Corollary~\ref{cor:continuity}. Lastly, given a sequence of interactions which converge {\it locally in $F$-norm}, see Definition~\ref{def:lcfnorm} below, we show that the corresponding dynamics (which necessarily exist by Theorem~\ref{thm:existbd}) converge as well; this is the content of Theorem~\ref{thm:exist_iv_dyn}. In particular, this can be used to prove that the thermodynamic limit of the Heisenberg dynamics is unchanged by the addition of (sufficiently local) boundary conditions. If the interactions are norm continuous, the cocycle of automorphisms describing the infinite-volume dynamics is differentiable with a strongly continuous generator. This is shown in Theorem~\ref{thm:differentiable_dynamics}.

We now begin with the continuity statement. For this result, we will once again make use of the quantity $I_{t,s}(\Phi)$, which is defined in \eqref{ItsPhi}.

%%%
%
% Cont. Theorem
%
%%%%%%

\begin{thm}\label{thm:continuity}
Consider a quantum lattice system comprised of $(\Gamma, d)$ and $\cA_{\Gamma}$.
Let $I \subset \mathbb{R}$ be an interval, $F$ be an $F$-function on $(\Gamma, d)$, and
 $\Phi, \Psi \in \cB_F(I)$ be time-dependent interactions. 
Fix a collection of densely defined, self-adjoint on-site Hamiltonians $\{ H_z \}_{z \in \Gamma}$, and
for any $\Lambda \in \mathcal{P}_0( \Gamma)$, define Hamiltonians
\begin{equation} \label{two_hams}
H_{\Lambda}^{( \Phi)}(t) = \sum_{z \in \Lambda} H_z + \sum_{Z \subset \Lambda} \Phi(Z,t) \quad \mbox{and} \quad H_{\Lambda}^{( \Psi)}(t) = \sum_{z \in \Lambda} H_z + \sum_{Z \subset \Lambda} \Psi(Z,t)
\end{equation}
as well as their corresponding dynamics, $\tau_{t,s}^{\Lambda}$ and $\alpha_{t,s}^{\Lambda}$, for each $s, t \in I$, respectively. 
\begin{enumerate}
	\item[(i)]  For any $X, \Lambda \in \mathcal{P}_0( \Gamma)$ with $X \subset \Lambda$, the bound
	\begin{equation}\label{cont_bd} 
	\| \tau_{t,s}^{\Lambda}(A) - \alpha_{t,s}^{\Lambda}(A) \| \leq \frac{2\| A \|}{C_F} e^{2\min(I_{t,s}(\Phi),I_{t,s}(\Psi))}
	I_{t,s}(\Phi-\Psi) \sum_{x \in X} \sum_{y \in \Lambda} F(d(x,y)) 
	\end{equation} 
	holds for all $A \in \mathcal{A}_X$ and $s,t \in I$.
	\item[(ii)] For any $X, \Lambda_0, \Lambda \in \mathcal{P}_0( \Gamma)$ with $X \subset \Lambda_0 \subset \Lambda$, the
	bound
	\begin{equation}\label{sub_diff_bd} 
	\| \tau_{t,s}^{\Lambda}(A) - \tau_{t,s}^{\Lambda_0}(A) \| \leq \frac{2\| A \|}{C_F} e^{2 I_{t,s}(\Phi)}
	I_{t,s}(\Phi)  \sum_{x \in X} \sum_{y \in \Lambda \setminus \Lambda_0} F(d(x,y)) 
	\end{equation} 
	holds for all $A \in \mathcal{A}_X$ and $s,t \in I$. 
\end{enumerate}
\end{thm}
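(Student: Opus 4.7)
Both parts follow the same ``interpolation in the generator'' template, combining a chain-rule identity with the Lieb-Robinson bound (Theorem~\ref{thm:lrb2}).

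For (i), I introduce the hybrid function $f(r) = \tau_{r,s}^{\Lambda}(\alpha_{t,r}^{\Lambda}(A))$ on $[s,t]$, which satisfies $f(s) = \alpha_{t,s}^{\Lambda}(A)$ and $f(t) = \tau_{t,s}^{\Lambda}(A)$. Term-by-term differentiation---justified in the possibly unbounded on-site case by working in the interaction picture of Section~\ref{sec:class_ubd_ham}---produces a cancellation of the common $H_z$- and $\Psi$-pieces of the two generators, leaving
\be
f'(r) = i \sum_{Z \subseteq \Lambda} \tau_{r,s}^{\Lambda}\bigl([(\Phi-\Psi)(Z,r),\, \alpha_{t,r}^{\Lambda}(A)]\bigr).
\ee
Integrating from $s$ to $t$ and using that $\tau_{r,s}^{\Lambda}$ is norm-preserving reduces the claim to bounding $\int_s^t \sum_{Z \subseteq \Lambda} \bigl\|[(\Phi-\Psi)(Z,r),\, \alpha_{t,r}^{\Lambda}(A)]\bigr\|\, dr$.

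The integrand is controlled by splitting the $Z$-sum according to whether $Z \cap X = \emptyset$. On the disjoint part I apply the Lieb-Robinson bound to $\alpha_{t,r}^{\Lambda}(A)$ to acquire the factor $(e^{2I_{t,r}(\Psi)}-1)/C_F$ and the geometric weight $\sum_{x\in X}\sum_{y\in Z}F(d(x,y))$; swapping the $Z$ and $y$ summations and applying $\sum_{Z\ni y}\|(\Phi-\Psi)(Z,r)\| \leq F(0)\|\Phi-\Psi\|_F(r)$ yields a contribution of the form $\tfrac{F(0)}{C_F}(e^{2I_{t,r}(\Psi)}-1)\|\Phi-\Psi\|_F(r)\sum_{x\in X}\sum_{y\in\Lambda}F(d(x,y))$. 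On the intersecting part I use the trivial commutator bound together with $|X|F(0) \leq \sum_{x\in X}\sum_{y\in\Lambda}F(d(x,y))$ to obtain the same geometric sum without the exponential. The two contributions fit together because $F(0) \leq C_F$, an immediate consequence of \eqref{F:conv} with $x=y=z$, which collapses $1 + \tfrac{F(0)}{C_F}(e^{2I_{t,r}(\Psi)}-1)$ into $e^{2I_{t,r}(\Psi)}$. Integration in $r$ and monotonicity of $I_{t,r}(\Psi)$ then give \eqref{cont_bd} with $\Psi$ in the exponent; choosing instead $f(r) = \alpha_{r,s}^{\Lambda}(\tau_{t,r}^{\Lambda}(A))$ yields the analogous bound with $\Phi$, and taking the minimum is the claim.

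For (ii), I repeat the scheme with $f(r) = \tau_{r,s}^{\Lambda}(\tau_{t,r}^{\Lambda_0}(A))$. The derivative now involves $H_{\Lambda}(r) - H_{\Lambda_0}(r) = \sum_{Z \subseteq \Lambda,\, Z \not\subseteq \Lambda_0}\Phi(Z,r)$, and every such $Z$ meets $\Lambda \setminus \Lambda_0$. Splitting once more on whether $Z$ intersects $X$ and applying the Lieb-Robinson bound to $\tau_{t,r}^{\Lambda_0}(A)$, the new key ingredient is the convolution property \eqref{F:conv}: the constraint $Z \not\subseteq \Lambda_0$ is cashed in via
\be
\sum_{\substack{Z \ni y \\ Z \cap (\Lambda\setminus\Lambda_0) \neq \emptyset}} \|\Phi(Z,r)\| \;\leq\; \|\Phi\|_F(r) \sum_{z \in \Lambda\setminus\Lambda_0} F(d(y,z)),
\ee
after which summation over $y$ followed by one application of \eqref{F:conv} absorbs exactly one factor of $C_F$---cancelling the $1/C_F$ from the Lieb-Robinson prefactor---and localizes the final geometric weight to $\sum_{x\in X}\sum_{y\in\Lambda\setminus\Lambda_0}F(d(x,y))$. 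The trivial-bound contribution is handled analogously (the convolution is again used to route the $Z \ni x,z$ constraint through the $F$-norm), and the two pieces combine to give \eqref{sub_diff_bd}.

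The principal obstacle is the bookkeeping required to reassemble the ``near'' ($Z \cap X \neq \emptyset$) and ``far'' contributions into the single clean prefactor $\tfrac{2\|A\|}{C_F}e^{2I_{t,s}}$; the interplay of the $F$-norm with $F(0) \leq C_F$ in (i) and with the convolution inequality in (ii) is precisely what makes the two pieces close up cleanly. A secondary technical issue is the strong, rather than norm, differentiability of $f$ in the presence of unbounded on-site $H_z$; this is handled by the interaction-picture representation of Section~\ref{sec:class_ubd_ham}, so that every derivative calculation above reduces to the bounded setting of Section~\ref{sec:duhamel}.
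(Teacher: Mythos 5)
Your proof is correct and follows essentially the same interpolation-in-the-generator route as the paper: differentiate the hybrid $\tau_{r,s}^{\Lambda}\circ\alpha_{t,r}^{\Lambda}(A)$ (resp.\ $\tau_{r,s}^{\Lambda}\circ\tau_{t,r}^{\Lambda_0}(A)$), cancel the common terms of the two generators, split the surviving $Z$-sum into $Z\cap X=\emptyset$ and $Z\cap X\neq\emptyset$, apply the Lieb--Robinson bound on the former and the trivial commutator bound on the latter, and reassemble via the $F$-norm. The one cosmetic difference is that in part (i) you dispense with the convolution estimate of Proposition~\ref{app:prop:int_bd_dist} for the far terms, instead using the cruder $\sum_{Z\ni y}\|\Theta(Z,r)\|\le F(0)\|\Theta\|_F(r)$ together with the observation $F(0)\le C_F$ (valid: take the $z=x$ term with $x=y$ in~\eqref{F:conv}); this still yields the stated prefactor because $1+\tfrac{F(0)}{C_F}(e^{2I}-1)\le e^{2I}$, so the end result is the same.
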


Using \eqref{F:int}, the estimates in \eqref{cont_bd} and \eqref{sub_diff_bd} can be interpreted as bounds on the norm of the difference of two dynamics, thought of as maps from $\cA_X$ to $\cA_\Lambda$, that are uniform in $\Lambda$ but grow linearly in $|X|$. Since the dynamics are, of course, automorphisms, the bounds of Theorem~\ref{thm:continuity} are only nontrivial if the RHS of \eqref{cont_bd} and \eqref{sub_diff_bd} are smaller than $2\|A\|$. As is well known, this will be true for both cases if $|t-s|$ is sufficiently small. Additionally, the bound in \eqref{cont_bd} will be nontrivial if $\Vert \Phi-\Psi\Vert_F$ is small, and the bound in \eqref{sub_diff_bd} will be nontrivial if $d(X,\Lambda\setminus\Lambda_0)$ is small. 
Note that, even if a map is bounded on all of $\cA^{\rm loc}_\Gamma$, as is the case in Theorem \ref{thm:continuity}, norm  bounds for their local restriction can be very useful. 

\begin{proof}
To prove (i), we first consider the case that $H_z =0$ for all $z \in \Gamma$.
For any $\Lambda \in \mathcal{P}_0( \Gamma)$, $X \subset \Lambda$, $s,t \in I$, and
$A \in \mathcal{A}_X$ we write the corresponding difference as
\begin{equation}  \label{dyn_diff}
\tau_{t,s}^{\Lambda}(A) - \alpha_{t,s}^{\Lambda}(A) = \int_s^t \frac{d}{dr} \left( \tau_{r,s}^{\Lambda} \circ \alpha_{t,r}^{\Lambda}(A) \right) \, dr 
= i \int_s^t \tau_{r,s}^{\Lambda} \left( \left[ H_{\Lambda}^{(\Theta)}(r), \alpha_{t,r}^{\Lambda} (A) \right] \right) \, dr 
\end{equation} 
where we have introduced the local Hamiltonian
\begin{equation} \label{diff_ham_terms}
H_{\Lambda}^{(\Theta)}(r) = \sum_{Z \subset \Lambda} \Theta(Z,r) \mbox{ with } \Theta(Z,r) = \Phi(Z,r) - \Psi(Z,r).
\end{equation}
Note that the equality in \eqref{dyn_diff} is to be understood in the strong sense. When $s \leq t$, a simple norm bound shows then that
\begin{equation} \label{dyn_diff_est1}
\| \tau_{t,s}^{\Lambda}(A) - \alpha_{t,s}^{\Lambda}(A)  \| \leq 
\sum_{Z \subset \Lambda} \int_s^t \| [ \alpha_{t,r}^{\Lambda} (A), \Theta(Z, r) ] \| \, dr.
\end{equation}
An application of Corollary~\ref{app:cor:int_bd} with $R=0$, then gives
\begin{equation}
\sum_{\stackrel{Z \subset \Lambda:}{d(Z,X) = 0}} \int_s^t \| [ \alpha_{t, r}^{\Lambda} (A), \Theta(Z, r)  ] \|  \, dr 
 \leq \frac{2 \| A \|}{C_F} I_{t,s}( \Theta) \sum_{x \in X} \sum_{y \in \Lambda} F(d(x,y)) 
\end{equation}
and moreover, 
\begin{equation} \label{Cor7.2App}
\sum_{\stackrel{Z \subset \Lambda:}{d(Z,X)>0}} \int_s^t \| [ \alpha_{t, r}^{\Lambda} (A), \Theta(Z, r)  ] \| \, dr 
\leq \frac{2 \| A \|}{C_F} \left( e^{2 I_{t,s}(\Psi)} - 1 \right) I_{t,s}( \Theta) \sum_{x \in X} \sum_{y \in \Lambda \setminus X} F(d(x,y)) .
\end{equation}
The bound (\ref{cont_bd}) follows from observing that one could instead have estimated the difference $\alpha_{t,s}^{\Lambda}(A) - \tau_{t,s}^{\Lambda}(A)$, which corresponds to exchanging $\tau_{t, s}^{\Lambda}$ and $\alpha_{t,s}^{\Lambda}$ in (\ref{dyn_diff})-(\ref{Cor7.2App}). 

To extend the result to the situation with non-trivial $H_z$, we argue with the interaction picture dynamics as was done in the proof of Theorem~\ref{thm:lrb2}. Using (\ref{fulldyn}), it is clear that
\begin{equation}
\| \tau_{t,s}^{\Lambda}(A) - \alpha_{t,s}^{\Lambda}(A) \| = \| \tilde{\tau}_{t,s}^{\Lambda}( \tau_t^{(0)}(A)) - \tilde{\alpha}_{t,s}^{\Lambda}( \tau_t^{(0)}(A)) \| 
\end{equation}
and since $\tau_t^{(0)}(A)\in \cA_X$, the proof in the general situation reduces to the previous case.

The proof of (ii) is nearly identical. In fact, again in the case that $H_z=0$ for all $z \in \Gamma$, the analogue of \eqref{dyn_diff_est1} is
\begin{equation} \label{dyn_diff_est2}
\| \tau_{t,s}^{\Lambda}(A) - \tau_{t,s}^{\Lambda_0}(A)  \| \leq \sum_{Z \in S_\Lambda(\Lambda_0)} \int_s^t \| [ \tau_{t, r}^{\Lambda_0} (A), \Phi(Z, r) ] \|  \, dr, 
\end{equation}
where we have taken advantage of cancellations and used \eqref{surface_set}. The bound in (\ref{sub_diff_bd}) now follows from similar estimates to those used in Proposition~\ref{app:prop:int_bd_dist} and Corollary~\ref{app:cor:int_bd}. 
\end{proof}

%%%%
%
% Theorem on Existence
%  
%
%%%%%%

A first application of Theorem~\ref{thm:continuity} is a proof that given any collection of self-adjoint on-sites $\{ H_z \}_{z \in \Gamma}$ 
and an interaction $\Phi \in \mathcal{B}_F(I)$ there is a corresponding infinite-volume dynamics on $\mathcal{A}_{\Gamma}$.
We obtain this infinite-volume dynamics as a limit of finite-volume dynamics. With this in mind, we will say that
a sequence $\{ \Lambda_n \}_{n \geq 1} \subset \mathcal{P}_0( \Gamma)$ is increasing and exhaustive if
$\Lambda_n \subset \Lambda_{n+1}$ for all $n \geq 1$ and given any $X \in \mathcal{P}_0( \Gamma)$, there is an $N \geq 1$ 
for which $X \subset \Lambda_N$. 
 
\begin{thm}\label{thm:existbd}
Under the assumptions of Theorem~\ref{thm:continuity}, for each $A \in \mathcal{A}_{\Gamma}^{\rm loc}$ and any $s,t \in I$,
\begin{equation}\label{eq:claim} 
\tau_{t,s}(A) = \lim_{\Lambda \uparrow \Gamma} \, \tau_{t,s}^{\Lambda} (A)
\end{equation} 
exists in norm and the convergence is uniform for $s$ and $t$ in compact subsets of $I$. The limit may be taken 
along any increasing, exhaustive sequence of finite subsets of $\Gamma$, and is 
independent of the sequence. Moreover, $\tau_{t,s}$ is a co-cycle of automorphisms of $\cA_\Gamma$. If there exists $M\geq 0$ such that $\Vert H_z\Vert \leq M$ for all $z\in \Gamma$, then $(t,s)\mapsto \tau_{t,s}(A)$ is norm continuous for all $A\in\cA_\Gamma$. 
\end{thm}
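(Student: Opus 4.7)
My plan is to obtain the thermodynamic limit as a direct consequence of Theorem~\ref{thm:continuity}(ii). Fix $A \in \mathcal{A}_X$ with $X \in \mathcal{P}_0(\Gamma)$. For any $\Lambda_0 \subset \Lambda$ with $X \subset \Lambda_0$, the bound \eqref{sub_diff_bd} reads
\begin{equation*}
\| \tau_{t,s}^{\Lambda}(A) - \tau_{t,s}^{\Lambda_0}(A) \| \leq \frac{2\|A\|}{C_F}\, e^{2 I_{t,s}(\Phi)}\, I_{t,s}(\Phi) \sum_{x \in X}\sum_{y\in\Lambda\setminus\Lambda_0} F(d(x,y)).
\end{equation*}
Uniform integrability of $F$ implies that the tail sums $\sum_{y\in\Gamma\setminus\Lambda_0} F(d(x,y))$ vanish as $\Lambda_0\uparrow\Gamma$ along any increasing exhaustive sequence, uniformly in $\Lambda\supset\Lambda_0$. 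Since $\|\Phi\|_F$ is locally integrable, $(s,t)\mapsto I_{t,s}(\Phi)$ is bounded on every compact subset of $I\times I$. Therefore $\{\tau_{t,s}^{\Lambda_n}(A)\}$ is norm Cauchy uniformly for $(s,t)$ in compact subsets of $I\times I$, which simultaneously gives existence of the limit and the claimed uniform convergence. Independence of the exhausting sequence follows by interleaving: given two such sequences, their interleaving is again exhausting, so its limit exists and must agree with both.

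Next I would promote $\tau_{t,s}$ to a co-cycle of automorphisms on $\mathcal{A}_\Gamma$. For each $\Lambda$ the map $\tau_{t,s}^\Lambda$ is a $*$-automorphism of $\mathcal{A}_\Lambda$, hence an isometric $*$-homomorphism when viewed as a map on $\mathcal{A}_\Gamma^{\rm loc}$ (restricted to $A$ with $\mathrm{supp}(A)\subset\Lambda$). Linearity, multiplicativity, the $*$-operation, and the isometry property all pass to the norm limit, so $\tau_{t,s}$ is an isometric $*$-homomorphism on $\mathcal{A}_\Gamma^{\rm loc}$. By density it extends uniquely to an isometric $*$-homomorphism on $\mathcal{A}_\Gamma$. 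To pass the finite-volume co-cycle identity $\tau_{t,s}^\Lambda\circ\tau_{s,r}^\Lambda = \tau_{t,r}^\Lambda$ to the limit on a local $A$, I would split
\begin{equation*}
\tau_{t,s}^\Lambda(\tau_{s,r}^\Lambda(A)) - \tau_{t,s}(\tau_{s,r}(A)) = \tau_{t,s}^\Lambda\!\left(\tau_{s,r}^\Lambda(A)-\tau_{s,r}(A)\right) + \left(\tau_{t,s}^\Lambda - \tau_{t,s}\right)\!(\tau_{s,r}(A)),
\end{equation*}
controlling the first summand by the isometry of $\tau_{t,s}^\Lambda$ and the second by the extension of the convergence from $\mathcal{A}_\Gamma^{\rm loc}$ to $\mathcal{A}_\Gamma$ (noting that $\tau_{s,r}(A)$ is generally quasi-local, not local, which is why the extension step is needed before the co-cycle property can be verified). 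The relation then extends to $\mathcal{A}_\Gamma$ by density. Specializing to $r=t$ yields $\tau_{t,s}\circ\tau_{s,t}=\mathrm{id}$, which upgrades the isometric $*$-homomorphism to an automorphism.

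For the final continuity statement, assume $\|H_z\|\leq M$ for all $z$. I would absorb the on-site terms into the interaction by setting
\begin{equation*}
\tilde\Phi(\{z\},t) = H_z + \Phi(\{z\},t), \qquad \tilde\Phi(Z,t) = \Phi(Z,t) \text{ for } |Z|\geq 2,
\end{equation*}
which lies in $\mathcal{B}_F(I)$ since $\|\tilde\Phi\|_F(t) \leq \|\Phi\|_F(t) + M/F(0)$. The corresponding $H_\Lambda(t)$ is now bounded and strongly continuous in $t$, so by Proposition~\ref{prop:sols}(i) and (v) the propagator $U_\Lambda(t,s)$ is jointly norm continuous, and hence so is $\tau_{t,s}^\Lambda(A)$ for every $A$. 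Combining this with the uniform convergence on compacts established above shows that $(s,t)\mapsto\tau_{t,s}(A)$ is norm continuous for every $A\in\mathcal{A}_\Gamma^{\rm loc}$. A standard $3\varepsilon$ argument, using the isometry of $\tau_{t,s}$ and the density of $\mathcal{A}_\Gamma^{\rm loc}$ in $\mathcal{A}_\Gamma$, extends norm continuity to all $A\in\mathcal{A}_\Gamma$.

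The one place I expect to need care is the passage from the algebraic co-cycle identity in finite volume to the same identity in the infinite-volume limit, because applying $\tau_{t,s}$ to the already-limiting observable $\tau_{s,r}(A)$ requires that the convergence of $\tau_{t,s}^\Lambda$ be extended from the local algebra to the full quasi-local algebra before the composition can be analyzed. Once the isometric extension to $\mathcal{A}_\Gamma$ is in hand, this is routine, but the bookkeeping on the order of limits is the step where the argument is least automatic.
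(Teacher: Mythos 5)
Your proof is correct and uses the same key step as the paper: the Cauchy estimate uniform on compacts derived from Theorem~\ref{thm:continuity}(ii) together with the uniform integrability of $F$. The paper delegates the remaining facts (co-cycle property, automorphism structure, continuity for bounded on-site terms) to a reference, whereas you supply explicit arguments for them — including the clean observation that bounded $H_z$ can be absorbed into the interaction without leaving $\mathcal{B}_F(I)$, and the correct remark that convergence must be extended from $\mathcal{A}_\Gamma^{\rm loc}$ to $\mathcal{A}_\Gamma$ before the co-cycle identity can be passed to the limit.
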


For unbounded $H_z$, the continuity of $\tau_{t,s}$ is limited by the continuity of the on-site dynamics $\tau^{(0)}_t$. 
In a suitable representation of  $\cA_\Gamma$ on a Hilbert space, one can retrieve weak continuity of the dynamics. 
See \cite{nachtergaele:2010} for an example. Continuity properties of $\tau_{t,s}$ and other families of maps will be further 
discussed in Section \ref{sec:cont_l_a}.

\begin{proof} 
Let $X\in \mathcal{P}_0(\Gamma)$, $A\in \cA_X$, and consider $\{ \Lambda_n \}_{n \geq 0}$ any increasing, 
exhaustive sequence of finite subsets of $\Gamma$. Since the sequence is exhaustive, there exists $N \geq 1$ for which $X \subset \Lambda_n$ for all $n \geq N$. 
For any integers $n, m$ with $N \leq m \leq n$, Theorem~\ref{thm:continuity} (ii) implies 
\begin{equation}
\| \tau_{t,s}^{\Lambda_n}(A) - \tau_{t,s}^{\Lambda_m}(A) \| \leq \frac{2 \|A \|}{C_F} e^{2I_{t,s}(\Phi)} I_{t,s}(\Phi) \sum_{x \in X} \sum_{y \in \Lambda_n \setminus \Lambda_m}F(d(x,y)) ,
\end{equation}
where, again, $I_{t,s}(\Phi)$ is as defined in \eq{ItsPhi}. By \eqref{F:int}, the RHS converges to 0 as $n, \, m\to \infty$. As such, for any $[a,b] \subset I$, the sequence of observables $\{ \tau_{t,s}^{\Lambda_n}(A) \}_{n \geq 0}$ is Cauchy in norm, uniformly for $s, t \in [a, b]$. 

The proof of the remaining facts in the statement of this theorem is standard and proceeds in the same way as is done, e.g., in \cite{simon:1993} for quantum spin models with time-independent interactions.
\end{proof}

Combining Theorem~\ref{thm:continuity} and Theorem~\ref{thm:existbd}, we obtain the
following useful estimates for the infinite volume dynamics.

\begin{cor}\label{cor:continuity}
Under the assumptions of Theorem~\ref{thm:continuity},
\begin{enumerate}
	\item[(i)] For any $X, \, Y\in \cP_0(\Gamma)$ such that $X\cap Y = \emptyset$, the bound
	\[
	\|[\tau_{t,s}(A), B]\| \leq \frac{2\|A\|\|B\|}{C_F}(e^{2I_{t,s}(\Phi)}-1)D(X,Y)
	\]
	holds for all  $A\in \cA_X$, $B\in \cA_Y$, and $t, \, s\in I$.
	\item[(ii)] For any $X \in \mathcal{P}_0( \Gamma)$, the bound
	\begin{equation}\label{cont_bd_iv} 
	\| \tau_{t,s}(A) - \alpha_{t,s}(A) \| \leq \frac{2 \| A \|}{C_F} e^{2\min(I_{t,s}(\Phi),I_{t,s}(\Psi))} I_{t,s}( \Phi - \Psi) \sum_{x \in X} \sum_{y \in \Gamma} F(d(x,y)) 
	\end{equation} 
	holds for all $A \in \mathcal{A}_X$ and $s,t \in I$.
	\item[(iii)] For any $X, \Lambda \in \mathcal{P}_0( \Gamma)$ with $X \subset \Lambda$, the
	bound
	\begin{equation}\label{sub_diff_bd_iv} 
	\| \tau_{t,s}(A) - \tau_{t,s}^{\Lambda}(A) \| \leq \frac{2\| A \|}{C_F} e^{2I_{t,s}(\Phi)} I_{t,s}(\Phi) \sum_{x \in X} \sum_{y \in \Gamma \setminus \Lambda} F(d(x,y)) 
	\end{equation} 
	holds for all $A \in \mathcal{A}_X$ and $s,t \in I$. 
\end{enumerate}
\end{cor}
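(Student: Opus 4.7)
The plan is to derive all three estimates by passing to the thermodynamic limit in the finite-volume bounds already proven in Theorems~\ref{thm:lrb2} and~\ref{thm:continuity}, using Theorem~\ref{thm:existbd} to control the convergence of the dynamics on local observables. Throughout, I fix an increasing, exhaustive sequence $\{\Lambda_n\}_{n\geq 1}\subset\cP_0(\Gamma)$ and use that $\tau_{t,s}^{\Lambda_n}(A)\to \tau_{t,s}(A)$ and $\alpha_{t,s}^{\Lambda_n}(A)\to \alpha_{t,s}(A)$ in norm for every $A\in\cA_\Gamma^{\rm loc}$, uniformly for $s,t$ in compact subsets of $I$.

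For (i), let $A\in\cA_X$ and $B\in\cA_Y$ with $X\cap Y=\emptyset$. Choose $N$ so that $X\cup Y\subset \Lambda_n$ for all $n\geq N$. Theorem~\ref{thm:lrb2} applied to each such $\Lambda_n$ yields
\[
\| [\tau_{t,s}^{\Lambda_n}(A),B]\| \leq \frac{2\|A\|\|B\|}{C_F}\bigl(e^{2I_{t,s}(\Phi)}-1\bigr)D(X,Y),
\]
with the right-hand side independent of $n$ (note that $\partial_\Phi X$ and $\partial_\Phi Y$ are defined using the full interaction on $\Gamma$). Norm continuity of the commutator combined with $\tau_{t,s}^{\Lambda_n}(A)\to\tau_{t,s}(A)$ lets me pass $n\to\infty$ in the left-hand side, giving the claim.

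For (ii), I apply Theorem~\ref{thm:continuity}(i) to each $\Lambda_n$ containing $X$:
\[
\|\tau_{t,s}^{\Lambda_n}(A)-\alpha_{t,s}^{\Lambda_n}(A)\| \leq \frac{2\|A\|}{C_F}e^{2\min(I_{t,s}(\Phi),I_{t,s}(\Psi))}I_{t,s}(\Phi-\Psi)\sum_{x\in X}\sum_{y\in\Lambda_n} F(d(x,y)).
\]
The left-hand side converges to $\|\tau_{t,s}(A)-\alpha_{t,s}(A)\|$, and the double sum on the right increases to $\sum_{x\in X}\sum_{y\in\Gamma}F(d(x,y))$ by monotone convergence, using that $F\geq 0$ and $\sum_{y\in\Gamma}F(d(x,y))\leq \|F\|<\infty$. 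Part (iii) is obtained in exactly the same way: fix $\Lambda$ with $X\subset\Lambda$, apply Theorem~\ref{thm:continuity}(ii) with $\Lambda_0=\Lambda$ and $\Lambda$ replaced by $\Lambda_n\supset\Lambda$, and take $n\to\infty$ so that $\sum_{y\in\Lambda_n\setminus\Lambda}F(d(x,y))\nearrow \sum_{y\in\Gamma\setminus\Lambda}F(d(x,y))$.

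There is no serious obstacle here: the finite-volume bounds are already sharp enough to pass to the limit, and the only point requiring care is that the right-hand sides of Theorem~\ref{thm:continuity} depend on $\Lambda$ only through absolutely convergent sums involving $F$, so monotone/dominated convergence applies immediately. The conclusions extend from $A\in\cA_\Gamma^{\rm loc}$ to all $A\in\cA_X$ automatically since $\cA_X$ is already local; no further density argument is needed.
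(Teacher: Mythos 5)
Your proof is correct and takes essentially the same approach the paper intends: the paper simply states that the corollary is obtained "Combining Theorem~\ref{thm:continuity} and Theorem~\ref{thm:existbd}" without writing out the limit passage, and you have fleshed out exactly that combination — applying the $\Lambda$-uniform finite-volume bounds from Theorem~\ref{thm:lrb2} and Theorem~\ref{thm:continuity}, invoking the norm convergence of $\tau_{t,s}^{\Lambda_n}$ and $\alpha_{t,s}^{\Lambda_n}$ from Theorem~\ref{thm:existbd}, and using monotone convergence for the right-hand sides. Your observation that $D(X,Y)$ and $\partial_\Phi X$ are defined relative to the full $\Gamma$ (hence $\Lambda$-independent) is exactly the point that makes (i) go through cleanly.
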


We now prove a convergence result for the dynamics associated to interactions in $\mathcal{B}_F(I)$.
First, we introduce some notation and terminology associated with extensions and restrictions
of interactions, and then state the result. 

In certain applications, e.g. when considering boundary conditions, rather than considering a single interaction $\Phi: \cP_0(\Gamma) \to \cA_{\Gamma}^{\rm loc}$, we may want to consider a family of strongly continuous interactions $\{\Phi_{\Lambda}: \mathcal{P}_0( \Lambda) \times I \to \mathcal{A}_{\Lambda}^{\rm loc} \, | \, \Lambda \in \cP_0(\Gamma)\}$. In this situation, a single mapping $\Phi_{\Lambda}$ can be extended to an interaction on all of $\Gamma$ by declaring that $\Phi_{\Lambda}(Z,t) = 0$ for any $Z \in \mathcal{P}_0( \Gamma)$ with $Z \cap (\Gamma \setminus \Lambda) \neq \emptyset$. We call this new mapping the extension of $\Phi_{\Lambda}$ to $\Gamma$, and continue to denote it by $\Phi_{\Lambda}$. Similarly, given $\Phi_{\Lambda}$ and $\Lambda_0 \subset \Lambda$, we define
$\Phi_{\Lambda} \restriction_{\Lambda_0}: \mathcal{P}_0( \Lambda_0) \times I \to \mathcal{A}_{\Lambda_0}^{\rm loc}$ by 
\begin{equation}
\Phi_{\Lambda} \restriction_{\Lambda_0}(X,t) = \Phi_{\Lambda}(X,t) \quad \mbox{for all } X \in \mathcal{P}_0( \Lambda_0) \mbox{ and } t \in I
\end{equation}
We call the mapping $\Phi_{\Lambda} \restriction_{\Lambda_0}$ the restriction of $\Phi_{\Lambda}$ to $\Lambda_0$.
If the dynamics associated to $\Phi_{\Lambda} \restriction_{\Lambda_0}$ exists, we will denote it by $\tau_{t,s}^{\Phi_{\Lambda}, \Lambda_0}$.
Of course, if $\Lambda_0$ is finite, such a dynamics always exists and it is generated by the time-dependent Hamiltonian
\begin{equation}
H_{\Lambda_0}^{\Phi_{\Lambda}}(t) = \sum_{X \subset \Lambda_0} \Phi_{\Lambda}(X,t) \, .
\end{equation}

We now introduce the notion of local convergence in $F$-norm.

\begin{defn} \label{def:lcfnorm} Let $(\Gamma, d)$ and $\mathcal{A}_{\Gamma}$ be a quantum lattice system, and $I \subset \mathbb{R}$ be an interval. We say that a sequence of interactions $\{ \Phi_n \}_{n \geq 1}$ \emph{converges locally in $F$-norm} to $\Phi$ if there exists an $F$-function, $F$, such that:
\begin{enumerate}
	\item[(i)]$\Phi_n \in \mathcal{B}_F(I)$ for all $n \geq 1$,
	\item[(ii)] $\Phi \in \mathcal{B}_F(I)$,
	\item[(iii)] For any $\Lambda \in \mathcal{P}_0( \Gamma)$ and each $[a,b] \subset I$,
	\begin{equation}
	\lim_{n \to \infty} \int_a^b \| ( \Phi_n - \Phi) \restriction_{\Lambda} \|_F(t) \, dt  = 0 \, . 
	\end{equation}
\end{enumerate}
Moreover, if $F$ is an $F$-function for which (i)-(iii) are satisfied, we say that $\Phi_n$ \emph{converges locally in $F$-norm to $\Phi$ with respect to $F$}.
\end{defn}

Recall that a strongly continuous interaction $\Phi \in \mathcal{B}_F(I)$ with $\| \Phi \|_F : I \to [0, \infty)$ given by
\begin{equation}
\| \Phi \|_F(t) = \sup_{x,y \in \Gamma} \frac{1}{F(d(x,y))} \sum_{\stackrel{X \in \mathcal{P}_0( \Gamma):}{x,y \in X}} \| \Phi(X, t) \| 
\end{equation}
is locally bounded. In this situation, Theorem~\ref{thm:existbd} demonstrates that there 
exists a co-cycle of automorphisms of $\mathcal{A}_{\Gamma}$, which we denote by $\tau_{t,s}^{\Phi}$ and refer to as the dynamics associated to $\Phi$. 
Note that we have taken the self-adjoint on-sites terms to be identically zero, i.e. $H_z=0$ for all $z \in \Gamma$; see comments following the proof of Theorem~\ref{thm:exist_iv_dyn}.

\begin{thm}\label{thm:exist_iv_dyn}
Let $\{ \Phi_n \}_{n \geq 1}$ be a sequence of time-dependent interactions on $\Gamma$ with
$\Phi_n$ converging locally in the $F$-norm to $\Phi$ with respect to $F$. 
\begin{enumerate}
\item[(i)] If for every $[a,b] \subset I$,
\be \label{sup_bd}
\sup_{n \geq 1} \int_a^b \| \Phi_n \|_F(t) \, dt <  \infty \, ,
\ee
then, for any $X \in \mathcal{P}_0( \Gamma)$, $A\in \cA_X$, and $s,t\in I, s\leq t$, we have convergence of the dynamics:
\be
\lim_{n \to \infty}\Vert \tau_{t,s}^{\Phi_n}(A) -  \tau^{\Phi}_{t,s}(A)\Vert =0.
\ee
Moreover, the convergence is uniform for $s,t$ in compact intervals, and the dynamics is continuous:
\be\label{dyn_continuous}
\Vert \tau^{\Phi}_{t,s}(A) - A\Vert \leq 2|X| \Vert A\Vert \Vert F\Vert \int_s^t \Vert \Phi\Vert_F(r) dr.
\ee
\item[(ii)]
If in addition, for all $\Lambda \in \mathcal{P}_0( \Gamma)$, and $r\in I$, we also have pointwise local convergence: 
\be
\lim_{n \to \infty} \| ( \Phi_n - \Phi) \restriction_{\Lambda} \|_F(r)=0,
\ee
and uniform boundedness of the interactions on compact intervals $I_0\subset I$:
\be
 \sup_n \sup_{t\in I_0} \Vert \Phi_n\Vert_F(t)<\infty,
\ee
then, the generators converge uniformly for $t$ in compacts: for all compact $I_0\subset I$ and $A\in \cA^{\rm loc}_\Gamma$, we have
\be\label{convergence_derivations}
 \lim_n \sup_{t\in I_0} \Vert\delta_t^{(n)}(A) -\delta_t(A)\Vert =0,
\ee
where
$$
\delta_r^{(n)}(A) = \sum_{Z\in  \mathcal{P}_0( \Gamma)} [\Phi_n(Z,r),A],  \quad  \delta_r(A)
=\sum_{Z\in  \mathcal{P}_0( \Gamma)} [\Phi(Z,r),A].
$$
\end{enumerate}
\end{thm}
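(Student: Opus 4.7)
The plan is to deduce both parts by combining the finite-volume comparison estimates of Theorem~\ref{thm:continuity} and Corollary~\ref{cor:continuity} with a finite-volume approximation argument that exploits the $F$-function decay. For part (i), I fix $A\in\cA_X$ and times $s\leq t$ in $I$. For any finite $\Lambda\supset X$, I decompose
\[
\tau^{\Phi_n}_{t,s}(A) - \tau^{\Phi}_{t,s}(A) = \bigl(\tau^{\Phi_n}_{t,s}(A) - \tau^{\Phi_n,\Lambda}_{t,s}(A)\bigr) + \bigl(\tau^{\Phi_n,\Lambda}_{t,s}(A) - \tau^{\Phi,\Lambda}_{t,s}(A)\bigr) + \bigl(\tau^{\Phi,\Lambda}_{t,s}(A) - \tau^{\Phi}_{t,s}(A)\bigr).
\]
Corollary~\ref{cor:continuity}(iii) bounds the first and third terms by a constant multiple of $\sum_{x\in X}\sum_{y\in \Gamma\setminus\Lambda}F(d(x,y))$, with the prefactor containing the Lieb-Robinson exponential $e^{2I_{t,s}(\Phi_n)}$, respectively $e^{2I_{t,s}(\Phi)}$, which is uniformly bounded in $n$ by \eqref{sup_bd}. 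Since $X$ is finite and $F$ is uniformly summable, given $\epsilon>0$ I can choose $\Lambda$ large (independently of $n$ and of $s,t$ in a compact subinterval $I_0$) to make these two terms less than $\epsilon/3$. For the middle term, I apply Theorem~\ref{thm:continuity}(i) to the two restricted interactions $\Phi_n\restriction_\Lambda$ and $\Phi\restriction_\Lambda$ on $\cA_\Lambda$, obtaining a bound proportional to $I_{t,s}((\Phi_n-\Phi)\restriction_\Lambda)$, which tends to zero by Definition~\ref{def:lcfnorm}(iii). Uniformity of the whole argument in $s,t\in I_0$ follows from the uniform-in-$n$ local $L^1$ control in \eqref{sup_bd}.

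To obtain the continuity estimate \eqref{dyn_continuous}, I first establish its finite-volume analogue: for $A\in\cA_X$ and $\Lambda\supset X$, strong differentiation yields
\[
\tau^{\Phi,\Lambda}_{t,s}(A) - A = i\int_s^t \tau^{\Phi,\Lambda}_{r,s}\bigl([H_\Lambda^{(\Phi)}(r),A]\bigr)\,dr,
\]
and only those $Z\subset\Lambda$ with $Z\cap X\neq\emptyset$ contribute to the commutator. Bounding $\|[\Phi(Z,r),A]\|\leq 2\|A\|\|\Phi(Z,r)\|$ and resumming via the elementary estimate $\sum_{Z\ni x}\|\Phi(Z,r)\|\leq \|\Phi\|_F(r)F(0)\leq \|\Phi\|_F(r)\|F\|$ gives $\|\tau^{\Phi,\Lambda}_{t,s}(A)-A\|\leq 2|X|\|A\|\|F\|\int_s^t\|\Phi\|_F(r)\,dr$ uniformly in $\Lambda$. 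Passing to the thermodynamic limit with Theorem~\ref{thm:existbd} yields \eqref{dyn_continuous}.

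For part (ii), I write for $A\in\cA_X$
\[
\delta^{(n)}_t(A) - \delta_t(A) = \sum_{Z:\, Z\cap X\neq\emptyset} \bigl[(\Phi_n-\Phi)(Z,t),A\bigr]
\]
and split the sum according to whether $Z\subset\Lambda$ or $Z\cap(\Gamma\setminus\Lambda)\neq\emptyset$ for some chosen $\Lambda\supset X$. Using \eqref{FnormBound}, the tail is bounded by $4\|A\|\bigl(\sup_{n,r\in I_0}\|\Phi_n\|_F(r)\bigr)\sum_{x\in X}\sum_{y\in\Gamma\setminus\Lambda}F(d(x,y))$, where the analogous bound for $\Phi$ is recovered by passing $n\to\infty$ in $\|\Phi_n\restriction_{\Lambda_0}\|_F(t)$ over finite $\Lambda_0\uparrow\Gamma$; the uniform-boundedness hypothesis then lets me choose $\Lambda$ large enough to make this tail uniformly small in $n$ and $t\in I_0$. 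The remaining local part is bounded by $2|X|\|A\|\|F\|\,\|(\Phi_n-\Phi)\restriction_\Lambda\|_F(t)$, which for each fixed $t$ tends to zero as $n\to\infty$ by the pointwise local convergence hypothesis.

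The main technical obstacle is upgrading this pointwise-in-$t$ convergence of $\|(\Phi_n-\Phi)\restriction_\Lambda\|_F(t)$ to uniform convergence on the compact interval $I_0$, which is what the uniform convergence \eqref{convergence_derivations} of the generators requires. For fixed finite $\Lambda$, the quantity $\|(\Phi_n-\Phi)\restriction_\Lambda\|_F(t)$ is a supremum over the finitely many pairs $(x,y)\in\Lambda^2$ of finite sums of operator norms $\|(\Phi_n-\Phi)(Z,t)\|$, so the question reduces to uniform convergence of these finitely many nonnegative functions. The closure of the argument therefore hinges on exploiting the uniform bound together with an equicontinuity or Dini-type consideration, or equivalently on reading the pointwise local convergence hypothesis as uniform convergence on compact subintervals.
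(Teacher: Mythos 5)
Your proof matches the paper's argument at every essential step: part (i) uses the same three-term decomposition with Corollary~\ref{cor:continuity}(iii) on the outer terms and Theorem~\ref{thm:continuity}(i) on the middle term, followed by the $\epsilon/3$ argument using \eqref{sup_bd}; and part (ii) uses the same split of $\delta_t^{(n)}(A)-\delta_t(A)$ into $Z\subset\Lambda$ and $Z\cap(\Gamma\setminus\Lambda)\neq\emptyset$ and the same bound in terms of $\|(\Phi_n-\Phi)\restriction_\Lambda\|_F(t)$ and the $F$-function tail. For \eqref{dyn_continuous} you take a slightly simpler route than the paper, differentiating $\tau^{\Phi,\Lambda}_{t,s}$ directly and passing $\Lambda\uparrow\Gamma$, whereas the paper works with $\tau^{\Phi_n,\Lambda}_{t,s}$ and takes a double limit in $n$ and $\Lambda$; both are valid, and yours is arguably cleaner.

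The uniformity concern you raise at the end of part (ii) is a genuine one, and you are right not to paper over it: the paper's own proof ends by asserting ``\eqref{convergence_derivation} shows that the convergence is uniform for $t$ in a compact interval,'' but the estimate only reduces matters to $\sup_{t\in I_0}\|(\Phi_n-\Phi)\restriction_\Lambda\|_F(t)\to 0$, and the three stated hypotheses (local $L^1$ convergence from Definition~\ref{def:lcfnorm}, pointwise-in-$t$ local convergence, and uniform boundedness) do not imply this in general; for instance a shrinking moving bump $f_n=\mathds{1}_{[1/n-1/n^2,\,1/n+1/n^2]}$ converges to $0$ pointwise, in $L^1$, and is uniformly bounded, yet $\sup_t f_n\equiv 1$. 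So the argument as written in the paper, and reproduced by you, yields pointwise-in-$t$ convergence of $\delta_t^{(n)}(A)$ to $\delta_t(A)$, but the uniform version \eqref{convergence_derivations} needs either the pointwise local convergence hypothesis to be strengthened to uniform-on-compacts local convergence, or an additional argument (e.g.\ equicontinuity in $t$ of the family $\|(\Phi_n-\Phi)\restriction_\Lambda\|_F(\cdot)$, which is not supplied). Your instinct to read the hypothesis as a uniform one is the most plausible repair.
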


\begin{proof}
(i). Let $X \in \mathcal{P}_0( \Gamma)$ and $[a,b] \subset I$ be fixed.
For any $\Lambda \in \mathcal{P}_0(\Gamma)$ with $X \subset \Lambda$, define
the restricted interactions $\Phi_n \restriction_{\Lambda}$ and $\Phi \restriction_{\Lambda}$.
By Theorem~\ref{thm:existbd}, a dynamics may be associated to each of these interactions and the estimate
\bea \label{3_terms}
\Vert \tau_{t,s}^{\Phi_n}(A) -  \tau^{\Phi}_{t,s}(A)\Vert & \leq & \Vert \tau_{t,s}^{\Phi_n}(A) -  \tau^{\Phi_n, \Lambda}_{t,s}(A)\Vert \nonumber \\
& \mbox{ } & \quad +  \Vert \tau_{t,s}^{\Phi_n, \Lambda}(A) -  \tau^{\Phi, \Lambda}_{t,s}(A)\Vert  + \Vert \tau_{t,s}^{\Phi, \Lambda}(A) -  \tau^{\Phi}_{t,s}(A)\Vert 
\eea
holds for all $A \in \mathcal{A}_X$ and $s,t \in [a,b]$. 

An application of Corollary~\ref{cor:continuity} (iii), shows that
\begin{equation} \label{first}
\Vert \tau_{t,s}^{\Phi_n}(A) -  \tau^{\Phi_n, \Lambda}_{t,s}(A)\Vert \leq \frac{2 \| A \|}{C_F} e^{2 I_{t,s}(\Phi_n)} I_{t,s}(\Phi_n) \sum_{x \in X} \sum_{y \in \Gamma \setminus \Lambda} F(d(x,y))
\end{equation}
and similarly 
\begin{equation} \label{last}
\Vert \tau_{t,s}^{\Phi}(A) -  \tau^{\Phi, \Lambda}_{t,s}(A)\Vert \leq \frac{2 \| A \|}{C_F} e^{2 I_{t,s}(\Phi)} I_{t,s}(\Phi) \sum_{x \in X} \sum_{y \in \Gamma \setminus \Lambda} F(d(x,y)).
\end{equation}
For the middle term above, we apply Theorem~\ref{thm:continuity} (i) to find that
\begin{equation} \label{middle}
 \Vert \tau_{t,s}^{\Phi_n, \Lambda}(A) -  \tau^{\Phi, \Lambda}_{t,s}(A)\Vert \leq \frac{2 \| A \| |X| \| F \| }{C_F} e^{2 \min(I_{t,s}(\Phi_n), I_{t,s}(\Phi))} I_{t,s}((\Phi_n - \Phi) \restriction_{\Lambda}).
\end{equation}

By assumption (\ref{sup_bd}), it is clear that $\sup_n I_{t,s}( \Phi_n)$ is finite for all $s, t \in [a,b]$. In this case, for any $\epsilon >0$,
the estimates in (\ref{first}) and (\ref{last}) can be made arbitrarily small for all $n$ sufficiently large (for example, less than $\epsilon /3$) with a sufficiently large, but finite choice of $\Lambda \subset \Gamma$. For any such choice of $\Lambda$, the bound (\ref{middle}) can be made equally small with large $n$ by using local convergence in $F$-norm. 

To prove the bound \eq{dyn_continuous}, we use the convergence established above, the existence of the thermodynamic limit (Theorem \ref{thm:existbd}), and
the differentiability of the finite-volume dynamics, as follows:
\beann
\Vert \tau^\Phi_{t,s} (A) - A\Vert &=& \lim_\Lambda \lim_n \Vert \tau^{\Phi_n,\Lambda}_{t,s} (A) - A\Vert\\
&\leq& \limsup_{\Lambda} \lim_n\sum_{Z\subset \Lambda}\Vert \int_s^t \tau^{\Phi_n,\Lambda}_{t,s} ([\Phi_n(Z,r),A]) dr\Vert\\
&\leq&  \limsup_{\Lambda} \limsup_n 2 |X| \Vert A\Vert \Vert F\Vert \int_s^t  \Vert \Phi_n\restriction_\Lambda\Vert_F(r) dr\\
&\leq&  2 |X| \Vert A\Vert \Vert F\Vert \left[ \limsup_\Lambda \int_s^t \Vert \Phi\restriction_\Lambda\Vert_F (r) dr
+ \limsup_{\Lambda} \limsup_n \int_s^t \Vert (\Phi - \Phi_n)\restriction_\Lambda\Vert_F (r) dr\right].
\eeann
The second term between the square brackets vanishes because the interactions converge locally in $F$-norm and the first term gives the desired estimate.

(ii) The interactions $\Phi_n(t)$ and $\Phi(t)$ have finite $F$-norm. Hence the corresponding derivations,
$\delta^{(n)}_t$ and $\delta_t$, are well-defined on $\cA^{\rm loc}_\Gamma$. For any $\Lambda\in \mathcal{P}_0( \Gamma)$ we then have
\be
\delta^{(n)}_t (A) -  \delta_t(A) = \sum_{Z\subset \Lambda} [\Phi_n(Z,t) - \Phi(Z,t), A] 
+ \sum_{Z, Z\cap\Gamma\setminus\Lambda\neq\emptyset} [\Phi_n(Z,t) - \Phi(Z,t), A] .
\ee
Therefore, applying \eq{RclosetoX} with $R=0$ to the first term and using a similar argument for the second term, we get
\be\label{convergence_derivation}
\Vert \delta^{(n)}_t (A) -  \delta_t(A) \Vert \leq 2|X| \Vert A\Vert \Vert F\Vert \| ( \Phi_n - \Phi) \restriction_{\Lambda} \|_F(t)
+ 2|X| \Vert A\Vert \Vert  \Phi_n - \Phi \Vert_F(t) \sum_{x\in X, y\in \Gamma\setminus\Lambda} F(x,y).
\ee
The first term on the RHS vanishes in the limit $n\to\infty$. Therefore
\be
\limsup_n \Vert \delta^{(n)}_t (A) -  \delta_t(A) \leq  2\Vert |X| \Vert A\Vert (\sup_n \Vert \Phi_n -\Phi\Vert_F(t))  \sum_{x\in X, y\in \Gamma\setminus\Lambda} F(x,y).
\ee
By taking $\Lambda \uparrow \Gamma$, the convergence of the generators now follows. The estimate \eq{convergence_derivation} shows that the convergence
is uniform for $t$ in a compact interval.
\end{proof}

The dynamics considered in the proof of Theorem~\ref{thm:exist_iv_dyn} above corresponds to the one whose existence is
established in the proof of Theorem~\ref{thm:existbd} in the special case that the on-sites $H_z =0$ for all $z \in \Gamma$.
By going to the interaction picture, as is done in the proof of Theorem~\ref{thm:continuity}, it is clear that the
convergence results \eq{dyn_continuous} and \eq{convergence_derivations} hold in the case of arbitrary self-adjoint on-site terms.

Theorem \ref{thm:exist_iv_dyn} establishes sufficient conditions for the convergence of the sequence of co-cycles $\tau_{t,s}^{\Phi_n}$ to the co-cycle 
$\tau_{t,s}^\Phi$, as well as the convergence of the generators $\delta^{(n)}_t$ to densely defined derivations $\delta_t$. These conditions are by no means 
necessary, but will serve our purposes well.

We may now ask whether the dynamics satisfies additional properties and, in particular, whether it is differentiable with derivative given by the derivation $\delta_t$. The following
theorem addresses this question.

\begin{thm}\label{thm:differentiable_dynamics}
For all $t$ in an interval $I$, let $\Phi(t)\in \cB_F$ be interactions such that $t\mapsto \Phi(Z,t) $ is norm-continuous for all $Z\in \cP_0(\Gamma)$, and
such that $\Vert \Phi(t)\Vert_F $ is bounded on all compact intervals $I_0\subset I$. Let $\tau_{t,s}$ denote the strongly continuous dynamics generated by $\Phi(t)$. Define, for all $t\in I$,
\be
\delta_t(A)
=\sum_{Z\in  \mathcal{P}_0( \Gamma)} [\Phi(Z,t),A], A \in \cA^{\rm loc}_\Gamma.
\ee
Then, $t\to \delta_t(A)$ is norm-continuous for all $A \in \cA^{\rm loc}_\Gamma$, and for all $s,t\in I, s<t$,
\be
\frac{d}{dt} \tau_{t,s}(A) = i  \tau_{t,s}(\delta_t (A)), A \in \cA^{\rm loc}_\Gamma.
\ee
\end{thm}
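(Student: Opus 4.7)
The plan is to prove both assertions in sequence. For the norm-continuity of $t\mapsto\delta_t(A)$ with $A\in\cA_X$, $X\in\cP_0(\Gamma)$, I would first observe that only terms with $Z\cap X\neq\emptyset$ contribute, and the estimate $\sum_{Z\cap X\neq\emptyset}\|\Phi(Z,t)\|\leq|X|\,F(0)\,\|\Phi\|_F(t)$ (derived from \eqref{Fnorm_sum_bd} with $y=x$) shows this sum is absolutely convergent and locally bounded. Then a standard $\varepsilon$-truncation argument works: on a compact neighborhood $I_0\ni t_0$, pick a finite $\Lambda\supset X$ such that the tail $\sum_{Z\cap X\neq\emptyset,\,Z\not\subset\Lambda}\|\Phi(Z,u)\|$ is uniformly small for $u\in I_0$ (possible by local boundedness of $\|\Phi\|_F$), and use pointwise norm-continuity of $\Phi(Z,\cdot)$ for the finitely many remaining $Z\subset\Lambda$.

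For the differentiability, fix $A\in\cA_X$ and $s\in I$. For each finite $\Lambda\supset X$, the local Hamiltonian $H_\Lambda(t)$ is norm-continuous in $t$, so the Dyson series of Section~\ref{sec:Dyson} yields a norm-differentiable propagator, and hence the integral identity
\[
\tau^\Lambda_{t,s}(A)=A+i\int_s^t\tau^\Lambda_{u,s}\bigl(\delta^\Lambda_u(A)\bigr)\,du,\qquad \delta^\Lambda_u(A)=\sum_{Z\subset\Lambda}[\Phi(Z,u),A],
\]
holds in norm. The strategy is to pass $\Lambda\uparrow\Gamma$ in this identity. The left-hand side converges to $\tau_{t,s}(A)$ in norm, uniformly for $t$ in compacts, by Theorem~\ref{thm:existbd}. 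For the integrand I split
\[
\tau^\Lambda_{u,s}\bigl(\delta^\Lambda_u(A)\bigr)-\tau_{u,s}\bigl(\delta_u(A)\bigr)=\tau^\Lambda_{u,s}\bigl(\delta^\Lambda_u(A)-\delta_u(A)\bigr)+\bigl(\tau^\Lambda_{u,s}-\tau_{u,s}\bigr)\bigl(\delta_u(A)\bigr).
\]
The first piece is controlled by the same tail-of-$F$-norm estimate as in Part~1, uniformly in $u$ on compacts. For the second, further approximate $\delta_u(A)$ by the strictly local $\delta^{\Lambda'}_u(A)\in\cA_{\Lambda'}$ for a fixed large $\Lambda'$; Corollary~\ref{cor:continuity}(iii) then bounds $\|(\tau^\Lambda_{u,s}-\tau_{u,s})(\delta^{\Lambda'}_u(A))\|$ by a quantity tending to $0$ as $\Lambda\uparrow\Gamma$, with uniformity in $u$ from the local boundedness of $\|\delta^{\Lambda'}_u(A)\|$.

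Combining these, both sides converge uniformly for $t$ in compacts, yielding $\tau_{t,s}(A)=A+i\int_s^t\tau_{u,s}(\delta_u(A))\,du$. The integrand is norm-continuous in $u$: Part~1 gives continuity of $u\mapsto\delta_u(A)$, and $u\mapsto\tau_{u,s}(B)$ is norm-continuous for any quasi-local $B$ via norm-continuity of each finite-volume dynamics combined with uniform convergence on local observables plus a density argument. The fundamental theorem of calculus then delivers differentiability and the identity $\frac{d}{dt}\tau_{t,s}(A)=i\tau_{t,s}(\delta_t(A))$.

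The main obstacle is establishing the uniform-in-$u$ convergence of the integrand, especially the second split-piece, because $\delta_u(A)$ is only quasi-local rather than strictly local, so Corollary~\ref{cor:continuity}(iii) does not apply directly. The double-truncation trick---first approximate by the auxiliary finite $\Lambda'$, then let $\Lambda\uparrow\Gamma$---resolves this, with all estimates made uniform by the hypothesized local boundedness of $\|\Phi\|_F$ on compact time-intervals.
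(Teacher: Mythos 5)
Your proposal follows essentially the same strategy as the paper: prove norm-continuity of $\delta_t(A)$ by splitting the sum over $Z$ into a finite piece over $Z\subset\Lambda$ plus a uniformly small tail, and then prove differentiability by passing $\Lambda\uparrow\Gamma$ in the finite-volume integral identity. The one place where you diverge is in estimating the second piece of the split. The paper writes
\[
\tau_{t,s}(\delta_t(A)) - \tau^{(n)}_{t,s}(\delta^{(n)}_t(A)) = \tau_{t,s}(\delta_t(A) - \delta^{(n)}_t(A)) + (\tau_{t,s} - \tau_{t,s}^{(n)})(\delta^{(n)}_t(A))
\]
and then appeals to Corollary~\ref{cor:continuity}(iii) applied to $\delta^{(n)}_t(A)$ as if it were supported in $X$, producing a bound proportional to $\sum_{x\in X}\sum_{y\in\Gamma\setminus\Lambda_n}F(d(x,y))$. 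Since $\delta^{(n)}_t(A)\in\cA_{\Lambda_n}$ rather than $\cA_X$, a literal application of that corollary would replace $X$ by $\Lambda_n$ and give a quantity that does not vanish; recovering the claimed $X$-dependence requires splitting over $Z$ and using the convolution property of $F$, which the paper leaves implicit. You sidestep this entirely: by keeping the infinite-volume $\delta_u(A)$ in the second piece and then inserting a second, fixed truncation $\delta^{\Lambda'}_u(A)$ before letting $\Lambda\uparrow\Gamma$, you reduce everything to a direct application of Corollary~\ref{cor:continuity}(iii) with compact support $\Lambda'$. Both routes are correct and comparable in length; yours is arguably cleaner because each step is a literal application of the quoted estimates, whereas the paper's as-written bound asks the reader to supply an extra convolution-type argument. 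Your treatment of the uniformity in $u$ and of the final fundamental-theorem-of-calculus step matches the paper's.
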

\begin{proof}
First, note that the conditions of Theorem \ref{thm:exist_iv_dyn}, parts (i) and (ii), are satisfied for the sequence $\Phi_n = \Phi\restriction_{\Lambda_n}$, associated to 
any sequence of increasing and absorbing finite volumes $\Lambda_n$. Therefore, we have
\be\label{dyn_continuous_M}
\Vert \tau_{t,s}(A) - A\Vert \leq 2|X| \Vert A\Vert \Vert F\Vert |t-s| \sup_{r\in I_0} \Vert \Phi(r)\Vert_F ,  \text{ for all } s,t\in I_0
\ee
and
\be
\delta_t(A) = \lim_n \sum_{Z\subset \Lambda_n} [\Phi(Z,t),A],  \quad  A\in \cA^{\rm loc}_\Gamma.
\ee
To prove the continuity of $\delta_t(A)$ as a function of $t$, note that for $s,t\in I, X\in  \cP_0(\Gamma), A\in \cA_X$, and $\Lambda\subset\Gamma$, we have
\beann
\Vert \delta_t(A) - \delta_s(A)\Vert &=&\left\Vert \sum_{\substack{Z\in  \cP_0(\Gamma)\\Z\cap X\neq\emptyset}} [\Phi(Z,t)- \Phi(Z,s), A]\right\Vert \\
&\leq& 2\Vert A\Vert \left[\sum_{\substack{Z\subset\Lambda\\ Z\cap X\neq\emptyset}}\Vert \Phi(Z,t)- \Phi(Z,s)\Vert
+  \!\!\!\!\!\!\!\! \sum_{\substack{Z\in  \cP_0(\Gamma)\\Z\cap(\Gamma\setminus \Lambda)\neq\emptyset, Z\cap X\neq\emptyset}}\!\!\!\!\!\!\!\!
\Vert \Phi(Z,t)\Vert + \Vert \Phi(Z,s)\Vert\right]\\
&\leq& 2\Vert A\Vert \left[\sum_{\substack{Z\subset\Lambda\\ Z\cap X\neq\emptyset}}\Vert \Phi(Z,t)- \Phi(Z,s)\Vert
+2  \sup_{r\in I_0} \Vert \Phi(r)\Vert_F\sum_{x\in X}  \sum_{y\in \Gamma \setminus \Lambda} F(d(x,y))\right].
\eeann
Continuity follows from this estimate by first choosing $\Lambda$ large enough to make the second term small and then, for that $\Lambda$, using the fact
that the first term is a finite sum of continuous functions that vanish for $t=s$.

To prove differentiability, we first show that the finite-volume derivatives converge to the desired limit, uniformly on compact intervals. Consider
\be
\tau_{t,s}(\delta_t(A)) - \tau^{(n)}_{t,s}(\delta^{(n)}_t(A)) =
\tau_{t,s}(\delta_t(A) - \delta^{(n)}_t(A)) + (\tau_{t,s} - \tau_{t,s}^{(n)} )(\delta^{(n)}_t(A)).
\ee
For the first term on the RHS we use the uniformity of the convergence of the derivation obtained in Theorem \ref{thm:exist_iv_dyn} (ii), 
and the boundedness of $\tau_{t,s}$, and we estimate the second term using Corollary~\ref{cor:continuity}~(iii). This produces
\be
\Vert(\tau_{t,s} - \tau_{t,s}^{(n)} )(\delta^{(n)}_t(A))\Vert
\leq 
\frac{2 \| \delta_t^{(n)} (A) \|}{C_F} e^{2 I_{t,s}(\Phi)} I_{t,s}(\Phi) \sum_{x \in X} \sum_{y \in \Gamma \setminus \Lambda_n} F(d(x,y)).
\ee
Since $\| \delta_t^{(n)} (A) \|$ is uniformly bounded in $t$ and $n$, the RHS vanishes as $n\to \infty$, uniformly for $s,t$ in any compact interval $I_0$.
This shows the uniform convergence of the derivative, i.e. for all $X\in\cP_0(\Gamma)$ and $A\in \cA_X$
\be
\lim_n \sup_{s,t\in I_0} \Vert \tau_{t,s}(\delta_t(A)) - \tau^{(n)}_{t,s}(\delta^{(n)}_t(A))\Vert =0.
\ee
By the continuity of $\tau_{t,s}(\delta_t(A))$ in $t$ and the usual argument using the fundamental theorem of calculus, it now follows that $\tau_{t,s}(A) $ is 
differentiable in $t$ with derivative given by $i\tau_{t,s}(\delta_t(A)) $.
\end{proof}

We conclude this section with a few comments.

It is clear how to modify Definition~\ref{def:lcfnorm} in such a way to describe sequences that are locally
Cauchy in $F$-norm. Given any such Cauchy sequence which also satisfies (\ref{sup_bd}), an $\epsilon /3$-argument almost
identical to the one in the proof of Theorem~\ref{thm:exist_iv_dyn} shows that the corresponding dynamics converge to a co-cycle 
of automorphisms of $\mathcal{A}_{\Gamma}$. 
With this understanding, one sees that Theorem~\ref{thm:exist_iv_dyn} implies Theorem~\ref{thm:existbd}.
In fact, let $\Phi \in \mathcal{B}_F(I)$ and take $\{ \Lambda_n \}_{n \geq 1}$ to be an 
increasing, exhaustive sequence of finite subsets of $\Gamma$. Define the sequence of interactions $\Phi_n = \Phi \restriction_{\Lambda_n}$
and extend $\Phi_n$ to all of $\Gamma$ as indicated above. In this case, it is clear that $\{ \Phi_n \}_{n \geq 1}$ 
is locally Cauchy in the $F$-norm defined by $F$ and moreover, (\ref{sup_bd}) holds. Thus the corresponding
dynamics converge. Since the sequence $\{ \Phi_n \}_{n \geq 1}$ also converges locally to $\Phi$ in the
$F$-norm defined by $F$, we know what the generator of the limiting dynamics is by Theorem~\ref{thm:exist_iv_dyn}. 
In this manner we recover the fact that the limiting dynamics is independent of the increasing, exhaustive sequence of finite-volumes.
Moreover, we also see that this limiting dynamics is invariant under a class of finite-volume boundary conditions.

As a final comment we note that one can easily find conditions under which the Duhamel formula \eq{norm_bd_dyn_der2} is also inherited 
by the infinite-volume dynamics. As this will not be needed in this work, we do not discuss this further. 

\section{Local approximations} \label{sec:str-loc}

In the previous section, we proved a Lieb-Robinson bound for the finite volume dynamics generated by an interaction $\Phi\in \cB_F(I)$. Such bounds 
provide an estimate for the speed of propagation in a quantum lattice system. More specifically, such bounds can be used to show that while the support 
of a local observable evolved under the Heisenberg dynamics is non-local, at any fixed time $t$, the observable essentially acts as the identity
outside of a finite region of space. It is often useful to approximate these dynamically evolved observables by strictly local observables. It is further desirable that 
the operation of taking these local approximations has good continuity properties. This is the topic of this section. 

%%%%%%%
%
% On local approximations
%
%%%%%%%%

\subsection{Local approximations of observables} \label{sec:laac}

We first review how the support of a local observable can be identified using commutators. For any Hilbert space $\cH$, the algebra $\cB(\cH)$ has a trivial center; in the case of a finite-dimensional Hilbert space this is known as Schur's Lemma. A first generalization of this fact is that for any two Hilbert spaces $\cH_1$ and $\cH_2$, the commutant of $\cB(\cH_1)\otimes \idty$ in $\cB(\cH_1\otimes\cH_2)$  is given by $\idty\otimes \cB(\cH_2)$ (see, e.g., \cite[Chapter 11]{kadison:1997a}). Given this, and the structure of the quantum models introduced in Section~\ref{sec:spatialstructure}, one concludes the following: given $\Lambda \in \mathcal{P}_0( \Gamma)$ and $X\subset\Lambda$, if $A \in \cA_\Lambda$ satisfies $[A,B]=0$ for all $B\in \cA_{\Lambda\setminus X}$, then $A\in \cA_X$. In other words, vanishing commutators can be used to identify the support of local observables. If the commutator $[A,B]$ is small but not necessarily vanishing (in norm), the following lemma, which is proved in \cite{nachtergaele:2013}, shows that $A$ can be well-approximated (up to error $\epsilon$) by an observable $A'\in \cA_X$.

\begin{lem}\label{lem:otamp10}
	Let $\cH_1$ and $\cH_2$ be complex Hilbert spaces. There is a completely positive linear map $\bbE: \cB(\cH_1\otimes\cH_2)\to\cB(\cH_1)$
	with the following properties:
	\begin{enumerate}
		\item[(i)] For all $A\in\cB(\cH_1)$, $\bbE(A\otimes\idty)=A$;
		\item[(ii)] Whenever $A\in\cB(\cH_1\otimes\cH_2)$ satisfies the commutator bound
		$$
		\left\Vert [A,\idty\otimes B]\right\Vert \leq \epsilon \Vert B\Vert\, \mbox{ for all } B\in\cB(\cH_2),
		$$
		$\bbE(A)$ satisfies the estimate 
		\be
		\Vert \bbE(A)\otimes \idty -A\Vert\leq \epsilon ;
		\label{epsilon_bound}
		\ee
		\item[(iii)] For all $C,D\in \cB(\cH_1)$ and $A\in \cB(\cH_1\otimes\cH_2)$, we have
		\be
		\bbE((C\otimes\idty)A(D\otimes \idty))=C\bbE(A)D.
		\label{condex}\nonumber
		\ee
	\end{enumerate}
\end{lem}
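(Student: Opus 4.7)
The plan is to build $\mathbb{E}$ by averaging the adjoint action of the unitary group of $\mathcal{H}_2$, an approach that makes (i), (iii), and (ii) all flow from the same source. For each unitary $V \in \mathcal{U}(\mathcal{H}_2)$, conjugation $\mathrm{Ad}_V : A \mapsto (\idty \otimes V) A (\idty \otimes V)^*$ is a unital $*$-automorphism of $\mathcal{B}(\mathcal{H}_1 \otimes \mathcal{H}_2)$, hence unital and completely positive. Averaging these over $\mathcal{U}(\mathcal{H}_2)$ against Haar measure (when $\mathcal{H}_2$ is finite-dimensional) or against an invariant mean (in general) yields a unital CP linear map whose image lies inside the commutant $(\idty \otimes \mathcal{B}(\mathcal{H}_2))' = \mathcal{B}(\mathcal{H}_1) \otimes \idty$; composing with the canonical isomorphism $\mathcal{B}(\mathcal{H}_1) \otimes \idty \cong \mathcal{B}(\mathcal{H}_1)$ gives the desired $\mathbb{E}$.

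Properties (i) and (iii) are then essentially built in. Since $C \otimes \idty$ commutes with every $\idty \otimes V$, each $\mathrm{Ad}_V$ commutes with left and right multiplication by $C \otimes \idty$, which yields (iii) after averaging; and $\mathrm{Ad}_V$ fixes $A \otimes \idty$ pointwise, giving (i). Linearity and complete positivity persist through (weak-$*$) limits of convex combinations of CP linear maps.

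The estimate (ii) is the heart of the argument, and it is here that the commutator hypothesis enters. The identity
\[
\mathrm{Ad}_V(A) - A = [(\idty \otimes V),\, A]\,(\idty \otimes V)^*,
\]
combined with the hypothesis applied to $B = V$ (which is unitary, so $\|V\| = 1$), gives $\|\mathrm{Ad}_V(A) - A\| \leq \|[A, \idty \otimes V]\| \leq \epsilon$, uniformly in $V$. Because $\mathbb{E}(A) \otimes \idty$ is a convex combination / weak-$*$ limit of such conjugates, each within $\epsilon$ of $A$, the bound $\|\mathbb{E}(A) \otimes \idty - A\| \leq \epsilon$ follows at once.

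The main obstacle in this plan is justifying the averaging when $\mathcal{H}_2$ is infinite-dimensional, since $\mathcal{U}(\mathcal{H}_2)$ is not locally compact and admits no Haar measure. I would handle this by defining $\mathbb{E}(A)$ as a weak-$*$ limit, along a fixed ultrafilter on the directed set of finite subsets of $\mathcal{U}(\mathcal{H}_2)$, of the finite averages $|\mathcal{F}|^{-1} \sum_{V \in \mathcal{F}} \mathrm{Ad}_V(A)$. Banach--Alaoglu applied inside the weak-$*$ compact ball $\{T : \|T\| \leq \|A\|\}$ gives existence of the limit for every $A$; fixing one and the same ultrafilter for all $A$ makes $\mathbb{E}$ linear; and the $\epsilon$-bound above is preserved under weak-$*$ limits. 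The one real check is that the limit actually lies in $\mathcal{B}(\mathcal{H}_1) \otimes \idty$, i.e.\ is invariant under every $\mathrm{Ad}_{V_0}$, which follows from a mild cofinality property of the ultrafilter (equivalently, from a standard invariant-mean construction on $\mathcal{U}(\mathcal{H}_2)$).
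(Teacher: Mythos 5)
The paper does not include its own proof of this lemma — it cites \cite{nachtergaele:2013} — so your proposal can only be measured on its own merits.

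Your overall strategy (average the inner automorphisms $\mathrm{Ad}_V$, $V\in\cU(\cH_2)$, to produce a conditional expectation onto the commutant $(\idty\otimes\cB(\cH_2))'=\cB(\cH_1)\otimes\idty$) is the right one, and the core estimate is correct and cleanly stated: from
$\mathrm{Ad}_V(A)-A = [(\idty\otimes V),A](\idty\otimes V)^*$
and the hypothesis with $B=V$, $\Vert V\Vert=1$, you get $\Vert\mathrm{Ad}_V(A)-A\Vert\leq\epsilon$ uniformly in $V$; convexity plus weak-$*$ lower semicontinuity of the norm then transfer this to the averaged map. Properties (i) and (iii) are indeed automatic once the average is $\cU(\cH_2)$-equivariant. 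In finite dimensions, Haar measure closes the argument with no further issues.

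The genuine gap is the infinite-dimensional construction. There is no ``standard invariant-mean construction on $\cU(\cH_2)$'' when $\dim\cH_2=\infty$: the discrete unitary group of an infinite-dimensional Hilbert space is not amenable (it contains nonabelian free subgroups), so no translation-invariant mean on $\ell^\infty(\cU(\cH_2))$ exists, and no ``mild cofinality property'' of a fixed ultrafilter on finite subsets will produce one — translation invariance of such a limit is exactly what amenability would buy you and here it fails. Consequently, the ultrafilter limit you describe is not guaranteed to land in the commutant, and the construction of $\bbE$ is unjustified as written. The fix is to replace the invariant-mean language with something that actually holds: $\cB(\cH_1)\otimes\idty$ is the commutant of the type~I factor $\idty\otimes\cB(\cH_2)$, hence is itself a type~I factor and therefore injective, which directly yields a norm-one projection $\bbE:\cB(\cH_1\otimes\cH_2)\to\cB(\cH_1)\otimes\idty$ that is automatically completely positive and $\cB(\cH_1)\otimes\idty$-bimodular (Tomiyama's theorem), giving (i) and (iii). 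Equivalently, one can invoke Schwartz's property~P for $\cB(\cH_1)\otimes\idty$: for every $A$ the weak-$*$ closed convex hull of $\{\mathrm{Ad}_V(A):V\in\cU(\cH_2)\}$ meets $\cB(\cH_1)\otimes\idty$, and the Ryll--Nardzewski fixed point theorem underwrites this; a further selection argument is then needed to produce a single linear $\bbE$ coherently across all $A$. Either route preserves your $\epsilon$-bound, because the chosen element still lies in the weak-$*$ closed convex hull of operators within $\epsilon$ of $A$.
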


A completely positive linear map $\bbE$ with the properties (i) and (iii) is called a {\em conditional expectation}  (see, e.g., \cite[Section 9.2]{petz:2008}).
If $\cH_2$ is finite-dimensional, one can take $\bbE=\id\otimes \tr$,
where $\tr$ denotes the normalized trace over $\cH_2$. In this case, it is straightforward to verify the 
properties listed in the lemma (see, e.g., \cite{bravyi:2006a,nachtergaele:2006}). For general $\cH_2$, 
a normalized trace does not exist, but using Lemma \ref{lem:otamp10} it is easy to show that, at the cost of a factor 2 in the RHS of \eq{epsilon_bound}, we can replace $\bbE$ with $\id\otimes \rho$ for an arbitrary state $\rho$ on $\cB(\cH_2)$. This is the content of the following lemma from \cite{nachtergaele:2013}. 

\begin{lem} \label{lem:piest} 
	Let $\mathcal{H}_1$ and $\mathcal{H}_2$ be two complex Hilbert spaces and $\rho$ a state on $\cB(\cH_2)$. The map
	$\Pi_\rho=\id\otimes \rho$ satisfies properties (i) and (iii) of Lemma \ref{lem:otamp10}. Moreover, if 
	$A \in \mathcal{B}( \mathcal{H}_1 \otimes \mathcal{H}_2)$ is such that there is an $\epsilon \geq 0$ for which
	\begin{equation} \label{piest_com}
	\| [ A, \idty \otimes B ] \| \leq \epsilon  \| B \| \quad \mbox{for all } B \in \mathcal{B}( \mathcal{H}_2) ,
	\end{equation} 
	then
	\begin{equation}
	\| \Pi_{\rho}(A) \otimes \idty- A \| \leq 2 \epsilon \, .
	\label{2epsilon_bound}
	\end{equation}
\end{lem}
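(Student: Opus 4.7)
The plan is to verify properties (i) and (iii) directly from the definition of $\Pi_\rho = \id \otimes \rho$, and then to derive the norm estimate by combining Lemma~\ref{lem:otamp10} with the fact that $\Pi_\rho$ is a contraction. For (i), since $\rho$ is a state we have $\rho(\idty) = 1$, so $\Pi_\rho(A \otimes \idty) = A\,\rho(\idty) = A$. For (iii), given $C,D \in \cB(\cH_1)$ and $A \in \cB(\cH_1 \otimes \cH_2)$, the module property $\Pi_\rho((C \otimes \idty) A (D \otimes \idty)) = C\,\Pi_\rho(A)\,D$ follows from the defining property of the slice map $\id \otimes \rho$, since the operators $C \otimes \idty$ and $D \otimes \idty$ live entirely in the first tensor factor and $\rho$ acts trivially on them.

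For the quantitative estimate, I would begin by invoking Lemma~\ref{lem:otamp10} to produce a conditional expectation $\bbE : \cB(\cH_1 \otimes \cH_2) \to \cB(\cH_1)$. The commutator hypothesis on $A$ then yields $\| \bbE(A) \otimes \idty - A \| \leq \epsilon$. Define the remainder
\begin{equation*}
R = A - \bbE(A) \otimes \idty, \qquad \| R \| \leq \epsilon.
\end{equation*}
Applying $\Pi_\rho$ to the decomposition $A = \bbE(A) \otimes \idty + R$ and using property (i) for $\Pi_\rho$, which has already been established, gives $\Pi_\rho(A) = \bbE(A) + \Pi_\rho(R)$. Therefore
\begin{equation*}
\Pi_\rho(A) \otimes \idty - A = \bigl( \bbE(A) \otimes \idty - A \bigr) + \Pi_\rho(R) \otimes \idty = \Pi_\rho(R) \otimes \idty - R.
\end{equation*}

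The proof then concludes via the triangle inequality, provided that $\Pi_\rho$ is a contraction. This last fact, $\|\Pi_\rho(R)\| \leq \|R\|$, is standard for slice maps associated with a state: $\id \otimes \rho$ is completely positive and unital, hence of norm one. Combining this with $\|R\| \leq \epsilon$ gives
\begin{equation*}
\|\Pi_\rho(A) \otimes \idty - A\| \leq \|\Pi_\rho(R) \otimes \idty\| + \|R\| \leq 2\|R\| \leq 2\epsilon,
\end{equation*}
which is the desired bound. There is no real obstacle here; the factor of $2$ is the price paid for replacing the abstract conditional expectation $\bbE$ of Lemma~\ref{lem:otamp10} by the more explicit state-induced projection $\id \otimes \rho$, and it enters via the triangle inequality between the two terms above.
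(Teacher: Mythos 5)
Your proof is correct and takes essentially the same route as the paper: both rely on the conditional expectation $\bbE$ from Lemma \ref{lem:otamp10}, the estimate $\|\bbE(A)\otimes\idty - A\|\le\epsilon$, the contraction property $\|\Pi_\rho\|=1$, and a triangle inequality that costs the factor of $2$. Your introduction of the remainder $R = A - \bbE(A)\otimes\idty$ is only a notational repackaging of the paper's chain $\|\Pi_\rho(A)\otimes\idty - A\| \leq \|\Pi_\rho(A-\bbE(A)\otimes\idty)\| + \|\bbE(A)\otimes\idty - A\|$.
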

\begin{proof}
	It is clear that $\Pi_\rho$ satisfies properties (i) and (iii) in Lemma \ref{lem:otamp10}. 
	Moreover, since $\Vert \Pi_\rho\Vert =1$, we also have that for any $A\in\mathcal{B}( \mathcal{H}_1 \otimes \mathcal{H}_2)$
	\bea
	\Vert \Pi_\rho(A)\otimes\idty - \bbE(A)\otimes\idty\Vert &=& \Vert \Pi_\rho(A) - \bbE(A)\Vert\nonumber\\
	&=& \Vert \Pi_\rho(A-\bbE(A)\otimes\idty)\Vert\nonumber\\
	&\leq& \epsilon
	\eea
	where we have used (\ref{epsilon_bound}). The estimate \eq{2epsilon_bound} now follows:
	\be
	\| \Pi_{\rho}(A) \otimes \idty- A \| \leq  
	\Vert \Pi_\rho(A)\otimes\idty - \bbE(A)\otimes\idty\Vert + \Vert \bbE(A)\otimes \idty -A\Vert
	\leq 2 \epsilon \, .
	\ee
\end{proof}

Note that the state $\rho$ in Lemma \ref{lem:piest} is not required to be normal, i.e. defined by a density matrix. However, in applications it will often be useful to take $\rho$  normal (or locally normal in the case of infinite systems, see Section~\ref{sec:laqlmaps}). For normal $\rho$, we can give an explicit expression for $\Pi_\rho(A)$ in terms of
its matrix elements.  Since $\rho$ is given by a density matrix, there is a countable set of orthonormal vectors $\xi_n\in\cH_2$ and positive numbers $\rho_n$ with $\sum_n \rho_n = 1$, so that $\rho(A')=\sum_{n\geq 1} \rho_n \langle\xi_n, A'\xi_n\rangle$ for all $A'\in \cB(\cH_2)$. The matrix elements of $\Pi_\rho(A)$ are then given by:
\be
\langle \eta, \Pi_\rho(A)\phi\rangle 
= \sum_{n\geq 1} \rho_n \langle\eta\otimes\xi_n, A(\phi\otimes \xi_n)\rangle, \mbox{ for all } A \in\cB(\cH_1\otimes \cH_2),\ \eta,\phi\in\cH_1.
\label{matrixelementsPi}\ee

A number of further comments are in order. First, although the mapping $\Pi_{\rho}$ depends on the
state $\rho$, the `error' estimate in \eq{2epsilon_bound} is independent of $\rho$. 
Next, if $\mathcal{H}_2$ is finite dimensional, then $\rho$ can be taken to be the normalized trace 
and we already know that the factor of two in \eq{2epsilon_bound} is not needed. The bound for 
$\Pi_\rho$ therefore appears to be non-optimal. The map $\bbE$ from Lemma \ref{lem:otamp10} is only known to be 
bounded (specifically, $\Vert \bbE \Vert =1$) and hence continuous with respect to the operator norm topology. As such, it is not guaranteed that $\bbE$ is continuous when both the domain and co-domain are endowed with the strong operator topology. However, by choosing a {\em normal} state $\rho$, we 
get a map $\Pi_{\rho}$ that is continuous on bounded subsets of the domain when both $\cB(\cH_1\otimes \cH_2)$ and 
$\cB(\cH_1)$ are endowed with their strong (or weak) operator topologies. The case of the strong operator topology is 
the content of the next proposition. The case of the weak topology follows by a similar argument.

Recall that $\cK: \cB(\cH_1) \to \cB(\cH_2)$ is continuous on bounded subsets with both its domain and co-domain considered with the strong operator topology if given any bounded net $A_\alpha\in \cB(\cH_1)$ that converges strongly to $A\in \cB(\cH_1)$, the net $\cK(A_\alpha) \in \cB(\cH_2)$ converges strongly to $\cK(A) \in \cB(\cH_2)$.

\begin{prop}\label{prop:strongcontinuity}
	Let $\mathcal{H}_1$ and $\mathcal{H}_2$ be two complex Hilbert spaces, and  $\rho$ any normal state on $\cB(\cH_2)$. The following maps, when restricted to arbitrary bounded subsets of their domain, are continuous when both the domain and codomain are equipped with the strong operator topology:
	\begin{enumerate}
		\item[(i)]
		$\Pi_\rho=\id\otimes\rho:\cB(\cH_1\otimes\cH_2) \to \cB(\cH_1)$ 
		\item[(ii)]
		$\tilde\Pi_\rho:  \cB(\cH_1\otimes\cH_2) \to \cB(\cH_1\otimes\cH_2), A\mapsto \Pi_\rho(A) \otimes\idty$.
	\end{enumerate}
\end{prop}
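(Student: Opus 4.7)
My plan is to exploit the normality of $\rho$, which allows a diagonal representation $\rho(\cdot)=\sum_{n\ge 1}\rho_n\langle\xi_n,\cdot\,\xi_n\rangle$ with $\{\xi_n\}$ orthonormal and $\sum_n\rho_n=1$, as described in \eqref{matrixelementsPi}. This reduces strong operator continuity of $\Pi_\rho$ to a pointwise convergence argument that can be controlled by dominated convergence against the summable weights $\rho_n$. Part (ii) then follows from (i) together with a standard fact that tensoring with the identity preserves strong convergence on bounded nets.

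For (i), let $\{A_\alpha\}$ be a bounded net in $\mathcal{B}(\cH_1\otimes\cH_2)$ converging strongly to $A$, and set $M=\sup_\alpha\|A_\alpha\|+\|A\|$. Fix $\phi\in\cH_1$; I need to show $\|\Pi_\rho(A_\alpha-A)\phi\|\to 0$. Using \eqref{matrixelementsPi} and applying the Cauchy--Schwarz inequality in the form $\sum_n\rho_n|c_n|\le(\sum_n\rho_n|c_n|^2)^{1/2}$, I obtain for any unit vector $\eta\in\cH_1$
\begin{equation*}
|\langle\eta,\Pi_\rho(A_\alpha-A)\phi\rangle|
\le \sum_n\rho_n\,\|(A_\alpha-A)(\phi\otimes\xi_n)\|
\le \Bigl(\sum_n\rho_n\,\|(A_\alpha-A)(\phi\otimes\xi_n)\|^2\Bigr)^{1/2}.
\end{equation*}
Taking the supremum over $\eta$ yields the key bound
\begin{equation*}
\|\Pi_\rho(A_\alpha-A)\phi\|^2 \le \sum_n\rho_n\,\|(A_\alpha-A)(\phi\otimes\xi_n)\|^2.
\end{equation*}
The $n$-th summand is bounded by $4M^2\|\phi\|^2\rho_n$ uniformly in $\alpha$, and strong convergence $A_\alpha\to A$ implies termwise convergence to $0$. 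For nets, a dominated-convergence style $\epsilon$-argument works: given $\epsilon>0$, choose $N$ so that $4M^2\|\phi\|^2\sum_{n>N}\rho_n<\epsilon/2$, then use strong convergence on the finite sum $n\le N$ to make the remainder less than $\epsilon/2$ eventually.

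For (ii), by (i) the net $\{\Pi_\rho(A_\alpha)\}\subset\mathcal{B}(\cH_1)$ is bounded (since $\|\Pi_\rho\|=1$) and converges strongly to $\Pi_\rho(A)$; thus it suffices to show that the map $C\mapsto C\otimes\idty$ is continuous on bounded subsets of $\mathcal{B}(\cH_1)$ with respect to the strong operator topology on both sides. This is exactly the argument used in the proof of Proposition~\ref{prop:st_cont_prods}(ii): for any $\psi\in\cH_1\otimes\cH_2$, write $\psi=\sum_m\psi_m\otimes\phi_m$ with $\{\phi_m\}$ orthonormal and $\sum_m\|\psi_m\|^2=\|\psi\|^2$, so that
\begin{equation*}
\|((C_\alpha-C)\otimes\idty)\psi\|^2=\sum_m\|(C_\alpha-C)\psi_m\|^2,
\end{equation*}
and apply the same truncation argument, bounding the tail by $(2\sup_\alpha\|C_\alpha\|+2\|C\|)^2\sum_{m>N}\|\psi_m\|^2$ and using strong convergence on the finite head.

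The main obstacle is the handling of the infinite sum arising from the eigendecomposition of $\rho$ in (i) — that is, transferring pointwise strong convergence on product vectors $\phi\otimes\xi_n$ to norm convergence of $\Pi_\rho(A_\alpha-A)\phi$. Normality of $\rho$ is essential here, because it provides the summable weights $\rho_n$ needed to apply dominated convergence; without normality, one would have no handle on the tail of the decomposition and the argument would collapse, which is consistent with the fact that the conditional expectation $\mathbb{E}$ of Lemma~\ref{lem:otamp10} is in general not strongly continuous.
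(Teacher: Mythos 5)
Your proof is correct and follows essentially the same route as the paper: diagonalize $\rho$, reduce to a countable sum weighted by $\rho_n$, and use a truncation/tail argument that relies on $\sum_n\rho_n<\infty$, then handle (ii) via the same tensor-with-identity estimate as in Proposition~\ref{prop:st_cont_prods}(ii). The only cosmetic difference is that you insert a Cauchy--Schwarz (Jensen) step to get an $\ell^2$-type bound $\|\Pi_\rho(A_\alpha-A)\phi\|^2\le\sum_n\rho_n\|(A_\alpha-A)(\phi\otimes\xi_n)\|^2$, whereas the paper works directly with the $\ell^1$ bound $\sum_n\rho_n\|(A_\alpha-A)(\phi\otimes\xi_n)\|$; both lead to the same truncation argument.
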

\begin{proof}
	(i) To prove that $\Pi_\rho$ is continuous on bounded sets, WLOG, let $\{ A_\alpha \mid \alpha \in I\}$ be a net in the unit ball of $\cB(\cH_1\otimes\cH_2)$ that converges to $A\in \cB(\cH_1\otimes\cH_2)$ in the strong operator topology, i.e. for all $\psi\in \cH_1\otimes\cH_2$ the net $A_\alpha \psi$ converges to $A \psi$ with respect to the Hilbert space norm. Since $\Vert A_\alpha\Vert \leq 1$ for all $\alpha \in I$, we necessarily have that $\Vert A\Vert \leq 1$. Let $\{\xi_n, n\geq 1\}$ denote the orthonormal set of eigenvectors  of $\rho$ corresponding to its non-zero eigenvalues $\rho_n$.
	Let $\phi\in\cH_1$ with $\Vert \phi\Vert =1$. We use \eq{matrixelementsPi} to show that $\Pi_\rho(A_\alpha)\phi\to\Pi_\rho(A)\phi$. Note that 
	\beann
	\Vert \Pi_\rho(A - A_\alpha)\phi\Vert  &=&  \sup_{\eta\in\cH_1, \Vert \eta\Vert =1} \vert\langle \eta,\Pi_\rho(A-A_\alpha)\phi\rangle\vert\\
	&=& \sup_{\eta\in\cH_1, \Vert \eta\Vert =1} \sum_{n\geq 1} \rho_n \vert \langle \eta\otimes\ \xi_n , (A-A_\alpha) (\phi\otimes\xi_n)\rangle\vert.
	\eeann
	For any $\epsilon >0$, choose $N$ so that  $\sum_{n>N} \rho_n < \epsilon/4$, and pick $\alpha_0\in I$ so that for all $\alpha \geq \alpha_0$ and any $n=1,\ldots, N$,
	$$\Vert (A-A_\alpha) (\phi\otimes\xi_n)\Vert \leq \epsilon/(2N).$$  Then for any $\alpha \geq \alpha_0$,
	\be
	\Vert \Pi_\rho(A - A_\alpha)\phi\Vert  \leq \sum_{n\leq N} \Vert (A-A_\alpha) (\phi\otimes\xi_n)\Vert + 
	\sum_{n>N} 2\rho_n < \epsilon,
	\label{splittheterms}\ee
	which establishes the desired continuity of $\Pi_\rho$.
	
	(ii)  Let $\{ A_{\alpha} \, | \alpha \in I\}$ be a net in the unit ball of $\cB( \cH_1 \otimes \cH_2)$ that converges to $A \in \cB( \cH_1 \otimes \cH_2)$. By (i), we know that $B_{\alpha} = \Pi_\rho(A_{\alpha})$ converges to $B = \Pi_\rho(A)$ in the strong operator topology on $\cB(\cH_1)$. By Proposition~\ref{prop:st_cont_prods}(ii), see also \eqref{sc_tensor} and the preceding discussion, it follows that $\{ B_\alpha\otimes \idtyty \mid \alpha \in I\}$ strongly converges to $B\otimes \idtyty$ in $\cB(\cH_1\otimes \cH_2)$.
\end{proof}

%%%%%%%
%
% Local approximations in the context of Quantum Lattice Models
%
%%%%%%%%

\subsection{Application to quantum lattice models} \label{sec:laqlmaps}

We now extend the results of the previous subsection to infinite quantum lattice systems on $\Gamma$. In this setting states cannot be defined in terms of a single density matrix. Moreover,
as explained below, we will want to define a consistent family of conditional expectations with values in $\cA_\Lambda$, for all $\Lambda\in\cP_0(\Gamma)$. To this end, we consider a {\em locally normal product state} $\rho$: i.e. for each site $x \in \Gamma$ we fix a normal state $\rho_x$ on $\mathcal{A}_x = \mathcal{B}( \mathcal{H}_x)$, and take the unique state $\rho$ on $\cA_\Gamma$ such that $\rho_{\restriction \cA_\Lambda} = \bigotimes_{x \in \Lambda} \rho_x$ for all finite $\Lambda\subset\Gamma$. Then, given
$X \subset \Lambda \in \cP_0(\Gamma)$, we define conditional expectations  $\Pi_{\rho}^{X, \Lambda} : \mathcal{A}_{\Lambda} \to \mathcal{A}_X$ similar to those in Lemma~\ref{lem:piest} by
\begin{equation} \label{pi_loc_norm_state}
\Pi_{\rho}^{X, \Lambda}(A) = ({\rm id}_X \otimes \rho_{\Lambda \setminus X})(A) \quad \mbox{for all } A \in \cA_{\Lambda} \,.
\end{equation}
Here, as before, we have taken ${\rm id}_X$ as the identity map on $\mathcal{A}_X$.
In our applications the dependence of these maps on $\rho$ is of minor consequence. Moreover, it will be convenient to view these maps as elements of 
$\mathcal{B}( \cA_{\Lambda})$. For these reasons, we suppress the dependence on $\rho$ and define $\Pi_X^{\Lambda} : \mathcal{A}_{\Lambda} \to \cA_{\Lambda}$ by
\begin{equation} \label{PiXLambda}
\Pi_X^{\Lambda}(A) = \Pi_{\rho}^{X, \Lambda}(A) \otimes \idty_{\Lambda \setminus X}.
\end{equation}
For fixed $X$, these projections are compatible in the sense that if $X \cup \Lambda' \subset \Lambda$ and $\Lambda \in \cP_0(\Gamma)$, then
\begin{equation} \label{compat}
\Pi_X^{\Lambda}(A \otimes \idty) = \Pi_{X\cap \Lambda'}^{\Lambda'}(A) \otimes \idty \quad \mbox{for all } A \in \cA_{\Lambda'} .
\end{equation}
We summarize this and other consistency relations in Proposition~\ref{prop:compatibility} below. First, however, we describe how, given a fixed finite volume $X$, one can extend the maps $\Pi_X^\Lambda$, $\Lambda \in \cP_0(\Gamma)$, to an operator $\Pi_X$ on $\cA^{\rm loc}_{\Gamma}$ (and consequently , $\cA_\Gamma$). 

For any $\Lambda' \subseteq \Lambda$, recall that we can identify $\cA_{\Lambda'}$ as a sub-algebra of $\cA_{\Lambda}$, and so we can write \eq{compat} as $\Pi_X^{\Lambda}\restriction_{\cA_{\Lambda'}}= \Pi_{X\cap \Lambda'}^{\Lambda'}$. In particular, if $X\subseteq \Lambda' \subseteq \Lambda$, one has that $\Pi_X^{\Lambda}\restriction_{\cA_{\Lambda'}}= \Pi_{X}^{\Lambda'}$, from which we see that the following map $\Pi_X : \cA_{\Gamma}^{\rm loc} \to \cA_{\Gamma}^{\rm loc}$ is well-defined:
\begin{equation}\label{PiX}
\Pi_X(A) := \Pi_X^{{\rm supp}(A)\cup X}(A).
\end{equation}
Since this map is bounded, in fact of norm one, $\Pi_X$ has a unique extension to $\cA_{\Gamma}$ which we also denote by $\Pi_X$. We note that $\Pi_X(A)=A $ if $A\in\cA_X.$
We refer to the family of conditional expectations  $\Pi^\Lambda_X$, respectively $\Pi_X$,  $X\in\cP_0(\Gamma)$, as a {\em localizing family}.
By construction, the finite volume local approximations $\Pi_X^\Lambda$ all satisfy the conditions of Lemma~\ref{lem:piest}. The following corollary shows 
that the results of Lemma~\ref{lem:piest} also extend to $\Pi_X: \cA_{\Gamma} \to \cA_X$.

\begin{cor}\label{cor:PiX}
	Let $X\in \cP_0(\Gamma)$ and $\Pi_X: \cA_\Gamma\to\cA_X$ be the extension of the map defined in \eqref{PiX}. Suppose $\epsilon \geq 0$ and $A\in \cA_\Gamma$ are such that
	\be\label{small_commutator}
	\Vert [A,B]\Vert \leq \epsilon \Vert B\Vert, \mbox{ for all } B\in \cA^{\rm loc}_{\Gamma \setminus X}.
	\ee
	Then 
	\[
	\Vert \Pi_X(A) - A\Vert \leq 2\epsilon.
	\]
\end{cor}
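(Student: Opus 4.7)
The plan is to reduce the corollary to the finite-volume statement in Lemma~\ref{lem:piest} by approximating $A$ in norm with strictly local observables and exploiting the compatibility relation \eqref{compat}.

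First, I would fix an increasing, exhaustive sequence $\{\Lambda_n\}_{n\geq 1}$ of finite subsets of $\Gamma$ with $X\subseteq \Lambda_n$ for every $n$, and then pick a sequence $A_n\in \cA_{\Lambda_n}$ with $\|A_n-A\|\to 0$. Such a sequence exists since, by definition, $\cA_\Gamma$ is the norm closure of $\cA_\Gamma^{\rm loc}$. For every $n$ and every $B\in \cA_{\Lambda_n\setminus X}\subseteq \cA^{\rm loc}_{\Gamma\setminus X}$, a single use of the triangle inequality together with the hypothesis \eqref{small_commutator} yields
\[
\|[A_n,B]\|\ \leq\ \|[A,B]\|+2\|A_n-A\|\,\|B\|\ \leq\ (\epsilon+2\|A_n-A\|)\,\|B\|.
\]

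Next, I would apply Lemma~\ref{lem:piest} with $\cH_1=\cH_X$, $\cH_2=\cH_{\Lambda_n\setminus X}$, $\rho$ equal to the restriction of the fixed locally normal product state to $\cA_{\Lambda_n\setminus X}$, and the observable $A_n$. Since $\cA_{\Lambda_n\setminus X}=\cB(\cH_{\Lambda_n\setminus X})$, the hypothesis of the lemma is satisfied with constant $\epsilon_n:=\epsilon+2\|A_n-A\|$, and thus
\[
\|\Pi_X^{\Lambda_n}(A_n)-A_n\|\ \leq\ 2\epsilon_n.
\]
Because $\mathrm{supp}(A_n)\cup X\subseteq \Lambda_n$, the compatibility relation \eqref{compat} gives $\Pi_X^{\Lambda_n}(A_n)=\Pi_X(A_n)$, once we identify the image with its canonical lift into $\cA_\Gamma$.

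Finally, I would pass to the limit. The map $\Pi_X:\cA_\Gamma\to\cA_X$ is bounded (of norm one, since it is the norm-continuous extension of the contractions $\Pi_X^\Lambda$), so $\Pi_X(A_n)\to \Pi_X(A)$ in norm. Combining this with $A_n\to A$ and the estimate above,
\[
\|\Pi_X(A)-A\|\ =\ \lim_{n\to\infty}\|\Pi_X(A_n)-A_n\|\ \leq\ \lim_{n\to\infty}2(\epsilon+2\|A_n-A\|)\ =\ 2\epsilon,
\]
which is the asserted bound. The only delicate point is the book-keeping surrounding the identifications between $\Pi_X^{\Lambda_n}$ acting on $\cA_{\Lambda_n}$ and the extension $\Pi_X$ acting on $\cA_\Gamma$; everything else is essentially the finite-dimensional lemma plus a density argument.
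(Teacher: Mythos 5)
Your argument is correct and matches the paper's own proof in all essentials: both approximate $A$ in norm by a strictly local observable, transfer the commutator hypothesis at the cost of a small additive error, invoke Lemma~\ref{lem:piest} on the finite volume via the compatibility relation, and pass to the limit using $\Vert\Pi_X\Vert=1$. The only cosmetic difference is that you work with a sequence $A_n\to A$ whereas the paper fixes a single $\delta$-approximant $A'$ and lets $\delta\to 0$.
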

\begin{proof}
	Let $A\in \cA_\Gamma$. Then for any $\delta >0$ there exists $\Lambda\in\cP_0(\Gamma)$ and $A^\prime\in\cA_\Lambda$ such that
	$\Vert A-A^\prime\Vert < \delta$. From \eq{small_commutator} it follows that
	\[
	\Vert [A^\prime,B] \Vert \leq (\epsilon +2\delta) \Vert B\Vert, \mbox{ for all } B\in \cA^{\rm loc}_{\Lambda\setminus X}.
	\]
	Since $\Pi_X(A') = \Pi_X^\Lambda(A')$, Lemma 4.2 implies
	\[
	\Vert \Pi_X(A^\prime)-A^\prime\Vert \leq 2(\epsilon + 2\delta).
	\]
	Therefore,
	\[
	\Vert \Pi_X(A)-A\Vert \leq \Vert \Pi_X(A^\prime)-A^\prime\Vert  +\Vert (\Pi_X-\id)(A-A^\prime) \Vert
	\leq 2(\epsilon + 2\delta) + 2\delta,
	\]
	and since $\delta >0$ is arbitrary, the result follows.
\end{proof}

We now state several consistency properties of the finite and infinite volume conditional expectations. To facilitate the statement of the properties, we use $\Pi_X^\Gamma$ to denote $\Pi_X: \cA_\Gamma \to \cA_X$.

\begin{prop} \label{prop:compatibility}
	Fix a locally normal product state $\rho$ on $\cA_\Gamma$, and let $X, \, Y, \, \Lambda' \in \cP_0(\Gamma)$ and $\Lambda \in \cP_0(\Gamma) \cup \{\Gamma\}$ be such that $X$, $Y$ and $\Lambda'$ are all subsets of $\Lambda$. The following properties hold for the localizing maps defined with respect to $\rho$:
	\begin{enumerate}
		\item[(i)] If $A\in \cA_X$, then $\Pi_X^\Lambda(A) = A$.
		\item[(ii)] $\Pi_X^\Lambda \restriction_{\cA_{\Lambda'}} = \Pi_{X\cap \Lambda'}^{\Lambda'}$.
		\item[(iii)] $\Pi_X^\Lambda \circ \Pi_Y^\Lambda = \Pi_Y^\Lambda \circ \Pi_X^\Lambda = \Pi_{X\cap Y}^\Lambda.$
		\item[(iv)] If $X\subseteq \Lambda'$, then $\Pi_X^{\Lambda'}\circ \Pi_{\Lambda'}^\Lambda = \Pi_X^{\Lambda}$.
		\item[(v)] $\Pi_X^{\Lambda}(A)^* = \Pi_X^{\Lambda}(A^*)$ for all $A\in \cA_{\Lambda}$.
	\end{enumerate}
\end{prop}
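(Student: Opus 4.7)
The plan is to prove all five properties by unwinding the definition of $\Pi_X^\Lambda$ in \eqref{PiXLambda} and exploiting the product structure of the locally normal state $\rho$. The key identity to have at hand throughout is that for any disjoint $Y_1, Y_2 \subset \Gamma$, the product structure gives $\rho_{Y_1 \cup Y_2} = \rho_{Y_1} \otimes \rho_{Y_2}$, and in particular $\rho_{Y}(\idty_Y) = 1$. Once this is in place, each claim reduces to a short algebraic check. For the $\Lambda = \Gamma$ case, I would first establish each property on $\mathcal{A}_\Gamma^{\rm loc}$ via the finite-volume version (using \eqref{PiX}) and then extend to $\mathcal{A}_\Gamma$ by density and continuity, since each $\Pi_X$ has norm one.

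For (i), I would write $A \in \mathcal{A}_X$ as $A \otimes \idty_{\Lambda \setminus X}$ and observe that $\Pi_\rho^{X,\Lambda}(A \otimes \idty_{\Lambda \setminus X}) = A \cdot \rho_{\Lambda \setminus X}(\idty_{\Lambda\setminus X}) = A$, so $\Pi_X^\Lambda(A) = A \otimes \idty_{\Lambda\setminus X} = A$. For (ii), writing $A \in \mathcal{A}_{\Lambda'}$ as $A' \otimes \idty_{\Lambda \setminus \Lambda'}$ and decomposing $X$ as the disjoint union of $X_1 = X \cap \Lambda'$ and $X_2 = X \setminus \Lambda'$, I would split $\id_X \otimes \rho_{\Lambda \setminus X}$ according to the four pieces $X_1,\, \Lambda' \setminus X_1,\, X_2,\, (\Lambda \setminus \Lambda') \setminus X_2$ and then apply it to the two tensor factors of $A' \otimes \idty_{\Lambda \setminus \Lambda'}$ separately; the $\Lambda'$-part yields $\Pi_\rho^{X_1, \Lambda'}(A')$ and the complementary part yields $\idty_{X_2}$, so re-tensoring with the remaining identities gives exactly $\Pi_{X \cap \Lambda'}^{\Lambda'}(A)$ under the standard identification.

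For (iii), since $\Pi_Y^\Lambda(A) \in \mathcal{A}_Y$ (more precisely its image lies in $\mathcal{A}_Y \otimes \idty_{\Lambda \setminus Y}$), I would apply (ii) with $\Lambda'$ replaced by $Y$: this immediately gives $\Pi_X^\Lambda \circ \Pi_Y^\Lambda = \Pi_{X \cap Y}^Y \circ \Pi_Y^\Lambda$, and another direct computation using the product state factorization identifies this with $\Pi_{X \cap Y}^\Lambda$. Symmetry (interchange of $X$ and $Y$) gives the other equality. Claim (iv) is the special case of (iii) with $Y = \Lambda'$ and $X \subseteq \Lambda'$, so it follows immediately (noting $X \cap \Lambda' = X$). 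Finally, for (v), since $\rho$ is a normal state, $\rho(B^*) = \overline{\rho(B)}$ for all $B$; expanding $A \in \mathcal{A}_\Lambda$ (or an approximation thereof) in elementary tensors $\sum_k B_k \otimes C_k$ with $B_k \in \mathcal{A}_X,\; C_k \in \mathcal{A}_{\Lambda \setminus X}$, the identity $\Pi_\rho^{X,\Lambda}(A^*) = \sum_k B_k^* \,\overline{\rho(C_k)} = \bigl(\sum_k B_k\, \rho(C_k)\bigr)^* = \Pi_\rho^{X,\Lambda}(A)^*$ and tensoring with $\idty_{\Lambda \setminus X}$ gives the conclusion; one extends to all of $\mathcal{A}_\Lambda$ by continuity of the $*$-operation and of $\Pi_X^\Lambda$.

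No step here is conceptually hard; the only real obstacle is bookkeeping — organizing the tensor factor splittings so that the product structure of $\rho$ is applied correctly — and, for the $\Lambda = \Gamma$ case, verifying that each identity passes cleanly to the norm-closure via the bound $\|\Pi_X\| \leq 1$, which is immediate from the fact that each finite-volume $\Pi_X^\Lambda$ is a conditional expectation hence contractive.
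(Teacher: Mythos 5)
Your proof is correct and follows the same route the paper intends: the paper gives no details beyond observing that for finite $\Lambda$ everything follows from the definition of $\Pi_\rho^{X,\Lambda}$ and the product structure of $\rho$, and that the $\Lambda=\Gamma$ case then follows by density and the bound $\Vert\Pi_X\Vert\le 1$, which is exactly your strategy. One small gloss: in (iv), reducing to (iii) implicitly also uses (ii) to identify $\Pi_X^{\Lambda'}$ with the restriction of $\Pi_X^\Lambda$ to $\cA_{\Lambda'}$ (which contains the range of $\Pi_{\Lambda'}^\Lambda$); this is immediate, and the direct factorization $(\id_X\otimes\rho_{\Lambda'\setminus X})\circ(\id_{\Lambda'}\otimes\rho_{\Lambda\setminus\Lambda'})=\id_X\otimes\rho_{\Lambda\setminus X}$ works just as cleanly.
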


The proofs of these properties for $\Lambda \in \cP_0(\Gamma)$ are all elementary and follow from the definition of $\Pi_{X,\Lambda}^\rho$, and the fact that $\rho\restriction_{\Lambda}$ is a product state. The statements for $\Lambda = \Gamma$ follow from taking finite volume limits $\Lambda'\uparrow \Gamma$ of $\Pi_X^{\Lambda'}(A)$ for $A\in\cA_\Gamma^{\rm loc}$, and using the norm bound $\|\Pi_X\|\leq 1$ to extend to $A\in \cA_\Gamma$ in the usual manner.

An immediate consequence of Proposition \ref{prop:compatibility}(i) is that 
\be \label{PiXLimit}
\lim_{\Lambda_n \uparrow \Gamma} \Pi_{\Lambda_n}(A) = A
\ee 
for any sequence of increasing and absorbing finite volumes $\Lambda_n$, and any $A\in \cA_\Gamma^{\rm loc}$. Since $\cA_\Gamma^{\rm loc}$ is dense in $\cA_{\Gamma}$, \eqref{PiXLimit} extends to any $A\in\cA_{\Gamma}$.

With the aid of the maps $\Pi_X^\Lambda$, we construct {\it local decompositions} of any observable $A\in\cA_{\Lambda}$. 
Let $X \subset \Gamma$ be finite. For any $n \geq 0$, denote by $X(n) \subset \Gamma$ the set
\begin{equation} \label{defxn}
X(n) = \{ y \in \Gamma : \mbox{there exists } x \in X \mbox{ with } d(x,y) \leq n \}. 
\end{equation}
Note that $X(0) = X$. For finite $\Lambda \subset \Gamma$ with $X \subset \Lambda$ and each integer $n \geq 0$, we define $\Delta_{X(n)}^{\Lambda} : \mathcal{A}_{\Lambda} \to \mathcal{A}_{\Lambda}$ by
\begin{equation} \label{loc_dec}
\Delta_{X(0)}^{\Lambda} = \Pi_{X}^{\Lambda} \quad \mbox{and} \quad \Delta_{X(n)}^{\Lambda} = \Pi_{X(n) \cap \Lambda}^{\Lambda} - \Pi_{X(n-1) \cap \Lambda}^{\Lambda}
\end{equation} 
for any $n \geq 1$. Note that, in contrast to the maps $\Pi^\Lambda_X$, $\Delta^\Lambda_{X(n)}$ does not only depend on the set $X(n)$, but on $X$ and $n$ separately. 
This slight abuse of notation will not lead to confusion.
As discussed above, $\Delta_{X(n)}^{\Lambda}$ has range contained in $\mathcal{A}_{X(n) \cap \Lambda}$. Moreover, as a difference of two projections, they satisfy
$\| \Delta_{X(n)}^{\Lambda} \| \leq 2$.

Of course, as discussed above, one can also extend these bounded linear maps to $\mathcal{A}_{\Gamma}$.
In fact, for each finite $X \subset \Gamma$ and any $n \geq 0$ the maps $\Delta_{X(n)} : \mathcal{A}_{\Gamma} \to \mathcal{A}_{\Gamma}^{\rm loc}$
are defined by
\begin{equation} \label{DeltaXn}
\Delta_{X(0)} = \Pi_X \quad \mbox{and} \quad \Delta_{X(n)} = \Pi_{X(n)} - \Pi_{X(n-1)}
\end{equation}
for any $n \geq 1$. We note again that the range of $\Delta_{X(n)}$ is contained in $\mathcal{A}_{X(n)}$ regarded as a sub-algebra of $\mathcal{A}_{\Gamma}^{\rm loc}$. 

A typical use of the local decompositions is as follows. Fix $\Lambda\subset\cP_0(\Gamma)$, $A \in \mathcal{A}_{\Lambda}$, and $X\subset\Lambda$,
and denote by $N$ the smallest integer $n$ for which $X(n) \cap \Lambda = \Lambda$. Clearly, $N$ depends on $X$ and $\Lambda$, and $\Delta_{X(n)}^{\Lambda} = 0$ for any $n > N$. Then, one can write
\begin{equation}
A = \sum_{n \geq 0} \Delta_{X(n)}^{\Lambda}(A) = \sum_{n=0}^N \Delta_{X(n)}^{\Lambda}(A) 
\label{telescopicA}\end{equation}
where this telescopic sum has terms with explicit, local support. 

For a quasi-local observable $A\in \cA_\Gamma$ and $X\in \cP_0(\Gamma)$, the conditional convergence of the infinite-volume analog of \eqref{telescopicA}, namely
\be \label{CondConvergence}
A = \sum_{n \geq 0}\Delta_{X(n)}(A), 
\ee
follows from noticing that $\Pi_{X(N)}(A) = \sum_{n=0}^N\Delta_{X(n)}(A)$, and invoking \eqref{PiXLimit}. In Section~\ref{sec:qlm_gen}, we will discuss situations
in which \eqref{CondConvergence} converges absolutely. 
The remainder of this section is concerned with continuity properties and basic estimates for the local approximations $\Pi_X$.

\subsubsection{Continuity of local approximations} \label{sec:cont_l_a}

Given finite sets $X \subset \Lambda \subset \Gamma$, Proposition~\ref{prop:strongcontinuity}(ii) implies that the projection map $\Pi_X^{\Lambda}$ preserves continuity in the strong operator topology. In particular, if $t \mapsto A(t) \in \cA_{\Lambda}$ is strongly continuous for all $t$ in an interval $I\subseteq \bR$, then $ t \mapsto \Pi_X^{\Lambda}(A(t)) \in \cA_{\Lambda}$ is also
strongly continuous. In applications, we will be interested in a sequence of strongly continuous functions $t\mapsto A_{\Lambda_n}(t)\in\cA_{\Lambda_n}$, with $\Lambda_n \uparrow \Gamma$, that converges to a bounded map $t\mapsto A(t) \in \cA_{\Gamma}$. It will then be desirable that the localizing projections $\Pi_Y(A(t))$, $Y\in \cP_0(\Gamma)$, also satisfy 
certain continuity properties. 

While we do not have the standard von Neumann algebra setting where the notion of locally normal is more natural, it is convenient to define a similar notion in our setting with $C^*$-algebras without reference to a representation.

\begin{defn}\label{def:locally_normal_maps}
	A linear map $\cK:\cA_\Gamma^{\rm loc}\to\cA_\Gamma$ is called {\em locally normal} if there exists an increasing, exhaustive sequence $\{ \Lambda_n \}_{n \geq 0}$
	of finite subsets of $\Gamma$ and corresponding bounded linear transformations 
	$\cK^{\Lambda_n}\in\cB(\cA_{\Lambda_n})$ with the following properties:
	\begin{enumerate}
		\item[(i)] For all $n$, $\cK^{\Lambda_n}:\cA_{\Lambda_n} \to \cA_{\Lambda_n}$ is continuous on bounded subsets 
		when both its domain and co-domain are considered with the strong operator topology;
		\item[(ii)] {\em Local uniform convergence} of $\cK^{\Lambda_n}$ to $\cK$: For all $X \subset \Gamma$ finite and any $\epsilon>0$, 
		there exists $N$ such that for all $n\geq N$ we have
		\be
		\Vert \cK(A) - \cK^{\Lambda_n}(A)\Vert \leq \epsilon\Vert A\Vert, \mbox{ for all } A\in\cA_X.
		\ee
	\end{enumerate}
\end{defn}

Note that {\em local uniform convergence} implies that $\cK^{\Lambda_n}$ converges strongly to $\cK$. However, since $N$ is allowed to depend on $X$, this convergence 
is in general not uniform in $\cP_0(\Gamma)$. If $\cH_{\Lambda_n}$ is finite-dimensional, property (i) is automatically satisfied. 
Let us now consider an example satisfying Definition~\ref{def:locally_normal_maps}.

\begin{ex} \label{ex_der_sec_4} Consider a quantum lattice system comprised of $(\Gamma, d)$ and $\cA_{\Gamma}$.
Let $F$ be an $F$-function on $(\Gamma, d)$ and $\Phi \in \mathcal{B}_F$ be an interaction. The map
$\mathcal{K} : \cA_{\Gamma}^{\rm loc} \to \cA_{\Gamma}$ given by
\begin{equation}
\mathcal{K}(A) = \sum_{Z \in \mathcal{P}_0( \Gamma)} [ \Phi(Z), A ] \quad \mbox{for any } A \in \cA_{\Gamma}^{\rm loc} 
\end{equation}
is locally normal in the sense of Definition~\ref{def:locally_normal_maps}.
In fact, let $\{ \Lambda_n \}_{n \geq 0}$ be any sequence of non-empty, finite subsets of $\Gamma$ that are increasing and
exhaustive. For each $n \geq 0$, define $\mathcal{K}^{\Lambda_n} : \cA_{\Lambda_n} \to \cA_{\Lambda_n}$ by setting
\begin{equation} \label{fvol_der_4}
\mathcal{K}^{\Lambda_n}(A) = \sum_{Z \subset \Lambda_n}  [ \Phi(Z), A ] \quad \mbox{for any } A \in \cA_{\Lambda_n} \, . 
\end{equation} 
Fix $n \geq 0$, let $X \subset \Lambda_n$ and $A \in \cA_X$. One checks that
\begin{equation} \label{fvol_der_4_simp}
\mathcal{K}^{\Lambda_n}(A) = \sum_{\stackrel{Z \subset \Lambda_n:}{Z \cap X \neq \emptyset}}  [ \Phi(Z), A ]
\end{equation}
and therefore,
\begin{equation} \label{fvol_der_4_est}
\| \mathcal{K}^{\Lambda_n}(A)  \| \leq \sum_{x \in X} \sum_{z \in \Lambda_n} \sum_{\stackrel{Z \subset \Lambda_n:}{x,z \in Z}} \| [ \Phi(Z), A ] \| \leq 2 \| \Phi \|_F \| F \| |X| \| A \| 
\end{equation}
holds for any $A \in \cA_X$, where we have used that $\Phi \in \mathcal{B}_F$. Taking $X = \Lambda_n$, one sees that $\mathcal{K}^{\Lambda_n} \in \mathcal{B}( \cA_{\Lambda_n})$. With $n \geq 0$ fixed again, let $Z \subset \Lambda_n$. It is clear that $A \mapsto [ \Phi(Z), A ]$ satisfies Definition~\ref{def:locally_normal_maps} (i) on $\cA_{\Lambda_n}$. As a finite sum of such terms, it is clear that $\mathcal{K}^{\Lambda_n}$ satisfies Definition~\ref{def:locally_normal_maps} (i) as well. Lastly, let $X \in \mathcal{P}_0( \Gamma)$, $A \in \mathcal{A}_X$, and $N \geq 1$ be sufficiently large
so that $X \subset \Lambda_N$. Again, one checks that for any $n \geq N$
\begin{equation}
\mathcal{K}(A) - \mathcal{K}^{\Lambda_n}(A) = \sum_{\stackrel{Z \in \mathcal{P}_0( \Gamma):}{Z \cap X \neq \emptyset, Z \cap ( \Gamma \setminus \Lambda_n) \neq \emptyset}} [ \Phi(Z), A] 
\end{equation}
and therefore,
\begin{equation}
\| \mathcal{K}(A) - \mathcal{K}^{\Lambda_n}(A) \| \leq 2 \| A \| \| \Phi \|_F \sum_{x \in X} \sum_{y \in \Gamma \setminus \Lambda_n} F(d(x,y)) \, .
\end{equation} 
Since $|X|< \infty$, Definition~\ref{def:locally_normal_maps} (ii) holds as $F$ is summable.
\end{ex}

A simple consequence of Definition~\ref{def:locally_normal_maps} (i) is the following: for each $n \geq 0$,  $t\mapsto \cK^{\Lambda_n} (A(t))$ is strongly continuous if $t\mapsto A(t)$ is strongly continuous. The next lemma establishes that the same property holds for the composition $\Pi_Y(\cK(A(t)))$ for any $Y\in\cP_0(\Gamma)$.

\begin{lem} \label{lem:picont}
	Let $X,Y\in\cP_0(\Gamma)$, $\Pi_Y:\cA_\Gamma\to \cA_Y$ be the extension of the map defined in \eq{PiX}, 
	and  $\cK:\cA_\Gamma^{\rm loc}\to\cA_\Gamma$ be a locally normal map. Then, for every strongly continuous map $t\mapsto A(t)\in\cA_X$ defined on an interval $I\subset\Rl$, the function $t\mapsto \Pi_Y(\cK(A(t)))\in\cA_Y$, is also strongly continuous.
\end{lem}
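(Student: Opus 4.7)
My plan is to exploit the local normality of $\cK$ to reduce the problem to a sequence of finite-volume approximations, each of which is manifestly strongly continuous, and then use the local uniform convergence built into Definition~\ref{def:locally_normal_maps} to pass to the limit. Since strong continuity is a pointwise property, I fix $t_0 \in I$ and a compact interval $J \subset I$ containing a neighborhood of $t_0$. By the Uniform Boundedness Principle applied to the strongly continuous $t \mapsto A(t)$, we have $M_J := \sup_{t \in J}\|A(t)\| < \infty$.

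Let $\{\Lambda_n\}_{n \geq 1}$ and $\cK^{\Lambda_n} \in \cB(\cA_{\Lambda_n})$ be the sequence witnessing the local normality of $\cK$. Choose $N_0$ large enough that $X \cup Y \subset \Lambda_n$ for all $n \geq N_0$, and for each such $n$ define
\[
f_n(t) := \Pi_Y\bigl(\cK^{\Lambda_n}(A(t))\bigr), \qquad f(t) := \Pi_Y\bigl(\cK(A(t))\bigr).
\]
For fixed $n \geq N_0$, the map $t \mapsto \cK^{\Lambda_n}(A(t))$ takes values in $\cA_{\Lambda_n}$, is uniformly bounded on $J$ by $\|\cK^{\Lambda_n}\| M_J$, and is strongly continuous by Definition~\ref{def:locally_normal_maps}(i) applied to the strongly continuous, locally bounded function $t \mapsto A(t)$. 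Since $Y \subset \Lambda_n$, Proposition~\ref{prop:compatibility}(ii) identifies $\Pi_Y$ on $\cA_{\Lambda_n}$ with $\Pi_Y^{\Lambda_n}$, which by construction \eqref{PiXLambda} has precisely the form $\tilde\Pi_\rho$ analyzed in Proposition~\ref{prop:strongcontinuity}(ii) and hence preserves strong continuity on bounded sets. Consequently each $f_n$ is strongly continuous on $J$.

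Next, I upgrade this to strong continuity of $f$. By Definition~\ref{def:locally_normal_maps}(ii) applied to the finite set $X$, given any $\epsilon > 0$ there exists $N \geq N_0$ such that $\|\cK(B) - \cK^{\Lambda_n}(B)\| \leq \epsilon \|B\|$ for all $B \in \cA_X$ and $n \geq N$. Setting $B = A(t)$ and using $\|\Pi_Y\| \leq 1$ yields
\[
\sup_{t \in J} \|f(t) - f_n(t)\| \leq \epsilon M_J \qquad \text{for all } n \geq N,
\]
so $f_n \to f$ uniformly in norm on $J$. A standard three-epsilon argument then transfers strong continuity from the $f_n$ to $f$: for any $\psi \in \cH_Y$ and $t \in J$,
\[
\|f(t)\psi - f(t_0)\psi\| \leq 2\|\psi\|\sup_{s \in J}\|f(s) - f_n(s)\| + \|f_n(t)\psi - f_n(t_0)\psi\|,
\]
and each piece is made small by first choosing $n$ large and then $t$ close to $t_0$.

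I do not anticipate a serious obstacle; the only point requiring care is ensuring that the bounded-set hypothesis in Proposition~\ref{prop:strongcontinuity}(ii) and Definition~\ref{def:locally_normal_maps}(i) is satisfied uniformly in $t$, which is exactly what local boundedness of $t \mapsto A(t)$ on compact subintervals guarantees.
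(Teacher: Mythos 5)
Your proof is correct and follows essentially the same strategy as the paper: reduce to the finite-volume approximants $\cK^{\Lambda_n}$, show each $f_n$ is strongly continuous via Definition~\ref{def:locally_normal_maps}(i) and Proposition~\ref{prop:strongcontinuity}, and then pass to the uniform limit on compacts using the local uniform convergence in Definition~\ref{def:locally_normal_maps}(ii) together with $\|\Pi_Y\|=1$. The paper phrases this with the Hilbert-space-valued functions $f_n(t)=\Pi_Y(\cK^{\Lambda_n}(A(t)))\psi$ and invokes uniform limits of continuous functions rather than writing out the three-epsilon estimate explicitly, but these are the same argument; your insistence on $Y\subset\Lambda_n$ is a slightly more careful (and harmless) hygiene point than the paper's "$X\subset\Lambda_n$."
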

\begin{proof}
	For $\psi\in\cH_Y$, define $f(t) = \Pi_Y (\cK(A(t)))\psi$. We will prove that $f:I\to\cH_Y$ is continuous by showing that on compact intervals it is the uniform limit of a sequence of continuous functions. Let $\cK_{\Lambda_n}$ be a sequence of maps of the type described in Definition~\ref{def:locally_normal_maps}.  For $n$ large enough so that $X \subset \Lambda_n$, define $f_n(t) = \Pi_Y^{\Lambda_n} (\cK^{\Lambda_n}(A(t)))\psi$. Since $\cK$ is locally normal, each $f_n$ is continuous; here we are using Propostion~\ref{prop:strongcontinuity}(i) and that $t\mapsto \cK^{\Lambda_n}(A(t))$ is strongly continuous by Definition~\ref{def:locally_normal_maps}(i). Using compatibility, see Proposition~\ref{prop:compatibility}(ii), it is clear that $f_n(t) = \Pi_Y (\cK^{\Lambda_n}(A(t)))\psi$. Now for any compact set $J \subset I$, the estimate
	\bea
	\sup_{t\in J}\Vert f_n(t) - f(t)\Vert
	&\leq& \Vert \Pi_Y\Vert \sup_{t\in J} \Vert \cK^{\Lambda_n}(A(t)) -\cK(A(t))\Vert \Vert \psi\Vert\nonumber\\
	&\leq & \epsilon \Vert\psi\Vert\,\sup_{t\in J} \Vert A(t)\Vert,
	\eea
	follows from $\Vert \Pi_Y\Vert =1$ and local uniform convergence. 
	Since $A(t)$ is locally bounded and $J$ is compact, this proves the claim.
\end{proof}

Two comments are in order. First, by Proposition \ref{prop:strongcontinuity}(ii), for any $Z\in\cP_0(\Gamma)$ such that $Y\subset Z$, $t \mapsto \Pi_Y(\cK(A(t)))$ considered as a map into $\cA_Z$ is also strongly continuous. Second, if $t\mapsto A(t)$ is, in fact, continuous in the norm topology on $\cA_X$ (in particular, if $\dim \cH_X < \infty$), and $\cK$ is bounded, the result of the Lemma~\ref{lem:picont} is trivial since the bounded linear map $\Pi_Y \circ \cK$ preserves the norm-continuity.

We will also encounter one-parameter families of locally normal maps, $\{\cK_s \mid s\in I\}$, that are strongly continuous in $s$ and {\em uniformly} locally normal in the sense of the following definition. 

\begin{defn}\label{def:uniformly_locally_normal_maps}
	Let $I\subset \Rl$ be an interval. A family of linear maps $\cK_s:\cA_\Gamma^{\rm loc}\to\cA_\Gamma$, $s\in I$, is called a {\em strongly continuous family of uniformly locally normal maps} if there exists an increasing, exhaustive sequence $\{ \Lambda_n \}_{n \geq 0}$
	of finite subsets of $\Gamma$ and families of bounded linear maps
	$\cK^{\Lambda_n}_s\in\cB(\cA_{\Lambda_n})$ strongly continuous in $s$, with the following properties:
	\begin{enumerate}
		\item[(i)] For all $n$ and $s$, $\cK^{\Lambda_n}_s:\cA_{\Lambda_n} \to \cA_{\Lambda_n}$ is continuous on bounded subsets 
		when both its domain and co-domain are considered with the strong operator topology, and this continuity is 
		uniform for $s\in I$. 
		\item[(ii)] {\em Uniform local convergence} of $\cK^{\Lambda_n}_s$ to $\cK_s$: For all $X \subset \Gamma$ finite and any $\epsilon>0$, 
		there exists $N$ such that for all $n\geq N$ we have
		\be
		\Vert \cK_s(A) - \cK^{\Lambda_n}_s(A)\Vert \leq \epsilon\Vert A\Vert, \mbox{ for all } A\in\cA_X, \mbox{ and all } s\in I.
		\label{uniformlocalconvergence}\ee
	\end{enumerate}
\end{defn}

In (i), uniform for $s\in I$, means that given a bounded net $\{A_\alpha\}_{\alpha\in I}$ converging strongly to $A$ and $\epsilon >0$, there exists a choice of $\alpha_0\in I$,
independent of $s$, so that
\[\|\cK^{\Lambda_n}_s(A_\alpha)\psi -\cK^{\Lambda_n}_s(A)\psi \| < \epsilon, \mbox{ for all } \alpha \geq \alpha_0. \]

For families $\cK_s$, $s\in I$ with $I$ an infinite interval, the uniformity asked for in part (ii) of this definition will typically not hold and one is led to consider subfamilies parametrized by $s\in I_0\subset I$, for compact intervals $I_0$. Also note that the properties of a strongly continuous family of uniformly locally normal maps imply
that $s\to \cK_s$ is strongly continuous by the usual $\epsilon/3$ argument. We have not assumed, however, that the maps $\cK_s$ are bounded. In general, $\cK_s$ is only locally
bounded and cannot be extended to all of $\cA_\Gamma$.

We now discuss two examples. The first is for a model with uniformly bounded on-sites, while 
the second does not require this assumption.  

\begin{ex} \label{ex_dyn_sec_4} Consider a quantum lattice system comprised of $(\Gamma, d)$ and $\cA_{\Gamma}$.
Let $F$ be an $F$-function on $(\Gamma, d)$, $I \subset \mathbb{R}$ be an interval, and $\Phi \in \mathcal{B}_F(I)$ be a
strongly continuous interaction. For each $s_0 \in I$ and any compact $I_0 \subset I$, we claim that
$\mathcal{K}_t : \cA_{\Gamma}^{\rm loc} \to \cA_{\Gamma}$ given by
\begin{equation} \label{ex_ivdyn}
\mathcal{K}_t(A) = \tau_{t, s_0}(A) \quad \mbox{for any } A \in \cA_{\Gamma}^{\rm loc} \mbox{ and } t \in I_0 \,,
\end{equation}
is a strongly continuous family of uniformly locally normal maps in the sense of Definition~\ref{def:uniformly_locally_normal_maps}. 
Here, the dynamics in (\ref{ex_ivdyn}) we are using is the infinite volume dynamics corresponding to $\Phi$, from Theorem~\ref{thm:existbd}, with $H_z=0$ for all $z \in \Gamma$.  

To see that this is an example of Definition~\ref{def:uniformly_locally_normal_maps}, let $\{ \Lambda_n \}_{n \geq 0}$ be any
non-empty sequence of finite subsets of $\Gamma$ that are increasing and exhaustive. For each $n \geq 0$ and any $t \in I_0$, define
$\mathcal{K}_t^{\Lambda_n} : \cA_{\Lambda_n} \to \cA_{\Lambda_n}$ by setting
\begin{equation}
\mathcal{K}_t^{\Lambda_n}(A) = \tau_{t,s_0}^{\Lambda_n}(A) \quad \mbox{for any } A \in \cA_{\Lambda_n} \, ,
\end{equation}
the finite-volume dynamics associated to $\Phi$. 

Fix $n \geq 0$, $t_0 \in I_0$, and $A \in \mathcal{A}_{\Lambda_n}$. It is clear that
\begin{equation}
\mathcal{K}_t^{\Lambda_n}(A) - \mathcal{K}_{t_0}^{\Lambda_n}(A) = \int_{t_0}^t \tau_{r, s_0}^{\Lambda_n}( [iH_{\Lambda_n}^{\Phi}(r), A] ) \, dr
\end{equation}
holds, in the strong sense, for any $t \in I_0$. Thus
\begin{equation} \label{ex_dyn_cont}
\| \mathcal{K}_t^{\Lambda_n}(A) - \mathcal{K}_{t_0}^{\Lambda_n}(A) \| \leq 2 \| F \| | \Lambda_n | \| A \| \int_{ {\rm min}(t_0,t)}^{ {\rm max}(t_0,t)} \| \Phi \|_F(r) \, dr 
\end{equation} 
and therefore, $\mathcal{K}_t^{\Lambda_n} \in \mathcal{B}( \cA_{\Lambda_n})$ is strongly continuous in $t$ for each $n \geq 0$. 
Here we have argued as in the proof of (\ref{int_dyn_est_small_R}) in Corollary~\ref{app:cor:int_bd}
using that $\| \Phi \|_F$ is locally integrable.

For each $n \geq 0$, one can show that property (i) of Definition~\ref{def:uniformly_locally_normal_maps} holds by arguing as in the proof of Proposition~\ref{properties_of_expiths}(iii), and using that $I_0$ is compact and $\| \Phi \|_F$ is locally bounded. 

Finally, we observe that (ii) is a simple consequence of Corollary~\ref{cor:continuity} (iii). 
\end{ex}

\begin{ex} \label{ex_wio_sec_4}
Consider a quantum lattice system comprised of $(\Gamma, d)$ and $\cA_{\Gamma}$.
Fix a collection of densely defined, self-adjoint on-site Hamiltonians $\{ H_z \}_{z \in \Gamma}$. 
Let $F$ be an $F$-function on $(\Gamma, d)$, $I \subset \mathbb{R}$ be an interval, and 
take $\Phi_s \in \mathcal{B}_F( \mathbb{R})$ for each $s \in I$. In this case, for any 
$w \in L^1( \mathbb{R})$ the family $\{ \mathcal{K}_s \}_{s \in I}$ of linear maps with
$\mathcal{K}_s : \cA_{\Gamma}^{\rm loc} \to \cA_{\Gamma}$ given by
 \begin{equation}
\mathcal{K}_s(A) = \int_{\mathbb{R}} \tau_t^s(A) \, w(t) \, dt \quad \mbox{for all } A \in \mathcal{A}_{\Gamma}^{\rm loc} \mbox{ and } s \in I \, 
\end{equation}
is well-defined. Here for each fixed $s \in I$, $\tau_t^{s}$ is the infinite-volume dynamics corresponding to $\Phi_s$ whose
existence is proven in Theorem~\ref{thm:existbd}.
 
We will show that, under some additional assumptions on $\Phi_s$, for each compact $I_0 \subset I$,
$\{ \mathcal{K}_s \}_{s \in I_0}$ is a strongly continuous family of uniformly locally normal maps in the sense of Definition~\ref{def:uniformly_locally_normal_maps}. These assumptions are:
\begin{enumerate}
	\item[(i)] For each $Z \in \mathcal{P}_0( \Gamma)$, $(s,t) \mapsto \Phi_s(Z,t)$ is 
	jointly strongly continuous on $I_0\times \Rl$.
	\item[(ii)]  For each $Z \in \mathcal{P}_0( \Gamma)$ and $t \in \mathbb{R}$, $s \mapsto \Phi_s(Z,t)$
	is strongly differentiable, and its derivative $(s,t) \mapsto \Phi_s'(Z,t)$ is jointly strongly continuous on $I_0\times \Rl$.
	\item[(iii)] For each $s \in I_0$, $\Phi_s' \in \mathcal{B}_F(\mathbb{R})$ and moreover, for each
	$T>0$ 
	\begin{equation}
	\sup_{s \in I_0} \int_{-T}^T \| \Phi_s \|_F(t) \, dt < \infty ,\quad \sup_{s \in I_0} \int_{-T}^T \| \Phi_s' \|_F(t) \, dt < \infty.
	\end{equation}
\end{enumerate}

To prove the above claim, choose any sequence $\{ \Lambda_n \}_{n \geq 0}$ of non-empty, increasing, exhaustive finite subsets of $\Gamma$. 
For each such $n \geq 0$ and any $s \in I_0$, define approximating maps  $\mathcal{K}_s^{\Lambda_n} : \mathcal{A}_{\Lambda_n} \to \cA_{\Lambda_n}$
by setting  
\begin{equation}
\mathcal{K}_s^{\Lambda_n}(A) = \int_{\mathbb{R}} \tau_t^{\Lambda_n, s}(A) \, w(t) \, dt \quad \mbox{for all } A \in \mathcal{A}_{\Lambda_n} \mbox{ and } s \in I_0 \, .
\end{equation}
Here, $\tau_t^{\Lambda_n, s}(A) = U_{\Lambda_n}^s(t,0)^* AU_{\Lambda_n}^s(t,0)$ is the dynamics generated by the finite volume Hamiltonian
\begin{equation}
H_{\Lambda_n}^s(t) = \sum_{z \in \Lambda_n} H_z + \sum_{Z \subset \Lambda_n} \Phi_s(Z, t) .
\end{equation} 

We first show that for each $n \geq 0$, the map $\mathcal{K}_s^{\Lambda_n} \in \mathcal{B}( \cA_{\Lambda_n})$ is
strongly continuous. In fact, the argument below demonstrates that $\mathcal{K}_s^{\Lambda_n}$ is uniformly continuous
in the operator norm on $\mathcal{B}( \cA_{\Lambda_n})$.

Fix $n \geq 0$. Let $s \in I_0$ and $A \in \cA_{\Lambda_n}$. Assumptions (i) and (ii) above guarantee that 
the strong derivative of the finite-volume dynamics satisfies
\begin{equation}
\frac{d}{ds} \tau_t^{\Lambda_n,s}(A) = i \sum_{Z \subset \Lambda_n} \int_0^t \tau_r^{\Lambda_n, s}([\Phi'_s(Z,r), \tau_{t,r}^{\Lambda_n,s}(A)]) \, dr  \, .
\end{equation}
Using Assumption (iii) and the estimate (\ref{norm_bd_dyn_der2}), for any $X \subset \Lambda_n$, each $A \in \cA_X$, and any $T>0$, there exists $M>0$
such that
\begin{equation}
\sup_{s \in I_0} \sup_{t \in [-T, T]} \left\| \frac{d}{ds} \tau_t^{\Lambda_n,s}(A) \right\| \leq 2 \| A \| \| F \| |X| M e^{2 C_F M} 
\end{equation}

Now, let $s_0 \in I_0$ and take $\epsilon >0$. Since $w \in L^1( \mathbb{R})$, it is clear that there exists $T>0$ for which
\begin{equation} \label{ep_tail_of_W}
\int_{|t| > T} |w(t)| \, dt \leq \frac{\epsilon}{4}  .
\end{equation} 
A short calculation shows that
\begin{equation}
\left\| \int_{|t| \leq T} \left( \tau_t^{\Lambda_n, s}(A) - \tau_t^{\Lambda_n, s_0}(A)\right) w(t) \, dt \right\| \leq 2 \| A \| \| F \| |X| M e^{2 C_FM} \| w \|_1 |s-s_0|  
\end{equation}
and thus for $s$ sufficiently close to $s_0$,
\begin{equation}
\| \mathcal{K}_s^{\Lambda_n}(A)  - \mathcal{K}_{s_0}^{\Lambda_n}(A)  \| \leq \epsilon \| A \| .
\end{equation}
This proves the claimed continuity of $\mathcal{K}_s^{\Lambda_n}$ as a function of $s$.

We now prove (i). Again fix $n \geq 0$. 
Let $\{ A_{\alpha} \}_{\alpha \in \mathcal{I}}$ be a bounded net in $\cA_{\Lambda_n}$ which converges in the
strong operator topology to $A$. Denote by $B = \sup_{\alpha \in \mathcal{I}} \| A_{\alpha} \| < \infty$, and note that $\| A \| \leq B$. 
Let $\epsilon >0$. Take $T>0$ as in (\ref{ep_tail_of_W}) and define $\delta >0$ by requiring that 
\begin{equation}
4 \| F \| |\Lambda_n| M e^{2 C_FM} \| w \|_1 \delta \leq \epsilon  
\end{equation} 
For this choice of $\delta>0$, compactness of $I_0$ implies that there is some $N \geq 1$ and numbers $s_1, s_2, \cdots, s_N \in I_0$ for which the balls of
radius $\delta$ centered at $s_j$ ($1 \leq j \leq N$) cover $I_0$. 

For each $1 \leq j \leq N$, it is clear that $\mathcal{K}_{s_j}^{\Lambda_n}(A_\alpha) \to 
\mathcal{K}_{s_j}^{\Lambda_n}(A)$ in the strong operator topology.
In this case, for any $\psi \in \mathcal{H}_{\Lambda_n}$ and $\epsilon$ as above, there is an $\alpha_0 \in \mathcal{I}$ for which 
\begin{equation}
\left\| \left( \mathcal{K}_{s_j}^{\Lambda_n}(A_\alpha) - \mathcal{K}_{s_j}^{\Lambda_n}(A) \right) \psi \right\| \leq \epsilon B \| \psi \| 
\end{equation}
for all $1\leq j \leq N$ whenever $\alpha \geq \alpha_0$. 

In this case, for any $s \in I_0$ there is a value of $j$ for which 
\begin{eqnarray}
\left\| \left( \mathcal{K}_{s}^{\Lambda_n}(A_\alpha) - \mathcal{K}_{s}^{\Lambda_n}(A) \right) \psi \right\| & \leq & \left\| \left( \mathcal{K}_{s}^{\Lambda_n}(A_\alpha) - \mathcal{K}_{s_j}^{\Lambda_n}(A_{\alpha}) \right) \psi \right\| \nonumber \\
& \mbox{ } & \quad + \left\| \left( \mathcal{K}_{s_j}^{\Lambda_n}(A_\alpha) - \mathcal{K}_{s_j}^{\Lambda_n}(A) \right) \psi \right\|  \nonumber \\
& \mbox{ } & \quad + \left\| \left( \mathcal{K}_{s_j}^{\Lambda_n}(A) - \mathcal{K}_{s}^{\Lambda_n}(A) \right) \psi \right\| \nonumber \\
& \leq & 3 \epsilon B \| \psi \|
\end{eqnarray}
and we have proven (i).

Lastly, we need to verify uniform local convergence of $\mathcal{K}_s^{\Lambda_n}$ to $\mathcal{K}_s$. 
Fix $X \in \mathcal{P}_0( \Gamma)$ and $A \in \cA_X$. Let $\epsilon >0$. Choose $T>0$ as in (\ref{ep_tail_of_W}).
For any $n \geq 0$ such that $X \subset \Lambda_n$, Corollary 3.6 (iii) implies 
\begin{equation}
\| \tau_t^s(A) - \tau_t^{\Lambda_n, s}(A) \| \leq \frac{2 \| A \|}{C_F} e^{2I_{t,0}(\Phi_s)} I_{t,0}( \Phi_s) \sum_{x \in X} \sum_{y \in \Gamma \setminus \Lambda_n} F(d(x,y)).
\end{equation}
Since $F$ is summable, for each $x \in X$ there exists $\Lambda_x \in \mathcal{P}_0( \Gamma)$ such that
\begin{equation}
\sum_{y \in \Gamma \setminus \Lambda_x} F(d(x,y)) \leq \epsilon \frac{e^{-2 C_F M}}{4 |X| \| w \|_1 M}.
\end{equation}
For $n \geq 0$ sufficiently large that $X \subset \Lambda_n$ and $\bigcup_{x \in X} \Lambda_x \subset \Lambda_n$, one has that
\begin{eqnarray}
\| \mathcal{K}_s(A) - \mathcal{K}_s^{\Lambda_n}(A) \| & \leq & \int_{|t| \leq T} \|  \tau_t^s(A) - \tau_t^{\Lambda_n, s}(A) \| \, |w(t)| \,dt + 2 \| A \| \int_{|t| >T} |w(t)| \, dt \nonumber \\
& \leq & 2 \| w \|_1 \| A \| e^{2C_FM} M \sum_{x \in X} \sum_{y \in \Gamma \setminus \Lambda_n} F(d(x,y)) + \epsilon \| A\| /2 \nonumber \\
& \leq & \epsilon \| A \|
\end{eqnarray}
from which the claim is proved.
\end{ex}

Our next result shows that given a strongly continuous family of uniformly locally normal maps $\cK_s$ and any $Y\in \cP_0(\Gamma)$, the map $(s,t) \mapsto \Pi_Y (\cK_s(A(t)))$, is jointly strongly continuous whenever $t\mapsto A(t)$ is strongly continuous. In particular, we have continuity on the diagonal $t=s$. The result, of course, also applies to the finite-volume setting where we can take $Y=\Lambda=\Gamma$.

\begin{lem}\label{lem:s-t_continuity} Let $I$ and $J$ be intervals, $X,Y\in\cP_0(\Gamma)$, $\Pi_Y:\cA_\Gamma\to \cA_Y$ be the extension of the map defined in \eq{PiX}, 
	and $\cK_s:\cA_\Gamma^{\rm loc}\to\cA_\Gamma$, $s\in I$ be a strongly continuous family of uniformly locally normal maps. Then, for every strongly continuous $t\mapsto A(t)\in\cA_X$, $t\in J$,
	the function $(s,t)\mapsto \Pi_Y(\cK_s(A(t)))\in\cA_Y$ is jointly strongly continuous in $t$ and $s$.
\end{lem}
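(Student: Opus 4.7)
The plan is to prove joint strong continuity at an arbitrary point $(s_0,t_0)\in I\times J$ by a standard $\epsilon/3$-argument: first replace $\cK_s$ by a finite-volume approximation $\cK_s^{\Lambda_n}$ using the uniform local convergence, then establish joint strong continuity of the finite-volume version, and finally take the uniform limit.

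Fix compact subintervals $I_0\subset I$ and $J_0\subset J$ containing $s_0$ and $t_0$ in their interiors. Since $t\mapsto A(t)$ is strongly continuous, the Uniform Boundedness Principle gives $M:=\sup_{t\in J_0}\Vert A(t)\Vert<\infty$. For any $\epsilon>0$, the uniform local convergence in Definition~\ref{def:uniformly_locally_normal_maps}(ii), combined with $\Vert\Pi_Y\Vert\le 1$ and $A(t)\in\cA_X$, yields an $N$ such that for all $n\ge N$
\[
\sup_{s\in I_0,\,t\in J_0}\bigl\Vert\Pi_Y(\cK_s(A(t)))-\Pi_Y(\cK_s^{\Lambda_n}(A(t)))\bigr\Vert\le\epsilon M.
\]
Thus the proof reduces to showing that, for each fixed $n$ large enough that $X\cup Y\subset\Lambda_n$, the map $(s,t)\mapsto\Pi_Y(\cK_s^{\Lambda_n}(A(t)))$ is jointly strongly continuous; indeed, uniform norm approximation by a family of strongly continuous functions preserves strong continuity at $(s_0,t_0)$.

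For fixed $n$ with $X\cup Y\subset\Lambda_n$, by Proposition~\ref{prop:compatibility}(ii) we may identify $\Pi_Y$ on $\cA_{\Lambda_n}$ with $\Pi_Y^{\Lambda_n}$. I would split
\[
\Pi_Y^{\Lambda_n}\bigl(\cK_s^{\Lambda_n}(A(t))\bigr)-\Pi_Y^{\Lambda_n}\bigl(\cK_{s_0}^{\Lambda_n}(A(t_0))\bigr)=\Pi_Y^{\Lambda_n}\bigl(\cK_s^{\Lambda_n}(A(t))-\cK_s^{\Lambda_n}(A(t_0))\bigr)+\Pi_Y^{\Lambda_n}\bigl(\cK_s^{\Lambda_n}(A(t_0))-\cK_{s_0}^{\Lambda_n}(A(t_0))\bigr).
\]
For the second term, strong continuity of $s\mapsto\cK_s^{\Lambda_n}(A(t_0))$ together with Proposition~\ref{prop:strongcontinuity}(i) applied to the conditional expectation $\Pi_Y^{\Lambda_n}$ shows that the second term converges strongly to $0$ as $s\to s_0$. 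For the first term, observe that $\{A(t):t\in J_0\}$ is norm-bounded by $M$, and $A(t)\to A(t_0)$ strongly as $t\to t_0$. Definition~\ref{def:uniformly_locally_normal_maps}(i) guarantees that $\cK_s^{\Lambda_n}$ is strongly continuous on bounded sets \emph{uniformly in $s\in I_0$}, so $\cK_s^{\Lambda_n}(A(t))-\cK_s^{\Lambda_n}(A(t_0))\to 0$ strongly as $t\to t_0$, uniformly in $s\in I_0$; applying $\Pi_Y^{\Lambda_n}$, which is norm $1$ and strongly continuous on bounded sets by Proposition~\ref{prop:strongcontinuity}, preserves this conclusion. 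Adding the two bounds proves joint strong continuity of the finite-volume map at $(s_0,t_0)$, and combining with the first step finishes the proof.

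The main subtlety is that one cannot fix $s$ and $t$ separately and argue by iterated limits; joint continuity requires that the strong-continuity estimate for $\cK_s^{\Lambda_n}$ applied to a $t$-varying argument be \emph{uniform in} $s$ on compact sets. This is exactly the content of the ``uniform for $s\in I$'' clause in Definition~\ref{def:uniformly_locally_normal_maps}(i) and is the reason that notion, rather than mere strong continuity in each variable, is built into the hypothesis.
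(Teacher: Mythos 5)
Your proof is correct and takes essentially the same approach as the paper's: an $\epsilon/3$-argument that first replaces $\cK_s$ by the finite-volume maps $\cK_s^{\Lambda_n}$ via the uniform local convergence in Definition~\ref{def:uniformly_locally_normal_maps}(ii), then invokes the uniform-in-$s$ strong continuity of Definition~\ref{def:uniformly_locally_normal_maps}(i) together with Proposition~\ref{prop:strongcontinuity} for the finite-volume part. The only organizational difference is that you do the finite-volume reduction once, uniformly over compact $(s,t)$, and then split into a $t$-variation and an $s$-variation, whereas the paper splits $f(s,t)-f(s_0,t_0)$ into $s$-continuity at $t_0$ plus $t$-equicontinuity and inserts the finite-volume approximation inside each part; both schemes hinge on the same observation (which you assert by citing Proposition~\ref{prop:strongcontinuity} and the paper flags as ``the argument used in'' that proposition) that the $\epsilon$--$\delta$ estimate for $\Pi_Y$ on a bounded set depends only on the norm bound and a finite family of test vectors, so $\Pi_Y$ preserves not just strong convergence but uniform-in-$s$ strong convergence.
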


\begin{proof}
	By the assumptions, there exists an increasing, exhaustive sequence $\{ \Lambda_n \}_{n \geq 0}$
	of finite subsets of $\Gamma$ and bounded linear transformations $\cK_{s}^{\Lambda_n}\in\cB(\cA_{\Lambda_n})$, strongly continuous in $s$, that approximate $\cK_s$ as in Definition \ref{def:uniformly_locally_normal_maps}. 
	
	Let $X,Y\in\cP_0(\Gamma)$, $\psi\in\cH_Y$, and $t\to A(t)\in \cA_X$ be strongly continuous. Define $f(s,t) = \Pi_Y(\cK_s(A(t)))\psi \in \cH_Y$. We prove that $f(s,t)$ is jointly continuous. Without loss of generality, we may assume that $I$ is compact. Fix $(s_0, \, t_0)$. Since
	\[
	\|f(s,t)-f(s_0,t_0)\| \leq \|f(s,t)-f(s,t_0)\| + \|f(s,t_0)-f(s_0,t_0)\|,
	\]
	the joint continuity of $f(t,s)$ can be obtained by proving the following two properties:
	\begin{enumerate}
		\item[(a)] $f(s,\,t_0)$ is continuous in $s$.
		\item[(b)] $f(s,t)$ is an equicontinuous family of functions of $t$ parameterized by $s\in I$.
	\end{enumerate}
	
	For (a), let $\epsilon >0$. Using Definition~\ref{def:uniformly_locally_normal_maps} (ii), pick $n$ so that
	\[ \|\cK_s(A(t_0)) - \cK^{\Lambda_n}_s(A(t_0))\| \leq \epsilon \; \text{for all} \; s\in I. \]
	Using that $\|\Pi_Y\|=1$, the continuity of $f(\cdot,t_0)$ follows from the strong continuity of $\cK_s^{\Lambda_n}$ as
	\[\|f(s,t_0)-f(s_0,\,t_0)\| 
	\leq 
	2\epsilon \| \psi \|
	+ \|\cK_{s}^{\Lambda_n}(A(t_0)) - \cK^{\Lambda_n}_{s_0}(A(t_0))\|\|\psi\|.\]
	
	For (b), let $\epsilon>0$. We show that there exists $\delta>0$, such that $|t-t_0|< \delta$ implies
	\be
	\| f(s,t) - f(s,t_0)\| \leq \epsilon, \mbox{ for all } s\in I.
	\label{fst_equicontinuous}\ee
	To see this, let $f_n(s,t) = \Pi_Y(\cK^{\Lambda_n}_s(A(t)))\psi$, and $J$ be a compact interval containing a neighborhood of $t_0$. Since $\|A(t)\|$ is uniformly bounded for $t\in J$, using Definition~\ref{def:uniformly_locally_normal_maps}(ii), choose $n$ so that
	\be
	\sup_{t\in J} \| f_n(s,t) - f(s,t)\| \leq \epsilon, \mbox{ for all } s\in I.
	\label{uni1}\ee
	Now consider the family of functions $f_n(s,t)$ parameterized by $s\in I$. By Definition~\ref{def:uniformly_locally_normal_maps}(i), since $\|A(t)\|$ is bounded on $J$, $\cK_s^{\Lambda_n}(A(t))$ is strongly continuous in $t$ . Since $\sup_{s\in I}\|\cK_s^{\Lambda_n}\|<\infty$ by the Uniform Boundedness Principle, it follows that 
	$\|\cK_s^{\Lambda_n}(A(t))\|$ is bounded on $I\times J$, and so 
	\[\Pi_Y(\cK_s^{\Lambda_n}(A(t))) = \Pi_Y^{\Lambda_n\cup Y}(\cK_s^{\Lambda_n}(A(t)))\]
	is strongly continuous in $t$ by Proposition~\ref{prop:strongcontinuity}. The
	argument used in Proposition~\ref{prop:strongcontinuity} shows that the strong continuity of  $\Pi_Y(\cK_s^{\Lambda_n}(A(t)))$ is uniform in $s\in I$. In particular, there is a $\delta>0$ such that $|t-t_0|<\delta$ implies	\be \label{equi1}
	\|f_n(s,t)-f_n(s,t_0)\| < \epsilon \; \text{for all} \; s\in I.
	\ee
	The equicontinuity of $f(s,t)$ follows from \eq{uni1} and \eq{equi1}.
\end{proof}

In all the proofs above, we have used results on the finite volume local approximates to obtain results for the infinite volume local approximates. However, there may be instances where one wants to work in a suitable representation of the infinite-volume algebra. We conclude this section with a result regarding the GNS representation of a locally normal state.

\begin{prop}\label{prop:strongcontinuity2}
	Let $\cH_\Lambda, \Lambda\in\cP_0(\Gamma)$, be the family Hilbert spaces defined in \eq{HLambda}, $\omega$ a locally normal state on $\cA^{\rm loc}_\Gamma$, and $\pi_\omega : \cA^{\rm loc}_\Gamma\to \cB(\cH_\omega)$ its corresponding GNS representation. The map $\pi_{\omega}\vert_{\cA_\Lambda} :\cB(\cH_\Lambda) \to \cB(\cH_\omega)$ is continuous on arbitrary bounded subsets of its domain with respect to the strong operator topology on both its domain and codomain.
\end{prop}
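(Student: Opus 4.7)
The plan is to reduce strong convergence on $\cH_\omega$ to convergence on a dense subspace via cyclicity, and then translate the latter into a dominated-convergence statement on $\cH_{\tilde\Lambda}$ for a sufficiently large finite $\tilde\Lambda \supset \Lambda$, exploiting that each restriction $\omega\restriction_{\cA_{\tilde\Lambda}}$ is implemented by a density matrix.

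First, let $\{A_\alpha\}_{\alpha \in \mathcal{I}} \subset \cA_\Lambda$ be a net with $M := \sup_\alpha \|A_\alpha\| < \infty$ that converges strongly to $A \in \cA_\Lambda$; in particular $\|A\|\leq M$. Since $\pi_\omega$ is a $*$-homomorphism, $\|\pi_\omega(A_\alpha - A)\|\leq 2M$ uniformly in $\alpha$. A standard $\epsilon/3$ argument then reduces the desired strong convergence for all $\psi \in \cH_\omega$ to convergence on the dense subspace $\mathcal{D}:=\pi_\omega(\cA^{\rm loc}_\Gamma)\Omega$, where $\Omega$ denotes the GNS cyclic vector.

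Next, fix $\psi_C = \pi_\omega(C)\Omega \in \mathcal{D}$ and choose a finite $\tilde\Lambda \supset \Lambda$ with $C \in \cA_{\tilde\Lambda}$. Using $\omega(\,\cdot\,) = \langle \Omega, \pi_\omega(\,\cdot\,)\Omega\rangle$,
\[
\bigl\|\pi_\omega(A_\alpha - A)\psi_C\bigr\|^2
= \omega\bigl(C^*(A_\alpha - A)^*(A_\alpha - A)C\bigr),
\]
where the argument of $\omega$ lies in $\cA_{\tilde\Lambda}$. Since $\omega\restriction_{\cA_{\tilde\Lambda}}$ is normal, there is a density matrix $\rho_{\tilde\Lambda}$ on $\cH_{\tilde\Lambda}$ implementing it. Setting $\sigma := C \rho_{\tilde\Lambda} C^*$, a cyclic trace manipulation identifies the right-hand side above with $\Tr\bigl((A_\alpha - A)\sigma (A_\alpha - A)^*\bigr)$. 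Writing the spectral decomposition $\sigma = \sum_i q_i |f_i\rangle\langle f_i|$ with $q_i \geq 0$, $\sum_i q_i = \Tr(\sigma) < \infty$, and $\{f_i\}$ orthonormal in $\cH_{\tilde\Lambda}$, and using that $A_\alpha - A$ acts on $\cH_{\tilde\Lambda}$ as $(A_\alpha - A)\otimes \idty_{\tilde\Lambda\setminus\Lambda}$, we arrive at
\[
\bigl\|\pi_\omega(A_\alpha - A)\psi_C\bigr\|^2 = \sum_i q_i \bigl\|\bigl((A_\alpha - A)\otimes \idty_{\tilde\Lambda\setminus\Lambda}\bigr) f_i\bigr\|^2.
\]

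The main step---and in fact the only one that is not pure bookkeeping---is to transfer the strong convergence from $\cH_\Lambda$ to the tensor product $\cH_{\tilde\Lambda}$. This is precisely what is accomplished in the proof of Proposition~\ref{prop:st_cont_prods}(ii) (see the computation leading to \eqref{sc_tensor}): the uniform bound on $\{A_\alpha\}$ together with strong convergence on $\cH_\Lambda$ gives strong convergence of $(A_\alpha - A)\otimes \idty_{\tilde\Lambda\setminus\Lambda}$ to $0$ on $\cH_{\tilde\Lambda}$. Hence each summand above tends to $0$, is majorized by $4M^2\|f_i\|^2 = 4M^2$, and the outer sum $\sum_i q_i$ is finite, so dominated convergence concludes the argument. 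The only subtlety worth flagging is that $\cH_\omega$ need not be separable; however, local normality funnels everything into computations on the (possibly non-separable) $\cH_{\tilde\Lambda}$ and then into a \emph{countable} spectral sum for the trace-class operator $\sigma$, which sidesteps any separability issue.
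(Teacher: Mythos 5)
Your proof is correct and follows essentially the same route as the paper's: reduce to the dense cyclic subspace, invoke local normality to replace $\omega\restriction_{\cA_{\tilde\Lambda}}$ by a density matrix, pass to a countable spectral sum, and finish by truncation and the tensor-with-identity strong-convergence lemma (Proposition~\ref{prop:st_cont_prods}(ii)). The only cosmetic difference is that you conjugate the density matrix to form $\sigma = C\rho_{\tilde\Lambda}C^*$ and diagonalize that, whereas the paper keeps the vectors $B\xi_n$ and writes $\sum_n \rho_n\|(A-A_\alpha)B\xi_n\|^2$; both produce a countable sum of terms each tending to zero and dominated by a summable weight.
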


\begin{proof}
	Let $\{ A_\alpha \mid \alpha \in I\}$ be a net in the unit ball of
	$\cB(\cH_\Lambda)$ that converges to $A$ in the strong operator topology. To prove the claim, it is sufficient to verify that $\pi_\omega(A-A_\alpha)\psi \to 0$, for all $\psi$ in the dense subspace of $\cH_\omega$  given by vectors of the form $\pi_\omega(B)\Omega_\omega$, where $B\in \cA^{\rm loc}_\Gamma$ and $\Omega_\omega$ is the cyclic vector representing $\omega$. To this end, note that
	$$
	\Vert \pi_\omega((A-A_\alpha)B)\Omega_\omega\Vert^2 = 
	\omega(B^* (A-A_\alpha)^*(A-A_\alpha)B).
	$$
	Since $A$ and the $A_\alpha$ belong to $\cA_\Lambda$ and $B\in  \cA^{\rm loc}_\Gamma$, there exists $\Lambda^\prime
	\in \cP_0(\Gamma)$ such that $B^* (A-A_\alpha)^*(A-A_\alpha)B\in \cA_{\Lambda^\prime}$. Since $\omega$ is locally normal, its restriction to $\cA_{\Lambda^\prime}$ is given by a density matrix 
	$\rho_{\Lambda^\prime}$ on $\cH_{\Lambda^\prime}$. By writing $\rho_{\Lambda^\prime} = \sum_{n\geq 1} \rho_n \ketbra{\xi_n}$ in terms of an orthonormal set of eigenvectors $\xi_n\in \cH_{\Lambda^\prime}$, it follows that
	$$
	\Vert \pi_\omega(A-A_\alpha) \pi_\omega(B)\Omega_\omega\Vert^2 
	=\Tr \rho_{\Lambda^\prime} B^*(A-A_\alpha)^*(A-A_\alpha)B
	=\sum_{n\geq 1} \rho_n \Vert(A-A_\alpha)B\xi_n \Vert^2.
	$$
	The result follows from using the analogous arguments from the proof of Proposition \ref{prop:strongcontinuity}.
\end{proof}

\section{Quasi-local maps} \label{sec:quasilocal-maps}

In Section~\ref{sec:lrb}, we proved a Lieb-Robinson bound for the dynamics associated with a sufficiently local interaction. In addition to estimating the speed of propagation of a dynamically evolved observable, these bounds imply that the dynamics for a fixed time $t$ is quasi-local. As a result, they can be well approximated by local observables as shown in Corollary~\ref{cor:PiX}. In recent years, other quasi-local maps have played a key role in proving both locality estimates of the spectral flow \cite{bachmann:2012,hastings:2004,hastings:2005,nachtergaele:2007} and spectral gap stability of frustration-free quantum lattice models \cite{bravyi:2010,bravyi:2011,michalakis:2013,young:2016,nachtergaele:2016b,hastings:2017,moon:2018,QLBII}. While we will consider both of these topics, the former in Section~\ref{sec:spectral-flow} and the latter in \cite{QLBII}, the focus of this section is the general study of quasi-local maps, see \eqref{QLmap} below, and the investigation of the key properties that will be useful in above mentioned applications. There exists a broad range of other applications that we will not discuss here \cite{gong:2017,bachmann:2017b,bachmann:2017c,bachmann:2017d,bachmann:2018,bachmann:2018a}.

We begin by showing how to apply the techniques from Section~\ref{sec:str-loc} to obtain estimable local approximations of quasi-local maps. In Section~\ref{sec:some_ex}, we provide a number of examples that will arise in our applications, including the difference of two dynamics. We discuss compositions of two quasi-local maps in Section~\ref{sec:comp_qlm}, and prove sufficient conditions for which the composition is again quasi-local. In Section~\ref{sec:quasi_loc_ints} consider the composition of a quasi-local map with an interaction. We show that under suitable conditions such a composition can be rewritten as a local interaction. Moreover, we quantify the decay of the resulting interaction in terms of the decays of the original interaction and the quasi-local map. If the transformed interaction has sufficient decay, then the theory developed in Section~\ref{sec:lrb} applies and an infinite volume dynamics exists. We conclude in Section~\ref{sec:diff_dyn_ql_est} by returning to the example of the difference of two dynamics and proving a continuity result.

\subsection{General quasi-local maps} \label{sec:qlm_gen}

Let $(\Gamma, d)$, and $\mathcal{A}_{\Gamma}$ be a quantum lattice system defined as in Section~\ref{sec:spatialstructure}. 
A linear map $\cK: \cA_{\Gamma}^{\rm loc} \to \cA_{\Gamma}$ is said to satisfy a \emph{quasi-locality bound of order} $q \geq 0$ 
if there is $C< \infty$ and a non-increasing function $G:[0,\infty) \to [0,\infty)$ with $\lim_{r\to\infty} G(r)=0$,
such that for all $X,Y \in \mathcal{P}_0( \Gamma)$, 
and $A  \in \cA_X$, $B \in \cA_Y$
\be \label{QLmap}
\Vert [\cK(A),B]\Vert \leq  C |X|^q \Vert A\Vert \Vert B\Vert  G(d(X,Y)).
\ee 
Any linear mapping $\cK$ satisfying \eq{QLmap} will be referred to as {\it quasi-local}.
When relevant, we will denote by $C_\cK(q,G)$ the smallest constant for which \eq{QLmap} holds. 
Since the function $G$ in (\ref{QLmap}) above governs the decay of the quasi-local map, 
we may refer to $G$ as a decay function associated to $\mathcal{K}$. In this work, we will always assume that quasi-local maps are linear. However, there may be other contexts in which it might be appropriate to generalize this definition.

The dependence of  the bound in (\ref{QLmap}) on the support of the observable $A$ through 
the factor $|X|^q$ is a choice we made based on the applications we have in mind. 
However, under appropriate assumptions, most of the estimates proved in this section 
also hold for quasi-local maps with more general functions of $|X|$.   

In most of our applications, the metric space $(\Gamma,d)$ is equipped with an $F$-function $F$. In this case, one can often estimate a quasi-local map $\cK$ as follows: there is 
$C< \infty$ such that for all $X,Y \in \mathcal{P}_0( \Gamma)$, 
any $A  \in \cA_X$, and $B \in \cA_Y$ we have
\be \label{QLmap_Fbd}
\Vert [\cK(A),B]\Vert \leq  C \Vert A\Vert \Vert B\Vert G_F(X,Y) \quad \mbox{where} \quad G_F(X,Y) = \sum_{x \in X} \sum_{y \in Y} F(d(x,y)).
\ee
As we will see below, in certain estimates the bound (\ref{QLmap_Fbd}) has advantages over (\ref{QLmap}). 
A simple over-counting argument shows that
$$
G_F(X,Y) \leq |X|G(d(X,Y)), \; \text{  where  } G(r):=\sup_{x \in \Gamma}\sum_{\substack{y \in \Gamma : \\ d(x,y)\geq r}}F(d(x,y)).
$$
It follows from the uniform integrability of $F$, see \eqref{F:int}, that $G(r) \to 0$ as $r\to \infty$. When the corresponding $F$-function is weighted, i.e. $F= F_g$ as defined \eqref{weightedF}, one has that
\begin{equation}
G_{F_g}(X,Y) = \sum_{x \in X} \sum_{y \in Y} F_g(d(x,y)) = \sum_{x \in X} \sum_{y \in Y} e^{-g(d(x,y))} F(d(x,y)) \leq \| F \| |X| e^{-g(d(X,Y))}
\end{equation}
and so, in this case, an estimate of the form (\ref{QLmap_Fbd}) reduces to that of (\ref{QLmap}). For more detailed information on $F$-functions, including weighted $F$-functions, see Sections \ref{app:sec_def_F}-\ref{subsec:regsets}.

We now demonstrate an important estimate concerning quasi-local maps. For this result, we use the concepts introduced in Section~\ref{sec:laqlmaps} and in particular the localizing maps $\Pi_X$ and $\Delta_{X(n)}$, defined with respect to a locally normal product state $\rho$, as in \eqref{PiX} and \eqref{DeltaXn}, respectively.

\begin{lem} \label{lem:qlm_approx} Let $(\Gamma, d)$ and $\mathcal{A}_{\Gamma}$ be a quantum lattice system, and $\rho$ a locally normal product state on $\mathcal{A}_{\Gamma}$. 
	Let $\mathcal{K} : \mathcal{A}_{\Gamma}^{\rm loc} \to \mathcal{A}_{\Gamma}$ be a quasi-local map.  For any $X \in \mathcal{P}_0( \Gamma)$ and $n \geq 0$,
	\begin{equation} \label{qlm_sl_bd}
	\| \mathcal{K}(A) - \Pi_{X(n)}(\mathcal{K}(A)) \| \leq 2 C |X|^q \| A \| G(n) \quad \mbox{for all } A \in \mathcal{A}_X \, .
	\end{equation}
	In particular, if the decay function associated to $\mathcal{K}$ is summable, i.e. $\sum_{n \geq 0} G(n) < \infty$, then
	for any $X \in \mathcal{P}_0( \Gamma)$ and each $A \in \mathcal{A}_X$, 
	\begin{equation} \label{qlm_series_rep}
	\mathcal{K}(A) = \sum_{n=0}^{\infty} \Delta_{X(n)}( \mathcal{K}(A)) \, .
	\end{equation}
	The series on the right-hand-side above is absolutely convergent in norm with a bound that is
	uniform in the choice of locally normal product state $\rho$.  
\end{lem}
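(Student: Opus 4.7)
The plan is to derive both assertions directly from the quasi-locality hypothesis \eqref{QLmap} by combining it with the localization result of Corollary~\ref{cor:PiX}. The point is that \eqref{QLmap} provides exactly the kind of commutator estimate that Corollary~\ref{cor:PiX} converts into a norm estimate on $\Pi_X(A) - A$, so once the geometric observation relating $X(n)$ to $d(X, \cdot)$ is made, both parts follow by bookkeeping.

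For the first assertion \eqref{qlm_sl_bd}, I would fix $A \in \mathcal{A}_X$ and consider an arbitrary $B \in \mathcal{A}_Y^{\rm loc}$ with $Y \in \mathcal{P}_0(\Gamma \setminus X(n))$. Since $Y \cap X(n) = \emptyset$, the very definition of $X(n)$, see \eqref{defxn}, forces $d(x,y) > n$ for every $x \in X$ and $y \in Y$, so in particular $d(X,Y) \geq n$. Using that $G$ is non-increasing, the quasi-locality bound \eqref{QLmap} yields
\begin{equation*}
\|[\mathcal{K}(A), B]\| \leq C |X|^q \|A\| \|B\| G(d(X,Y)) \leq C |X|^q \|A\| G(n) \, \|B\|.
\end{equation*}
Because this bound holds uniformly over all $B \in \mathcal{A}_{\Gamma \setminus X(n)}^{\rm loc}$, Corollary~\ref{cor:PiX} applied with $\epsilon = C|X|^q \|A\| G(n)$ and the set $X(n)$ in place of $X$ gives the claim \eqref{qlm_sl_bd}. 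Note that the constant on the right-hand side depends only on $\mathcal{K}$ through $C$ and $G$, not on the locally normal product state $\rho$ used to define the $\Pi_{X(n)}$.

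For the second assertion, fix $A \in \mathcal{A}_X$ and observe that, by the definition \eqref{DeltaXn} of $\Delta_{X(n)}$, the partial sums telescope:
\begin{equation*}
\sum_{n=0}^{N} \Delta_{X(n)}(\mathcal{K}(A)) = \Pi_{X(N)}(\mathcal{K}(A)).
\end{equation*}
By \eqref{qlm_sl_bd} and the assumed decay $G(N) \to 0$, the right-hand side converges in norm to $\mathcal{K}(A)$ as $N \to \infty$, proving \eqref{qlm_series_rep}. Absolute convergence with a $\rho$-independent bound is obtained by a triangle inequality estimate for $n \geq 1$, using \eqref{qlm_sl_bd} twice and the monotonicity of $G$:
\begin{equation*}
\|\Delta_{X(n)}(\mathcal{K}(A))\| \leq \|\Pi_{X(n)}(\mathcal{K}(A)) - \mathcal{K}(A)\| + \|\mathcal{K}(A) - \Pi_{X(n-1)}(\mathcal{K}(A))\| \leq 4 C |X|^q \|A\| G(n-1),
\end{equation*}
together with the trivial bound $\|\Delta_{X(0)}(\mathcal{K}(A))\| = \|\Pi_X(\mathcal{K}(A))\| \leq \|\mathcal{K}(A)\|$ for the $n=0$ term (using $\|\Pi_X\| = 1$). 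Summing gives $\sum_{n \geq 0} \|\Delta_{X(n)}(\mathcal{K}(A))\| \leq \|\mathcal{K}(A)\| + 4C|X|^q \|A\| \sum_{n\geq 0} G(n) < \infty$, and none of these constants depend on the choice of $\rho$.

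There is no real obstacle here: the only place to be slightly careful is the geometric step relating $Y \subset \Gamma \setminus X(n)$ to $d(X,Y) \geq n$, and the observation that \eqref{QLmap} holds with $B$ ranging over all local observables disjoint from $X(n)$ (so that Corollary~\ref{cor:PiX} applies). Everything else is bookkeeping with telescoping sums and the monotonicity of $G$.
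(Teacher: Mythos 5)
Your proof is correct and follows essentially the same route as the paper: reduce \eqref{qlm_sl_bd} to the commutator estimate \eqref{QLmap} plus Corollary~\ref{cor:PiX} applied to $X(n)$, then deduce the telescoping identity $\sum_{n=0}^{N}\Delta_{X(n)}(\mathcal{K}(A))=\Pi_{X(N)}(\mathcal{K}(A))$ and bound each $\Delta_{X(n)}(\mathcal{K}(A))$ for $n\geq 1$ by $4C|X|^q\|A\|G(n-1)$ via the triangle inequality. The only cosmetic difference is that you spell out the geometric step $\operatorname{supp}(B)\subset\Gamma\setminus X(n)\Rightarrow d(X,\operatorname{supp}(B))\geq n$ and invoke monotonicity of $G$ explicitly, which the paper leaves implicit.
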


Note that the result above, of course, also applies to finite systems. In particular, for any finite $\Lambda$, the result holds for any quasi local map $\cK: \cA_\Lambda \to \cA_\Lambda$.

\begin{proof}
	To see that (\ref{qlm_sl_bd}) is true, fix $X \in \mathcal{P}_0(\Gamma)$, $A \in \mathcal{A}_X$, and $n \geq 0$. 
	Observe that for any $B \in \mathcal{A}_{\Gamma \setminus X(n)}^{\rm loc}$, the estimate
	\begin{equation}
	\Vert [\cK(A),B]\Vert \leq  C |X|^q \Vert A\Vert \Vert B\Vert  G(d(X,{\rm supp}(B))).
	\end{equation}
	follows from (\ref{QLmap}). In this case, an application of Corollary~\ref{cor:PiX} with $\epsilon = C |X|^q \| A \| G(n)$ implies (\ref{qlm_sl_bd}) as
	claimed. 
	
	Recalling (\ref{DeltaXn}), for any integer $n \geq 1$, one can write 
	\begin{equation}
	\Delta_{X(n)}( \mathcal{K}(A)) = \left( \Pi_{X(n)}( \mathcal{K}(A)) - \mathcal{K}(A) \right) - \left(  \Pi_{X(n-1)}( \mathcal{K}(A)) - \mathcal{K}(A) \right) 
	\end{equation}
	and therefore, an immediate consequence of (\ref{qlm_sl_bd}) is the estimate
	\begin{equation} \label{Delta_bd}
	\| \Delta_{X(n)}( \mathcal{K}(A)) \| \leq 4 C |X|^q \| A \| G(n-1),
	\end{equation}
	which is valid for any $X \in \mathcal{P}_0( \Gamma)$, $A \in \mathcal{A}_X$, and $n \geq 1$.
	
	Since $G(n)\to 0$ as $n\to\infty$, (\ref{qlm_sl_bd}) implies
	\begin{equation}
	\lim_{n \to \infty} \| \mathcal{K}(A) - \Pi_{X(n)}( \mathcal{K}(A)) \| = 0 .
	\end{equation}
	It is clear from (\ref{DeltaXn}) that the local decompositions have telescopic sums, i.e. for any $N \geq 1$,
	\begin{equation} \label{finite_tele}
	\sum_{n=0}^N \Delta_{X(n)}( \mathcal{K}(A)) = \Pi_{X(N)}(\mathcal{K}(A)), 
	\end{equation} 
	and thus the series on the right-hand-side of (\ref{qlm_series_rep}) is norm convergent. Note also that
	\begin{equation} \label{abs_sum_D_qlm}
	\sum_{n=0}^{\infty} \| \Delta_{X(n)}( \mathcal{K}(A)) \| \leq \| \mathcal{K}(A) \| + 4 C |X|^q \| A \| \sum_{n=0}^{\infty}G(n),
	\end{equation}
	and so the series is also absolutely convergent, with norm bound independent of $\rho$, as claimed. 
\end{proof}

If in Lemma \ref{lem:qlm_approx} we had assumed $\cK$ satisfies \eq{QLmap_Fbd} instead of \eq{QLmap}, \eq{qlm_sl_bd}
would become
\begin{equation}
\| \mathcal{K}(A) - \Pi_{X(n)}(\mathcal{K}(A)) \| \leq  2 C \Vert A\Vert G_F(X, \Gamma \setminus X(n))
\end{equation}
where the right-hand-side above is finite and non-increasing (in $n$) by the uniform summability of the $F$-function $F$. 

%%%%
%
% Some examples 
%
%%%%%%%

\subsection{Examples of quasi-local maps}\label{sec:some_ex}
In this section, we discuss a few of the most common examples of quasi-local maps (defined as in \eqref{QLmap}). In applications, quasi-local maps are often constructed as the thermodynamic limit of appropriate finite-volume maps. We first describe this class of quasi-local maps as a general example. Each of the more concrete examples we present later in this section will be of this general form.

%%%%
%
% Gen. example 
%
%%%%%%%

\begin{ex}[A General Example] \label{ex:General}
Let $q \geq 0$, $C< \infty$, and $G$ be a non-increasing function $G: [0, \infty) \to [0, \infty)$ with $\lim_{r \to \infty}G(r) = 0$. 
Let $\{ \Lambda_n \}_{n=1}^{\infty}$ be a sequence of increasing and exhaustive finite subsets of $\Gamma$.
In particular, this means that $\Lambda_n \subset \Lambda_{n+1}$ for all $n \geq 1$, and given
any $x \in \Gamma$, there exists $N \geq 1$ for which $x \in \Lambda_N$.  Suppose that for each $n \geq 1$, 
there is a linear map $\mathcal{K}^n : \mathcal{A}_{\Lambda_n} \to \mathcal{A}_{\Lambda_n}$ for which:
\begin{enumerate}
	\item[(i)] Given any sets $X,Y \subset \Lambda_n$ the bound
	\begin{equation} \label{fv_qlm_est}
	\| [ \mathcal{K}^n(A), B ] \| \leq C |X|^q \| A \| \| B \| G(d(X,Y))
	\end{equation}
	holds for all observables $A \in \mathcal{A}_X$ and $B \in \mathcal{A}_Y$.
	\item[(ii)] For each finite $X \subset \Gamma$ and any $\epsilon >0$, there is an $N \geq 1$ for which
	\begin{equation} \label{locally_Cauchy}
	\| \mathcal{K}^n(A) - \mathcal{K}^m(A) \| \leq \epsilon \| A \| \quad \mbox{for any } n , m \geq N \mbox{ and all } A \in \mathcal{A}_X \, .
	\end{equation}
\end{enumerate}
In this case, the local Cauchy assumption in part (ii) above, implies that a linear map 
$\mathcal{K} : \mathcal{A}_{\Gamma}^{\rm loc} \to \mathcal{A}_{\Gamma}$ is well-defined by setting
\begin{equation} \label{qlm_tdl}
\mathcal{K}(A) = \lim_{n \to \infty} \mathcal{K}^n(A) \quad \mbox{ for each } A \in \mathcal{A}_{\Gamma}^{\rm loc}, \,
\end{equation}
with the limit above being in norm. Moreover, for any finite sets $X, Y \subset \Gamma$, there is $N \geq 1$ 
large enough so that $X \cup Y \subset \Lambda_N$ since the sequence of sets is exhaustive.  In this case, for any $A\in \cA_X$, $B\in \cA_Y$ and $n \geq N$,  
\begin{eqnarray}
\| [ \mathcal{K}(A), B ] \| & \leq & \| [ \mathcal{K}^n(A), B ] \| + \| [ \left( \mathcal{K}(A) - \mathcal{K}^n(A) \right) , B ] \| \nonumber \\
& \leq & C |X|^q \| A \| \| B \| G(d(X,Y)) + 2 \| B \| \| \mathcal{K}(A) - \mathcal{K}^n(A) \| .
\end{eqnarray}
Here we have used the uniform quasi-locality estimate in (i), see (\ref{fv_qlm_est}).
As the final term above vanishes in the limit as $n \to \infty$, it is clear that the map $\mathcal{K}$ 
defined in (\ref{qlm_tdl}) above is quasi-local, see (\ref{QLmap}), and satisfies the same uniform quasi-local estimates as the collection $\{\cK^n\}_{n\geq 1}.$  
\end{ex}

For later applications, we now describe a variant of the estimate in Lemma~\ref{lem:qlm_approx} which is 
particularly relevant to quasi-local maps of the type from Example~\ref{ex:General}. 

\begin{cor}\label{cor:DeltaEst}
Let $\rho$ be a locally normal product state on $\mathcal{A}_{\Gamma}$, and suppose that $\cK$ and $\{\cK^n\}_{n=1}^\infty$ are maps that satisfy the uniform quasi-local and local Cauchy conditions from \eqref{fv_qlm_est} and \eqref{locally_Cauchy}. For any $X \in \cP_0(\Gamma)$, $m \geq 1$, and $n \geq 1$ large enough so that $X(m) \subset \Lambda_n$, the bound
\begin{equation} \label{iv_Ddifbd}
\| \Delta_{X(m)}( \mathcal{K}(A)) - \Delta_{X(m)}^{\Lambda_n}( \cK^n(A)) \| \leq \min \left\{ 2 \| \cK(A) - \cK^n(A) \|, 8 C |X|^q \| A \| G(m-1) \right\}
\end{equation}
holds for any $A \in \mathcal{A}_X$ where $\Delta_{X(m)}^{\Lambda_n} : \mathcal{A}_{\Lambda_n} \to \mathcal{A}_{X(m)}$ and $\Delta_{X(m)} : \mathcal{A}_{\Gamma} \to \mathcal{A}_{X(m)}$ are the local decomposition maps defined in Section~\ref{sec:laqlmaps} (see \eqref{loc_dec} and \eqref{DeltaXn}).
\end{cor}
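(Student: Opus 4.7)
The plan is to exploit the compatibility of the localizing family from Proposition~\ref{prop:compatibility}(ii) to reduce the difference to one in which the maps $\Delta_{X(m)}$ and $\Delta_{X(m)}^{\Lambda_n}$ coincide, and then obtain the two bounds separately --- the first via a norm estimate on $\Delta_{X(m)}$, and the second by applying the argument in the proof of Lemma~\ref{lem:qlm_approx} to each of $\mathcal{K}(A)$ and $\mathcal{K}^n(A)$.

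First I would observe that since $X(m) \subset \Lambda_n$ by assumption, Proposition~\ref{prop:compatibility}(ii) gives $\Pi_{X(m)}^{\Lambda_n} = \Pi_{X(m)} \restriction_{\mathcal{A}_{\Lambda_n}}$ and similarly $\Pi_{X(m-1)}^{\Lambda_n} = \Pi_{X(m-1)} \restriction_{\mathcal{A}_{\Lambda_n}}$. Taking differences, this shows that $\Delta_{X(m)}^{\Lambda_n}(B) = \Delta_{X(m)}(B)$ for all $B \in \mathcal{A}_{\Lambda_n}$. Since $\mathcal{K}^n(A) \in \mathcal{A}_{\Lambda_n}$, this yields
\begin{equation}\label{key_reduction}
\Delta_{X(m)}( \mathcal{K}(A)) - \Delta_{X(m)}^{\Lambda_n}( \mathcal{K}^n(A)) = \Delta_{X(m)}\bigl( \mathcal{K}(A) - \mathcal{K}^n(A) \bigr).
\end{equation}

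For the first bound, I would use that $\Delta_{X(m)} = \Pi_{X(m)} - \Pi_{X(m-1)}$ is a difference of two conditional expectations, each of norm one, so $\| \Delta_{X(m)} \| \leq 2$. Applied to \eqref{key_reduction}, this immediately yields $\| \Delta_{X(m)}(\mathcal{K}(A)) - \Delta_{X(m)}^{\Lambda_n}(\mathcal{K}^n(A)) \| \leq 2 \| \mathcal{K}(A) - \mathcal{K}^n(A) \|$.

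For the second bound, I would apply the estimate \eqref{Delta_bd} from the proof of Lemma~\ref{lem:qlm_approx} separately to $\Delta_{X(m)}(\mathcal{K}(A))$ and to $\Delta_{X(m)}^{\Lambda_n}(\mathcal{K}^n(A))$. The former gives $\| \Delta_{X(m)}(\mathcal{K}(A)) \| \leq 4 C |X|^q \| A \| G(m-1)$ directly. For the latter, the key point is that the finite-volume maps $\mathcal{K}^n$ satisfy the quasi-local bound \eqref{fv_qlm_est} with the \emph{same} constants $C$, $q$, and decay function $G$ as the infinite-volume map $\mathcal{K}$. Hence the argument leading to \eqref{Delta_bd} --- applied entirely within the finite volume $\Lambda_n$, using the finite-volume localizing maps $\Pi_{X(k)}^{\Lambda_n}$ and Corollary~\ref{cor:PiX} in its finite-volume form --- gives the matching estimate $\| \Delta_{X(m)}^{\Lambda_n}(\mathcal{K}^n(A)) \| \leq 4 C |X|^q \| A \| G(m-1)$. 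The triangle inequality then produces the $8C|X|^q\|A\| G(m-1)$ bound. Combining both estimates yields the minimum, completing the proof. There is no serious obstacle here; the only point requiring care is verifying the compatibility identity \eqref{key_reduction}, which is where Proposition~\ref{prop:compatibility}(ii) is essential.
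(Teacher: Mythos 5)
Your proof is correct and follows the same route as the paper: reduce via Proposition~\ref{prop:compatibility}(ii) to $\Delta_{X(m)}(\mathcal{K}(A)-\mathcal{K}^n(A))$ for the first bound, and apply \eqref{Delta_bd} from Lemma~\ref{lem:qlm_approx} to each term plus the triangle inequality for the second. If anything your account of the first bound is slightly more precise than the paper's, which attributes the factor of $2$ to ``$\Pi_{X(m)}$ is a norm one map'' rather than explicitly noting that $\Delta_{X(m)}$ is a difference of two norm-one conditional expectations.
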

The bound above is particularly useful as it expresses the decay of the quantity on the left-hand-side
above in both large $n$ and $m$.
\begin{proof} The proof uses two separate estimates. First, using consistency of the local
decompositions, see Proposition~\ref{prop:compatibility}(ii), it is clear that 
\begin{equation}
\Delta_{X(m)}( \mathcal{K}(A)) - \Delta_{X(m)}^{\Lambda_n}( \cK^n(A)) = \Delta_{X(m)}( \mathcal{K}(A) - \cK^n(A)) 
\end{equation}  
and so the first part of the estimate holds since $\Pi_{X(m)}$ is a norm one map. Note that this establishes that the LHS of \eqref{iv_Ddifbd} decays in $n$.
For the second part of the argument, we use that each term on the LHS of (\ref{iv_Ddifbd}) can be estimated using
Lemma~\ref{lem:qlm_approx}. In fact, 
\begin{equation*}
\| \Delta_{X(m)}( \mathcal{K}(A)) - \Delta_{X(m)}^{\Lambda_n}( \cK^n(A)) \| \leq \| \Delta_{X(m)}( \mathcal{K}(A)) \|  + \| \Delta_{X(m)}^{\Lambda_n}( \cK^n(A)) \| \leq
8 C |X|^q \| A \| G(m-1)
\end{equation*}
as all maps considered satisfy the same quasi-local bound.
\end{proof}

%%%%
%
% The Derivation 
%
%%%%%%%

\begin{ex} \label{ex_der_ql}
Here we return to Example~\ref{ex_der_sec_4} and show that it has the form of the general example
discussed in Example~\ref{ex:General}. Consider a quantum lattice system comprised of $(\Gamma, d)$ and $\cA_{\Gamma}$. Let $F$ be an $F$-function on $(\Gamma, d)$ and $\Phi \in \mathcal{B}_F$ be an interaction. As in \eqref{fvol_der_4}, let $\{ \Lambda_n \}_{n \geq 0}$ be an increasing, exhaustive sequence of finite subsets of $\Gamma$ and define
$\mathcal{K}^n : \mathcal{A}_{\Lambda_n} \to \mathcal{A}_{\Lambda_n}$ by setting
\begin{equation}
\mathcal{K}^n(A) = \sum_{Z \subset \Lambda_n} [ \Phi(Z), A] \, .
\end{equation}

To see that (\ref{fv_qlm_est}) holds, let $X,Y \in \mathcal{P}_0( \Gamma)$. For any $A \in \cA_X$ and
$n \geq 0$ large enough so that $X \subset \Lambda_n$, one has that (\ref{fvol_der_4_simp}) holds
and therefore, 
\begin{equation} \label{lb_derivation}
\| \mathcal{K}^n (A) \| \leq \sum_{x \in X} \sum_{z \in \Lambda_n} \sum_{\stackrel{Z \subset \Lambda_n:}{x,z \in Z}} \| [ \Phi(Z), A] \| 
\leq 2 \| A \| \| \Phi \|_F G_F(X, \Lambda_n)
\end{equation}
where we use the notation $G_F$ of (\ref{QLmap_Fbd}).
As any $F$-function is uniformly integrable, the bound above is uniform in the finite volume since
$G_F(X, \Lambda_n) \leq |X| \| F \| $. For $n \geq 0$ large enough so that $X \cup Y \subset \Lambda_n$, for each $A \in \mathcal{A}_X$, and any $B \in \mathcal{A}_Y$, a simple commutator bound then yields:
\begin{equation}
\| [ \mathcal{K}^n(A), B] \| \leq 2 \| \mathcal{K}^n(A) \| \| B \| \leq 4 \| \Phi \|_F \| F \| |X| \| A \|  \|B \| \, .
\end{equation}
When $X \cap Y = \emptyset$, a better estimate is achieved by observing that
\begin{equation}
\| [ \mathcal{K}^n(A), B] \| \leq \sum_{\stackrel{Z \subset \Lambda_n:}{Z \cap X \neq \emptyset, Z \cap Y \neq \emptyset}} \| [[ \Phi(Z), A], B ] \| 
\leq 4 \| \Phi \|_F \| A \| \| B \| G_F(X,Y). 
\end{equation}
This gives a quasi-locality estimate (uniform in the finite volume) of the type in (\ref{fv_qlm_est}).

To see (\ref{locally_Cauchy}), fix a finite set $X \subset \Gamma$, $A \in \cA_X$, and take $m \leq n$ with
$m$ large enough so that $X \subset \Lambda_m$. One checks that 
\begin{equation}
\mathcal{K}^n(A) - \mathcal{K}^m(A) = \sum_{\stackrel{Z \subset \Lambda_n:}{Z \cap X \neq \emptyset, Z \cap ( \Lambda_n \setminus \Lambda_m) \neq \emptyset}} [ \Phi(Z), A ] 
\end{equation}
and therefore, 
\begin{equation}
\| \mathcal{K}^n (A) - \mathcal{K}^m(A) \| 
\leq 2 \| A \| \| \Phi \|_F G_F(X, \Lambda_n \setminus \Lambda_m).
\end{equation}
Since $F$-functions are integrable, the locally-Cauchy estimate (\ref{locally_Cauchy}) follows.  

Under appropriate assumptions, one can generalize this example to prove quasi-locality estimates for mappings of the form
\begin{equation}
\mathcal{K}^{\Lambda}(A) = \sum_{j=1}^N C_j A D_j \quad \mbox{with } C_j, D_j \in \mathcal{A}_{\Lambda}.
\end{equation}
For example, the Lindblad generator of a quantum dynamical semigroup is of this form. See \cite{poulin:2010,nachtergaele:2011}.
\end{ex}
%%%%
%
% The Dynamics
%
%%%%%%%

\begin{ex}[The Dynamics Associated to $\Phi \in \mathcal{B}_F(I)$] \label{ex:qlm_dyn_ex}
Let $(\Gamma, d)$ and $\mathcal{A}_{\Gamma}$ be a quantum lattice system, and
$I \subset \mathbb{R}$ be an interval. Given an $F$-function, $F$, recall that a strongly continuous interaction $\Phi: \cP_0(\Gamma) \times I \to \cA_{loc}$
belongs to $\mathcal{B}_F(I)$ if the function $\| \Phi \|_F : I \to [0, \infty)$ defined by 
\begin{equation} \label{Phi_td_F_norm_ex}
\| \Phi \|_F (t)  = \sup_{x,y \in \Gamma} \frac{1}{F(d(x,y))} \sum_{\stackrel{Z \in \mathcal{P}_0( \Gamma):}{x,y \in Z}} \| \Phi(Z, t) \|
\end{equation}
is locally bounded on $I$, see \eqref{tdintnorm}. Fix $\Phi \in \mathcal{B}_F(I)$ and let $\{ H_z \}_{z \in \Gamma}$ be a collection of densely defined, self-adjoint on-site Hamiltonians. For any $\Lambda \in \mathcal{P}_0( \Gamma)$, considered the finite-volume Hamiltonians
\begin{equation} \label{td_hams_ex}
H_{\Lambda}^{(\Phi)}(t) = \sum_{z \in \Lambda} H_z + \sum_{Z \subset \Lambda} \Phi(Z,t) \quad \mbox{for any } t \in I
\end{equation} 
and associated to them the dynamics $\{ \tau_{t,s}^{\Lambda} \}_{s,t \in I}$, 
defined in \eqref{ubdyn}. Theorem~\ref{thm:lrb2} demonstrates that these dynamics are quasi-local maps. In fact, this result shows
that for any $X,Y \subset \Lambda$ with $X \cap Y = \emptyset$, 
\begin{equation} \label{unbd_lrb_ex}
\| [ \tau_{t,s}^{\Lambda}(A), B ] \| \leq \frac{2 \| A \| \| B \|}{C_F} \left( e^{2 I_{t,s}( \Phi)} - 1 \right) G_F(X,Y) 
\end{equation}
for all $A \in \mathcal{A}_X$, $B \in \mathcal{A}_Y$, and $s,t \in I$. Here the number $C_F$ is the convolution constant associated 
to the $F$-function $F$ on $\Gamma$, see (\ref{F:conv}), and
\begin{equation}
I_{t,s}( \Phi) = C_F \int_{{\rm min}(t,s)}^{{\rm max}(t,s)} \| \Phi \|_F(r) \, dr \, .
\end{equation} 
Moreover, the bound proven in Theorem~\ref{thm:continuity} (ii) shows that  for any finite sets $X \subset \Lambda_0 \subset \Lambda$,
\begin{equation} \label{dyn_Cauchy_est_ex}
\| \tau_{t,s}^{\Lambda}(A) - \tau_{t,s}^{\Lambda_0}(A) \| \leq 2 C_F^{-1} I_{t,s}( \Phi) e^{2 I_{t,s}( \Phi)} \| A \| G_F(X, \Lambda \setminus \Lambda_0)  
\end{equation} 
holds for all $A \in \mathcal{A}_X$ and $s,t \in I$. Again, this suffices to establish that these dynamics are
locally Cauchy in the sense of (\ref{locally_Cauchy}), and so again this example is of the general form in Example~\ref{ex:General}.

\end{ex}

%%%%
%
% The Difference of Two Dynamics
%
%%%%%%%

\begin{ex}[The Difference of Two Dynamics] \label{ex:diff_dyn_ex}
A quasi-local map that comes up in our applications related to stability in \cite{QLBII} is the difference of two dynamics.
More precisely, consider the setting from Example~\ref{ex:qlm_dyn_ex}. 
Fix a collection of densely defined, self-adjoint on-site Hamiltonians $\{ H_z \}_{z \in \Gamma}$ and 
two interactions $\Phi, \Psi \in \mathcal{B}_F(I)$. For any $\Lambda \in \mathcal{P}_0( \Gamma)$,
consider the Hamiltonians 
\begin{equation} \label{2_td_hams_ex}
H_{\Lambda}^{(\Phi)}(t) = \sum_{z \in \Lambda} H_z + \sum_{Z \subset \Lambda} \Phi(Z,t) \quad \mbox{and } \quad H_{\Lambda}^{(\Psi)}(t) = \sum_{z \in \Lambda} H_z + \sum_{Z \subset \Lambda} \Psi(Z,t)
\end{equation} 
and their corresponding dynamics, which we denote by $\tau_{t,s}^{\Lambda}$ and $\alpha_{t,s}^{\Lambda}$, respectively.
For any $s,t \in I$, a linear map $\mathcal{K}^{\Lambda}_{t,s} : \mathcal{A}_{\Lambda} \to \mathcal{A}_{\Lambda}$ is
defined by
\begin{equation} \label{def_diff_dyn_ex}
\mathcal{K}_{t,s}^{\Lambda}(A) = \tau_{t,s}^{\Lambda}(A) - \alpha_{t,s}^{\Lambda}(A).
\end{equation}
Since both of the dynamics used in the definition of this map are automorphisms, it is clear that $\| \mathcal{K}_{t,s}^{\Lambda} \| \leq 2$.
A different estimate is provided by the local norm bound in Theorem~\ref{thm:continuity}(i), namely 
\begin{equation} \label{lb_diff_dyn_ex}
\| \mathcal{K}_{t,s}^{\Lambda}(A) \| \leq 2 C_F^{-1} I_{t,s}( \Phi- \Psi) e^{2 {\rm min}(I_{t,s}( \Phi), I_{t,s}(\Psi))} \| A \| G_F(X, \Lambda) \,,
\end{equation}
where $I_{t,s}(\Phi- \Psi)$ is defined as in \eqref{ItsPhi}. The above bound better reflects the fact that this difference is small if either $\Phi$ is close to $\Psi$ in $\mathcal{B}_F(I)$, or $|s-t|$ is small.  The bound in Theorem~\ref{thm:continuity}(ii),
see also (\ref{dyn_Cauchy_est_ex}) above, allows one to establish that these maps are locally Cauchy in the sense of (\ref{locally_Cauchy}).
In Section~\ref{sec:diff_dyn_ql_est} below, we prove that these maps $\mathcal{K}^{\Lambda}_{t,s}$ are uniformly quasi-local in the sense of
(\ref{fv_qlm_est}) with constant pre-factors that once again decay if either $\Phi$ is close to $\Psi$ in $\mathcal{B}_F(I)$ or $|s-t|$ is small;
as is the case in (\ref{lb_diff_dyn_ex}). 
\end{ex}

%%%%
%
% Weighted Integral operators
%
%%%%%%%

\begin{ex}[Weighted Integrals of Quasi-local Maps] \label{ex:wi_qlm}
We briefly mention another interesting class of examples, which includes the spectral flow introduced in Section~\ref{sec:w_int_op}. 
Let $(\Gamma, d)$ and $\mathcal{A}_{\Gamma}$ be a quantum lattice system, and $\mu$ a measure 
on $\mathbb{R}$. Suppose that for each $t \in \mathbb{R}$ there is a quasi-local map
$\mathcal{K}_t : \mathcal{A}_{\Gamma}^{\rm loc} \to \mathcal{A}_{\Gamma}$ for which the map
$\mathcal{K} : \mathcal{A}_{\Gamma}^{\rm loc} \to \mathcal{A}_{\Gamma}$ given by
\begin{equation} \label{weighted_integral_ex}
\mathcal{K}(A) = \int_{\mathbb{R}} \mathcal{K}_t(A) \, d \mu(t)
\end{equation}
is well-defined. If the family $\{ \mathcal{K}_t \}_{t \in \mathbb{R}}$ is sufficiently quasi-local and
$\int_{|t| \geq x} d \mu(t)$ decays sufficiently fast as $x\to\infty$, then the mapping defined
in (\ref{weighted_integral_ex}) is quasi-local with explicit decay estimates. 

For example, let $w\in L^1(\bR)$, and $\Phi\in \cB_{F_a}(\bR)$ where we recall that $F_a$ is a weighted $F$-function of the form $F_a(r) = e^{-ar}F(r)$ with $a>0$. Let $\tau_{t}^\Lambda:=\tau_{t,0}^{\Lambda}$ be the finite-volume dynamics given in \eqref{td_heis_dyn} that is associated with the local Hamiltonians of the form
\[
H_\Lambda(t) = \sum_{X \subset \Lambda}\Phi(X,t).
\]
Then, using Lieb-Robinson bounds, one can see that the map $\cK:\cA_\Lambda \to \cA_\Lambda$ defined by
\[
\cK(A) = \int_\bR \tau_{t}^\Lambda(A)w(t) dt
\]
satisfies the following bound: for all $A\in \cA_X$, $B\in \cA_Y$ with $X\cup Y \subseteq \Lambda$ and $X\cap Y = \emptyset,$ $\|[\cK(A), \, B]\| \leq 2\|A\|\|B\| |X| G(d(X,Y)),$ where $G$ is the decreasing function
\[
G(r) = \frac{\|w\|_{L^1}\|F\|}{C_{F_a}}e^{-ar/2} + \int_{|t| > r/2v_a}|w(t)|dt,
\]
and $v_a$ is the Lieb-Robinson velocity, see the discussion following Theorem~\ref{thm:lrb}. 
To see this, note that for all $T\in \bR$,
\[
\|[\cK(A), \, B]\| \leq \int_{|t| \leq T} \|[\tau_{t}^\Lambda(A), B]\||w(t)| dt + \int_{|t| > T} \|[\tau_{t}^\Lambda(A), B]\||w(t)| dt.
\]
With the choice of $T= d(X,Y)/2v_a$, the bound is attained by applying \eqref{lr_traditional} to the integral over $|t| \leq T$, and using the trivial bound $\|[\tau_{t}^\Lambda(A), B]\| \leq 2\|A\|\|B\|$ for the integral over $|t| > T$.
\end{ex}

%%%%%%%%
%
% On Compositions of Quasi Local Maps
%
%%%%%%%%%%

% Composition Lemmas

\subsection{Compositions of quasi-local maps}  \label{sec:comp_qlm}

In applications, we find it useful to recognize certain mappings as the composition of quasi-local maps. 
When $\Gamma$ is finite, these compositions are well-defined and estimates, as indicated below, 
follow readily. For sets $\Gamma$ of infinite cardinality, more care must be taken 
when defining such compositions. This section discusses two classes of examples  
where these compositions are well-defined and we describe the estimates that follow. 

It will be convenient to make an additional assumption on the metric space $(\Gamma, d)$.
We say that $(\Gamma, d)$ is {\it $\nu$-regular} if the cardinality of balls in $\Gamma$ grows at most polynomially, i.e. there
exist non-negative $\kappa$ and $\nu$ for which
\begin{equation} \label{nu_ball_bd}
\sup_{x \in \Gamma} |b_x(n) | \leq \kappa n^{\nu} \quad \mbox{for any } n \geq 1 \, .
\end{equation}
More comments about $\nu$-regular metric spaces $( \Gamma, d)$ can be found in 
Appendix~\ref{app:sec_def_F}. Under this assumption, given any $X \in \mathcal{P}_0(\Gamma)$ and $n > 0$, the cardinality 
of $X(n)$, the inflation of $X$ defined in \eq{defxn}, satisfies the following rough estimate:
\begin{equation} \label{infl_bd}
| X(n) | \leq \sum_{x \in X} |b_x(n) | \leq \kappa n^{\nu} |X|. 
\end{equation}

Let $(\Gamma, d)$ and $\mathcal{A}_{\Gamma}$ be a quantum lattice system on a $\nu$-regular metric space. We will say that a linear map $\mathcal{K} : \mathcal{A}_{\Gamma}^{\rm loc} \to \mathcal{A}_{\Gamma}$ 
is {\it locally bounded} if there are non-negative numbers $p$ and $B$ for which
\begin{equation} \label{lb_est_comp}
\| \mathcal{K}(A) \| \leq B |X|^{p} \| A \|  \quad \mbox{for all } A \in \mathcal{A}_X \mbox{ with } X \in \mathcal{P}_0(\Gamma) \, .
\end{equation}
More general growth in $X$, i.e. the support of $A$, could be allowed, but the above {\it moment 
condition} covers all of the applications we have in mind. 
As discussed in Section~\ref{sec:qlm_gen}, we say that a linear map 
$\mathcal{K} : \mathcal{A}_{\Gamma}^{\rm loc} \to \mathcal{A}_{\Gamma}$ 
is {\it quasi-local} if there are non-negative numbers $q$ and $C$ as well as a non-increasing function $G$,
$G :[0, \infty) \to [0, \infty)$, with $\lim_{r \to \infty} G(r) = 0$  for which given any $X,Y \in \mathcal{P}_0(\Gamma)$,
\begin{equation} \label{ql_est_comp}
\| [ \mathcal{K}(A), B ] \| \leq C |X|^{q} \| A \| \| B \| G(d(X,Y)) \quad \mbox{for all } A \in \mathcal{A}_X \mbox{ and } B \in \mathcal{A}_Y \, .
\end{equation}
We will refer to $C, q$, and $G$ as the {\em parameters} of the quasi-local map.

We first consider compositions of linear maps for the following situation. Suppose that $\mathcal{K}_1 : \mathcal{A}_{\Gamma}^{\rm loc} \to \mathcal{A}_{\Gamma}$
is locally bounded and quasi-local in that it satisfies both (\ref{lb_est_comp}) and (\ref{ql_est_comp}). Furthermore, assume  
that $\mathcal{K}_2 : \mathcal{A}_{\Gamma} \to \mathcal{A}_{\Gamma}$ is linear, bounded, and quasi-local. So, in particular, 
there are non-negative numbers $q_2$ and $C_2$ and a decay function $G_2$ for which
the analogue of (\ref{ql_est_comp}) holds for $\mathcal{K}_2$. 
In many applications, the mapping $\mathcal{K}_2$ arises as the unique linear
extension of a bounded, quasi-local map 
$\tilde{\mathcal{K}}_2 : \mathcal{A}_{\Gamma}^{\rm loc} \to \mathcal{A}_{\Gamma}$. In this situation, we can define the composition $\mathcal{K} : \mathcal{A}^{\rm loc}_{\Gamma} \to \mathcal{A}_{\Gamma}$ in the usual way, i.e.
\begin{equation} \label{def_comp_1}
\mathcal{K}(A) = \mathcal{K}_2( \mathcal{K}_1(A)) \quad \mbox{ for all } A \in \mathcal{A}_{\Gamma}^{\rm loc} \, .
\end{equation}
Moreover, any such map satisfies the following estimate.

\begin{lem} \label{lem:comp1}
	Let $(\Gamma, d)$ be $\nu$-regular, $\mathcal{K}_1 : \mathcal{A}_{\Gamma}^{\rm loc} \to \mathcal{A}_{\Gamma}$ 
	be a locally bounded, quasi-local map, and $\mathcal{K}_2 : \mathcal{A}_{\Gamma} \to \mathcal{A}_{\Gamma}$ be a bounded,	quasi-local map. For $i=1$ or $2$, denote by $B_i,C_i, p_i,q_i, G_i$ the corresponding parameters from \eqref{lb_est_comp} and \eqref{ql_est_comp}. Then, the following hold for the composition $\mathcal{K} =\cK_2\circ\cK_1$:
	\begin{enumerate}
		\item [(i)] $\mathcal{K}$ is locally bounded: for any $A \in \mathcal{\cA}_X$ with $X\in \cP_0(\Gamma),$
		\begin{equation} \label{comp_1_lb_est}
		\| \mathcal{K}(A) \| \leq \tilde{B} |X|^p \| A \|,
		\end{equation}
		where one may take $\tilde{B} = B_1 \| \mathcal{K}_2 \|$ and $p =p_1$.
		\item[(ii)] For any $A \in \mathcal{A}_X$ and $B \in \mathcal{A}_Y$ where $X, Y \in \mathcal{P}_0( \Gamma)$, 
		\begin{equation} \label{comp_1_ql_est}
		\| [ \mathcal{K}(A), B] \| \leq  \| A \| \| B \| \min \{2\tilde{B}|X|^p, C |X|^q G(d(X,Y))  \},
		\end{equation}
		where the numbers $\tilde{B}$ and $p$ may be taken as in (\ref{comp_1_lb_est}), one may take $q =p_1 +q_2$, and
		\begin{equation} \label{comp_1_dec_fun}
		C = \max( \kappa^{q_2} B_1 C_2, \, 4 C_1 \| \mathcal{K}_2 \|  ) \quad \mbox{and} \quad
		G(r) = (r/2)^{q_2 \nu} G_2(r/2) + G_1(r/2).
		\end{equation}
	\end{enumerate}
\end{lem}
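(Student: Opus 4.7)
The plan is to reduce the claim to the quasi-local approximation result of Lemma~\ref{lem:qlm_approx}, using it to split $\mathcal{K}_1(A)$ into a strictly local piece supported in a modest inflation of $X$ and a small tail, and then estimate each contribution separately.

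Part (i) is immediate: for any $A\in\mathcal{A}_X$, we have $\|\mathcal{K}(A)\|=\|\mathcal{K}_2(\mathcal{K}_1(A))\|\le\|\mathcal{K}_2\|\,\|\mathcal{K}_1(A)\|\le\|\mathcal{K}_2\|B_1|X|^{p_1}\|A\|$, which gives the claimed $\tilde{B}=B_1\|\mathcal{K}_2\|$ with $p=p_1$. For part (ii), the trivial bound $\|[\mathcal{K}(A),B]\|\le 2\|\mathcal{K}(A)\|\|B\|\le 2\tilde{B}|X|^p\|A\|\|B\|$ handles the small-$d(X,Y)$ alternative, so only the decaying bound requires work.

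For the decaying bound, let $r=d(X,Y)$ and fix a nonnegative integer $n$ to be chosen later. Apply Lemma~\ref{lem:qlm_approx} to $\mathcal{K}_1$ at scale $n$: setting $A_n:=\Pi_{X(n)}(\mathcal{K}_1(A))\in\mathcal{A}_{X(n)}$ and $R_n:=\mathcal{K}_1(A)-A_n$, we have the estimate $\|R_n\|\le 2C_1|X|^{q_1}\|A\|G_1(n)$. Then decompose
\begin{equation*}
[\mathcal{K}(A),B]=[\mathcal{K}_2(A_n),B]+[\mathcal{K}_2(R_n),B].
\end{equation*}
For the first term, use the quasi-locality of $\mathcal{K}_2$ with support $X(n)$: since $d(X(n),Y)\ge r-n$, the $\nu$-regularity bound \eqref{infl_bd} gives $|X(n)|\le\kappa n^{\nu}|X|$, and $\|A_n\|\le\|\mathcal{K}_1(A)\|\le B_1|X|^{p_1}\|A\|$ (using that $\Pi_{X(n)}$ has norm one), yielding
\begin{equation*}
\|[\mathcal{K}_2(A_n),B]\|\le C_2\kappa^{q_2}n^{\nu q_2}B_1|X|^{p_1+q_2}\|A\|\|B\|G_2(r-n).
\end{equation*}
For the second term, the trivial commutator estimate gives $\|[\mathcal{K}_2(R_n),B]\|\le 2\|\mathcal{K}_2\|\|R_n\|\|B\|\le 4C_1\|\mathcal{K}_2\||X|^{q_1}\|A\|\|B\|G_1(n)$.

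Finally, choose $n=\lfloor r/2\rfloor$ so that both $n$ and $r-n$ are at least $r/2$, which lets us replace $G_j(n)$ and $G_j(r-n)$ by $G_j(r/2)$ (both are non-increasing). After collecting factors of $|X|$ (absorbing $|X|^{q_1}$ into $|X|^{p_1+q_2}$, which is harmless since $|X|\ge 1$ and the constant $C$ is only required to be an upper bound) and using $\max$ over the two prefactors, the sum of the two contributions is bounded by $C|X|^{q}\|A\|\|B\|G(r)$ with $q=p_1+q_2$ and $C,G$ as in \eqref{comp_1_dec_fun}. The main bookkeeping obstacle is ensuring the two error terms are balanced by the choice of $n$ while producing exactly the stated decay function $G(r)=(r/2)^{q_2\nu}G_2(r/2)+G_1(r/2)$; everything else is a direct application of the hypotheses and Lemma~\ref{lem:qlm_approx}.
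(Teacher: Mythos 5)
Your proof follows exactly the paper's route: use Lemma~\ref{lem:qlm_approx} at scale roughly $d(X,Y)/2$ to write $\mathcal{K}_1(A)$ as a strictly local piece $A_m=\Pi_{X(m)}(\mathcal{K}_1(A))$ plus a small tail, then apply quasi-locality of $\mathcal{K}_2$ to $A_m$ (with $\nu$-regularity controlling $|X(m)|$) and the norm bound $\|\mathcal{K}_2\|$ to the tail. Two small slips are worth noting. First, the claim that choosing $n=\lfloor r/2\rfloor$ makes \emph{both} $n$ and $r-n$ at least $r/2$ is false: $\lfloor r/2\rfloor\le r/2$, so $G_1(n)\ge G_1(r/2)$ and you cannot replace $G_1(n)$ by $G_1(r/2)$ in that direction. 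The cleaner fix, and what the paper actually does, is to take $m=r/2$ without any floor; both the definition of $X(m)$ and the proof of Lemma~\ref{lem:qlm_approx} work for any real $m\ge 0$, so $\Pi_{X(r/2)}$ is perfectly well defined. Second, absorbing $|X|^{q_1}$ into $|X|^{p_1+q_2}$ is harmless only when $q_1\le p_1+q_2$; the hypothesis $|X|\ge 1$ by itself does not give it to you. Your instinct to flag this step was good (the paper's proof itself has a slight imprecision at the same spot, writing $|X|^{p_1}$ where Lemma~\ref{lem:qlm_approx} really produces $|X|^{q_1}$), but the justification should invoke the exponent comparison, not just $|X|\ge 1$. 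Neither slip changes the structure or essential validity of the argument.
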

We note that if the function $G$ described in (\ref{comp_1_dec_fun}) above is non-increasing and
satisfies $\lim_{r \to \infty} G(r) = 0$, then the above estimates show that $\mathcal{K}$ is 
quasi-local. 

\begin{proof}
	To prove (i), note that given any $X \in \mathcal{P}_0( \Gamma)$,
	\begin{equation}
	\| \mathcal{K}(A) \| \leq \| \mathcal{K}_2 \| \| \mathcal{K}_1(A) \| \leq B_1 \| \mathcal{K}_2 \|  |X|^{p_1} \| A\|  \quad \mbox{for all } A \in \mathcal{A}_X \, .
	\end{equation}
	This proves (\ref{comp_1_lb_est}).
	
	The proof of (ii) follows from two observations. 
	First, the bound in (\ref{comp_1_lb_est}) implies a rough estimate on the commutator for any
	$X,Y \in \mathcal{P}_0( \Gamma)$. In fact, whenever $A \in \mathcal{A}_X$ and $B \in \mathcal{A}_Y$, one has
	\begin{equation} \label{bad_est}
	\| [ \mathcal{K}(A), B ] \| \leq 2 \| \mathcal{K}(A) \| \| B \| \leq 2 \tilde{B} |X|^p \| A \|  \| B \| .
	\end{equation}
	
	Next, we note that when $d(X,Y)>0$, we obtain better estimates.
	Under this additional constraint, set $m = d(X,Y)/2$. For any locally normal product state $\rho$ on $\mathcal{A}_{\Gamma}$,
	the estimate
	\begin{equation} \label{loc_bd_comp}
	\| \mathcal{K}_1(A) - \Pi_{X(m)}( \mathcal{K}_1(A)) \| \leq 2 C_1 |X|^{p_1} \| A \|  G_1(m) 
	\end{equation}
	follows from an application of Lemma~\ref{lem:qlm_approx}. Denoting by $A_m = \Pi_{X(m)}( \mathcal{K}_1(A))$,
	we find that
	\begin{equation} \label{comp_com}
	\| [ \mathcal{K}(A), B] \| \leq \| [ \mathcal{K}_2 ( A_m ), B ] \| + \| [ \mathcal{K}_2 ( \mathcal{K}_1(A) - A_m  ), B ] \|   . 
	\end{equation} 
	Since $\mathcal{K}_2$ is quasi-local, we have that
	\begin{eqnarray}
	\| [ \mathcal{K}_2 (A_m), B ] \| & \leq & C_2 |X(m)|^{q_2} \| A_m \|  \| B \|  G_2(d(X(m),Y)) \nonumber \\
	& \leq & \kappa^{q_2} B_1 C_2  \| A \| \| B \| |X|^{p_1+q_2} (d(X,Y)/2)^{ q_2 \nu} G_2(d(X,Y)/2)
	\end{eqnarray}
	where we have used the local bound for $\mathcal{K}_1$, (\ref{infl_bd}), and the fact that 
	$d(X,Y) \leq 2 d(X(m),Y)$. The second term in (\ref{comp_com}) above satisfies 
	\begin{eqnarray}
	\| [ \mathcal{K}_2 ( \mathcal{K}_1(A) - A_m  ), B ] \|  & \leq & 2 \| \mathcal{K}_2 \| \| \mathcal{K}_1(A) - A_m \| \| B \| \nonumber \\
	& \leq & 4 C_1 \| \mathcal{K}_2 \| \| A \| \| B \| |X|^{p_1} G_1(m) .
	\end{eqnarray}
	This completes the proof of (\ref{comp_1_ql_est}).
\end{proof}

For some of our applications, the estimates proven in Lemma~\ref{lem:comp1} do not suffice. 
Briefly, some information is lost when estimating the outer mapping $\mathcal{K}_2$ with
a rough norm bound. Due to this, we consider more general compositions in the proposition below.
First, we introduce some notation. Let $G: [0, \infty) \to [0, \infty)$ and $m \geq 0$, we say that
$G$ has a {\it finite $m$-th moment} if
\begin{equation}
\sum_{n=0}^{\infty} (1+n)^m G(n) < \infty \, .
\end{equation}

\begin{prop} \label{prop:gen_comp_qlms} 
	Let $(\Gamma, d)$ be a $\nu$-regular metric space. For $i=1,2$, let $\mathcal{K}_i : \mathcal{A}_{\Gamma}^{\rm loc} \to \mathcal{A}_{\Gamma}$ be locally bounded, quasi-local maps. Suppose that $G_1$, the decay function in (\ref{ql_est_comp}) associated to $\mathcal{K}_1$, has a finite $\nu p_2$-th moment. Then, for any choice of locally normal product state $\rho$ on $\mathcal{A}_{\Gamma}$,
	the composition $\mathcal{K}^{\rho} : \mathcal{A}_{\Gamma}^{\rm loc} \to \mathcal{A}_{\Gamma}$ given by
	\begin{equation} \label{gen_comp_qlm_def}
	\mathcal{K}^{\rho}(A) = \sum_{n=0}^{\infty} \mathcal{K}_2( \Delta_{X(n)}^{\rho}( \mathcal{K}_1(A))) \quad \mbox{for all } A \in \mathcal{A}_{\Gamma}^{\rm loc} \mbox{ with } {\rm supp}(A) = X \, 
	\end{equation}  
	is well-defined. The series above is absolutely convergent and may be estimated uniformly in
	the choice of locally normal product state $\rho$. In fact, the mapping $\mathcal{K}^{\rho}$ is independent of
	the choice of $\rho$.
\end{prop}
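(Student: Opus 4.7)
My plan is to prove absolute convergence first and then use a telescoping identity to establish independence of $\rho$.

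\textbf{Absolute convergence.} Fix $A \in \mathcal{A}_X$ with $X \in \mathcal{P}_0(\Gamma)$. Recall that the range of $\Delta_{X(n)}^{\rho}(\mathcal{K}_1(A))$ is contained in $\mathcal{A}_{X(n)}$. For $n = 0$ the bound $\|\Delta_{X(0)}^\rho(\mathcal{K}_1(A))\| = \|\Pi_X^\rho(\mathcal{K}_1(A))\| \le B_1 |X|^{p_1}\|A\|$ follows from $\|\Pi_X^\rho\|\le 1$ and local boundedness of $\mathcal{K}_1$. For $n\geq 1$, the estimate derived in the proof of Lemma~\ref{lem:qlm_approx} (see \eqref{Delta_bd}) yields $\|\Delta_{X(n)}^\rho(\mathcal{K}_1(A))\| \le 4 C_1 |X|^{q_1}\|A\|\, G_1(n-1)$. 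Since $\Delta_{X(n)}^\rho(\mathcal{K}_1(A)) \in \mathcal{A}_{X(n)}$, local boundedness of $\mathcal{K}_2$ and the $\nu$-regularity estimate $|X(n)|\le \kappa n^\nu |X|$ give
\begin{equation}
\|\mathcal{K}_2(\Delta_{X(n)}^\rho(\mathcal{K}_1(A)))\| \le 4 C_1 B_2 \kappa^{p_2}\, n^{\nu p_2} |X|^{p_2 + q_1}\|A\|\, G_1(n-1)
\end{equation}
for $n \geq 1$. The finite $\nu p_2$-th moment hypothesis on $G_1$ makes the series in \eqref{gen_comp_qlm_def} absolutely convergent, with a bound that depends only on $\mathcal{K}_1$, $\mathcal{K}_2$, $|X|$, $\|A\|$ and the moment, and in particular is uniform in the product state $\rho$.

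\textbf{Independence of $\rho$.} By the telescoping identity \eqref{finite_tele}, the $N$-th partial sum equals
\begin{equation}
\sum_{n=0}^N \mathcal{K}_2(\Delta_{X(n)}^\rho(\mathcal{K}_1(A))) = \mathcal{K}_2(\Pi_{X(N)}^\rho(\mathcal{K}_1(A))),
\end{equation}
which makes sense because $\Pi_{X(N)}^\rho(\mathcal{K}_1(A)) \in \mathcal{A}_{X(N)} \subset \mathcal{A}_\Gamma^{\rm loc}$. Given two locally normal product states $\rho,\rho'$, Lemma~\ref{lem:qlm_approx} applied to $\mathcal{K}_1$ yields
\begin{equation}
\bigl\|\Pi_{X(N)}^\rho(\mathcal{K}_1(A)) - \Pi_{X(N)}^{\rho'}(\mathcal{K}_1(A))\bigr\| \le 4 C_1 |X|^{q_1}\|A\|\, G_1(N),
\end{equation}
since both terms are within $2 C_1 |X|^{q_1}\|A\|\, G_1(N)$ of $\mathcal{K}_1(A)$. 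Since the difference of the two approximations lies in $\mathcal{A}_{X(N)}$, local boundedness of $\mathcal{K}_2$ combined with $|X(N)|\le \kappa N^\nu |X|$ gives
\begin{equation}
\bigl\|\mathcal{K}_2(\Pi_{X(N)}^\rho(\mathcal{K}_1(A))) - \mathcal{K}_2(\Pi_{X(N)}^{\rho'}(\mathcal{K}_1(A)))\bigr\| \le 4 C_1 B_2 \kappa^{p_2}|X|^{p_2+q_1}\|A\|\, N^{\nu p_2} G_1(N).
\end{equation}
This would give $\rho$-independence upon letting $N \to \infty$ provided $N^{\nu p_2} G_1(N)\to 0$.

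\textbf{The moment-decay step.} The one subtle point, which I view as the main obstacle, is establishing $N^{\nu p_2} G_1(N) \to 0$ from the finite $\nu p_2$-th moment assumption alone. This uses that $G_1$ is non-increasing: writing $S_N = \sum_{n > N} n^{\nu p_2} G_1(n)$, monotonicity gives
\begin{equation}
S_N \;\ge\; \sum_{n=N+1}^{2N} n^{\nu p_2} G_1(n) \;\ge\; N^{\nu p_2 + 1} G_1(2N),
\end{equation}
and since the tail $S_N\to 0$, we conclude $N^{\nu p_2 + 1} G_1(2N) \to 0$, hence $(2N)^{\nu p_2} G_1(2N)\to 0$, and by monotonicity $N^{\nu p_2} G_1(N) \to 0$. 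This closes the argument and proves that $\mathcal{K}^\rho(A)$ is the common limit of the norm-convergent sequences $\mathcal{K}_2(\Pi_{X(N)}^\rho(\mathcal{K}_1(A)))$, independently of the locally normal product state $\rho$.
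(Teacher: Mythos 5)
Your proof is correct and follows essentially the same route as the paper's: absolute convergence via \eqref{Delta_bd} and $\nu$-regularity, then $\rho$-independence via the telescoping identity \eqref{finite_tele} combined with Lemma~\ref{lem:qlm_approx}. The one place where you work harder than necessary is the ``moment-decay step'': since $(1+n)^{\nu p_2}G_1(n)$ is the general term of a convergent series, it automatically tends to $0$, so $N^{\nu p_2}G_1(N) \le (1+N)^{\nu p_2}G_1(N) \to 0$ without appealing to the monotonicity trick. The paper packages the same facts in an $\epsilon/3$ argument (choosing $N$ so the tail of $\sum (1+n)^{\nu p_2}G_1(n)$ is small, then bounding both the remaining tail of the series and the partial-sum difference by that tail), but the content is identical to your direct limit argument.
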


\begin{proof}
	Fix a locally normal product state $\rho$ on $\mathcal{A}_{\Gamma}$. 
	Lemma~\ref{lem:qlm_approx} shows that for each $X \in \mathcal{P}_0( \Gamma)$ and
	any $A \in \mathcal{A}_X$ we have that
	\begin{equation}
	\mathcal{K}_1(A) = \sum_{n=0}^{\infty} \Delta_{X(n)}^{\rho}( \mathcal{K}_1( A))
	\end{equation}
	and the series above is absolutely convergent. To obtain this 
	series representation, it is only required that the decay function associated to $\mathcal{K}_1$ is
	summable, see e.g. (\ref{abs_sum_D_qlm}) which is independent of the choice of $\rho$.

	We now claim that under the additional finite moment condition,  
	for each $X \in \mathcal{P}_0( \Gamma)$ and any $A \in \mathcal{A}_X$, 
	the series defining $\mathcal{K}^{\rho}(A)$ in (\ref{gen_comp_qlm_def}) is also absolutely convergent. In fact, the bound
	\begin{eqnarray} \label{abs_conv_comp}
	\sum_{n=0}^{\infty} \| \mathcal{K}_2( \Delta_{X(n)}^{\rho}( \mathcal{K}_1(A))) \| & \leq & B_2 \sum_{n=0}^{\infty} |X(n)|^{p_2} \| \Delta_{X(n)}^{\rho}( \mathcal{K}_1(A)) \|   \nonumber \\ 
	& \leq &  B_1 B_2 |X|^{p_1+p_2} \| A \| + 4 C_1 B_2 \kappa^{p_2} |X|^{p_2+q_1} \| A \| \sum_{n=1}^{\infty} n^{\nu p_2} G_1(n-1),
	\end{eqnarray}
	 can be obtained as follows. For the first inequality above, we use that $\cK_2$ is locally bounded. For the second inequality, we first partition the sum into $n=0$ and $n>0$. For $n=0$, we use that $\Delta_{X} = \Pi_X$ and  $\cK_1$ is locally bounded. For $n>0$, we apply \eqref{Delta_bd} using the quasi-locality of $\cK_{1}$, and invoke \eqref{infl_bd}.
	
	Now, let $\rho_1$ and $\rho_2$ be any two locally normal product states on $\mathcal{A}_{\Gamma}$.
	We show that for each fixed $X \in \mathcal{P}_0( \Gamma)$ and any $\epsilon >0$, one can estimate
	\begin{equation} \label{est_diff_comp}
	\| \mathcal{K}^{\rho_1}(A) - \mathcal{K}^{\rho_2}(A) \| \leq \epsilon \| A \| \quad \mbox{for all } A \in \mathcal{A}_X
	\end{equation}
	and hence prove that the mapping $\mathcal{K}^{\rho}$ is independent of the choice of $\rho$.
	
	By the absolute convergence proven in (\ref{abs_conv_comp}) and the finite moment condition, 
	it is clear that for any $\epsilon >0$ there is some $N \geq 1$, independent of $\rho$, for which
	\begin{equation} \label{def_N_comp}
	4 C_1 B_2 \kappa^{p_2} |X|^{p_2+q_1} \sum_{n=N}^{\infty} (1+n)^{\nu p_2} G_1(n) < \epsilon / 3.
	\end{equation} 
	For $N$ as above, we write
	\begin{eqnarray}
	\mathcal{K}^{\rho_1}(A) - \mathcal{K}^{\rho_2}(A) & = & \sum_{n=0}^N \left( \mathcal{K}_2 ( \Delta_{X(n)}^{\rho_1}( \mathcal{K}_1(A))) - \mathcal{K}_2 ( \Delta_{X(n)}^{\rho_2}( \mathcal{K}_1(A))) \right) \nonumber \\
	& \mbox{ } & \quad + \sum_{n=N+1}^{\infty} \left( \mathcal{K}_2 ( \Delta_{X(n)}^{\rho_1}( \mathcal{K}_1(A))) - \mathcal{K}_2 ( \Delta_{X(n)}^{\rho_2}( \mathcal{K}_1(A))) \right).
	\end{eqnarray}
	Based on   of $N$, it follows from (\ref{abs_conv_comp}) that 
	\begin{equation}
	\left\| \sum_{n=N+1}^{\infty} \left( \mathcal{K}_2 ( \Delta_{X(n)}^{\rho_1}( \mathcal{K}_1(A))) - \mathcal{K}_2 ( \Delta_{X(n)}^{\rho_2}( \mathcal{K}_1(A))) \right) \right\| < \frac{2 \epsilon}{3} \| A \| .
	\end{equation}
	Using linearity of $\mathcal{K}_2$ and the telescopic properties of the sums, see (\ref{finite_tele}), we also have that
	\begin{align}
	\Bigg\| \sum_{n=0}^N \Big( \mathcal{K}_2 & ( \Delta_{X(n)}^{\rho_1}( \mathcal{K}_1(A))) - \mathcal{K}_2 ( \Delta_{X(n)}^{\rho_2}( \mathcal{K}_1(A))) \Big)  \Bigg\|\\
	 & = 
	\left\| \mathcal{K}_2 \left( \Pi_{X(N)}^{\rho_1}(\mathcal{K}_1(A))  -  \Pi_{X(N)}^{\rho_2}(\mathcal{K}_1(A)) \right) \right\| \nonumber \\
	& \leq   B_2 |X(N)|^{p_2} \| \Pi_{X(N)}^{\rho_1}(\mathcal{K}_1(A)) - \Pi_{X(N)}^{\rho_2}(\mathcal{K}_1(A)) \| \nonumber \\
	& \leq  4 C_1 B_2 \kappa^{p_2} |X|^{p_2+q_1} N^{ \nu p_2} G_1(N) \| A \| \nonumber \\
	& <  \frac{\epsilon}{3} \| A \|
	\end{align} 
	where we have used the local bound for $\mathcal{K}_2$ and inserted (and removed) $\mathcal{K}_1(A)$ to apply Lemma~\ref{lem:qlm_approx}. 
	The final bound above is clear from (\ref{def_N_comp}). The claim in (\ref{est_diff_comp}) now follows. 
\end{proof}

{F}rom Proposition~\ref{prop:gen_comp_qlms}, we now have conditions under which there is a well-defined composition
of two locally bounded, quasi-local maps. The next lemma provides local bounds and quasi-local estimates for the
resulting composition.

\begin{lem} \label{lem:comp2} 
	Let $(\Gamma, d)$ be a $\nu$-regular metric space. For $i=1,2$, let $\mathcal{K}_i: \mathcal{A}_{\Gamma}^{\rm loc} \to \mathcal{A}_{\Gamma}$ be 
	locally bound, quasi-local maps. Suppose that $G_1$, the decay function in (\ref{ql_est_comp}) associated to $\mathcal{K}_1$, has a finite $\nu p_2$-th moment
	and let $\mathcal{K} : \mathcal{A}_{\Gamma}^{\rm loc} \to \mathcal{A}_{\Gamma}$ denote the composition from (\ref{gen_comp_qlm_def}).
	\begin{enumerate}
		\item[(i)] $\mathcal{K}$ is locally bounded: for any $A\in \cA_{X}$ with $X\in \cP_0(\Gamma)$
		\begin{equation} \label{comp_2_lb_est}
		\| \mathcal{K}(A) \| \leq \tilde{B} |X|^p \| A \|,
		\end{equation}
		where one may take $p = p_2 + \max\{p_1, q_1\}$ and
		\begin{equation} \label{lb_order+const}
		 \tilde{B} = B_2 \left( B_1 + 4 C_1 \kappa^{p_2} \sum_{n=0}^{\infty}(1+n)^{\nu p_2} G_1(n) \right) \, .
		\end{equation}
		\item[(ii)] For any $A \in \cA_X$ and $B \in \cA_Y$ where $X, Y \in \mathcal{P}_0( \Gamma)$, one has that 
		\begin{equation} \label{comp_2_ql_est}
		\| [ \mathcal{K}(A), B] \| \leq \| A \| \| B \| \min\left\{ 2 \tilde{B} |X|^p, \, C |X|^q G(d(X,Y)) \right\} 
		\end{equation}
		where one may take $C = {\rm max}\{\kappa^{q_2}B_1 C_2, \, 8\kappa^{p_2}C_1 B_2\}$, $q = \max\{p_1, q_1\} + \max\{p_2, q_2\}$, and 
		\begin{equation} \label{comp_2_dec_fun}
		G(r) = (r/2)^{q_2 \nu} G_2(r/2) + \sum_{n= \lfloor r/2 \rfloor }^{\infty}(1+n)^{\nu p_2} G_1(n).
		\end{equation}
	\end{enumerate}
\end{lem}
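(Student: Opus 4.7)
The plan is to exploit the series representation
\[\mathcal{K}(A) = \sum_{n=0}^{\infty} \mathcal{K}_2( \Delta_{X(n)}(\mathcal{K}_1(A))),\]
whose absolute convergence and independence of the chosen locally normal product state $\rho$ were established in Proposition~\ref{prop:gen_comp_qlms}. The key ingredients in what follows are three estimates already at our disposal: the local boundedness $\|\mathcal{K}_2(C)\| \leq B_2 |Z|^{p_2}\|C\|$ whenever $C\in\mathcal{A}_Z$; the quasi-locality bound $\|[\mathcal{K}_2(C),B]\|\leq C_2|Z|^{q_2}\|C\|\|B\|G_2(d(Z,Y))$ under the same support assumption; and the $\Delta$-estimate \eqref{Delta_bd}, namely $\|\Delta_{X(n)}(\mathcal{K}_1(A))\|\leq 4C_1|X|^{q_1}\|A\|G_1(n-1)$ for $n\geq 1$, together with the trivial $n=0$ bound $\|\Pi_X(\mathcal{K}_1(A))\|\leq B_1|X|^{p_1}\|A\|$. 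Throughout, $\nu$-regularity supplies $|X(n)|^{\alpha}\leq \kappa^{\alpha} n^{\nu\alpha}|X|^{\alpha}$ for $n\geq 1$.

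For part (i), I will simply sum the termwise norm estimates: the $n=0$ contribution is bounded by $B_1 B_2 |X|^{p_1+p_2}\|A\|$, and for $n\geq 1$ the bound becomes $4 C_1 B_2 \kappa^{p_2} n^{\nu p_2}|X|^{q_1+p_2}\|A\|G_1(n-1)$. Since $|X|^{p_1+p_2}$ and $|X|^{q_1+p_2}$ are both dominated by $|X|^{p}$ with $p = p_2 + \max\{p_1,q_1\}$, re-indexing $n\mapsto n-1$ in the sum and invoking the finite $\nu p_2$-th moment hypothesis on $G_1$ produces the bound \eqref{comp_2_lb_est} with the constant $\tilde B$ of \eqref{lb_order+const}.

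For part (ii), the estimate by $2\tilde B|X|^{p}\|A\|\|B\|$ is the trivial consequence of (i) and the commutator bound. For the alternative estimate, set $r = d(X,Y)$, choose the cutoff $N = \lfloor r/2\rfloor$, and split the series as
\[\mathcal{K}(A) = \mathcal{K}_2\bigl(\Pi_{X(N)}(\mathcal{K}_1(A))\bigr) + \sum_{n=N+1}^{\infty}\mathcal{K}_2\bigl(\Delta_{X(n)}(\mathcal{K}_1(A))\bigr),\]
using the telescoping identity \eqref{finite_tele}. For the short-range contribution, $\Pi_{X(N)}(\mathcal{K}_1(A))\in \mathcal{A}_{X(N)}$ with norm at most $B_1|X|^{p_1}\|A\|$, and since $d(X(N),Y)\geq r - N \geq r/2$, the quasi-locality of $\mathcal{K}_2$ combined with $|X(N)|^{q_2}\leq \kappa^{q_2}(r/2)^{\nu q_2}|X|^{q_2}$ yields the contribution $\kappa^{q_2}B_1 C_2 |X|^{p_1+q_2}\|A\|\|B\|(r/2)^{\nu q_2}G_2(r/2)$. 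For the tail, I bound each commutator crudely by $2\|B\|\|\mathcal{K}_2(\Delta_{X(n)}(\mathcal{K}_1(A)))\|$ and invoke the same termwise estimate as in (i), which produces $8 B_2 C_1 \kappa^{p_2}|X|^{q_1+p_2}\|A\|\|B\|\sum_{n=N}^{\infty}(n+1)^{\nu p_2}G_1(n)$. Adding the two and using $|X|^{p_1+q_2},|X|^{q_1+p_2}\leq |X|^{q}$ for $q = \max\{p_1,q_1\}+\max\{p_2,q_2\}$ delivers \eqref{comp_2_ql_est}--\eqref{comp_2_dec_fun}.

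Beyond book-keeping, the one subtlety is calibrating the cutoff: choosing $N=\lfloor r/2\rfloor$ makes $d(X(N),Y)\geq r/2$ so that the quasi-locality decay of $\mathcal{K}_2$ can be invoked uniformly in the short-range piece, while simultaneously arranging for the tail sum to start at $\lfloor r/2\rfloor$ as required for the second piece of $G(r)$. No moment assumption beyond the hypothesis of Proposition~\ref{prop:gen_comp_qlms} is needed, since the polynomial growth $(n+1)^{\nu p_2}$ governing the tail matches exactly what is being assumed summable against $G_1$.
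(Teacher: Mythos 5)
Your proof is correct and follows essentially the same route as the paper: part (i) is the termwise summation bound already recorded as \eqref{abs_conv_comp} in the proof of Proposition~\ref{prop:gen_comp_qlms}, and part (ii) uses exactly the same splitting of the series at the cutoff $m=\lfloor d(X,Y)/2\rfloor$ via the telescoping identity, the quasi-locality of $\mathcal{K}_2$ on the head, and the crude commutator plus local bound on the tail. The constants, the re-indexing, and the observation $d(X(m),Y)\geq d(X,Y)/2$ all match the paper's argument.
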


Again we stress that the bounds above demonstrate that the composition is quasi-local if the function $G$ in (\ref{comp_2_dec_fun}) is non-increasing with $\lim_{r \to \infty}G(r) = 0$.

\begin{proof}
	The bound (\ref{comp_2_lb_est}) is a consequence of (\ref{abs_conv_comp}) found in the proof of Proposition~\ref{prop:gen_comp_qlms}. To prove (\ref{comp_2_ql_est}), we argue as in the proof of Lemma~\ref{lem:comp1}(ii). We need only consider the case when $d(X,Y)>0$, and therein we set $m = \lfloor d(X,Y)/2 \rfloor$. For $A \in \cA_X$, we write
	\begin{equation}
	\mathcal{K}(A) = \sum_{n=0}^{\infty} \mathcal{K}_2( \Delta_{X(n)}( \mathcal{K}_1(A))) = \mathcal{K}_2(\Pi_{X(m)}( \mathcal{K}_1(A))) + \sum_{n=m+1}^{\infty} \mathcal{K}_2( \Delta_{X(n)}( \mathcal{K}_1(A))).
	\end{equation}
	Here we have used an expansion as in (\ref{gen_comp_qlm_def}), and the telescopic property (\ref{finite_tele}). Moreover, we have dropped the dependence of $\rho$ from the notation since \eqref{gen_comp_qlm_def} is invariant under the choice of locally normal product state by Proposition~\ref{prop:gen_comp_qlms}. The estimate
	\begin{equation}
	\| [ \mathcal{K}(A), B ] \| \leq \left\| \left[ \mathcal{K}_2(\Pi_{X(m)}( \mathcal{K}_1(A))) , B \right] \right\| 
	+ \sum_{n=m+1}^{\infty} \left\| \left[ \mathcal{K}_2 \left( \Delta_{X(n)}( \mathcal{K}_1(A)) \right), B \right] \right\|
	\end{equation} 
	readily follows. 
	
	As $\mathcal{K}_2$ is quasi-local, it is clear that
	\begin{eqnarray}
	\left\| \left[ \mathcal{K}_2(\Pi_{X(m)}( \mathcal{K}_1(A))) , B \right] \right\|  & \leq & C_2 |X(m)|^{q_2} \| \mathcal{K}_1(A) \| \| B \| G_2(d(X(m), Y)) \nonumber \\
	& \leq & B_1 C_2 \kappa^{q_2} |X|^{p_1+q_2} \| A \| \| B \| (d(X,Y)/2)^{q_2 \nu} G_2(d(X,Y)/2)
	\end{eqnarray}
	
	To estimate the remaining term, for each $n \geq m+1$ we find
	\begin{eqnarray}
	\left\| \left[ \mathcal{K}_2 \left( \Delta_{X(n)}( \mathcal{K}_1(A)) \right), B \right] \right\|  & \leq &  2  \left\| \mathcal{K}_2 \left( \Delta_{X(n)}( \mathcal{K}_1(A)) \right) \right\|  \| B \|\nonumber \\
	& \leq & 2  B_2 |X(n)|^{p_2} \|  \Delta_{X(n)}( \mathcal{K}_1(A)) \| \| B \| \nonumber \\
	& \leq & 8 C_1 B_2 \kappa^{p_2} |X|^{p_2+q_1} \| A \|  \| B \|  n^{p_2 \nu} G_1(n-1)
	\end{eqnarray}
	where we have used (\ref{Delta_bd}). This proves (\ref{comp_2_ql_est}). 
\end{proof}

%%%%%%%%
%
% On Quasi-Local Interactions 
%
%%%%%%%%%%

\subsection{Quasi-local transformations of interactions} \label{sec:quasi_loc_ints}

Important applications of quasi-local maps arise in the classification of gapped ground state phases \cite{bachmann:2012, bachmann:2014a, bachmann:2015a, ogata:2016, ogata:2016a, ogata:2017} and recent proofs of stability of the spectral gap \cite{bravyi:2010,bravyi:2011,michalakis:2013,nachtergaele:2016b,moon:2018,QLBII}. In these proofs, key insights come from analyzing the composition of a quasi-local map with an interaction, $\cK \circ \Phi : \cP_0(\Gamma) \to \cA_{\Gamma}$. It is important to note that such maps are not necessarily interactions themselves, as the image lies in the quasi-local algebra, $\cA_{\Gamma}$, rather than the algebra of local observables, $\cA_{\Gamma}^{\rm loc}$. In our applications, the interaction and quasi-local map often depend on an auxiliary parameter and we allow for this in our construction and results. In what follows, we provide a general framework under which one can construct a {\em bona fide} interaction from such a composition and derive estimates that determine conditions under which these transformed interactions have a finite $F$-norm.

We begin with a general description of transformed interactions in Section~\ref{sec:trans_ints}. In Section~\ref{sec:qli_fvr}, 
we prove estimates on these transformed interactions in finite volume. In Section~\ref{sec:qli_tl}, we give conditions under which the
finite-volume results proven in Section~\ref{sec:qli_fvr} extend to the thermodynamic limit. A concrete application of these results will be given in Section \ref{sec:ql_sf}, where we show that the spectral flow automorphisms can be realized as the dynamics generated by a time-dependent interaction with good decay properties.

\subsubsection{Transformed finite-volume Hamiltonians} \label{sec:trans_ints} 

To investigate the spectral properties of a given Hamiltonian $H$, it is often convenient to 
work with a unitarily equivalent Hamiltonian $H' = U^*HU$. When the original Hamiltonian
is a sum of local terms, the strict locality of these terms is typically not preserved under 
the mapping $H \mapsto H'$. In recent applications, most notably the proof of stability, it is 
shown that locality based arguments, such as Lieb-Robinson bounds, still apply to $H'$ if the automorphism implemented by the unitary $U$ 
is sufficiently quasi-local.

In this section, we discuss this situation more generally. Specifically, we analyze the transformation of
a given interaction by a quasi-local map. Briefly, we first argue that the composition of a quasi-local
map with an interaction can, using the methods of Section~\ref{sec:laqlmaps}, still be realized as an interaction with 
strictly local terms. Moreover, we show that the spatial decay associated to this new interaction can be estimated in terms of the decays of the original interaction and the quasi-local map. 
Finally, we discuss quasi-locality estimates for the dynamics of this
transformed interaction.  

To establish some notation, let us first consider a simple, time-independent case in finite volume.
As before, fix a quantum lattice system comprised of $(\Gamma, d)$ and $\cA_{\Gamma}$.
Let $\Phi$ be an interaction on $\cA_{\Gamma}$ and recall that for any finite $\Lambda \subset \Gamma$,
we denote by
\begin{equation} \label{f_vol_ham}
H_{\Lambda}^{\Phi} = \sum_{X \subset \Lambda} \Phi(X)
\end{equation} 
the finite-volume Hamiltonian generated by $\Phi$. Our goal here is to analyze the transformation 
of this local Hamiltonian $H_{\Lambda}^{\Phi}$ by a linear map $\mathcal{K} : \mathcal{A}_{\Lambda} \to \mathcal{A}_{\Lambda}$. 
In particular, we consider 
\begin{equation} \label{trans_ham}
\mathcal{K}( H_{\Lambda}^{\Phi}) = \sum_{X \subset \Lambda} \mathcal{K}(\Phi(X)) \, .
\end{equation}
Generally, the map $\mathcal{K}$ will not preserve locality, and 
in such cases, each term in (\ref{trans_ham}) will be global in the sense that
$\supp(\mathcal{K}(\Phi(X)))=\Lambda$ for each $X \subset \Lambda$. For this reason, the sum
on the right-hand-side of (\ref{trans_ham}) does not represent an interaction in the sense defined in Section~\ref{subsec:lrb}.

Using the methods of Section~\ref{sec:laqlmaps}, one can rewrite the right-hand-side of 
(\ref{trans_ham}) as a sum of strictly local terms. To see this, fix a locally normal product state 
$\rho$ on $\cA_\Gamma$. In this finite-volume context, we only use the restriction of $\rho$ to $\cA_{\Lambda}$ and again refer to it as $\rho$. In terms of $\rho$, we have defined 
local decompositions with respect to any $X \subset \Lambda$ and each $n \geq 0$ 
as the maps $\Delta^\Lambda_{X(n)} : \cA_{\Lambda} \to \cA_{\Lambda}$ given by (\ref{loc_dec}).
Recall further that $\Delta^\Lambda_{X(n)}$ has range contained in $\cA_{X(n) \cap \Lambda}$
(using again the identification of the former as a sub-algebra of $\cA_{\Lambda}$), and 
moreover, $\| \Delta^\Lambda_{X(n)} \| \leq 2$.
In this case, each term $\mathcal{K}( \Phi(X))$ appearing in (\ref{trans_ham}) can be written as a finite telescopic sum as in \eq{telescopicA} by
\begin{equation}\label{finite-telescopic}
\mathcal{K}( \Phi(X)) = \sum_{n \geq 0} \Delta^\Lambda_{X(n)}( \mathcal{K}( \Phi(X))) .
\end{equation}
For any $Z \subset \Lambda$, define
\begin{equation} \label{def_comp_int}
\Psi_{\Lambda}(Z) = \sum_{n \geq 0} \sum_{\substack{X \subset Z:\\ X(n) \cap \Lambda = Z}} \Delta_{X(n)}^{\Lambda}( \mathcal{K}( \Phi(X)))
\end{equation}
with the understanding that empty sums are taken to be zero. By construction,
$\Psi_{\Lambda}(Z) \in \cA_Z$ and under the additional
assumption that $\mathcal{K}(A)^* = \mathcal{K}(A^*)$ for all $A \in \cA_{\Lambda}$, we see that
$\Psi_{\Lambda}$ is a well-defined (finite-volume) interaction in the sense of Section~\ref{subsec:lrb}. Moreover,
\begin{equation} \label{loc_comp_ham}
\mathcal{K}(H_{\Lambda}^{\Phi}) = \sum_{X \subset \Lambda} \mathcal{K}(\Phi(X))= \sum_{Z \subset \Lambda} \Psi_{\Lambda}(Z)
= H_{\Lambda}^{\Psi_{\Lambda}} \, .
\end{equation}
In words, using the notation from (\ref{f_vol_ham}), the final equalities in (\ref{loc_comp_ham}) show that the transformed Hamiltonian in (\ref{trans_ham}) may be rewritten as the Hamiltonian generated by the interaction $\Psi_{\Lambda}$.

\subsubsection{Finite-volume results} \label{sec:qli_fvr}

In this section, we give a finite-volume analysis of the transformed interactions
briefly introduced at the end of Section~\ref{sec:trans_ints}. In Section~\ref{sec:qli_tl}, we will discuss appropriate conditions under which these results will extend to the thermodynamic limit. For many
of our applications, both the interaction and the quasi-local map will be time-dependent.
As a consequence, we state and prove our estimates for families of interactions and quasi-local maps.

We make two useful observations in this section. First, we indicate 
a set of continuity assumptions under which a finite-volume transformed 
interaction corresponds to a well-defined dynamics. These assumptions will also
guarantee that the interaction which generates this transformed interaction is
strongly continuous in the sense of Section~\ref{sec:spatialstructure}. Next, 
we will show that certain decay assumptions on the initial interaction 
$\Phi$ and quasi-local map $\mathcal{K}$ lead to explicit estimates on the 
decay of the interaction $\Psi_{\Lambda}$; here we are using
the notation introduced at the end of Section~\ref{sec:trans_ints}. 
Technically, the continuity and decay assumptions are independent, however, in most 
applications, the models we consider satisfy both sets of assumptions simultaneously. 

Let $(\Gamma, d)$ and $\cA_{\Gamma}$ be a quantum lattice system, and $I \subset \mathbb{R}$ be an interval. We once again work with strongly continuous interactions $\Phi : \mathcal{P}_0( \Gamma) \times I \to \cA_{\Gamma}^{\rm loc}$; meaning that, for all $X\in \cP_0(\Gamma)$,
\begin{enumerate}
	\item[(i)] $\Phi(X,t)^* = \Phi(X,t) \in \cA_X$ for all $t \in I$.
	\item[(ii)] The map $t \mapsto \Phi(X,t)$ is continuous in the strong operator topology on $\cA_X = \mathcal{B}( \cH_X)$.
\end{enumerate}
For any $\Lambda \in \mathcal{P}_0( \Gamma)$, we define a finite-volume, time-dependent Hamiltonian associated to $\Phi$ by 
\begin{equation} \label{fv_td_ham_P}
H_{\Lambda}^{\Phi}(t) = \sum_{X \subset \Lambda} \Phi(X,t) \quad \mbox{for all } t \in I \, .
\end{equation}
From the assumptions, it is clear that $H_{\Lambda}^{\Phi}$ is also pointwise self-adjoint and strongly continuous as it is the finite sum of such terms.
 
For the remainder of this subsection, we fix a finite volume $\Lambda \in \mathcal{P}_0( \Gamma)$, and are interested in studying time-dependent transformed finite-volume Hamiltonians analogous to those consider in Section~\ref{sec:trans_ints}. Specifically, given any family of linear maps $\{ \mathcal{K}_t: \cA_{\Lambda} \to \cA_{\Lambda} \}_{t \in I}$, we consider the set of all operators of the form
\begin{equation} \label{td_trans_ham}
\mathcal{K}_t( H_{\Lambda}^{\Phi}(t)) = \sum_{X \subset \Lambda} \mathcal{K}_t(\Phi(X,t)), 
\end{equation}
and  will refer to such collections as a finite-volume family of transformed interactions. Under assumptions which guarantee that $t \mapsto \mathcal{K}_t( H_{\Lambda}^{\Phi}(t))$ is pointwise self-adjoint and strongly 
continuous, the methods of Section~\ref{sec:Dyson}, see also Section~\ref{sec:spatialstructure}, demonstrate that these transformed interactions
correspond to a dynamics. More precisely, Proposition~\ref{prop:sols} shows that
for any $s,t \in I$ the strong solution $U_{\Lambda}(t,s) \in \mathcal{A}_{\Lambda}$ of  
\begin{equation} \label{def:qluni}
\frac{d}{dt} U_{\Lambda}(t,s) = - i \mathcal{K}_t( H_{\Lambda}^{\Phi}(t)) U_{\Lambda}(t,s),
\quad U_{\Lambda}(s,s) = \idty
\end{equation}
defines a two-parameter family of unitaries, and thus a cocycle of automorphisms 
$\tau_{t,s}^{\Lambda}$ of $\mathcal{A}_{\Lambda}$ with
\begin{equation} \label{def:qldyn}
\tau_{t,s}^{\Lambda}(A) = U_{\Lambda}(t,s)^* A U_{\Lambda}(t,s) \quad \mbox{for all } A \in \cA_{\Lambda} .
\end{equation}
We refer to $\tau_{t,s}^{\Lambda}$ as the dynamics corresponding to the transformed Hamiltonian
in (\ref{td_trans_ham}). 

A main goal of this subsection is to establish assumptions under which the dynamics in (\ref{def:qldyn})
satisfies a quasi-locality estimate, also known as a Lieb-Robinson bound, see Theorem~\ref{thm:lrb}.
In order to do so, we first re-write the family of transformed interactions in (\ref{td_trans_ham}) as a sum of strictly
local terms. Fix a locally normal product state $\rho$ on $\cA_{\Gamma}$. For any $Z \subset \Lambda$
and each $t \in I$, set 
\begin{equation} \label{fv_td_qli}
\Psi_{\Lambda}(Z,t) = \sum_{n \geq 0} \sum_{\substack{X \subset Z:\\ X(n) \cap \Lambda = Z}} \Delta_{X(n)}^{\Lambda}( \mathcal{K}_t( \Phi(X,t)))
\end{equation} 
where, as in Section~\ref{sec:trans_ints}, we have made local decompositions of the global terms on the right-hand-side of (\ref{td_trans_ham}); compare with (\ref{finite-telescopic}) and (\ref{def_comp_int}). We stress that for all $t \in I$, we 
make local decompositions with respect to the same locally normal product state $\rho$. As in (\ref{loc_comp_ham}), it
is clear that for each $t \in I$, 
\begin{equation} \label{td_fv_qli_ham}
\mathcal{K}_t(H_{\Lambda}^{\Phi}(t)) = \sum_{X \subset \Lambda} \mathcal{K}_t(\Phi(X,t))= \sum_{Z \subset \Lambda} \Psi_{\Lambda}(Z,t)
= H_{\Lambda}^{\Psi_{\Lambda}}(t) \, .
\end{equation}
We now introduce a set of assumptions on the family of functions $\{ \mathcal{K}_t: \cA_{\Lambda}\to \cA_{\Lambda} \}_{t \in I}$ which
guarantee that: (i) the dynamics in (\ref{def:qldyn}) is well-defined and (ii) the mapping
$\Psi_{\Lambda} : \mathcal{P}_0( \Lambda) \times I \to \cA_{\Lambda}$ is a strongly 
continuous interaction in the sense of Section~\ref{sec:spatialstructure}.  

\begin{assumption}\label{ass:fv_qlm_cont} We assume the collection of finite-volume linear maps $\{ \mathcal{K}_t : \mathcal{A}_{\Lambda} \to \mathcal{A}_{\Lambda} \}_{t \in I}$, is a strongly continuous family of strongly continuous transformations that are compatible with the involution in the sense that:
	\begin{enumerate}
		\item[(i)] For each $t \in I$, $\mathcal{K}_t(A)^* = \mathcal{K}_t(A^*)$ for all $A \in \mathcal{A}_{\Lambda}$.
		\item[(ii)] For each  $A \in \mathcal{A}_{\Lambda}$, the function $t \mapsto \mathcal{K}_t(A)$ is norm continuous. 
		\item[(iii)]  For each $t \in I$, the map $\mathcal{K}_t : \mathcal{A}_{\Lambda} \to \mathcal{A}_{\Lambda}$ is continuous
		on bounded subsets when both its domain and co-domain are equipped with the strong operator topology and moreover, this continuity
		is uniform on compact subsets of $I$. 
	\end{enumerate}
\end{assumption}

Assumption~\ref{ass:fv_qlm_cont}(i), together with Proposition~\ref{prop:compatibility}(v), is used to ensure that the terms defined in (\ref{fv_td_qli}) are point-wise self-adjoint. This is important in defining the unitary propagator, but it plays no role in establishing various continuity properties. Next, as is discussed in Section~\ref{sec:cont_l_a}, Assumption~\ref{ass:fv_qlm_cont} (ii) and (iii)
guarantee that $t \mapsto \mathcal{K}_t(H_{\Lambda}^{\Phi}(t))$ is strongly continuous.
As such, the finite-volume dynamics associated to this transformed interaction, see (\ref{def:qluni}) and (\ref{def:qldyn}), is
well-defined. In particular, this dynamics is independent of the choice of $\rho$. 
Note further that Assumption~\ref{ass:fv_qlm_cont} (ii) and (iii) also ensure that
for each $X \subset \Lambda$, $t \mapsto \mathcal{K}_t(\Phi(X,t))$ is strongly continuous.
Given this, Proposition~\ref{prop:strongcontinuity} shows that each of the finitely many terms on the right-hand-side of
(\ref{fv_td_qli}) is strongly continuous as well, and as a result, $\Psi_{\Lambda}$ is a strongly continuous interaction.
The interaction $\Psi_{\Lambda}$, which does depend on the choice of $\rho$, will be useful in proving
a quasi-locality bound on the finite-volume dynamics in (\ref{def:qldyn}).

The goal for the remainder of this section is to quantify a quasi-locality estimate for the finite-volume 
dynamics in (\ref{def:qldyn}) in terms of decay properties of the original interaction $\Phi$ and the
finite-volume transformations $\{ \mathcal{K}_t \}_{t \in I}$. For these results, we assume that
$(\Gamma, d)$ is $\nu$-regular and equipped with an $F$-function $F$.

Let us again fix an interval $I \subset \mathbb{R}$ and a finite-volume $\Lambda \in \mathcal{P}_0( \Gamma)$.
We make the following decay assumptions on a family of finite volume transformations.

\begin{assumption} \label{ass:fv_qlm_dec} We assume that the family of finite-volume linear maps $\{ \mathcal{K}_t : \mathcal{A}_{\Lambda} \to \mathcal{A}_{\Lambda}\}_{t \in I}$ is a time-dependent family of locally bounded, quasi-local maps in the sense that:
\begin{enumerate}
	\item[(i)] There is some $p \geq 0$ and a measurable, locally bounded function 
	$B:I \to [0, \infty)$ so that given any $X \subset \Lambda$,
	\begin{equation} \label{fv_lb_K_td}
	\| \mathcal{K}_t(A) \| \leq B(t) |X|^p \| A \|    \mbox{ for all } A \in \cA_X  \mbox{ and } t \in I \, .
	\end{equation} 
	\item[(ii)] There is some $q \geq 0$, a non-increasing function $G: [0, \infty) \to [0, \infty)$ with
	$G(r) \to 0$ as $r \to \infty$, and a measurable, locally bounded function $C:I \to [0, \infty)$ for which given any sets $X, Y \subset \Lambda$, one has that 
	\begin{equation} \label{fv_ql_K_td}
	\| [ \mathcal{K}_t(A), B ] \| \leq C(t) |X|^q \| A \| \| B \|  G(d(X,Y))  \mbox{ for all } A\in \cA_X ,  B\in \cA_Y, \mbox{ and } t \in I \, .
	\end{equation}
\end{enumerate}
\end{assumption} 

For the initial interaction, we impose decay assumptions which compensate for the factors of $|X|$
found in (\ref{fv_lb_K_td}) and (\ref{fv_ql_K_td}) above. More precisely, for any time-dependent 
interaction $\Phi$ and each $m \geq 0$, we define a new interaction $\Phi_m$, which we call 
the {\it $m$-th moment of $\Phi$}, with terms
\begin{equation}
\Phi_m(X,t) = |X|^m \Phi(X,t) \quad \mbox{for all } X \in \mathcal{P}_0( \Gamma) \mbox{ and } t \in I \, .
\end{equation} 
To prove the result in Theorem~\ref{thm:qli_est} below, we will assume that the initial interaction $\Phi$
satisfies $\Phi_m \in \mathcal{B}_F(I)$ for $m = \max\{p,q\}$ with $p$ and $q$ 
as in (\ref{fv_lb_K_td}) and (\ref{fv_ql_K_td}), respectively. Recall that an interaction 
$\Phi \in \mathcal{B}_F(I)$ if and only if $\Phi: \mathcal{P}_0( \Gamma) \times I \to \mathcal{A}_{\Gamma}^{\rm loc}$ is a strongly continuous interaction and
the map $\| \Phi \|_F : I \to [0, \infty)$ given by 
\begin{equation} \label{td_int_F_norm}
\| \Phi \|_F(t) = \sup_{x,y \in \Gamma} \frac{1}{F(d(x,y))} \sum_{\stackrel{X \in \mathcal{P}_0( \Gamma):}{x,y \in X}} \| \Phi(X, t) \|,
\end{equation}
is locally bounded. An immediate consequence of (\ref{td_int_F_norm}) is that for any
finite volume $\Lambda \in \mathcal{P}_0(\Gamma)$ and any pair $x, y \in \Lambda$, the bound
\begin{equation}
\sum_{\stackrel{X \subset \Lambda:}{x,y \in X}} \| \Phi(X, t) \| \leq \| \Phi \|_F(t) F(d(x,y))  
\end{equation}
holds for all $t \in I$.  We refer to Section~\ref{sec:spatialstructure} for more details on $\mathcal{B}_F(I)$.

Finally, before we state our first result we review some notation. Recall that a non-negative function $G:[0, \infty) \to  (0,\infty)$ has a finite $m$-th moment if
\begin{equation} \label{G_m_mom}
\sum_{n=0}^{\infty} (1+n)^m G(n) < \infty \, .
\end{equation}
Note that, in this case, the tail of the series $r\mapsto \sum_{n= \lfloor r \rfloor}(1+n)^mG(n)$ is a non-negative, non-increasing function for which
\begin{equation}
\lim_{r \to \infty} \sum_{n=\lfloor r \rfloor }^{\infty} (1+n)^m G(n) = 0.
\end{equation}

We state our basic estimate on these finite-volume transformed interactions. In the statement, we make use of the quantities $p,\, q$ and $G$ from 
Assumption~\ref{ass:fv_qlm_dec}. 
%as well as the projection onto measurable functions $[\cdot]$ from Proposition~\ref{prop:measurable_f}.

\begin{thm} \label{thm:qli_est} Consider a quantum lattice system comprised of $\nu$-regular metric space $(\Gamma, d)$ and quasi-local algebra $\cA_{\Gamma}$. Let $F$ be an $F$-function on $(\Gamma,d)$, $I \subset \mathbb{R}$ be an interval, and $\Lambda \in \cP_0(\Gamma)$. Assume that $\{ \mathcal{K}_t: \cA_\Lambda\to \cA_{\Lambda} \}_{t \in I}$ is a quasi-local family of transformations satisfying Assumption~\ref{ass:fv_qlm_dec}, and $\Phi$ is a strongly continuous interaction such that $\Phi_m \in \mathcal{B}_F(I)$ for $m = \max\{p,q\}$. If the decay function $G$ associated to
the family $\{ \mathcal{K}_t \}_{t \in I}$ has a finite $2 \nu +1$ moment, then for any locally normal 
state $\rho$ and each choice of $x,y \in \Lambda$, the mapping $\Psi_{\Lambda}$ defined in (\ref{fv_td_qli}), satisfies the estimate
\begin{equation} \label{rough_qli_est}
	\sum_{\stackrel{Z \subset \Lambda:}{x,y \in Z}}  \| \Psi_{\Lambda}(Z, t) \|  \leq 
C_1(t) F(d(x,y)/3) + C_2(t) \sum_{n = \lfloor d(x,y)/3 \rfloor} (1+n)^{\nu+1} G(n) 
\end{equation} 
where the time-dependent pre-factors $C_1$ and $C_2$ may be taken as
\begin{equation}
C_1(t) = B(t) \| \Phi_p \|_F(t) + 4 \kappa^2 C(t) \| \Phi_q \|_F(t) \sum_{n=0}^{\infty} (1+n)^{2 \nu +1}G(n),
\end{equation}
and $C_2(t) = 4 \kappa \| F \| C(t) \| \Phi_q \|_F(t)$.
\end{thm}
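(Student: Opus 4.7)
The plan is to unfold the series defining $\Psi_\Lambda(Z,t)$ in \eqref{fv_td_qli}, swap the order of summation, estimate each summand using Assumption~\ref{ass:fv_qlm_dec}, and then perform a geometric split of the sum over $n$ at $\lfloor d(x,y)/3\rfloor$. Since $x,y\in\Lambda$, the constraint $x,y\in Z=X(n)\cap\Lambda$ is equivalent to $x,y\in X(n)$, so
\[
\sum_{\stackrel{Z\subset\Lambda:}{x,y\in Z}}\|\Psi_\Lambda(Z,t)\| \leq \sum_{n\geq 0}\sum_{\stackrel{X\subset\Lambda:}{x,y\in X(n)}}\|\Delta^\Lambda_{X(n)}(\mathcal{K}_t(\Phi(X,t)))\|.
\]
For the summand, when $n=0$ one has $\Delta^\Lambda_{X(0)}=\Pi^\Lambda_X$ of norm one, so Assumption~\ref{ass:fv_qlm_dec}(i) gives the bound $B(t)|X|^p\|\Phi(X,t)\|$; for $n\geq 1$, applying Lemma~\ref{lem:piest} to the quasi-local $\mathcal{K}_t$ (using that $d(X,\Lambda\setminus X(n))>n$ and the monotonicity of $G$) and proceeding exactly as in the derivation of \eqref{Delta_bd} yields the finite-volume analogue
\[
\|\Delta^\Lambda_{X(n)}(\mathcal{K}_t(\Phi(X,t)))\| \leq 4C(t)|X|^q\|\Phi(X,t)\|\,G(n-1).
\]

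Set $R=d(x,y)$. The decisive geometric input is the triangle inequality: $x,y\in X(n)$ forces the existence of $x'\in X\cap b_x(n)$ and $y'\in X\cap b_y(n)$, so $d(x',y')\geq R-2n$. I would split the sum at $n=\lfloor R/3\rfloor$. In the short-range regime $0\leq n\leq\lfloor R/3\rfloor$ the constraint forces $d(x',y')\geq R/3$; overcounting the choices of $(x',y')$, invoking the $F$-norm of $\Phi_m$ with $m\in\{p,q\}$, the $\nu$-regularity $|b_x(n)|\leq\kappa n^\nu$ (with $|b_x(0)|=1$), and the monotonicity of $F$ yields
\[
\sum_{\stackrel{X\subset\Lambda:}{x,y\in X(n)}}|X|^m\|\Phi(X,t)\| \leq \sum_{x'\in b_x(n)}\sum_{y'\in b_y(n)}\|\Phi_m\|_F(t)F(d(x',y')) \leq \kappa^2 n^{2\nu}\|\Phi_m\|_F(t)F(R/3).
\]
Combining with the summand bounds, the $n=0$ term contributes $B(t)\|\Phi_p\|_F(t)F(R)\leq B(t)\|\Phi_p\|_F(t)F(R/3)$; for $1\leq n\leq\lfloor R/3\rfloor$, summing, reindexing $m=n-1$, and using $(1+m)^{2\nu}\leq(1+m)^{2\nu+1}$ produces the remaining part of the $C_1(t)F(R/3)$ term.

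In the long-range regime $n>\lfloor R/3\rfloor$ no lower bound on $d(x',y')$ is available, so one instead invokes the uniform integrability \eqref{F:int}:
\[
\sum_{\stackrel{X\subset\Lambda:}{x,y\in X(n)}}|X|^q\|\Phi(X,t)\| \leq \sum_{x'\in b_x(n)}\|\Phi_q\|_F(t)\sum_{y'\in\Gamma}F(d(x',y')) \leq \kappa n^\nu\|\Phi_q\|_F(t)\|F\|.
\]
Multiplying by $4C(t)G(n-1)$, summing over $n>\lfloor R/3\rfloor$, and reindexing gives exactly the tail $C_2(t)\sum_{n\geq\lfloor R/3\rfloor}(1+n)^{\nu+1}G(n)$. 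The finite $(2\nu+1)$-th moment hypothesis on $G$ is what guarantees that the inner sum defining $C_1(t)$ converges and that the tail of $C_2(t)$ is a legitimate decay function. The main obstacle is purely organizational—carefully pairing the polynomial growth produced by the $\nu$-regularity with the moment assumption on $G$ under the correct split at $\lfloor R/3\rfloor$—and there is no substantial analytic difficulty beyond the tools already developed in Section~\ref{sec:laqlmaps} and Section~\ref{sec:qlm_gen}.
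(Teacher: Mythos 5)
Your proof is correct and follows essentially the same strategy as the paper: unfold the series \eqref{fv_td_qli}, invoke Assumption~\ref{ass:fv_qlm_dec} and the analogue of \eqref{Delta_bd} to bound the summands, split the $n$-sum at $\lfloor d(x,y)/3\rfloor$, and over-count using the $F$-norm of the moments $\Phi_p,\Phi_q$ together with $\nu$-regularity. The only organizational difference is that the paper packages the moment-and-split computation into the separate Lemma~\ref{lem:weight_int_dec}, which uses an indicator-function reindexing to express the sum as $\sum_m M_0^G(m)\cdot(\cdots)$ and then invokes the iterated-moment inequality $M_p^G\circ M_0^G\leq M_{p+1}^G$, whereas you sum $G(n-1)$ directly — which in fact yields a marginally tighter constant (a $(1+n)^{2\nu}$-moment instead of $(1+n)^{2\nu+1}$) while still implying the stated bound.
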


It is clear from the statement, as well as the proof, that the estimate proven in Theorem~\ref{thm:qli_est} does not require that the mappings $\mathcal{K}_t$ satisfy Assumption~\ref{ass:fv_qlm_cont}. As indicated previously, Assumption~\ref{ass:fv_qlm_cont} is convenient because it guarantees that the mapping $\Psi_{\Lambda}$ satisfies the continuity requirements needed to be a strongly continuous interaction, as defined in the beginning of this subsection.  

\begin{proof}
	Fix $Z \subset \Lambda$ and $t \in I$. A simple norm estimate, using (\ref{fv_td_qli}), shows that
	\begin{eqnarray} \label{fv_qli_est_1}
	\| \Psi_{\Lambda}(Z,t) \| & \leq & \| \Pi_Z^{\Lambda}( \mathcal{K}_t( \Phi(Z,t))) \| + \sum_{n \geq 1} \sum_{\substack{X \subset Z:\\ X(n) \cap \Lambda = Z}} \| \Delta_{X(n)}^{\Lambda}( \mathcal{K}_t( \Phi(X,t))) \| \nonumber \\
	& \leq & B(t) |Z|^p \| \Phi(Z,t) \| + 4 C(t) \sum_{n \geq 1} G(n-1) \sum_{\substack{X \subset Z:\\ X(n) \cap \Lambda = Z}} |X|^q \| \Phi(X,t) \|  .
	\end{eqnarray}
	Here we first used that $\Delta^{\Lambda}_{Z(0)} = \Pi_Z^{\Lambda}$ and that $\| \Pi_Z^{\Lambda} \| \leq 1$, 
	see Section~\ref{sec:laqlmaps} for more details. 
	Next, we used the local bound on $\mathcal{K}_t$, i.e. (\ref{fv_lb_K_td}), for the first term on the right-hand-side above.
	For the remaining terms, we combined the quasi-local bound on $\mathcal{K}_t$, i.e. (\ref{fv_ql_K_td}), with the estimate
	(\ref{Delta_bd}) as proven in Lemma~\ref{lem:qlm_approx}.
	
	%Using both subadditivity and submultiplicativity, see Proposition~\ref{prop:measurable_f}, 
	We conclude that
	\begin{eqnarray} \label{fv_qli_est_2}
	\sum_{\stackrel{Z \subset \Lambda:}{x,y \in Z}}  \| \Psi_{\Lambda}(Z, t) \| & \leq & B(t)  \sum_{\stackrel{Z \subset \Lambda:}{x,y \in Z}} |Z|^p \| \Phi(Z, t) \| \nonumber \\
	& \mbox{ } & \quad + 4 C(t) \sum_{n \geq 0} G(n) \sum_{\substack{X \subset \Lambda:\\ x,y \in X(n+1)}} |X|^q \| \Phi(X, t) \|.  
	\end{eqnarray}
	An application of Lemma~\ref{lem:weight_int_dec} completes the proof.
\end{proof}

We finish this subsection with a quasi-locality estimate for the finite-volume dynamics (\ref{def:qldyn}).
Such a result is an immediate consequence of Theorem~\ref{thm:lrb} once we obtain that
$\Psi_{\Lambda} \in \mathcal{B}_{\tilde{F}}(I)$ for some $F$-function $\tilde{F}$ on $(\Gamma, d)$.
In concrete applications, the existence of such a function $\tilde{F}$ will depend on the original $F$-function $F$ and quasi-local decay function $G$. Rather than make further assumptions on $F$ and $G$ from, e.g., Theorem~\ref{thm:qli_est} let us assume there is an $F$-function $\tilde{F}$ on $(\Gamma, d)$ for which 
\begin{equation} \label{good_F_dec}
\max\left\{ F(r/3) , \sum_{n= \lfloor r/3 \rfloor}^{\infty} (1+n)^{\nu+1} G(n)  \right\} \leq \tilde{F}(r) \quad \mbox{for all } r \geq 0 \, .
\end{equation} 
We note that in many applications the initial decay functions are weighted $F$-functions in the sense of 
Section~\ref{app:sec_weight_F}, and therefore explicit choices for $\tilde{F}$
are readily determined by manipulating the weights. 
In any case, given such a function $\tilde{F}$, the bound in (\ref{rough_qli_est}) above implies an explicit 
pointwise estimate on $\| \Psi_{\Lambda} \|_{\tilde{F}}$, see e.g. (\ref{tF_norm_int_bd}) below.

We end this subsection with the following corollary.

\begin{cor} \label{cor:fv_ql_dyn_lrb} Under the assumptions of Theorem~\ref{thm:qli_est}, 
	suppose further that the family $\{ \mathcal{K}_t \}_{t \in I}$ satisfies Assumption~\ref{ass:fv_qlm_cont}, and that $\tilde{F}$ is an $F$-function on $(\Gamma,d)$ satisfying (\ref{good_F_dec}). Then, $\Psi_{\Lambda} \in \mathcal{B}_{\tilde{F}}(I)$, and the finite-volume dynamics in (\ref{def:qldyn}) associated to $\Psi_{\Lambda}$ satisfies the following bound: given any $A \in \cA_X$, $B \in \cA_Y$ where $X,Y \subset \Lambda$ with $X \cap Y = \emptyset$,
	\begin{equation} \label{qli_cor_lrb}
	\| [ \tau_{t,s}^{\Lambda}(A), B ] \| \leq  \frac{2 \| A \| \| B \|}{C_{\tilde{F}}} \left( e^{2 I_{t,s}( \Psi_{\Lambda})} - 1 \right) \sum_{x \in X} \sum_{y \in Y} \tilde{F}(d(x,y)) 
	\end{equation}
	holds for all $s,t \in I$. Moreover, for $s \leq t$,
	\begin{eqnarray} \label{tF_norm_int_bd}
	I_{t,s}( \Psi_{\Lambda}) & \leq &  C_{\tilde{F}} \int_s^t B(r) \| \Phi_p \|_F(r) \, dr \nonumber \\
	& \mbox{ } & \quad + 4 \kappa C_{\tilde{F}} \left( \kappa \sum_{n=0}^{\infty} (1+n)^{2 \nu+1} G(n) + \| F \|  \right) \int_s^t C(r) \| \Phi_q \|_F(r) \, dr
	\end{eqnarray}
	where $I_{t,s}( \Psi_{\Lambda})$ is as in (\ref{ItsPhi}) with $F$ replaced by $\tilde{F}$. 
\end{cor}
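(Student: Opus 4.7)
The plan is to reduce the corollary to an application of Theorem~\ref{thm:lrb} (Lieb--Robinson bound for bounded time-dependent interactions) by verifying that $\Psi_\Lambda$ is a genuine strongly continuous interaction whose $\tilde F$-norm is locally bounded, and then to make the constants in the resulting bound explicit via Theorem~\ref{thm:qli_est}.

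First, I would verify that $\Psi_\Lambda$ meets the definition of a strongly continuous interaction. Fix $Z\subset\Lambda$ and $t\in I$. From (\ref{fv_td_qli}) and the fact that $\Delta^{\Lambda}_{X(n)}$ is the difference of two projections $\Pi^{\Lambda}_{X(n)\cap\Lambda}$, self-adjointness of $\Psi_\Lambda(Z,t)$ follows at once from Assumption~\ref{ass:fv_qlm_cont}(i) and Proposition~\ref{prop:compatibility}(v), together with the fact that $\Phi(X,t)$ is self-adjoint. Strong continuity in $t$ of each term $\Delta_{X(n)}^\Lambda(\mathcal{K}_t(\Phi(X,t)))$ follows because: the composition $t\mapsto\mathcal{K}_t(\Phi(X,t))$ is strongly continuous on bounded sets by Assumption~\ref{ass:fv_qlm_cont}(ii)--(iii) combined with strong continuity of $\Phi(X,\cdot)$ (see the discussion in Section~\ref{sec:cont_l_a}), and then strong continuity is preserved by the bounded linear map $\Delta_{X(n)}^\Lambda$ (Proposition~\ref{prop:strongcontinuity}). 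The sum in (\ref{fv_td_qli}) is finite since $X\subset Z\subset\Lambda$ forces $n\le\operatorname{diam}(\Lambda)$, so strong continuity is preserved.

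Next, I would estimate $\|\Psi_\Lambda\|_{\tilde F}(t)$. Fix $x,y\in\Gamma$. If $x,y\notin\Lambda$ the corresponding sum in (\ref{tdintnorm}) is empty; otherwise apply Theorem~\ref{thm:qli_est} to obtain
\begin{equation*}
\sum_{\substack{Z\subset\Lambda:\\x,y\in Z}}\|\Psi_\Lambda(Z,t)\|\le C_1(t)F(d(x,y)/3)+C_2(t)\sum_{n=\lfloor d(x,y)/3\rfloor}^{\infty}(1+n)^{\nu+1}G(n).
\end{equation*}
By hypothesis (\ref{good_F_dec}), both terms on the right are bounded by $(C_1(t)+C_2(t))\tilde F(d(x,y))$. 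Dividing by $\tilde F(d(x,y))$ and taking the supremum over $x,y$ yields
\begin{equation*}
\|\Psi_\Lambda\|_{\tilde F}(t)\le C_1(t)+C_2(t),
\end{equation*}
which is locally bounded in $t$ because $B(t)$, $C(t)$, $\|\Phi_p\|_F(t)$ and $\|\Phi_q\|_F(t)$ are all locally bounded (and by hypothesis the series $\sum_n(1+n)^{2\nu+1}G(n)$ converges). Thus $\Psi_\Lambda\in\mathcal{B}_{\tilde F}(I)$.

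Now the Lieb--Robinson bound (\ref{qli_cor_lrb}) is a direct application of Theorem~\ref{thm:lrb} with the $F$-function $\tilde F$ and its associated convolution constant $C_{\tilde F}$, since by (\ref{td_fv_qli_ham}) the dynamics $\tau_{t,s}^\Lambda$ in (\ref{def:qldyn}) is precisely the finite-volume Heisenberg dynamics generated by $\Psi_\Lambda$. Finally, integrating the bound on $\|\Psi_\Lambda\|_{\tilde F}(t)$ from $s$ to $t$ and multiplying by $C_{\tilde F}$ produces
\begin{equation*}
I_{t,s}(\Psi_\Lambda)=C_{\tilde F}\int_s^t\|\Psi_\Lambda\|_{\tilde F}(r)\,dr\le C_{\tilde F}\int_s^t(C_1(r)+C_2(r))\,dr,
\end{equation*}
which, after substituting the explicit formulas for $C_1$ and $C_2$ from Theorem~\ref{thm:qli_est} and collecting the coefficient of $C(r)\|\Phi_q\|_F(r)$, gives exactly (\ref{tF_norm_int_bd}). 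The only mildly delicate point is keeping the book-keeping of the two terms in $C_1$ and noting that they organize into the $B$-integral plus part of the $C$-integral; no new idea is needed beyond the estimates already established. No single step is a serious obstacle since all heavy lifting was done in Theorem~\ref{thm:qli_est} and Theorem~\ref{thm:lrb}; the proof is essentially an assembly.
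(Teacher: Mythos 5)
Your proof is correct and matches the paper's intended argument: the paper gives no separate proof, but the remarks preceding the corollary indicate exactly this route (establish $\Psi_\Lambda\in\mathcal B_{\tilde F}(I)$ via Theorem~\ref{thm:qli_est} and (\ref{good_F_dec}), then apply Theorem~\ref{thm:lrb}, then integrate the pointwise bound on $\|\Psi_\Lambda\|_{\tilde F}(t)$). One trivial imprecision worth flagging: where you write ``if $x,y\notin\Lambda$ the sum is empty'', you mean the sum is zero whenever at least one of $x$ or $y$ lies outside $\Lambda$ (since $\Psi_\Lambda(Z,t)=0$ for $Z\not\subset\Lambda$); this does not affect the argument.
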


\subsubsection{Results in infinite volume} \label{sec:qli_tl}

In this section, we show how the results of the previous section extend to the thermodynamic limit. We begin with an assumption on a collection of quasi-local maps $\{\cK_t:\cA_{\Gamma}^{\rm loc}\to \cA_{\Gamma}\}$. In essence, this definition combines the notion of uniformly locally normal from Definition~\ref{def:uniformly_locally_normal_maps} with Assumptions~\ref{ass:fv_qlm_cont} and ~\ref{ass:fv_qlm_dec}. As always, we consider a quantum lattice system comprised of $(\Gamma, d)$ and $\cA_{\Gamma}$, and 
let $I \subset \mathbb{R}$ be an interval.

\begin{assumption} \label{ass:td_qlms}  We assume that the family of linear maps $\{ \mathcal{K}_t  : \mathcal{A}_{\Gamma}^{\rm loc} \to \mathcal{A}_{\Gamma}\}_{t \in I}$, is strongly continuous,
uniformly locally normal, and uniformly quasi-local in the following sense; there is an increasing, exhaustive sequence  $\{ \Lambda_n \}_{n \geq 1}$ of finite subsets of $\Gamma$ with a family of linear maps $\{\mathcal{K}_t^{(n)} : \mathcal{A}_{\Lambda_n} \to \mathcal{A}_{\Lambda_n}\}_{t\in I}$ for each $n\geq 1$ such that:
\begin{enumerate}
	\item[(i)] For each $n \geq 1$, the family $\{\mathcal{K}_t^{(n)} : \mathcal{A}_{\Lambda_n} \to \mathcal{A}_{\Lambda_n}\}_{t\in I}$ satisfies Assumption~\ref{ass:fv_qlm_cont}.
	\item[(ii)] There is some $p \geq 0$ and a measurable, locally bounded function 
	$B:I \to [0, \infty)$ for which given any $X \in \mathcal{P}_0( \Gamma)$ and $n \geq 1$ large enough so
	that $X \subset \Lambda_n$, 
	\begin{equation} \label{lb_K_td}
	\| \mathcal{K}_t^{(n)}(A) \| \leq B(t) |X|^p \| A \|    \mbox{ for all } A \in \cA_X  \mbox{ and } t \in I \, .
	\end{equation} 
	\item[(iii)] There is some $q \geq 0$, a non-negative, non-increasing function $G$ with
	$G(x) \to 0$ as $x \to \infty$, and a measurable, locally bounded function $C:I \to [0, \infty)$ for which given any sets $X, Y \in \mathcal{P}_0( \Gamma)$ and $n \geq 1$ large enough so that $X \cup Y \subset \Lambda_n$,
	\begin{equation} \label{ql_K_td}
	\| [ \mathcal{K}_t^{(n)}(A), B ] \| \leq C(t) |X|^q \| A \| \| B \|  G(d(X,Y))  \mbox{ for all } A\in \cA_X \, , B\in \cA_Y, \mbox{ and } t \in I \, .
	\end{equation}
	\item[(iv)] There is some $r \geq 0$, a non-negative, non-increasing function $H$ with
	$H(x) \to 0$ as $x \to \infty$, and a measurable, locally bounded function $D:I \to [0, \infty)$ for which 
	given any $X \in \mathcal{P}_0( \Gamma)$ there exists $N \geq 1$ such that for $n \geq N$,
	\begin{equation} \label{local_conv_qlm}
	\| \mathcal{K}_t^{(n)}(A) - \mathcal{K}_t(A) \| \leq D(t) |X|^r \| A \| H(d(X, \Gamma \setminus \Lambda_n)) \quad \mbox{for all } A \in \mathcal{A}_X \mbox{ and } t \in I \, .
	\end{equation}
\end{enumerate}
\end{assumption} 

Before proving the theorem, we make the following comments. First, if $\{ \mathcal{K}_t \}_{t \in I}$ is a family of linear maps  
which satisfies Assumption~\ref{ass:td_qlms}, then for any compact $I_0 \subset I$, the family
$\{ \mathcal{K}_t \}_{t \in I_0}$ is clearly a strongly continuous family of uniformly locally normal
maps in the sense of Definition~\ref{def:uniformly_locally_normal_maps}. Moreover, 
conditions (ii) and (iii) of Assumption~\ref{ass:td_qlms} guarantee that the sequence of finite-volume 
approximates $\{ \mathcal{K}_t^{(n)} \}_{n \geq 1}$ satisfies Assumption~\ref{ass:fv_qlm_dec} with estimates that are uniform in $n$. In Section~\ref{sec:spectral-flow}, an explicit family of weighted integral operators of the type discussed in Example~\ref{ex_wio_sec_4} will be shown to satisfy all conditions of Assumption~\ref{ass:td_qlms}.

Let us now return to the discussion of transformed interactions. Let $I \subset \mathbb{R}$ be an interval, 
$\Phi : \mathcal{P}_0( \Gamma) \times I \to \cA_{\Gamma}^{\rm loc}$ be a strongly continuous
interaction, and $\{ \mathcal{K}_t \}_{t \in I}$ be a family of transformations satisfying 
Assumption~\ref{ass:td_qlms}. Let $\{ \Lambda_n \}_{n \geq 1}$ be the increasing, exhaustive 
sequence of finite subsets of $\Gamma$ whose existence is guaranteed by Assumption~\ref{ass:td_qlms}.
For each $n \geq 1$, we will denote by $H_{\Lambda_n}^{\Phi}(t)$ the finite-volume, time-dependent
Hamiltonian associated to $\Phi$ defined as in (\ref{fv_td_ham_P}). Let us further denote by
\begin{equation} \label{fv_transformed_ham}
\mathcal{K}_t^{(n)}(H_{\Lambda_n}^{\Phi}(t)) = \sum_{X \subset \Lambda_n} \mathcal{K}_t^{(n)}(\Phi(X,t))
\end{equation} 
the corresponding finite-volume transformed Hamiltonian. Our assumptions, 
specifically Assumption~\ref{ass:td_qlms} (i), guarantee that the transformed Hamiltonian in (\ref{fv_transformed_ham})
is still a Hamiltonian in the sense that $t \mapsto \mathcal{K}_t^{(n)}(H_{\Lambda_n}^{\Phi}(t))$ is strongly continuous and 
pointwise self-adjoint. As a result, a finite-volume dynamics may be defined by solving
\begin{equation}
\frac{d}{dt} U_n(t,s) = - i \mathcal{K}_t^{(n)}(H_{\Lambda_n}^{\Phi}(t)) U_n(t,s) \quad \mbox{with} \quad U_n(s,s) = \idty \, 
\end{equation}
and then using the corresponding unitary propagator to declare that
\begin{equation} \label{fv_app_qli_dyn}
\tau_{t,s}^{(n)}(A) = U_n(t,s)^* A U_n(t,s) \quad \mbox{for all } A \in \mathcal{A}_{\Lambda_n} \mbox{ and } t, s \in I \, ,
\end{equation}
is the finite-volume time evolution.

A main goal of this section is to show that, under appropriate decay assumptions, the finite volume dynamics 
in (\ref{fv_app_qli_dyn}) converge to a limiting dynamics as $n \to \infty$. To be more precise, let us 
introduce some further notation. Fix a locally normal product state $\rho$ on $\cA_{\Gamma}$. As in \eqref{fv_td_qli}, with respect to this fixed $\rho$, for any $n \geq 1$, $Z \subset \Lambda_n$ and $t \in I$, set 
\begin{equation} \label{fv_app_td_qli}
\Psi_n(Z,t) = \sum_{m \geq 0} \sum_{\substack{X \subset Z :\\ X(m) \cap \Lambda_n = Z}} \Delta_{X(m)}^{\Lambda_n}( \mathcal{K}_t^{(n)}( \Phi(X,t))) \, .
\end{equation} 
These finite volume maps $\Psi_n$ are constructed in such a way that 
\begin{equation} \label{fv_app_qli_ham}
\mathcal{K}_t^{(n)}(H_{\Lambda_n}^{\Phi}(t)) = \sum_{X \subset \Lambda_n} \mathcal{K}_t^{(n)}(\Phi(X,t)) = \sum_{Z \subset \Lambda_n} \Psi_n(Z,t)
= H_{\Lambda_n}^{\Psi_n}(t) 
\end{equation}
and moreover, as checked in Section~\ref{sec:qli_fvr}, under the assumptions above, each $\Psi_n$ is a strongly continuous
interaction in the sense of Section~\ref{sec:spatialstructure}. With respect to the same locally normal state $\rho$, we can
also define a map $\Psi : \mathcal{P}_0( \Gamma) \times I \to \mathcal{A}_{\Gamma}^{\rm loc}$ by setting 
\begin{equation} \label{def_iv_qli}
\Psi(Z,t) = \sum_{m \geq 0} \sum_{\substack{X \subset Z:\\ X(m) = Z}} \Delta_{X(m)}( \mathcal{K}_t( \Phi(X,t))) .
\end{equation}
Since the family of transformations $\{ \mathcal{K}_t \}$ locally satisfies Definition~\ref{def:uniformly_locally_normal_maps}, it
is clear that Lemma~\ref{lem:s-t_continuity} applies, and hence $\Psi$ is a strongly continuous interaction as well. 

In the remainder of this section, we will show that if the initial interaction $\Phi$ decays sufficiently fast, then the transformed interactions 
$\{ \Psi_n \}_{n \geq 1}$ converge locally in $F$-norm to $\Psi$ in the sense of Definition~\ref{def:lcfnorm}. Moreover, 
our assumptions will allow for an application of Theorem~\ref{thm:exist_iv_dyn} from which we will conclude that
the finite-volume dynamics in (\ref{fv_app_qli_dyn}) converge.
For ease of later reference, let us declare the relevant decay of $\Phi$ as an assumption.

\begin{assumption} \label{ass:suit_td_int} Given a $\nu$-regular metric space $(\Gamma, d)$, and a family of maps $\{\cK_t:\cA_\Gamma \to \cA_\Gamma\}_{t\in I}$ satisfying Assumption~\ref{ass:td_qlms}, we assume $\Phi$ is a strongly continuous interaction such that $\Phi_m \in \mathcal{B}_F(I)$ for $m = \max\{p,q,r\}$ where $p$, $q$, and $r$ are the numbers in Assumption~\ref{ass:td_qlms}.
\end{assumption} 

We can now state the main result of this section, for which it will be useful to review Definition~\ref{def:lcfnorm}.

\begin{thm} \label{thm:qli_conv} Consider a quantum lattice system comprised of a $\nu$-regular metric space $(\Gamma, d)$ and quasi-local algebra $\cA_{\Gamma}$. Let $I \subset \mathbb{R}$ be an interval, and $F$ be an $F$-function on $(\Gamma, d)$. Assume that $\{ \mathcal{K}_t \}_{t \in I}$ is a family of linear maps satisfying  Assumption~\ref{ass:td_qlms}, $\Phi$ is an interaction satisfying Assumption~\ref{ass:suit_td_int}, and $\rho$ is a locally normal product state on $\cA_{\Gamma}$. 
\begin{enumerate}
	\item[(i)] Suppose the quasi-local decay function $G$ from (\ref{ql_K_td}) has a finite $2 \nu +1$ moment and $\tilde{F}$ is an $F$-function on $(\Gamma, d)$ satisfying (\ref{good_F_dec}), then $\Psi \in \mathcal{B}_{\tilde{F}}(I)$.
	\item[(ii)] Suppose there is some $0< \alpha <1$ for which $G^{\alpha}$ has a finite $2 \nu +1$ moment, where $G$ is as in (\ref{ql_K_td}). Suppose also that $\tilde{F}$ is an $F$-function on $(\Gamma, d)$ satisfying (\ref{good_F_dec}) with $G$ replaced by $G^{\alpha}$. Then $\Psi \in \mathcal{B}_{\tilde{F}}(I)$ and $\Psi_n$ converges locally in $F$-norm to $\Psi$ with respect to $\tilde{F}$. 
\end{enumerate}
\end{thm}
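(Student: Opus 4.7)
The plan is to lift the finite-volume estimate of Theorem \ref{thm:qli_est} to the infinite-volume map $\Psi$ and then analyze $\Psi_n - \Psi$ by combining Corollary \ref{cor:DeltaEst} with a H\"older-type interpolation. For part (i), observe that the proof of Theorem \ref{thm:qli_est} used only the bound $\|\Delta_{X(m)}^\Lambda(\mathcal{K}_t(\Phi(X,t)))\| \leq 4 C(t)|X|^q \|\Phi(X,t)\| G(m-1)$ on each term (from Lemma \ref{lem:qlm_approx}) together with the local bound on $\mathcal{K}_t$. Since the infinite-volume maps $\Delta_{X(m)}$ satisfy the same norm estimates, the very same calculation gives
\begin{equation*}
\sum_{\substack{Z \in \mathcal{P}_0(\Gamma):\\ x,y\in Z}}\|\Psi(Z,t)\| \leq C_1(t) F(d(x,y)/3) + C_2(t)\sum_{n \geq \lfloor d(x,y)/3\rfloor}(1+n)^{\nu+1} G(n),
\end{equation*}
with $C_1, C_2$ locally bounded. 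The defining property \eqref{good_F_dec} of $\tilde F$ then implies $\|\Psi\|_{\tilde F}$ is locally bounded, hence $\Psi \in \mathcal{B}_{\tilde F}(I)$.

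For part (ii), I first need the same conclusion under the weaker hypothesis that only $G^\alpha$ has a finite $(2\nu+1)$-th moment. The key observation is the elementary inequality $\min\{a,b\} \leq a^{1-\alpha}b^{\alpha}$, applied to the two available bounds for $\|\Delta_{X(m)}(\mathcal{K}_t(\Phi(X,t)))\|$: the trivial bound $2B(t)|X|^p \|\Phi(X,t)\|$ from local boundedness, and the quasi-local bound $4C(t) |X|^q \|\Phi(X,t)\| G(m-1)$. The interpolated estimate carries $G(m-1)^\alpha$ at the cost of replacing $|X|^p$ (or $|X|^q$) by $|X|^{(1-\alpha)p+\alpha q} \leq |X|^{\max\{p,q\}}$, which is covered by the hypothesis $\Phi_{\max\{p,q\}} \in \mathcal{B}_F(I)$. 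Rerunning the argument from part (i) with $G$ replaced by $G^\alpha$ establishes $\Psi \in \mathcal{B}_{\tilde F}(I)$.

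To prove convergence $\Psi_n \to \Psi$ locally in $F$-norm, fix $\Lambda \in \mathcal{P}_0(\Gamma)$ and take $n$ large enough that $\Lambda \subset \Lambda_n$; set $d_n := d(\Lambda, \Gamma \setminus \Lambda_n)$. For $Z \subset \Lambda$, I split $\Psi_n(Z,t) - \Psi(Z,t)$ according to the radius $m$. When $m < d_n$, every $X \subset Z$ satisfies $X(m) \subset \Lambda_n$, so the conditions $X(m) = Z$ and $X(m)\cap\Lambda_n = Z$ coincide and contributions pair up into differences of the form $\Delta_{X(m)}^{\Lambda_n}(\mathcal{K}_t^{(n)}(\Phi(X,t))) - \Delta_{X(m)}(\mathcal{K}_t(\Phi(X,t)))$. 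Corollary \ref{cor:DeltaEst} controls each such difference by $\min\{2\|\mathcal{K}_t^{(n)}(\Phi(X,t)) - \mathcal{K}_t(\Phi(X,t))\|,\, 8C(t) |X|^q \|\Phi(X,t)\| G(m-1)\}$; assumption \eqref{local_conv_qlm} bounds the first expression by $2D(t)|X|^r\|\Phi(X,t)\|H(d_n)$. When $m \geq d_n$, the $\Psi_n$- and $\Psi$-terms are estimated separately by the uniform quasi-local bound, producing contributions controlled by the tail of $\sum (1+m)^{\nu+1} G(m)^\alpha$ past $\lfloor d_n/3\rfloor$ after applying the interpolation trick once more. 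Summing over $Z \subset \Lambda$ containing $x,y$ and dividing by $\tilde F(d(x,y))$ yields a bound of the form
\begin{equation*}
\|(\Psi_n - \Psi)\restriction_\Lambda\|_{\tilde F}(t) \leq \epsilon_1(n,t) + \epsilon_2(n,t),
\end{equation*}
where $\epsilon_1(n,t)$ involves the factor $H(d_n)$ and $\epsilon_2(n,t)$ involves the $G^\alpha$-tail, both vanishing as $n \to \infty$ and dominated (uniformly in $n$) by the locally integrable bound from part (ii)(a). Integrating over $[a,b]$ and applying dominated convergence gives the claim.

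The main technical obstacle is the regime $m \geq d_n$, where the index sets $\{X\subset Z: X(m)\cap\Lambda_n = Z\}$ and $\{X\subset Z: X(m) = Z\}$ no longer coincide, so individual terms do not pair up. I need a bound that is simultaneously uniform in $n$ (for dominated convergence) and vanishing in $n$ (for the limit); this is precisely what the H\"older interpolation combined with the $G^\alpha$-moment hypothesis provides, and is the reason the weaker moment condition in part (ii) is both natural and sharp.
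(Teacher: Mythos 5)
Your proposal is essentially correct and captures the key ideas (transfer of the finite-volume estimate to $\Psi$, and a H\"older interpolation to get simultaneous $n$-decay and moment control), but there are a few inaccuracies and one structural difference worth noting.

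First, a conceptual slip: you call the condition ``$G^{\alpha}$ has a finite $(2\nu+1)$-th moment'' a \emph{weaker} hypothesis than ``$G$ has a finite $(2\nu+1)$-th moment.'' It is the opposite. Since $0<\alpha<1$ and $G$ is non-increasing, $G(n)\leq G(0)^{1-\alpha}G(n)^{\alpha}$, so finiteness of the $G^{\alpha}$-moment implies finiteness of the $G$-moment. Consequently your detour --- re-establishing $\Psi\in\cB_{\tilde F}(I)$ via interpolation between the local and quasi-local $\Delta$-bounds --- is unnecessary. The paper's route is simpler: apply (\ref{rough_qli_est}) directly (it only requires the $G$-moment, now known to be finite), then replace $G(n)\leq G^{1-\alpha}(0)G^{\alpha}(n)$ \emph{after the fact} to bring the result under the $\tilde F$ of part (ii). Your interpolated version does work, but it introduces the exponent $(1-\alpha)p+\alpha q$ and re-runs a calculation whose conclusion already follows from part (i). Note also that in part (i) you should say explicitly that the uniform local convergence (\ref{local_conv_qlm}) transfers the local and quasi-local bounds from $\cK_t^{(n)}$ to $\cK_t$; this is what justifies re-running Theorem~\ref{thm:qli_est} for $\Psi$.

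Second, your decomposition of $\Psi_n-\Psi$ by $m<d_n$ versus $m\geq d_n$ differs from the paper's, which isolates $\Sigma_1$ ($m=0$), $\Sigma_2$ (all $m\geq 1$ with $X(m)=Z$, i.e.\ all paired terms regardless of $m$), and $\Sigma_3$ (the unpaired boundary terms, which automatically force $m\geq d_n$). Both are viable, but note one edge case in yours: for $m=0$ (which falls in your $m<d_n$ regime) the quantity $\Delta_{X(0)}=\Pi_X$ is not a telescoped difference, so the second half of the min in Corollary~\ref{cor:DeltaEst} --- the bound $8C(t)|X|^q\|A\|G(m-1)$ --- does not apply, and you should use only the first half (which, by (\ref{local_conv_qlm}), already gives the required $H(d_n)$-decay). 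The paper handles this via the separate term $\Sigma_1$.

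Finally, the invocation of dominated convergence at the end is more machinery than you need: once each of the three (or, in your version, two) pieces is shown to be bounded by a locally bounded function of $t$ (uniformly in $n$) times $\tilde F(d(x,y))$ times a quantity vanishing as $n\to\infty$, dividing by $\tilde F(d(x,y))$, taking the supremum over $x,y$, and integrating over $[a,b]$ gives the local $F$-norm convergence directly.
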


Some comments are in order. First, under the assumptions of Theorem~\ref{thm:qli_conv}(i),
the finite-volume interactions $\Psi_n$, as defined in (\ref{fv_app_td_qli}), satisfy the assumptions of
Theorem~\ref{thm:qli_est} and hence the estimate (\ref{rough_qli_est}). In this case, for any
$F$-function $\tilde{F}$ on $(\Gamma, d)$ satisfying (\ref{good_F_dec}), the corresponding finite-volume
dynamics, i.e. the automorphisms $\tau_{t,s}^{(n)}$ defined in (\ref{fv_app_qli_dyn}), satisfy the 
quasi-locality bound proven in Corollary~\ref{cor:fv_ql_dyn_lrb}, see (\ref{qli_cor_lrb}).  
A main point of Theorem~\ref{thm:qli_conv}(i) is that both of these observations extend to
the thermodynamic limit. In fact, the assumptions of Theorem~\ref{thm:qli_conv}(i) also
guarantee that the arguments in Theorem~\ref{thm:qli_est}, and hence an analogue of 
the bound (\ref{rough_qli_est}), also apply to the infinite-volume interaction $\Psi$ as defined in
(\ref{def_iv_qli}). Here we are using that the uniform local convergence in (\ref{local_conv_qlm}) guarantees that both 
the local bound, see (\ref{lb_K_td}), and the quasi-local bound, see (\ref{ql_K_td}), extend to the limiting map $\mathcal{K}_t$, and 
in this case, Lemma~\ref{lem:qlm_approx} applies. Given this, for any $F$-function 
$\tilde{F}$ on $(\Gamma, d)$ satisfying (\ref{good_F_dec}), one concludes that $\Psi \in \mathcal{B}_{\tilde{F}}(I)$.
As a result, we can apply Theorem~\ref{thm:existbd}, where we take the case of trivial on-sites $H_z=0$ for all $z \in \Gamma$.
This then shows that there exists an infinite volume dynamics, which we denote by $\tau_{t,s}$, associated to $\Psi$.
By construction, this infinite-volume dynamics $\tau_{t,s}$ also satisfies Corollary~\ref{cor:fv_ql_dyn_lrb}. 

Theorem~\ref{thm:qli_conv}(ii) implies that, under the slightly stronger decay assumptions, the
finite-volume dynamics $\tau_{t,s}^{(n)}$ converge to the infinite-volume dynamics $\tau_{t,s}$ in
the sense given by Theorem~\ref{thm:exist_iv_dyn}. Since the interactions $\Psi_n$ are constructed using
finite-volume local decompositions, see (\ref{fv_app_td_qli}), they are not finite-volume restrictions of $\Psi$, and so an additional argument is required here. We remark that the decay assumptions in Theorem~\ref{thm:qli_conv}(ii) imply the decay assumed in Theorem~\ref{thm:qli_conv}(i). As a result, the better quasi-locality estimates for the dynamics, which follow as a consequence of the assumptions in Theorem~\ref{thm:qli_conv}(i), may be used generally. 

Next, a careful look at the proof of Theorem~\ref{thm:qli_conv}(i) below shows that we actually only require 
$\Phi_m \in \mathcal{B}_F(I)$ for $m = \max(p,q)$. The proof of Theorem~\ref{thm:qli_conv}(ii) requires
the stronger condition of Assumption~\ref{ass:suit_td_int}, namely $\Phi_m \in \mathcal{B}_F(I)$ for $m = \max(p,q,r)$.

Finally, we note that if the decay function $G$ in (\ref{ql_K_td}) is a weighted $F$-function, the arguments
below can be simplified a bit.

\begin{proof}
The proof of Theorem~\ref{thm:qli_conv}(i) is argued in the paragraphs above.

To prove Theorem~\ref{thm:qli_conv}(ii), first note that as $0 < \alpha <1$, it is clear that finiteness of the $2 \nu +1$ moment of $G^{\alpha}$ implies
finiteness of the $2 \nu +1$ moment of $G$. In this case, the estimate proven in Theorem~\ref{thm:qli_est}, see (\ref{rough_qli_est}), holds for 
each finite-volume interaction $\Psi_n$ as well as for $\Psi$. Since $G$ is non-negative and non-increasing, 
$G(n) \leq G^{1-\alpha}(0) G^{\alpha}(n)$, and therefore, given any $[a,b] \subset I$,
\begin{eqnarray}
\sup_{n \geq 1} \int_a^b \| \Psi_n \|_{\tilde{F}}(r) \, dr & \leq & \int_a^b B(r) \| \Phi_p \|_F(r) \, dr  \nonumber \\
& \mbox{ } & + 4 \kappa \left( \kappa \sum_{n=0}^{\infty} (1+n)^{2 \nu+1} G(n) + \| F \| G^{1- \alpha}(0) \right) \int_a^b C(r) \| \Phi_q \|_F(r) \, dr
\end{eqnarray}
holds for any $F$-function $\tilde{F}$ satisfying the conditions of Theorem~\ref{thm:qli_conv}(ii). 
An analogous bound holds for the infinite volume interaction $\Psi$. 

We need only show that $\Psi_n$ converges locally in $F$-norm to $\Psi$ with respect to $\tilde{F}$, see Definition~\ref{def:lcfnorm}. Let $\Lambda \in \mathcal{P}_0( \Gamma)$ and take $n \geq 1$ large enough so that $\Lambda \subset \Lambda_n$. For any $Z \subset \Lambda$ and each $t \in I$, we estimate
\be \label{psi_dif_ftoiv}
\| \Psi_n(Z, t) - \Psi(Z,t) \| \leq \Sigma_1(Z,t) + \Sigma_2(Z,t) + \Sigma_3(Z,t)
\ee
where, for the terms corresponding to $m=0$ in (\ref{fv_app_td_qli}) and (\ref{def_iv_qli}), we have set 
\be \label{sigma_1}
\Sigma_1(Z,t) = \Vert \Pi^{\Lambda_n}_Z(\cK_t^{(n)}(\Phi(Z,t))) - \Pi_Z(\cK_t(\Phi(Z,t))) \Vert \, ,
\ee
we have collected the bulk of the terms in 
\be \label{sigma_2}
\Sigma_2(Z,t) = \sum_{m \geq 1}\sum_{\substack{X \subset Z:\\ X(m) = Z}} 
\|  \Delta_{X(m)}^{\Lambda_n}( \mathcal{K}_t^{(n)}( \Phi(X,t))) -  \Delta_{X(m)}( \mathcal{K}_t( \Phi(X,t)))\| \, ,
\ee
and finally, we have denoted any boundary terms by   
\be \label{sigma_3}
\Sigma_3(Z,t) = \sum_{m \geq 1} \sum_{\substack{X \subset Z:\\ X(m) \not\subseteq Z, X(m)\cap\Lambda_n = Z}} 
\|  \Delta_{X(m)}^{\Lambda_n}( \mathcal{K}_t^{(n)}( \Phi(X,t)))\| \, .
\ee
	
%	
%Using the measurable quantity introduced in Section~\ref{sec:review} and Proposition~\ref{prop:measurable_f}, 
%it is clear that 
%\be \label{psi_dif_meas}
%\| \Psi_n(Z, \cdot) - \Psi(Z,\cdot) \|(t) \leq [\Sigma_1(Z, \cdot)](t) + [\Sigma_2(Z, \cdot)](t) + [\Sigma_3(Z, \cdot)](t)
%\ee
It is now clear that
\be
\sum_{\substack{Z \subset \Lambda_0\\ x,y \in Z}} \| \Psi_n(Z, t) - \Psi(Z,t) \|  \leq \sum_{j=1}^3 
\sum_{\substack{Z \subset \Lambda_0\\ x,y \in Z}} \Sigma_j(Z, t). 
\ee
	
\noindent To complete this proof, we will show that each of the 3 sums above are bounded by a product of:
a) a measurable, locally bounded function of $t$; b) $\tilde{F}(d(x,y))$; and c) a quantity
that vanishes as $n \to \infty$. Given this, it is clear that $\Psi_n$ converges to $\Psi$ locally in $F$-norm.

Consider the first collection of terms. By consistency of the projections, 
\be \label{sig_1_bd}
\Sigma_1(Z,t) \leq \Vert \cK_t^{(n)}(\Phi(Z,t)) - \cK_t(\Phi(Z,t)) \Vert \leq D(t) |Z|^r \| \Phi(Z,t) \| H(d(Z, \Gamma \setminus \Lambda_n)).
\ee
Here, we have also applied Assumption~\ref{ass:td_qlms}(iv). Since $H$ is non-increasing, the bound
	\be
	\sum_{\substack{Z \subset \Lambda_0\\ x,y \in Z}}\Sigma_1(Z, t)  \leq D(t) \| \Phi_r \|_F(t) F(d(x,y)) H(d( \Lambda_0 , \Gamma \setminus \Lambda_n)) 
	\ee
	follows as $\Phi_r \in \mathcal{B}_F(I)$. Since $\tilde{F}$ maximizes $F$, this completes the argument for the first set of terms.

We now consider the bulk terms. An application of Corollary~\ref{cor:DeltaEst}, see (\ref{iv_Ddifbd}), yields
	\bea \label{j=2_est}
	\|  \Delta_{X(m)}^{\Lambda_n}( \mathcal{K}_t^{(n)}( \Phi(X,t))) -  \Delta_{X(m)}( \mathcal{K}_t( \Phi(X,t)))\|  &  \nonumber \\
	& \hspace{-4cm} \leq \min\left\{ 2 \| \cK_t^{(n)}(\Phi(X,t)) - \cK_t(\Phi(X,t)) \|, 8 C(t) |X|^q G(m-1) \right\} \nonumber \\
	& \hspace{-3.5cm} \leq 2 \| \Phi(X,t) \| \min\left\{ D(t) |X|^r H(d(X, \Gamma \setminus \Lambda_n)), 4 C(t) |X|^q  G(m-1) \right\}.
	\eea
To obtain an estimate with explicit decay in both $n$ and $m$, we use the naive bound $\min\{ a, b \} \leq a^{1- \alpha} b^{\alpha}$
which is valid for any $0 < \alpha < 1$ and all non-negative $a$ and $b$. If we denote by 
$d_n = d( \Lambda_0, \Gamma \setminus \Lambda_n)$ and $p' = \max(q,r)$, then the right-hand-side of
(\ref{j=2_est}) may be further estimated by
\begin{equation}
	2^{2 \alpha +1} |X|^{p'} \| \Phi(X,t) \| D(t)^{1- \alpha} H(d_n)^{1- \alpha} C(t)^{\alpha} G(m-1)^{\alpha} \, .
\end{equation}
Using this, we conclude that
	\be
	\sum_{\substack{Z \subset \Lambda_0\\ x,y \in Z}}\Sigma_2(Z, t) \leq  2^{2 \alpha +1} D(t)^{1- \alpha}C(t)^{\alpha}( t) H(d_n)^{1- \alpha} \sum_{m \geq 0} G^{\alpha}(m) \sum_{\stackrel{X \subset \Lambda_0:}{x,y \in X(m+1)}} |X|^{p'} \| \Phi(X, t) \|,
	\ee
and Lemma~\ref{lem:weight_int_dec} applies. Recalling how $\tilde{F}$ is defined, this completes the argument for 
the second collection of terms.

For the final collection of terms, each non-zero contribution must correspond to values of $m \geq 1$ large enough so that $X(m) \cap ( \Gamma \setminus \Lambda_n) \neq \emptyset$. As such, using the notation above, one checks that $m \geq d_n =d(\Lambda_0, \Gamma \setminus \Lambda_n)$.
The bound (\ref{Delta_bd}) applies to each term and thus 
	\be
	\sum_{\substack{Z \subset \Lambda_0\\ x,y \in Z}}\Sigma_3(Z, t)  \leq  4C(t) \sum_{m \geq d_n} G(m-1)
	\sum_{\stackrel{X \subset \Lambda_0:}{x,y \in X(m)}} |X|^q \| \Phi(X, t) \|.
	\ee
Exploiting again that $G = G^{1-\alpha}G^{\alpha}$ and using its non-increasing behavior, we obtain decay in $n$. 
Estimating what remains using Lemma~\ref{lem:weight_int_dec}, we have completed the proof of Theorem~\ref{thm:qli_conv}(ii).
\end{proof}

%%%%
%
%
%  Sec on diff dyn.
%
%%%%%%%%

\subsection{Quasi-locality for the difference of two dynamics} \label{sec:diff_dyn_ql_est}

In this section, we prove a quasi-locality estimate for the difference of two dynamics
as discussed in Example~\ref{ex:diff_dyn_ex} of Section~\ref{sec:some_ex}.

\begin{thm} \label{thm:dyn_diff} Let $(\Gamma, d)$ be a $\nu$-regular metric space. Fix a collection of densely defined, self-adjoint on-site Hamiltonians $\{ H_z \}_{z \in \Gamma}$ and two time-dependent interactions $\Phi, \Psi \in \mathcal{B}_F(I)$. For any $\Lambda \in \mathcal{P}_0( \Gamma)$
	and each $t \in I$, consider the Hamiltonians 
	\begin{equation} \label{2_td_hams_ql_est}
	H_{\Lambda}^{(\Phi)}(t) = \sum_{z \in \Lambda} H_z + \sum_{Z \subset \Lambda} \Phi(Z,t) \quad \mbox{and } \quad H_{\Lambda}^{(\Psi)}(t) = \sum_{z \in \Lambda} H_z + \sum_{Z \subset \Lambda} \Psi(Z,t).
	\end{equation} 
	For any $s,t \in I$, denote by $\tau_{t,s}^{\Lambda}$ and $\alpha_{t,s}^{\Lambda}$ the dynamics corresponding to
	$H_{\Lambda}^{(\Phi)}$ and $H_{\Lambda}^{(\Psi)}$, respectively, and define
	$\mathcal{K}^{\Lambda}_{t,s} : \mathcal{A}_{\Lambda} \to \mathcal{A}_{\Lambda}$ by
	\begin{equation} \label{def_diff_dyn_ql_est}
	\mathcal{K}_{t,s}^{\Lambda}(A) = \tau_{t,s}^{\Lambda}(A) - \alpha_{t,s}^{\Lambda}(A) .
	\end{equation}
	
	If $F$ has a finite $2 \nu$-moment, i.e. $\sum_{n=0}^{\infty} (1+n)^{2 \nu} F(n)<\infty$, then for any $X,Y \subset \Lambda$ 
	\begin{equation} \label{dyn_diff_com_bd}
	\| [ \mathcal{K}_{t,s}^{\Lambda}(A), B ] \| \leq 4 C_F^{-1} I_{t,s}(\Phi - \Psi) \| A \| \| B \| \min \{ C_{t,s}^{(1)} \| F \| |X|, C_{t,s}^{(2)} G(d(X,Y)) \} 
	\end{equation}
	for any $A \in \mathcal{A}_X$, $B \in \mathcal{A}_Y$, and $t,s \in I$. Here one may take
	\begin{equation} \label{diff_dyn_min_const_est}
	C_{t,s}^{(1)} = e^{2 {\rm min}(I_{t,s}( \Phi), I_{t,s}(\Psi))} \quad \mbox{and} \quad C_{t,s}^{(2)} = \left( C_{t,s}^{(1)} - 1 \right) \left( 1 + \frac{5 \| F \|}{C_F} \right) + \kappa^2 
	\end{equation}
	and with $R = d(X,Y)$ we find that
	\begin{eqnarray}
	G(R) & = & \left( 1 + |X(R/2)| \right) G_F(X, \Lambda \setminus X(R/2)) + |X(R/2)| G_F( X(3R/8),Y)  \nonumber \\
	& \mbox{ } & \quad + |X(R/2)| \sum_{n = \lfloor R/4 \rfloor}^{\infty} (1 +n)^{2 \nu}F(n)  .
	\end{eqnarray}
\end{thm}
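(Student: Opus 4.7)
The plan is to build on the Duhamel-type representation used in the proof of Theorem~\ref{thm:continuity}. Writing $\Theta = \Phi-\Psi$ and assuming $s\le t$, one has the strong identity
\begin{equation} \label{plan:duh}
\mathcal{K}_{t,s}^{\Lambda}(A) \;=\; i\int_s^t \tau_{r,s}^{\Lambda}\!\Bigl(\sum_{Z\subset\Lambda}[\Theta(Z,r),\,\alpha_{t,r}^{\Lambda}(A)]\Bigr)\,dr.
\end{equation}
The bound $\|[\mathcal{K}_{t,s}^{\Lambda}(A),B]\| \le 4 C_F^{-1} I_{t,s}(\Phi-\Psi) C_{t,s}^{(1)}\|F\||X|\|A\|\|B\|$ will follow directly from Theorem~\ref{thm:continuity}(i) via the trivial estimate $\|[\mathcal{K}_{t,s}^{\Lambda}(A),B]\| \le 2\|B\|\|\mathcal{K}_{t,s}^{\Lambda}(A)\|$, so the bulk of the work is to prove the genuine quasi-local estimate involving $G(d(X,Y))$, which is nontrivial only when $R:=d(X,Y)>0$.

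To this end I would first localize the inner observable $\alpha_{t,r}^{\Lambda}(A)$. Pick a locally normal product state $\rho$ and set $A_r := \Pi^{\Lambda}_{X(R/2)}(\alpha_{t,r}^{\Lambda}(A))\in\cA_{X(R/2)}$. Combining Corollary~\ref{cor:PiX} with the Lieb-Robinson bound (Theorem~\ref{thm:lrb2}) applied to $\alpha^{\Lambda}$ yields
\begin{equation} \label{plan:loc_est}
\|\alpha_{t,r}^{\Lambda}(A) - A_r\| \;\le\; \frac{4\|A\|}{C_F}\bigl(e^{2 I_{t,r}(\Psi)}-1\bigr)G_F(X,\Lambda\setminus X(R/2)).
\end{equation}
Substituting $\alpha_{t,r}^{\Lambda}(A) = A_r + (\alpha_{t,r}^{\Lambda}(A)-A_r)$ in \eqref{plan:duh} splits the integrand into a ``main'' piece and a ``remainder''. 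The remainder piece, after taking the commutator with $B$, is bounded (trivially, using that automorphisms preserve norms) by $2\|B\|\cdot\|\alpha_{t,r}^{\Lambda}(A)-A_r\|\cdot\sum_{Z}\|\Theta(Z,r)\|$ where the $Z$-sum is restricted to those $Z$ meeting the support of either $\Theta(Z,r)\!\cdot\!(\alpha_{t,r}^{\Lambda}(A)-A_r)$, and an application of \eqref{Fnorm_sum_bd} (summing over any point of $Z$ in $X(R/2)\cup\Lambda\setminus X(R/2)$) yields a factor of $(1+|X(R/2)|)\|\Theta\|_F(r)$. This produces the first summand of $G(R)$ after integrating in $r$ to produce $I_{t,s}(\Phi-\Psi)$.

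For the main piece $\tau_{r,s}^{\Lambda}([\Theta(Z,r),A_r])$, only sets $Z$ with $Z\cap X(R/2)\neq\emptyset$ contribute. I would split these $Z$ according to diameter: if $\mathrm{diam}(Z)\le R/8$ then $Z\subset X(5R/8)$ and the commutator $[\Theta(Z,r),A_r]$ is supported in $X(5R/8)$, so the Lieb-Robinson bound for $\tau_{r,s}^{\Lambda}$ gives
\[
\|[\tau_{r,s}^{\Lambda}([\Theta(Z,r),A_r]),B]\| \;\le\; \frac{2\|A_r\|\|\Theta(Z,r)\|\|B\|}{C_F}\bigl(e^{2 I_{r,s}(\Phi)}-1\bigr)G_F(X(5R/8),Y);
\]
summing using the $F$-norm structure and overcounting with $|X(R/2)|$ produces the middle term of $G(R)$ (the inflation parameter will emerge from how one organizes the factors $|X(R/2)|G_F(X(5R/8),Y)$ versus $|X(3R/8)|\cdot G_F(\cdot,\cdot)$, combined with subadditivity of $d$). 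For $Z$ with $\mathrm{diam}(Z)>R/8$, I would use \eqref{Fnorm_term_bd} together with the $\nu$-regularity bound $|X(R/2)|\le\kappa(R/2)^\nu|X|$ to get the tail term $|X(R/2)|\sum_{n\ge R/4}(1+n)^{2\nu}F(n)$, after carefully reindexing the weighted sum of $\|\Theta(Z,r)\|$ over pairs of points in $Z$ of distance $>R/8$.

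The hard part will be the bookkeeping to combine these three contributions into the specific form of $G(R)$ and $C_{t,s}^{(2)}$ stated in the theorem—in particular, tracking the exponential prefactors from both Lieb-Robinson applications (one for $\alpha^{\Lambda}$ in \eqref{plan:loc_est} and one for $\tau^{\Lambda}$ on the localized commutator) so that they telescope into the single factor $(C_{t,s}^{(1)}-1)$ in $C_{t,s}^{(2)}$, and verifying that the inflation thresholds $R/2$, $3R/8$, $R/4$ arise naturally from an optimized choice of the diameter cutoff. The finite $2\nu$-moment hypothesis on $F$ is used precisely to make the diameter-tail sum converge. Finally, combining the two bounds via $\min$ yields \eqref{dyn_diff_com_bd}.
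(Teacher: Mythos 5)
Your overall strategy — Duhamel representation, localization of $\alpha_{t,r}^\Lambda(A)$ to $A_r = \Pi_{X(R/2)}^\Lambda(\alpha_{t,r}^\Lambda(A))$, a diameter split for the near sets $Z$, and the finite $2\nu$-moment of $F$ to control the large-diameter tail — matches the paper's approach. However, there is a genuine gap in how you handle the remainder piece. After substituting $\alpha_{t,r}^\Lambda(A) = A_r + (\alpha_{t,r}^\Lambda(A)-A_r)$ in \eqref{plan:duh}, the remainder commutator $[\Theta(Z,r),\,\alpha_{t,r}^\Lambda(A)-A_r]$ does \emph{not} vanish for any $Z\subset\Lambda$, because $\alpha_{t,r}^\Lambda(A)-A_r$ is a priori supported on all of $\Lambda$ (it is the difference of a globally supported observable and its conditional expectation). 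Your claim that "the $Z$-sum is restricted to those $Z$ meeting the support of $\Theta(Z,r)\cdot(\alpha_{t,r}^\Lambda(A)-A_r)$" imposes no restriction at all, and the set $X(R/2)\cup(\Lambda\setminus X(R/2))$ you sum over is simply $\Lambda$. Estimating naively, the remainder piece contributes $\sum_{Z\subset\Lambda}\|\Theta(Z,r)\|\lesssim |\Lambda|\,\|F\|\,\|\Theta\|_F(r)$, which is volume-dependent and cannot yield the $\Lambda$-independent factor $(1+|X(R/2)|)$ that you assert.

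The paper avoids this by making an additional split \emph{before} localizing: the sum over $Z$ in \eqref{plan:duh} is first divided into $Z$ with $d(Z,X)>R$ and $d(Z,X)\le R$ (with $R=d(X,Y)/2$). For the far sets, the inner commutator $[\alpha_{t,r}^\Lambda(A),\Theta(Z,r)]$ is already small by the Lieb--Robinson bound for $\alpha^\Lambda$ and Corollary~\ref{app:cor:int_bd}, so no localization of $\alpha_{t,r}^\Lambda(A)$ is needed there; this produces the "$1\cdot G_F(X,\Lambda\setminus X(R/2))$" contribution. Only for the near sets (which, in the $F$-weighted sense, carry at most $|X(R/2)|\,\|F\|$ total weight) does the localization $A_r$ appear. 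Because the remainder contribution is then only summed over near $Z$, it acquires exactly the factor $|X(R/2)|\,G_F(X,\Lambda\setminus X(R/2))$. Combining the far-$Z$ and near-$Z$-remainder contributions gives the claimed $(1+|X(R/2)|)\,G_F(X,\Lambda\setminus X(R/2))$. To repair your proof you therefore need to introduce this far/near split over $Z$ before localizing; without it the remainder estimate is wrong. (Your particular choice of diameter cutoff $R/8$ rather than the paper's effective $R/4$ is merely a different optimization and would be fine once the split is in place.)
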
 

\begin{proof} 
To begin, we note that for any $X,Y \in \mathcal{P}_0( \Gamma)$, the naive commutator bound
\begin{equation}
\| [ \mathcal{K}_{t,s}^{\Lambda}(A), B ] \| \leq 2 \| \mathcal{K}_{t,s}^{\Lambda}(A) \| \| B \| 
\end{equation} 
holds for any $A \in \mathcal{A}_X$, $B \in \mathcal{A}_Y$, and $s,t \in I$. In this case, the
local bound proven in Theorem~\ref{thm:continuity}(i), see also (\ref{lb_diff_dyn_ex}), provides a 
rough estimate, which is linear in $I_{t,s}(\Phi - \Psi)$. This explains the first part of the minimum in 
(\ref{dyn_diff_com_bd}). Given this, we need only consider the case of $d(X,Y) >0$. Moreover, 
as is clear from the arguments given in the proof of Theorem~\ref{thm:lrb2}, we need only 
consider the case of trivial on-sites, i.e. $H_z=0$ for all $z \in \Gamma$.
	
Let $X,Y \in \mathcal{P}_0( \Gamma)$ satisfy $d(X,Y) >0$ and, for convenience, 
assume that $s \leq t$. Writing $\mathcal{K}_{t,s}^{\Lambda}(A)$ as in (\ref{dyn_diff}), the bound
	\begin{equation} \label{comm_bd_1st_est}
	\| [\mathcal{K}^{\Lambda}_{t,s}(A), B] \| \leq \sum_{Z \subset \Lambda} \int_s^t \| [ \tau_{r, s}^{\Lambda}( [ \alpha_{t, r}^{\Lambda}(A), \Theta(Z, r)] ), B ] \| \, dr 
	\end{equation}
	follows readily; see also (\ref{dyn_diff_est1}). 
	Here, as in (\ref{diff_ham_terms}), we have denoted by $\Theta$ the time-dependent interaction with terms $\Theta(Z,r) = \Phi(Z,r) - \Psi(Z,r)$.
	%and used the measurable function introduced in Section~\ref{sec:review}. 
	
	In the estimates below, we use an argument similar to that of Theorem~\ref{thm:continuity}, see in particular \eqref{Cor7.2App}, to show that the claim holds with $e^{2I_{t,s}(\Psi)}$ replacing $C^{(1)}_{t,s}$ in the definition of $C^{(2)}_{t,s}$. However, by reordering the dynamics in (\ref{def_diff_dyn_ql_est}), (or equivalently, by considering $-\mathcal{K}_{t,s}^{\Lambda}(A)$) we see that the analogue of (\ref{comm_bd_1st_est}) holds with the roles of the dynamics $\tau_{t,s}$ and $\alpha_{t,s}$ interchanged. Since the argument given below applies equally well in this case, it will be clear that $C_{t,s}^{(2)}$ can then be expressed in terms of $C_{t,s}^{(1)}$. We now continue with our estimate of the right-hand-side of (\ref{comm_bd_1st_est}).
	
	To prove (\ref{dyn_diff_com_bd}), we first consider those terms on the right-hand-side of (\ref{comm_bd_1st_est}) corresponding to $Z \subset \Lambda$ with $d(Z,X) > d(X,Y)/2$. Since $\tau_{r,s}^{\Lambda}$ is an automorphism, the commutator bound
	\begin{equation}
	\| [ \tau_{r, s}^{\Lambda}( [ \alpha_{t, r}^{\Lambda}(A), \Theta(Z, r)] ), B ] \| \leq 2 \| B \| \| [ \alpha_{t, r}^{\Lambda}(A), \Theta(Z, r)] \| \, 
	\end{equation}
	is clear. By Theorem~\ref{thm:lrb2}, the dynamics $\alpha_{t,r}^{\Lambda}$ corresponding to 
	$H_{\Lambda}^{(\Psi)}$ satisfies a quasi-locality bound, in particular, we may estimate as in (\ref{unbd_lrb_ex}).
	In this case, an application of Corollary~\ref{app:cor:int_bd} with $R = d(X,Y)/2$ shows that 
	\begin{eqnarray} \label{dif_com_bd1}
	\sum_{\stackrel{Z \subset \Lambda:}{d(Z,X)>R}}  \int_s^t \| [ \tau_{r, s}^{\Lambda}( [ \alpha_{t, r}^{\Lambda}(A), \Theta(Z, r)] ), B ] \| \, dr 
	\leq  \hspace{4cm}  \nonumber \\  \frac{4 \| A \| \| B \|}{C_F} (e^{2 I_{t,s}(\Psi)} -1 ) I_{t,s}( \Theta) G_F(X, \Lambda \setminus X(R)) .
	\end{eqnarray}

	We need only estimate those terms on the right-hand-side of (\ref{comm_bd_1st_est}) corresponding to $Z \subset \Lambda$ with $d(Z,X) \leq d(X,Y)/2$.
	For these terms, we first make a strictly local approximation of the inner-most dynamics, i.e. $\alpha_{t,r}^{\Lambda}$. 
	Given the quasi-locality estimate (\ref{unbd_lrb_ex}) for $\alpha_{t,r}^{\Lambda}$, an application of  
	Lemma~\ref{lem:qlm_approx} shows that
	\begin{equation}
	\| \alpha_{t,r}^{\Lambda}(A) - A_R(r) \| \leq \frac{4 \| A \|}{C_F} (e^{2 I_{t,s}(\Psi)} - 1) G_F(X, \Lambda \setminus X(R))
	\end{equation}
	where we have set $A_R(r) = \Pi_{X(R)}^{\Lambda}( \alpha_{t,r}^{\Lambda}(A))$ and found an upper bound independent of $s \leq r \leq t$.
	
	In this case, for any $s \leq r \leq t$,
	\begin{eqnarray} \label{insert_loc}
	\| [ \tau_{r, s}^{\Lambda}( [ \alpha_{t, r}^{\Lambda}(A), \Theta(Z, r)] ), B ] \| & \leq & \| [ \tau_{r, s}^{\Lambda}( [ A_R(r), \Theta(Z, r)] ), B ] \| + \nonumber \\
	& \mbox{ } & \quad + \| [ \tau_{r, s}^{\Lambda}( [ \alpha_{t, r}^{\Lambda}(A) -  A_R(r), \Theta(Z, r)] ), B ] \|.
	\end{eqnarray}
	For the second term on the right-hand-side of (\ref{insert_loc}), it is clear that
	\begin{equation}
	\| [ \tau_{r, s}^{\Lambda}( [ \alpha_{t, r}^{\Lambda}(A) -  A_R(r), \Theta(Z, r)] ), B ] \| \leq \frac{16 \| A \| \| B \|}{C_F} \| \Theta(Z,r) \| ( e^{2 I_{t,s}(\Psi)} - 1) G_F(X, \Lambda \setminus X(R))
	\end{equation} 
	and therefore, the bound 
	\begin{eqnarray}  \label{dif_com_bd2}
	\sum_{\stackrel{Z \subset \Lambda:}{d(Z,X) \leq R}}  \int_s^t  \| [ \tau_{r, s}^{\Lambda}( [ \alpha_{t, \cdot}^{\Lambda}(A) -  A_R(r), \Theta(Z, r)] ), B ] \| \, dr 
	\leq  \hspace{4cm}  \nonumber \\  \frac{16 \| A \| \| B \| \| F \| }{C_F^2} (e^{2 I_{t,s}(\Psi)} -1 ) I_{t,s}( \Theta) |X(R)| G_F(X, \Lambda \setminus X(R)) 
	\end{eqnarray}
	follows from an application of Proposition~\ref{app:prop:int_bd_dist}, see (\ref{RclosetoX}).
	
	With the remaining terms, i.e. those corresponding to the first term on the right-hand-side of (\ref{insert_loc}), we find it useful to further
	sub-divide the sets $Z$ into those of {\it relative} `large' and `small' diameter. More precisely, we will estimate using
	\begin{equation} \label{splitsum}
	\sum_{\stackrel{Z \subset \Lambda:}{d(Z,X) \leq R}} \cdot \leq \sum_{x \in X(R)} \sum_{\stackrel{Z \subset \Lambda: x \in Z}{ {\rm diam}(Z) \leq R/2}} \cdot  + \sum_{x \in X(R)}\sum_{\stackrel{Z \subset \Lambda: x\in Z}{ {\rm diam}(Z) > R/2}} \cdot 
	\end{equation} 
	For the terms with `small' diameter, we apply the quasi-locality estimate for the outer
	dynamics $\tau_{t,s}^{\Lambda}$, again we use the form found in (\ref{unbd_lrb_ex}), to obtain 
	\begin{equation}
	\| [ \tau_{r,s}^{\Lambda}([A_R(r), \Theta(Z,r)]), B ] \| \leq \frac{4 \| A \| \| B \|}{C_F} \| \Theta(Z,r) \| (e^{2 I_{t,s}( \Phi)} -1) G_F(X(R) \cup Z, Y) .
	\end{equation}
	Clearly, $X(R) \cup Z \subset X(3R/2)$ for any $Z$ with $Z \cap X(R) \neq \emptyset$ and ${\rm diam}(Z) \leq R/2$. 
	In this case, 
	\begin{eqnarray}  \label{dif_com_bd3}
	\sum_{x \in X(R)} \sum_{\stackrel{Z \subset \Lambda: x \in Z}{ {\rm diam}(Z) \leq R/2}}  \int_s^t \| [ \tau_{r,s}^{\Lambda}([A_R(r), \Theta(Z,r)]), B ] \| \, dr 
	\leq  \hspace{4cm}  \nonumber \\  \frac{4 \| A \| \| B \| \| F \| }{C_F^2} (e^{2 I_{t,s}(\Phi)} -1 ) I_{t,s}( \Theta) |X(R)| G_F(X(3R/2), Y) 
	\end{eqnarray}
	follows immediately from the arguments in Proposition~\ref{app:prop:int_bd_diam}, see (\ref{smalldiambd}).
	
	The remaining terms have relatively large diameter, and so we make the naive estimate
	\begin{equation}
	\| [ \tau_{r,s}^{\Lambda}([A_R(r), \Theta(Z,r)]), B ] \| \leq 4 \| A \| \| B \| \| \Theta(Z,r) \| .
	\end{equation}
	As a consequence,
	\begin{eqnarray}  \label{dif_com_bd4}
	\sum_{x \in X(R)} \sum_{\stackrel{Z \subset \Lambda: x \in Z}{ {\rm diam}(Z) > R/2}}  \int_s^t \| [ \tau_{r,s}^{\Lambda}([A_R(r), \Theta(Z,r)]), B ] \| \, dr 
	\leq  \hspace{4cm}  \nonumber \\  \frac{4 \kappa^2 \| A \| \| B \|}{C_F} I_{t,s}( \Theta) |X(R)|  [M_{2 \nu}(F)](R/2) 
	\end{eqnarray}
	follows from Proposition~\ref{app:prop:int_bd_diam}, see (\ref{largediambd}).
	
	Collecting the estimates in (\ref{dif_com_bd1}), (\ref{dif_com_bd2}), (\ref{dif_com_bd3}), and (\ref{dif_com_bd4}), we find (\ref{dyn_diff_com_bd}) as claimed. 
\end{proof}

%%%%%%%
%
% Spectral Flow 
%
%%%%%%%%%%

\section{The spectral flow}\label{sec:spectral-flow}

In this section, we consider a family of finite volume quantum lattice Hamiltonians $H_\Lambda(s)$ acting on a Hilbert space $\cH_\Lambda$ that depend smoothly on a parameter $s\in [0,1]$. We assume that the spectrum of $H_\Lambda(s)$ can be decomposed into two non-empty disjoint sets, i.e. $\spec(H_\Lambda(s)) = \Sigma_1(s)\cup \Sigma_2(s)$, where $\Sigma_1(s)$ is bounded, and the distance between $\Sigma_1(s)$ and $\Sigma_2(s)$ is greater than a positive value independent of $s$. The main goal of this section is to show that if the interaction defining $H_\Lambda(s)$ is smooth and decays sufficiently fasts, then we can use the theory of Section~\ref{sec:quasilocal-maps} to construct a \emph{quasi-local} automorphism $\alpha_s:\cA_\Lambda \to \cA_\Lambda$, which we call the \emph{spectral flow}, that maps the spectral projection of $H_\Lambda(s)$ onto $\Sigma_1(s)$ back to the spectral projection of $H_\Lambda(0)$ onto $\Sigma_1(0)$. In Section~\ref{sec:equivalence_of_phases} we use the spectral flow to discuss the classification of gapped ground state phases.  A second important application concerns models with a spectral gap above their ground states, for which $s$ parameterizes a perturbation of the system; this is the main topic we analyze in \cite{QLBII}. While both these applications are for ground states, the methods we introduce here are more general and work equally well for isolated bounded subsets anywhere in the spectrum. 

Denoting by $P(s)$ the spectral projection of $H_\Lambda(s)$ onto $\Sigma_1(s)$, the existence of an automorphism $\alpha_s$ satisfying 
\be\label{sf_property}
\alpha_s(P(s))= P(0)
\ee
is well-known. As shown by Kato in \cite{kato:1995}, under certain conditions which guarantee the smoothness of $P(s)$, the unique strong solution of
\begin{equation}
\frac{d}{ds} U^K(s) = - i [ P'(s), P(s) ] U^K(s), \quad U^K(0) = \idty
\end{equation}
is unitary and satisfies
$$
\alpha_s^K(P(s)):=U^K(s)^{*}P(s)U^K(s) = P(0).
$$ 
The automorphism studied by Kato was for a family of Hamiltonians defined on a general Hilbert space $\cH$, and so his results do not take into account the locality structure of a quantum lattice system. As a result, the automorphism induced by $U^K(s)$ is not obviously quasi-local. Hastings and Wen were the first to introduce a technique for constructing an automorphism on a quantum lattice system that both satisfies \eqref{sf_property} and is quasi-local \cite{hastings:2005}. In that work, they referred to the quasi-local automorphism as the quasi-adiabatic evolution (or continuation). It is this approach that we follow in this section. Neither name, spectral flow or quasi-adiabatic continuation, accurately and unambiguously captures the essence of this quasi-local automorphism. It suffices to say that it is a unitary dynamics with useful properties. In other works, Hastings introduced novel ways to combine particular instances of the spectral flow with quasi-locality properties of quantum lattice systems, most notably in \cite{hastings:2004}. This work inspired a string of new results in the theory of quantum lattice systems, and so it seems appropriate to refer to the generator of this spectral flow as the {\em Hastings generator}.  

\subsection{Set up and main results} \label{sec:sf_setup} We first consider a family of parameter dependent Hamiltonians on a general complex Hilbert space $\cH$ and later return to apply our results to the setting of quantum lattice systems. Specifically, we consider operators that depend on a parameter $s\in[0,1]$, and we note that the choice of interval $[0,1]$ is a matter of convenience. We begin with the following definition.

\begin{defn}\label{def:stg_diff}
Let $\cH$ be a complex Hilbert space. We say that a map $\Phi:[0,1]\to \cB(\cH)$ is \emph{strongly $C^1$} if $\Phi(s)$ is strongly differentiable for all $s\in[0,1]$, and the derivative $\Phi':[0,1]\to \cB(\cH)$ is continuous in the strong operator topology. 
\end{defn}

We consider a family of parameter dependent Hamiltonians of the form
\begin{equation}\label{sf_fv_ham}
	H(s) = H + \Phi(s), \quad s\in[0,1]
\end{equation}
where $H$ is a self-adjoint operator acting on some dense domain $\mathcal{D} \subset \mathcal{H}$, and $\Phi : [0,1] \to \mathcal{B}( \mathcal{H})$ is strongly $C^1$ and pointwise self-adjoint, i.e. $\Phi(s)^* = \Phi(s)$ for all $0 \leq s \leq 1$. Since $\Phi$ is bounded and self-adjoint, for each $s\in[0,1]$ it is clear that $H(s)$ corresponds to a well-defined, self-adjoint operator with the same dense domain $\mathcal{D} \subset \mathcal{H}$. We will refer to $\{ H(s) \}_{s \in [0,1]}$ as a smooth family of Hamiltonians on $\mathcal{H}$. 

For each $0 \leq s \leq 1$, let us denote by $\tau_t^{(s)}$ the Heisenberg dynamics corresponding to $H(s)$, i.e., 
\begin{equation} \label{sf_ham_dyn}
\tau_t^{(s)}(A) = e^{itH(s)}Ae^{-itH(s)} \quad \mbox{for all } A \in \mathcal{B}( \mathcal{H}) \mbox{ and } t \in \mathbb{R} \, .
\end{equation}
It is clear that for each $s$, this dynamics is a one-parameter family of automorphisms of $\mathcal{B}( \mathcal{H})$, and so for any real-valued function $W \in L^1( \mathbb{R})$, the mapping $D:[0,1] \to \mathcal{B}( \mathcal{H})$ given by
\begin{equation} \label{def_ds_gen}
D(s) = \int_{\mathbb{R}} \tau_t^{(s)} (\Phi'(s) ) \, W(t) \, dt
\end{equation}
is well-defined,  pointwise self-adjoint, bounded, and continuous in the strong operator topology. In this case, the methods of Section~\ref{sec:Dyson} show that the unique strong solution of
\begin{equation}\label{sf_unitary}
\frac{d}{ds}U(s) = -i D(s) U(s) \quad \mbox{with} \quad U(0) = \idtyty
\end{equation}
is well-defined, unitary, and norm-continuous. In terms of these unitaries, we can define an automorphism $\alpha_s: \cB(\cH)\to \cB(\cH)$ for each $0\leq s \leq 1$ by
\begin{equation} \label{def_alpha_gen}
\alpha_s(A) = U(s)^* A U(s).
\end{equation}
Note that here, $D(s)$, $U(s)$, and $\alpha_s$ all depend on the choice of weight function $W\in L^1(\bR)$. We will use this construction to define the spectral flow of interest. 

As described in the introduction,  we will consider the situation that the smooth family of Hamiltonians defined as in (\ref{sf_fv_ham}) has a spectrum which can be decomposed into two disjoint, non-empty sets.
This decomposition will depend on $0 \leq s \leq 1$, and we are particularly interested in cases where
the gap between these sets has a uniform lower bound. To be precise, some additional notation will be 
convenient: for any two non-empty sets $X, Y \subset \mathbb{R}$, denote by $d(X,Y)$ the distance between these sets:
\begin{equation} \label{set_dist}
d(X,Y) := \inf\{ |x-y| : x \in X \mbox{ and } y \in Y \}.
\end{equation}
\begin{assumption}\label{sf_ass:fv_uni_gap} For each $0\leq s \leq 1$, the spectrum of $H(s)$ can be partitioned into two disjoint sets $\Sigma_1(s)$ and $\Sigma_2(s)$, i.e. $\spec(H(s)) = \Sigma_1(s) \cup \Sigma_2(s)$, such that
	\begin{equation} \label{pt_wise_gap}
	\gamma' := \inf_{0\leq s \leq 1}d(\Sigma_1(s), \Sigma_2(s)) >0,
	\end{equation}
	and, moreover, there are compact intervals $I(s)$ with end-points depending smoothly on $s$, for which $\Sigma_1(s)\subset I(s)\subset (\Rl\setminus\Sigma_2(s))$ and $\mu(s) : = d(I(s), \Sigma_2(s))$ satisfies $\mu:=\inf_{0\leq s \leq 1}\mu(s)>0.$
\end{assumption}

In many concrete examples one can pick the interval $I(s)$ as the smallest interval containing $\Sigma_1(s)$, and in that case $\mu=\gamma'$. 

Given a smooth family of Hamiltonians $H(s)$ that satisfy Assumption~\ref{sf_ass:fv_uni_gap}, the spectral flow of interest depends on the choice of an auxiliary parameter $0<\gamma\leq\gamma'$. For any such $\gamma$ and $0\leq s \leq 1$, we define the spectral flow $\alpha_s^\gamma: \cB(\cH)\to \cB(\cH)$ by
\begin{equation} \label{def_alpha}
\alpha_s^{\gamma}(A) = U^{\gamma}(s)^* A U^{\gamma}(s)
\end{equation}
where $U^\gamma(s)$ is the unitary solution to \eqref{sf_unitary} for the self-adjoint operator
\begin{equation} \label{def_ds}
D^{\gamma}(s) = \int_{\mathbb{R}} \tau_t^{(s)} (\Phi'(s) ) \, W_{\gamma}(t) \, dt
\end{equation}
defined by a well-chosen $\gamma$-dependent, real-valued weight function $W_\gamma\in L^1(\bR)$. In Section~\ref{sec:ex_weight} we state the necessary conditions for choosing $W_\gamma$ and give an explicit example a weight function that satisfies these conditions. In fact, we will be able to define weight functions $W_\gamma$ for any $\gamma>0$. However, to obtain the spectral flow property, i.e. \eqref{sf_property}, one must choose $W_\gamma$ with $0<\gamma\leq\gamma'$. As discussed in the introduction, the Hamiltonian $D^\gamma(s)$ will be called a \emph{Hastings generator}. We can now state the first main result of this section.

\begin{thm} \label{thm:sf}
Let $\mathcal{H}$ be a complex Hilbert space, and $H(s)$ be a smooth family of Hamiltonians as in \eqref{sf_fv_ham} satisfying Assumption~\ref{sf_ass:fv_uni_gap}. For any $0<\gamma < \gamma'$, there is a real-valued function $W_\gamma\in L^1(\bR)$ such that the automorphism $\alpha_s^\gamma:\cB(\cH)\to\cB(\cH)$ defined as in (\ref{def_alpha}) satisfies
\begin{equation} \label{sf:followsp}
\alpha_s^\gamma(P(s)) = P(0)  \, 
\end{equation}
for all $0 \leq s \leq 1$. Here, $P(s)$ denotes the spectral projection associated to $H(s)$ onto the isolated part of the spectrum $\Sigma_1(s)$.
\end{thm}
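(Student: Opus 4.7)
The plan is to construct the Hastings generator $D^\gamma(s)$ so that it satisfies the \emph{intertwining identity}
$$i[D^\gamma(s),P(s)] = -P'(s), \qquad s\in[0,1].$$
Granting this, we set $Q(s) := U^\gamma(s)^* P(s) U^\gamma(s)$, and a direct strong-derivative computation using \eqref{sf_unitary} together with the self-adjointness of $D^\gamma(s)$ gives
$$\frac{d}{ds}Q(s) = U^\gamma(s)^*\bigl(i[D^\gamma(s),P(s)] + P'(s)\bigr)U^\gamma(s) = 0,$$
whence $Q(s) = Q(0) = P(0)$, which is exactly \eqref{sf:followsp}. The proof thus reduces to (a) showing that $s\mapsto P(s)$ is strongly $C^1$, and (b) constructing $W_\gamma$ so that the intertwining identity holds.

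For (a), Assumption \ref{sf_ass:fv_uni_gap} allows us, near any $s_0\in[0,1]$, to fix an $s$-independent piecewise-smooth simple closed contour $\Gamma$ in the resolvent set of $H(s)$ which encircles $\Sigma_1(s_0)$ once and stays at distance at least $\mu/2$ from the full spectrum of $H(s)$ for all $s$ in a neighborhood of $s_0$. The Riesz projection formula
$$P(s) = -\frac{1}{2\pi i}\oint_\Gamma (H(s)-z)^{-1}\,dz,$$
the second resolvent identity, and the strong $C^1$ property of $\Phi(s)$ then yield
$$P'(s) = \frac{1}{2\pi i}\oint_\Gamma (H(s)-z)^{-1}\,\Phi'(s)\,(H(s)-z)^{-1}\,dz,$$
with the integral interpreted in the strong operator topology. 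A standard Cauchy-theorem argument shows the expression is independent of the choice of local contour.

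For (b), let $E^{(s)}$ denote the projection-valued spectral measure of $H(s)$ and decompose it as $E^{(s)} = E_1^{(s)} + E_2^{(s)}$ along $\Sigma_1(s)\cup\Sigma_2(s)$, so that $P(s) = E_1^{(s)}(\mathbb{R})$. Using $W_\gamma\in L^1(\mathbb{R})$ and Fubini (justified in the strong sense via the results of Section \ref{sec:review}), the Hastings generator can be rewritten as a double spectral integral
$$D^\gamma(s) = \iint \widehat{W_\gamma}(\lambda-\mu)\,dE^{(s)}(\lambda)\,\Phi'(s)\,dE^{(s)}(\mu),\qquad \widehat{W_\gamma}(\omega) := \int_{\mathbb{R}} e^{i\omega t}W_\gamma(t)\,dt.$$
The commutator $[D^\gamma(s),P(s)]$ retains only the two off-diagonal contributions corresponding to $(j,k)\in\{(1,2),(2,1)\}$, and on these Assumption \ref{sf_ass:fv_uni_gap} forces $|\lambda-\mu|\geq\gamma'>\gamma$. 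Applying the same spectral decomposition to the contour representation of $P'(s)$ and evaluating the $z$-integral via residues reduces the intertwining identity to the scalar condition
$$\widehat{W_\gamma}(\omega) = \frac{i}{\omega}\qquad\text{for all } |\omega|\geq\gamma,$$
together with $W_\gamma$ real-valued, which ensures that $D^\gamma(s)$ is pointwise self-adjoint. The existence of such a $W_\gamma$ is the content of Section \ref{sec:ex_weight}.

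The main technical obstacle is the rigorous execution of the spectral calculation when $\mathcal{H}$ is potentially non-separable and $H$ is unbounded, so that formal matrix-element reasoning must be replaced throughout by manipulations with projection-valued measures in the strong operator topology; in particular, the interchange of the time integral defining $D^\gamma(s)$ with the spectral integrals coming from $e^{\pm itH(s)}$ requires careful use of Fubini together with the local boundedness estimates from Section \ref{sec:review}. Once (a) and (b) are in place, the intertwining identity gives strong vanishing of $\frac{d}{ds}Q(s)$, and $Q(s)\equiv P(0)$ follows from the fundamental theorem of calculus for strongly differentiable operator-valued functions (see the discussion following \eqref{int2}).
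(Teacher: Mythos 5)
Your proposal is correct and follows the same strategy as the paper: reduce the claim to the intertwining identity $P'(s) = -i[D^\gamma(s),P(s)]$, establish strong differentiability of $P(s)$ via the Riesz projection, and verify the identity by a double operator integral (DOI) computation that hinges on the Fourier condition on $W_\gamma$ on $\{|\omega|\geq\gamma\}$. Your Fourier convention differs from the paper's (unitary, $e^{-ikt}$) by a factor of $\sqrt{2\pi}$ and a sign, but once translated your condition $\widehat{W_\gamma}(\omega)=i/\omega$ for $|\omega|\geq\gamma$ agrees with the paper's \eqref{FTWbis}, and the resulting off-diagonal DOI kernel $1/(\mu-\lambda)$ matches the expression the paper derives for $P'(s)$ by residue evaluation.

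The one organizational difference is worth noting. The paper does not express $D^\gamma(s)$ itself as a double spectral integral over $\mathbb{R}^2$; instead it isolates a separate Lemma (the one computing $[\mathcal{G}(A),P]$ as a DOI restricted to the off-diagonal blocks $\Sigma_1\times\Sigma_2$), and then independently writes $P'(s)$ as a contour integral which it reduces to a DOI over the same off-diagonal blocks using the vanishing of $P P'P$ and $(\idtyty-P)P'(\idtyty-P)$. You collapse both steps into a single computation using the full DOI representation of $D^\gamma(s)$. This is legitimate — with $W_\gamma\in L^1$ the Toeplitz kernel $\widehat{W_\gamma}(\lambda-\mu)$ is a bounded Schur multiplier, so the full DOI makes sense and has norm at most $\|W_\gamma\|_1$ — and arguably more direct, at the cost of invoking the DOI machinery once for a potentially unbounded $H$ rather than on the bounded cross-blocks only. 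You correctly flag that the justification of the interchange of integrals (Fubini in the strong sense, and reordering the DOI with the contour integral) is the delicate technical step; in the paper this is precisely where Theorem 4.1(iii) of Birman--Solomyak is cited, and you would need the same input to make the proposal fully rigorous.

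One small point to make explicit: when you reduce $[D^\gamma(s),P(s)]$ to its off-diagonal blocks, you implicitly use that the diagonal blocks $P D^\gamma(s) P$ and $(\idtyty-P)D^\gamma(s)(\idtyty-P)$ commute with $P$; this is automatic, but you should also note that the DOI on the off-diagonal blocks only sees $|\lambda-\mu|\geq\gamma'>\gamma$ by Assumption~\ref{sf_ass:fv_uni_gap}, which is exactly where the strict inequality $\gamma<\gamma'$ in the theorem's hypothesis is used.
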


In the context of a quantum lattice system, the novel feature of the Hastings generator is that it generates a \emph{quasi-local} family of automorphisms. This is the second main result of this section. Recall that given a quantum lattice system $(\Gamma, d)$ and $\cA_\Gamma$, the local Hamiltonians for a strongly continuous interaction $\Phi:\cP_0(\Gamma)\times[0,1] \to \cA_\Gamma^{\rm loc}$ are given by
\be\label{td_loc_hams}
H_\Lambda(s) = \sum_{X\subseteq \Lambda}\Phi(X,s), \quad \text{for all} \quad \Lambda \in \cP_0(\Gamma).
\ee
Note that if $\Phi(X,s)$ is strongly $C^1$ for all $X\subset \Lambda$ in the sense of Definition~\ref{def:stg_diff}, then the Hamiltonian $H_\Lambda(s)$ is also strongly $C^1$. In this case, for every $\Lambda \in \cP_0(\Gamma)$ we may define the finite volume Hastings generator by
\be \label{def_ds_qls}
D_\Lambda^\gamma(s) = \int_{\mathbb{R}} \tau_t^{(s)} (H_\Lambda'(s)) \, W_{\gamma}(t) \, dt,
\ee
where for each $s\in[0,1]$, $\tau_t^{(s)}$ is the Heisenberg dynamics associated to $H_\Lambda(s)$. We may now state the quasi-locality result.
\begin{thm}\label{thm:sf_ql_all}
	Consider a quantum lattice system comprised of $\nu$-regular metric space $(\Gamma, d)$ and quasi-local algebra $\cA_{\Gamma}$. Suppose that $\Phi	\in \cB_F([0,1])$ for an $F$-function of the form 
	\be
	F(r) = e^{-ar^\theta}(1+r)^{-p} \quad \text{for some} \quad a>0, \; 0<\theta\leq 1 \; \text{and} \; p>\nu+1.
	\ee 
	If $\Phi(X,s)$ is strongly $C^1$ for all $X\in \cP_0(\Gamma)$ and $\Phi_1' \in \cB_F([0,1])$ where $\Phi_1'(X,s) = |X|\Phi'(X,s)$, then for any $\gamma >0$ there is an $F$-function, $F^{(\gamma)}$, such that for any $\Lambda \in \cP_0(\Gamma)$
	\be\label{HG_Int}
	D_\Lambda^\gamma(s) = \sum_{X\subseteq \Lambda}\Psi_\Lambda(X,s)
	\ee
	for a strongly-continuous interaction $\Psi_\Lambda\in \cB_{F^{(\gamma)}}([0,1])$. Moreover, there is an interaction $\Psi\in \cB_{F^{(\gamma)}}([0,1])$ such that  $\Psi_{\Lambda_n}$ converges locally in $F$-norm to $\Psi$ with respect to $F^{(\gamma)}$ for any sequence of increasing and absorbing finite volumes $\Lambda_n \uparrow \Gamma$.
\end{thm}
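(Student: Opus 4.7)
The proof plan is to realize the Hastings generator as a quasi-local transformation of the derivative interaction $\Phi'$, then invoke the machinery of Section~\ref{sec:quasi_loc_ints}. Specifically, for each $s \in [0,1]$, define a family of linear maps $\mathcal{K}_s : \mathcal{A}_\Gamma^{\rm loc} \to \mathcal{A}_\Gamma$ by $\mathcal{K}_s(A) = \int_\bR \tau_t^{(s)}(A)\, W_\gamma(t)\, dt$, where $\tau_t^{(s)}$ denotes the infinite-volume dynamics associated to $\Phi(\cdot,s)$ (which exists by Theorem~\ref{thm:existbd}). The finite-volume approximations $\mathcal{K}_s^{(n)}(A) = \int_\bR \tau_t^{\Lambda_n,(s)}(A)\, W_\gamma(t)\, dt$ satisfy the key identity
\[
D_{\Lambda_n}^\gamma(s) = \mathcal{K}_s^{(n)}(H_{\Lambda_n}'(s)) = \sum_{X \subseteq \Lambda_n} \mathcal{K}_s^{(n)}(\Phi'(X,s)),
\]
so we are precisely in the framework of Section~\ref{sec:quasi_loc_ints} applied to the interaction $\Phi'$.

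Next I would verify that $\{\mathcal{K}_s\}_{s\in[0,1]}$ satisfies Assumption~\ref{ass:td_qlms}. Strong continuity in $s$ and uniform local normality follow from the argument of Example~\ref{ex_wio_sec_4}, using that $\Phi, \Phi' \in \mathcal{B}_F([0,1])$ and that $W_\gamma \in L^1(\bR)$. Local boundedness holds trivially with exponent $p=0$ and $B(s) = \|W_\gamma\|_{L^1}$. For the quasi-local decay (Assumption~\ref{ass:td_qlms}(iii)), I would follow Example~\ref{ex:wi_qlm}: applying the Lieb-Robinson bound from Theorem~\ref{thm:lrb2} and splitting the integral at $|t| = d(X,Y)/2v$, where $v$ is the Lieb-Robinson velocity determined by the weighted $F$-function $F_a$ built from $F$, yields a decay function of the form
\[
G(r) \leq C_1 \|W_\gamma\|_{L^1} e^{-ar^\theta/2^{1/\theta}} + \int_{|t| > r/(2v)} |W_\gamma(t)|\, dt,
\]
uniformly in $n$ and $s$. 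Uniform local convergence of $\mathcal{K}_s^{(n)}$ to $\mathcal{K}_s$ (Assumption~\ref{ass:td_qlms}(iv)) follows from Corollary~\ref{cor:continuity}(iii) together with integrability of $W_\gamma$, splitting the integral over $|t| \leq T$ and $|t| > T$ and choosing $T$ appropriately.

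With these verifications in hand, both parts of Theorem~\ref{thm:qli_conv} apply, taking the initial interaction to be $\Phi'$ and noting that $m = \max(p,q,r) = 1$, which is exactly the hypothesis $\Phi_1' \in \mathcal{B}_F([0,1])$. The finite-volume conclusion (existence of $\Psi_\Lambda \in \mathcal{B}_{F^{(\gamma)}}([0,1])$ with $D_\Lambda^\gamma(s) = \sum_{X\subseteq\Lambda}\Psi_\Lambda(X,s)$) follows from the construction~\eqref{fv_td_qli} and Corollary~\ref{cor:fv_ql_dyn_lrb}, while the infinite-volume local-$F$-norm convergence $\Psi_{\Lambda_n} \to \Psi$ is exactly Theorem~\ref{thm:qli_conv}(ii).

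The principal obstacle is producing an explicit $F$-function $F^{(\gamma)}$ on $(\Gamma,d)$ satisfying the bound~\eqref{good_F_dec}, namely $\max\{F(r/3), \sum_{n\geq \lfloor r/3\rfloor}(1+n)^{\nu+1}G^\alpha(n)\} \leq F^{(\gamma)}(r)$ for some $\alpha \in (0,1)$. This hinges on the tail behavior of $W_\gamma$: for the weight function $W_\gamma$ constructed in Section~\ref{sec:ex_weight} (chosen so that $|W_\gamma(t)|$ decays faster than any polynomial while remaining integrable, yet with Fourier transform suitably concentrated to yield the spectral-flow property), both terms in the bound on $G$ above have subexponential decay compatible with $F(r) = e^{-ar^\theta}(1+r)^{-p}$. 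One then chooses $F^{(\gamma)}(r)$ of the same polynomially-weighted stretched-exponential form, with a possibly smaller decay rate $\tilde a < a$ and the same exponent $\theta$, so that it dominates both the shifted $F(r/3)$ and the polynomially-weighted tail of $G^\alpha$. Verifying that $F^{(\gamma)}$ is genuinely an $F$-function, i.e.\ satisfies uniform integrability~\eqref{F:int} and the convolution condition~\eqref{F:conv} on the $\nu$-regular space $(\Gamma,d)$, is the final bookkeeping step and relies on the standard calculus of stretched-exponential $F$-functions developed in Appendix~\ref{app:sec_def_F}.
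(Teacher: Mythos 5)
Your proposal follows essentially the same route as the paper's proof: the Hastings generator is recast via \eqref{fv-sf-gen} as a quasi-local transformation of the interaction $\Phi'$, Proposition~\ref{prop:sf_wio_sat} is reproved by verifying the four parts of Assumption~\ref{ass:td_qlms} (with the quasi-locality coming from Lemma~\ref{lem:gen_G_ql_est}, which is the weighted-integral splitting you invoke via Example~\ref{ex:wi_qlm}), and then Theorem~\ref{thm:qli_conv} is applied with the initial interaction $\Phi'$, where $m=\max\{p,q,r\}=1$ matches the hypothesis $\Phi_1'\in\cB_F([0,1])$. That is exactly the paper's chain of reasoning.

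One imprecision worth flagging: you assert that $F^{(\gamma)}$ may be taken ``of the same polynomially-weighted stretched-exponential form,'' i.e.\ $e^{-\tilde a r^\theta}(1+r)^{-p}$ with a smaller $\tilde a$. This is not what the construction produces. The tail of $W_\gamma$ decays like $\exp\bigl[-\eta\, \gamma x/\ln(\gamma x)^2\bigr]$ (Corollary~\ref{cor:weight_dec}), so after optimizing the integral split the decay function $G^\epsilon$ from Lemma~\ref{lem:gen_G_ql_est} is of the form $\exp\bigl[-\eta\, f_{\gamma_\epsilon}(g(d))\bigr]$ times a prefactor, where $f_b(x)\sim bx/\ln(bx)^2$. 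Composed with $g(r)=ar^\theta$ this yields a decay rate $\sim \exp\bigl[-\eta'\, \tilde a\, r^\theta/\ln(r)^2\bigr]$, which carries a genuine $\ln^2$ correction in the exponent and is \emph{strictly slower} than any stretched exponential $e^{-\tilde a r^\theta}$. The paper's $\tilde F$ in \eqref{tilde-F-form} has exactly this $f_{\gamma/2v}\circ g$ structure. This does not break the argument---the function $r\mapsto r^\theta/\ln(r)^2$ (suitably cut off near $0$) is still non-negative, non-decreasing, and subadditive, so $F^{(\gamma)}$ remains a bona fide weighted $F$-function as shown in Section~\ref{sec:3_common_weights}---but it does mean the decay class is not preserved under the spectral-flow transformation, and a reader who took your phrasing literally would obtain a false impression about the rate.
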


We give some context for this result. Suppose that $\Phi\in\cB_{F}([0,1])$ is such that the local Hamiltonians $H_\Lambda(s)$ are strongly $C^1$. Recall that for any $\gamma >0$, the Hastings generator $D_\Lambda^\gamma(s)$, which is defined in terms of $H_\Lambda'(s)$ (see \eqref{def_ds_qls}), is strongly continuous and self-adjoint. The automorphism $\alpha_s^{\gamma, \Lambda}$ defined as in \eqref{def_alpha} can then be recognized as the Heisenberg dynamics associated to $D_\Lambda(s)$. If Theorem~\ref{thm:sf_ql_all} holds, then $D_\Lambda(s)$ is itself a local Hamiltonian associated to a strongly-continuous interaction $\Psi_\Lambda\in \cB_{F^{(\gamma)}}([0,1])$. Applying the Lieb-Robinson bound, i.e. Theorem~\ref{thm:lrb}, to $\alpha_s^{\gamma, \Lambda}$ shows that the spectral flow is quasi-local as claimed. In the proof of Theorem~\ref{thm:sf_ql_all}, we show that the norm $\|\Psi_\Lambda\|_{F^{(\gamma)}}$ is bounded from above by a constant independent of $\Lambda$, from which local $F$-norm convergence will follow. The interaction $\Psi$ then defines an infinite volume spectral flow automorphism $\alpha_s^\gamma: \cA_\Gamma \to \cA_\Gamma$ that is also quasi-local.

Note that we do not require Assumption~\ref{sf_ass:fv_uni_gap} for Theorem~\ref{thm:sf_ql_all}, and in particular, the quasi-locality result holds where the spectrum of $H_\Lambda(s)$ is or is not gapped. If, however, $H_\Lambda(s)$ satisfies Assumption~\ref{sf_ass:fv_uni_gap} with gap $\gamma_\Lambda' >0$, then the finite-volume automorphisms $\alpha_s^{\Lambda,\gamma}$ generated by $D_\Lambda^\gamma(s)$ for any $0<\gamma \leq\gamma_\Lambda'$ will both be quasi-local and satisfy (\ref{sf:followsp}). In applications to stability, one is interested in the situation that there is some sequence of finite volumes $(\Lambda_n)_{n\geq 1}$ for which both Theorem~\ref{thm:sf} and Theorem~\ref{thm:sf_ql_all} hold simultaneously and that the gaps  $\gamma_{\Lambda_n}'$ as in \eqref{pt_wise_gap} are uniformly bounded from below by a positive constant independent of $n$.

In what follows, we will typically work with a Hastings generator and spectral flow automorphism that depend on a fixed value of $\gamma$. As such, we will often suppress the dependence of $\gamma$ from our notation.

The remainder of the section is organized as follows. In Section~\ref{sec:ex_weight} we define the explicit weight function $W_\gamma$ used in our results and prove some basic decay estimates on this function. The reader can skip these details on first reading. Recall that the definition of the Hastings generator is given in terms of a specific weighted integral operator. In Section~\ref{sec:w_int_op} we define several general weighted integral operators in terms of appropriate $L^1$ functions and prove some useful properties. We use the results from this section to give the proof of Theorem~\ref{thm:sf} in Section~\ref{subsec:sp_flow}. We consider quantum lattice systems in Section~\ref{sec:ql_sf} where we show that, in this context, the weighted integral operators introduced in Section~\ref{sec:w_int_op} are quasi-local when defined using the weight functions from Section~\ref{sec:ex_weight}. We then use these results to prove Theorem~\ref{thm:sf_ql_all} (which is restated as Theorem~\ref{thm:sf_ints_conv}). We end the section by showing that there is a well-defined spectral flow automorphism in the thermodynamic limit when the conditions of Theorem~\ref{thm:sf_ql_all} hold.

%%%%%%
%
%  Weights
%
%
%%%%%%%%

\subsection{An explicit weight} \label{sec:ex_weight}

To write down the generator of the spectral flow dynamics, see (\ref{def_ds}) above, requires a 
{\it weight function} with certain properties. In Section \ref{sec:w_int_op} we will define a
class of transformations on the algebra of observables of the form
$$
I(A) = \int_\Rl w(t) \tau_t(A)\, dt
$$
where $w\in L^1(\Rl)$. In the following section we will make increasingly detailed assumptions
of $w$ in order to prove useful properties of the map $I$. At some point, it becomes more 
efficient to work with a specific family of functions $w$ for which the assumptions
hold. Having such a family of functions will make it possible to state explicit decay estimates that are useful for applications. As such, in this section we introduce this family of functions, for which interesting properties were already investigated in \cite{ingham:1934}, and prove some basic estimates; these will be particularly relevant in Section~\ref{sec:ql_sp_flow}. It will be clear that other functions can be used to derive similar results. The details of this section can be skipped on first reading. Its main importance is to demonstrate the existence of functions with all the desired properties.

Consider the sequence $(a_n)_{n\geq 1}$ defined by
\begin{equation} \label{seqsum}
a_n = \frac{a_1}{n \ln(n)^2} \quad \mbox{for }n\geq 2 \mbox{  and } \sum_{n=1}^{\infty} a_n = \frac{1}{2} \, .
\end{equation}
In terms of this sequence, define a function $w : \mathbb{R} \to \mathbb{R}$ by setting
\begin{equation} \label{def_w}
w(0) = c \quad \mbox{and} \quad w(t) =c \prod_{n=1}^{\infty} \left( \frac{\sin(a_n t)}{a_n t} \right)^2 \quad \mbox{if } t \neq 0 \,
\end{equation}
where $c>0$ is chosen so that
\begin{equation} \label{w-L1-norm}
\int_{\mathbb{R}} w(t) \, dt = 1 \, .
\end{equation} 
It follows from Lemma~\ref{lem:wint} below that $w\in L^1(\bR)$ and so this constant is well-defined.

It is clear that $w$ is non-negative and even. Moreover, if we denote by
$\hat{w}_{\gamma} : \mathbb{R} \to \mathbb{R}$ the unitary Fourier transform of $w$, i.e. for each $k \in \mathbb{R}$
\begin{equation}
\hat{w}_{\gamma}(k) = \frac{1}{\sqrt{2 \pi}} \int_{\mathbb{R}} e^{-ikt} w_{\gamma}(t) \, dt \, ,
\end{equation}
then it is easy to check, see e.g \cite{bachmann:2012}, that ${\rm supp}(\hat{w}) \subset [-1,1]$. 
The following lemma provides a useful estimate on $w$. 
%
%Lem
%
\begin{lem} \label{lem:wint} Let $a>0$ and $p \geq 0$ be an integer. For any $x >1$ with $\ln(x) \geq \max \left\{ 9, \sqrt{ \frac{p+1}{a}} \right\}$, one has that
\begin{equation} \label{calc2}
\int_x^{\infty} \left( \frac{t}{ \ln(t)^2} \right)^p  e^{- \frac{at}{\ln(t)^2} } \, dt \leq \frac{9(p+2)}{7a} \left( \frac{x}{ \ln(x)^2} \right)^{p+1}  e^{- \frac{ax}{\ln(x)^2} } .
\end{equation}
As a consequence, there is a number $\eta \in (2/7, 1)$ for which if $x \geq e^9$, then
\begin{equation} \label{wintbd}
\int_x^{\infty} w(t) \, dt \leq \frac{27}{14} c e^4 \left( \frac{x}{\ln(x)^2} \right)^2 e^{- \frac{ \eta x}{ \ln(x)^2}}.
\end{equation}
\end{lem}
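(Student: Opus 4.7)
The proof naturally splits into two parts: establishing the general integral bound \eqref{calc2}, and deducing the tail bound \eqref{wintbd} for $w$ from it.

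For \eqref{calc2}, I would argue by induction on $p$ using integration by parts, driven by the observation that $\phi(t) := at/\ln(t)^2$ has derivative
\[
\phi'(t) = \frac{a}{\ln(t)^2}\Bigl(1 - \frac{2}{\ln(t)}\Bigr),
\]
so that for $\ln(t) \geq 9$ we have $1/\phi'(t) \leq (9/(7a))\ln(t)^2$. Writing $(t/\ln(t)^2)^p e^{-\phi(t)} = -h_p(t)\,\frac{d}{dt}e^{-\phi(t)}$ with $h_p(t) := (t/\ln(t)^2)^p/\phi'(t)$, integration by parts over $[x,\infty)$ gives
\[
I_p(x) := \int_x^\infty \bigl(t/\ln(t)^2\bigr)^p e^{-\phi(t)}\,dt = h_p(x)e^{-\phi(x)} + \int_x^\infty h_p'(t)\,e^{-\phi(t)}\,dt,
\]
the boundary term at $\infty$ vanishing since $e^{-\phi(t)}$ decays faster than any polynomial in $t$. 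The lower bound on $\phi'$ gives $h_p(x)e^{-\phi(x)} \leq (9/(7a))\ln(x)^2\,(x/\ln(x)^2)^p e^{-\phi(x)}$, which under the hypothesis $\ln(x) \geq 9$ (so that $\ln(x)^4/x \leq 2$) converts the $\ln(x)^2$ into $2(x/\ln(x)^2)$, producing the desired exponent $p+1$. For the remainder, a direct computation shows $h_p'(t) \leq (Cp/a)(t/\ln(t)^2)^{p-1}$ for $\ln(t) \geq 9$, reducing the integral to $I_{p-1}(x)$; the induction hypothesis, combined with the condition $\ln(x)^2 \geq (p+1)/a$, then delivers the claimed coefficient $9(p+2)/(7a)$. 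The base case $p=0$ is a direct application of the same integration by parts, with no inductive input needed.

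For \eqref{wintbd}, the plan is to combine \eqref{calc2} (applied with $p=1$ and a suitable value of $a$) with a pointwise upper bound of the form
\[
w(t) \leq C_0 \,\bigl(t/\ln(t)^2\bigr)\, e^{-\eta t/\ln(t)^2}, \qquad t \geq e^9,
\]
for some $\eta \in (2/7, 1)$ and constant $C_0$ proportional to $c e^4$. Integrating this pointwise bound and invoking \eqref{calc2} with $p=1$ and $a=\eta$ produces the factor $(27/(7\eta))(x/\ln(x)^2)^2 e^{-\eta x/\ln(x)^2}$, from which the stated bound follows after tracking the constants.

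The pointwise estimate on $w$ is the main obstacle; it exploits the infinite product definition \eqref{def_w} together with the specific choice $a_n = a_1/(n\ln(n)^2)$ in the spirit of Ingham \cite{ingham:1934}. Taking logarithms, one writes $\log(w(t)/c) = 2\sum_{n\geq 1}\log|\operatorname{sinc}(a_n t)|$ and splits the sum at a cutoff $N(t)$ determined by $a_{N(t)} t \asymp 1$, i.e.\ $N(t)\ln(N(t))^2 \asymp a_1 t$. For $n \leq N(t)$ one uses $|\sin(x)/x| \leq 1/|x|$, contributing $-2\sum_{n\leq N(t)}\log(a_n t)$; evaluating this sum with the explicit form of $a_n$ produces a leading contribution of order $-\eta t/\ln(t)^2$, with the rate $\eta$ determined by the asymptotics of $\sum_{n\leq N} (\log n + 2\log\log n)$ and the relation between $N$ and $t$. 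For $n > N(t)$ one uses a quadratic bound $\log|\operatorname{sinc}(x)| \leq -c' x^2$ valid for $|x| \leq 1$, and the convergence of $\sum_n a_n^2$ ensures this tail contributes only a bounded constant (absorbed into $C_0$, explaining the factor $e^4$). The careful bookkeeping of these two contributions, and the verification that the resulting $\eta$ lies strictly between $2/7$ and $1$, constitutes the technical core of the argument.
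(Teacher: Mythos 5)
Your overall plan is sound, and the second half follows the same architecture as the paper: establish a pointwise bound of the form $w(t)\leq C_0\bigl(t/\ln(t)^2\bigr)e^{-\eta t/\ln(t)^2}$ and then feed it into~\eqref{calc2} with $p=1$ and $a=\eta$, which does yield the constant $\frac{27}{14}ce^4$ after tracking factors. For~\eqref{calc2} you take a genuinely different route: the paper substitutes $u=at/\ln(t)^2$, using $\frac{du}{dt}\geq \frac{7a^2}{9u}$ (valid once $\ln(t)^4\leq t$, i.e.\ $\ln(t)\geq 9$) to reduce the integral to $\frac{9}{7a^{p+2}}\int_{u(x)}^\infty u^{p+1}e^{-u}\,du$ and then applies the closed form of the incomplete Gamma function, $\int_k^\infty u^{p+1}e^{-u}\,du=(p+1)!\,e^{-k}\sum_{n=0}^{p+1}k^n/n!\leq(p+2)k^{p+1}e^{-k}$ for $k\geq p+1$. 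Your integration-by-parts induction achieves the same bound and avoids appealing to the Gamma identity, but two small points need repair. First, $h_p'(t)=\frac{p}{a}g(t)^{p-1}-g(t)^p\phi''(t)/\phi'(t)^2$, and the second term is \emph{positive} for $\ln(t)\geq 9$ (since $\phi''<0$ there) and of order $g(t)^{p-1}/a$; so the bound must read $h_p'(t)\leq C\frac{p+1}{a}g(t)^{p-1}$ rather than $\frac{Cp}{a}g(t)^{p-1}$, and the base case $p=0$ is \emph{not} without remainder --- you get $I_0(x)\leq h_0(x)e^{-\phi(x)}+\frac{C'}{ag(x)}I_0(x)$ which must be absorbed by showing $ag(x)\geq 1$. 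Second, you wrote $\ln(x)^4/x\leq 2$; the inequality that actually closes the constants is $\ln(x)^4/x\leq 1$ (true for $x\geq e^9$), since otherwise the boundary term already uses up the entire allowance $\frac{9(p+2)}{7a}$. Both issues are fixable using precisely the stated hypotheses $\ln(x)\geq 9$ and $\ln(x)\geq\sqrt{(p+1)/a}$, which give $ag(x)\geq p+1$. As for the pointwise bound on $w$: the paper simply truncates the infinite product at $N=\lfloor a_1t/\ln(t)^2\rfloor$ (each factor being $\leq 1$), uses $|\sin(a_nt)/(a_nt)|\leq (a_nt)^{-1}$, and applies Stirling, which already delivers $\eta=2a_1\in(2/7,1)$ with the $e^4$ coming from Stirling's prefactor; your quadratic estimate on the tail $n>N(t)$ is valid but extra work, and your attribution of the $e^4$ to that tail is not where it actually originates.
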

\begin{proof}
To see (\ref{calc2}), consider the change of variables: $u = at/ \ln(t)^2$. Clearly,
\begin{equation}
\frac{du}{dt} = \left( 1- \frac{2}{\ln(t)} \right) \frac{a}{\ln(t)^2} \, .
\end{equation} 
It will be convenient to take $x$ large enough so that $\ln(x)^4 \leq x$. As one readily checks, this is
the case if $x \geq e^9$; however, we note that this lower bound is not optimal. In any case, using this one also has that
\begin{equation}
\frac{du}{dt} \geq \frac{7a^2}{9u} \quad \mbox{if } t \geq e^9 \, .
\end{equation}
Consequently,
\begin{equation} \label{intest}
\int_x^{\infty} \left( \frac{t}{ \ln(t)^2} \right)^p e^{- \frac{at}{\ln(t)^2}} \, dt \leq \frac{9}{7a^{p+2}} \int_{u(x)}^{\infty} u^{p+1} e^{-u} \, du .
\end{equation}
For integers $p \geq 0$, the above integral may be bounded using
\begin{equation} \label{polyexp}
\int_k^{\infty} u^{p+1} e^{-u} \, du =  (p+1)! e^{-k} \sum_{n=0}^{p+1} \frac{k^n}{n!} \leq (p+2) k^{p+1}e^{-k}
\end{equation}
where the final inequality is valid whenever $k \geq p+1$. With the further constraint that $\ln(x) \geq \sqrt{\frac{p+1}{a}}$, the bound
\begin{equation}
p+1 \leq a \ln(x)^2 \leq u(x) 
\end{equation}
follows, using again that $\ln(x)^4 \leq x$. Now (\ref{calc2}) follows from (\ref{intest}) and (\ref{polyexp}). 

We now estimate $w$ to establish \eqref{wintbd}. Note that for any $N \geq 1$ and $t \neq 0$,
\begin{equation}
w(t) \leq c \prod_{n=1}^N \left( \frac{\sin(a_nt)}{a_n t} \right)^2 \leq \frac{c}{(a_1 t)^{2N}} \prod_{n=2}^N n \ln(n)^2 \leq c \left( \frac{\ln(N)^2}{a_1 t} \right)^{2N} (N!)^2.
\end{equation}
Using Stirling's formula, i.e. $N! \leq e N^{N + \frac{1}{2}}e^{-N}$, and choosing $N = \lfloor \frac{a_1 t}{ \ln(t)^2} \rfloor$, we find that
\begin{eqnarray}
w(t) & \leq & c e^2 \left( N \cdot \frac{\ln(N)^2}{a_1 t} \right)^{2N} N e^{-2N} \nonumber \\
& \leq & c e^2 \left(  \frac{1}{\ln(t)^2}  \cdot \ln \left( \frac{a_1 t}{ \ln(t)^2} \right)^2 \right)^{2N}  \frac{a_1 t}{ \ln(t)^2} \cdot e^{-2( \frac{a_1 t}{ \ln(t)^2} -1)} \nonumber \\
& \leq & ce^4  \frac{a_1 t}{ \ln(t)^2} \cdot e^{- \frac{ 2 a_1 t}{ \ln(t)^2}}
\end{eqnarray}
where, for the final inequality above, we used that $t$ is large enough so that both $1 \leq \ln(t)^2$ and $\ln(a_1t)^2 \leq \ln(t)^2$ hold.
Since (\ref{seqsum}) implies that $a_1 < 1/2$, both inequalities are true if $t \geq e$. As 
\begin{equation}
1 + \sum_{n=2}^{\infty} \frac{1}{n \ln(n)^2} \leq 1 + \frac{1}{2 \ln(2)^2} + \int_2^{\infty} \frac{1}{t \ln(t)^2} \, dt \leq 3.5 \, ,
\end{equation}
it is clear that $a_1 > 1/7$. Now, setting $\eta = 2 a_1$, we have found that
\begin{equation}
w(t) \leq \frac{c \eta e^4}{2} \cdot \frac{t}{ \ln(t)^2} \cdot e^{- \frac{\eta t}{ \ln(t)^2}} \quad \mbox{for all } t \geq e\, .
\end{equation}
Now (\ref{wintbd}) follows from (\ref{calc2}).
\end{proof}

For our estimates on the spectral flow, it will be convenient to rescale this weight function $w$. For any $\gamma >0$, define $w_{\gamma} : \mathbb{R} \to \mathbb{R}$
by setting 
\begin{equation} \label{def_wg}
w_{\gamma}(t) = \gamma w( \gamma t) \, .
\end{equation}
It is clear that each such $w_{\gamma}$ is non-negative, even, $L^1$-normalized, and moreover, 
\begin{equation} \label{wsupp}
{\rm supp}( \hat{w}_{\gamma}) \subset [ - \gamma, \gamma] \, .
\end{equation}
The function $W_{\gamma} : \mathbb{R} \to \mathbb{R}$ given by
\begin{equation} \label{def:Wgam1}
W_{\gamma}(x) = - \int_{- \infty}^x w_{\gamma}(t) \, dt + H(x) \quad \mbox{for } x \in \mathbb{R},
\end{equation}
where $H(x)$ is the Heavyside function (for clarity, we take $H(0) =1$) will also play a key role below. This may be re-written as 
\begin{equation} \label{def:Wgam2}
W_{\gamma}(0) = \frac{1}{2} \quad \mbox{and} \quad W_{\gamma}(x) = {\rm sgn}(x) \cdot \int_{|x|}^{\infty} w_{\gamma}(t) \, dt \quad \mbox{for } x \neq 0.
\end{equation}
Thus $W_{\gamma}$ is odd, and since $w_{\gamma}$ is normalized and even, one has that $\| W_{\gamma} \|_{\infty} \leq 1/2$.
In fact, a short calculation shows that
\begin{equation}
\| W_{\gamma} \|_1 = \int_{\mathbb{R}} |W_{\gamma}(t)| \, dt = \frac{2}{\gamma} \int_0^{\infty} t w(t) \, dt \, .
\end{equation}
It is clear from (\ref{def:Wgam1}) that the distributional derivative of $W_{\gamma}$ is
\begin{equation} \label{W-g-der}
\frac{d}{dx} W_{\gamma} (x) = - w_{\gamma}(x) + \delta_0(x) 
\end{equation}
and thus its (unitary) Fourier transform satisfies
\begin{equation} \label{FTW}
\hat{W}_{\gamma}(0) = 0 \quad \mbox{and} \quad (i k) \hat{W}_{\gamma}(k) =  - \hat{w}_{\gamma}(k) + \frac{1}{\sqrt{2 \pi}}  \quad \mbox{for all } k \neq 0.
\end{equation}
In particular, we have 
\be  \label{FTWbis}
\hat W_\gamma(k)= \frac{-i}{\sqrt{2\pi}k}, \mbox{ for } k\notin(-\gamma,\gamma).
\ee
As we will see in subsequent sections, a ``well-chosen" weight function $W_\gamma$ for defining the spectral flow as described following \eqref{def_ds} is one which satisfies \eqref{FTWbis} and has a decay estimate that is at least stretched exponential, similar to the next result.

\begin{cor} \label{cor:weight_dec} Let $\gamma >0$. If $ \gamma x \geq e^9$, then
\begin{equation} \label{w_g_dec}
\int_x^{\infty} w_{\gamma}(t) \, dt \leq \frac{27}{14} c e^4 \left( \frac{ \gamma x}{\ln( \gamma x)^2} \right)^2 e^{- \eta \frac{\gamma x}{ \ln( \gamma x)^2}}
\end{equation} 
with $c$ as in (\ref{def_w}), see also (\ref{w-L1-norm}), and $\eta \in (2/7,1)$ as in Lemma~\ref{lem:wint}.
Moreover, if $ \gamma x \geq e^9$, then
\begin{equation} \label{W_g_dec}
\int_x^{\infty} W_{\gamma}(t) \, dt \leq \frac{486}{49 \gamma \eta} c e^4 \left( \frac{ \gamma x}{\ln( \gamma x)^2} \right)^3 e^{- \eta \frac{ \gamma x}{  \ln( \gamma x)^2}}
\end{equation} 
again with $c$ and $\eta$ as above.
\end{cor}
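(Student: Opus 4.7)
The plan is to reduce both estimates to Lemma~\ref{lem:wint} via the rescaling identity $w_\gamma(t)=\gamma w(\gamma t)$ and a change of variables, and then to invoke the polynomial-exponential integral bound \eqref{calc2} with an appropriate choice of $p$.

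For the first bound \eqref{w_g_dec}: I would first change variables $u=\gamma t$ in $\int_x^\infty w_\gamma(t)\,dt$ to obtain $\int_{\gamma x}^\infty w(u)\,du$. Since the hypothesis $\gamma x\geq e^9$ means that $\gamma x$ lies in the range where \eqref{wintbd} applies, substituting $\gamma x$ for $x$ in \eqref{wintbd} yields \eqref{w_g_dec} directly. No further work is needed here.

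For the second bound \eqref{W_g_dec}: from \eqref{def:Wgam2} the function $W_\gamma$ is non-negative on $(0,\infty)$ with $W_\gamma(t)=\int_t^\infty w_\gamma(s)\,ds$ for $t>0$. Applying the bound just proved to $W_\gamma(t)$ (valid whenever $\gamma t\geq e^9$, which is the case for all $t\geq x\geq e^9/\gamma$), I get
\begin{equation*}
\int_x^\infty W_\gamma(t)\,dt \;\leq\; \frac{27}{14}c\,e^{4}\int_x^\infty\!\left(\frac{\gamma t}{\ln(\gamma t)^{2}}\right)^{2} e^{-\eta\gamma t/\ln(\gamma t)^{2}}\,dt.
\end{equation*}
The substitution $u=\gamma t$ converts this into $\gamma^{-1}$ times an integral of the form appearing in \eqref{calc2} with $p=2$. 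The hypothesis $\gamma x\geq e^9$ guarantees $\ln(\gamma x)\geq 9\geq \sqrt{(p+1)/\eta}=\sqrt{3/\eta}$, since $\eta>2/7$, so \eqref{calc2} is applicable and produces a prefactor $9\cdot(p+2)/(7\eta)=36/(7\eta)$ together with $\bigl(\gamma x/\ln(\gamma x)^2\bigr)^{3}e^{-\eta\gamma x/\ln(\gamma x)^2}$. Multiplying out the constants gives $\tfrac{27}{14}\cdot\tfrac{36}{7\eta\gamma}=\tfrac{486}{49\,\eta\gamma}$, matching \eqref{W_g_dec} exactly.

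There is no real obstacle here; the corollary is essentially a bookkeeping exercise built from Lemma~\ref{lem:wint}. The only point requiring a moment of care is verifying that the hypothesis $\ln(u)\geq\max\{9,\sqrt{(p+1)/\eta}\,\}$ in \eqref{calc2} is met at the endpoint $u=\gamma x=e^9$ when $p=2$, which boils down to the estimate $\eta>2/7$ already extracted in the proof of Lemma~\ref{lem:wint}.
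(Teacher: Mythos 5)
Your proof is correct and is exactly the intended argument: the paper states the corollary without proof immediately after Lemma~\ref{lem:wint}, and the natural route — rescaling $u=\gamma t$ to apply \eqref{wintbd} for the first bound, then expressing $W_\gamma(t)=\int_t^\infty w_\gamma$ for $t>0$ via \eqref{def:Wgam2}, applying the first bound inside the integral, and finishing with \eqref{calc2} at $p=2$, $a=\eta$ — is precisely what you did, with all constants ($9\cdot 4/(7\eta)=36/(7\eta)$, $\frac{27}{14}\cdot\frac{36}{7\eta\gamma}=\frac{486}{49\eta\gamma}$) and the hypothesis check ($\eta>2/7$ gives $\sqrt{3/\eta}<\sqrt{21/2}<9\leq\ln(\gamma x)$) worked out correctly.
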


%%%%%%%
%
% Weighted Integrals
%
%%%%%%

\subsection{On weighted integrals of dynamics} \label{sec:w_int_op}

In this section, we briefly discuss some general facts about weighted integrals of
a dynamics. Such operators arise as the generator of the spectral flow, and in this
case, a number of their basic properties are relevant.

\subsubsection{Some generalities} 

Let $H$ be a densely defined self-adjoint operator on a Hilbert space $\mathcal{H}$.
Denote by $\tau_t$ the associated Heisenberg dynamics, i.e. the one parameter
family of automorphisms of $\mathcal{B}( \mathcal{H})$ given by
\begin{equation} \label{H_dyn}
\tau_t(A) = e^{itH} A e^{-itH} \quad \mbox{for any } A \in \mathcal{B}( \mathcal{H}) \mbox{ and all } t \in \mathbb{R} \, .
\end{equation}
For any $w \in L^1( \mathbb{R})$, a bounded mapping $I: \mathcal{B}( \mathcal{H}) \to \mathcal{B}(\mathcal{H})$
is defined by setting
\begin{equation} \label{def_I}
I(A) = \int_{\mathbb{R}} \tau_t(A) \, w(t) \, dt \quad \mbox{for any } A \in \mathcal{B}(\mathcal{H}) \, . 
\end{equation}
In fact, Stone's theorem guarantees that this integral is well-defined in both the weak and strong sense. We refer to the operator $I$ above as the integral of the dynamics $\tau_t$ weighted by $w$, or more briefly, as
a weighted integral operator.

Our applications will mainly concern families of these weighted integral operators.
In fact, suppose $H(s) = H + \Phi(s)$ is as described in \eqref{sf_fv_ham} and for each $0 \leq s \leq 1$, consider $I_s: \mathcal{B}( \mathcal{H}) \to \mathcal{B}(\mathcal{H})$ with
\begin{equation}
I_s(A) = \int_{\mathbb{R}} \tau_t^{(s)}(A) w(t) \, dt 
\end{equation}
here $\tau_t^{(s)}$ is the dynamics corresponding to $H(s)$, see (\ref{sf_fv_ham}) and (\ref{sf_ham_dyn}), and $w \in L^1( \mathbb{R})$ is real-valued.
The following lemma is a useful observation.

\begin{lem} \label{lem:wio_sc} Let $H$ be a densely defined self-adjoint operator on a Hilbert space $\cH$ and, for $s\in [0,1]$, let $\Phi(s)=\Phi(s)^*\in\mathcal{B}( \mathcal{H})$ 
be continuous in $s$ for the strong operator topology. Suppose $w \in L^1( \mathbb{R})$ is real-valued, and $A:[0,1] \to \mathcal{B}( \mathcal{H})$
is pointwise self-adjoint and continuous in the strong operator topology. Then, the mapping $D : [0,1] \to \mathcal{B}( \mathcal{H})$ given by
\begin{equation}
D(s) = I_s(A(s)) = \int_{\mathbb{R}} \tau_t^{(s)}(A(s)) \, w(t) \, dt 
\end{equation}
is pointwise self-adjoint and continuous in the strong operator topology.
\end{lem}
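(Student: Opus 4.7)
The plan is to handle the two conclusions separately: pointwise self-adjointness follows quickly from the algebraic structure, while strong continuity in $s$ will be reduced to pointwise-in-$t$ convergence of the integrand together with a uniform $L^1$-dominating function, allowing an application of dominated convergence.

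First I would check self-adjointness. For each fixed $s$, the unitary $e^{-itH(s)}$ from Proposition~\ref{prop:tduprop} gives an automorphism $\tau_t^{(s)}$ of $\cB(\cH)$, so $\tau_t^{(s)}(A(s))^* = \tau_t^{(s)}(A(s)^*) = \tau_t^{(s)}(A(s))$. Since $w$ is real-valued, testing against vectors $\phi,\eta\in\cH$ gives
\[
\overline{\langle\eta,D(s)\phi\rangle}
= \int_{\bR}\overline{\langle\eta,\tau_t^{(s)}(A(s))\phi\rangle}\,w(t)\,dt
= \int_{\bR}\langle\phi,\tau_t^{(s)}(A(s))\eta\rangle\,w(t)\,dt
= \langle\phi,D(s)\eta\rangle,
\]
so $D(s)^*=D(s)$.

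Next I would establish strong continuity at an arbitrary $s_0\in[0,1]$. Fix $\psi\in\cH$ and write
\[
(D(s)-D(s_0))\psi=\int_{\bR}\bigl(\tau_t^{(s)}(A(s))-\tau_t^{(s_0)}(A(s_0))\bigr)\psi\,w(t)\,dt.
\]
Choose a compact neighborhood $J\subset[0,1]$ of $s_0$; by the uniform boundedness principle the strong continuity of $A$ yields $M:=\sup_{s\in J}\|A(s)\|<\infty$, so the integrand is dominated by $2M\|\psi\|\,|w(t)|\in L^{1}(\bR)$. It therefore suffices, by dominated convergence, to show that for each fixed $t\in\bR$,
\[
\bigl(\tau_t^{(s)}(A(s))-\tau_t^{(s_0)}(A(s_0))\bigr)\psi\longrightarrow 0 \quad\text{as }s\to s_0.
\]

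For the pointwise-in-$t$ convergence I would split
\[
\tau_t^{(s)}(A(s))\psi-\tau_t^{(s_0)}(A(s_0))\psi
=\tau_t^{(s)}\bigl(A(s)-A(s_0)\bigr)\psi+\bigl(\tau_t^{(s)}(A(s_0))-\tau_t^{(s_0)}(A(s_0))\bigr)\psi.
\]
The second term tends to zero because, by Proposition~\ref{properties_of_expiths}(ii), for fixed $A(s_0)\in\cB(\cH)$ the map $s\mapsto\tau_t^{(s)}(A(s_0))$ is strongly continuous (even differentiable) in $s$. The first term is the subtle one and is handled via Proposition~\ref{properties_of_expiths}(iii): applied to the family $\{\tau_t^{(s)}\mid s\in J\}$, it asserts that on the bounded set $\{A(s)-A(s_0)\mid s\in J\}\subset\cB(\cH)$ the family of maps is equicontinuous in the strong operator topology uniformly for $s\in J$. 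Since $A(s)-A(s_0)\to 0$ strongly and $\|A(s)-A(s_0)\|\le 2M$, we get $\sup_{s\in J}\|\tau_t^{(s)}(A(s)-A(s_0))\psi\|\to 0$, which in particular gives the required vanishing of the first term.

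The main obstacle here is precisely this last step: one must pass to a limit in which both the dynamics and the observable vary simultaneously, so applying strong continuity in only one of the two arguments is not enough. Invoking the uniform-in-$s$ continuity part of Proposition~\ref{properties_of_expiths}(iii) is what closes this gap; once this pointwise convergence is in hand, dominated convergence delivers $\|(D(s)-D(s_0))\psi\|\to 0$ and completes the proof.
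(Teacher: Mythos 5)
Your overall strategy (self-adjointness via reality of $w$; strong continuity via dominated convergence after fixing a compact neighborhood $J$ of $s_0$ and using local boundedness for a dominating function) matches the paper exactly, and your decomposition of the integrand into $\tau_t^{(s)}(A(s)-A(s_0))\psi$ plus $(\tau_t^{(s)}-\tau_t^{(s_0)})(A(s_0))\psi$ is a perfectly natural way to handle the simultaneous variation of the dynamics and the observable. The gap is in the tools you cite to control the two terms: both parts (ii) and (iii) of Proposition~\ref{properties_of_expiths} are proved under the standing assumptions of Section~\ref{sec:duhamel}, which require that $\lambda\mapsto\Phi_\lambda(t)$ be \emph{strongly differentiable} with jointly strongly continuous derivative. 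Lemma~\ref{lem:wio_sc} only assumes that $s\mapsto\Phi(s)$ is strongly \emph{continuous}, so neither of those statements applies here, and invoking them introduces hypotheses the lemma does not grant you.

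The repair is elementary and is what the paper does. Since $\tau_t^{(s)}(A(s))=e^{itH(s)}A(s)e^{-itH(s)}$ is a product of operator-valued maps of $s$, its strong continuity at $s_0$ (for fixed $t$) reduces, by the same telescoping estimate used in Proposition~\ref{prop:st_cont_prods}(i), to the strong continuity of $s\mapsto e^{\pm itH(s)}$ and of $s\mapsto A(s)$, together with the local bound $\sup_{s\in J}\|A(s)\|<\infty$. The strong continuity of $s\mapsto e^{itH(s)}$ is then obtained directly by a Duhamel-type bound: for $\phi\in\cD$,
\[
\frac{d}{dt}\,e^{itH(s)}e^{-itH(s_0)}\phi=ie^{itH(s)}\bigl(\Phi(s)-\Phi(s_0)\bigr)e^{-itH(s_0)}\phi,
\]
giving
\[
\bigl\|\bigl(e^{-itH(s)}-e^{-itH(s_0)}\bigr)\psi\bigr\|\le\int_0^{t}\bigl\|\bigl(\Phi(s)-\Phi(s_0)\bigr)e^{-iuH(s_0)}\psi\bigr\|\,du,
\]
which tends to $0$ as $s\to s_0$ by dominated convergence using only the strong continuity of $\Phi$. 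Equivalently, you can keep your two-term split, handle the first term via $\|\tau_t^{(s)}(A(s)-A(s_0))\psi\|=\|(A(s)-A(s_0))e^{-itH(s)}\psi\|$ and the Duhamel bound above, and handle the second term by the same Duhamel bound applied to the fixed observable $A(s_0)$; no appeal to Proposition~\ref{properties_of_expiths} is needed, and that is the point — the lemma is meant to be free of differentiability assumptions on $\Phi$.
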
 

\begin{proof}
Self-adjointness of $D(s)$, which uses that $w$ is real-valued, is clear. Set $A(s,t) \in \mathcal{B}( \mathcal{H})$ by 
\begin{equation} \label{def_ast}
A(s,t) = \tau_t^{(s)}(A(s)) = e^{itH(s)} A(s) e^{-itH(s)} \quad \mbox{for any } 0 \leq s \leq 1 \mbox{ and } t \in \mathbb{R} \, .
\end{equation}
With $s_0 \in [0,1]$ fixed, for any $0 \leq s \leq 1$, we have that
\begin{equation} \label{stbdD2}
\left\| \left( D(s) - D(s_0) \right) \psi \right\| \leq \int_{\mathbb{R}} \left\| \left( A(s,t) - A(s_0,t) \right) \psi \right\| \, | w(t)| \, dt \quad \mbox{for any } \psi \in \mathcal{H} \, .
\end{equation}
Stone's theorem guarantees that for each $0 \leq s \leq 1$, the mapping $A(s, \cdot) : \mathbb{R} \to \mathcal{B}( \mathcal{H})$ is continuous
in the strong operator topology, and so the integrand above is clearly measurable. 
We now claim that for each $t \in \mathbb{R}$, $A(\cdot, t) : [0,1] \to \mathcal{B}( \mathcal{H})$ is also continuous in the
strong operator topology. Given this, the claimed continuity of $D$ will follow from 
an application of dominated convergence. Here we are using that strong continuity of $A$ implies $\sup_{0 \leq s \leq 1} \| A (s) \| <\infty$.  

Due to the form of $A(s,t)$, see (\ref{def_ast}), we need only show that $s \mapsto e^{i t H(s)}$ is strongly
continuous for each fixed $t \in \mathbb{R}$. To see this, note that for any $\phi \in \mathcal{D}$, the common domain of all $H(s)$,
\begin{equation}
\frac{d}{dt} e^{i t H(s)} e^{-itH(s_0)} \phi = i e^{it H(s)} \left( \Phi(s) - \Phi(s_0) \right) e^{-it H(s_0)} \phi 
\end{equation}
{f}rom which the well-known Duhamel's formula is proven. As a consequence,
\begin{equation}
\left\| \left( e^{-it H(s)} - e^{-itH(s_0)} \right) \psi \right\| \leq \int_0^t \left\| \left( \Phi(s) - \Phi(s_0) \right) e^{- i u H(s_0)} \psi \right\| \, du 
\end{equation}
is valid for all $\psi \in \mathcal{H}$ and $t \geq 0$ (a similar bound holds for $t<0$). Dominated convergence applied here,
using the continuity assumption on $\Phi$, shows that $s \mapsto e^{i t H(s)}$ is
continuous in the strong operator topology for each fixed $t \in \mathbb{R}$. The proof is now completed as
described above.
\end{proof}

%%%
%
%  Two particular wios
%
%%%%%%%%

\subsubsection{Two particular weighted integrals} \label{sec:tp_wios}
For the applications that follow, two particular weighted integral operators play a key role.
We introduce a notation for them here and discuss some basic properties.

Generally, the set-up is as before. Let $H$ be a densely defined self-adjoint operator on a Hilbert
space $\mathcal{H}$ and denote by $\tau_t$ the corresponding dynamics, see e.g. (\ref{H_dyn}). 

For any fixed $\gamma >0$,  let $w_\gamma, \, W_\gamma\in L^1(\Rl)$ be any real-valued functions so that 
\eqref{wsupp}, \eq{FTW}, and \eq{FTWbis} hold. Define two linear maps
$\mathcal{F}, \mathcal{G} : \mathcal{B}( \mathcal{H}) \to \mathcal{B}( \mathcal{H})$ by setting
\begin{equation} \label{def_F+G}
\mathcal{F}(A) = \int_{\mathbb{R}} \tau_t(A) \, w_{\gamma}(t) \, dt \quad \mbox{and} \quad \mathcal{G}(A) = \int_{\mathbb{R}} \tau_t(A) \, W_{\gamma}(t) \, dt. 
\end{equation}
As we will see, the properties of $\mathcal{F}$ and $\mathcal{G}$ depend crucially 
on the choice of $\gamma>0$.

In the remainder of this and the next subsection (subsection \ref{subsec:sp_flow}) we do not
require the more detailed properties of $w_\gamma$ and $W_\gamma$ that we have proved for the specific
functions constructed in Section~\ref{sec:ex_weight} (see (\ref{def_w}), (\ref{def_wg}), and (\ref{def:Wgam1})).
These properties will become important later when we analyze the quasi-locality properties of the spectral flow.
In particular, in the following lemma and the proof of Theorem \ref{thm:sf}, the specific functions
defined in Section \ref{sec:ex_weight} are not required.

\begin{lem} \label{lem:F_comm_proj} Let $H$ be a densely defined, self-adjoint operator on a Hilbert space $\mathcal{H}$. Let $\gamma>0$, $w_\gamma,W_\gamma\in L^1(\Rl)$ be real-valued and satisfy 
\eqref{wsupp}, \eq{FTW} and \eq{FTWbis}, and $\mathcal{F},\mathcal{G}:\mathcal{B}( \mathcal{H}) \to \mathcal{B}( \mathcal{H})$ be as defined in (\ref{def_F+G}). Suppose that the spectrum of $H$ can be decomposed into two non-empty, 
disjoint sets $\Sigma_1$ and $\Sigma_2$,
\begin{equation}
{\rm spec}(H) = \Sigma_1 \cup \Sigma_2
\end{equation}
with $\Sigma_1$ contained in some compact set and $d( \Sigma_1, \Sigma_2) \geq \gamma$.
Denote by $P$ the spectral projection associated to $H$ onto $\Sigma_1$. Then, for any $A \in \mathcal{B}( \mathcal{H})$
\begin{equation} \label{F_comm_0}
[ \mathcal{F}(A), P ] =0
\end{equation}
and
\begin{equation} \label{G_comm}
[ \mathcal{G}(A), P ] = i \int_{\Sigma_1\times \Sigma_2} \frac{1}{ \mu - \lambda} dE_{\lambda} A dE_{\mu}  +  i \int_{\Sigma_1\times \Sigma_2} \frac{1}{ \mu - \lambda} dE_{\mu} A dE_{\lambda}.
\end{equation}
Here, $E_{\lambda}$ denotes the spectral family associated to $H$.  
\end{lem}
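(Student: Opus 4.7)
The plan is to reduce both commutator identities to a single spectral-theoretic computation in which $P$ is moved past the weighted integral and the dynamics is diagonalized using the spectral resolution $E_\lambda$ of $H$. The starting observation is that $P$ commutes with $e^{\pm itH}$, so $[\tau_t(A),P]=\tau_t([A,P])$ for every $t\in\bR$. Integrating against $w_\gamma$ (respectively $W_\gamma$) yields
\[
[\mathcal{F}(A),P]=\mathcal{F}([A,P]),\qquad [\mathcal{G}(A),P]=\mathcal{G}([A,P]),
\]
so it suffices to understand how $\mathcal{F}$ and $\mathcal{G}$ act on the off-diagonal commutator $[A,P]=(\idty-P)AP-PA(\idty-P)$.

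Next I would use the spectral theorem to write, in the weak sense,
\[
\tau_t(A)=\int\!\!\int e^{it(\lambda-\mu)}\,dE_\lambda\,A\,dE_\mu,
\]
and substitute this into the defining integrals of $\mathcal{F}$ and $\mathcal{G}$. Working with matrix elements $\langle\phi,\cdot\,\psi\rangle$, the measure $\langle\phi,dE_\lambda\,A\,dE_\mu\,\psi\rangle$ has total variation at most $\|A\|\|\phi\|\|\psi\|$, so Fubini applies against $w_\gamma(t)\,dt$ or $W_\gamma(t)\,dt$ and the $t$-integration can be performed first. This produces the identities
\[
\mathcal{F}(A)=\sqrt{2\pi}\int\!\!\int \hat w_\gamma(\mu-\lambda)\,dE_\lambda\,A\,dE_\mu,\qquad
\mathcal{G}(A)=\sqrt{2\pi}\int\!\!\int \hat W_\gamma(\mu-\lambda)\,dE_\lambda\,A\,dE_\mu,
\]
again understood in the sesquilinear sense.

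Specializing to $[A,P]=(\idty-P)AP-PA(\idty-P)$, the spectral projectors restrict the integration regions to $\{(\lambda,\mu)\in\Sigma_2\times\Sigma_1\}$ and $\{(\lambda,\mu)\in\Sigma_1\times\Sigma_2\}$ respectively. In both cases $|\mu-\lambda|\ge\gamma$ by hypothesis, so (\ref{F_comm_0}) is immediate from $\mathrm{supp}(\hat w_\gamma)\subset[-\gamma,\gamma]$. For (\ref{G_comm}) I would substitute the explicit value $\hat W_\gamma(k)=-i/(\sqrt{2\pi}\,k)$ from \eqref{FTWbis}, obtaining the two off-diagonal contributions
\[
\mathcal{G}\bigl((\idty-P)AP\bigr)=i\int_{\Sigma_1\times\Sigma_2}\frac{1}{\mu-\lambda}\,dE_\mu\,A\,dE_\lambda,\quad
\mathcal{G}\bigl(PA(\idty-P)\bigr)=-i\int_{\Sigma_1\times\Sigma_2}\frac{1}{\mu-\lambda}\,dE_\lambda\,A\,dE_\mu,
\]
after a relabeling $\lambda\leftrightarrow\mu$ in the first piece to match the integration region $\Sigma_1\times\Sigma_2$. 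Subtracting yields exactly the right-hand side of (\ref{G_comm}).

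The main technical obstacle I anticipate is purely one of rigor: justifying the interchange of the $t$-integration with the two spectral integrations when $W_\gamma$ is merely $L^1$, and making sense of the singular-looking factor $(\mu-\lambda)^{-1}$. Both issues are resolved by testing against arbitrary $\phi,\psi\in\mathcal{H}$ and appealing to Fubini for the finite complex measure $d\langle\phi,E_\lambda A E_\mu\psi\rangle\otimes dt$, and by noting that $1/(\mu-\lambda)$ is bounded by $1/\gamma$ on the integration regions appearing in (\ref{G_comm}), so the resulting sesquilinear forms are bounded and define genuine operators in $\mathcal{B}(\mathcal{H})$. A minor point to record is that the support condition $d(\Sigma_1,\Sigma_2)\ge\gamma$ together with $\mathrm{supp}(\hat w_\gamma)\subset[-\gamma,\gamma]$ only guarantees that $\hat w_\gamma(\mu-\lambda)$ vanishes away from the endpoints $|\mu-\lambda|=\gamma$; however, by \eqref{FTW} we have $\hat w_\gamma(\pm\gamma)=0$ as a continuous function (since $\hat w_\gamma$ is continuous with compact support), so these boundary values contribute nothing to the spectral integral and the argument goes through as stated.
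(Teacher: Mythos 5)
Your argument follows the same route as the paper: express $\mathcal{F}$ and $\mathcal{G}$ as double operator integrals with symbols $\sqrt{2\pi}\,\hat{w}_\gamma(\mu-\lambda)$ and $\sqrt{2\pi}\,\hat{W}_\gamma(\mu-\lambda)$, restrict to the off-diagonal spectral blocks, and read off the conclusion from the support of $\hat{w}_\gamma$ and the explicit value of $\hat{W}_\gamma$ off $(-\gamma,\gamma)$. Your opening step $[\mathcal{F}(A),P]=\mathcal{F}([A,P])$, obtained from $P$ commuting with $e^{\pm itH}$, is a slightly cleaner packaging of the paper's direct computation of $P\mathcal{F}(A)(\idty-P)$, and your closing remark that continuity of $\hat{w}_\gamma$ forces $\hat{w}_\gamma(\pm\gamma)=0$ correctly handles the borderline case $|\mu-\lambda|=\gamma$.

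The one point that needs repair is the justification of the interchange of integrals. The claim that the bimeasure $(\Delta_1,\Delta_2)\mapsto\langle E(\Delta_1)\phi, A\,E(\Delta_2)\psi\rangle$ has total variation at most $\|A\|\,\|\phi\|\,\|\psi\|$ on $\mathbb{R}^2$ is not correct in the usual (Vitali) sense: for bounded but non-Hilbert--Schmidt $A$, $\sup\sum_{i,j}|\langle E(\Delta_i)\phi,AE(E_j)\psi\rangle|$ can be infinite, and in general this bimeasure does not extend to a countably additive measure on $\mathbb{R}^2$. What the Cauchy--Schwarz estimate actually controls by $\|A\|\,\|\phi\|\,\|\psi\|$ is the Fr\'echet variation, and that alone does not license a naive Fubini. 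The paper sidesteps this by invoking the Birman--Solomyak theory of double operator integrals, specifically Theorem 4.1 of \cite{birman:2003}: the symbol $\hat{w}_\gamma(\mu-\lambda)=\frac{1}{\sqrt{2\pi}}\int e^{-it(\mu-\lambda)}w_\gamma(t)\,dt$ (and likewise $\hat{W}_\gamma(\mu-\lambda)$) has an integral decomposition $\int a(\lambda,t)\,b(\mu,t)\,d\nu(t)$ with $\int|a|\,|b|\,d|\nu|\le\|w_\gamma\|_1$ and is therefore a Schur multiplier, which simultaneously gives existence of the double operator integral and its identification with the time integral defining $\mathcal{F}$ and $\mathcal{G}$. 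You need to appeal to that framework, or alternatively approximate $A$ by Hilbert--Schmidt operators and pass to the limit; the blanket ``Fubini applies'' as written is where the argument is not yet rigorous.
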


\begin{proof}
We first prove (\ref{F_comm_0}). In fact, we will show that each $\mathcal{F}(A)$ is diagonal with respect to $P$ in the sense that
\begin{equation} \label{cross_proj_0}
P \mathcal{F}(A) ( \idtyty - P) = ( \idtyty - P) \mathcal{F}(A) P=0 \quad \mbox{for any } A \in \mathcal{B}( \mathcal{H}) \, .
\end{equation}
Given this, one readily checks that 
\begin{equation}
[ \mathcal{F}(A), P] = (\mathcal{F}(A)P - P\mathcal{F}(A)P) - (P\mathcal{F}(A) - P\mathcal{F}(A)P)= 0
\end{equation}
as claimed.

We now calculate the left-hand-side of (\ref{cross_proj_0}). 
To do so, we will use results on double operator integrals, see e.g. \cite{birman:2003}.
In fact, using Theorem 4.1 in \cite{birman:2003}, one sees that
\begin{eqnarray}
P \mathcal{F}(A) ( \idtyty -P) & = & \int_{\mathbb{R}} P e^{itH} A e^{-itH} ( \idtyty -P) w_{\gamma}(t) \, dt \nonumber \\
& = & \int_{\mathbb{R}} \int_{\Sigma_1\times \Sigma_2} e^{it( \lambda - \mu)} w_{\gamma}(t) dE_{\lambda} A dE_{\mu} \, dt \nonumber \\
& = & \sqrt{2 \pi} \int_{\Sigma_1\times \Sigma_2}  \hat{w}_{\gamma}(\mu -\lambda) dE_{\lambda} A dE_{\mu} = 0 \, .
\end{eqnarray}
Here we have used $E_{\lambda}$ to denote the spectral family associated to $H$. Moreover, $w_\gamma\in L^1(\bR)$ is sufficient to guarantee the re-ordering of the integrals above; it is here that we apply Theorem~4.1~(iii) of \cite{birman:2003}. The final equality is due to the fact that the Fourier transform of $w_{\gamma}$ is supported in $[- \gamma, \gamma]$, see (\ref{wsupp}). The other relation in (\ref{cross_proj_0}) is proven similarly, and (\ref{F_comm_0}) follows.

Arguing as above, we find that
\begin{eqnarray}
[ \mathcal{G}(A), P]  & = & ( \idtyty - P) \mathcal{G}(A) P - P \mathcal{G}(A) ( \idtyty - P) \nonumber \\
& = &  \sqrt{2 \pi} \int_{\Sigma_1\times \Sigma_2}  \hat{W}_{\gamma}(\lambda -\mu) dE_{\mu} A dE_{\lambda}  -  
\sqrt{2 \pi} \int_{\Sigma_1\times \Sigma_2}  \hat{W}_{\gamma}(\mu -\lambda) dE_{\lambda} A dE_{\mu} 
\end{eqnarray}
The claim in (\ref{G_comm}) now follows from (\ref{FTWbis}).
\end{proof}

A useful observation for certain applications (see, e.g., \cite{bachmann:2018a,monaco:2019}) is that the map $\cG$ is a (left-) inverse 
of the Liouvillean $[H,\cdot]$ on the space of off-diagonal operators. 

\begin{prop}\label{prop:}\label{inverse_liouvillean}
Let $H$ be a densely defined, self-adjoint operator on a Hilbert space $\mathcal{H}$, and let $[H,\cdot]$ denote the generator
of the Heisenberg dynamics generated by $H$.  Let $\gamma>0$ and $\mathcal{F}$ and $\mathcal{G}$ as 
defined in (\ref{def_F+G}). Suppose that the spectrum of $H$ can be decomposed into two non-empty, disjoint sets $\Sigma_1$ and $\Sigma_2$, 
with $\Sigma_1$ compact and $d( \Sigma_1, \Sigma_2) \geq \gamma$. Let $P$ denote the spectral projection of $H$ onto $\Sigma_1$. Then,
for all $A\in \cB(\cH)$ such that $\cG(A) \in \dom [H,\cdot] $, we have 
\begin{equation}
i[H, \mathcal{G}(A)] = \mathcal{F}(A) - A.
\end{equation}
If, in addition $A$ is off-diagonal with respect to $P$, meaning $A \in  P\cB(\cH)(\idty -P) \oplus  (\idty -P)\cB(\cH)P$, we have $\cF(A)=0$ and
\begin{equation}
-i[H, \mathcal{G}(A)] = A
\end{equation} 
\end{prop}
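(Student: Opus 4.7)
The plan is to combine the double-operator-integral approach used in the proof of Lemma~\ref{lem:F_comm_proj} with the Fourier identity \eqref{FTW}, $(ik)\hat{W}_{\gamma}(k) = -\hat{w}_{\gamma}(k) + 1/\sqrt{2\pi}$, which is itself the Fourier transform of the distributional identity $W_\gamma' = -w_\gamma + \delta_0$ from \eqref{W-g-der}. This identity already suggests the result: in a spectral representation, the commutator $i[H,\cdot]$ acts as multiplication by $i(\lambda-\mu)$, and the Fourier identity converts exactly this multiplier applied to $\hat W_\gamma$ into $\hat w_\gamma$ minus a constant, which will be recognized as $\cF(A)-A$.

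To make this precise, I would represent both maps as double operator integrals, exactly as in Lemma~\ref{lem:F_comm_proj} via Theorem 4.1 of \cite{birman:2003}:
\[
\cF(A) = \sqrt{2\pi}\int_{\spec(H)\times\spec(H)} \hat{w}_{\gamma}(\mu-\lambda)\,dE_{\lambda}\,A\,dE_{\mu}, \qquad \cG(A) = \sqrt{2\pi}\int_{\spec(H)\times\spec(H)} \hat{W}_{\gamma}(\mu-\lambda)\,dE_{\lambda}\,A\,dE_{\mu}.
\]
Using the hypothesis $\cG(A)\in\dom[H,\cdot]$ to make $[H,\cG(A)]$ a well-defined bounded operator, I would test against $\phi,\psi\in\dom H$, write $\langle\phi,[H,\cG(A)]\psi\rangle = \langle H\phi,\cG(A)\psi\rangle - \langle \cG(A)^*\phi,H\psi\rangle$, and push $H$ through the spectral integrals via the functional calculus, which acts unambiguously on $\dom H$. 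This yields
\[
\langle\phi,i[H,\cG(A)]\psi\rangle = \sqrt{2\pi}\int i(\lambda-\mu)\hat{W}_{\gamma}(\mu-\lambda)\,d\langle\phi,E_{\lambda} A E_{\mu}\psi\rangle.
\]
Substituting the Fourier identity in the form $i(\lambda-\mu)\hat{W}_\gamma(\mu-\lambda) = \hat w_\gamma(\mu-\lambda) - 1/\sqrt{2\pi}$ splits the integral into $\langle\phi,\cF(A)\psi\rangle$ and $-\langle\phi,A\psi\rangle$, the latter via $\int dE_\lambda A\,dE_\mu = A$. Density of $\dom H$ then gives $i[H,\cG(A)] = \cF(A)-A$.

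For the second statement, any off-diagonal $A$ decomposes as $A = PAP' + P'AP$ with $P' = \idty - P$. Since $P$ is a spectral projection of $H$, it commutes with $\tau_t$ for all $t$, so $Q_1\cF(A)Q_2 = \cF(Q_1 A Q_2)$ for any $Q_1,Q_2\in\{P,P'\}$. Combined with the block-diagonality $[\cF(A),P]=0$ from Lemma~\ref{lem:F_comm_proj}, this forces $\cF(A) = P\cF(A)P + P'\cF(A)P' = \cF(PAP) + \cF(P'AP') = 0$ because $A$ is off-diagonal. The first identity then gives $-i[H,\cG(A)] = A - \cF(A) = A$. The main obstacle will be making the double-operator-integral step rigorous when $H$ is unbounded; the hypothesis $\cG(A)\in\dom[H,\cdot]$ is precisely what allows $[H,\cG(A)]$ to be realized through matrix elements on the dense subspace $\dom H$, so that $H$ may be legitimately commuted past the spectral integrals using the functional calculus on vectors from its domain.
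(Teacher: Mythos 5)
Your proof is correct, but it takes a genuinely different route from the paper's for the main identity. The paper proves $i[H,\mathcal{G}(A)] = \mathcal{F}(A)-A$ by a shift-and-differentiate argument: it observes that $\tau_u(\mathcal{G}(A)) = \int \tau_y(A)\,W_\gamma(y-u)\,dy$, differentiates with respect to $u$ (using that $\dom[H,\cdot]$ is $\tau_u$-invariant), and invokes the distributional identity $W_\gamma' = -w_\gamma + \delta_0$ from \eqref{W-g-der} directly in the time domain; evaluation at $u=0$ then gives the result. You instead pass to the spectral representation, write $\mathcal{F}$ and $\mathcal{G}$ as double operator integrals with multipliers $\sqrt{2\pi}\,\hat w_\gamma(\mu-\lambda)$ and $\sqrt{2\pi}\,\hat W_\gamma(\mu-\lambda)$, and apply the Fourier-transformed version \eqref{FTW} of the same identity as a multiplier algebra identity, $i(\lambda-\mu)\hat W_\gamma(\mu-\lambda) = \hat w_\gamma(\mu-\lambda) - 1/\sqrt{2\pi}$. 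The two computations are Fourier-dual realizations of the same underlying fact. The paper's version is cleaner on the analytic side because the domain hypothesis is used only to justify a one-parameter derivative, and the differentiation happens under an $L^1$ integral; your version requires justifying the pushing of an unbounded multiplier $(\lambda-\mu)$ through the double operator integral, which you flag but do not fully carry out (the needed regularity of the bimeasure $d\langle\phi,E_\lambda A E_\mu\psi\rangle$ weighted by $\lambda-\mu$ is exactly what the hypothesis $\mathcal{G}(A)\in\dom[H,\cdot]$ encodes, but extracting it rigorously in the spectral picture is slightly more delicate than in the paper's time-domain picture). For the second claim your argument agrees in substance with the paper's: both use that $\mathcal{F}$ preserves the off-diagonal blocks of $P$ together with the block-diagonality $[\mathcal{F}(A),P]=0$ from Lemma~\ref{lem:F_comm_proj}, so $\mathcal{F}(A)$ is simultaneously off-diagonal and block-diagonal, hence zero; your explicit computation $\mathcal{F}(A)=\mathcal{F}(PAP)+\mathcal{F}(P'AP')=0$ is just a more spelled-out version of the same observation.
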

\begin{proof}
For any $u \in \mathbb{R}$, 
\begin{equation} \label{inverse-eqn}
\tau_u(\mathcal{G}(A)) = \int_{\mathbb{R}} \tau_{t+u}(A) W_{\gamma}(t) \, dt = \int_{\mathbb{R}} \tau_y(A) W_{\gamma}(y-u) \, dy 
\end{equation}
Since, by assumption,  $\cG(A) \in \dom [H,\cdot] $, and $ \dom [H,\cdot] $ is $\tau_u$-invariant, we then have
\begin{eqnarray}\label{prove-inverse}
i[H, \tau_u(\mathcal{G}(A))] =\frac{d}{du}  \tau_u(\mathcal{G}(A))& = & \int_{\mathbb{R}} \tau_y(A) \frac{d}{du} W_{\gamma}(y-u) \, dy \nonumber \\
& = & \mathcal{F}( \tau_u(A)) - \tau_u(A)
\end{eqnarray} 
where the derivative of $W_\gamma$ is taken in the distributional sense.
Evaluation of (\ref{prove-inverse}) at $u=0$ results in:
\begin{equation}\label{step1}
i[H, \mathcal{G}(A)] = \mathcal{F}(A) - A
\end{equation}
If $A \in  P\cB(\cH)(\idty -P)$ (or $A \in  (\idty- P)\cB(\cH)P$), then $\cF(A) \in P\cB(\cH)(\idty -P)$ (or $\cF(A) \in  (\idty- P)\cB(\cH)P$), and hence in either case, by (\ref{cross_proj_0}), 
we have $\cF(A)=0$. With this, \eq{step1} becomes
\begin{equation}
-i[H, \mathcal{G}(A)] = A
\end{equation}
\end{proof}

In applications to quantum spin systems, either finite or infinite, the domain condition on $\cG(A)$ in this proposition is quite generally satisfied due to the quasi-locality 
properties of both $\cG$ and the generator of the Heisenberg dynamics See, e.g., the discussion of the domain of the generator of the dynamics in the proof of Theorem \ref{thm:semi-continuity_gap}.

%%%%%%%
%
% Proof of Theorem~\ref{}. 
%
%%%%%%

\subsection{The proof of Theorem~\ref{thm:sf}} \label{subsec:sp_flow}

The goal of this section is to complete the proof of Theorem~\ref{thm:sf}.
Let us recap our progress so far.

Let $H(s)=H+\Phi(s)$ be as defined in \eqref{sf_fv_ham}. For any $\gamma>0$, a map
$D : [0,1] \to \mathcal{B}( \mathcal{H})$ is defined by
\begin{equation}
D(s) = \int_{\mathbb{R}} \tau_t^{(s)}(\Phi'(s)) \, W_{\gamma}(t) \, dt,
\end{equation}
where $\tau_t^{(s)}$ is the dynamics associated
to $H(s)$ as in (\ref{sf_ham_dyn}) and $W_{\gamma}$ is the particular weight function defined in (\ref{def:Wgam1}).
By Lemma~\ref{lem:wio_sc}, $D(s)$ is pointwise self-adjoint and continuous in the strong operator topology. 
In this case, for any $0 \leq s \leq 1$, an automorphism $\alpha_s$ of $\mathcal{B}( \mathcal{H})$ is defined by
setting
\begin{equation}
\alpha_s(A) = U(s)^* A U(s) \quad \mbox{for any } A \in \mathcal{B}( \mathcal{H}),
\end{equation}
where the unitary $U(s)$ is the unique strong solution of 
\begin{equation}
\frac{d}{ds}U(s) = -i D(s) U(s) \quad \mbox{with} \quad U(0) = \idtyty \, .
\end{equation}

The proof of Theorem~\ref{thm:sf} is completed by showing that if $H(s)$ satisfies
Assumption~\ref{sf_ass:fv_uni_gap} for some $\gamma >0$, then the automorphisms $\alpha_s$
introduced above satisfy (\ref{sf:followsp}), i.e.
\begin{equation} \label{sf:followsp_2}
\alpha_s(P(s)) = P(0) \quad \mbox{for all } 0 \leq s \leq 1 \, .
\end{equation}

\begin{proof}[Proof of Theorem~\ref{thm:sf}:] 
As discussed above, we need only verify (\ref{sf:followsp_2}).
A formal calculation shows that
\begin{equation}\label{formal_calc}
\frac{d}{ds} \alpha_s(P(s)) = \alpha_s \left( i [D(s), P(s)] + \frac{d}{ds} P(s) \right)
\end{equation}
in the sense of strong derivatives.
Since $\alpha_0(P(0)) = P(0)$, we need only prove that
\begin{equation} \label{dspecproj}
\frac{d}{ds} P(s) = -i [D(s), P(s)] \, .
\end{equation}

It is well-known, see \cite{kato:1995}, that spectral projections can be determined through a contour integral of the resolvent, i.e. 
\begin{equation}
P(s) = -\frac{1}{2 \pi i} \int_{\eta(s)} R(z,s) \, dz ,
\end{equation}
where $R(z,s) = (H(s) - z)^{-1}$ is the resolvent of $H(s)$ and $\eta(s)$ is any contour in the complex plane that 
encircles the interval $I(s)$, as described in Assumption~\ref{sf_ass:fv_uni_gap}. From this representation, it is clear that 
strong differentiability of $P$ follows from strong differentiability of $R(z, \cdot)$, and so the formal calculation in \eqref{formal_calc} is well-defined. Now, note that for any fixed $s_0 \in [0,1]$
the gap assumption allows for a choice of contour $\eta(s)$ which is independent of $s$ in a neighborhood of $s_0$.  With such a contour 
one checks that
\begin{equation}
\frac{d}{ds} P(s) = \frac{1}{2 \pi i} \int_{\eta(s)} R(z,s) \Phi'(s) R(z,s) \, dz .
\end{equation}
As $P(s)$ is a strongly differentiable family of orthogonal projections, one can
also verify that
\begin{equation}
P(s) \frac{d}{ds} P(s) P(s) = ( \idtyty - P(s)) \frac{d}{ds}P(s) ( \idtyty - P(s) ) = 0 .
\end{equation}
We conclude that
\begin{equation} \label{dP=ints}
\frac{d}{ds} P(s) = \frac{1}{2 \pi i} \int_{\eta(s)} A(s,z) \Phi'(s) B(s,z) \, dz + \frac{1}{2 \pi i} \int_{\eta(s)} B(s, \overline{z})^* \Phi'(s) A(s, \overline{z})^* \, dz ,
\end{equation}
where we have set
\begin{equation}
A(s,z) = P(s) R(z,s) \quad \mbox{and} \quad B(s,z) = R(z,s) ( \idtyty - P(s)) .
\end{equation}

To simplify the integrals on the right-hand-side of (\ref{dP=ints}), we again appeal to the 
formalism of double operator integrals. In fact, let us denote by, $E_{\lambda}^{(s)}$, the spectral family associated to 
the self-adjoint operator $H(s)$. One checks that 
\begin{eqnarray}
\frac{1}{2 \pi i} \int_{\eta(s)} A(s,z) \Phi'(s) B(s,z) \, dz  & = & 
\int_{\Sigma_1(s)} \int_{\Sigma_2(s)} \frac{1}{2 \pi i} \int_{\eta(s)} \frac{1}{\lambda -z} \frac{1}{\mu - z} \, dz \, dE^{(s)}_{ \lambda} \, \Phi'(s) \, dE^{(s)}_{\mu} \nonumber \\ 
& = & \int_{\Sigma_1(s)} \int_{\Sigma_2(s)} \frac{1}{\mu - \lambda} \, dE^{(s)}_{ \lambda} \, \Phi'(s) \, dE^{(s)}_{\mu},
\end{eqnarray}
where again, the re-ordering of the integrals appearing above is justified by Theorem 4.1 (iii) in \cite{birman:2003}.
Here specifically, the required integrability condition on the contour is readily verified using 
Assumption~\ref{sf_ass:fv_uni_gap}. Applying similar arguments to the
second term in (\ref{dP=ints}), we find that
\begin{equation}
\frac{d}{ds} P(s) = \int_{\Sigma_1(s)} \int_{\Sigma_2(s)} \frac{1}{\mu - \lambda} \, dE^{(s)}_{ \lambda} \, \Phi'(s) \, dE^{(s)}_{\mu} + \int_{\Sigma_2(s)} \int_{\Sigma_1(s)} \frac{1}{\mu - \lambda} \, dE^{(s)}_{ \mu} \, \Phi'(s) \, dE^{(s)}_{\lambda}.
\end{equation}

On the other hand, the right-hand-side of (\ref{dspecproj}) is clearly given by 
\begin{equation}
 -i [D(s), P(s)]  = [ \mathcal{G}^{(s)}(-i \Phi'(s)), P(s)] \quad \mbox{for any } 0 \leq s \leq 1 \, .
\end{equation}
Here we have used the notation $\mathcal{G}^{(s)}$ for the weighted integral operator, see (\ref{def_F+G}),
defined with respect to the parameter dependent dynamics, $\tau_t^{(s)}$.
Using Lemma~\ref{lem:F_comm_proj}, in particular (\ref{G_comm}) with $A = -i \Phi'(s)$,  the equality claimed in (\ref{dspecproj}) is now clear.
This completes the proof of Theorem~\ref{thm:sf}.
\end{proof}

%%%%
%
% Quasi-local estimates
%
%%%%%%%%

\subsection{Quasi-locality of the spectral flow} \label{sec:ql_sf}

For the remainder of this section, let us assume that $(\Gamma, d)$ is a $\nu$-regular metric space, in the sense 
of \eq{nu_ball_bd}, and $\cH_x$ is the complex Hilbert space of the quantum system at $x\in\Gamma$. 
We start by considering a finite system corresponding to $\Lambda\in\cP_0(\Gamma)$. 
Recall that for any $X \subset \Lambda$, we denote by $\mathcal{H}_X = \bigotimes_{x \in X} \mathcal{H}_x$ and
$\mathcal{A}_X = \mathcal{B}( \mathcal{H}_X)$.

This section is divided into two parts. First, in Section~\ref{sec:ql_wios}, we prove quasi-locality estimates for
the two weighted integral operators introduced in Section~\ref{sec:tp_wios}. Then, in Section~\ref{sec:ql_sp_flow}, 
we establish quasi-locality bounds for the spectral flow constructed in the proof of Theorem~\ref{thm:sf}.
 
\subsubsection{Quasi-locality for two weighted integral operators} \label{sec:ql_wios}
In Section~\ref{sec:tp_wios}, we introduced two particular weighted integral operators that will
appear frequently in our applications. We now demonstrate that, under certain additional conditions,
each of these weighted integral operators satisfies an explicit quasi-locality estimate in the sense of 
Section~\ref{sec:quasilocal-maps}.

Let us assume that there is a one-parameter family of automorphisms of $\cA_\Lambda$, 
which we denote by $\tau_t$, that satisfies a quasi-locality estimate. More precisely, suppose that there are positive 
numbers $C$ and $v$ as well as a non-negative, non-decreasing function $g$ for which: given any $X,Y \subset \Lambda$,
\begin{equation} \label{sf_lrb}
\| [ \tau_t(A), B] \| \leq C \| A \| \| B \| |X| e^{v|t| - g(d)} 
\end{equation}
for all $A \in \mathcal{A}_X$, $B \in \mathcal{A}_Y$, and $t \in \mathbb{R}$. Here $d=d(X,Y)$ is the distance
between the sets $X$ and $Y$. As is discussed in Section~\ref{sec:lrb}, such a bound is known for the 
dynamics generated by a short range Hamiltonian; it is, e.g., a consequence of the Lieb-Robinson bounds in Theorem~\ref{thm:lrb}.
In order to prove the quasi-local bounds below, we need only know (\ref{sf_lrb}) and that $g(d)$ becomes sufficiently 
large (see (\ref{def_d*})).
In applications, we typically have \eq{sf_lrb} with 
\begin{equation} \label{g_grows}
\lim_{d \to \infty} g(d) = + \infty 
\end{equation}

In terms of these automorphisms $\tau_t$, for each $\gamma >0$ define $\mathcal{F}, \mathcal{G} : \mathcal{B}( \mathcal{H}) \to \mathcal{B}(\mathcal{H})$
by
\begin{equation} \label{def_sf_F+G}
\mathcal{F}(A) = \int_{\mathbb{R}} \tau_t(A) w_{\gamma}(t) \, dt \quad \mbox{and} \quad \mathcal{G}(A) = \int_{\mathbb{R}} \tau_t(A) W_{\gamma}(t) \, dt
\end{equation}
for any $A \in \mathcal{B}( \mathcal{H})$; compare with (\ref{def_F+G}). Here again $w_{\gamma}$ and $W_{\gamma}$ are the specific weight functions
introduced in Section~\ref{sec:ex_weight}. 

Before we state our first result, recall that $w_{\gamma}(t) = \gamma w( \gamma t)$ and therefore 
$\| w_{\gamma}\|_{\infty} \leq \gamma c$ with $c$ 
the $L^1$-normalization of $w$, see (\ref{def_w}). Moreover, Corollary~\ref{cor:weight_dec}, see specifically (\ref{w_g_dec}), 
demonstrates that there is an $\eta \in (2/7,1)$ for which given any $x \geq \gamma^{-1} e^9$ the bound
\begin{equation} \label{wt_est}
\int_x^{\infty} w_{\gamma}(t) \, dt \leq \frac{27}{14} c e^4 f_{\gamma}(x)^2 e^{- \eta f_{\gamma}(x)} 
\end{equation} 
holds. Here, for any $b>0$, we have introduced the subadditive, non-decreasing function 
\begin{equation} \label{sub_exp_f}
f_b(x) = \left\{ \begin{array}{cc} \frac{e^2}{4} & \mbox{if } 0 \leq x \leq b^{-1} e^2, \\ 
\frac{bx}{\ln(bx)^2} & \mbox{if } x \geq b^{-1} e^2. \end{array} \right.
\end{equation}
See Section \ref{sec:3_common_weights} for a discussion of the properties of $f_b$.

Our quasi-locality estimate on the weighted integral operator $\mathcal{F}$ follows.

\begin{lem} \label{lem:gen_F_ql_est} Let $\tau_t$ be a family of automorphisms of $\mathcal{B}(\mathcal{H})$ satisfying
(\ref{sf_lrb}) with (\ref{g_grows}). Let $\gamma >0$ and take $\mathcal{F}$ to be the weighted integral operator defined in (\ref{def_sf_F+G}). 
For any $0 < \epsilon < 1$ and all $X,Y \subset \Lambda$ the bound
\begin{equation}
\| [ \mathcal{F}(A), B ] \| \leq 2 \| A \| \| B \| |X| G_{\mathcal{F}}^{\epsilon}(d(X,Y))
\end{equation}
holds for all $A \in \mathcal{A}_X$ and $B \in \mathcal{A}_Y$. Here
\begin{equation} \label{gen_F_ql_dec_est}
G_{\mathcal{F}}^{\epsilon}(d) = \left\{ \begin{array}{cl} 1 \,  & \mbox{if } 0 \leq d \leq d_{\epsilon}^* \\ 
\min\left\{ 1, c\left( \frac{C \gamma}{v} + \frac{27}{7} e^4 f_{\gamma_{\epsilon}}(g(d))^2 \right) e^{- \eta f_{\gamma_{\epsilon}}(g(d))} \right\}\,  & \mbox{otherwise}, \end{array} \right.
\end{equation}
where $d_{\epsilon}^*$ is the smallest value of $d$ for which
\begin{equation} \label{ld_ge}
\max\left[ 9, \sqrt{ \frac{ \eta \gamma_{\epsilon}}{ \epsilon}} \right] \leq \ln( \gamma_{\epsilon} g(d)) \quad \mbox{where } \gamma_{\epsilon} = \frac{(1 - \epsilon) \gamma}{v} \, . 
\end{equation}
\end{lem}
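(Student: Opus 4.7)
The plan is to split the defining integral for $\mathcal{F}(A)$ into a short-time window $|t| \leq T$, where the quasi-locality of $\tau_t$ (the Lieb-Robinson-type bound \eqref{sf_lrb}) is informative, and a long-time complement $|t| > T$, where only the pointwise tail decay of $w_\gamma$ (as provided by Corollary~\ref{cor:weight_dec}, or more concretely \eqref{wt_est}) is useful. The cutoff $T$ will be tuned to the spatial distance $d = d(X,Y)$; the parameter $\epsilon \in (0,1)$ controls how much of the Lieb-Robinson light-cone budget is consumed by the short-time piece and how much is passed to the tail, thereby producing the effective parameter $\gamma_{\epsilon} = (1-\epsilon)\gamma/v$. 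For $d$ small (below the threshold $d^*_{\epsilon}$), the trivial estimate $\|[\mathcal{F}(A), B]\| \leq 2 \| A \|\|B\|$ gives the value $G^{\epsilon}_{\mathcal{F}}(d) = 1$.

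First I would fix $d = d(X,Y) > d^*_{\epsilon}$ (so the requirements in \eqref{ld_ge} are available) and choose
\[
T = \frac{(1-\epsilon) g(d)}{v},
\]
so that, substituting into the definition of $f_b$ in \eqref{sub_exp_f}, one has $f_{\gamma}(T) = f_{\gamma_{\epsilon}}(g(d))$. Using $\|w_\gamma\|_\infty \leq \gamma c$ together with \eqref{sf_lrb}, the short-time contribution is estimated by
\[
\left\|\int_{-T}^{T} [\tau_t(A), B]\, w_\gamma(t)\, dt\right\|
\leq C \|A\|\|B\| |X| e^{-g(d)} \cdot \gamma c \int_{-T}^{T} e^{v|t|}\, dt
\leq \frac{2 C c \gamma}{v} \|A\|\|B\| |X|\, e^{-\epsilon g(d)}.
\]
For the tail, the trivial commutator bound combined with \eqref{wt_est} (applicable because $\gamma T \geq e^9$ by the requirement $\ln(\gamma_\epsilon g(d)) \geq 9$) yields
\[
\left\|\int_{|t|>T} [\tau_t(A), B]\, w_\gamma(t)\, dt\right\|
\leq 2\|A\|\|B\| \cdot \frac{27}{7} c e^4 f_{\gamma_{\epsilon}}(g(d))^2\, e^{-\eta f_{\gamma_{\epsilon}}(g(d))}.
\]

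The second step is to compare the decay rates. The condition $d \geq d^*_{\epsilon}$ was defined precisely to guarantee $\ln(\gamma_\epsilon g(d))^2 \geq \eta \gamma_\epsilon / \epsilon$, which is equivalent to $\epsilon g(d) \geq \eta f_{\gamma_{\epsilon}}(g(d))$; thus $e^{-\epsilon g(d)} \leq e^{-\eta f_{\gamma_{\epsilon}}(g(d))}$, so both terms share the sub-exponential factor $e^{-\eta f_{\gamma_\epsilon}(g(d))}$. Adding the two contributions, factoring, and using $|X| \geq 1$ to absorb the support-independent term into an $|X|$-linear expression produces the bound
\[
\|[\mathcal{F}(A),B]\| \leq 2\|A\|\|B\||X|\, c\!\left(\frac{C\gamma}{v} + \frac{27}{7} e^4 f_{\gamma_\epsilon}(g(d))^2\right) e^{-\eta f_{\gamma_\epsilon}(g(d))},
\]
which matches \eqref{gen_F_ql_dec_est} in the regime $d > d^*_\epsilon$. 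Taking the minimum with the trivial bound completes the proof for all $d$.

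The only real subtlety, and the place that requires care, is the choice of threshold $d^*_\epsilon$: we must simultaneously meet the hypothesis of Lemma~\ref{lem:wint} (namely $\ln(\gamma T) \geq 9$ so that the tail estimate is valid) and ensure that the Lieb-Robinson piece is no worse than the tail piece (i.e., $\epsilon g(d) \geq \eta f_{\gamma_\epsilon}(g(d))$). Both reduce to lower bounds on $\ln(\gamma_\epsilon g(d))$, and taking the max gives exactly the condition in \eqref{ld_ge}. The rest is a matter of bookkeeping the constants and noting that the final prefactor in \eqref{gen_F_ql_dec_est} can be simplified using $|X| \geq 1$.
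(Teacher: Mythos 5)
Your proof is correct and follows essentially the same route as the paper: split the integral at $T = (1-\epsilon)g(d)/v$, estimate the short-time piece via the Lieb-Robinson bound and $\|w_\gamma\|_\infty \leq \gamma c$, estimate the tail via the trivial commutator bound and Corollary~\ref{cor:weight_dec}, then verify that the threshold $d_\epsilon^*$ simultaneously validates the tail estimate and makes $e^{-\epsilon g(d)} \leq e^{-\eta f_{\gamma_\epsilon}(g(d))}$. The constants, the choice of $T$, the identity $f_\gamma(T) = f_{\gamma_\epsilon}(g(d))$, and the two conditions combined into the max in \eqref{ld_ge} all match the paper's argument exactly.
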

It can be verified that the function $G_{\mathcal{F}}^{\epsilon}(d)$ given in \eq{gen_F_ql_dec_est} is monotone and strictly decreasing when $|G_{\mathcal{F}}^{\epsilon}(d)|<1$.
\begin{proof}
Let $X,Y \subset \Lambda$. Since $w_{\gamma}$ is $L^1$-normalized, it is clear that
\begin{equation}
\| [ \mathcal{F}(A), B] \| \leq 2 \| A \| \| B \| \quad \mbox{for all } A \in \mathcal{A}_X \mbox{ and } B \in \mathcal{A}_Y \, .
\end{equation}
In applications, this bound is best when $d=d(X,Y)$ is small.

When $d=d(X,Y)$ is sufficiently large, see below, a different estimate holds.
In fact, let $T \geq 0$ and estimate 
\begin{equation} \label{simp_int_bd}
\| [ \mathcal{F}(A), B ] \| \leq \int_{|t| \leq T} \| [ \tau_t(A), B ] \| w_{\gamma}(t) \, dt + \int_{|t| > T} \| [ \tau_t(A), B ] \| w_{\gamma}(t) \, dt \, .
\end{equation}
For the first term above, we ignore the weight and use the locality bound for the dynamics, i.e. (\ref{sf_lrb}). For the
second term, we ignore the dynamics and use the estimate on the weight, see (\ref{wt_est}) above.
{F}rom these, we obtain the bound
\begin{equation} \label{F_ql_bd_1}
\| [ \mathcal{F}(A), B ] \| \leq 2 c \| A \| \| B \| |X| \left(  \frac{C \gamma}{v} e^{vT-g(d)} + \frac{27}{7} e^4 f_\gamma(T)^2e^{- \eta f_{\gamma}(T)} \right) 
\end{equation}
It is important to note that Corollary~\ref{cor:weight_dec}, summarized in (\ref{wt_est}) above, has a constraint, and so (\ref{F_ql_bd_1}) is only valid if $\gamma T \geq e^9$.
For any $0 < \epsilon < 1$, choose
$T = \frac{(1- \epsilon)}{v} g(d)$. In this case, we find that
\begin{equation} \label{F_ql_bd_2}
\| [ \mathcal{F}(A), B ] \| \leq 2 c \| A \| \| B \| |X| \left( \frac{C \gamma}{v} e^{- \epsilon g(d)} + \frac{27}{7} e^4 f_{\gamma_{\epsilon}}(g(d))^2e^{- \eta f_{\gamma_{\epsilon}}(g(d))} \right) 
\end{equation}
whenever $\gamma_{\epsilon} g(d) \geq e^9$ and $\gamma_{\epsilon}$ is as in (\ref{ld_ge}). Since $\lim_{d \to \infty}g(d) = \infty$, it is clear that
(\ref{F_ql_bd_2}) can be estimated as in (\ref{gen_F_ql_dec_est}) for sufficiently large $d$. The relation
\begin{equation} \label{def_d*}
- \epsilon g(d) + \eta f_{\gamma_{\epsilon}}(g(d)) = - \epsilon g(d) \left[ 1 - \frac{ \eta \gamma_{\epsilon}}{ \epsilon \ln(\gamma_{\epsilon} g(d))^2} \right]
\end{equation}
is used when defining $d_{\epsilon}^*$ as above. This completes the proof. 
%\bob{The diligent reader may wonder:
%is the {\it complicated} part of the function appearing in (\ref{gen_F_ql_dec_est}) actually monotonic for $d \geq d^*_{\epsilon}$ as just defined.
%I checked. It is. So to is the {\it complicated} part of the function appearing in (\ref{gen_G_ql_dec_est}). It turns out that the
%$9$ in (\ref{ld_ge}) is big enough. Do we want to make a comment on this? My argument is not long, but I don't 
%want to dwell on trivialities . . . Here is the argument I don't want to include.
%
%Let $m\geq 1$ be an integer and take $g(x) = x^m e^{- \eta x}$ for $x \geq 0$. One checks that $g$ is strictly decreasing whenever $m < \eta x$. 
%For us, worst case scenario, $m=3$. Note that $\eta \in (2/7, 1)$ implies $3/\eta < 11$. We need only show that for $d \geq d_{\epsilon}^*$,
%$11< f_{\gamma_{\epsilon}}(g(d))$. Lets re-parametrize: write $\gamma_{\epsilon}g(d) = e^y$. In this case, 
%$f_{\gamma_{\epsilon}}(g(d)) =e^y/y^2$ which, as a function of $y$, is strictly increasing for $y>2$. By (\ref{ld_ge}), $y \geq 9$.
%One also checks that $e^9/81 > 100 > 11$. That's it. }  
\end{proof}

Depending on the application one has in mind, more decay of the function governing the locality of the 
dynamics, specifically $e^{-g(d)}$, may be needed. For example, many applications require certain
moments of the function $G_{\mathcal{F}}^{\epsilon}$ to be finite. Let us make two observations in
this regard.

{\it On polynomial decay:} Let us consider a family of automorphisms $\tau_t$ with a locality estimate of the form (\ref{sf_lrb}).
If the non-decreasing function $g$ is of the form
\begin{equation} \label{log_g}
g_q(x) = q \ln(1+x) \quad \mbox{for some } q >0
\end{equation} 
then (for fixed $t$) the locality bound decays like a power-law. In this case, Lemma~\ref{lem:gen_F_ql_est} holds, however, the resulting
decay function, see (\ref{gen_F_ql_dec_est}), has no finite moments. In fact, for any positive numbers $a$, $b$, $c$, and $d$,
one readily checks that
\begin{equation}
\lim_{x \to \infty} (1+x)^a e^{- b f_c(g_d(x))} = + \infty
\end{equation}
where the functions $f_c$ and $g_d$ are as defined in (\ref{sub_exp_f}) and (\ref{log_g}) respectively. As we will see in \cite{QLBII}, 
this lack of moments restricts the known proofs of stability of the spectral gap to perturbations
that decay faster than any polynomial. We do not believe that arguments in \cite{michalakis:2013} can be extended to obtain a uniform lower 
bound for the spectral gap in the case of perturbations with only power-law decay, contrary to the claim made in that work.
To see why, note that the proof of Lemma~\ref{lem:gen_F_ql_est} depends on the choice of $T \geq 0$, see e.g. (\ref{simp_int_bd}).
This choice must be made in such a way that both terms on the right-hand-side of (\ref{F_ql_bd_1}) decay.
In order for the first term to decay, $vT - g(d) <0$ and so one must take $T < v^{-1}g(d)$. As the function
$f_{\gamma}(T)$ is increasing for large $T$, the most decay one can obtain from the second term is when $T = v^{-1}g(d)$. If $g$ is logarithmic as discussed above, then even this choice has no finite moments.

{\it On stretched-exponential decay:} Let us consider a family of automorphisms $\tau_t$ with a locality estimate of the form (\ref{sf_lrb}).
If the non-decreasing function $g$ is of the form
\begin{equation} \label{sub_exp_g}
g(r) \geq a r^{\theta}  \quad \mbox{for some } a >0 \mbox{ and } 0< \theta \leq 1
\end{equation} 
then all moments of the decay function, see (\ref{gen_F_ql_dec_est}), are finite. In fact, for any $\delta >0$,
there is a number $C_{\delta}$ for which $\ln(x) \leq C_{\delta} x^{\delta}$ whenever $x \geq 1$. In this case,
for any $b>0$,
\begin{equation}
f_b(g(r)) = \frac{bg(r)}{\ln(bg(r))^2} \geq C_{\delta}^{-2}(bg(r))^{1-2 \delta} \geq \frac{(ab)^{1 - 2 \delta}}{C_{\delta}^2} r^{\theta(1-2 \delta)} 
\end{equation}
and therefore, for any $\delta < 1/2$, the function in (\ref{gen_F_ql_dec_est}) decays at least as fast as a stretched exponential, see Section~\ref{sec:3_common_weights}
more on this terminology.

The following lemma is the analogue of Lemma~\ref{lem:gen_F_ql_est} applicable to $\mathcal{G}$.

\begin{lem} \label{lem:gen_G_ql_est} 
Let $\tau_t$ be a family of automorphisms of $\mathcal{B}(\mathcal{H})$ satisfying
(\ref{sf_lrb}) with (\ref{g_grows}). Let $\gamma >0$ and take $\mathcal{G}$ to be the weighted integral operator defined in (\ref{def_sf_F+G}). 
For any $0 < \epsilon < 1$ and all $X,Y \subset \Lambda$ the bound
\begin{equation} \label{G_ql_est}
\| [ \mathcal{G}(A), B ] \| \leq 2 \| A \| \| B \| |X| G_{\mathcal{G}}^{\epsilon}(d(X,Y))
\end{equation}
holds for all $A \in \mathcal{A}_X$ and $B \in \mathcal{A}_Y$. Here
\begin{equation} \label{gen_G_ql_dec_est}
G_{\mathcal{G}}^{\epsilon}(d) = \left\{ \begin{array}{cl} \| W_{\gamma} \|_1 \,  & \mbox{if } 0 \leq d \leq d_{\epsilon}^* \\ 
\min\left\{ \| W_{\gamma} \|_1, \left( \frac{C }{2v} + \frac{243}{49 \gamma \eta} c e^4 f_{\gamma_{\epsilon}}(g(d))^3 \right) e^{- \eta f_{\gamma_{\epsilon}}(g(d))} \right\} \,  & \mbox{otherwise}, \end{array} \right.
\end{equation}
and $d_{\epsilon}^*$ is as defined in Lemma~\ref{lem:gen_F_ql_est}.
\end{lem}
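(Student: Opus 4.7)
The approach is to mimic the proof of Lemma~\ref{lem:gen_F_ql_est} essentially verbatim, adjusting for two differences between $w_\gamma$ and $W_\gamma$: the weight $W_\gamma$ is no longer $L^1$-normalized, but it does satisfy $\|W_\gamma\|_1<\infty$ and $\|W_\gamma\|_\infty\leq 1/2$, and its tail decay carries an extra polynomial factor compared to $w_\gamma$. First I will establish the trivial estimate $\|[\mathcal{G}(A),B]\|\leq 2\|A\|\|B\|\,\|W_\gamma\|_1$ from $\|[\tau_t(A),B]\|\leq 2\|A\|\|B\|$ and the definition of $\mathcal{G}$; this takes care of \eqref{gen_G_ql_dec_est} when $d\leq d_\epsilon^*$.

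For $d>d_\epsilon^*$, I will split the integral defining $\mathcal{G}$ at a cutoff $T\geq 0$ to be optimized:
\[
\|[\mathcal{G}(A),B]\|\ \leq\ \int_{|t|\leq T}\!\|[\tau_t(A),B]\|\,|W_\gamma(t)|\,dt\ +\ \int_{|t|>T}\!\|[\tau_t(A),B]\|\,|W_\gamma(t)|\,dt.
\]
For the first integral I will use the Lieb-Robinson bound \eqref{sf_lrb} together with $\|W_\gamma\|_\infty\leq 1/2$, giving a contribution of at most $(C\|A\|\|B\||X|/v)\,e^{vT-g(d)}$. For the second integral I will use the trivial commutator bound $\|[\tau_t(A),B]\|\leq 2\|A\|\|B\|$ and the tail estimate \eqref{W_g_dec} of Corollary~\ref{cor:weight_dec}, observing that since $W_\gamma$ is odd with $W_\gamma\geq 0$ on $[0,\infty)$, one has $\int_{|t|>T}|W_\gamma(t)|\,dt = 2\int_T^\infty W_\gamma(t)\,dt$, which for $\gamma T\geq e^9$ is controlled by a quantity of order $(\gamma\eta)^{-1}c\,e^4 f_\gamma(T)^3 e^{-\eta f_\gamma(T)}$.

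Next I will optimize by setting $T=(1-\epsilon)g(d)/v$, so that $\gamma T=\gamma_\epsilon g(d)$ and hence $f_\gamma(T)=f_{\gamma_\epsilon}(g(d))$. This turns the first contribution into a $\tfrac{C}{2v}\,e^{-\epsilon g(d)}$ term and the second into a $\tfrac{\mathrm{const}}{\gamma\eta}\,c\,e^4\,f_{\gamma_\epsilon}(g(d))^3\,e^{-\eta f_{\gamma_\epsilon}(g(d))}$ term, after factoring out $2\|A\|\|B\||X|$ and using $|X|\geq 1$. The threshold $d_\epsilon^*$ in \eqref{ld_ge} has been defined precisely so that two requirements hold simultaneously once $d\geq d_\epsilon^*$: the constraint $\gamma T\geq e^9$ needed to invoke Corollary~\ref{cor:weight_dec} is met, and the identity \eqref{def_d*} yields $\eta f_{\gamma_\epsilon}(g(d))\leq \epsilon g(d)$, so that $e^{-\epsilon g(d)}\leq e^{-\eta f_{\gamma_\epsilon}(g(d))}$. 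This lets me merge the two exponentials into the single stretched-exponential factor appearing in \eqref{gen_G_ql_dec_est}.

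The main obstacle is purely bookkeeping: verifying that the various thresholds baked into $d_\epsilon^*$ are consistent, and pinning down the numerical constants in Corollary~\ref{cor:weight_dec} and in the short-time bound so that the final coefficients $C/(2v)$ and $\tfrac{243}{49\gamma\eta}\,c\,e^4$ in \eqref{gen_G_ql_dec_est} emerge. The genuinely new features compared with Lemma~\ref{lem:gen_F_ql_est} are only (i) the extra polynomial prefactor $f_{\gamma_\epsilon}(g(d))^3$ instead of $f_{\gamma_\epsilon}(g(d))^2$, which comes directly from \eqref{W_g_dec} contributing one more power of $t/\ln(t)^2$ than \eqref{w_g_dec}, and (ii) the replacement of the $L^1$-normalization of $w_\gamma$ by the $L^\infty$-bound $\|W_\gamma\|_\infty\leq 1/2$ in the short-time estimate; both are routine modifications.
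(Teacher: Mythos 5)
Your proposal reproduces exactly the argument the paper intends: the paper explicitly states that the proof is ``almost identical to that of Lemma~\ref{lem:gen_F_ql_est} except that one uses the estimate (\ref{W_g_dec}) from Corollary~\ref{cor:weight_dec}, instead of (\ref{w_g_dec}). We also use that $\|W_\gamma\|_\infty \leq 1/2$.'' Your split of the integral at a cutoff $T=(1-\epsilon)g(d)/v$, use of $\|W_\gamma\|_\infty\leq 1/2$ for the short-time contribution and $\int_{|t|>T}|W_\gamma|\,dt=2\int_T^\infty W_\gamma\,dt$ (using oddness and non-negativity on $[0,\infty)$) with (\ref{W_g_dec}) for the tail, and your explanation of the role of $d^*_\epsilon$ via (\ref{ld_ge}) and (\ref{def_d*}) are all exactly right.

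One caveat on the advertised constant: if you carry through the bookkeeping literally as you describe, the tail contributes
\[
4\|A\|\|B\|\int_T^\infty W_\gamma(t)\,dt \;\leq\; 4\|A\|\|B\|\cdot\frac{486}{49\gamma\eta}\,c\,e^4\,f_{\gamma_\epsilon}(g(d))^3 e^{-\eta f_{\gamma_\epsilon}(g(d))},
\]
and dividing by $2\|A\|\|B\||X|$ (using $|X|\geq 1$) yields a coefficient $\tfrac{972}{49\gamma\eta}\,c\,e^4$ rather than the $\tfrac{243}{49\gamma\eta}\,c\,e^4$ quoted in (\ref{gen_G_ql_dec_est}). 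The first coefficient, $\tfrac{C}{2v}$, does emerge as you say. This factor-of-four discrepancy in the tail constant is purely numerical and most plausibly reflects a small arithmetic slip in the paper's stated constant (or an unstated tightening you should not expect to reproduce); it has no bearing on the structure or validity of the argument, and the claimed decay rate $e^{-\eta f_{\gamma_\epsilon}(g(d))}$ with cubic prefactor is unaffected.
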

The proof of this lemma is almost identical to that of Lemma~\ref{lem:gen_F_ql_est}
except that one uses the estimate (\ref{W_g_dec}) from Corollary~\ref{cor:weight_dec}, instead of (\ref{w_g_dec}).
We also use that $\| W_{\gamma} \|_{\infty} \leq 1/2$. 

Here too, it can be verified that the function $G_{\mathcal{G}}^{\epsilon}(d)$ given in \eq{gen_G_ql_dec_est} is monotone and strictly decreasing when $|G_{\mathcal{G}}^{\epsilon}(d)|<\| W_{\gamma} \|_1$.

%%%%%%%
%
% Quasi-locality of the Spectral Flow
%
%%%%%%

\subsubsection{Quasi-locality of the spectral flow automorphism} \label{sec:ql_sp_flow}

We will now consider the spectral flow in the thermodynamic limit. In order to derive explicit estimates
useful for applications, we work with $F$-functions of the $\nu$-regular metric space  $(\Gamma, d)$ 
of the form $F(r) = e^{-g(r)}F_0(r)$, where $g$ is non-decreasing and subadditive, and 
\begin{equation} \label{sf_base_F}
F_0(r) = \frac{1}{(1+r)^{\xi}},
\end{equation}
for a suitable $\xi>0$. As is shown in the Appendix (Section~\ref{sec:2_comm_ex}), any choice of 
$\xi > \nu +1$ will define an $F$-function on a $\nu$-regular $(\Gamma, d)$.  In the case
$\Gamma = \mathbb{Z}^{\nu}$, $\xi > \nu$ is sufficient. We will say that
$F$ is a weighted $F$-function on $(\Gamma, d)$ with base $F_0$.

Let us now introduce the models we consider through an assumption.

\begin{assumption} \label{sf_ass:qlm_ham} There is a collection $\{ H_x \}_{x \in \Gamma}$ of
densely defined, self-adjoint on-site Hamiltonians. For each $0 \leq s \leq 1$, there
is an interaction $\Phi(s)$ on $\cA_{\Gamma}$ for which 
\begin{enumerate}
\item[(i)] For each $X \in \mathcal{P}_0( \Gamma)$, $\Phi(X,s)^* = \Phi(X,s) \in \mathcal{A}_X$ for all $0 \leq s \leq 1$.
\item[(ii)] For each $X  \in \mathcal{P}_0( \Gamma)$, $\Phi(X, \cdot) : [0,1] \to \mathcal{A}_X$ strongly $C^1$ in the sense of Definition~\ref{def:stg_diff}.
\item[(iii)] $F$ is a weighted $F$-function on $(\Gamma, d)$ with base $F_0$ as in (\ref{sf_base_F}) and there is a 
bounded, measurable function $\| \Phi \|_{1,1} :[0,1] \to [0, \infty)$ for which given any $x,y \in \Gamma$, the estimate
\begin{equation} \label{sf_base_int_bd}
\sum_{\stackrel{X \in \mathcal{P}_0( \Gamma):}{x,y \in X}} \left( \| \Phi(X, s) \| + |X| \| \Phi'(X, s) \| \right) \leq \| \Phi \|_{1,1}(s) F(d(x,y)) 
\end{equation}
holds for all $0 \leq s \leq 1$. 
%Here, on the left-hand-side above, we have used the measurable quantity introduced in Section~\ref{sec:review}, see e.g. (\ref{measurable_norm_At}).
\end{enumerate}
\end{assumption}

Under Assumption~\ref{sf_ass:qlm_ham}, given any $\Lambda \in \mathcal{P}_0( \Gamma)$, 
\begin{equation} \label{sf_fv_ham_mod}
H_{\Lambda}(s) = H_{\Lambda} + \Phi_{\Lambda}(s) \quad \mbox{with} \quad H_{\Lambda} = \sum_{x \in \Lambda} H_x \quad
\mbox{and} \quad \Phi_{\Lambda}(s) = \sum_{X \subset \Lambda} \Phi(X,s) 
\end{equation}
is a well-defined self-adjoint operator on $\mathcal{H}_{\Lambda}$ for each $0 \leq s \leq 1$. If we denote by
$\mathcal{D}_x \subset \mathcal{H}_x$ the dense domain of the on-site Hamiltonian $H_x$, then each $H_{\Lambda}(s)$
has the same dense domain $\mathcal{D}_{\Lambda} = \bigotimes_{x \in \Lambda} \mathcal{D}_x \subset \mathcal{H}_{\Lambda}$. 
We stress here that in our applications the number $0 \leq s \leq 1$ plays the role of a parameter. In this case, the 
finite-volume Hamiltonians $H_{\Lambda}(s)$ do not depend on {\it time} $t$, and thus 
using functional calculus, for each $0 \leq s \leq 1$, the dynamics corresponding to these self-adjoint operators is simply given by
\begin{equation} \label{sf_fv_dyn_mod}
\tau_t^{\Lambda, s}(A) = e^{itH_{\Lambda}(s)} A e^{- it H_{\Lambda}(s)} \quad \mbox{for any } A \in \mathcal{A}_{\Lambda} \mbox{ and } t \in \mathbb{R} \, .
\end{equation} 
Interactions depending on the time $t$ itself, as in the model (\ref{time+para_ham}) discussed in Section~\ref{sec:class_ubd_ham},
can be accommodated without  difficulty.

Assumption~\ref{sf_ass:qlm_ham} implies that $\Phi(s) \in \mathcal{B}_F$ for each $0 \leq s \leq 1$. Therefore,
an application of Theorem~\ref{thm:existbd} shows that there is an infinite volume dynamics defined by
\begin{equation} \label{sf_iv_dyn_def}
\tau_t^s(A) = \lim_{\Lambda \to \Gamma} \tau_t^{\Lambda, s}(A) \quad \mbox{for each } A \in \cA_{\Gamma}^{\rm loc} \mbox{ and } t \in \mathbb{R} \, .
\end{equation}

In terms of this infinite volume dynamics, and any $\gamma>0$, we now define a family of linear maps 
$\{ \mathcal{K}_s^{\gamma} : \cA_{\Gamma}^{\rm loc} \to \cA_{\Gamma} \}_{s \in [0,1]}$ by setting
\begin{equation} \label{sf_iv_wio}
\mathcal{K}_s^{\gamma}(A) = \int_{\mathbb{R}} \tau_t^s(A) \, W_{\gamma}(t) \, dt \quad \mbox{for any } A \in \cA_{\Gamma}^{\rm loc} \, .
\end{equation}
Here $W_{\gamma}$ is the weight function introduced in (\ref{def:Wgam1}) of Section~\ref{sec:ex_weight}. Often, we will regard $\gamma$ as fixed
and drop it from our notation. 

The following proposition relates the quantities introduced above to the methods discussed in Section~\ref{sec:qli_tl}.

\begin{prop} \label{prop:sf_wio_sat} Consider a quantum lattice system comprised of a $\nu$-regular metric space $(\Gamma, d)$ and $\cA_{\Gamma}$.  Suppose Assumption~\ref{sf_ass:qlm_ham} holds with a weighted $F$-function $F(r) =e^{-g(r)}F_0(r)$ for which
$\lim_{r \to \infty} g(r) =\infty$ and $F_0$ is as in (\ref{sf_base_F}). Then, for any $\gamma>0$, the family of maps $\{ \mathcal{K}_s^{\gamma} \}_{s \in [0,1]}$, as defined in
(\ref{sf_iv_wio}) above, satisfies the conditions of Assumption~\ref{ass:td_qlms}.
\end{prop}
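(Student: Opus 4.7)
The plan is to use, for any increasing and exhaustive sequence $\{\Lambda_n\}_{n\geq 1}\subset\cP_0(\Gamma)$, the natural finite-volume approximates $\mathcal{K}_s^{(n)}:\mathcal{A}_{\Lambda_n}\to\mathcal{A}_{\Lambda_n}$ defined by
\begin{equation*}
\mathcal{K}_s^{(n)}(A)=\int_{\mathbb{R}}\tau_t^{\Lambda_n,s}(A)\,W_\gamma(t)\,dt,
\end{equation*}
where $\tau_t^{\Lambda_n,s}$ is the Heisenberg dynamics associated to the finite-volume Hamiltonian $H_{\Lambda_n}(s)$ from (\ref{sf_fv_ham_mod}), and then verify each of the four parts of Assumption~\ref{ass:td_qlms} in turn. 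For part (i), which requires that each family $\{\mathcal{K}_s^{(n)}\}_{s\in[0,1]}$ satisfies Assumption~\ref{ass:fv_qlm_cont}, pointwise self-adjointness is immediate since $W_\gamma$ is real-valued and $\tau_t^{\Lambda_n,s}$ is an automorphism. Norm continuity in $s$ follows by applying Proposition~\ref{properties_of_expiths}(ii) with the parameter taken to be $s$, giving a Duhamel-style bound of the form $\|\tau_t^{\Lambda_n,s}(A)-\tau_t^{\Lambda_n,s_0}(A)\|\leq 2\|A\||t|\sup_r\|\Phi_{\Lambda_n}'(r)\||s-s_0|$, and then integrating against $|W_\gamma(t)|$; the needed finiteness of $\int|t||W_\gamma(t)|\,dt$ follows from Corollary~\ref{cor:weight_dec}. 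Strong continuity on bounded sets uniform in $s\in[0,1]$ is obtained by combining Proposition~\ref{properties_of_expiths}(iii) with a dominated convergence argument as in Example~\ref{ex_wio_sec_4}.

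For part (ii), since each $\tau_t^{\Lambda_n,s}$ is an automorphism, the trivial bound $\|\mathcal{K}_s^{(n)}(A)\|\leq\|W_\gamma\|_1\|A\|$ holds and I may take $p=0$ and $B(s)\equiv\|W_\gamma\|_1$. For part (iii), I invoke Theorem~\ref{thm:lrb2} for $\tau_t^{\Lambda_n,s}$; using the weighted structure $F=e^{-g}F_0$ and the monotonicity of $g$, one has $D(X,Y)\leq|X|\|F_0\|e^{-g(d(X,Y))}$, and therefore, uniformly in $n$,
\begin{equation*}
\|[\tau_t^{\Lambda_n,s}(A),B]\|\leq\frac{2\|F_0\|}{C_F}\|A\|\|B\||X|e^{v(s)|t|-g(d(X,Y))},\qquad v(s):=2C_F\|\Phi\|_F(s).
\end{equation*}
This is precisely the Lieb-Robinson hypothesis (\ref{sf_lrb}) of Lemma~\ref{lem:gen_G_ql_est}, whose conclusion yields the desired quasi-local bound for $\mathcal{K}_s^{(n)}$ with $q=1$ and decay function $G_\mathcal{G}^\epsilon$ from (\ref{gen_G_ql_dec_est}). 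Since $v(s)$ is locally bounded thanks to Assumption~\ref{sf_ass:qlm_ham}(iii) (via $\|\Phi\|_F(s)\leq\|\Phi\|_{1,1}(s)$) and $g(d)\to\infty$, this decay function vanishes at infinity as required.

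For part (iv), fix $X\in\cP_0(\Gamma)$, $A\in\mathcal{A}_X$, and take $n$ large enough that $X\subset\Lambda_n$; set $d_n=d(X,\Gamma\setminus\Lambda_n)$. Writing $F=e^{-g/2}F_{g/2}$ with $F_{g/2}=e^{-g/2}F_0$ itself an $F$-function (by (\ref{weightedF}), since $g/2$ remains non-decreasing and subadditive), the monotonicity of $g$ gives $\sum_{y\in\Gamma\setminus\Lambda_n}F(d(x,y))\leq\|F_{g/2}\|e^{-g(d_n)/2}$, so Corollary~\ref{cor:continuity}(iii) yields
\begin{equation*}
\|\tau_t^s(A)-\tau_t^{\Lambda_n,s}(A)\|\leq\frac{2\|A\||X|\|F_{g/2}\|}{C_F}v(s)|t|\,e^{v(s)|t|}e^{-g(d_n)/2}.
\end{equation*}
Combining this with the trivial bound $2\|A\|$, I then estimate $\|\mathcal{K}_s(A)-\mathcal{K}_s^{(n)}(A)\|$ by splitting $\int_{\mathbb{R}}|W_\gamma(t)|\,dt$ at $T=g(d_n)/(4v(s))$: for $|t|\leq T$ the first bound produces a factor $e^{v(s)T-g(d_n)/2}=e^{-g(d_n)/4}$ together with a polynomial in $g(d_n)$, while for $|t|>T$ the tail estimate on $W_\gamma$ from Corollary~\ref{cor:weight_dec} supplies a stretched-exponential decay controlled by $f_\gamma(T)$ as in (\ref{sub_exp_f}) with $f_\gamma(T)\sim g(d_n)/\ln(g(d_n))^2$. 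The resulting inequality verifies (iv) with $r=1$, an explicit decay function $H(d_n)\to 0$, and locally bounded prefactor $D(s)$. The principal obstacle lies precisely in this step: the Lieb-Robinson comparison between finite and infinite volumes grows exponentially in $|t|$, whereas $W_\gamma$ decays only sub-exponentially, so the split point $T$ must be tuned carefully and the compound rate, governed jointly by $g$ and the function $f_\gamma$, must be tracked to produce a bound uniform in $s\in[0,1]$ with explicit decay in $d_n$.
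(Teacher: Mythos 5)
Your proposal follows essentially the same route as the paper: use the finite-volume weighted integral operators $\mathcal{K}_s^{(n)}$ as the approximating family, verify (i) via the Duhamel derivative and the continuity arguments of Example~\ref{ex_wio_sec_4}, verify (ii) trivially with $p=0$, verify (iii) by feeding the Lieb-Robinson bound of Theorem~\ref{thm:lrb2} into Lemma~\ref{lem:gen_G_ql_est}, and verify (iv) by splitting the time integral at a $d_n$-dependent cutoff and combining Corollary~\ref{cor:continuity}(iii) with the tail estimate on $W_\gamma$ from Corollary~\ref{cor:weight_dec}.

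Two small points worth tightening. First, in part (iii) you write the Lieb-Robinson prefactor as $v(s)=2C_F\|\Phi\|_F(s)$, but Assumption~\ref{ass:td_qlms}(iii) requires the decay function $G$ to be \emph{independent of} $s$; the prefactor $C(t)$ may depend on $s$ but not $G$. Since the decay function $G_{\mathcal{G}}^\epsilon$ from Lemma~\ref{lem:gen_G_ql_est} is built out of the LR velocity, you need to pass to $v:=2C_F\sup_{s\in[0,1]}\|\Phi(s)\|_F$ (finite by Assumption~\ref{sf_ass:qlm_ham}(iii)) before invoking that lemma—this is exactly what the paper does. Second, in part (iv) your splitting $F=e^{-g/2}F_{g/2}$ with cutoff $T=g(d_n)/(4v)$ is a valid variant, but it gives a weaker exponential factor ($e^{-g(d_n)/2}$ instead of $e^{-g(d_n)}$) than what the paper gets by keeping $F$ intact and choosing $vT=(1-\epsilon)g(d_n)$; either way the resulting decay function $H(d_n)\to0$ is controlled by $f_{\gamma_\epsilon}(g(d_n))$ and the two approaches give comparable conclusions. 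Neither of these is a genuine obstruction; your proof is sound once the sup over $s$ is taken in (iii).
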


\begin{proof}
As is clear from the statement, the results we prove below hold for any $\gamma >0$. 
For convenience of presentation, we now fix such a value $\gamma>0$ and
then suppress it in our notation.

Let $\{ \Lambda_n \}_{n \geq 1}$ be any increasing, exhaustive sequence of (non-empty) finite subsets of $\Gamma$.
For each $n \geq 1$, define a family of maps $\{ \mathcal{K}_s^{(n)} \}_{s \in [0,1]}$, with 
$\mathcal{K}_s^{(n)} : \cA_{\Lambda_n} \to \cA_{\Lambda_n}$ for all $0 \leq s \leq 1$, by setting
\begin{equation} \label{sf_wio_fv}
\mathcal{K}_s^{(n)}(A) = \int_{\mathbb{R}} \tau_t^{\Lambda_n, s}(A) W_{\gamma}(t) \, dt \quad \mbox{for any } A \in \cA_{\Lambda_n} \, .
\end{equation}
Here we have used the finite volume dynamics, see (\ref{sf_fv_ham_mod}) and (\ref{sf_fv_dyn_mod}), with $\Lambda = \Lambda_n$.
We will show that any such choice of $\{ \Lambda_n \}_{n \geq 1}$ determines a sequence of maps 
$\mathcal{K}_s^{(n)}$, as defined in (\ref{sf_wio_fv}), which satisfies all four conditions in Assumption~\ref{ass:td_qlms}.
 
The first part of Assumption~\ref{ass:td_qlms} requires
we show that for each $n \geq 1$, the finite volume families of maps $\{ \mathcal{K}_s^{(n)} \}_{s \in [0,1]}$ satisfy Assumption~\ref{ass:fv_qlm_cont}. Since $W_{\gamma} \in L^1( \mathbb{R})$ is real-valued, 
Assumption~\ref{ass:fv_qlm_cont}~(i) is easily verified. We note that integrability of $W_{\gamma}$ is 
a consequence of the estimate (\ref{W_g_dec}) in Corollary~\ref{cor:weight_dec}.
To check the remaining parts of Assumption~\ref{ass:fv_qlm_cont}, we recall that
properties of maps with the form (\ref{sf_wio_fv}) were discussed in Example~\ref{ex_wio_sec_4}.
Assumption~\ref{sf_ass:qlm_ham} guarantees that the methods of
Example~\ref{ex_wio_sec_4} apply, and the remaining details are readily checked.

For our applications, the simple bound
\begin{equation}
\| \mathcal{K}_s^{(n)}(A) \| \leq \| W_{\gamma} \|_1 \| A \| 
\end{equation}
suffices, and thus Assumption~\ref{ass:td_qlms} (ii) is trivially satisfied with $p=0$ and $B(s) = \| W_{\gamma} \|_1$.

The uniform quasi-locality estimate in Assumption~\ref{ass:td_qlms} (iii) can be seen as follows. 
By Assumption~\ref{sf_ass:qlm_ham} (iii), $\Phi(s) \in \mathcal{B}_F$ for each $0 \leq s \leq 1$. 
As a result, given any $n \geq 1$, the model's finite-volume dynamics, i.e. $\tau_t^{\Lambda_n, s}$, satisfies a Lieb-Robinson bound as in Theorem~\ref{thm:lrb2}. In fact, for any $X, Y \subset \Lambda_n$, with $X \cap Y \neq \emptyset$, and any $A \in \cA_X$ and $B \in \cA_Y$, the bound
\begin{eqnarray}
\| [ \tau_t^{\Lambda_n, s}(A), B ] \| & \leq & \frac{2 \| A \| \| B \|}{C_F} \left( e^{2 C_F \| \Phi(s) \|_F |t|} - 1 \right) \sum_{x \in X} \sum_{y \in Y} F(d(x,y)) \nonumber \\
& \leq & C \| A \| \| B \| |X| e^{v|t| - g(d(X,Y))}
\end{eqnarray}  
holds for all $t \in \mathbb{R}$ and $s \in [0,1]$. Here we have estimated the weighted $F$-function $F(r) = e^{-g(r)}F_0(r)$ and set
\begin{equation} \label{def-lrb-v}
C = \frac{2}{C_F} \| F_0 \|  \quad \mbox{and} \quad v = 2 C_F \sup_{0 \leq s \leq 1} \| \Phi (s) \|_F \, .
\end{equation}

In this case, Lemma~\ref{lem:gen_G_ql_est} applies. We 
have that for any $0< \epsilon <1$ and $A$ and $B$ as above, the bound
\begin{equation} \label{sf_fv_wio_qlm_dec_est}
\| [\mathcal{K}_s^{(n)}(A), B] \| \leq 2 \| A \| \| B \| |X| G^{\epsilon}(d(X,Y)) 
\end{equation}
holds with decay function $G^{\epsilon}$ as in (\ref{gen_G_ql_dec_est}). In principle, here we have used that
$\lim_{r \to \infty} g(r) = \infty$, although one only needs that it becomes sufficiently large. Note further that
the bound in (\ref{sf_fv_wio_qlm_dec_est}) above is uniform in $n \geq 1$ as well as $0 \leq s \leq 1$, and
we have proven Assumption~\ref{ass:td_qlms} (iii).

We now demonstrate that Assumption~\ref{ass:td_qlms} (iv) holds. 
Fix $X \in \mathcal{P}_0( \Gamma)$ and $A \in \cA_X$. For $n \geq 1$ sufficiently large, $X \subset \Lambda_n$
and thus for any $T>0$, the following estimate holds:
\begin{equation}
\| \mathcal{K}_s(A) - \mathcal{K}_s^{(n)}(A) \| \leq \int_{|t| \leq T} \| \tau_t^s(A) - \tau_t^{\Lambda_n, s}(A) \| |W_{\gamma}(t)| \, dt + 2 \| A \| \int_{|t|>T} |W_{\gamma}(t)| \, dt .
\end{equation}
Appealing to the continuity result in Corollary~\ref{cor:continuity}, see specifically (\ref{sub_diff_bd_iv}), for any $t \in [-T,T]$, 
\begin{eqnarray}
\| \tau_t^s(A) - \tau_t^{\Lambda_n, s}(A) \| & \leq & \frac{\| A \|}{C_F} v|t| e^{v|t|} \sum_{x \in X} \sum_{y \in \Gamma \setminus \Lambda_n} F(d(x,y)) \nonumber \\
& \leq &  \frac{ v \| F_0 \|}{C_F} |X| \| A \| |t| e^{vT - g(d(X, \Gamma \setminus \Lambda_n))}
\end{eqnarray}
where $v$ is as in (\ref{def-lrb-v}), and in this particular application,
\begin{equation}
2 I_{t,0}(\Phi(s)) = 2C_F \| \Phi (s)\|_F |t| \leq v |t| \, ,
\end{equation}
see e.g. (\ref{ItsPhi}). Thus if $T>0$ is sufficiently large (e.g. $\gamma T \geq e^9$), then by 
Corollary~\ref{cor:weight_dec}, 
\begin{equation}
\int_{|t| > T} |W_{\gamma}(t)| \, dt \leq \frac{972}{49 \gamma \eta} c e^4 \left( f_{\gamma}(T) \right)^3 e^{-\eta f_{\gamma}(T)} \, ,
\end{equation}
where $f_\gamma(t)$ is as defined in (\ref{sub_exp_f}). Arguing now as in the proof of Lemma~\ref{lem:gen_F_ql_est}, let
$0< \epsilon <1$, set $d_n = d(X, \Gamma \setminus \Lambda_n)$, and take $T$ to be defined by $vT = (1- \epsilon) g(d_n)$.  We have proven that
there are positive numbers $C_1$ and $C_2$ for which 
\begin{equation}
\| \mathcal{K}_s(A) - \mathcal{K}_s^{(n)}(A) \| \leq 2 \| A \| |X| \left( C_1 + C_2 f_{\gamma_{\epsilon}}(g(d_n))^3 \right)e^{- \eta f_{\gamma_{\epsilon}}(g(d_n))}
\end{equation}
for $n \geq 1$ sufficiently large. For example, one may take
\begin{equation}
C_1 = \frac{v \| F_0 \|}{2C_F} \int_{\mathbb{R}} |t| |W_{\gamma}(t)| \, dt \quad \mbox{and} \quad C_2 = \frac{972}{49 \gamma \eta} c e^4 \, .
\end{equation}
This completes Assumption~\ref{ass:td_qlms} (iv), and so Proposition~\ref{prop:sf_wio_sat} is proven.
\end{proof}

We can now introduce the spectral flow for the class of models under consideration. 
As above, all comments below are valid for any choice of $\gamma>0$, which is suppressed in the notation.
The spectral flow automorphism can be defined for any choice of $\gamma >0$ and, under the general assumptions above, the 
spectral flow is quasi-local with $\gamma$-dependent estimates. It is only for the special relation with the
spectral projection $P(s)$ as in \eq{sf:followsp} that $\gamma$ needs to be a lower bound for the gap in the spectrum
as described in Assumption \ref{sf_ass:fv_uni_gap}.

Given Proposition~\ref{prop:sf_wio_sat}, we know that the family of maps $\{ \mathcal{K}_s \}_{s \in [0,1]}$ satisfies
Assumption~\ref{ass:td_qlms}. By Assumption~\ref{sf_ass:qlm_ham}, $\Phi'$ is a well-defined interaction on
$\mathcal{A}_{\Gamma}$, and moreover, $\Phi'$ is a suitable initial interaction in the sense of 
Assumption~\ref{ass:suit_td_int}, in particular, $\Phi_1' \in \mathcal{B}_F([0,1])$ as is clear from (\ref{sf_base_int_bd}). 
In this case, we are in a position to apply Theorem~\ref{thm:qli_conv}; we first introduce the relevant notation. 
Let $\{ \Lambda_n \}_{n \geq 1}$ be a sequence of (non-empty) increasing and exhaustive finite subsets of $\Gamma$. For each $n \geq 1$ and any 
$0 \leq s \leq 1$, consider the transformed (bounded) Hamiltonian 
\begin{equation} \label{fv-sf-gen}
\mathcal{K}_s^{(n)}(H^{\Phi'}_{\Lambda_n}(s)) = \sum_{X \subset \Lambda_n} \mathcal{K}_s^{(n)}( \Phi'(X,s)) 
\end{equation}
which acts on $\mathcal{H}_{\Lambda_n}$; compare with (\ref{fv_transformed_ham}). As in
Section~\ref{sec:qli_tl}, see the unique strong solution of 
\begin{equation}
\frac{d}{ds} U_n(s) = -i \mathcal{K}_s^{(n)}(H^{\Phi'}_{\Lambda_n}(s)) U_n(s) \quad \mbox{with} \quad U_n(0) = \idty
\end{equation}
can be used to define a family of automorphisms of $\cA_{\Lambda_n}$ by setting
\begin{equation} \label{fv_sf_dyn}
\alpha_s^{(n)}(A) = U_n(s)^* A U_n(s) \quad \mbox{for all } A \in \cA_{\Lambda_n} \mbox{ and } s \in [0,1] \,,
\end{equation}
see specifically \eqref{fv_app_td_qli} and \eqref{fv_app_qli_dyn}. We will refer to the automorphisms $\alpha_s^{(n)}$ as the finite-volume spectral flow dynamics. To estimate the quasi-locality of $\alpha_s^{(n)}$, we fix a locally normal
product state $\rho$ on $\cA_{\Gamma}$ and proceed as in Section~\ref{sec:qli_tl}. Consider the
finite-volume, $s$-dependent interaction $\Psi_n$ with terms $\Psi_n(Z,s)$, for $Z \subset \Lambda_n $ and $0 \leq s \leq 1$, defined by
\begin{equation} \label{sf_fv_qli_def}
\Psi_n(Z,s) = \sum_{m \geq 0} \sum_{\stackrel{X \subset Z:}{X(m) \cap \Lambda_n = Z}} \Delta_{X(m)}^{\Lambda_n} ( \mathcal{K}_s^{(n)}(\Phi'(X,s)))
\end{equation}
and further, the corresponding infinite-volume interaction $\Psi$ given by 
\begin{equation} \label{sf_iv_qli_def}
\Psi(Z,s) = \sum_{m \geq 0} \sum_{\stackrel{X \subset Z:}{X(m) = Z}} \Delta_{X(m)}( \mathcal{K}_s(\Phi'(X,s))) \quad \mbox{for any } Z \in \mathcal{P}_0( \Gamma) \mbox{ and each } s \in [0,1] \, , 
\end{equation}
again, one should compare with (\ref{fv_app_td_qli}) and (\ref{def_iv_qli}). 

The main result of this subsection is as follows.
\begin{thm} \label{thm:sf_ints_conv}  Consider a quantum lattice system comprised of a $\nu$-regular metric space $(\Gamma, d)$ and $\cA_{\Gamma}$. Suppose that Assumption~\ref{sf_ass:qlm_ham} holds with a weighted $F$-function of the form $F(r) =e^{-g(r)}F_0(r)$ where $F_0(r)$ is as in (\ref{sf_base_F}) and
\begin{equation} \label{g_growth}
g(r) \geq a r^{\theta}
\end{equation}
for some $a>0$ and $0< \theta \leq 1$. Then, $\Psi_n$ converges locally in $F$-norm to $\Psi$ with respect to an $F$-function on $(\Gamma, d)$. Here $\Psi_n$ and $\Psi$ are as defined in (\ref{sf_fv_qli_def}) and (\ref{sf_iv_qli_def}) above.
\end{thm}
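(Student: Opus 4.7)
The plan is to reduce the theorem to a direct application of Theorem~\ref{thm:qli_conv}(ii), taking the family $\{\mathcal{K}_s^\gamma\}_{s\in[0,1]}$ from \eqref{sf_iv_wio} as the uniformly locally normal quasi-local family and $\Phi'$ as the initial interaction. Proposition~\ref{prop:sf_wio_sat} already verifies Assumption~\ref{ass:td_qlms} for $\{\mathcal{K}_s^\gamma\}$; inspection of its proof shows that the relevant parameters are $p=0$, $q=1$, and $r=1$, so the moment needed for $\Phi'$ is $m = \max\{p,q,r\}=1$. Assumption~\ref{ass:suit_td_int} then amounts to $\Phi_1' \in \mathcal{B}_F([0,1])$, which follows immediately from \eqref{sf_base_int_bd}:
\begin{equation}
\|\Phi_1'\|_F(s)\;=\;\sup_{x,y\in\Gamma}\frac{1}{F(d(x,y))}\!\!\sum_{\substack{X\in\mathcal{P}_0(\Gamma):\\ x,y\in X}}\!\! |X|\,\|\Phi'(X,s)\|\;\leq\;\|\Phi\|_{1,1}(s),
\end{equation}
which is bounded on $[0,1]$.

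The main technical content is to verify the decay hypotheses of Theorem~\ref{thm:qli_conv}(ii): that for some $\alpha\in(0,1)$ the function $(G^\epsilon)^\alpha$ has finite $(2\nu+1)$-th moment, and that an $F$-function $\tilde F$ exists satisfying the analogue of \eqref{good_F_dec} with $G$ replaced by $(G^\epsilon)^\alpha$. Here $G^\epsilon$ is the decay function for $\mathcal{K}_s^\gamma$ furnished by \eqref{gen_G_ql_dec_est}, applied with the Lieb--Robinson decay $e^{v|t|-g(d)}$ built from the weighted $F$-function $F(r)=e^{-g(r)}F_0(r)$. The key input is the stretched-exponential lower bound $g(r)\geq a r^\theta$. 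As noted in the discussion following Lemma~\ref{lem:gen_F_ql_est}, choosing any $\delta<1/2$ and using $\ln(x)\leq C_\delta x^\delta$ for $x\geq 1$ gives
\begin{equation}
f_{\gamma_\epsilon}(g(r))\;\geq\;C_\delta^{-2}(\gamma_\epsilon g(r))^{1-2\delta}\;\geq\;c'\, r^{\theta(1-2\delta)},
\end{equation}
so that \eqref{gen_G_ql_dec_est} yields a stretched-exponential bound $G^\epsilon(r)\leq C\, e^{-\beta r^{\theta'}}$ for some $\beta>0$ and $\theta'=\theta(1-2\delta)\in(0,\theta]$. In particular $(G^\epsilon)^\alpha$ decays at least as fast as $e^{-\alpha\beta r^{\theta'}}$ for any $\alpha\in(0,1)$, hence has all finite polynomial moments, and the tail
\begin{equation}
T(r)\;:=\;\sum_{n\geq\lfloor r/3\rfloor}(1+n)^{\nu+1}(G^\epsilon(n))^\alpha
\end{equation}
also decays stretched-exponentially, with the same exponent $\theta'$ and a slightly smaller constant.

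It remains to produce $\tilde F$. Set $\tilde g(r)=\min\{g(r/3),\,\alpha\beta r^{\theta'}/2\}$, a non-decreasing subadditive function (both $g(r/3)$ and $r^{\theta'}$ for $\theta'\in(0,1]$ are subadditive and non-decreasing, and the minimum of subadditive non-decreasing functions is subadditive and non-decreasing). Define $\tilde F(r)=e^{-\tilde g(r)}F_0(r)$ with $F_0(r)=(1+r)^{-\xi}$, $\xi>\nu+1$; by the results of Appendix~\ref{app:sec_def_F} (see Section~\ref{subsec:regsets} and \eqref{weightedF}), $\tilde F$ is an $F$-function on $(\Gamma,d)$. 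By construction $F(r/3)\leq \tilde F(r)$ and $T(r)\leq C''\tilde F(r)$, so the analogue of \eqref{good_F_dec} holds. All hypotheses of Theorem~\ref{thm:qli_conv}(ii) are now satisfied, and we conclude that $\Psi\in\mathcal{B}_{\tilde F}([0,1])$ and that $\Psi_n\to\Psi$ locally in $F$-norm with respect to $\tilde F$.

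The principal obstacle in executing this plan is bookkeeping rather than substance: one must track how the weight $g$ in $F$ propagates through Lemma~\ref{lem:gen_G_ql_est} into the decay of $G^\epsilon$, and then through the tail sum $T(r)$ and the construction of $\tilde g$, while checking at each stage that subadditivity and monotonicity are preserved so that $\tilde F$ remains a legitimate weighted $F$-function on a $\nu$-regular metric space. The stretched-exponential hypothesis $g(r)\geq a r^\theta$ is precisely what makes all the moment conditions survive these compositions; a weaker polynomial-type weight would fail as explained in the remark ``On polynomial decay'' following Lemma~\ref{lem:gen_F_ql_est}.
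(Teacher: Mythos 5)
Your overall strategy matches the paper's: both reduce Theorem~\ref{thm:sf_ints_conv} to Theorem~\ref{thm:qli_conv}(ii), using Proposition~\ref{prop:sf_wio_sat} to furnish Assumption~\ref{ass:td_qlms}, and you correctly read off $p=0$, $q=r=1$ so that Assumption~\ref{ass:suit_td_int} is exactly $\Phi_1'\in\cB_F([0,1])$, which is indeed immediate from \eqref{sf_base_int_bd}. The moment estimates from $g(r)\geq ar^\theta$, giving stretched-exponential decay of $G^\epsilon$ and hence of $(G^\epsilon)^\alpha$, also parallel the paper's reasoning and the remark following Lemma~\ref{lem:gen_F_ql_est}.

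There is, however, a genuine gap in your construction of $\tilde F$. You set $\tilde g(r)=\min\{g(r/3),\,\alpha\beta r^{\theta'}/2\}$ and assert that the minimum of two non-negative, non-decreasing, subadditive functions is again subadditive. This is false. A concrete counterexample: let $f_1$ and $f_2$ be the piecewise-linear non-decreasing functions on $[0,\infty)$ with $f_i(0)=0$, determined by their values at integers
\begin{equation*}
f_1(1)=1,\quad f_1(2)=1.8,\quad f_1(n)=2.8 \ (n\geq 3);\qquad
f_2(1)=1.2,\quad f_2(2)=1.5,\quad f_2(n)=2.7 \ (n\geq 3),
\end{equation*}
with linear interpolation in between. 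One can check directly that both $f_1$ and $f_2$ are subadditive (the borderline cases occur at $r+s=3$, where equality holds). Yet $h:=\min(f_1,f_2)$ satisfies
\begin{equation*}
h(3)=\min(2.8,2.7)=2.7 \;>\; 2.5 \;=\; 1+1.5 \;=\; h(1)+h(2),
\end{equation*}
so $h$ is not subadditive. The mechanism is the ``crossing'' of $f_1$ and $f_2$: at $r=1$ the first function is smaller, at $s=2$ the second is smaller, and neither of the two subadditivity inequalities $f_i(r+s)\leq f_i(r)+f_i(s)$ controls the mixed sum $f_1(r)+f_2(s)$. Without subadditivity, $\tilde F(r)=e^{-\tilde g(r)}F_0(r)$ need not satisfy the convolution condition \eqref{Fconv} and therefore need not be a legitimate $F$-function.

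The fix, which is exactly what the paper does, is to avoid the minimum entirely: observe that for $d$ large the \emph{slower}-decaying of your two terms (the stretched-exponential tail coming from $G^\epsilon$) dominates, since $f_b(x)\leq x$ once $\ln(bx)^2\geq b$, so $e^{-g(d/3)}$ decays at least as fast as $e^{-\eta' f_{\gamma_\epsilon}(g(\lfloor d/3\rfloor))}$. Thus one can take $\tilde F$ to be of the form \eqref{tilde-F-form}: a single subadditive weight $\tilde g(d)=\tilde a d^\theta$ (for a possibly smaller $\tilde a$ and an exponent strictly less than $\eta$) valid for $d>d_*'$, patched with a constant for $0\leq d\leq d_*'$. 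Proposition~\ref{prop:mod_F_1} guarantees such a piecewise function is an $F$-function. With that correction your argument becomes essentially identical to the paper's, and the rest of your proposal, including the use of $(G^\epsilon)^\alpha$ for $\alpha\in(0,1)$ to obtain local $F$-norm convergence via Theorem~\ref{thm:qli_conv}(ii), is sound.
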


We make some comments and point out two corollaries before proving the theorem. 

First, in principle, one can do better than the growth assumption in (\ref{g_growth}); in fact,
one needs only that there is some $0< \epsilon <1$ for which the decay function $G^{\epsilon}$, see (\ref{sf_fv_wio_qlm_dec_est}), satisfies
the conditions of Theorem~\ref{thm:qli_conv} (ii). As can be seen from our comments in Section~\ref{sec:ql_wios} after 
Lemma~\ref{lem:gen_F_ql_est}, any weight function $g$ satisfying (\ref{g_growth}) corresponds to such a decay function $G^{\epsilon}$ which satisfies
the conditions of Theorem~\ref{thm:qli_conv} (ii). However, no weight function $g$ which grows proportional to a 
logarithm, see (\ref{log_g}), corresponds to a decay function 
$G^{\epsilon}$ satisfying the conditions of Theorem~\ref{thm:qli_conv}~(ii).

Next, let us say some more about the $F$-function whose existence plays a crucial role in
the proof of Theorem~\ref{thm:sf_ints_conv}. Given the decay of the initial interaction $\Phi$, see (\ref{sf_base_int_bd}), it is proven
in Proposition~\ref{prop:sf_wio_sat} that the weighted integral operators used in defining the generator of the spectral flow, see (\ref{sf_wio_fv})
and subsequently (\ref{fv-sf-gen}), satisfy the quasi-locality estimate (\ref{sf_fv_wio_qlm_dec_est}). The corresponding decay function $G$
(take $\epsilon = 1/2$ for convenience) has the following form: there exist positive numbers $C_1$, $C_2$, and $d_*$ for which
\begin{equation}
G(d) \sim \left\{ \begin{array}{cl} C_1 & 0 \leq d \leq d_* , \\
C_2 \,  {\rm exp} \left[- \eta f_{\frac{\gamma}{2v}}(g(d)) \right] & d > d_*  \end{array} \right.
\end{equation} 
where we stress that the function $g$ above corresponds to the weight in the $F$-function governing the
decay of the initial interaction $\Phi$. It should be clear that any $F$-function governing the decay of $\Psi$ 
(and similarly $\Psi_n$) will decay no faster than this $G$. Our estimates show that: there are positive numbers $C_1'$, $C_2'$, and $d_*'$ 
for which $\Psi, \Psi_n \in \mathcal{B}_{\tilde{F}}([0,1])$ with 
\begin{equation} \label{tilde-F-form}
\tilde{F}(d) = \begin{cases}
C_1' & 0 \leq d \leq d_*'  \\
\frac{C_2' }{(1+d)^{\xi}}\exp\left[ - \eta' f_{\frac{\gamma}{2v}}( \tilde{g}(d)) \right] & d > d_*' \end{cases}
\end{equation} 
%\begin{equation} \label{tilde-F-form}
%\tilde{F}(d) = \left\{ \begin{array}{cl} C_1' & 0 \leq d \leq d_*' , \\
%\frac{C_2' }{(1+d)^{\xi}}\exp\left[ - \eta' f_{\frac{\gamma}{2v}}( \tilde{g}(d)) \right] & d > d_*'  \end{array} \right.
%\end{equation} 
where $\eta'$ is any number strictly less that $\eta$ and the function $\tilde{g}$ may be taken as
$\tilde{g}(d) = \tilde{a} d^{\theta}$ with the same value of $\theta$ (as $g$) but, in general, 
a smaller value of $a$. As is discussed in Section~\ref{app:sec_weight_F},
any function with the form (\ref{tilde-F-form}) is an $F$-function on $(\Gamma, d)$.  
To obtain the local convergence of $\Psi_n$ to $\Psi$, we will need to modify the above
$F$-function slightly, but all relevant estimates on $\Psi$ and $\Psi_n$, see for example 
Corollary~\ref{cor:sf_est_1} and Corollary~\ref{cor:sf_est_2} below, will be made with respect to the function
$\tilde{F}$ described above.  For more details on this, see the discussion following the
statement of Theorem~\ref{thm:qli_conv}.

Finally, our proof of Theorem~\ref{thm:sf_ints_conv} guarantees that Theorem~\ref{thm:exist_iv_dyn} applies and so
\begin{equation} \label{sf-dyn-as-limit}
\lim_{n \to \infty} \alpha_s^{(n)}(A) = \alpha_s(A) \quad \mbox{for any } A \in \mathcal{A}_{\Gamma}^{\rm loc} \mbox{ and } s \in [0,1] \, .
\end{equation}
Here the limit is in norm and the quantity, $\alpha_s$, is the well-defined, infinite-volume
dynamics associated to $\Psi \in \mathcal{B}_{\tilde{F}}([0,1])$, with terms as in (\ref{sf_iv_qli_def}), whose existence is
guaranteed by Theorem~\ref{thm:existbd}.

For ease of later reference, we now state two corollaries providing explicit estimates on the quasi-locality of
the spectral flow. 
\begin{cor} \label{cor:sf_est_1} Under the assumptions of Theorem~\ref{thm:sf_ints_conv}, for any $X,Y \in \mathcal{P}_0( \Gamma)$ with 
$X \cap Y = \emptyset$, the bound 
\begin{equation}
\| [ \alpha_s(A), B] \|  \leq  \frac{2 \| A \| \| B \| }{C_{\tilde{F}}} \left( e^{2 I_{s,0}( \Psi)} - 1 \right) \sum_{x \in X} \sum_{y \in Y} \tilde{F}(d(x,y))
\end{equation}
holds for any $A \in \cA_X$, $B \in \cA_Y$, and $0 \leq s \leq 1$. Here $\tilde{F}$ may be taken as in (\ref{tilde-F-form}).
\end{cor}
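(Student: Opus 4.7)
The plan is to recognize that under the hypotheses of Theorem~\ref{thm:sf_ints_conv}, the spectral flow $\alpha_s$ is precisely the infinite-volume dynamics generated by the strongly continuous interaction $\Psi \in \mathcal{B}_{\tilde{F}}([0,1])$ defined in (\ref{sf_iv_qli_def}), and then apply the Lieb-Robinson bound already proven for such dynamics in Corollary~\ref{cor:continuity}(i). No new quasi-locality argument is needed; the work is all bookkeeping to match the objects.

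First, I would recall from the paragraph preceding Corollary~\ref{cor:sf_est_1} (in particular (\ref{sf-dyn-as-limit})) that $\alpha_s(A) = \lim_{n \to \infty} \alpha_s^{(n)}(A)$ in norm for each $A \in \mathcal{A}_\Gamma^{\rm loc}$, and that this convergence follows from Theorem~\ref{thm:exist_iv_dyn} applied to the sequence $\Psi_n$ of interactions, which converge locally in $F$-norm to $\Psi$ with respect to $\tilde{F}$ by Theorem~\ref{thm:sf_ints_conv}. Hence $\alpha_s$ coincides with the infinite-volume dynamics $\tau_{s,0}^{\Psi}$ associated to $\Psi$ (with trivial on-site terms $H_z = 0$), whose existence is granted by Theorem~\ref{thm:existbd}.

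Next, since $\Psi \in \mathcal{B}_{\tilde F}([0,1])$, I would invoke Corollary~\ref{cor:continuity}(i) with $\Phi \leftrightarrow \Psi$, $F \leftrightarrow \tilde F$, initial time $0$ and final time $s$. This yields, for disjoint $X, Y \in \mathcal{P}_0(\Gamma)$, any $A \in \mathcal{A}_X$ and $B \in \mathcal{A}_Y$,
\[
\|[\alpha_s(A), B]\| \;\leq\; \frac{2\|A\|\|B\|}{C_{\tilde F}}\bigl(e^{2I_{s,0}(\Psi)}-1\bigr)\, D(X,Y),
\]
where $D(X,Y)$ is the minimum defined in (\ref{lrbmin}) with $F$ replaced by $\tilde F$. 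Using the trivial containments $\partial_\Psi X \subseteq X$ and $\partial_\Psi Y \subseteq Y$, we bound
\[
D(X,Y) \;\leq\; \sum_{x \in X}\sum_{y \in Y} \tilde F(d(x,y)),
\]
and the stated inequality follows. The estimate extends from $A \in \mathcal{A}_X$ to all of $\mathcal{A}_X$ (in case we needed any density argument) because the right-hand side of Corollary~\ref{cor:continuity}(i) depends continuously on $\|A\|$ and $\alpha_s$ is an automorphism of $\mathcal{A}_\Gamma$, so no separate extension step is required.

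I do not expect any real obstacle here: the substantive work has already been absorbed into Theorem~\ref{thm:sf_ints_conv}, which provides the single-$F$-function $\tilde F$ controlling every $\Psi_n$ uniformly and hence $\Psi$. The only mild subtlety is to confirm that the choice of on-site Hamiltonians is indeed trivial for the Heisenberg dynamics in question (so that Corollary~\ref{cor:continuity}(i) applies in the form stated, rather than via the interaction-picture detour used in Theorem~\ref{thm:lrb2}); this is automatic because the spectral flow is generated by the bounded, strongly continuous family $\mathcal{K}_s^{(n)}(H_{\Lambda_n}^{\Phi'}(s)) = \sum_{X \subset \Lambda_n} \Psi_n(X,s)$, whose limit is governed entirely by $\Psi$ with no unbounded on-site contribution.
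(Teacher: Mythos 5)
Your argument is correct and follows the same route as the paper: the paper justifies the corollary with the one-line remark that, since $\Psi \in \mathcal{B}_{\tilde F}([0,1])$, the bound is an immediate consequence of Corollary~\ref{cor:continuity}(i). Your write-up simply fills in the supporting identifications (that $\alpha_s$ is the infinite-volume dynamics generated by $\Psi$, that the on-site terms vanish, and that $D(X,Y)$ in (\ref{lrbmin}) is dominated by the full double sum), all of which are exactly what the paper leaves implicit.
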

Since $\Psi \in \mathcal{B}_{\tilde{F}}([0,1])$, the above is an immediate consequence of Corollary~\ref{cor:continuity} (i).
Our estimates actually show that $\sup_{0 \leq s \leq 1} \| \Psi \|_{\tilde{F}}(s) < \infty$ and, therefore, we have
\begin{equation} \label{sf-psi-int-F-bd}
I_{s,0}(\Psi) = C_{\tilde{F}} \int_0^s \| \Psi \|_{\tilde{F}}(r) \, d r \leq s C_{\tilde{F}} \sup_{0 \leq s \leq 1} \| \Psi \|_{\tilde{F}}(s) \, .
\end{equation}
Given (\ref{sf-psi-int-F-bd}), it is clear that the bound in Corollary~\ref{cor:sf_est_1} may be
further estimated with a linear dependence on $s$. This observation is useful in some applications.

The following corollary is a direct application of Theorem~\ref{thm:exist_iv_dyn} (i).

\begin{cor} \label{cor:sf_est_2} Under the assumptions of Theorem~\ref{thm:sf_ints_conv}, for any $X \in \mathcal{P}_0( \Gamma)$, the bound 
\begin{equation} \label{alpha-I}
\| \alpha_s(A) - A \| \leq  2  |X| \| A \| \| \tilde{F} \| \int_0^s \| \Psi \|_{\tilde{F}}(r) \, dr  
\end{equation}
holds for all $A \in \cA_X$ and $0 \leq s \leq 1$. Here $\tilde{F}$ may be taken as in (\ref{tilde-F-form}).
\end{cor}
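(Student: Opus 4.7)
The plan is to read off the bound as a direct application of the continuity estimate \eqref{dyn_continuous} in Theorem~\ref{thm:exist_iv_dyn}(i) applied to the interaction $\Psi$ with respect to the $F$-function $\tilde F$ constructed in Theorem~\ref{thm:sf_ints_conv}. The key observation is simply that the spectral flow $\alpha_s$ is precisely the infinite-volume dynamics generated by $\Psi$ on $\cA_\Gamma$, starting from the initial time $s_0 = 0$.

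First I would note that, by Theorem~\ref{thm:sf_ints_conv}, we have $\Psi \in \cB_{\tilde F}([0,1])$ and the finite-volume interactions $\Psi_n$ (defined in \eqref{sf_fv_qli_def}) converge locally in $F$-norm to $\Psi$ with respect to $\tilde F$. The estimates in the proof of Theorem~\ref{thm:qli_conv}(ii), which underlie this local convergence, also provide the uniform bound $\sup_{n \geq 1} \int_0^1 \| \Psi_n \|_{\tilde F}(r)\, dr < \infty$, which is precisely the extra hypothesis \eqref{sup_bd} needed to invoke Theorem~\ref{thm:exist_iv_dyn}(i). Consequently, the finite-volume dynamics $\tau_{s,0}^{\Psi_n}$ generated by $\Psi_n$ converge in norm on $\cA_\Gamma^{\rm loc}$ to the infinite-volume dynamics $\tau_{s,0}^{\Psi}$ generated by $\Psi$.

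Next I would identify this limiting dynamics with $\alpha_s$. By construction in Section~\ref{sec:qli_tl}, the transformed Hamiltonians satisfy $\cK_s^{(n)}(H_{\Lambda_n}^{\Phi'}(s)) = H_{\Lambda_n}^{\Psi_n}(s)$, so the finite-volume spectral flow $\alpha_s^{(n)}$ defined in \eqref{fv_sf_dyn} coincides with the dynamics $\tau_{s,0}^{\Psi_n, \Lambda_n}$ associated to the finite-volume restriction of $\Psi_n$. The argument leading to \eqref{sf-dyn-as-limit} then identifies $\alpha_s(A) = \lim_{n\to\infty} \alpha_s^{(n)}(A) = \tau_{s,0}^{\Psi}(A)$ for all $A \in \cA_\Gamma^{\rm loc}$.

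Given this identification, the bound \eqref{alpha-I} is immediate from \eqref{dyn_continuous} in Theorem~\ref{thm:exist_iv_dyn}(i), applied with $\Phi$ replaced by $\Psi$, $F$ replaced by $\tilde F$, and the interval $[s,t]$ replaced by $[0,s]$. There is no substantial obstacle; the only nontrivial ingredient has already been absorbed into Theorem~\ref{thm:sf_ints_conv}, namely verifying that $\Psi \in \cB_{\tilde F}([0,1])$ for the explicit $F$-function $\tilde F$ of the form \eqref{tilde-F-form}. The extension from $A \in \cA_\Gamma^{\rm loc}$ to $A \in \cA_X$ for arbitrary finite $X$ is of course trivial since $\cA_X \subset \cA_\Gamma^{\rm loc}$.
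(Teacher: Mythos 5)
Your proof is correct and follows exactly the route the paper indicates: the text preceding Corollary~\ref{cor:sf_est_2} states that it is "a direct application of Theorem~\ref{thm:exist_iv_dyn}~(i)," i.e.\ the estimate \eqref{dyn_continuous} applied to $\Psi$ and $\tilde F$ after identifying $\alpha_s$ with the infinite-volume dynamics generated by $\Psi$. You have simply spelled out the intermediate steps (the uniform bound on $\int_0^1\|\Psi_n\|_{\tilde F}$, the identification $\alpha_s = \tau_{s,0}^{\Psi}$ via \eqref{sf-dyn-as-limit}) that the paper leaves implicit.
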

 
\begin{proof}[Proof of Theorem~\ref{thm:sf_ints_conv}:]
It is clear that, under the decay assumptions (\ref{g_growth}), Proposition~\ref{prop:sf_wio_sat} holds. 
In this case, for each $\gamma>0$, the family of maps $\{ \mathcal{K}_s^{\gamma} \}_{s \in [0,1]}$
satisfies Assumption~\ref{ass:td_qlms}, and moreover, $\Phi'$ is a suitable initial interaction in the sense
of Assumption~\ref{ass:suit_td_int}. As before, we will suppress the dependence on $\gamma>0$ in what follows.

For any $x,y \in \Gamma$ and $n \geq 1$ large enough so that $x,y \in \Lambda_n$, an application of
Theorem~\ref{thm:qli_est} shows 
\begin{equation} \label{sf_base_int_est_1}
\sum_{\stackrel{Z \subset \Lambda_n:}{x,y \in Z}} \| \Psi_n(Z,s) \| \leq C_1 F(d(x,y)/3) + 
C_2 \sum_{m = \lfloor d(x,y)/3 \rfloor}^{\infty} (1+m)^{\nu +1} G^{\epsilon}(m)
\end{equation}
where $\Psi_n$ is the finite-volume interaction in (\ref{sf_fv_qli_def}), $F$ is the weighted $F$-function
governing the decay of $\Phi$ as in Assumption~\ref{sf_ass:qlm_ham} (iii), and $G^{\epsilon}$ is the 
decay function associated to the family $\{ \mathcal{K}_s^{(n)} \}_{s \in [0,1]}$ as in (\ref{sf_fv_wio_qlm_dec_est}), see also \eqref{gen_G_ql_dec_est}. Our estimates show that one may take 
\begin{equation}
C_1 =  \left( \| W_{\gamma} \|_1 + 8 \kappa^2 \sum_{m=0}^{\infty} (1+ m)^{2 \nu +1}G^{\epsilon}(m) \right) \sup_{0 \leq s \leq 1} \| \Phi \|_{1,1}(s)
\end{equation}
and $C_2 = 8 \kappa \| F \| \sup_{0 \leq s \leq 1} \| \Phi \|_{1,1}(s) $. As we have argued before, the analogue of (\ref{sf_base_int_est_1}) also holds with the interaction $\Psi$ replacing $\Psi_n$ on the left-hand-side and the right-hand-side unchanged.

It is also clear that the decay assumption (\ref{g_growth}) guarantees that for any $0< \delta < \eta$,
\begin{equation}
C_{\delta} = \sum_{m=0}^{\infty} (1+m)^{\nu + 1 + \xi} \left( \frac{C}{2 v} + \frac{243}{49 \gamma \eta} c e^4 f_{\gamma_{\epsilon}}(g(m))^3 \right) e^{- \delta f_{\gamma_{\epsilon}}(g(m))} < \infty
\end{equation}
As such, for $d = d(x,y)$ sufficiently large, we may further estimate the right-hand-side of (\ref{sf_base_int_est_1}) by 
\begin{equation}
C_1 \frac{e^{-g(d/3)}}{(1+ d/3)^{\xi}} + C_2 C_\delta \frac{e^{-(\eta-\delta) f_{\gamma_{\epsilon}}( g(\lfloor d/3 \rfloor))} }{(1+ \lfloor d/3 \rfloor)^{\xi}}.
\end{equation}
For large $d$, the second term above dominates. Using the
facts provided in Section~\ref{subsec:regsets}, it is clear that an $F$-function, $\tilde{F}$, of the form in (\ref{tilde-F-form}) bounds this quantity. For any such $\tilde{F}$, our estimates are uniform with respect to $s$, and so we have proven that 
\begin{equation}
\sup_{0 \leq s \leq 1} \| \Psi \|_{\tilde{F}}(s) < \infty \, .
\end{equation}
In fact, the same uniform estimate also holds for the finite-volume interactions $\Psi_n$. 

Since the form of the decay function $G^{\epsilon}$, as in (\ref{sf_fv_wio_qlm_dec_est}), 
is explicit, see (\ref{gen_G_ql_dec_est}), it is clear that
$\sqrt{G^{\epsilon}}$ has a finite $2 \nu +1$ moment. Arguing as above, a similar, yet different, 
$F$-function $\hat{F}$ can be produced which satisfies the assumptions of 
Theorem~\ref{thm:qli_conv} (ii) with $\alpha =1/2$; in fact, any choice of $0< \alpha <1$
suffices. In this case, $\Psi_n$ converges locally in $F$-norm to $\Psi$ with respect to this
function $\hat{F}$. As indicated previously, for the estimates in Corollary~\ref{cor:sf_est_1}
and Corollary~\ref{cor:sf_est_2}, one can use the original $F$-function $\tilde{F}$
having the form (\ref{tilde-F-form}).
\end{proof}

\section{Automorphic equivalence of gapped ground state phases}\label{sec:equivalence_of_phases}

\subsection{Uniformly gapped curves and automorphic equivalence} In this section, we use the spectral flow to study gapped ground state phases of a quantum lattice model $(\Gamma, d)$ and $\cA_\Gamma$. As in the previous sections, we will discuss both finite and infinite volume systems and take the thermodynamic limit along a sequence of increasing and absorbing finite volumes. To this end, we will consider the following set up for this section.

Throughout this section, let $(\Gamma,d)$ be a fixed $\nu$-regular metric space with a weighted $F$-function 
of the form $F(r)= e^{-g(r)}F_0(r)$, where $F_0$ is an $F$-function for $(\Gamma,d)$ of the form \eq{sf_base_F} and $g$ is a non-negative, non-decreasing, subadditive function bounded below by $ar^\theta$ for some $\theta\in(0,1]$. In addition, we consider a fixed sequence of increasing and absorbing finite volumes $\Lambda_n \uparrow\Gamma$, and with the convention that we always take the thermodynamic limit with respect to a subsequence of this sequence. We will use the notation $\cB_F^1([0,1])$ to denote the space of differentiable curves of interactions $\Phi(s)\in \cB_F$, $s\in[0,1]$, satisfying  Assumption \ref{sf_ass:qlm_ham}. At each $x\in \Gamma$ we may have a densely defined self-adjoint $H_x$, but these
         we regard as fixed. Specifically, we only consider here relations between models with different interactions $\Phi(s)$ but with the same $\{H_x\mid x\in\Gamma\}$.

For simplicity we will assume that the finite-volume Hamiltonians for
the models parametrized by $s\in [0,1]$, are defined by
\be
H_{\Lambda}(s)=\sum_{x\in\Lambda} H_x  +\sum_{X\subset\Lambda} \Phi(X,s).
\label{hamcurve}\ee

Within the context described above, we now introduce the notion of a uniformly gapped curve of models or, equivalently, a curve of uniformly gapped interactions for which we use the notation 
$E_\Lambda(s)=\inf\spec H_{\Lambda}(s)$ to denote the ground state energy of $H_\Lambda(s)$.

\begin{defn}\label{def:uniformlygapped}
Let $\gamma >0$. A curve of interactions $\Phi\in\cB^1_F([0,1])$ is called {\em uniformly gapped with gap $\gamma$}, if there exists a non-negative sequence $(\delta_n)_n$, with $\lim_n \delta_n =0$, 
such that for all $n\geq 1$ and $s\in [0,1]$
\be
\spec H_{\Lambda_n}(s) \subset [E_{\Lambda_n}(s), E_{\Lambda_n}(s)+\delta_n] \cup [E_{\Lambda_n}(s)+\delta_n+\gamma,\infty),
\label{uniformlygapped}\ee
where $H_{\Lambda_n}(s)$ is the finite volume Hamiltonian defined in \eq{hamcurve}.
\end{defn}

We can leave $\gamma$ unspecified and call the curve simply {\em uniformly gapped} if there exists $\gamma>0$ such that it is uniformly gapped with gap $\gamma$. 

It is well-known that the spectral gap generally depends on the boundary conditions. Our choice to define the path of Hamiltonians \eqref{hamcurve} with respect to a single interaction leads to boundary conditions that are not necessarily the most general ones of interest; studying all possible cases at once would lead to quite onerous notation, 
	which we want to avoid in this discussion. Suffice it to note that everything in this section could be generalized to the situation where we have boundary conditions expressed by a sequence $\Phi_n\in \cB_F^1([0,1])$, by requiring that $\Phi_n$ converges
	locally (in a uniform version of Definition \ref{def:lcfnorm}) in a suitable norm to some $\Phi\in\cB_F^1([0,1])$. In this case, $\Phi_n$ is then used to define the Hamiltonian \eq{hamcurve} on the volume $\Lambda_n$ for each $n$. For example, this can be used to study finite systems in $\Ir^\nu$ with periodic boundary conditions. Even without considering $n$-dependent interactions, the present set-up allows one to study the effects of certain boundary conditions. For example, by replacing $\Gamma$ by a subset $\Gamma_0\subset\Gamma$ and different sequences of finite volumes $\Lambda_n$, models defined with the same interaction $\Phi$ may show different behavior. An example of this is discussed in detail for a class of so-called PVBS models in \cite{bachmann:2015,bishop:2016a}. There, $\Gamma_0$ is the half-space in $\Gamma=\Ir^\nu$ defined by an arbitrary hyperplane. For these models, the spectral gap is shown to depend non-trivially on the orientation of the hyperplane.

We use the notion of a uniform gap to define a relation $\sim$ on $\cB_F$ as follows.

\begin{defn}\label{def:equivalence_gapped_phases}
For $\Phi_0,\Phi_1\in\cB_F$, we say that $\Phi_0$ and $\Phi_1$ are {\em equivalent}, denoted by $\Phi_0\sim\Phi_1$, if there 
exists a uniformly gapped curve $\Phi\in\cB^1_F([0,1])$ such that $\Phi(0)=\Phi_0$ and $\Phi(1)=\Phi_1$.
\end{defn}

In the physics literature, two models $\Phi_0$ and $\Phi_1$ are said to be in the same {\it gapped 
ground state phase} if $\Phi_0\sim\Phi_1$ \cite{chen:2010,chen:2011}. Studying curves of models has proved to be fruitful also in mathematical studies \cite{bachmann:2012,bachmann:2012bf,bachmann:2014a,bachmann:2014c,bachmann:2015a}.
In this section we explore some essential properties of models that belong to the same gapped ground state phase. First, however, we show that the relation $\sim$, used to define this notion is indeed an equivalence relation.

\begin{prop}\label{prop:equivalent_gapped_phases}
The relation $\sim$ defined in Definition \ref{def:equivalence_gapped_phases} is an equivalence relation on $\cB_F$.
\end{prop}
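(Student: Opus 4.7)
The plan is to verify the three axioms of an equivalence relation -- reflexivity, symmetry, and transitivity -- for $\sim$ on $\cB_F$. Note that reflexivity $\Phi\sim\Phi$ forces $\Phi$ itself to have a uniformly gapped finite-volume spectrum (via the constant curve), so the relation is naturally understood on the subset of $\cB_F$ consisting of such interactions. For reflexivity, I would simply take the constant curve $\Phi(s)\equiv\Phi$, which lies in $\cB_F^1([0,1])$ with $\Phi'(X,s)=0$, trivially satisfying Assumption \ref{sf_ass:qlm_ham}(iii); the corresponding finite-volume Hamiltonians $H_{\Lambda_n}(s)=H_{\Lambda_n}$ are $s$-independent, so \eq{uniformlygapped} holds with the same $\gamma$ and $\delta_n$ as $\Phi$.

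For symmetry, given a uniformly gapped curve $\Phi\in\cB_F^1([0,1])$ with $\Phi(0)=\Phi_0$, $\Phi(1)=\Phi_1$, I would set $\tilde\Phi(X,s):=\Phi(X,1-s)$. The map $s\mapsto \tilde\Phi(X,s)$ is strongly $C^1$ with strong derivative $-\Phi'(X,1-s)$, so $\tilde\Phi\in\cB_F^1([0,1])$ with $\Vert\tilde\Phi\Vert_F(s)=\Vert\Phi\Vert_F(1-s)$ and an analogous identity for the $\Vert\cdot\Vert_{1,1}$ norm. The associated Hamiltonians satisfy $\tilde H_{\Lambda_n}(s)=H_{\Lambda_n}(1-s)$, so their spectrum is the same set and the uniform gap condition is preserved with the same constants $\gamma$ and $\delta_n$.

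The heart of the argument is transitivity. Suppose $\Phi^{(1)}\in \cB_F^1([0,1])$ realizes $\Phi_0\sim\Phi_1$ with gap $\gamma_1$ and error $(\delta_n^{(1)})$, and $\Phi^{(2)}$ realizes $\Phi_1\sim\Phi_2$ with $(\gamma_2,\delta_n^{(2)})$. The naive concatenation at $s=1/2$ is continuous but generally fails to be $C^1$ there, so I would introduce a smooth, non-decreasing reparameterization $\phi\in C^\infty([0,1])$ with $\phi(0)=0$, $\phi(1)=1$, and $\phi'(0)=\phi'(1)=0$ (for instance $\phi(t)=t^2(3-2t)$), and define
\be
\Phi(s):=\begin{cases} \Phi^{(1)}(\phi(2s)), & s\in[0,1/2],\\ \Phi^{(2)}(\phi(2s-1)), & s\in[1/2,1]. \end{cases}
\label{concat_curve}\ee
By the chain rule, the strong derivative $\Phi'(X,s)$ exists on each half, and the factor $2\phi'(2s)$ (respectively $2\phi'(2s-1)$) ensures both one-sided derivatives at $s=1/2$ equal zero, so $\Phi$ is strongly $C^1$ on all of $[0,1]$. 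The bounds $\Vert\Phi\Vert_F(s)\leq \max_i\sup_{r\in[0,1]}\Vert\Phi^{(i)}\Vert_F(r)$ and $\Vert |X|\Phi'\Vert_F(s)\leq 2\Vert\phi'\Vert_\infty\max_i\sup_{r\in[0,1]}\Vert |X|(\Phi^{(i)})'\Vert_F(r)$ verify that $\Phi\in\cB_F^1([0,1])$.

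The main obstacle is ensuring the concatenated curve satisfies the strong-$C^1$ regularity required by $\cB_F^1([0,1])$; the smooth reparameterization handles this cleanly. The uniform gap then comes for free: the set of finite-volume Hamiltonians traced out by \eq{concat_curve} is contained in $\{H_{\Lambda_n}^{(1)}(r):r\in[0,1]\}\cup\{H_{\Lambda_n}^{(2)}(r):r\in[0,1]\}$ (since $\phi$ maps $[0,1]$ onto $[0,1]$), so \eq{uniformlygapped} holds for $\Phi$ with $\gamma:=\min(\gamma_1,\gamma_2)>0$ and $\delta_n:=\max(\delta_n^{(1)},\delta_n^{(2)})\to 0$. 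This shows $\Phi_0\sim\Phi_2$ and completes the verification that $\sim$ is an equivalence relation.
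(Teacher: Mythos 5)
Your proof is correct and takes essentially the same approach as the paper: reflexivity via constant curves, symmetry via the time-reversed curve, and transitivity via concatenation with a reparameterization whose derivative vanishes at the gluing point (you use the cubic smoothstep $\phi(t)=t^2(3-2t)$; the paper uses $\sin(\pi s)$, but both serve the identical purpose). Your aside on reflexivity is a genuine clarification the paper glosses over: the constant curve is uniformly gapped only when $\Phi$ itself is, so $\sim$ is an equivalence relation only on the subset of uniformly gapped interactions in $\cB_F$, which is the implicitly intended domain.
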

\begin{proof}
The defining properties of reflexivity and symmetry of an equivalence relation follow by considering constant curves $\Phi(s)= \Phi_0$, for all $s\in [0,1]$, and reversed curves $\Phi^{-1} (s) = \Phi(1-s)$. 
For transitivity, consider two curves $\Phi^{(1)}(s), \, \Phi^{(2)}(s)\in \cB_F^1([0,1])$ such that $\Phi^{(1)}(1) = \Phi^{(2)}(0)$, and define $\Phi(s) \in \cB_F^1([0,1])$ by
\[
\Phi(s) =\begin{cases}
\Phi^{(1)}(\sin(\pi s)) & s\leq 1/2 \\
\Phi^{(2)}(1-\sin(\pi s)) & s >1/2
\end{cases}
\]
Here, the re-parameterization of $s$ in the piecewise definition is chosen only to ensure the differentiability of $\Phi$ at $s=1/2.$ Other re-parameterizations will also work. Transitivity follows from setting $\delta_n=\max (\delta^{(1)}_n,\delta^{(2)}_n)$, and $\gamma = \min (\gamma^{(1)},\gamma^{(2)})$, where $\delta^{(i)}_n$ and $\gamma^{(i)}$, $i=1,2$, refer to
the sequences and the gap for the two curves.
\end{proof}

Note that, without loss of generality, we can assume that the sequence $(\delta_n)$ in Definition \ref{def:uniformlygapped} is non-increasing. It  is also easy to see that for uniformly gapped  $\Phi$, the spectral projection $P_n(s)$ of $H_{\Lambda_n}(s)$ associated with the interval 
$[E_{\Lambda_n}(s), E_{\Lambda_n}(s)+\delta_n]$ becomes independent of the choice of sequence $(\delta_n)$ for large $n$ in the sense that for any two sequences $(\delta_n)$ and $(\delta_n^\prime)$ for which \eq{uniformlygapped} holds, the spectral projections associated with the intervals $[E_{\Lambda_n}(s), E_{\Lambda_n}(s)+\delta_n]$ and $[E_{\Lambda_n}(s), E_{\Lambda_n}(s)+\delta_n^\prime]$ coincide for sufficiently large $n$.

Let $\Phi$ be uniformly gapped. Then, consider the collection of states of $\cA_{\Lambda_n}$ supported on the spectral subspace of $H_{\Lambda_n}(s)$ 
associated with the intervals $[E_{\Lambda_n}(s), E_{\Lambda_n}(s)+\delta_n]$. More precisely, define 
$$
\cS_n(s) = \{ \omega\in\cS(\cA_{\Lambda_n}) \mid \omega (P_n(s)) =1\}.
$$
Here, for any complex $C^*$-algebra $\cA$ with unit $\idty$, $\cS(\cA)$ denotes the state space of $\cA$, that is the set of positive linear functionals on $\cA$ with $\omega(\idty)=1$. The remarks in the previous paragraph show that $\cS_n(s)$ becomes independent of the choice of the sequence $(\delta_n)$ for large $n$. Therefore, it is possible to define
$$
\cS(s) = \{\omega\in\cS(\cA_\Gamma) \mid \exists (n_k) \mbox{ increasing and } \omega_k \in \cS_{n_k}(s) \mbox{ s.t. } \lim_k \omega_k(A) = \omega(A), 
\forall  A \in \cA_\Gamma^{\rm loc}\}.
$$
In this sense, $\cS(s)$ is the set of all weak$-*$ limits of states in $\cS_n(s)$. It can be shown that the elements $\omega \in\cS(s)$ are ground states of the infinite-volume model defined by the dynamics $\tau_t$ obtained as the thermodynamic limit of the model with Hamiltonians \eq{hamcurve} \cite{QLBII}. The prime example to keep in mind is that the set  $\cS_n(s)$ also consists of ground states for a Hamiltonian $H_{\Lambda_n}(s)$ that has a uniform lower-bound, denoted by $\gamma >0$, separating the ground state energy from the rest of the spectrum. However, it will be interesting to consider the slightly more general set-up we have introduced above.

Let us now fix a uniformly gapped curve $\Phi\in\cB_F^1([0,1])$ with gap $\gamma$. As an application of Theorem~\ref{thm:sf_ints_conv} and the comments following it, we have
strongly continuous spectral flow automorphisms $\alpha_{s}^{(n)}$ for the curve of finite-volume on $\Lambda_n$, and $\alpha_{s}$ for the infinite system on $\Gamma$. Here, the uniform gap of the curve plays the role of the parameter $\gamma$ in the construction of the spectral flow. Moreover, Theorem~\ref{thm:sf_ints_conv}, see specifically \eq{sf-dyn-as-limit}, establishes that $\alpha_{s}$ is the strong limit of $\alpha_{s}^{(n)}$, and the convergence of this limit is uniform for $s\in [0,1]$. Moreover, we can use the spectral flow $\alpha_s$ to construct a co-cycle of automorphisms $\alpha_{t,s} :=\alpha_s^{-1}\circ \alpha_t$, for all $t,s\in[0,1]$. We can similarly define a collection of finite volume co-cycles, $\alpha_{t,s}^{(n)}$.

%\vskip12pt
Our next goal is to show that the spectral flow co-cycle establishes a close relationship between the sets $\cS(s)$ for different values of $s$, which we refer to
as {\em automorphic equivalence of gapped ground state phases}. Using the definition of $\alpha_{t,s}^{(n)}$ above, Theorem \ref{thm:sf} establishes the following relationships between the spectral projections $P_n(s)$:
\be\label{projectionflow}
\alpha^{(n)}_{t,s}(P_n(t)) = P_n(s), \mbox{ for all }s,t\in [0,1].
\ee
As an immediate consequence we have that $\omega_n\circ \alpha_{t,s}^{(n)} \in \cS_n(t)$ for any $\omega_n \in \cS_n(s)$ as
\[\omega_n\circ \alpha_{t,s}^{(n)}(P_n(t)) =\omega_n(P_n(s)) = 1.\]
Since $\alpha_{t,s}^{(n)}$ is an automorphism, it is
invertible. In fact, its inverse is given by  $\alpha_{s,t}^{(n)}$. As such, we see that composition with $\alpha_{t,s}^{(n)}$
defines a bijection between the sets $\cS_n(s)$ and $\cS_n(t)$. Explicitly
\be
\mathcal{S}_n(t)=\{ \omega_n \circ \alpha_{t,s}^{(n)} \mid \omega_n\in \mathcal{S}_n(s)\}.
\label{finite-volume_bijection}\ee
The next theorem extends this bijection to the thermodynamic limit. The quasi-local properties of the spectral flow established in Section \ref{sec:spectral-flow} play an important role both at the technical and the conceptual level. Technically, they are the main ingredient in establishing the convergence of the thermodynamic limit. Conceptually, the fact that $\alpha_s$ is a dynamics generated by a short-range interaction shows that local properties of the states at different values of $s$
are related by a `natural,' finite-time, unitary evolution.

\begin{thm} \label{thm:automorphic_equivalence}
For all $s,t\in [0,1]$, the spectral flow automorphism $\alpha_{s,t}$ provides a bijection between the sets $\cS(s)$ and $\cS(t)$
by composition:
\begin{equation}
\mathcal{S}(t)=\mathcal{S}(s)\circ\alpha_{t,s}
\label{infinite_gs_relation}
\end{equation}
\end{thm}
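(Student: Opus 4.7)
The plan is to lift the finite-volume bijection \eqref{finite-volume_bijection} to the thermodynamic limit, using two ingredients already established: the local uniform convergence $\alpha_{t,s}^{(n)}(A)\to\alpha_{t,s}(A)$ in norm for $A\in\cA_\Gamma^{\rm loc}$ (see \eqref{sf-dyn-as-limit}), and the norm-density of $\cA_\Gamma^{\rm loc}$ in $\cA_\Gamma$. Since $\alpha_{t,s}\circ\alpha_{s,t}=\id$ on $\cA_\Gamma$, once we show that $\omega\mapsto\omega\circ\alpha_{t,s}$ maps $\cS(s)$ into $\cS(t)$ for every pair $s,t$, the bijection and the identity \eqref{infinite_gs_relation} follow by applying the result with $s$ and $t$ swapped.

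For the inclusion $\cS(s)\circ\alpha_{t,s}\subset\cS(t)$, fix $\omega\in\cS(s)$ and a sequence $\omega_k\in\cS_{n_k}(s)$ (extended arbitrarily to states on $\cA_\Gamma$, e.g. via Hahn--Banach) with $\omega_k(A)\to\omega(A)$ for all $A\in\cA_\Gamma^{\rm loc}$. Set $\tilde\omega_k:=\omega_k\circ\alpha_{t,s}^{(n_k)}$; by \eqref{projectionflow} these are states in $\cS_{n_k}(t)$. It suffices to verify that $\tilde\omega_k(A)\to\omega(\alpha_{t,s}(A))$ for every local $A\in\cA_X$, since then $\omega\circ\alpha_{t,s}\in\cS(t)$ by the definition of $\cS(t)$. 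Split
\begin{equation}
|\tilde\omega_k(A)-\omega(\alpha_{t,s}(A))|\le \bigl|\omega_k\!\bigl(\alpha_{t,s}^{(n_k)}(A)-\alpha_{t,s}(A)\bigr)\bigr|+|\omega_k(\alpha_{t,s}(A))-\omega(\alpha_{t,s}(A))|.
\end{equation}
The first term is bounded by $\|\alpha_{t,s}^{(n_k)}(A)-\alpha_{t,s}(A)\|$, which tends to $0$ uniformly for $s,t\in[0,1]$ by the comments following Theorem \ref{thm:sf_ints_conv}. For the second, given $\epsilon>0$ choose $B_\epsilon\in\cA_\Gamma^{\rm loc}$ with $\|\alpha_{t,s}(A)-B_\epsilon\|<\epsilon$. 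Then for all $k$ large enough that $\supp(B_\epsilon)\subset\Lambda_{n_k}$,
\begin{equation}
|\omega_k(\alpha_{t,s}(A))-\omega(\alpha_{t,s}(A))|\le 2\epsilon+|\omega_k(B_\epsilon)-\omega(B_\epsilon)|,
\end{equation}
and the last term vanishes as $k\to\infty$ by weak-$*$ convergence on $\cA_\Gamma^{\rm loc}$. Since $\epsilon$ is arbitrary, $\tilde\omega_k(A)\to\omega(\alpha_{t,s}(A))$, as required.

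Running the same argument with $(s,t)$ replaced by $(t,s)$ shows $\cS(t)\circ\alpha_{s,t}\subset\cS(s)$, which combined with the cocycle identity $\alpha_{t,s}\circ\alpha_{s,t}=\id$ yields \eqref{infinite_gs_relation} and the claimed bijectivity. The main technical obstacle is the second term above: the evolved observable $\alpha_{t,s}(A)$ is only quasi-local, so evaluation against the $\omega_k$ (defined on finite volumes) is not immediate. This is precisely where the quasi-locality estimates of Section \ref{sec:spectral-flow} are essential—they ensure $\alpha_{t,s}$ maps $\cA^{\rm loc}_\Gamma$ into $\cA_\Gamma$ with controlled decay, so that local approximation combined with $\|\omega_k\|=1$ handles the tail uniformly in $k$ and makes the choice of extension of $\omega_k$ to $\cA_\Gamma$ irrelevant.
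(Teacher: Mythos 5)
Your proof is correct and follows the same route as the paper: lift the finite-volume relation \eqref{finite-volume_bijection} via the convergence of Theorem~\ref{thm:sf_ints_conv}. The only difference is that you unfold the estimate explicitly, whereas the paper packages precisely this $\epsilon/3$-type computation (with the approximation of $\alpha_{t,s}(A)$ by a local observable and the $\|\omega_k\|=1$ bound on the error) into the separate Lemma~\ref{lem:convergence-of-states} and cites it.
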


\begin{proof}
This is a direct consequence of~\eq {finite-volume_bijection}, Theorem~\ref{thm:sf_ints_conv} and the Lemma~\ref{lem:convergence-of-states} below.
\end{proof}

\begin{lem}\label{lem:convergence-of-states}
Let $(\alpha_n)_n$ be a strongly convergent sequence of automorphisms of a $C^*$-algebra
$\mathcal{A}$, converging to $\alpha$ and let $(\omega_n)_n$ be a sequence of states
on $\mathcal{A}$. Then the following are equivalent:
\begin{enumerate}
\item[(i)] $\omega_n$ converges to $\omega$ in the weak-$*$ topology;
\item[(ii)] $\omega_n\circ\alpha$ converges to $\omega\circ\alpha$ in the weak-$*$ topology;
\item[(iii)] $\omega_n\circ\alpha_n$ converges to $\omega\circ\alpha$ in the weak-$*$ topology.
\end{enumerate}
\end{lem}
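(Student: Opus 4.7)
The plan is to establish the cycle (i) $\Rightarrow$ (ii) $\Rightarrow$ (iii) $\Rightarrow$ (i), exploiting two simple observations: that $\alpha$, being an automorphism, is a bijection of $\mathcal{A}$ onto itself, and that each state $\omega_n$ is a bounded linear functional of norm one. The whole argument is essentially formal and does not require any of the detailed quasi-locality machinery of the earlier sections; the only non-trivial input is the strong convergence $\alpha_n \to \alpha$, which provides norm convergence $\alpha_n(A) \to \alpha(A)$ for every fixed $A \in \mathcal{A}$.

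For the equivalence (i) $\iff$ (ii), I would simply note that $\alpha$ is a $*$-isomorphism of $\mathcal{A}$ onto itself, so that the map $A \mapsto \alpha(A)$ is a bijection. Hence the family of functionals $\{ \omega_n \}$ converges to $\omega$ pointwise on $\mathcal{A}$ precisely when the family $\{ \omega_n \circ \alpha \}$ converges to $\omega \circ \alpha$ pointwise on $\mathcal{A}$. Concretely, given any $A \in \mathcal{A}$, write $A = \alpha(B)$ with $B = \alpha^{-1}(A) \in \mathcal{A}$ and apply the hypothesis, and conversely.

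The heart of the argument is the equivalence of (ii) and (iii). The key estimate is that for every $A \in \mathcal{A}$,
\begin{equation*}
\bigl| (\omega_n \circ \alpha_n)(A) - (\omega_n \circ \alpha)(A) \bigr| \;\leq\; \| \omega_n \| \cdot \| \alpha_n(A) - \alpha(A) \| \;\leq\; \| \alpha_n(A) - \alpha(A) \| \, ,
\end{equation*}
which tends to zero as $n \to \infty$ by the strong convergence $\alpha_n \to \alpha$. Consequently the two sequences of functionals $\{\omega_n \circ \alpha_n\}$ and $\{\omega_n \circ \alpha\}$ have the same weak-$*$ limit points, and in particular one converges to $\omega \circ \alpha$ if and only if the other does. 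This gives (ii) $\iff$ (iii).

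I do not anticipate any genuine obstacle; the only point that requires a little care is to verify that the meaning of ``strongly convergent sequence of automorphisms'' used here is pointwise norm convergence on $\mathcal{A}$, which is precisely what is provided by Theorem~\ref{thm:sf_ints_conv} (see \eqref{sf-dyn-as-limit}) in the application to Theorem~\ref{thm:automorphic_equivalence}. Given this interpretation, each of the three implications reduces to the elementary displayed estimate above combined with the bijectivity of $\alpha$.
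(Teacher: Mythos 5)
Your proof is correct and follows essentially the same route as the paper's: (i)$\iff$(ii) via the bijectivity of $\alpha$, and (ii)$\iff$(iii) via the estimate $|(\omega_n\circ\alpha_n)(A)-(\omega_n\circ\alpha)(A)|\leq\|\alpha_n(A)-\alpha(A)\|\to 0$ using $\|\omega_n\|=1$. The only cosmetic difference is that the paper writes out the triangle inequality explicitly for (ii)$\Rightarrow$(iii) and appeals to symmetry for the converse, whereas you phrase it as the pointwise difference of the two sequences tending to zero; these are the same argument.
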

\begin{proof}
(i)$\Leftrightarrow$(ii) follows immediately from the fact that $\alpha$ and $\alpha^{-1}$
are automorphisms. Now if (ii) holds, the limit of the second term in the RHS of
\begin{equation*}
\vert(\omega_n\circ\alpha_n)(A)-(\omega\circ\alpha)(A)\vert \leq \vert\omega_n(\alpha_n(A)-\alpha(A)) \vert + \vert\omega_n(\alpha(A))-\omega(\alpha(A)) \vert\,,
\end{equation*}
vanishes. So does the limit of the first term since
\begin{equation*}
\vert\omega_n(\alpha_n(A)-\alpha(A)) \vert \leq \Vert\omega_n\Vert \Vert\alpha_n(A)-\alpha(A) \Vert \longrightarrow 0
\end{equation*}
as $\omega_n$ is a state. Therefore, (iii) holds. A similar argument implies (iii)$\Rightarrow $(ii).
\end{proof}

Recall the following essential property of the spectral flow automorphisms with parameter $\gamma>0$ constructed in Section~\ref{sec:spectral-flow} for a family of Hamiltonians $H_\Lambda(s)$. Suppose that $P(s)$ is the spectral projection of $H_\Lambda(s)$ associated to a bounded interval $[a(s), \, b(s)]$ (with $a(s)$ and $b(s)$ differentiable) that is gapped from the rest of the spectrum by $\gamma>0$, i.e.
\[
(a(s)-\gamma,a(s))\cap \spec(H_\Lambda(s))
=(b(s), b(s)+\gamma)\cap \spec(H_\Lambda(s))=\emptyset \; \text{  for all  } \; s\in [0,1].
\]
Then by Theorem~\ref{thm:sf} the spectral flow $\alpha_{t,s}$ with parameter $\gamma$ associated with $H_\Lambda(s)$ once again maps $P(t)$ to $P(s)$.  In the discussion above we focused on gapped ground state phases, for which the relevant part of the spectrum is at the bottom. We will describe examples in some detail in Section \ref{sec:examples_gapped_curves}. There is no reason, however, why a similar strategy could not be employed to study states supported in spectral subspaces associated with an isolated part elsewhere in the spectrum. For example, an isolated band of excited states could also be studied with the help of spectral automorphisms. This new and largely unexplored territory seems promising to us.

We conclude this section with the following result regarding the continuity of the spectral gap above the ground state energy of the GNS Hamiltonian $H_{\omega_s}$ of
an infinite volume ground state $\omega_s\in \cS(s)$ for the case of quantum spin systems, i.e., the single-site Hilbert spaces $\cH_x$ are finite-dimensional and the dynamics is generated
by an interaction $\Phi\in\cB_F^1([0,1])$ for a suitable $F$-function $F$. The restriction to the case of quantum spin systems is because we rely
on some well-known properties of the dynamics and, in particular, its generator in that case. The finite-dimensionality of the single-site Hilbert spaces is not essential, but the boundedness of the interactions is, in addition to the general setup described at the beginning of this section, including Assumption \ref{sf_ass:qlm_ham}.

\begin{thm}\label{thm:semi-continuity_gap}
Consider a quantum spin model defined by a uniformly gapped curve of interactions $\Phi\in\cB_F^1([0,1])$, with gap $\gamma >0$.
Fix  $s_0\in [0,1]$ and let $\{H_{\omega_s}\}_{s\in[0,1]}$ denote the set of GNS Hamiltonians associated to the states $\omega_s\in \mathcal{S}(s)$ of 
the form $\omega_s = \omega_{s_0} \circ \alpha_{s,s_0}$ for some $\omega_{s_0}\in\cS(s_0)$, and with the spectral flow $\alpha_{s,s_0}$ corresponding 
to the parameter $\gamma$.  If for all $s\in[0,1]$, $\ker H_{\omega_s}$ is one-dimensional, then
	\[
	\gamma(s) := \sup\{\delta>0 \; : \;\spec(H_{\omega_s})\cap(0,\delta) = \emptyset\}
	\]
	 is a upper-semicontinuous function of $s$.
\end{thm}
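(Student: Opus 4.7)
The plan is to prove upper-semicontinuity at $s_0$ by showing that for every $\delta$ with $\gamma(s_0)<\delta<\infty$ there is a neighborhood of $s_0$ on which $\gamma(s)<\delta$. (If $\gamma(s_0)=\infty$, i.e. $\spec(H_{\omega_{s_0}})=\{0\}$, the statement is vacuous at $s_0$.) The argument combines the min--max characterization of the gap with the quasi-locality and strong continuity of the spectral flow, together with the norm-convergence of the thermodynamic-limit generators.

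Let $\Omega_s\in\cH_{\omega_s}$ and $\pi_s$ denote the GNS data of $\omega_s$. Since $\ker H_{\omega_s}=\bbC\Omega_s$, the min--max principle gives
\be\label{eq:varprinc_scg}
\gamma(s)=\inf\bigl\{\langle\psi,H_{\omega_s}\psi\rangle:\psi\in\dom H_{\omega_s}^{1/2},\ \|\psi\|=1,\ \psi\perp\Omega_s\bigr\}.
\ee
For $A\in\cA_\Gamma^{\rm loc}$, the vector $\psi_A:=\pi_s(A-\omega_s(A)\idty)\Omega_s$ lies in $\dom H_{\omega_s}$ with $H_{\omega_s}\psi_A=\pi_s(-i\delta_s(A))\Omega_s$, where $\delta_s(A)=\sum_{Z\in\cP_0(\Gamma)}[\Phi(Z,s),A]$ is the infinite-volume generator of $\tau_t^{(s)}$, norm-convergent under the present hypotheses (cf.\ Theorem~\ref{thm:differentiable_dynamics}). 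Using $H_{\omega_s}\Omega_s=0$ and the $\tau^{(s)}$-invariance $\omega_s(\delta_s(A))=0$, one computes $\langle\psi_A,H_{\omega_s}\psi_A\rangle=-i\,\omega_s(A^*\delta_s(A))$ and $\|\psi_A\|^2=\omega_s(A^*A)-|\omega_s(A)|^2$. Hence, whenever the latter is positive,
\be\label{eq:ratio_scg}
\gamma(s)\leq\frac{-i\,\omega_s(A^*\delta_s(A))}{\omega_s(A^*A)-|\omega_s(A)|^2},
\ee
and the infimum of this ratio over $A\in\cA_\Gamma^{\rm loc}$ equals $\gamma(s)$, provided $\pi_s(\cA_\Gamma^{\rm loc})\Omega_s$ is a core for $H_{\omega_s}$ (see the remark on the main obstacle below).

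Pick $A\in\cA_\Gamma^{\rm loc}$ so that \eqref{eq:ratio_scg} at $s=s_0$ is strictly less than $\delta$; in particular $\omega_{s_0}(A^*A)-|\omega_{s_0}(A)|^2>0$. The claim is that the same $A$ yields a value strictly less than $\delta$ for all $s$ in a sufficiently small neighborhood of $s_0$, which via \eqref{eq:ratio_scg} gives $\gamma(s)<\delta$. This reduces to continuity in $s$ of $\omega_s(A)$, $\omega_s(A^*A)$, and $\omega_s(A^*\delta_s(A))$. By Theorem~\ref{thm:automorphic_equivalence}, $\omega_s=\omega_{s_0}\circ\alpha_{s,s_0}$, and Theorem~\ref{thm:sf_ints_conv} (see \eqref{sf-dyn-as-limit}) shows that $s\mapsto\alpha_{s,s_0}(B)$ is norm-continuous for every $B\in\cA_\Gamma$; hence $s\mapsto\omega_s$ is weak-$*$ continuous. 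Next, Theorem~\ref{thm:exist_iv_dyn}(ii), applied to the sequence of restrictions $\Phi\!\restriction_{\Lambda_n}$ together with the norm continuity of each $\Phi(Z,\cdot)$ (which holds since $\cA_Z$ is finite-dimensional and $\Phi$ is strongly $C^1$), implies that $s\mapsto\delta_s(A)$ is norm-continuous. Splitting
\beann
\omega_s(A^*\delta_s(A))-\omega_{s_0}(A^*\delta_{s_0}(A))&=&\omega_s\!\bigl(A^*(\delta_s(A)-\delta_{s_0}(A))\bigr)\\
&&+(\omega_s-\omega_{s_0})(A^*\delta_{s_0}(A)),
\eeann
both terms vanish as $s\to s_0$; the denominator of \eqref{eq:ratio_scg} remains positive by continuity, and the ratio is strictly below $\delta$ on a neighborhood of $s_0$.

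The main obstacle is the density statement underpinning the variational formula, namely that $\pi_s(\cA_\Gamma^{\rm loc})\Omega_s$ is dense in $\dom H_{\omega_s}^{1/2}$ in the form norm. In the bounded-interaction, finite-dimensional on-site setting this is standard but deserves separate treatment, e.g.\ by a smoothing argument: for $f\in C_c^\infty(\Rl)$ with $\int f=1$, the map $A\mapsto A_f:=\int_\Rl f(t)\tau_t^{(s)}(A)\,dt$ sends $\cA_\Gamma$ into $\dom\delta_s^n$ for every $n$, makes $\pi_s(A_f)\Omega_s$ a $C^\infty$-vector for $H_{\omega_s}$, and delivers $\pi_s(A_f)\Omega_s\to\pi_s(A)\Omega_s$ in the graph norm of $H_{\omega_s}^{1/2}$ as $f$ approximates a delta. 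Once this density is established, the preceding arguments conclude $\limsup_{s\to s_0}\gamma(s)\leq\gamma(s_0)$.
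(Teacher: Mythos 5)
Your proposal is correct and reaches the same conclusion, but it takes a genuinely different and in some ways lighter route than the paper. The paper's key trick is to parametrize the variational formula with the $s$-dependent core $\mathcal{C}_s=\alpha_{s_0,s}(\cA_\Gamma^{\rm loc})$: substituting $A\mapsto\alpha_{s_0,s}(A)$ makes the normalization constraints $\omega_s(A)=0$, $\omega_s(A^*A)=1$ become $s$-independent, so that $\gamma(s)=\inf\{f_A(s):A\in\mathcal{F}\}$ is exhibited as the infimum of an $s$-independent family of continuous functions — the textbook criterion for upper-semicontinuity. This requires establishing that $\alpha_{s_0,s}(\cA_\Gamma^{\rm loc})$ is a core for $\delta^{(s)}$ (via Proposition~\ref{prop:gen_comp_qlms}) and that $\alpha_{s,s_0}\circ\delta^{(s)}\circ\alpha_{s_0,s}$ is generated by a quasi-local interaction that is continuous in $s$ (via Theorem~\ref{thm:qli_est} and Corollary~\ref{cor:sf_est_1}). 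You instead work with the $s$-independent test set $\cA_\Gamma^{\rm loc}$ and the Rayleigh-quotient form, pick a near-optimizer at $s_0$, and propagate the resulting bound to nearby $s$ by continuity. Your route needs only weak-$*$ continuity of $\omega_s$ (from norm-continuity of the spectral flow), norm-continuity of $s\mapsto\delta_s(A)$ for local $A$ — this is exactly Theorem~\ref{thm:differentiable_dynamics}, which is the reference you want rather than Theorem~\ref{thm:exist_iv_dyn}(ii) — and the form-core property of $\cA_\Gamma^{\rm loc}$ at the single point $s_0$. You avoid the entire composed-quasi-local-map machinery. Your "main obstacle," the density of $\pi_{s_0}(\cA_\Gamma^{\rm loc})\Omega_{s_0}$ in the form domain, is exactly what the paper treats as standard ("$\cA_\Gamma^{\rm loc}$ is a core for $\delta^{(s)}$"), and you are right to flag it; your proposed smoothing argument is a reasonable way to address it, and since you only need it at $s_0$ your requirement is actually weaker than what the paper invokes. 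One cosmetic point: the factor of $-i$ in your expression $-i\,\omega_s(A^*\delta_s(A))$ reflects a different sign convention for the generator ($\tau_t=e^{t\delta}$ versus the paper's $\tau_t=e^{it\delta}$, cf.\ Theorem~\ref{thm:differentiable_dynamics}); with the paper's convention the factor is absent.
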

\begin{proof}
	Recall that for each $s$, the infinite volume dynamics $\tau_t^{(s)}$ is a strongly continuous group of automorphisms of $\mathcal{A}_\Gamma$ generated by a closed operator $\delta^{(s)}$, i.e. $\tau_t^{(s)}=e^{it\delta^{(s)}}$, and that $\mathcal{A}^{\rm loc}_\Gamma$ is a core for $\delta^{(s)}$. 
	
First, we show that for all $t,s,s_0\in I$, $\alpha_{t,s} (\mathcal{A}^{\rm loc}_\Gamma)$ is a core for $\delta^{(s_0)}$. Since $\mathcal{A}^{\rm loc}_\Gamma$ is a core and $\alpha_{t,s}$
is an automorphism, we only need to show that  $\alpha_{t,s}(A) \in\dom(\delta^{(s_0)})$, for all $A\in \mathcal{A}^{\rm loc}_\Gamma$. Let $X = \mathrm{supp}(A)$ and $\Pi_{X(n)}$ be as in \eqref{PiX}. Then by \eqref{PiXLimit}
\[
\alpha_{t,s}(A) = \lim_{n\to\infty}\Pi_{X(n)}(\alpha_{t,s}(A)).
\]
The result follows from showing that $\delta^{(s_0)}(\Pi_{X(n)}(\alpha_{t,s}(A)))$ is convergent. Using the telescopic property of $\Delta_{X(n)}$ and linearity of $\delta^{(s_0)}$, it follows that for any $n\geq 0$
\[
\delta^{(s_0)}(\Pi_{X(n)}(\alpha_{t,s}(A))) = \sum_{m=0}^n \delta^{(s_0)}(\Delta_{X(m)}(\alpha_{t,s}(A))),
\]
which is absolutely convergent by Proposition~\ref{prop:gen_comp_qlms}.

	Now, let $\gamma(s)$ denote the spectral gap of $H_{\omega_s}$ and pick any $s_0\in [0,1]$. By the variational principle,
	\begin{equation}\label{gap_by_VP}
	\gamma(s) = \inf \{ \omega_s(A^* \delta^{(s)}(A)) \mid A \in \mathcal{C} , \; \omega_s(A)=0, \;\omega_s(A^*A)=1\},
	\end{equation}
	where $\mathcal{C} $ is any core for $\delta^{(s)}$. Using that $\omega_s = \omega_{s_0}\circ \alpha_{s,s_0}$ and $\mathcal{C}_s = \alpha_{s_0,s}(\mathcal{A}^{\rm loc}_\Gamma)$ is a core for  $\delta^{(s)}$, we have the following identity:
	\be\label{core_equality}
	\{ A\in \mathcal{C}_s \mid \omega_s(A)=0, \;\omega_s(A^*A)=1\} = \{ \alpha_{s,s_0} (A) \mid A \in \mathcal{A}^{\rm loc}_\Gamma, \omega_{s_0}(A)=0,\omega_{s_0}(A^*A)=1\}.
	\ee
	For any $A\in \cA_{\Gamma}^{\rm loc}$, define the function $f_A(s) = \omega_{s_0} (A^* \alpha_{s,s_0}\circ \delta^{(s)} \circ \alpha_{s_0,s} (A))$, and consider the family of functions,
	$$
	\mathcal{F} = \{f_A \mid A \in \mathcal{A}^{\rm loc}_\Gamma, \omega_{s_0}(A)=0,\omega_{s_0}(A^*A)=1\}.
	$$
	Using \eqref{core_equality}, the expression for the gap in \eq{gap_by_VP} can be rewritten in terms of the family $\mathcal{F}$ as
	\begin{equation}\label{gap_by_F}
	\gamma(s) = \inf \{ f(s) \mid f\in \mathcal{F}\}.
	\end{equation}
	
	The result follows from showing that all $f\in \mathcal{F}$ are continuous. This can be seen by expressing the operator $\alpha_{s,s_0}\circ \delta^{(s)} \circ \alpha_{s_0,s}$ as the generator
	of the dynamics for a new $s$-dependent interaction $\Psi\in \cB_{\tilde F}(I)$. Using the continuity of automorphisms  and $\alpha_{s,s_0}\circ \alpha_{s_0,s}={\rm id}$, we have
	\begin{equation}
	\alpha_{s,s_0}\circ \delta^{(s)} \circ \alpha_{s_0,s}(A) = \lim_n [\alpha_{s,s_0}(H_{\Lambda_n}(s)),  A ].
	\end{equation}
	Theorem~\ref{thm:qli_est} and Corollary~\ref{cor:sf_est_1} imply the existence of an $F$-function $\tilde F$, see \eqref{good_F_dec}, and a strongly continuous $\Psi\in \cB_{\tilde F}(I)$ such that the RHS
	is the generator determined by $\Psi(s)$, which is again locally bounded and quasi-local.  It follows that the map $s\mapsto \alpha_{s,s_0}\circ \delta^{(s)} \circ \alpha_{s_0,s}(A) $ is continuous for each finite $X\subset\Gamma$ and $A\in\mathcal{A}_X$. Therefore, $f(s) = \omega_{s_0} (A^* \alpha_{s,s_0}\circ \delta^{(s)} \circ \alpha_{s_0,s} (A))$ defines a
	continuous function.
\end{proof}

As far as we are aware, for all models that satisfy the conditions of the theorem, the gap appears to be {\em continuous} in the parameter, not just semicontinuous. In particular, the gap is continuous when perturbation theory applies. This raises the question whether one indeed has continuity of the spectral gap as long as it is strictly positive, or whether additional assumptions are needed for continuity. Needless to say, the gap is not always stable and so should not be expected to be continuous in general on a domain where it vanishes at some points.

\subsection{Automorphic equivalence with symmetry}

In the previous section we introduced the classification of gapped ground state phases
through equivalence classes of interactions for which there exists an interpolation by a 
uniformly gapped curve. We showed that within each equivalence class the sets of ground
states are mapped into each other by an automorphism with good quasi-locality properties
(the spectral flow derived from the uniformly gapped curve of interactions interpolating between
the models). Implicit in this description is the idea that any curve of interactions interpolating
between two models in {\em distinct} phases (different equivalence classes) must contain at
least one point where the gap vanishes. Such points are called quantum critical points and
one says that a quantum phase transition occurs in the system \cite{sachdev:1999}.

Physical systems often have symmetries that play an important role. In the description 
of certain phenomena, it may be essential that a certain symmetry be present in the model. 
This led to the concept of symmetry protected gapped phases \cite{chen:2013,duivenvoorden:2013b} 
due to the observation that if one only allows curves of interactions that all possess a given symmetry, a finer classification of gapped ground state phases may arise. A nice example of this are the $\Ir_2\times\Ir_2$ protected phases of the spin-1 chain \cite{pollmann:2010,chen:2011,tasaki:2018,ogata:2019}. In general, the equivalence classes break up into subclasses if a restricted set of uniformly gapped curves of interactions is used to define the
equivalence relation. This prompts us to revisit the notion of automorphic equivalence in the
presence of symmetry.

A symmetry is usually specified by the action of a group $G$ (as automorphisms) on the algebra of observables
of the system. Although there are interesting symmetries that do not fit the
framework of group representations by automorphisms, such as dualities and quantum group 
symmetries, we limit the discussion here to that setting. In general, we use the label
$G$ to specify the presence of a certain symmetry. So, we will consider spaces of interactions
$\cB_F^G\subset \cB_F$ and of curves of interactions $\cB_F^{1,G}([0,1])\subset \cB^1_F([0,1])$.
To be clear, in this context $G$ stands for the full specification of the symmetry including its action on the system, not just the abstract group.

Here are four important classes of symmetries:

\begin{enumerate}
\item[(i)] {\em Local symmetries} are described by automorphisms $\beta$ of $\cA_\Gamma$ with the property that
they leave the single-site algebras, $\cA_{\{x\}} = \cB(\cH_x)$, invariant. Specifically, we assume the restrictions of
$\beta$ to  $\cA_{\{x\}}$, $x\in\Gamma$, are inner automorphisms given in terms of a unitary $U_x\in\cB(\cH_x)$:
$\beta(A) = U_x^*AU_x$ for $A\in\cB(\cH_x)$. This type of symmetries are sometimes called {\em gauge symmetries} 
because gauge symmetries are of this form. Thus, any local symmetry $\beta$ is determined by a family of unitaries 
$\{U_x\in\cB(\cH_x)
\mid x\in\Gamma\}$. We say that $\beta$ is a symmetry of $\Phi$ if $\beta(\Phi(X)) =\Phi(X)$, for all $X\in\cP_0(\Gamma)$. 
It is easy to see that this implies that $\beta$ commutes with the dynamics $\tau_t$ generated by $\Phi$: $\beta\circ \tau_t=\tau_t\circ\beta$. 
If $\Phi$ depends on a parameter $s$ or on the time $t$, the symmetry 
condition is assumed to hold pointwise in $s$ and/or $t$. The set of all local symmetries form a group under the law of composition of automorphisms. It is often useful to consider the (projective) representations of this group, $G$, given by the local unitaries $U_x(g), g\in G$.

\item[(ii)] {\em Lattice symmetries} are, in general, described by a bijection $R:\Gamma\to\Gamma$. It is usually important that $R$ preserves the local structure
of $(\Gamma, d)$, e.g., one requires that $R$ is isometric: $d(R(x),R(y))= d(x,y)$, $x,y\in\Gamma$. Examples include translations of lattices such as $\Ir^\nu$, 
and reflection symmetries satisfying $R^2=\id$. If we assume that $\cH_{R(x)} \cong \cH_x$, $R$ can be lifted to an automorphism of $\cA_\Gamma$ as follows.
Denote by $i_x:\cB(\cH_x)\to\cA_{\{x\}}$ the natural isomorphism (or a well-chosen one) and define the automorphism $\beta_R$ of $\cA_\Gamma$, by putting
\be
\beta_R(A) = i_{R(x)}\circ i_x^{-1}(A), \; \text{  for all  } \; A \in\cA_{\{x\}} \; \text{  and  } \; x\in \Gamma.
\ee
The symmetry of the interaction is expressed by the property $\Phi (R(X)) = \beta_R(\Phi(X))$. In the case of lattice translations this yields a 
representation of $(\Ir^m,+)$ on $\cA_\Gamma$, i.e., for $a\in\Ir^m, R(x) = x+a$ denotes the action of translations on $\Gamma$, and $X+a
=\{ x+a\mid x\in X\}$. Correspondingly, $\Phi$ is called translation invariant if $\beta_a(\Phi(X))=\Phi(X+a)$, for all $a\in \Ir^m$.

\item[(iii)] {\em Time-reversal symmetry} is expressed as a local symmetry (discussed in (i)) given by an anti-automorphism, implemented on each site by an anti-unitary
transformation. The latter are, in general, the composition of a unitary transformation and a complex conjugation. Besides taking into account the anti-linearity, time reversal symmetry can be treated in the same way as linear local symmetries.
%\item {\em Duality symmetries}, sometimes called {\em mirror symmetries}, relate a models on a lattice $\Gamma$ to a model on the dual
%lattice $\tilde\Gamma$ which, in general, is different. For example, the Toric Code model on the triangular lattice is dual to the Toric Code model 
%on the hexagonal lattice.The regular square lattice $\Ir^2$ is self-dual and Toric Code model on the square lattice is dual to itself. In this case duality 
%can be regarded as a symmetry of the model by itself.

\item[(iv)] {\em Chiral symmetry} is described by a unitary, say $C$, that anti-commutes with the Hamiltonian. So, at each point in the curve of Hamiltonians
we have $C^* H(s) C = -H(s)$. For the dynamics this implies that for all $s\in [0,1]$, $t\in \Rl$, and $A\in \cA$,
\be\label{chiral_dyn}
C^* \tau_t^{H(s)} ( A) C = \tau_t^{-H(s)}(C^* A C) =  \tau_{-t}^{H(s)}(C^* A C).
\ee
\end{enumerate}

It should be noted that the basic types of symmetries can be combined. For example, some models are invariant under a combined lattice reflection 
and time-reversal transformation, without possessing either of these symmetries separately.

We assume the same setup as described in the beginning of Section~\ref{sec:equivalence_of_phases} of a fixed $\nu$-regular metric space 
$(\Gamma,d)$ with a specified weighted $F$-function $F$. Let $G$ denote the symmetries under consideration. The fixed family of on-site Hamiltonians
$H_x, x\in\Gamma$, is assumed to have the symmetry $G$ as if it were a zero-range interaction. For example, if $U_x\in\cB(\cH_x)$ describes a local
unitary symmetry, we assume that the domain of $H_x$ is invariant under $U_x$ and that $H_x$ and $U_x$ commute. Or, as another example, 
if $\Gamma = \Ir^\nu$ and the symmetry is the full translation invariance of the lattice, $H_x$ is assumed to be the same self-adjoint operator at
each site $x$. 

For the interactions, let $\cB_F^G\subset\cB_F$ denote the space of interactions with finite $F$-norm that possess the symmetry $G$, and 
\be
\cB_F^{1,G}([0,1]) = \{\Phi\in \cB_F^1([0,1]) \, : \,  \Phi(s) \in \cB_F^G \; \text{for all} \; s\in [0,1]\}
\ee
Definitions \ref{def:uniformlygapped} and \ref{def:equivalence_gapped_phases} of uniformly gapped curves and the equivalence relation now carry over the
situation with a symmetry $G$ in the obvious way, as does the proof of the analogue of Proposition~\ref{prop:equivalent_gapped_phases}. 
The resulting equivalence classes are called {\em symmetry protected phases}. Since the uniformly gapped curves with symmetry are a special case of the general situation,
Theorem~\ref{thm:automorphic_equivalence} applies and the spectral flow automorphism establishes a bijection between the sets of states $\cS(s)$ along the curve. 

For the study of the stability of gapped ground state 
phases with symmetry breaking we present in our subsequent work \cite{QLBII}, it will be important that the automorphisms $\alpha_{t,s}$ commute with the automorphisms $\beta_g, g\in G$, representing the symmetry on $\cA_\Gamma$. Moreover, it will be desirable that the interaction $\Psi(s)$ generating $\alpha_{t,s}$ and its finite-system analogues all have the symmetry. There are a few subtleties that merit further discussion concerning the construction of a spectral flow with the desired symmetry properties.

As mentioned above, it is important that both the spectral flow $\alpha_{t,s}$ and its generating interaction $\Psi(s)$ respect the symmetry of the initial interaction $\Phi(s)$. Recall that the conditional expectations $\Pi_X$ from \eqref{PiX} play a crucial role in the quasi-locality properties of $\alpha_{t,s}$ and the definition of $\Psi(s)$, see Corollary~\ref{cor:sf_est_1}, \eqref{sf_iv_qli_def} and Section~\ref{sec:laqlmaps}. In the presence of a local symmetry $\beta$, it is useful to choose the locally normal product state in the definition of the conditional expectations $\Pi_X$ that is $\beta$-invariant, meaning $\rho_x(A) = \rho_x(\beta(A)), A\in \cA_{\{x\}}$ 
or, equivalently, $\beta(\rho_x)=\rho_x$,  see \eqref{pi_loc_norm_state}. This requirement guarantees that if $A$ is invariant under $\beta$, then so is $\Pi_X(A)$, i.e. 
\be \label{invariant_comm}
\beta(\Pi_X(A))= \Pi_X(\beta(A))= \Pi_X(A).
\ee

If $\dim \cH_x <\infty$, then a $\beta$-invariant locally normal product state always exists. For example, setting $\rho_x$ to be the tracial state will produce a $\beta$-invariant state. Given any $\Phi(s)$ with a local symmetry and any symmetric $\rho$, it is easy to see using \eqref{invariant_comm} that the Hastings interactions $\Psi_n$ defined in \eq{sf_fv_qli_def} and, consequently, the spectral flow $\alpha^{(n)}_s$ derived from $\Phi(s)$ both inherit this symmetry. The same holds true for the corresponding infinite volume objects, $\Psi$ and $\alpha_s$.
In particular $\alpha_s$ commutes with any local symmetry automorphism $\beta$ that leaves $\Phi'(s)$ invariant for all $s\in[0,1]$.

For infinite-dimensional $\cH_x$, a symmetric normal state on $\cH_x$ may or may not exist. One may have to relax either the normality or the symmetry requirement. Which of the
two is more relevant would depend on the situation at hand but for the type of applications we are considering here, it is important to use normal states. In the case of a gauge symmetry described by a compact Lie group, constructing symmetric normal states is not a problem. However, even when such a state 
does not exist, the symmetry of the spectral flow is restored in the thermodynamic limit. This follows from the 
observation that although the infinite-volume interaction $\Psi(s)$ depends on the choice of the locally normal state $\rho$ used in its construction, the infinite-volume flow
is the thermodynamic limit of automorphisms generated by self-adjoint operators that commute with the symmetry. This is apparent, e.g., from the expression \eq{loc_comp_ham} in which
$H^{\Phi}_\Lambda$ is to be replaced by $H_\Lambda^\prime(s)$, and $\cK$ is $\mathcal{G}_s$ defined in \eq{def_F+G}.

In the presence of a lattice symmetry such as translation invariance, it makes sense to pick a translation invariant product state $\rho$ to define the 
conditional expectations $\Pi_X$. This is obviously always possible and yields a covariant family of conditional expectations, meaning that
$\beta_a\circ \Pi_X = \Pi_{X+a} \circ \beta_a$, where for $a\in\Ir^m$, $X+a$ denotes the action of the translations on $\Gamma$ and $\beta_a$ denotes 
the corresponding action on $\cA_\Gamma$. Finite subsystems defined on quotient lattices, e.g. $\bZ^n/ (\mathbf{L}\Ir^m)$ with $n\geq m$, can have the corresponding
quotient symmetry $(\Ir^m\mod\mathbf{L})$, which is equivalent to considering the system with periodic boundary conditions. In general, finite systems will 
not have an exact translation symmetry but, again, the symmetry is recovered in the thermodynamic limit.

The case of time-reversal symmetry can be treated in the same way as local unitary symmetries. Due to the oddness of the function $W_\gamma$ in \eqref{def:Wgam2} the 
Hastings interaction $\Psi(s)$ changes sign under time-reversal. Since the time-reversal automorphism is anti-linear, however, this is exactly the 
requirement for $\alpha^{(n)}_s$ and $\alpha_s$ to commute with it.

The case of a fixed chiral symmetry $C$ along the curve of Hamiltonians $H(s) = H + \Phi(s)$, implies that $\Phi^\prime(s)$ anti-commutes with $C$. 
Using again the oddness of the function $W_\gamma$, and the property \eq{chiral_dyn}, it is straightforward to check that $C$ then {\em commutes} with the 
generator of the spectral flow, i.e., it is a symmetry of the spectral flow.

\begin{thm} \label{thm:automorphic_equivalence_G}
Let $\{\beta_g\mid g\in G\}$ be the automorphisms on $\cA_\Gamma$ representing symmetries of the system of the type described in (i-iv) above. 
Then, for any uniformly gapped curve of interactions  $\Phi\in \cB_F^{1,G}([0,1])$, there exists a strongly continuous co-cycle of spectral flow automorphisms 
$\alpha_{t,s}, s,t\in [0,1]$, such that
\begin{equation}
\mathcal{S}(t)=\mathcal{S}(s)\circ\alpha_{t,s},
\label{Ginfinite_gs_relation}
\end{equation}
and 
$$
\beta_g\circ\alpha_{t,s} = \alpha_{t,s}\circ\beta_g, \mbox{ for all } s,t\in [0,1], g\in G.
$$
\end{thm}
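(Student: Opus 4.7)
The plan is to leverage Theorem~\ref{thm:sf_ints_conv} and Theorem~\ref{thm:automorphic_equivalence}: since the relation \eqref{Ginfinite_gs_relation} is an immediate consequence of \eqref{infinite_gs_relation} once a spectral flow is constructed, the only new content is the commutation $\beta_g\circ\alpha_{t,s}=\alpha_{t,s}\circ\beta_g$. I would work entirely at the level of the finite-volume spectral flow $\alpha^{(n)}_{t,s}$ from \eqref{fv_sf_dyn} and its generating operator $D^\gamma_{\Lambda_n}(s)=\mathcal{K}^{\gamma,(n)}_s(H'_{\Lambda_n}(s))$, show that $D^\gamma_{\Lambda_n}(s)$ is fixed by $\beta_g$, conclude that the unitary $U_n(s)$ commutes with the implementation of $\beta_g$ on $\cA_{\Lambda_n}$, and then pass to the thermodynamic limit using the uniform strong convergence \eqref{sf-dyn-as-limit}.

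The invariance of $D^\gamma_{\Lambda_n}(s)$ is handled symmetry-by-symmetry. For a local symmetry $\beta_g$ of type (i), the assumption $\Phi\in\cB_F^{1,G}([0,1])$ together with $\beta_g$-invariance of the on-site $H_x$ gives $\beta_g\circ H_{\Lambda_n}(s)=H_{\Lambda_n}(s)\circ\beta_g$ and $\beta_g(\Phi'(X,s))=\Phi'(X,s)$; hence $\beta_g\circ\tau_t^{\Lambda_n,s}=\tau_t^{\Lambda_n,s}\circ\beta_g$, and since $W_\gamma$ is real-valued, the weighted integral \eqref{sf_wio_fv} intertwines $\beta_g$ with itself. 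Thus $\beta_g(D^\gamma_{\Lambda_n}(s))=D^\gamma_{\Lambda_n}(s)$. For a lattice symmetry of type (ii), the same argument applies with the natural transport by $R$: one uses $\Phi(R(X),s)=\beta_R(\Phi(X,s))$ and the compatibility of finite-volume dynamics on $\Lambda_n$ and $R(\Lambda_n)$, choosing the exhausting sequence $\{\Lambda_n\}$ to be $R$-compatible (e.g.\ $R(\Lambda_n)=\Lambda_n$ for finite-order $R$, or absorbing translates for translations). For time-reversal (iii), antilinearity gives $\Theta\tau_t^{\Lambda_n,s}(A)=\tau_{-t}^{\Lambda_n,s}(\Theta A)$; combining this with the identity $W_\gamma(-t)=-W_\gamma(t)$ shows that $\mathcal{K}_s^{(n)}$ intertwines $\Theta$ and $-\Theta$ on the self-adjoint part, while $\Theta(\Phi'(X,s))=\Phi'(X,s)$ and the overall antilinearity combine to fix $D^\gamma_{\Lambda_n}(s)$. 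For chiral symmetry (iv), relation \eqref{chiral_dyn} gives $C^*\tau_t^{\Lambda_n,s}(\Phi'(X,s))C=-\tau_{-t}^{\Lambda_n,s}(\Phi'(X,s))$, and the oddness of $W_\gamma$ and the change of variable $t\mapsto -t$ yield $C^*D^\gamma_{\Lambda_n}(s)C=D^\gamma_{\Lambda_n}(s)$. In each case, one concludes that $U_n(s)$ commutes with the unitary implementing $\beta_g$ on $\cA_{\Lambda_n}$, hence $\alpha^{(n)}_{t,s}\circ\beta_g=\beta_g\circ\alpha^{(n)}_{t,s}$. The strong, locally uniform convergence $\alpha^{(n)}_{t,s}(A)\to\alpha_{t,s}(A)$ for $A\in\cA^{\rm loc}_\Gamma$, together with continuity of $\beta_g$ and density of $\cA^{\rm loc}_\Gamma$ in $\cA_\Gamma$, transports the commutation to the infinite volume.

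The main obstacle, flagged in the discussion preceding the theorem, is that it would be natural to also ask that the generating interaction $\Psi(s)$ from \eqref{sf_iv_qli_def} itself be symmetric, and this requires selecting the locally normal product state $\rho$ entering the conditional expectations $\Pi_X$ so that $\beta_g(\rho_x)=\rho_x$ for every $x$. When $\dim\cH_x<\infty$ the tracial state provides such a choice, and for compact-group actions a Haar-averaged normal state works; in the translation/lattice case one picks $\rho_x$ depending only on the $R$-orbit of $x$; and for time-reversal the tracial state is again $\Theta$-invariant. In these cases the identity $\beta_g\circ\Pi_X=\Pi_{g\cdot X}\circ\beta_g$ transfers the symmetry of $\mathcal{K}^{\gamma}_s(\Phi'(X,s))$ to each term $\Psi(Z,s)$. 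If no $\beta_g$-invariant normal product state exists, one cannot make $\Psi$ itself symmetric on the nose, but the argument above still yields commutation of $\alpha_{t,s}$ with $\beta_g$, because $D^\gamma_{\Lambda_n}(s)$ is $\rho$-independent and manifestly $\beta_g$-invariant; the lack of symmetry of individual $\Psi_n(Z,s)$ only affects the bookkeeping of local terms, not the generated dynamics. Thus the theorem holds in full generality, and \eqref{Ginfinite_gs_relation} is recovered by combining the commutation with Theorem~\ref{thm:automorphic_equivalence}.
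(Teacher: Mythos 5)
Your proposal reconstructs the paper's argument essentially verbatim: the paper treats Theorem~\ref{thm:automorphic_equivalence_G} by working through symmetry-by-symmetry at the level of the Hastings generator $D^{\gamma}_{\Lambda_n}(s)$, concluding commutation of $U_n(s)$ with the symmetry implementations, and passing to the thermodynamic limit via Theorem~\ref{thm:sf_ints_conv} and \eqref{sf-dyn-as-limit}, with \eqref{Ginfinite_gs_relation} recovered from Theorem~\ref{thm:automorphic_equivalence}; and it similarly flags that when no $\beta_g$-invariant locally normal product state exists, the symmetry of $\alpha_{t,s}$ is restored in the limit because the finite-volume generators are $\rho$-independent and manifestly symmetric. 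One small slip worth correcting: in the time-reversal case you write that the sign change in $W_{\gamma}$ combines with antilinearity "to fix $D^{\gamma}_{\Lambda_n}(s)$" — in fact $D$ is \emph{not} fixed but changes sign ($\Theta D(s)\Theta^{-1}=-D(s)$, which is precisely what the paper means by ``$\Psi(s)$ changes sign under time-reversal''); the commutation of $U_n(s)$ with $\Theta$ then follows because the antilinear conjugation simultaneously flips both the sign of $D$ and the $-i$ in $\frac{d}{ds}U_n(s)=-iD(s)U_n(s)$, leaving the ODE invariant.
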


The list of types of symmetries we have discussed here is not exhaustive. For example, another type of symmetry relevant
for applications are duality symmetries. We postpone the discussion of those to \cite{QLBII}, where we will study the 
stability of gapped ground state phases. 

\subsection{Examples of uniformly gapped curves}\label{sec:examples_gapped_curves}
The construction of the automorphisms $\alpha_{t,s}$ assumes the existence of 
a uniform lower bound for the spectral gap above the ground states along the 
curve of models in the sense of Definition \ref{def:uniformlygapped}. Establishing
a uniform bound for the gap is generally a very hard problem. Fortunately, there are
a good number of interesting examples where the existence of a positive uniform lower 
bound can be proved. 

The largest variety of examples is found as a result of perturbing models for which
the ground state and the existence of a spectral gap above it are known. We will review the state of
the art of perturbative results of this type in \cite{QLBII}. For this reason, we limit
ourselves here to citing a few works that illustrate the broad range of examples
that exist in the literature: some exactly solvable models such as the anisotropic $XY$ 
chain \cite{lieb:1961}, quantum perturbations of classical spin models
\cite{matsui:1990,kennedy:1992}, perturbations of the AKLT chain \cite{affleck:1988,yarotsky:2006} 
and similar models \cite{szehr:2015}, perturbations of simple models with topological order
in the ground state such as the Toric Code Model \cite{bravyi:2010}, general perturbations of 
frustration-free models satisfying a Local Topological Order Condition
\cite{michalakis:2013}, and perturbations of quasi-free Fermion systems \cite{de-roeck:2018}.

Other interesting examples for which explicit lower bounds for the gap can be obtained 
and classes of models for which the equivalence classes can be explicitly determined 
are the frustration-free spin chains with finitely correlated ground states, also known as 
matrix product states \cite{fannes:1992,nachtergaele:1996,bachmann:2012bf,bachmann:2014c,ogata:2016,ogata:2016a,ogata:2017}.
Allowing for general perturbations of such models typically leads to splitting of the 
degenerate ground states found in the frustration-free model. The so-called Kennedy triplet
of `excited' states of the spin-$1$ Heisenberg antiferromagnetic chain of even length
can be regarded as an example of this phenomenon \cite{kennedy:1999}. In general,
sufficiently small perturbations of one-dimensional frustration-free models with a gap 
above the ground state will have a group of eigenvalues near the bottom of the spectrum
separated by a gap (uniform in the size of the system) from the rest of the spectrum. The
associated eigenstates all converge to ground states in the thermodynamic limit. Both
statements are proved in \cite{moon:2018}. 

We postpone a more comprehensive discussion of examples of models with
distinct gapped ground state phases until after the presentation of the stability 
results of gapped phases in \cite{QLBII}.

 %%%%%%%%%
 %
 % New Appendix.
 % 
 %%%%%%%%%%%%

 \section{Appendix} \label{sec:appendix}

This section collects a number of facts about the decay bounds used throughout this paper. 
In general, we will assume that $\Gamma$ is a countable set equipped with a metric, and we
denote this metric by $d$.  A good example to keep in mind is $\Gamma = \mathbb{Z}^{\nu}$ 
with the $\ell^1$-metric. When necessary, we will also assume $\Gamma$ is $\nu$-regular, see \eqref{Gballbd}.

When considering the Heisenberg dynamics associated to a Hamiltonian, our 
quasi-locality estimates require a {\it short-range} assumption on the corresponding 
interaction. For general sets $\Gamma$, which need not have the structure of a lattice,
a sufficient condition for the existence of a dynamics in the thermodynamic limit can be
expressed in terms of a norm on the interaction. We have found it convenient 
to express the decay of interactions with distance by a so-called $F$-function,
which we discuss below. Depending on the application one has in mind, more explicit forms of decay,
again expressed in terms of a family of $F$-functions, is convenient. These are by no means the
only ways to express decay assumptions for interaction. If generality is not the concern, 
one can easily re-express decay into a more suitable form for the case at hand, say, e.g., for 
systems with pair interactions only. In this appendix our goal is to
briefly summarize various notions of decay which occur frequently in the main text.    

\subsection{On $F$-functions} \label{app:sec_def_F}

%%%%%%%%
%
% Here we include the following
%
% i) Define F functions
% ii) Give basic example on Z^{\nu}
% iii) Weighted F-Functions and common examples
% iv) Common Assumptions - reflect on examples
% v) On scaling certain F-functions
%
%%%%%%%%%%

Let $(\Gamma, d)$ be a countable metric space. When $\Gamma$ is finite, most notions introduced
below are trivial, and for that reason we will mainly consider the situation where $\Gamma$ has infinite cardinality.
We will say that $\Gamma$ is equipped with an $F$-function if there is a 
non-increasing function $F: [0, \infty) \to (0, \infty)$ for which:

\noindent (i) {\it $F$ is uniformly integrable}:
\begin{equation} \label{Fint}
\| F \| : = \sup_{y \in \Gamma} \sum_{x \in \Gamma} F(d(x,y)) < \infty
\end{equation}

\noindent

\noindent (ii) {\it $F$ satisfies a convolution condition}: 
\begin{equation} \label{Fconv}
C_F:= \sup_{x,y \in \Gamma} \sum_{z \in \Gamma} \frac{F(d(x,z)) F(d(z,y))}{F(d(x,y))} < \infty.
\end{equation}

Any function $F$ satisfying (\ref{Fint}) and (\ref{Fconv}) will be called an {\it $F$-function} on $\Gamma$.
We note that an immediate consequence of (\ref{Fconv}) is that for any pair $x,y \in \Gamma$, we have the bound
\begin{equation} \label{Fconvest}
\sum_{z \in \Gamma} F(d(x,z)) F(d(z,y)) \leq C_F F(d(x,y)).
\end{equation}
The constant $C_F$ enters into a number of our estimates. We say that an $F$-function on
$\Gamma$ is {\it normalized} if $C_F =1$. Of course, for any $F$-function $F$, the function $\tilde{F} = C_F^{-1} F$ defines a new $F$-function on $\Gamma$ for which $C_{\tilde{F}} =1$. 

Note that if $\Gamma$ is equipped with an $F$-function $F$, then 
\begin{equation}
\sup_{x \in \Gamma} | b_x(n) | \leq \| F \| F(n)^{-1} \quad \mbox{for any } n >0 \, ,
\label{growth_of_balls}\end{equation}
where the left-hand-side above is a uniform estimate on the cardinality of the ball of radius $n$
centered at $x \in \Gamma$. The above follows immediately from the estimate
\begin{equation} \label{F-regular}
F(n) |b_x(n)| \leq \sum_{y \in b_x(n)} F(d(x,y)) \leq \| F \| .
\end{equation}
This estimate also demonstrates that the existence of an $F$-function guarantees that $\Gamma$ is uniformly,
locally finite. 

Moreover, if $\Gamma$ is infinite, the existence of an $F$-function implies the diameter of $\Gamma$ is infinite. In this situation, if $\{ \Lambda_n \}_{n \geq 1}$ is an increasing, exhaustive sequence of finite
subsets of $\Gamma$ (i.e. $\Lambda_n \subset \Lambda_{n+1}$ for all $n \geq 1$ and
$\Lambda_n \uparrow \Gamma$), then for any finite $X \subset \Gamma$, 
\begin{equation} \label{Xaway}
d(X, \Gamma \setminus \Lambda_n) \to \infty \quad \mbox{as } n \to \infty \, .
\end{equation}
This follows by observing that for any $m \geq 1$, 
\begin{equation}
X(m) : = \bigcup_{x \in X} b_x(m)
\end{equation}
is a finite subset of $\Gamma$. Since $\{\Lambda_n\}_{n\geq 1}$ is absorbing, there is an $N \geq 1$ for which $X(m) \subset \Lambda_N$. Since $\Gamma$ has infinite cardinality, the set $\Gamma \setminus \Lambda_N$ is non-empty. It immediately follows that $d(x,y) \geq m$ for all $x \in X$ and $y \in \Gamma \setminus \Lambda_N$, from which (\ref{Xaway}) follows. 

\subsubsection{Two common examples of $F$-functions} \label{sec:2_comm_ex}

First, many well-studied quantum spin models are defined on the hypercubic lattice 
$\Gamma = \mathbb{Z}^{\nu}$ for some integer $\nu \geq 1$. 
For concreteness, consider $\Ir^\nu$ equipped with the $\ell^1$-metric 
\begin{equation}
d(x,y) = |x-y| = \sum_{j=1}^{\nu} |x_j -y_j|.
\end{equation}
Other translation invariant metrics can be treated similarly. For any $\epsilon >0$, the function 
\begin{equation} \label{latticeF}
F(r) = \frac{1}{(1+r)^{\nu + \epsilon}} \quad \mbox{for all } r \geq 0 \, ,
\end{equation}
is an $F$-function on $\Gamma = \mathbb{Z}^{\nu}$. Integrability follows from
\begin{equation}
\| F \| = \sum_{z \in \mathbb{Z}^{\nu}} \frac{1}{(1+|z|)^{\nu + \epsilon}} < \infty.
\end{equation}
Moreover, for any metric space $(\Gamma, d)$: if $p \geq 1$, the bound
\begin{eqnarray} \label{poly_F_bd}
(1+ d(x,y))^p & \leq & (1+ d(x,z) + 1 + d(z,y))^p \nonumber \\
& \leq & 2^{p-1} (1+ d(x,z))^p + 2^{p-1}(1+d(z,y))^p
\end{eqnarray}
holds for all $x,y,z \in \Gamma$, since the function $t \mapsto t^p$ is (midpoint) convex. In this case,
the function defined in (\ref{latticeF}) satisfies (\ref{Fconv}) with 
\begin{equation}
C_F \leq 2^{\nu + \epsilon} \| F \|.
\end{equation}
%By scaling, one readily finds normalized $F$-functions on $\Gamma= \mathbb{Z}^{\nu}$. 

Next, we note that for many of our results it is not necessary that $\Gamma$ has the structure of a lattice. 
We will say that a metric space $(\Gamma, d)$ is {\em $\nu$-regular} if there exists $\nu>0$ and $\kappa < \infty$ 
for which
 \begin{equation} \label{Gballbd}
 \sup_{x \in \Gamma} |b_x(n)| \leq \kappa n^{\nu} \quad \mbox{for all } n \geq 1 \, .
 \end{equation}
Here for any $x \in \Gamma$ and $n \geq 0$, $b_x(n)$ is the ball of radius $n$ centered at $x$ and
 $| \cdot |$ denotes cardinality. From \eq{growth_of_balls} we see that if $\Gamma$ has an $F$-function for which $F(r) \leq C r^{-\nu}$, then 
$\Gamma$ is $\nu$-regular.

 If $(\Gamma, d)$ is $\nu$-regular, then for any 
 $\epsilon >0$, the function 
  \begin{equation} \label{baseF}
 F(r) = \frac{1}{(1+r)^{\nu + 1 + \epsilon}} \quad \mbox{for all } r \geq 0 ,
 \end{equation}
 is an $F$-function on $\Gamma$. 
 
To see that this is the case, we need only check uniform integrability, i.e. (\ref{Fint}),  
as an argument using (\ref{poly_F_bd}) shows that this $F$ satisfies (\ref{Fconv}).
Fix $x \in \Gamma$. Set $B_x(1) = b_x(1)$ and $B_x(n) = b_x(n) \setminus b_x(n-1)$ for any
$n \geq 2$. It is clear then that (\ref{Gballbd}) implies
\begin{eqnarray}
\sum_{y \in \Gamma} \frac{1}{(1+d(x,y))^{\nu + 1 + \epsilon}} & = & \sum_{n=1}^{\infty} \sum_{y \in B_x(n)} \frac{1}{(1+d(x,y))^{\nu + 1 + \epsilon}} \nonumber \\
& \leq & \sum_{n=1}^{\infty} \frac{|b_x(n)|}{n^{\nu + 1 + \epsilon}} \nonumber \\
& \leq & \sum_{n=1}^{\infty} \frac{\kappa}{n^{1 + \epsilon}} < \infty
\end{eqnarray}
uniformly in $x$; hence (\ref{Fint}) holds. A computation similar to \eqref{poly_F_bd} shows that $C_F<\infty$.

We can combine the above discussion with \eqref{F-regular} to prove the following result.

\begin{prop}\label{nu-reg-F-function} Let $(\Gamma, d)$ be a countable metric space. 
	\begin{enumerate}
		\item[(i)] If $(\Gamma, d)$ is $\nu$-regular then $F(r)=(1+r)^{-(\nu+1+\epsilon)}$ is an $F$ function of $(\Gamma,d)$ for all $\epsilon >0$.
		\item[(ii)] If $F(r)=(1+r)^{-(\nu+\epsilon)}$ is an $F$-function of $(\Gamma,d)$ for all $\epsilon >0$, then $\Gamma$ is $\nu$-regular.
	\end{enumerate} 
\end{prop}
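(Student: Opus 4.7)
The statement consolidates two facts that are essentially present in skeletal form in the paragraphs preceding the proposition, so my plan is largely to organize those observations into a proper proof.

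\textbf{For (i)}, I would verify directly the two defining properties of an $F$-function for $F(r) = (1+r)^{-(\nu+1+\epsilon)}$, following the computation already sketched for the example \eqref{baseF}. For the uniform integrability \eqref{Fint}, I would partition $\Gamma \setminus \{x\}$ into the concentric spherical shells $B_x(n) := b_x(n) \setminus b_x(n-1)$ for $n \geq 1$. Since $F$ is nonincreasing and $d(x,y) \geq n$ on $B_x(n)$, the $\nu$-regularity bound $|B_x(n)| \leq |b_x(n)| \leq \kappa n^{\nu}$ yields
\[
\sum_{y \in \Gamma} F(d(x,y)) \leq 1 + \sum_{n \geq 1} \kappa n^{\nu} (1+n)^{-(\nu+1+\epsilon)} \leq 1 + \kappa \sum_{n \geq 1} n^{-(1+\epsilon)} < \infty,
\]
uniformly in $x$. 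For the convolution condition \eqref{Fconv}, the key input is the elementary midpoint-convexity inequality $(1+d(x,y))^p \leq 2^{p-1}[(1+d(x,z))^p + (1+d(z,y))^p]$ (valid for $p \geq 1$ by the triangle inequality and convexity of $t \mapsto t^p$), which with $p = \nu+1+\epsilon$ gives
\[
\sum_{z \in \Gamma} \frac{F(d(x,z)) F(d(z,y))}{F(d(x,y))} \leq 2^{p-1}\Big(\sum_z F(d(z,y)) + \sum_z F(d(x,z))\Big) \leq 2^p \|F\|,
\]
so $C_F \leq 2^{\nu+1+\epsilon} \|F\| < \infty$.

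\textbf{For (ii)}, my plan is a one-line deduction from the bound \eqref{growth_of_balls}, which was already observed in \eqref{F-regular} as a direct consequence of the uniform integrability of any $F$-function: $|b_x(n)| \leq \|F\| F(n)^{-1}$. Applying this with the hypothesized $F(r) = (1+r)^{-(\nu+\epsilon)}$ gives
\[
|b_x(n)| \leq \|F\|\, (1+n)^{\nu+\epsilon} \quad \text{for every } n \geq 1,
\]
which is the polynomial ball-growth required by \eqref{Gballbd} (with any chosen $\epsilon > 0$ fixing the exponent and $\kappa$).

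I do not foresee any serious obstacle. The mildly delicate interpretive point is the exponent matching: part (ii) only produces a growth exponent $\nu + \epsilon$ rather than exactly $\nu$, reflecting that the characterization of $\nu$-regularity via power-law $F$-functions is sharp only up to $\epsilon$-losses. This is precisely counterbalanced by the exponent gap of $1$ in part (i), which arises because the shell-sum argument there is forced to bound $|B_x(n)|$ by the full ball cardinality $|b_x(n)| \leq \kappa n^\nu$ rather than by a (generally unavailable) surface estimate of order $n^{\nu - 1}$.
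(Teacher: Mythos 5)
Your proof takes essentially the same route as the paper's: part (i) reuses the shell decomposition and the convexity bound \eqref{poly_F_bd} sketched in the discussion preceding the proposition, and part (ii) applies \eqref{growth_of_balls}. The exponent gap you flag in part (ii) is, however, a genuine defect, and not a mere interpretive subtlety: what the argument actually delivers is, for each fixed $\epsilon>0$, the bound $\sup_x |b_x(n)| \leq \|F_\epsilon\|(1+n)^{\nu+\epsilon}$, i.e.\ $(\nu+\epsilon)$-regularity with a constant $\kappa_\epsilon=\|F_\epsilon\|$ that in general diverges as $\epsilon\to 0$ (already on $\Gamma=\mathbb{Z}^\nu$ one has $\|F_\epsilon\|\to\infty$). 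The paper tries to close this by ``taking the infimum over $\epsilon>0$'' to claim $\kappa=2^\nu\|F\|$, but this step is not sound precisely because $\|F_\epsilon\|$ is not bounded uniformly in $\epsilon$, and so neither your proof nor the paper's actually produces a single $\kappa$ for which \eqref{Gballbd} holds with exponent exactly $\nu$. Your rendition therefore inherits a gap from the paper itself, and your transparency about the $\epsilon$-loss is the correct diagnosis; to literally conclude $\nu$-regularity one would need either a uniform-in-$\epsilon$ bound on $\|F_\epsilon\|$ (which the hypothesis does not supply) or a genuinely different argument, or else the statement of (ii) should be weakened to $(\nu+\epsilon)$-regularity for every $\epsilon>0$.
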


\begin{proof}
Part (i) follows immediately from the discussion after \eqref{baseF}. For part (ii), suppose that $F(r)=(1+r)^{-(\nu+\epsilon)}$ is an $F$-function of $(\Gamma, d)$ for all $\epsilon >0$. Fix $\epsilon>0$. Then by \eqref{growth_of_balls}, for any $n\geq 1$ and $x\in\Gamma$,
\be
|b_x(n)| \leq \|F\|(1+n)^{\nu+\epsilon} \leq \|F\|(2n)^{\nu+\epsilon}.
\ee
Taking the infimum over $\epsilon>0$ shows that $(\Gamma,d)$ is $\nu$-regular with $\kappa = 2^\nu\|F\|$.
\end{proof}

%
%
% On weighted F's
%
%

\subsection{On weighted $F$-functions} \label{app:sec_weight_F}

For certain applications, it is convenient to consider families of $F$-functions of a specific form, which we call {\em weighted $F$-functions}. 

Let $(\Gamma, d)$ be a metric space equipped with an $F$-function $F$ as described in
Section~\ref{app:sec_def_F}. Let $g: [0, \infty) \to [0, \infty)$ 
be a non-negative, non-decreasing, sub-additive function, i.e.
\begin{equation} \label{sub-add}
g(r+s) \leq g(r) + g(s) \quad \mbox{for any } r, s \geq 0 \, .
\end{equation}
Corresponding to any such $g$, the function
\begin{equation}
F_g(r) := e^{-g(r)} F(r) \quad \mbox{for all } r \geq 0 \, ,
\end{equation}
is an $F$-function on $\Gamma$. In fact, since $g$ is non-negative, $F_g$ satisfies (\ref{Fint}) 
with $\| F_g \| \leq \|F \|$. Moreover, since $g$ is non-decreasing
and sub-additive, one also has that
\begin{equation}
g(d(x,y)) \leq g( d(x,z) + d(z,y)) \leq g(d(x,z)) + g(d(z,y)) \quad \mbox{for all } x,y,z \in \Gamma \, .
\end{equation} 
Thus (\ref{Fconv}) holds with $C_{F_g} \leq C_F$. 

We may refer to $F$ as the {\it base} $F$-function associated to $F_g$; note that $F_0 = F$ for $g=0$ . 
The function $g$ induces a factor $r \mapsto e^{- g(r)}$ which is often referred to as a {\it weight}.
We may also loosely refer to $g$ as a weight and similarly $F_g$ as a {\it weighted $F$-function}.
One readily checks that sums, non-negative scalar multiples, and compositions of weights are also weights; in the sense that 
if $g_1$ and $g_2$ are both non-negative, non-decreasing, sub-additive functions,
then so too are $g_1+g_2$, $a g_1$ (for $a \geq 0$), and $g_1 \circ g_2$.

In certain applications, it is useful to introduce a one-parameter family of weighted $F$-functions by
taking a base $F$-function $F$ on $\Gamma$, fixing a weight $g$, and associating to any $a \geq 0$, the function $g_a(r) = ag(r)$, for which
$F_{g_a}(r) = e^{-ag(r)}F(r)$. When $g$ is understood, we often just write $F_a:=F_{g_a}$ to describe this family of weighted $F$-functions. For example, if $F(r) = (1+r)^{-p}$ and $g(r) =r$, then $F_a(r) = e^{-ar}(1+r)^{-p}$ is the family of weighted $F$-functions defined in Section~\ref{sec:spatialstructure}.

As a motivation for introducing these weights, consider again $\Gamma = \mathbb{Z}^{\nu}$. As discussed in the previous subsection, the polynomially decaying function $F$ in (\ref{latticeF}) is an $F$-function associated to $\Gamma = \mathbb{Z}^{\nu}$. Such an $F$-function is appropriate for interactions $\Phi$ with terms that decay polynomially with the diameter of their support: $\|\Phi(X)\| \leq C(1+\diam(X))^{-\nu+\epsilon}$. However, we are typically interested in interactions whose terms decay much faster, in particular exponentially fast. One readily checks that for any $a >0$, the exponential function $g(r) = e^{-a r}$ fails to satisfy (\ref{Fconv}) and as such is not an $F$-function on $\Gamma = \mathbb{Z}^{\nu}$, but does satisfy the criterion to be a weight. Since exponential functions often govern the decay of our interactions, it is convenient that one can obtain an exponentially decaying $F$-function on $\Gamma = \mathbb{Z}^{\nu}$ by making an appropriate choice of weight. 

Before moving on to discussing several useful weights, we point out one added benefit of these functions. In the situation were we do not assume to have a weighted $F$-function and given $X, \, Y \in \cP_0(\Gamma)$, we will often use the simple bound
\be
\sum_{x \in X} \sum_{y \in Y} F(d(x,y)) \leq \vert X\vert\, \Vert F\Vert
\ee
when applying a Lieb-Robinson bound or quasi-locality bound, see \eqref{lrbmin} and \eqref{QLmap_Fbd}. For weighted $F$-functions, however, the following is also frequently used:
\be\label{weighted_decay}
\sum_{x \in X} \sum_{y \in Y} F_g(d(x,y)) \leq \vert X\vert\, \Vert F\Vert e^{-g(d(X, Y))}.
\ee
Here, one typically is considering a quantum lattice system defined on a (large) finite volume $\Lambda$, and in the situation that $Y = \Lambda \setminus X(n)$ then the RHS of \eqref{weighted_decay} decays as $e^{-g(n)}$.

\subsubsection{Three common weights}\label{sec:3_common_weights}
With an eye towards our specific applications, we now introduce three particular classes of weights.

First, let $\mu \in [0,1]$. The function $g: [0, \infty) \to [0, \infty)$ given by $g(r) =r^{\mu}$ is non-negative, non-decreasing, and sub-additive
in the sense of (\ref{sub-add}). The constant function $g(r) = 1$ corresponding to $\mu=0$ is of minor interest, however, the 
choice of $\mu =1$ generates exponentially decaying weights. When $0 < \mu <1$, the function $e^{-r^\mu}$   
is often called a {\it stretched exponential}. 

Next, we provide an example {\it between} exponential and stretched exponential decay. As we will show, for any $p>0$, the function 
\begin{equation}
g(r) = \left\{ \begin{array}{cc} \frac{e^p}{p^p} & \mbox{if } 0 \leq r \leq e^p \\ \frac{r}{\ln(r)^p} & r \geq e^p \end{array} \right.
\end{equation}
is non-negative, non-decreasing, and sub-additive.
In our applications of the spectral flow, see e.g. Section~\ref{sec:spectral-flow}, the choice of $p=2$ is particularly
relevant. 

Note that at $r = e^p$ the non-constant part of $g$ has a zero derivative, and for $r > e^p$, $g$ is strictly increasing. 
That motivates this particular choice of cut-off. Also, it is easy to see that this function is
sub-additive  by taking cases: Let $r,s \geq 0$. Consider (i) $r+s \leq e^p$ and (ii) $r+s >e^p$. Both cases are easy to see.
For the second use that
\begin{equation}
g(r+s) = \frac{r}{\ln(r+s)^p} + \frac{s}{\ln(r+s)^p}.
\end{equation}
Note that
\begin{equation}
\mbox{ if } r \geq e^p , \mbox{  then} \quad \frac{r}{\ln(r+s)^p} \leq \frac{r}{\ln(r)^p}  \quad \mbox{whereas} \quad \mbox{ if } r \leq e^p , \mbox{ then} \quad \frac{r}{\ln(r+s)^p} \leq \frac{e^p}{\ln(e^p)^p}
\end{equation}
the latter fact using that $r+s>e^p$.

Lastly, the function $g :[0, \infty) \to [0, \infty)$ given by
\begin{equation}
g(r) = \ln(1+r)
\end{equation}
is clearly non-negative and non-decreasing. Since for any $r,s \geq 0$, we have that
\begin{equation}
1 + r + s \leq (1+r)(1+s) 
\end{equation}
and sub-additivity of $g$ readily follows. Starting with a base $F$-function, as in (\ref{latticeF}) or (\ref{baseF}), 
a proper scaling of $F$ by this weight allows for arbitrary power-law decay.

 \subsection{Simple transformations of $F$-functions} \label{subsec:regsets}

In certain applications, it is convenient to know that various decaying functions are in fact
$F$-functions. Quantities of interest can be estimated in terms of 
translations or re-scalings of known $F$-functions. For $\nu$-regular $\Gamma$, these modifications preserve the basic properties of an $F$-function. The following two propositions show that suitably defined truncations, shifts, and dilations of $F$-functions are again $F$-functions.

\begin{prop} \label{prop:mod_F_1} 
	Let $(\Gamma, d)$ be a $\nu$-regular metric space with an $F$-function $F$. For any $a \geq 0$ and any choice of $c \geq F(a)$, the function $\tilde{F} : [0, \infty) \to (0, \infty)$ defined by
	setting
	\begin{equation} \label{F_with_step}
	\tilde{F}(r) = \left\{ \begin{array}{ll} c & \mbox{if } 0 \leq r \leq a, \\
	F(r) & \mbox{if } r >a,
	\end{array} \right.
	\end{equation}
	is an $F$-function on $(\Gamma, d)$. In fact, 
	\begin{equation} \label{step_F_est}
	\| \tilde{F} \| \leq c \kappa a^{\nu} + \| F \| \quad \mbox{and} \quad C_{\tilde{F}} \leq \max\{c, F(0)\} \frac{cC_F}{F(a)^2}  
	\end{equation}
\end{prop}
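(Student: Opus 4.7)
My plan is to verify the three defining features of an $F$-function for $\tilde F$: monotonicity, uniform integrability, and the convolution condition, in that order. The key preliminary observation, which drives every subsequent estimate, is the pointwise comparison
\begin{equation*}
\tilde F(r) \leq \frac{c}{F(a)}\,F(r) \quad \text{for all } r \geq 0.
\end{equation*}
This holds by splitting into cases: if $r \leq a$, then $F(r) \geq F(a)$ since $F$ is non-increasing, so $\tilde F(r) = c \leq (c/F(a))F(r)$; if $r > a$, then $\tilde F(r) = F(r) \leq (c/F(a))F(r)$ because $c \geq F(a)$. The same comparison shows $\tilde F$ is non-increasing: on $[0,a]$ it is constant, on $(a,\infty)$ it agrees with $F$, and at the juncture $\tilde F(a^+) \leq F(a) \leq c = \tilde F(a)$.

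For the integrability bound, I would fix $y \in \Gamma$ and split
\begin{equation*}
\sum_{x \in \Gamma} \tilde F(d(x,y)) = \sum_{x \in b_y(a)} c \;+\; \sum_{x \notin b_y(a)} F(d(x,y)) \leq c|b_y(a)| + \|F\|.
\end{equation*}
Applying $\nu$-regularity in the form $|b_y(a)| \leq \kappa a^\nu$ (see \eqref{Gballbd}) and taking the supremum over $y$ yields $\|\tilde F\| \leq c\kappa a^\nu + \|F\|$ as claimed.

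The convolution bound is the main step and requires a case distinction on the pair $(x,y)$, since the denominator $\tilde F(d(x,y))$ behaves differently in the two regimes. Applying the pointwise comparison above to both factors in the numerator together with \eqref{Fconvest} gives, for every $x,y,z \in \Gamma$,
\begin{equation*}
\tilde F(d(x,z))\,\tilde F(d(z,y)) \leq \frac{c^2}{F(a)^2}\,F(d(x,z))F(d(z,y)),
\end{equation*}
and summing in $z$ produces $\sum_z \tilde F(d(x,z))\tilde F(d(z,y)) \leq (c^2 C_F / F(a)^2)\,F(d(x,y))$. When $d(x,y) > a$, the denominator $\tilde F(d(x,y))$ equals $F(d(x,y))$ and the ratio is at most $c^2 C_F/F(a)^2$. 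When $d(x,y) \leq a$, the denominator is $c$, and we further bound $F(d(x,y)) \leq F(0)$, producing the ratio $cC_F F(0)/F(a)^2$. Taking the maximum over the two cases gives the stated bound $C_{\tilde F} \leq \max\{c,F(0)\}\cdot cC_F/F(a)^2$.

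The only mildly delicate point is this bookkeeping in the convolution estimate: one must not simply replace $\tilde F(d(x,y))$ in the denominator by $(c/F(a))F(d(x,y))$, since that inequality goes the wrong way. This is exactly why the $d(x,y) \leq a$ regime picks up the extra factor $F(0)$ through the coarse bound $F(d(x,y)) \leq F(0)$, accounting for the $\max\{c, F(0)\}$ appearing in \eqref{step_F_est}.
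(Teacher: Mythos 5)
Your proof is correct and follows essentially the same route as the paper's: the pointwise comparison $\tilde F(r) \leq (c/F(a))F(r)$ plus $\nu$-regularity for the integrability bound, and the same comparison together with the convolution property of $F$, followed by a case split on whether $d(x,y)$ exceeds $a$, for the convolution bound. The only cosmetic difference is that the paper packages the case split as the single inequality $cF(d(x,y)) \leq \max\{c,F(0)\}\tilde F(d(x,y))$ before dividing, while you bound the two ratios separately and take the maximum; these are identical computations.
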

\begin{proof}
	Fix $x \in \Gamma$. Note that
	\begin{equation}
	\sum_{y \in \Gamma} \tilde{F}(d(x,y)) = c |b_x(a)| + \sum_{y \in \Gamma \setminus b_x(a) } F(d(x,y))
	\end{equation}
	and the first bound in (\ref{step_F_est}) follows from $\nu$-regularity. 
	
	To see the second bound, note that 
	\begin{equation}
	\tilde{F}(r) \leq \frac{c}{F(a)} F(r) \quad \Rightarrow \quad \sum_{z \in \Gamma} \tilde{F}(d(x,z)) \tilde{F}(d(z,y)) \leq \left( \frac{c}{F(a)} \right)^2 C_F F(d(x,y))
	\end{equation}
	for all sites $x,y \in \Gamma$. By considering the cases $d(x,y)\geq a$ and $d(x,y) <a$ separately, one can show
	\[
	cF(d(x,y)) \leq \max\{c,\, F(0)\}\tilde{F}(d(x,y)), 
	\]
	from which the second bound in (\ref{step_F_est}) follows.
\end{proof} 

\begin{prop} \label{prop:mod_F_2} Let $(\Gamma, d)$ be a $\nu$-regular metric space, and $p \geq 1$ be such that
	\begin{equation}
	F_0(r) = \frac{1}{(1+r)^p}, \; \text{  for all  } \; r\geq 0
	\end{equation}
	is an $F$-function on $(\Gamma, d)$.
	\begin{enumerate}
		\item[(i)] If $F(r)=e^{-g(r)}F_0(r)$ is a weighted $F$-function on $(\Gamma, d)$, then for any $\epsilon>0$ the function defined by
		$\tilde{F}(r) = F(\epsilon r)$ is an $F$-function on $(\Gamma, d)$. Moreover, 
		\begin{equation} \label{scaled_F_est}
		\| \tilde{F} \| \leq \max\{ 1, \epsilon^{-p} \} \| F_0 \| \quad \mbox{and} \quad C_{\tilde{F}} \leq 2^p \| \tilde{F} \| \, .
		\end{equation}
		\item[(ii)]If $F(r)=e^{-g(r)}F_0(r)$ is a weighted $F$-function on $(\Gamma, d)$ then for any $a>0$ the function defined by
		\begin{equation}
		\tilde{F}(r) = \left\{ \begin{array}{ll} F(0) & 0 \leq r \leq a, \\
		F(r-a) & r > a, \end{array} \right.
		\end{equation} 
		is an $F$-function on $(\Gamma, d)$. In fact, 
		\begin{equation} \label{translate_F_est}
		\| \tilde{F} \| \leq \frac{\max\{ 1, F(0) \}}{F(a)}  \| F \| \quad \mbox{and} \quad C_{\tilde{F}} \leq \left( \frac{\max\{ 1, F(0) \}}{F(a)}  \right)^2 C_F \, .
		\end{equation}
	\end{enumerate}
\end{prop}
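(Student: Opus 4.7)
My plan is to establish both parts by reducing to pointwise comparison estimates between $\tilde{F}$ and functions already known to be $F$-functions, thereby inheriting uniform integrability and the convolution property directly. The technical heart of each part will be a carefully chosen pointwise bound on $\tilde{F}(r)$ in terms of either $F_0(r)$ (for dilations) or $F(r)$ (for translations), combined with the subadditivity of the weight $g$.

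For part (i), the key elementary observation is that $1+\epsilon r \geq \min\{1,\epsilon\}(1+r)$ for all $r \geq 0$, which yields $(1+\epsilon r)^{-p} \leq \max\{1,\epsilon^{-p}\}(1+r)^{-p}$. Combined with $e^{-g(\epsilon r)} \leq 1$, this gives $\tilde{F}(r) \leq \max\{1,\epsilon^{-p}\}F_0(r)$, from which the claimed bound on $\|\tilde{F}\|$ follows by summation. For the convolution constant, I will estimate the ratio $\tilde{F}(d(x,z))\tilde{F}(d(z,y))/\tilde{F}(d(x,y))$ by handling the polynomial and weight parts separately: subadditivity and monotonicity of $g$ give $g(\epsilon d(x,y)) \leq g(\epsilon d(x,z)) + g(\epsilon d(z,y))$, so the weight-ratio factor is bounded by $1$; the polynomial ratio is handled by the midpoint-convexity inequality $(1+\epsilon d(x,y))^p \leq 2^{p-1}[(1+\epsilon d(x,z))^p + (1+\epsilon d(z,y))^p]$, precisely as in the proof that the base polynomial $F$-function satisfies the convolution condition in Section~\ref{sec:2_comm_ex}. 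Summing over $z$ then produces the factor $2^p \|\tilde{F}\|$.

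For part (ii), I will establish the single pointwise estimate
\begin{equation*}
\tilde{F}(r) \leq \frac{\max\{1,F(0)\}}{F(a)} F(r) \quad \text{for all } r \geq 0.
\end{equation*}
This splits into two cases. For $r \leq a$, monotonicity gives $F(r) \geq F(a)$, so $\tilde{F}(r) = F(0) \leq (F(0)/F(a))F(r)$. For $r > a$, I write $F(r-a)/F(r) = e^{g(r)-g(r-a)}(1+r)^p/(1+r-a)^p$, apply subadditivity of $g$ to get $g(r)-g(r-a) \leq g(a)$, and use $(1+r)/(1+r-a) \leq 1+a$ (valid since $r > a$ implies $1 + r - a \geq 1$), yielding $F(r-a) \leq e^{g(a)}(1+a)^p F(r) = F(r)/F(a)$. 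Once this pointwise bound is in hand, summing gives the stated bound on $\|\tilde{F}\|$, and substituting into the defining sum for $C_{\tilde{F}}$ together with the pointwise inequality $\tilde{F}(d(x,y)) \geq F(d(x,y))$ (which follows from $\tilde{F}$ being obtained by shifting $F$ rightward) produces the claimed bound on $C_{\tilde{F}}$ via the convolution property of $F$.

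The main obstacle I anticipate is the bookkeeping of constants in part (ii), particularly the verification that $\tilde{F}(r) \geq F(r)$ (needed to replace $F(d(x,y))$ by $\tilde{F}(d(x,y))$ in the denominator) is actually a consequence of $F$ being non-increasing: for $r \leq a$ we have $\tilde{F}(r) = F(0) \geq F(r)$, while for $r > a$ we have $\tilde{F}(r) = F(r-a) \geq F(r)$. Once this is recognized, the remaining steps are elementary and both parts reduce to clean pointwise majorization arguments.
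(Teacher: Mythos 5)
Your proof is correct and follows essentially the same route as the paper's. For part (i) the paper makes the same observations in a slightly different packaging: it first remarks that $r \mapsto g(\epsilon r)$ is again a non-negative, non-decreasing, subadditive weight, and then invokes the general fact from Section~\ref{app:sec_weight_F} that $\|F_g\|\leq\|F\|$ and $C_{F_g}\leq C_F$ for a weighted $F$-function, so that it suffices to verify the bounds for the rescaled base $(1+\epsilon r)^{-p}$ alone. You instead write out the weight-ratio cancellation and the midpoint-convexity step by hand inside the convolution sum; the calculation is the same. For part (ii) your two-sided pointwise comparison $F(r)\leq\tilde F(r)\leq\frac{\max\{1,F(0)\}}{F(a)}F(r)$ is exactly what the paper proves, and you correctly identify that the lower bound $\tilde F\geq F$ (monotonicity of $F$) is what lets you replace the denominator $\tilde F(d(x,y))$ by $F(d(x,y))$ in the convolution ratio.

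One small remark that applies equally to your write-up and to the paper's statement: in part (i), after bounding the weight-ratio factor by $1$ and summing the midpoint-convexity estimate, what you actually obtain is $C_{\tilde F}\leq 2^p\sup_x\sum_y(1+\epsilon d(x,y))^{-p}$, i.e.\ $2^p$ times the uniform-integral norm of the \emph{unweighted} rescaled base $(1+\epsilon r)^{-p}$, rather than $2^p\|\tilde F\|$ where $\tilde F$ also carries the factor $e^{-g(\epsilon r)}$. Since $\tilde F\leq(1+\epsilon r)^{-p}$ pointwise, the former norm dominates the latter, so the displayed inequality $C_{\tilde F}\leq 2^p\|\tilde F\|$ is not literally what the argument delivers. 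This does not affect the qualitative conclusion (one still gets $C_{\tilde F}\leq 2^p\max\{1,\epsilon^{-p}\}\|F_0\|$, which is what is actually used downstream), but it is worth keeping the two norms distinct if you want the constants to be reproducible.
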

\begin{proof}
	To see (i), first note that $r \mapsto g( \epsilon r)$ is non-negative, non-decreasing, and sub-additive. In this case, we need only verify that $r \mapsto (1+ \epsilon r)^{-p}$ is an $F$-function. For any $\epsilon >0$, one has that
	\begin{equation}
	\min\{1, \epsilon\} (1+ r) \leq (1 + \epsilon r) \leq \max\{ 1, \epsilon\} (1+r) 
	\end{equation}
	holds for all $r \geq 0$. The first bound in (\ref{scaled_F_est}) is then clear, and the second bound
	follows as in (\ref{poly_F_bd}).
	
	To see (ii), a short calculation shows that
	\begin{equation}
	F(r) \leq \tilde{F}(r) \leq \frac{\max\{1, F(0) \}}{F(a)} F(r) \quad \mbox{for all } r \geq 0 \, .
	\end{equation}
	The first inequality above is trivial since $F$ is non-increasing. The second follows from sub-additivity of $g$, namely $g(r) \leq g(a) + g(r-a)$ for any $r >a$, as well as the fact that $F_0(r-a) \leq F_0(r)/F_0(a)$. The bounds in (\ref{translate_F_est}) readily follow.
\end{proof}

A simple consequence of Proposition~\ref{prop:mod_F_2} is the following. Under the assumptions of
Proposition~\ref{prop:mod_F_2}, let $F$ be a weighted $F$-function with base $F_0$. For any
$r>1$, the bound
\begin{equation}
F(r) \leq F(\lfloor r \rfloor) \leq F(r -1)
\end{equation}
is clear, and this bound extends to all $r \geq 0$ using the function in Proposition~\ref{prop:mod_F_2} (ii) above.
Thus $F(\lfloor r \rfloor)$ is an $F$-function as well.

%%%%%
%
% Basic Interaction Bounds
%
%%%%%%%%%

\subsection{Basic interaction bounds} \label{app:sec:BasicBounds}

In the main text, we frequently use a number of basic estimates concerning interactions that are expressed using $F$-functions.
Here we collect a few results for later reference.

We begin by recalling some of the basic notation associated with interactions.
Let $(\Gamma, d)$ be a metric space equipped with an $F$-function $F$ as discussed in Section~\ref{app:sec_def_F}.
Let $\mathcal{P}_0( \Gamma)$ denote the set of finite subsets of $\Gamma$.
We say that a mapping $\Phi$ is an interaction on $\Gamma$ if $\Phi : \mathcal{P}_0( \Gamma) \to \mathcal{A}_{\Gamma}^{\rm loc}$
with the property that $\Phi(X)^* = \Phi(X) \in \mathcal{A}_X$ for every $X \in \mathcal{P}_0( \Gamma)$. 
If $\Phi$ is an interaction on $\Gamma$, we write that $\Phi \in \mathcal{B}_F$ if and only if
\begin{equation} \label{pmomphiF}
\| \Phi \|_F : = \sup_{x,y \in \Gamma} \frac{1}{F(d(x,y))} \sum_{\stackrel{X \in \mathcal{P}_0( \Gamma):}{x,y \in X}} \| \Phi(X) \| < \infty.
\end{equation}
A basic consequence of \eqref{pmomphiF} is that for all $x, y\in \Gamma$,
\begin{equation}\label{Fsum}
\sum_{\substack{X \in P_0(\Gamma) : \\ x,y\in X}} \|\Phi(X)\| \leq \|\Phi\|_F F(d(x,y)).
\end{equation} 
%%%%%
%
% Estimates Based on Distance
%
%%%%%%%%%

\subsubsection{Estimates based on distance}

We first provide a basic $F$-norm estimate based on summing interaction terms whose distance from a specific set is given. Recall that if
$X \subset \Gamma$ and $n \geq 0$, the set $X(n) \subset \Gamma$ is defined as
\begin{equation} \label{set_xn}
X(n) = \{ y \in \Gamma : d(X,y) \leq n \} = \bigcup_{x\in X}b_x(n).
\end{equation}

\begin{prop} \label{app:prop:int_bd_dist} Let $( \Gamma, d)$ be a metric space equipped with an $F$-function $F$. 
Let $\Phi$ be an interaction on $\Gamma$ with $\Phi \in \mathcal{B}_F$.
For any $X \in \mathcal{P}_0( \Gamma)$ and each $R \geq 0$, 
\begin{equation} \label{RclosetoX}
\sum_{\stackrel{Z \in \mathcal{P}_0( \Gamma):}{d(Z,X) \leq R}} \| \Phi(Z) \| \leq \| \Phi \|_F \sum_{x \in X(R)} \sum_{y \in \Gamma} F(d(x,y)) \leq |X(R)| \| F \| \| \Phi \|_F.
\end{equation}
Moreover, 
\begin{equation} \label{RfarfromX}
\sum_{\stackrel{Z \in \mathcal{P}_0( \Gamma):}{d(Z,X) > R}} \| \Phi(Z) \| \sum_{x \in X} \sum_{z \in Z} F(d(x,z)) \leq C_F \| \Phi \|_F \sum_{x \in X} \sum_{y \in \Gamma \setminus X(R) }F(d(x,y)) .
\end{equation}
\end{prop}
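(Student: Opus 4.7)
The plan is to derive both inequalities by overcounting arguments anchored on the basic $F$-norm estimate \eqref{Fsum}, which gives $\sum_{Z \ni x, y} \|\Phi(Z)\| \leq \|\Phi\|_F F(d(x,y))$ for any two points $x, y \in \Gamma$ (possibly coinciding).

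For the first inequality \eqref{RclosetoX}, I would start from the observation that any $Z$ with $d(Z,X) \leq R$ contains at least one point of $X(R)$. Consequently,
\[
\sum_{Z : d(Z,X) \leq R} \|\Phi(Z)\| \leq \sum_{x \in X(R)} \sum_{Z \ni x} \|\Phi(Z)\|,
\]
where each relevant $Z$ is counted at least once on the right. To bound the inner sum I would introduce a second index $y \in Z$, using $1 \leq |Z|$ to obtain
\[
\sum_{Z \ni x} \|\Phi(Z)\| \leq \sum_{y \in \Gamma} \sum_{Z \ni x, y} \|\Phi(Z)\| \leq \|\Phi\|_F \sum_{y \in \Gamma} F(d(x,y)),
\]
and the second inequality in \eqref{RclosetoX} then follows from the uniform integrability of $F$.

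For the second inequality \eqref{RfarfromX}, the decisive observation is that $d(Z,X) > R$ forces every element of $Z$ to lie in $\Gamma \setminus X(R)$. I would reorder the triple sum to bring $z$ outside, which restricts the $z$-range to $\Gamma \setminus X(R)$. Next I would introduce a second point $y \in Z$ by the same overcounting trick used in part (i); since $y \in Z \subset \Gamma \setminus X(R)$, the auxiliary variable $y$ is automatically confined to $\Gamma \setminus X(R)$ as well. After dropping the constraint $d(Z,X) > R$ and applying \eqref{Fsum} to the pair $(z,y)$, the estimate reduces to
\[
\|\Phi\|_F \sum_{x \in X} \sum_{y \in \Gamma \setminus X(R)} \sum_{z \in \Gamma \setminus X(R)} F(d(x,z)) F(d(z,y)),
\]
and a final appeal to the convolution condition \eqref{Fconv}, after relaxing the inner constraint to $z \in \Gamma$, yields the factor $C_F F(d(x,y))$ and closes the estimate.

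The main bookkeeping step in both parts is the overcounting when introducing the auxiliary variable $y$: a set $Z$ containing $x$ can contain many different $y$, and one must check that the passage from a sum over $Z \ni x$ to a double sum over $y \in \Gamma$ and $Z \ni x, y$ is genuinely an upper bound. This is guaranteed by $1 \leq |Z|$ for non-empty $Z$, and it is the only non-routine moment in the proof.
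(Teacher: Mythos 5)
Your proof is correct and follows essentially the same strategy as the paper's: overcount by anchoring one or two points of $Z$, apply \eqref{Fsum}, and finish with uniform integrability for the first inequality and the convolution condition for the second. The only cosmetic difference is that you keep the auxiliary index $z$ restricted to $\Gamma\setminus X(R)$ an extra step before relaxing it to $\Gamma$, whereas the paper relaxes it immediately; the content is identical.
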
 
\begin{proof}
Given the uniform integrability of $F$, i.e. \eqref{Fint}, the second estimate in (\ref{RclosetoX}) is clear given the first. To see the first, note that by over-counting,
\begin{equation} \label{overcounting}
\sum_{\stackrel{Z \in \mathcal{P}_0( \Gamma):}{d(Z,X) \leq R}} \| \Phi(Z) \| \leq \sum_{x \in X(R)} \sum_{y \in \Gamma} \sum_{\stackrel{Z \in \mathcal{P}_0( \Gamma):}{x,y \in Z}} \| \Phi(Z) \|.
\end{equation} 
The estimate in (\ref{RclosetoX}) now follows from \eqref{Fsum} and \eqref{Fint}.

To prove (\ref{RfarfromX}), note that again by over-counting,
\begin{eqnarray} \label{large_R_est_1}
\sum_{\stackrel{Z \in \mathcal{P}_0( \Gamma):}{d(Z,X) > R}} \| \Phi(Z) \| \sum_{x \in X} \sum_{z \in Z} F(d(x,z)) & \leq &
\sum_{x \in X} \sum_{z \in \Gamma} F(d(x,z)) \sum_{y \in \Gamma \setminus X(R)} \sum_{\stackrel{Z \in \mathcal{P}_0( \Gamma):}{z,y \in Z}}  \| \Phi(Z) \|  \nonumber \\
& \leq &  \| \Phi \|_F \sum_{x \in X}  \sum_{y \in \Gamma \setminus X(R)} \sum_{z \in \Gamma} F(d(x,z)) F(d(z,y)).
\end{eqnarray} 
Thus, (\ref{RfarfromX}) follows using the convolution condition on $F$.
\end{proof}

A simple corollary of these bounds follows. To state it requires that we introduce two
notions. First, we describe {\it compatible $F$-functions}. Let $(\Gamma, d)$ be a metric space 
equipped with two $F$-functions denoted by $F_1$ and $F_2$.
We will say that $F_1$ and $F_2$ are {\it compatible} if there is a positive number $C_{1,2}$ and
a non-increasing function $F_{1,2}: [0, \infty) \to (0, \infty)$ for which: given any $x,y \in \Gamma$, 
\begin{equation} \label{compat_Fs}
\sum_{z \in \Gamma} F_1(d(x,z)) F_2(d(z,y)) \leq C_{1,2} F_{1,2}(d(x,y)) .
\end{equation}
Next we briefly describe time-dependent interactions (see Section~\ref{sec:spatialstructure} for more details). 
Let $I \subset \mathbb{R}$ be an interval. We say that $\Phi : \mathcal{P}_0( \Gamma) \times I \to 
\mathcal{A}_{\Gamma}^{\rm loc}$ is a {strongly continuous interaction} on $\Gamma$ if for each $X\in \cP_0(\Gamma)$:
\begin{enumerate}
	\item[(i)] $\Phi(X,t)^* = \Phi(X,t) \in \mathcal{A}_X$ for all $t \in I$;
	\item[(ii)] The map $\Phi(X, \cdot) : I \to \mathcal{A}_X$ is continuous in
	the strong operator topology.
\end{enumerate}
Moreover, we will write $\Phi \in \mathcal{B}_F(I)$ if $\Phi(\cdot,t) \in \cB_F$, for all $t\in I$, and $\| \Phi(t) \|_F$, which we sometimes 
write as $\| \Phi\|_F(t)$,  is a locally bounded 
function of $t$.

%
% given by
%\begin{equation}
%\| \Phi \|_F(t) = \sup_{x,y \in \Gamma} \frac{1}{F(d(x,y) )} \sum_{\stackrel{X \in \mathcal{P}_0( \Gamma):}{x,y \in X}} \| \Phi(X, \cdot ) \| (t)  \quad \mbox{for } t \in I,
%\end{equation}
%is locally bounded. Recall that the terms in the sum above are defined using the measurable mapping from Proposition \ref{prop:measurable_f}, see also \eq{meas_int_proj}.

We now state a corollary of Proposition~\ref{app:prop:int_bd_dist}.

\begin{cor} \label{app:cor:int_bd} Let $(\Gamma, d)$ be a metric space equipped with two compatible $F$-functions $F_1$ and $F_2$.
For an interval $I \subset \mathbb{R}$, suppose there is a co-cycle of automorphisms $\{ \alpha_{t,s} \}_{t,s \in I}$ of $\mathcal{A}_{\Gamma}$, which satisfy a Lieb-Robinson bound, i.e.
for any disjoint $X, Y \in \mathcal{P}_0( \Gamma)$, each $A \in \mathcal{A}_X, B \in \mathcal{A}_Y$,
and $s,t \in I$, 
\begin{equation} \label{lrb_cor_bd}
\| [ \alpha_{t,s}(A), B] \| \leq \frac{2 \| A \| \| B \|}{C_{F_1}} D^{\alpha}(t,s) \sum_{x \in X} \sum_{y \in Y} F_1(d(x,y)) 
\end{equation}
with a pre-factor $D^{\alpha}(t,s)$ that increases as $|t-s|$ increases. Let $\Psi$ be a time-dependent interaction
with $\Psi \in \mathcal{B}_{F_2}(I)$. Given $R \geq 0$ and $s,t \in I$ with $s \leq t$, one has that for any
 $A \in \mathcal{A}_X$ with $X \in \mathcal{P}_0(\Gamma)$, 
\begin{equation} \label{int_dyn_est_small_R}
\sum_{\stackrel{Z \in \mathcal{P}_0( \Gamma):}{d(Z,X) \leq R}} \int_s^t \| [ \alpha_{t,r}(A), \Psi(Z,r) ] \| \, dr \leq 2 \| A \| 
\int_s^t \| \Psi \|_{F_2}(r) \, dr \sum_{x \in X(R)} \sum_{y \in \Gamma} F_2(d(x,y)) 
\end{equation}
and, 
\begin{eqnarray} \label{int_dyn_est_large_R}
\sum_{\stackrel{Z \in \mathcal{P}_0( \Gamma):}{d(Z,X) > R}}  \int_s^t \| [ \alpha_{t,r}(A), \Psi(Z,r) ] \| \, dr 
 & \leq & \nonumber \\
&  \mbox{ } & \hspace{-3cm} \frac{2 \| A \| C_{1,2}}{C_{F_1}}  D^{\alpha}(t,s)  \int_s^t \| \Psi \|_{F_2}(r) \, dr \sum_{x \in X} \sum_{y \in \Gamma \setminus X(R)} F_{1,2}(d(x,y)).
\end{eqnarray}
\end{cor}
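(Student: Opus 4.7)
The plan is to handle the two sums separately by exploiting the two available bounds on the commutator $\|[\alpha_{t,r}(A), \Psi(Z,r)]\|$: the trivial automorphism bound $\|[\alpha_{t,r}(A), \Psi(Z,r)]\| \leq 2\|A\|\|\Psi(Z,r)\|$, which is useful when $Z$ is close to $X$ and the LR bound would degenerate anyway, and the assumed Lieb--Robinson estimate (\ref{lrb_cor_bd}), which requires $X \cap Z = \emptyset$ and gives genuine decay in $d(X,Z)$. In both cases the time-integral and the sum over $Z$ can be interchanged by Tonelli, and the $r$-dependence of $\Psi$ is absorbed into the locally integrable function $\|\Psi\|_{F_2}(r)$ via the pointwise estimates from Proposition~\ref{app:prop:int_bd_dist}.

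For the small-$R$ bound (\ref{int_dyn_est_small_R}), I would start from the automorphism bound, swap the sum and integral, and then apply (\ref{RclosetoX}) of Proposition~\ref{app:prop:int_bd_dist} pointwise in $r$ to the interaction $\Psi(\cdot,r)$ (with $F$-function $F_2$). This yields
\begin{equation*}
\sum_{\substack{Z \in \mathcal{P}_0(\Gamma):\\ d(Z,X)\leq R}} \int_s^t \|[\alpha_{t,r}(A),\Psi(Z,r)]\|\,dr
\leq 2\|A\|\int_s^t \|\Psi\|_{F_2}(r)\,dr \sum_{x\in X(R)}\sum_{y\in\Gamma} F_2(d(x,y)),
\end{equation*}
which is (\ref{int_dyn_est_small_R}). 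No compatibility is needed here since only $F_2$ enters.

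For the large-$R$ bound (\ref{int_dyn_est_large_R}), the first observation is that $d(Z,X) > R \geq 0$ forces $X \cap Z = \emptyset$, so the hypothesis (\ref{lrb_cor_bd}) applies with $B = \Psi(Z,r)$ and $Y = Z$. Using monotonicity of $D^\alpha(t,\cdot)$ to replace $D^\alpha(t,r)$ by $D^\alpha(t,s)$ (valid for $s\leq r\leq t$), interchanging the sum and integral, and factoring out the $r$-independent pieces gives
\begin{equation*}
\sum_{\substack{Z:\, d(Z,X)>R}} \int_s^t \|[\alpha_{t,r}(A),\Psi(Z,r)]\|\,dr
\leq \frac{2\|A\|}{C_{F_1}} D^\alpha(t,s)\int_s^t S(r)\,dr,
\end{equation*}
where $S(r) = \sum_{Z:\,d(Z,X)>R}\|\Psi(Z,r)\|\sum_{x\in X}\sum_{z\in Z}F_1(d(x,z))$. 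The key (and only nontrivial) step is to bound $S(r)$ pointwise. Since $d(Z,X) > R$ implies every $z \in Z$ lies outside $X(R)$, the overcounting argument used in the proof of (\ref{RfarfromX}) yields
\begin{equation*}
S(r) \leq \sum_{x\in X}\sum_{z\in\Gamma}F_1(d(x,z)) \sum_{y\in \Gamma\setminus X(R)} \sum_{\substack{Z:\, z,y\in Z}}\|\Psi(Z,r)\|
\leq \|\Psi\|_{F_2}(r)\sum_{x\in X}\sum_{y\in\Gamma\setminus X(R)}\sum_{z\in\Gamma} F_1(d(x,z))F_2(d(z,y)).
\end{equation*}
Now applying the compatibility condition (\ref{compat_Fs}) to collapse the innermost $z$-sum gives $S(r) \leq C_{1,2}\|\Psi\|_{F_2}(r)\sum_{x\in X}\sum_{y\in\Gamma\setminus X(R)}F_{1,2}(d(x,y))$, and substituting this into the previous display yields (\ref{int_dyn_est_large_R}). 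The main (and quite mild) obstacle is being careful with the overcounting so that the $F_1$--$F_2$ convolution appears in exactly the right place for the compatibility hypothesis to apply; everything else is essentially bookkeeping built on Proposition~\ref{app:prop:int_bd_dist}.
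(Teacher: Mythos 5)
Your proof is correct and takes essentially the same approach as the paper: the trivial automorphism bound combined with (\ref{RclosetoX}) for the near terms, and the Lieb--Robinson hypothesis (\ref{lrb_cor_bd}) with monotonicity of $D^\alpha$, the overcounting argument underlying (\ref{RfarfromX}), and the compatibility condition (\ref{compat_Fs}) for the far terms. The observation that $d(Z,X) > R \geq 0$ forces $X \cap Z = \emptyset$ (so the LR bound is applicable) is exactly the point the paper relies on implicitly.
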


\begin{proof}
To prove (\ref{int_dyn_est_small_R}), note that for any $s \leq r \leq t$, the simple bound
\begin{equation} 
\| [ \alpha_{t,r}(A), \Psi(Z,r) ] \| \leq 2 \| A \| \| \Psi(Z,r) \| 
\end{equation} 
holds. %Applying the projection $[\cdot]$ from \eqref{measurable_f} to both sides above, applying Proposition \ref{prop:measurable_f}(iv), and 
Estimating as in (\ref{RclosetoX}) yields (\ref{int_dyn_est_small_R}) as claimed.

To see (\ref{int_dyn_est_large_R}), take some $s \leq r \leq t$ and note that (\ref{lrb_cor_bd}) applies. In fact, applying the Lieb-Robinson bound we find 
\begin{equation}
\| [  \alpha_{t, r}(A), \Psi(Z,r) ] \| \leq \frac{2 \| A \| \| \Psi(Z,r) \|}{C_{F_1}} D^{\alpha}(t,s) \sum_{x \in X} \sum_{z \in Z} F_1(d(x,z)) .
\end{equation}
Here, we use that, by our assumptions, $D^{\alpha}(t,r)\leq D^{\alpha}(t,s)$. Then, arguing as in the proof of \eqref{RclosetoX}, see in particular \eqref{large_R_est_1}, we find
\begin{equation}
\sum_{\stackrel{Z \in \mathcal{P}_0( \Gamma):}{d(Z,X) > R}} \| \Psi(Z, r) \| \sum_{x \in X} \sum_{z \in Z} F_1(d(x,z)) \leq  \| \Psi \|_{F_2}(r) 
\sum_{x \in X} \sum_{y \in \Gamma \setminus X(R)} \sum_{z \in \Gamma} F_1(d(x,z))F_2(d(z,y)).
\end{equation} 
Using compatibility, i.e. (\ref{compat_Fs}), the bound in (\ref{int_dyn_est_large_R}) now follows upon integration.
\end{proof}

%%%%%
%
% Estimates Based on Diameter
%
%%%%%%%%%

\subsubsection{Estimates based on diameter}

In some situations, it is more convenient to form decay arguments based on the diameters of sets, rather than the distance between sets. For these cases, we will further assume that $(\Gamma, d)$ is $\nu$-regular so that (\ref{Gballbd}) holds. 

Before we state our first result, let us introduce some convenient notation.
Our estimates will often be in terms of moments of
certain decay functions. To this end, let $G: [0, \infty) \to (0, \infty)$ be a
{\it decay function}. 
For any $p \geq 0$ and each $m \geq 0$, set 
\begin{equation}\label{pmom}
M_p^G(m) = \sum_{n= \lfloor m \rfloor }^{\infty} (1+n)^p G(n)
\end{equation}
whenever the sum on the right-hand-side above is finite. 
We will refer to $M_p^G(0)$ as the {\it $p$-th moment of $G$}. The notation
\begin{equation}
M_p^G \circ M_q^G(m) = \sum_{n= \lfloor m \rfloor }^{\infty} (1+n)^p \sum_{n'=n}^{\infty}(1+n')^q G(n') 
\end{equation}
will be used for iterated moments. A rough estimate involving exchanging the order of the summations shows
\begin{equation} \label{it_mom_bd}
M_p^G \circ M_q^G(m) \leq M_{p+q+1}^G(m). 
\end{equation}

We now state our first result, compare with Proposition~\ref{app:prop:int_bd_dist}.

\begin{prop}  \label{app:prop:int_bd_diam} Let $(\Gamma, d)$ be a $\nu$-regular metric space equipped with an $F$-function $F$, and $\Phi$ be an interaction on $\Gamma$ with $\Phi \in \mathcal{B}_F$. Then, for any $R \geq 0$,
\begin{equation} \label{smalldiambd}
\sup_{x \in \Gamma} \sum_{\stackrel{Z \in \mathcal{P}_0( \Gamma):}{x \in Z; {\rm diam}(Z) \leq R}} \| \Phi(Z) \| \leq \| \Phi \|_F \| F \| .
\end{equation}
If, in addition, $F$ has a finite $2\nu$-th moment, i.e. $M_{2 \nu}^F(0)<\infty$, then  
\begin{equation} \label{largediambd}
\sup_{x \in \Gamma} \sum_{\stackrel{Z \in \mathcal{P}_0( \Gamma):}{x \in Z; {\rm diam}(Z) >R}} \| \Phi(Z) \| \leq \kappa^2 \| \Phi \|_F M_{2 \nu}^F(R).
\end{equation}
\end{prop}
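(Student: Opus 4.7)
For \eqref{smalldiambd}, the diameter restriction is inessential: I would specialize $y=x$ in the defining supremum \eqref{pmomphiF} of the $F$-norm to get the pointwise bound $\sum_{Z \ni x} \|\Phi(Z)\| \leq \|\Phi\|_F F(0) \leq \|\Phi\|_F \|F\|$ (using that $F(0) \leq \|F\|$ trivially), and restricting to sets of diameter at most $R$ only tightens the left-hand side. This gives \eqref{smalldiambd} without any use of $\nu$-regularity.

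For \eqref{largediambd}, the key geometric observation is that if $x \in Z$ with $\mathrm{diam}(Z) > R$, then at least one pair $y,z \in Z$ satisfies $d(y,z) > R$. I would partition the contributing $Z$ into two disjoint cases: (A) some $y \in Z$ satisfies $d(x,y) > R$, and (B) $Z \subset b_x(R)$ but still $\mathrm{diam}(Z) > R$. In Case (A), over-counting by $y$ and applying \eqref{Fsum} yields
\[
\sum_{A} \|\Phi(Z)\| \leq \|\Phi\|_F \sum_{y: d(x,y) > R} F(d(x,y)) \leq \kappa \|\Phi\|_F M_\nu^F(R),
\]
where the last inequality partitions the $y$-sum by integer distance and uses $|b_x(n)| \leq \kappa n^\nu$.

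In Case (B), the existence of a pair $y,z \in Z \subset b_x(R)$ with $d(y,z) > R$ lets me over-count by such ordered pairs and bound the innermost sum via \eqref{Fsum} applied to $(y,z)$:
\[
\sum_{B} \|\Phi(Z)\| \leq \|\Phi\|_F \sum_{y \in b_x(R)} \sum_{z: d(y,z) > R} F(d(y,z)) \leq \kappa^2 R^\nu \|\Phi\|_F M_\nu^F(R),
\]
where the outer factor $\kappa R^\nu$ comes from $|b_x(R)| \leq \kappa R^\nu$. The elementary inequality $R^\nu (1+n)^\nu \leq (1+n)^{2\nu}$ for $n \geq R$ shows that both $M_\nu^F(R)$ and $R^\nu M_\nu^F(R)$ are dominated by $M_{2\nu}^F(R)$; summing the two cases then yields the claimed bound $\kappa^2 \|\Phi\|_F M_{2\nu}^F(R)$, absorbing the Case (A) contribution into the Case (B) constant.

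\textbf{Main obstacle.} The delicate step is the Case (B) over-count. A naive over-count by all pairs $(y,z)$ with $d(y,z)>R$, without extracting the restriction $y \in b_x(R)$, produces a divergent sum because $\#\{z: d(y,z)>R\}$ is infinite and is weighted by no decay in $y$. The case split is what allows me to trade the missing decay in $y$ for the finite cardinality $|b_x(R)| \leq \kappa R^\nu$, and this trade is precisely what introduces the second factor of $\kappa$ and upgrades the moment exponent from $\nu$ to $2\nu$. The rest is routine bookkeeping of constants.
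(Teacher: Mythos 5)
Your proof of \eqref{smalldiambd} is correct and in fact simpler than the paper's: taking $y=x$ in \eqref{Fsum} immediately gives $\sum_{Z\ni x}\|\Phi(Z)\|\leq \|\Phi\|_F F(0)\leq \|\Phi\|_F\|F\|$, with the diameter restriction only shrinking the left side. The paper instead over-counts by $y\in b_x(R)$ and then uses \eqref{Fint}; both work, yours is a bit sharper and shorter.

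For \eqref{largediambd} there is a genuine gap in the last step. Your Cases (A) and (B) are disjoint and each bound you write is correct: Case (A) yields $\kappa\|\Phi\|_F M_\nu^F(R)$ and Case (B) yields $\kappa^2 R^\nu\|\Phi\|_F M_\nu^F(R)$, both of which are dominated (by your elementary inequality) by their respective multiples of $M_{2\nu}^F(R)$. But the total is therefore $(\kappa+\kappa^2)\|\Phi\|_F M_{2\nu}^F(R)$, not $\kappa^2\|\Phi\|_F M_{2\nu}^F(R)$. There is no way to ``absorb'' the Case (A) contribution into the Case (B) constant; that would require $\kappa+\kappa^2\leq\kappa^2$, which is false. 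As written you prove the statement only with a strictly larger prefactor.

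The paper reaches $\kappa^2$ by dispensing with the case split: bin by the diameter itself, writing the left side of \eqref{largediambd} as a sum over shells $n\leq\mathrm{diam}(Z)<n+1$ for $n\geq\lfloor R\rfloor$. For each such $Z\ni x$, choose $w,z\in Z$ realizing the diameter; then $n\leq d(w,z)<n+1$ and both $w,z$ lie in $b_x(n+1)$, so over-counting by the ordered pair $(w,z)$ and using \eqref{Fsum} together with monotonicity of $F$ gives
\[
\sum_{\substack{Z:\ x\in Z\\ n\leq\mathrm{diam}(Z)<n+1}}\|\Phi(Z)\|\ \leq\ \kappa^2(1+n)^{2\nu}\,\|\Phi\|_F\,F(n),
\]
and summing over $n\geq\lfloor R\rfloor$ yields exactly $\kappa^2\|\Phi\|_F M_{2\nu}^F(R)$. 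This single over-count plays the role of both of your cases at once: the pair realizing the diameter always exists, regardless of whether $Z\subset b_x(R)$, so there is only one application of $\nu$-regularity (to the pair) and hence only one factor of $\kappa^2$. If you replace your case split with this argument, the constant comes out right.
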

\begin{proof}
For any fixed $x \in \Gamma$, note that
\begin{equation} \label{simple_close_diam_est}
\sum_{\stackrel{Z \in \mathcal{P}_0( \Gamma):}{x \in Z; {\rm diam}(Z) \leq R}} \| \Phi(Z) \| \leq \sum_{y \in b_x(R)} 
\sum_{\stackrel{Z \in \mathcal{P}_0( \Gamma):}{x, y \in Z}} \| \Phi(Z) \| \leq \| \Phi \|_F \sum_{y \in b_x(R)} F(d(x,y)).
\end{equation}
Using this bound, (\ref{smalldiambd}) follows from uniform integrability of $F$. 

To see (\ref{largediambd}), again fix $x \in \Gamma$ and see that 
\begin{eqnarray}
\sum_{\substack{Z \in \mathcal{P}_0( \Gamma):\\ x \in Z; \, {\rm diam}(Z) >R}}  \| \Phi(Z) \| & \leq & \sum_{n= \lfloor R \rfloor}^{\infty} 
\sum_{\substack{Z \in \mathcal{P}_0( \Gamma):\\x \in Z;\, n \leq {\rm diam}(Z) <n+1}}  \| \Phi(Z) \| \nonumber \\
& \leq & \sum_{n= \lfloor R \rfloor}^{\infty} \sum_{\substack{w,z \in b_x(n+1): \\ n \leq d(w,z) < n+1}} 
\sum_{\substack{Z \in \mathcal{P}_0( \Gamma):\\ w,z \in Z}} \| \Phi(Z) \| \nonumber \\
& \leq & \| \Phi \|_F  \sum_{n= \lfloor R \rfloor}^{\infty} \sum_{\substack{w,z \in b_x(n+1): \\ n \leq d(w,z) < n+1}}  F(d(w,z)) \nonumber \\
& \leq & \kappa^2 \| \Phi \|_F  \sum_{n= \lfloor R \rfloor}^{\infty} (1+n)^{2 \nu} F(n) 
\end{eqnarray}
where, for the last line above, we used that $F$ is non-increasing and over-estimated using (\ref{Gballbd}). This proves (\ref{largediambd}).
\end{proof}

In some arguments, we encounter moments of interactions. More precisely, 
let $\Phi$ be an interaction on $\Gamma$. For any $p \geq 0$, the 
mapping $\Phi_p : \mathcal{P}_0( \Gamma) \to \mathcal{A}_{\Gamma}^{\rm loc}$ given by
\begin{equation}
\Phi_p(X) = |X|^p \Phi(X) \, 
\end{equation}
also defines an interaction on $\Gamma$. We refer to $\Phi_p$ as the {\it $p$-th moment of $\Phi$}. 
The next lemma provides a basic estimate for interactions of this type.

\begin{lem} \label{lem:pmom} 
Let $(\Gamma, d)$ be a $\nu$-regular metric space equipped with an $F$-function $F$, and $\Phi$ be an interaction on $\Gamma$ with $\Phi \in \mathcal{B}_F$. If  $p \geq 0$ and $M_{(p+2) \nu}^F(0) < \infty$, then the $p$-th moment of $\Phi$ satisfies
\begin{equation} \label{pnormbd}
\sum_{\substack{X \in \mathcal{P}_0( \Gamma):\\x,y \in X}} \| \Phi_p(X) \| \leq \kappa^{p+2} \| \Phi \|_F M_{(p+2) \nu}^F(d(x,y))  \, .
\end{equation} 
\end{lem}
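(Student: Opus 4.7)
The plan is to follow the template of \eqref{largediambd} in Proposition~\ref{app:prop:int_bd_diam}, but absorbing the extra factor of $|X|^p$ using the $\nu$-regularity assumption. Since $x,y\in X$ forces $\mathrm{diam}(X)\geq d(x,y)$, I first partition the sum by the integer part of $\mathrm{diam}(X)$:
\begin{equation*}
\sum_{\substack{X\in\cP_0(\Gamma):\\ x,y\in X}}\|\Phi_p(X)\|
=\sum_{n\geq \lfloor d(x,y)\rfloor}
\sum_{\substack{X\in\cP_0(\Gamma):\, x,y\in X\\ n\leq \mathrm{diam}(X)<n+1}}|X|^p\|\Phi(X)\|.
\end{equation*}

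For any $X$ appearing in the inner sum one has $X\subset b_x(n+1)$ (because $x\in X$ and $\mathrm{diam}(X)<n+1$), so the $\nu$-regularity bound \eqref{Gballbd} yields $|X|\leq \kappa(n+1)^\nu$ and hence $|X|^p\leq \kappa^p(n+1)^{p\nu}$. This reduces the task to estimating
\begin{equation*}
\sum_{\substack{X:\, x,y\in X\\ n\leq \mathrm{diam}(X)<n+1}}\|\Phi(X)\|.
\end{equation*}

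Here I would discard the constraint $y\in X$ and apply exactly the argument used for \eqref{largediambd}: every such $X$ admits a pair $w,z\in X$ with $n\leq d(w,z)<n+1$, both of which lie in $b_x(n+1)$. Over-counting by this pair, using \eqref{Fsum} together with the fact that $F$ is non-increasing, gives
\begin{equation*}
\sum_{\substack{X:\, x\in X\\ n\leq \mathrm{diam}(X)<n+1}}\|\Phi(X)\|
\leq \sum_{\substack{w,z\in b_x(n+1)\\ n\leq d(w,z)<n+1}}\sum_{\substack{X:\,w,z\in X}}\|\Phi(X)\|
\leq \|\Phi\|_F\,F(n)\,|b_x(n+1)|^2
\leq \kappa^2\|\Phi\|_F\,(n+1)^{2\nu}F(n).
\end{equation*}

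Combining the three displays and collecting the powers of $(n+1)$ produces
\begin{equation*}
\sum_{\substack{X:\,x,y\in X}}\|\Phi_p(X)\|
\leq \kappa^{p+2}\|\Phi\|_F\sum_{n\geq \lfloor d(x,y)\rfloor}(1+n)^{(p+2)\nu}F(n)
=\kappa^{p+2}\|\Phi\|_F\,M^F_{(p+2)\nu}(d(x,y)),
\end{equation*}
which is \eqref{pnormbd}. The assumption $M^F_{(p+2)\nu}(0)<\infty$ guarantees the tail sum is finite. There is no real obstacle: the entire argument is a bookkeeping combination of the ball-volume estimate $|b_x(n+1)|\leq \kappa(n+1)^\nu$ (used once to bound $|X|^p$ and once more, squared, to count endpoint pairs) with the standard over-counting trick of \eqref{largediambd}.
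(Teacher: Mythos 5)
Your proof is correct and follows essentially the same route as the paper: partition by $\lfloor\operatorname{diam}(X)\rfloor$, bound $|X|^p$ via $X\subset b_x(n+1)$ and $\nu$-regularity, then over-count by ordered pairs $w,z\in b_x(n+1)$ at distance in $[n,n+1)$ using \eqref{Fsum}. The only cosmetic difference is that you write the first decomposition as an equality rather than an inequality, which is also valid since $x,y\in X$ forces $\operatorname{diam}(X)\geq d(x,y)$ and hence $\lfloor\operatorname{diam}(X)\rfloor\geq\lfloor d(x,y)\rfloor$.
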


\begin{proof} 
Fix $x,y \in \Gamma$. Set $m = \lfloor d(x,y) \rfloor$ and note that 
\begin{equation}
\sum_{\substack{X \in \mathcal{P}_0( \Gamma):\\x,y \in X}} \| \Phi_p(X) \|  \leq  \sum_{n \geq m} \sum_{\substack{X \in \mathcal{P}_0(\Gamma): x,y \in X\\n \leq {\rm diam}(X) < n+1}} |X|^p \| \Phi(X) \|
\end{equation}
Now, if $x \in X$ and ${\rm diam}(X) < n+1$, then (\ref{Gballbd}) guarantees that
\begin{equation}
|X| \leq |b_x(n+1)| \leq \kappa (n+1)^{\nu} \quad \mbox{and therefore} \quad |X|^p \leq \kappa^p (n+1)^{p \nu} 
\end{equation}
Moreover, by over-counting, it is clear that
\begin{equation}
\sum_{\substack{X \in \mathcal{P}_0(\Gamma): x,y \in X\\n \leq {\rm diam}(X) < n+1}} * \leq \sum_{\substack{w,z \in b_x(n+1)\\n \leq d(w,z) < n+1}} \sum_{\substack{X \in \mathcal{P}_0( \Gamma):\\w,z \in X}} *
\end{equation}
where $*$ represents a non-negative quantity. We conclude that
\begin{eqnarray}
\sum_{\substack{X \in \mathcal{P}_0( \Gamma):\\x,y \in X}} \| \Phi_p(X) \| &  \leq &  \kappa^p \sum_{n \geq m} (n+1)^{ p \nu} \sum_{\substack{w,z \in b_x(n+1):\\n \leq d(w,z) < n+1}} \sum_{\substack{X \in \mathcal{P}_0(\Gamma):\\z,w \in X}}  \| \Phi(X) \| \nonumber \\
& \leq & \kappa^p \| \Phi \|_F \sum_{n \geq m} (n+1)^{p \nu} \sum_{\substack{w,z \in b_x(n+1):\\n \leq d(w,z) < n+1}} F(d(w,z)) \nonumber \\
& \leq & \kappa^{p+2} \| \Phi \|_F \sum_{n \geq m} (n+1)^{(p+2) \nu} F(n) \, ,
\end{eqnarray}
which proves (\ref{pnormbd}).
\end{proof}

When considering a weighted $F$-function $F(r) = e^{-g(r)}F_0(r)$, one can often use the weight with \eqref{pnormbd} to prove that the $p$-th moment of $\Phi$ has a finite $F$-norm. This allows us to apply the Lieb-Robinson bound theory to $\Phi_p$.

\begin{cor}\label{cor:pth-mom-F-norm}
	Let $p\geq 0$ and $F(r) = e^{-g(r)}F_0(r)$ be a weighted $F$-function on a $\nu$-regular metric space $(\Gamma,d)$. If $M_{(p+2) \nu}^{-ag}(0) < \infty$ for some $0<a\leq 1$, then $\Phi_p\in B_{\tilde{F}}$ for any $\Phi\in \cB_F$ where $\tilde{F}$ is the $F$-function 
	\be
	\tilde{F}(r) = e^{(a-1)g(\lfloor r\rfloor)}F_0(\lfloor r\rfloor).
	\ee
\end{cor}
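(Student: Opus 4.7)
The plan is to combine the moment estimate in Lemma~\ref{lem:pmom} with a straightforward splitting of the weight $e^{-g}$ into two pieces, one piece that provides the decay in $\tilde{F}$ and one piece that absorbs the polynomial prefactor coming from the $p$-th moment of $\Phi$. First I would check that $\tilde{F}$ is indeed an $F$-function on $(\Gamma,d)$. Since $0 < a \leq 1$, the function $(1-a)g$ is non-negative, non-decreasing, and subadditive, so $r \mapsto e^{-(1-a)g(r)}F_0(r)$ is a weighted $F$-function in the sense of Section~\ref{app:sec_weight_F}, and replacing $r$ by $\lfloor r\rfloor$ preserves this property by the remark at the end of Section~\ref{subsec:regsets}.

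For the main estimate, applying Lemma~\ref{lem:pmom} to the $p$-th moment interaction $\Phi_p$ gives, for any $x,y \in \Gamma$,
\[
\sum_{\substack{X \in \mathcal{P}_0(\Gamma):\\ x,y \in X}} \| \Phi_p(X) \| \leq \kappa^{p+2} \| \Phi \|_F \sum_{n=\lfloor d(x,y)\rfloor}^{\infty} (1+n)^{(p+2)\nu} F(n).
\]
I would then split $F(n) = e^{-ag(n)} \cdot e^{-(1-a)g(n)} F_0(n)$ and use, for every $n \geq \lfloor d(x,y)\rfloor$, that $g$ is non-decreasing and $F_0$ is non-increasing, so
\[
e^{-(1-a)g(n)} F_0(n) \leq e^{-(1-a)g(\lfloor d(x,y)\rfloor)} F_0(\lfloor d(x,y)\rfloor) = \tilde{F}(d(x,y)).
\]
Pulling this factor out of the sum yields
\[
\sum_{\substack{X \in \mathcal{P}_0(\Gamma):\\ x,y \in X}} \| \Phi_p(X) \| \;\leq\; \kappa^{p+2} \|\Phi\|_F \, \tilde{F}(d(x,y)) \sum_{n=0}^{\infty}(1+n)^{(p+2)\nu} e^{-ag(n)},
\]
where the last sum is exactly $M_{(p+2)\nu}^{-ag}(0)$ (read as the $(p+2)\nu$-th moment of the decay function $e^{-ag}$), finite by hypothesis.

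Dividing by $\tilde{F}(d(x,y))$ and taking the supremum over $x,y \in \Gamma$ gives $\|\Phi_p\|_{\tilde F} \leq \kappa^{p+2}\|\Phi\|_F M_{(p+2)\nu}^{-ag}(0) < \infty$, so $\Phi_p \in \mathcal{B}_{\tilde{F}}$ as claimed. No serious obstacle arises; the proof is essentially a bookkeeping exercise that exchanges a small amount of the exponential weight ($a$ out of $1$) for the polynomial growth inside the moment and transfers the remainder $(1-a)g$, together with $F_0$, into the decay function $\tilde{F}$ evaluated at $\lfloor d(x,y)\rfloor$. The only point worth checking carefully is the $F$-function status of $\tilde{F}$, which is handled by appealing to the closure properties in Sections~\ref{app:sec_weight_F} and \ref{subsec:regsets} rather than by direct verification of $\|\tilde{F}\|$ and $C_{\tilde{F}}$.
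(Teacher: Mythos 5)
Your proof is correct and follows the paper's argument essentially verbatim: apply Lemma~\ref{lem:pmom} to reduce to bounding $M_{(p+2)\nu}^F(d(x,y))$, split $e^{-g(n)} = e^{-ag(n)}e^{-(1-a)g(n)}$, and use the monotonicity of $g$ and $F_0$ to extract the factor $\tilde F(d(x,y))$ from the sum, leaving $M_{(p+2)\nu}^{-ag}(0)$. The only addition over the paper's (terser) proof is your explicit check that $\tilde F$ is indeed an $F$-function via the closure properties of weights and the floor-function remark in Section~\ref{subsec:regsets}, which is a welcome detail the paper leaves implicit.
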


\begin{proof}
	This is an immediate consequence of \eqref{pnormbd}, and the bound
\beann
M_{(p+2) \nu}^F(k) & = & \sum_{n=\lfloor k\rfloor}^\infty (1+n)^{(p+2)\nu}e^{-g(n)}F_0(n) \\
& \leq &
e^{(a-1)g(\lfloor k\rfloor)}F_0(\lfloor k\rfloor)\sum_{n=0}^\infty (1+n)^{(p+2)\nu}e^{-ag(n)} \\
& = & M_{(p+2) \nu}^{-ag}(0) \tilde{F}(k)
\eeann
for all $k\geq 0$.
\end{proof}

\subsubsection{An estimate on weighted sums}

\begin{lem}\label{lem:weight_int_dec}
Let $(\Gamma, d)$ be a $\nu$-regular metric space equipped with an $F$-function $F$, and $\Phi$ be an interaction on $\Gamma$ with $\Phi \in \mathcal{B}_F$.
If $G: [0, \infty) \to (0, \infty)$ satisfies $M_{2 \nu +1}^G(0) < \infty$, then for any $x,y \in \Gamma$
\begin{equation} \label{wmbd}
\sum_{n=0}^{\infty} G(n) \sum_{\substack{X \in \mathcal{P}_0( \Gamma):\\x,y \in X(n+1)}} \| \Phi(X) \| \leq \kappa \| \Phi \|_F \left( \kappa M_{2 \nu +1}^G(0) F(d(x,y)/3) + \| F \| M_{\nu+1}^G(d(x,y)/3) \right) .
\end{equation}
\end{lem}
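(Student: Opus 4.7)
The plan is to reduce the inner sum to a pairwise sum over two points in the support of $X$, exploiting the definition $X(n+1) = \bigcup_{z \in X} b_z(n+1)$. If $x, y \in X(n+1)$, then there exist $u, v \in X$ (possibly coinciding) with $u \in b_x(n+1)$ and $v \in b_y(n+1)$. By over-counting and then using the $F$-norm bound \eqref{Fsum},
\[
\sum_{\substack{X \in \mathcal{P}_0( \Gamma):\\x,y \in X(n+1)}} \| \Phi(X) \| \;\leq\; \sum_{u \in b_x(n+1)} \sum_{v \in b_y(n+1)} \sum_{\substack{X \in \mathcal{P}_0( \Gamma):\\u,v \in X}} \| \Phi(X) \| \;\leq\; \| \Phi \|_F \sum_{u \in b_x(n+1)} \sum_{v \in b_y(n+1)} F(d(u,v)).
\]

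Next, I would split the outer sum over $n$ at the threshold set by $d := d(x,y)$; specifically, partition into the ``small-$n$'' regime $n+1 \leq d/3$ and the ``large-$n$'' regime $n+1 > d/3$. These two regimes produce the two terms on the right-hand side of \eqref{wmbd}. In the small-$n$ regime, the triangle inequality yields $d(u,v) \geq d - 2(n+1) \geq d/3$, so that monotonicity of $F$ gives $F(d(u,v)) \leq F(d/3)$. Combined with the $\nu$-regular cardinality estimate $|b_x(n+1)|, |b_y(n+1)| \leq \kappa (n+1)^{\nu}$, the partial sum over this regime is bounded by
\[
\kappa^2 \|\Phi\|_F F(d/3) \sum_{n \geq 0} (1+n)^{2\nu} G(n) \;\leq\; \kappa^2 \|\Phi\|_F F(d/3) \, M^G_{2\nu+1}(0),
\]
using the trivial inequality $(1+n)^{2\nu} \leq (1+n)^{2\nu+1}$.

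In the large-$n$ regime, I would drop the constraint $v \in b_y(n+1)$ and sum over all of $\Gamma$, invoking uniform integrability: $\sum_{v \in \Gamma} F(d(u,v)) \leq \|F\|$. Combining this with the $\nu$-regular bound on $|b_x(n+1)|$ gives a per-$n$ bound of $\kappa (n+1)^{\nu} \|F\|$. The check that the resulting tail sum matches the definition $M^G_{\nu+1}(d/3) = \sum_{n \geq \lfloor d/3 \rfloor}(1+n)^{\nu+1}G(n)$ is where the only subtlety lies: one must verify that $n+1 > d/3$ with $n$ an integer is equivalent to $n \geq \lfloor d/3 \rfloor$, which is straightforward. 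Using again $(1+n)^{\nu} \leq (1+n)^{\nu+1}$, the large-$n$ contribution is bounded by $\kappa \|\Phi\|_F \|F\| M^G_{\nu+1}(d/3)$. Adding the two bounds yields \eqref{wmbd} exactly.

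The main (minor) obstacle is bookkeeping the integer versus real case split near $d(x,y)/3$ so that the resulting tail of $G$ matches the definition of $M^G_{\nu+1}(d(x,y)/3)$; apart from that, the argument is just the triangle inequality combined with uniform integrability and $\nu$-regularity.
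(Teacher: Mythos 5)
Your proof is correct and in fact somewhat more streamlined than the paper's. The paper begins by re-indexing the double sum: it introduces the indicator $\chi_{m+1}(X)$ marking the first scale $m$ at which $x,y \in X(m+1)$, performs an Abel-summation so that each interaction term is weighted by the \emph{tail} $M_0^G(m) = \sum_{n \geq m} G(n)$ rather than by $G(n)$, and only then over-counts pairs in balls and splits at scale $d(x,y)/3$; this route requires the iterated-moment inequality $M_p^G \circ M_q^G(m) \leq M_{p+q+1}^G(m)$ (equation (7.23) in the paper). You instead over-count directly at each fixed $n$, producing the weight $G(n)$ rather than the tail $M_0^G(n)$, then apply the same triangle-inequality split into $n+1 \leq d/3$ and $n+1 > d/3$, and close with $(1+n)^{2\nu} \leq (1+n)^{2\nu+1}$ and $(1+n)^{\nu} \leq (1+n)^{\nu+1}$. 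This bypasses both the re-indexing step and the iterated-moment lemma, gives a pointwise tighter intermediate inequality (since $G(n) \leq M_0^G(n)$), and lands on the identical final bound. Your handling of the integer threshold — observing that for integer $n$ the condition $n+1 > d/3$ is equivalent to $n \geq \lfloor d/3 \rfloor$, which is exactly the starting index in the definition of $M_{\nu+1}^G(d/3)$ — is the right verification and matches the paper's $m_0+1 = \lfloor d(x,y)/3 \rfloor$ after unwinding their definition of $m_0$.
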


\begin{proof}
For each $X \in \mathcal{P}_0( \Gamma)$, the sets $X(n)$, see e.g. (\ref{set_xn}), are increasing and therefore, if
 $x,y \in X(n)$ for some $n \geq 1$, then  $x,y \in X(m)$ for all $m \geq n$. With this in mind, we write
\begin{equation}
\sum_{n=0}^{\infty} G(n) \sum_{\substack{X \in \mathcal{P}_0( \Gamma):\\x,y \in X(n+1)}} \| \Phi(X) \|  = 
\sum_{m=0}^{\infty} \sum_{\substack{X \in \mathcal{P}_0( \Gamma):\\x,y \in X(m+1)}} \chi_{m+1}(X) \| \Phi(X) \| \sum_{n=m}^{\infty} G(n)   
\end{equation} 
where we have inserted the characteristic function $\chi_{m+1}$ defined by
\begin{equation}
\chi_{m+1}(X) = \left\{ \begin{array}{cl} 0 & \mbox{if } x,y \in  X(m) ,  \\ 1 & \mbox{otherwise.} \end{array} \right.
\end{equation}
Note that $x,y \in X(m+1)$ means there exist $w, z \in X$ such that $w \in b_x(m+1)$ and $z \in b_y(m+1)$. Using this, the definition of $M_0^G(m)$ (see \eqref{pmom}), and the $F$-norm of $\Phi$, the estimate
\begin{eqnarray} \label{Sig2bd}
\sum_{n=0}^{\infty} G(n) \sum_{\substack{X \in \mathcal{P}_0( \Gamma):\\x,y \in X(n+1)}} \| \Phi(X) \|  & \leq & \sum_{m=0}^{\infty} M_0^G(m) \sum_{\substack{w \in b_x(m+1)\\z \in b_y(m+1)}} \sum_{\substack{X \in \mathcal{P}_0( \Gamma):\\w,z \in X }}   \| \Phi(X) \|  \nonumber \\
& \leq & \| \Phi \|_F  \sum_{m=0}^{\infty} M_0^G(m) \sum_{\substack{w \in b_x(m+1)\\z \in b_y(m+1)}} F(d(w,z))
\end{eqnarray}
readily follows.

We now divide the final series on the right-hand-side of (\ref{Sig2bd}) into sums of large and small $m$. More precisely, for any fixed $0 < \epsilon <1$, set $m_0 = m_0( \epsilon, x,y)$ to be the largest integer $m\geq -1$ satisfying
\begin{equation}
2(m+1) \leq (1- \epsilon)d(x,y) \, .
\end{equation}
For any $0 \leq m \leq m_0$, $w \in b_x(m+1)$ and $z \in b_y(m+1)$, on has $\epsilon d(x,y) \leq d(w, z)$ as
\begin{eqnarray}
d(x,y) \leq d(x,w) + d(w, z) + d(z, y) & \leq & 2(m+1) + d(w,z) \nonumber \\
& \leq & (1- \epsilon)d(x,y) + d(w,z).
\end{eqnarray} 
	
In this case, the first few terms may be estimated as 
\begin{eqnarray}
\sum_{m=0}^{m_0} M_0^G(m) \sum_{\substack{w \in b_x(m+1)\\z \in b_y(m+1)}} F(d(w,z)) 
& \leq & \kappa^2 F( \epsilon d(x,y))  \sum_{m=0}^{m_0} (1+m)^{2 \nu} M_0^G(m) \nonumber \\ 
& \leq & \kappa^2 M_{2 \nu +1}^G(0) F( \epsilon d(x,y))
\end{eqnarray}
where in the last equality we have used (\ref{it_mom_bd}).
	
For the remaining terms, uniform integrability of $F$ implies
\begin{equation}
\sum_{m \geq m_0 +1} M_0^G(m) \sum_{\substack{w \in b_x(m+1)\\z \in b_y(m+1)}} F(d(w,z)) \leq \kappa \| F \| M_{\nu}^G \circ M_0^G(m_0+1) .
\end{equation}
Using the definition of $m_0$ and again applying \eqref{it_mom_bd}, the bound claimed in (\ref{wmbd}) follows from choosing $\epsilon = 1/3$. 
\end{proof}

 \section*{Acknowledgements}

We would like to thank Valentin Zagrebnov for illuminating discussions about the non-autonomous Cauchy problems that
arise in quantum dynamical systems. We also thank Martin Gebert for reading an early version of this paper and asking good
questions and Derek Robinson for several useful remarks and informative comments about the history of the subject.
All three authors wish to thank the Departments of Mathematics of the University 
of Arizona and the University of California, Davis, for extending their kind hospitality to us and for the stimulating atmosphere they offered 
during several visits back and forth over the years it took to complete this project. BN also acknowledges the support of a CRM-Simons 
Professorship for a stay at the Centre de Recherches Math\'ematiques (Montr\'eal) during Fall 2018, which created the perfect 
circumstances to complete this paper.
 
%\begin{thebibliography}{999}
%\bibliographystyle{amsplain}
%\providecommand{\bysame}{\leavevmode\hbox to3em{\hrulefill}\thinspace}
%\bibliography{qss}

\begin{thebibliography}{100}

\bibitem{abrikosov:1975}
A.~A. Abrikosov, \emph{Methods of quantum field theory in statistical physics},
  Dover, 1975.

\bibitem{affleck:1988}
I.~Affleck, T.~Kennedy, E.H. Lieb, and H.~Tasaki, \emph{Valence bond ground
  states in isotropic quantum antiferromagnets}, Comm. Math. Phys. \textbf{115}
  (1988), no.~3, 477--528.

\bibitem{altland:2010}
A.~Altland and B.~D. Simons, \emph{Condensed matter field theory}, Cambridge
  University Press, 2010.

\bibitem{amour:2010}
L.~Amour, P.~Levy-Bruhl, and J.~Nourrigat, \emph{Dynamics and {L}ieb-{R}obinson
  estimates for lattices of interacting anharmonic oscillators}, Colloq. Math.
  \textbf{118} (2010), 609--648.

\bibitem{bachmann:2017d}
S.~Bachmann and A.~Bluhm, \emph{Local factorisation of the dynamics of quantum
  spin systems}, J. Math. Phys. \textbf{58} (2017), 071901.

\bibitem{bachmann:2017b}
S.~Bachmann, A.~Bols, W.~De~Roeck, and M.~Fraas, \emph{Quantization of
  conductance in gapped interacting systems}, arXiv:1707.06491, 2017.

\bibitem{bachmann:2017c}
S.~Bachmann, W.~De~Roeck, and M.~Fraas, \emph{Adiabatic theorem for quantum
  spin systems}, Phys. Rev. Lett. \textbf{119} (2017), 060201,
  arXiv:1612.01505.

\bibitem{bachmann:2018}
\bysame, \emph{The adiabatic theorem and linear response theory for extended
  quantum systems}, Commun. Math. Phys. \textbf{361} (2018), 997--1027,
  arXiv:1705.02838.

\bibitem{bachmann:2018a}
\bysame, \emph{The adiabatic theorem in a quantum many-body setting},
  arXiv:1808.09985, 2018.

\bibitem{bachmann:2016}
S.~Bachmann, W.~Dybalski, and P.~Naaijkens, \emph{{L}ieb--{R}obinson bounds,
  {A}rveson spectrum and {H}aag--{R}uelle scattering theory for gapped quantum
  spin systems}, Ann. H. Poincar\'{e} \textbf{17} (2016), 1737--1791.

\bibitem{bachmann:2015}
S.~Bachmann, E.~Hamza, B.~Nachtergaele, and A.~Young, \emph{{P}roduct {V}acua
  and {B}oundary {S}tate models in $d$ dimensions}, J. Stat. Phys. \textbf{160}
  (2015), 636--658.

\bibitem{bachmann:2012}
S.~Bachmann, S.~Michalakis, B.~Nachtergaele, and R.~Sims, \emph{Automorphic
  equivalence within gapped phases of quantum lattice systems}, Comm. Math.
  Phys. \textbf{309} (2012), 835--871.

\bibitem{bachmann:2012bf}
S.~Bachmann and B.~Nachtergaele, \emph{{Product vacua with boundary states}},
  Phys. Rev. B \textbf{86} (2012), no.~3, 035149.

\bibitem{bachmann:2014a}
\bysame, \emph{On gapped phases with a continuous symmetry and boundary
  operators}, J. Stat. Phys. \textbf{154} (2014), 91--112.

\bibitem{bachmann:2014c}
\bysame, \emph{Product vacua with boundary states and the classification of
  gapped phases}, Comm. Math. Phys. \textbf{329} (2014), 509--544.

\bibitem{bachmann:2015a}
S.~Bachmann and Y.~Ogata, \emph{{$C^1$}-classification of gapped parent
  {H}amiltonians of quantum spin chains}, Commun. Math. Phys. , 1011--1042
  (2015) \textbf{338} (2015), 1011--1042.

\bibitem{birman:2003}
M.S. Birman and M.~Solomyak, \emph{Double operator integrals in a {H}ilbert
  space}, Integr. Eq. Oper. Th. \textbf{47} (2003), 131--168.

\bibitem{bishop:2016a}
M.~Bishop, B.~Nachtergaele, and A.~Young, \emph{Spectral gap and edge
  excitations of $d$-dimensional {PVBS} models on half-spaces}, J. Stat. Phys.
  \textbf{162} (2016), 1485--1521.

\bibitem{brandao:2015}
F.G.S.L Brandao, T.S. Cubitt, A.~Lucia, S.~Michalakis, and D.~Perez-Garcia,
  \emph{Area law for fixed points of rapidly mixing dissipative quantum
  systems}, J. Math. Phys. \textbf{56} (2015), 102202.

\bibitem{bratteli:1997}
O.~Bratteli and D.~W. Robinson, \emph{Operator algebras and quantum statistical
  mechanics}, 2 ed., vol.~2, Springer Verlag, 1997.

\bibitem{bravyi:2010}
S.~Bravyi, M.~Hastings, and S.~Michalakis, \emph{Topological quantum order:
  stability under local perturbations}, J. Math. Phys. \textbf{51} (2010),
  093512.

\bibitem{bravyi:2011}
S.~Bravyi and M.~B. Hastings, \emph{A short proof of stability of topological
  order under local perturbations}, Commun. Math. Phys. \textbf{307} (2011),
  609.

\bibitem{bravyi:2006a}
S.~Bravyi, M.~B. Hastings, and F.~Verstraete, \emph{{L}ieb-{R}obinson bounds
  and the generation of correlations and topological quantum order}, Phys. Rev.
  Lett. \textbf{97} (2006), 050401.

\bibitem{bru:2017}
J.-B. Bru and W.~de~Siquiera~Pedra, \emph{{L}ieb-{R}obinson bounds for
  multi-commutators and applications to response theory}, Springer Briefs in
  Mathematical Physics, vol.~13, Springer Nature, 2017.

\bibitem{buchholz:2017}
D.~Buchholz, \emph{The resolvent algebra for oscillating lattice systems:
  Dynamics, ground and equilibrium states}, Commun. Math. Phys. \textbf{353}
  (2017), 691--716, arXiv:1605.05259.

\bibitem{buerschaper:2007}
O.~Buerschaper, \emph{Dynamics of correlations and quantum phase transitions in
  bosonic lattice systems}, Master's thesis, Ludwig-Maximilians University,
  Munich, 2007.

\bibitem{burrell:2007}
C.K. Burrell and T.J. Osborne, \emph{Bounds on the information propagation in
  disordered quantum spin chains}, Phys. Rev. Lett. \textbf{99} (2007), 167201.

\bibitem{butta:2007}
P.~Butt\'{a}, E.~Caglioti, S.~Di Ruzza, and C.~Marchioro, \emph{On the
  propagation of a perturbation in an anharmonic system}, J. Stat. Phys.
  \textbf{127} (2007), 313--325.

\bibitem{cha:2018}
M.~Cha, P.~Naaijkens, and B.~Nachtergaele, \emph{The complete set of infinite
  volume ground states for {K}itaev's abelian quantum double models}, Commun.
  Math. Phys. \textbf{357} (2018), 125--157.

\bibitem{cha:2018a}
\bysame, \emph{On the stability of charges in infinite quantum spin systems},
  arXiv:1804.03203, 2018.

\bibitem{chen:2013}
X.~Chen, Z.-C. Gu, Z.-X. Liu, and X.-G. Wen, \emph{Symmetry protected
  topological orders and the group cohomology of their symmetry group}, Phys.
  Rev. B \textbf{87} (2013), 155114.

\bibitem{chen:2010}
X.~Chen, Z.-C. Gu, and X.-G. Wen, \emph{{Local unitary transformation,
  long-range quantum entanglement, wave function renormalization, and
  topological order}}, Phys. Rev. B \textbf{82} (2010), no.~15, 155138.

\bibitem{chen:2011}
\bysame, \emph{{Classification of gapped symmetric phases in one-dimensional
  spin systems}}, Phys. Rev. B \textbf{83} (2011), no.~3, 035107.

\bibitem{cramer:2008}
M.~Cramer, A.~Serafini, and J.~Eisert, \emph{Locality of dynamics in general
  harmonic quantum systems}, Quantum information and many body quantum systems
  (M.~Ericsson and S.~Montangero, eds.), CRM Series, no.~8, Pisa: Edizioni
  della Normale, 2008, pp.~51--72.

\bibitem{damanik:2014}
D.~Damanik, M.~Lemm, M.~Lukic, and W.~Yessen, \emph{New anomalous
  {L}ieb-{R}obinson bounds in quasiperiodic {XY} chains}, Phys. Rev. Lett.
  \textbf{113} (2014), 127202.

\bibitem{damanik:2015}
D.~Damanik, M.~Lukic, and W.~Yessen, \emph{Quantum dynamics of periodic and
  limit-periodic {J}acobi and block {J}acobi matrices with applications to some
  quantum many body problems}, Commun. Math. Phys. \textbf{337} (2015),
  1535--1561.

\bibitem{de-roeck:2018}
W.~De~Roeck and M.~Salmhofer, \emph{Persistence of exponential decay and
  spectral gaps for interacting fermions}, Commun. Math. Phys. (2018), 1--24.

\bibitem{de-roeck:2015}
W.~De~Roeck and M.~Sch\"utz, \emph{Local perturbations perturb
  exponentially--locally}, J. Math. Phys. \textbf{56} (2015), 061901.

\bibitem{duivenvoorden:2013b}
K.~Duivenvoorden and T.~Quella, \emph{{From symmetry-protected topological
  order to {L}andau order}}, Phys. Rev. B \textbf{88} (2013), 125115.

\bibitem{fannes:1992}
M.~Fannes, B.~Nachtergaele, and R.~F. Werner, \emph{Finitely correlated states
  of quantum spin chains}, Commun. Math. Phys. \textbf{144} (1992), 443--490.

\bibitem{folland:1999}
G.B. Folland, \emph{Real analysis: Modern techniques and their applications},
  John Wiley \& Sons, 1999.

\bibitem{fradkin:2013}
E.~Fradkin, \emph{Field theories of condensed matter physics}, second ed.,
  Cambridge University Press, 2013.

\bibitem{fredenhagen:1985}
K.~Fredenhagen, \emph{A remark on the cluster theorem}, Commun. Math. Phys.
  \textbf{97} (1985), 461--463.

\bibitem{giuliani:2017}
A.~Giuliani, V.~Mastropietro, and M.~Porta, \emph{Universality of the {H}all
  conductivity in interacting electron systems}, Commun. Math. Phys.
  \textbf{349} (2017), 1107--1161.

\bibitem{gong:2014}
Z.-K. Gong, M.~Foss-Feig, S.~Michalakis, and A.V. Gorshkov, \emph{Persistence
  of locality in systems with power-law interactions}, Phys. Rev. Lett.
  \textbf{113} (2014), 030602.

\bibitem{gong:2017}
Z.-X. Gong, M.~Foss-Feig, F.G.S.L. Brand{\~a}o, and A.V. Gorshkov,
  \emph{Entanglement area laws for long-range interacting systems}, Phys. Rev.
  Lett. \textbf{119} (2017), 050501.

\bibitem{grundling:2017}
H.~Grundling and G.~Rudolph, \emph{Dynamics for {QCD} on an infinite lattice},
  Commun. Math. Phys. \textbf{349} (2017), 1163--1202.

\bibitem{haag:1996}
R.~Haag, \emph{Local quantum physics}, second ed., Texts and Monographs in
  Physics, Springer-Verlag, Berlin, 1996.

\bibitem{haah:2018}
J~Haah, M.~B. Hastings, R.~Kothari, and G.~H. Low, \emph{Quantum algorithm for
  simulating real time evolution of lattice {H}amiltonians}, arXiv:1801.03922,
  2018.

\bibitem{haegeman:2013}
J.~Haegeman, S.~Michalakis, B.~Nachtergaele, T.J. Osborne, N.~Schuch, and
  F.~Verstraete, \emph{{Elementary excitations in gapped quantum spin
  systems}}, Phys. Rev. Lett. \textbf{111} (2013), 080401.

\bibitem{hainzl:2016}
C.~Hainzl and R.~Seiringer, \emph{The {B}ardeen--{C}ooper--{S}chrieffer
  functional of superconductivity and its mathematical properties}, J. Math.
  Phys. \textbf{57} (2016), 021101.

\bibitem{hamza:2012}
E.~Hamza, R.~Sims, and G.~Stolz, \emph{Dynamical localization in disordered
  quantum spin systems}, Commun. Math. Phys. \textbf{315} (2012), 215--239.

\bibitem{han:2018}
X.~Han and S.~A. Hartnoll, \emph{Locality bound for dissipative quantum
  transport}, Phys. Rev. Lett. \textbf{121} (2018), 170601.

\bibitem{hastings:2004}
M.~B. Hastings, \emph{{L}ieb-{S}chultz-{M}attis in higher dimensions}, Phys.
  Rev. B \textbf{69} (2004), 104431.

\bibitem{Hastings:2007}
\bysame, \emph{An area law for one dimensional quantum systems}, J. Stat. Mech.
  (2007), P08024.

\bibitem{hastings:2017}
M.~B. Hastings, \emph{The stability of free {F}ermi {H}amiltonians},
  arXiv:1706.02270, 2017.

\bibitem{hastings:2006}
M.~B. Hastings and T.~Koma, \emph{Spectral gap and exponential decay of
  correlations}, Commun. Math. Phys. \textbf{265} (2006), 781--804.

\bibitem{hastings:2015}
M.~B. Hastings and S.~Michalakis, \emph{Quantization of hall conductance for
  interacting electrons on a torus}, Commun. Math. Phys. \textbf{334} (2015),
  433--471.

\bibitem{hastings:2005}
M.~B. Hastings and X.~G. Wen, \emph{Quasi-adiabatic continuation of quantum
  states: The stability of topological ground-state degeneracy and emergent
  gauge invariance}, Phys. Rev. B \textbf{72} (2005), 045141.

\bibitem{ingham:1934}
A.~E. Ingham, \emph{A note on {F}ourier transforms}, J. London Math. Soc.
  \textbf{9} (1934), 29--32.

\bibitem{islambekov:2012}
U.~Islambekov, R.~Sims, and G.~Teschl, \emph{{L}ieb-{R}obinson bounds for the
  {T}oda lattice}, J. Stat. Phys. \textbf{148} (2012), 440--479.

\bibitem{itzykson:1989}
C.~Itzykson and J.-M. Drouffe, \emph{Statisticel field theory}, vol. 1 \& 2,
  Cambridge University Press, 1989.

\bibitem{kadison:1997a}
V.R. Kadison and J.R. Ringrose, \emph{Fundamentals of the theory of operator
  algebras. volume ii: Advanced theory}, American Mathematical Society, 1997.

\bibitem{kanda:2016}
T.~Kanda and T.~Matsui, \emph{{KMS} states of weakly coupled anharmonic
  crystals and the resolvent {CCR} algebra}, arXiv:1601.04809, 2016.

\bibitem{kato:1953}
T.~Kato, \emph{Integration of the equation of evolution in a {Banach} space},
  J. Math. Soc. Japan \textbf{5} (1953), 208--234.

\bibitem{kato:1956}
\bysame, \emph{On the linear differential equations in {Banach} spaces},
  Commun. Pure and Applied Math. \textbf{9} (1956), 479--486.

\bibitem{kato:1995}
\bysame, \emph{Perturbation theory for linear operators}, Springer-Verlag,
  1995.

\bibitem{kennedy:1999}
T.~Kennedy, \emph{{Exact diagonalisations of open spin-1 chains}}, J. Phys.:
  Cond. Mat. \textbf{2} (1999), no.~26, 5737.

\bibitem{kennedy:1992}
T.~Kennedy and H.~Tasaki, \emph{Hidden symmetry breaking and the haldane phase
  in $s=1$ quantum spin chains}, Commun. Math. Phys. \textbf{147} (1992),
  431--484.

\bibitem{kisynski:1964}
J.~Kisy\'nski, \emph{Sur les op\'erateurs de {Green} des probl\`emes de
  {Cauchy} abstraits}, Studia Mathematica \textbf{XXIII} (1964), 285--328.

\bibitem{kitaev:2006}
A.~Kitaev, \emph{Anyons in an exactly solved model and beyond}, Annals of
  Physics \textbf{321} (2006), 2--111.

\bibitem{klein:1992}
A.~Klein and J.~F. Perez, \emph{Localization in the ground state of a
  disordered array of quantum rotators}, Commun. Math. Phys. \textbf{147}
  (1992), 241--252.

\bibitem{landau:1981}
L.~J. Landau, J.~F. Perez, and W.~F. Wreszinski, \emph{{Energy gap, clustering,
  and the Goldstone theorem in statistical mechanics}}, J. Stat. Phys.
  \textbf{26} (1981), no.~4, 755--766.

\bibitem{lieb:1961}
E.~Lieb, T.~Schultz, and D.~Mattis, \emph{Two soluble models of an
  antiferromagnetic chain}, Ann.~Phys.~(N.Y.) \textbf{16} (1961), 407--466.

\bibitem{lieb:1972}
E.H. Lieb and D.W. Robinson, \emph{The finite group velocity of quantum spin
  systems}, Commun. Math. Phys. \textbf{28} (1972), 251--257.

\bibitem{lucia:2015}
A.~Lucia, T.S. Cubitt, S.~Michalakis, and D.~P\'erez-Garc\'\i~a, \emph{Rapid
  mixing and stability of quantum dissipative systems}, Phys. Rev. A
  \textbf{91} (2015), 040302.

\bibitem{lee:1986}
M.~Ma, B.~Halperin, and P.~Lee, \emph{Strongly disordered superfluids; quantum
  fluctuations and critical behavior}, Phys. Rev. B \textbf{34} (1986),
  3136--3143.

\bibitem{marien:2016}
M.~Mari\"en, K.M.R. Audenaert, K.~Van~Acoleyen, and F.~Verstraete,
  \emph{Entanglement rates and the stability of the area law for the
  entanglement entropy}, Commun. Math. Phys. \textbf{346} (2016), 35--73.

\bibitem{martin:2002}
Ph.~A. Martin and F.~Rothen, \emph{Many-body problems and quantum field
  theory}, Springer-Verlag, 2002.

\bibitem{matsui:1990}
T.~Matsui, \emph{Uniqueness of the translationally invariant ground state in
  quantum spin systems}, Commun. Math. Phys. \textbf{126} (1990), 453--467.

\bibitem{matsui:2010}
T.~Matsui, \emph{On spectral gap and split property in quantum spin chains}, J.
  Math. Phys. \textbf{51} (2010), 015216.

\bibitem{matsui:2013}
\bysame, \emph{{Boundedness of Entanglement Entropy, and Split Property of
  Quantum Spin Chains}}, Rev. Math. Phys. \textbf{25} (2013), 1350017.

\bibitem{michalakis:2013}
S.~Michalakis and J.P. Zwolak, \emph{Stability of frustration-free
  {H}amiltonians}, Commun. Math. Phys. \textbf{322} (2013), 277--302.

\bibitem{monaco:2019}
D.~Monaco and S.~Teufel, \emph{Adiabatic currents for interacting electrons on
  a lattice}, Rev. Math. Phys. \textbf{31} (2019), 1950009.

\bibitem{moon:2018}
A.~Moon and B.~Nachtergaele, \emph{Stability of gapped ground state phases of
  spins and fermions in one dimension}, J. Math. Phys. \textbf{59} (2018),
  091415.

\bibitem{Mussardo:2010}
G.~Mussardo, \emph{Statistical field theory}, Oxford University Press, 2010.

\bibitem{nachtergaele:1996}
B.~Nachtergaele, \emph{The spectral gap for some quantum spin chains with
  discrete symmetry breaking}, Commun. Math. Phys. \textbf{175} (1996),
  565--606.

\bibitem{nachtergaele:2006}
B.~Nachtergaele, Y.~Ogata, and R.~Sims, \emph{Propagation of correlations in
  quantum lattice systems}, J. Stat. Phys. \textbf{124} (2006), 1--13.

\bibitem{nachtergaele:2009a}
B.~Nachtergaele, H.~Raz, B.~Schlein, and R.~Sims, \emph{{L}ieb-{R}obinson
  bounds for harmonic and anharmonic lattice systems}, Commun. Math. Phys.
  \textbf{286} (2009), 1073--1098.

\bibitem{nachtergaele:2010}
B.~Nachtergaele, B.~Schlein, R.~Sims, S.~Starr, and V.~Zagrebnov, \emph{On the
  existence of the dynamics for anharmonic quantum oscillator systems}, Rev.
  Math. Phys. \textbf{22} (2010), 207--231.

\bibitem{nachtergaele:2013}
B.~Nachtergaele, V.~Scholz, and R.~F. Werner, \emph{Local approximation of
  observables and commutator bounds}, Operator Methods in Mathematical Physics
  (J.~Janas, P.~Kurasov, A.~Laptev, and S.~Naboko, eds.), Operator Theory:
  Advances and Applications, vol. 227, Birk\"auser Verlag, 2013, pp.~143--149.

\bibitem{nachtergaele:2006a}
B.~Nachtergaele and R.~Sims, \emph{{L}ieb-{R}obinson bounds and the exponential
  clustering theorem}, Commun. Math. Phys. \textbf{265} (2006), 119--130.

\bibitem{nachtergaele:2007}
\bysame, \emph{A multi-dimensional {L}ieb-{S}chultz-{M}attis theorem}, Commun.
  Math. Phys. \textbf{276} (2007), 437--472.

\bibitem{nachtergaele:2009b}
\bysame, \emph{Locality estimates for quantum spin systems}, New Trends in
  Mathematical Physics. Selected contributions of the XVth International
  Congress on Mathematical Physics (V.~Sidovaricius, ed.), Springer Verlag,
  2009, pp.~591--614.

\bibitem{nachtergaele:2014}
B.~Nachtergaele and R.~Sims, \emph{On the dynamics of lattice systems with
  unbounded on-site terms in the {H}amiltonian}, arXiv:1410.8174, 2014.

\bibitem{QLBII}
B.~Nachtergaele, R.~Sims, and A.~Young, \emph{Quasi-locality bounds for quantum
  lattice systems and perturbations of gapped ground states. {P}art {II}.}, In
  preparation.

\bibitem{nachtergaele:2016b}
\bysame, \emph{{L}ieb-{R}obinson bounds, the spectral flow, and stability for
  lattice fermion systems}, Mathematical Results in Quantum Physics
  (F.~Bonetto, D.~Borthwick, E.~Harrell, and M.~Loss, eds.), Contemporary
  Mathematics, vol. 717, Amer. Math. Soc., 2018, pp.~93--115.

\bibitem{nachtergaele:2011}
B.~Nachtergaele, A.~Vershynina, and V.A. Zagrebnov, \emph{{L}ieb-{R}obinson
  bounds and existence of the thermodynamic limit for a class of irreversible
  quantum dynamics}, Entropy and the Quantum II: {A}rizona {S}chool of
  {A}nalysis with {A}pplications (R.~Sims and D.~Ueltschi, eds.), Contemporary
  Mathematics, vol. 522, Amer. Math. Soc., 2011, pp.~161--175.

\bibitem{ogata:2016a}
Y.~Ogata, \emph{A class of asymmetric gapped {H}amiltonians on quantum spin
  chains and its characterization {I}}, Commun. Math. Phys. \textbf{348}
  (2016), no.~3, 847--895. \MR{3555356}

\bibitem{ogata:2016}
\bysame, \emph{A class of asymmetric gapped {H}amiltonians on quantum spin
  chains and its characterization {II}}, Commun. Math. Phys. \textbf{348}
  (2016), no.~3, 897--957. \MR{3555357}

\bibitem{ogata:2017}
\bysame, \emph{A class of asymmetric gapped {H}amiltonians on quantum spin
  chains and its characterization {III}}, Commun. Math. Phys. \textbf{352}
  (2017), no.~3, 1205--1263.

\bibitem{ogata:2019}
Y.~Ogata and H.~Tasaki, \emph{{L}ieb-{S}chultz-{M}attis type theorems for
  quantum spin chains without continuous symmetry}, Commun. Math. Phys. (2019),
  online, arXiv:1808.08740, published online, 5 Feb 2019.

\bibitem{petz:2008}
D.~Petz, \emph{Quantum information theory and quantum statistics}, Springer
  Verlag, 2008.

\bibitem{pollmann:2010}
F.~Pollmann, A.M. Turner, E.~Berg, and M.~Oshikawa, \emph{Entanglement spectrum
  of a topological phase in one dimension}, Phys. Rev. B \textbf{81} (2010),
  064439.

\bibitem{poulin:2010}
D.~Poulin, \emph{{L}ieb-{R}obinson bound and locality for general {M}arkovian
  quantum dynamics}, Phys. Rev. Lett. \textbf{104} (2010), 190401.

\bibitem{premont-schwarz:2010a}
I.~Pr\'emont-Schwarz, A.~Hamma, I.~Klich, and F.~Markopoulou-Kalamara,
  \emph{Lieb-{R}obinson bounds for commutator-bounded operators}, Phys. Rev. A
  \textbf{81} (2010), 040102(R).

\bibitem{radin:1978}
C.~Radin, \emph{Signal propagation in lattice models of quantum many-body
  systems}, Commun. Math. Phys. \textbf{62} (1978), 159--166.

\bibitem{raz:2009}
H.~Raz and R.~Sims, \emph{Estimating the {L}ieb-{R}obinson velocity for
  classical anharmonic lattice systems}, J. Stat. Phys. \textbf{137} (2009),
  79--108.

\bibitem{reed:1975}
M.~Reed and B.~Simon, \emph{Fourier analysis, self-adjointness. {M}ethods of
  modern mathematical physics}, vol.~2, Academic Press, 1975.

\bibitem{reed:1980}
M.~Reed and B.~Simon, \emph{Functional analsyis. {M}ethods of modern
  mathematical physics}, revised and enlarged ed., vol.~1, Academic Press,
  1980.

\bibitem{richerme:2014}
P.~Richerme, Z.-X. Gong, A.~Lee, C.~Senko, J.~Smith, M.~Foss-Feig,
  S.~Michalakis, A.V. Gorshkov, and C.~Monroe, \emph{Non-local propagation of
  correlations in long-range interacting quantum systems}, Nature \textbf{511}
  (2014), 198.

\bibitem{robinson:1976}
D.W. Robinson, \emph{Properties of propagation of quantum spin systems}, J.
  Australian Math. Soc. Series B. Applied Math. (1976), 387--399.

\bibitem{rudin:1976}
W.~Rudin, \emph{Principles of mathematical analysis}, McGraw-Hill, 1976.

\bibitem{ruelle:1962}
D.~Ruelle, \emph{On the asymptotic condition in quantum field theory}, Helv.
  Phys. Acta \textbf{35} (1962), 147.

\bibitem{sachdev:1999}
S.~Sachdev, \emph{Quantum phase transitions}, Cambridge University Press, 1999.

\bibitem{sakai:1971}
S.~Sakai, \emph{{$C^*$}-algebras and {$W^*$}-algebras}, Classics in Mathematics
  (1998), Springer-Verlag, 1971.

\bibitem{seiringer:2014}
R.~Seiringer, \emph{Bose gases, {B}ose--{E}instein condensation, and the
  {B}ogoliubov approximation}, J. Math. Phys. \textbf{55} (2014), 075209.

\bibitem{shiraishi:2017}
N.~Shiraishi and H.~Tajima, \emph{Efficiency versus speed in quantum heat
  engines: Rigorous constraint from {L}ieb-{R}obinson bound}, Phys. Rev. E
  \textbf{96} (2017), 022138.

\bibitem{simon:1971}
B.~Simon, \emph{Quantum mechanics for {H}amiltonians defined as quadratic
  forms}, Princeton University Press, 1971.

\bibitem{simon:1993}
B.~Simon, \emph{The statistical mechanics of lattice gases}, vol.~1, Princeton
  University Press, 1993.

\bibitem{szehr:2015}
O.~Szehr and M.~M. Wolf, \emph{Perturbation theory for parent {H}amiltonians of
  matrix product states}, J. Stat. Phys. \textbf{159} (2015), 752--771,
  arXiv:1402.4175.

\bibitem{tasaki:2018}
H.~Tasaki, \emph{Topological phase transition and {$\mathbb{Z}_2$}-index for
  {$S=1$} quantum spin chains}, Phys. Rev. Lett. \textbf{121} (2018), 140604,
  arXiv:1804.04337.

\bibitem{teufel:2017}
S.~Teufel, \emph{Non-equilibrium almost-stationary states for interacting
  electrons on a lattice}, arXiv:1708.03581, 2017.

\bibitem{tran:2018}
M.C. Tran, A.~Guo, Y.~Su, J.R. Garrison, Z.~Eldredge, M.~Foss-Feig, A.M.
  Childs, and A.V. Gorshkov, \emph{Locality and digital simulation of power-law
  interactions}, arXiv:1808:05225, 2018.

\bibitem{Tsvelik:1995}
A.~M. Tsvelik, \emph{Quantum field theory in condensed matter physics},
  Cambridge University Press, 1995.

\bibitem{weidmann:1980}
J.~Weidmann, \emph{Linear operators in {H}ilbert spaces}, Graduate Texts in
  Mathematics, vol.~68, Springer, 1980.

\bibitem{wreszinski:1987}
W.~F. Wreszinski, \emph{Charges and symmetries in quantum theories without
  locality}, Fortschr. Phys. \textbf{35} (1987), 379--413.

\bibitem{yarotsky:2006}
D.~A. Yarotsky, \emph{Ground states in relatively bounded quantum perturbations
  of classical lattice systems}, Commun. Math. Phys. \textbf{261} (2006),
  799--819.

\bibitem{yosida:1980}
K.~Yosida, \emph{Functional analysis}, 6th ed., {Grundlehren} der mathematishen
  {Wissenschaften}, Springer-Verlag, 1980.

\bibitem{young:2016}
A.~Young, \emph{Spectral properties of multi-dimensional quantum spin systems},
  Ph.D. thesis, University of California, Davis, 2016.

\end{thebibliography}
%\end{thebibliography} 
\providecommand{\bysame}{\leavevmode\hbox to3em{\hrulefill}\thinspace}
\providecommand{\MR}{\relax\ifhmode\unskip\space\fi MR }
% \MRhref is called by the amsart/book/proc definition of \MR.
\providecommand{\MRhref}[2]{%
  \href{http://www.ams.org/mathscinet-getitem?mr=#1}{#2}
}
\providecommand{\href}[2]{#2}

\end{document}